\documentclass[11pt]{article}

\usepackage{fullpage}
\usepackage{color}
\usepackage{graphicx}
\usepackage{epsfig}
\usepackage{amsthm}
\usepackage{latexsym}
\usepackage{amssymb}
\usepackage{amsmath,mathrsfs}
\usepackage[hypertexnames=false]{hyperref}
\usepackage{euscript}
\usepackage{nicematrix}
\usepackage{braket}
\usepackage{todonotes}
\usepackage{zzt}

\newcommand{\holodd}[0]{\textup{\text{Holant}}^\text{odd}}
\newcommand{\eoe}[0]{\text{HW}^=}
\newcommand{\eog}[0]{\text{HW}^\geq}
\newcommand{\eol}[0]{\text{HW}^\leq}
\newcommand{\eosg}[0]{\text{HW}^>}
\newcommand{\eosl}[0]{\text{HW}^<}

\newcommand{\eo}[0]{\textup{\textsf{EO}}}
\newcommand{\ceo}[0]{\textup{\#\textsf{EO}}}

\newcommand{\POLMID}{\mathsf{Pol}(\oplus_3)}
\newcommand{\POLUP}{\mathsf{Pol}^\uparrow(\oplus_3)}
\newcommand{\POLDOWN}{\mathsf{Pol}^\downarrow(\oplus_3)}
\newcommand{\hF}{\widehat{\mathcal{F}}}
\newcommand{\F}{\mathcal{F}}

\title{\bf A Dichotomy for Boolean Complex Holant Problems with Conjugate-Closed Signature Sets}

\author{
 \begin{tabular}{c@{\hspace{0.5cm}}c@{\hspace{0.5cm}}c}
 \normalsize Jincheng Guan\thanks{School of Computer Science and Technology \& Hefei National Laboratory, University of Science and Technology of China. 
 } &
\normalsize  Shuai Shao\footnotemark[1] &
\normalsize  Zhuxiao Tang\thanks{College of Computing \& Artificial Intelligence (CAI),
University of Wisconsin-Madison.} \\
 \normalsize  \url{guanjincheng@mail.ustc.edu.cn} & \normalsize \url{shao10@ustc.edu.cn}  & \normalsize \url{zztang@wisc.edu}\\
 \end{tabular}
}

\date{}

\begin{document}
\maketitle

\begin{abstract}
 We study Boolean Holant problems with complex-valued signature sets closed under conjugation. Such sets arise naturally in tensor-network expressions for classical strong simulation of quantum circuits. We prove a complexity dichotomy for such problems with an explicit tractability criterion. This extends the dichotomy for real-valued Holant problems, with the same four tractability conditions.

Our proofs use Xia's projective binary group framework and quantum entanglement theory.
The conjugate closure assumption precisely makes $k$-uniformity, directly applicable
to the classification of Holant problems, by realizing reduced density matrices via Holant gadgets. 
We also use the classification of absolutely maximally entangled states to resolve a particular $6$-ary obstruction in our
inductive proof of the \#P-hardness. 

\end{abstract}

\section*{AI disclosure}

We used ChatGPT 5.6 sol to help develop the arguments in Sections~\ref{sec: second order orthoganality} to~\ref{sec: 2n >= 10} and in \autoref{sec: AME62 has a common local unitary normal}.
The manuscript was verified and written by the authors, with the aim of making the proofs accessible to readers.
The authors take full responsibility for the correctness and originality of all content in this manuscript.
\newpage
\tableofcontents
\newpage

\section{Introduction}

It is an important problem in quantum computation to understand which quantum circuits admit efficient classical simulation. One task is to compute specified output probabilities and their marginals exactly, known as \emph{exact strong simulation}. The complexity of this task depends on the allowed gates, input states, and measurements. Several classification results~\cite{jozsa-vandenNest2014,koh2017extensions,hebenstreit2020matchgates,bu-koh2022halfgauss} describe how these choices affect the complexity of classical simulation. For instance,  non-adaptive Clifford circuits with computational basis inputs and measurements admit efficient strong simulation, while allowing arbitrary product-state inputs makes the problem \#P-hard~\cite{jozsa-vandenNest2014}. Beyond strong simulation, complexity dichotomies for sampling have been established for circuits generated by two-qubit commuting Hamiltonians~\cite{bouland2016commuting} and for conjugated Clifford circuits~\cite{bouland2018conjugated}.

Classical strong simulation can be naturally studied through tensor networks and Holant problems. The connection between restricted quantum computation and graph-based counting already appears in Valiant's classical simulation of matchgate circuits~\cite{valiant2002quantum}. More directly, Cai, Guo, and Williams showed that Clifford gates on qubits are precisely the Boolean affine signatures with unitary matrix representations~\cite{cai-guo-williams-clifford}. This identifies a class of  quantum circuits admitting efficient classical simulations with a tractable class in the Holant framework. This tractability result was further generalized to affine signatures over arbitrary finite domains~\cite{bu-koh2022halfgauss}.

To represent quantum circuits using tensor networks or the Holant framework, an $n$-qubit gate $U$ can be viewed as a complex-valued Boolean function (or signature) of $2n$ variables, with entries
$g_U(x,y)=\bra{y}U\ket{x}$ for $
    x,y\in\{0,1\}^n.$
Connecting gates corresponds to contracting the associated tensors. To compute an output probability, one contracts the network for the amplitude together with its complex conjugate. Thus, gate tensors and their entrywise complex conjugates occur naturally in the same computation. Together with the tensors specifying inputs and measurements and their conjugates, they form a conjugate-closed set of signatures. These computations are therefore naturally represented by Holant problems with conjugate-closed sets of complex-valued signatures. Motivated by this connection, we study the complexity of such Holant problems and prove that every such problem is either polynomial-time computable or \#P-hard, with an explicit tractability criterion. Our dichotomy result covers arbitrary signature grids, which generalize the tensor networks arising from quantum circuits.

We briefly recall the definition of Holant problems. A constraint function, also called a \emph{signature}, is a function $f:\{0,1\}^n\to\mathbb{C}$ for some $n>0$, called its \emph{arity}. 
Let $\mathcal{F}$ be a fixed set of (algebraic) complex valued signatures. A \emph{signature grid} over $\mathcal{F}$ is a pair $\Omega=(G,\pi)$, where $G=(V,E)$ is a graph without isolated vertices. The labeling $\pi$ assigns to each vertex $v\in V$ a signature $f_v\in\mathcal{F}$ of arity $\deg(v)$, together with a correspondence between the incident edges $E(v)$ and the input variables of $f_v$.

\begin{definition}[Holant problems]
The input to $\Holant(\mathcal{F})$ is a signature grid $\Omega=(G,\pi)$ over $\mathcal{F}$. The output is the partition function
\[
    \Holant(\Omega)
    =\sum_{\sigma:E\to\{0,1\}}
      \prod_{v\in V}f_v\bigl(\sigma|_{E(v)}\bigr),
\]
where $\sigma|_{E(v)}$ denotes the restriction of $\sigma$ to the incident edges at $v$, in the order specified by $\pi$.
\end{definition}

For a signature $f$, its conjugate $\overline f$ is defined by $\overline f(x)=\overline{f(x)}$ for every input $x$. A signature set $\mathcal{F}$ is \emph{conjugate closed} if for every $f\in \mathcal{F}$ we have $\overline{f}\in \mathcal{F}$.
In particular, every set of real-valued signatures is conjugate closed. Our dichotomy therefore naturally extends the dichotomy for real-valued Holant problems~\cite{realholant}. Moreover, exactly the same four tractability conditions remain exhaustive: every signature is a tensor product of unary and binary signatures, or the entire signature set can be transformed holographically into the product-type, affine, or local-affine class. Their precise definitions are given in \cref{subsec: tractable signatures}. Although no new tractable cases arise, the hardness proof must handle complex-valued signatures and local unitary transformations not covered by the real-valued argument.

Conjugate closure is not only natural for the study of classical simulation of quantum circuits. It also makes quantum entanglement theory, particularly $k$-uniformity, directly applicable to the classification of Holant problems. Entanglement theory has previously been used to obtain Holant dichotomies~\cite{Backens-holant-plus,Backens-Holant-c,odd_holant_real}. Our result develops this connection further by using $k$-uniformity and the uniqueness, up to local unitaries, of absolutely maximally entangled states.
A nonzero signature $f$ of arity $n$ is interchangeable with a normalized $n$-qubit state
$\ket f=\frac{1}{\lVert f\rVert}
    \sum_{x\in\{0,1\}^n}f(x)\ket x,$ and its norm is $ \lVert f\rVert^2=\sum_x|f(x)|^2$.
A pure state is \emph{$k$-uniform}~\cite{scott2004} if the reduced density matrix on every set of $k$ qubits is maximally mixed. 
An \emph{absolutely maximally entangled} state, denoted by $\operatorname{AME}(n,2)$ is an $n$-qubit state that is $\lfloor n/2\rfloor$-uniform. Such states are maximally entangled across every bipartition and are closely related to quantum error-correcting codes and quantum secret sharing~\cite{scott2004,helwig2012}.

Conjugate closure makes these quantum notions accessible through Holant gadgets. For $S\subseteq[n]$, let $M_S(f)$ be the matrix whose rows are indexed by assignments to $S$ and whose columns are indexed by assignments to its complement. Connecting $f$ and $\overline f$ along the complementary variables realizes a signature with matrix $M_S(f)M_S(f)^\dagger$. Dividing this matrix by the norm $\lVert f\rVert^2$ gives exactly the reduced density matrix of $\ket f$ on $S$. Consequently, $\ket f$ is $k$-uniform precisely when
\[
    M_S(f)M_S(f)^\dagger
    =\frac{\lVert f\rVert^2}{2^k}I_{2^k}
    \qquad\text{for every }S\subseteq[n]\text{ with }|S|=k,
\]
where $I_{2^k}$ is the identity matrix of order $2^k$.
We call this condition \emph{$k$-th order orthogonality}, see~\cref{subsec: signatures and quantum states} for detailed definitions. Thus, properties of reduced density matrices translate directly into restrictions on  signatures realizable via the above \emph{mating gadgets}.

A key application of $k$-uniformity and AME states is to resolve an $6$-ary obstruction in our inductive proof. 
We first show that a $6$-ary signature satisfying the remaining closure conditions must satisfy third order orthogonality, and hence define an $\operatorname{AME}(6,2)$ state. 
We then invoke the uniqueness of the six-qubit AME state that every $\operatorname{AME}(6,2)$ state is locally unitarily equivalent, up to a permutation of the qubits, to the quantum hexacode state~\cite{rains1999quantum}. This reduces all remaining $6$-ary signatures to a single local-unitary normal form. The local unitaries in this form need not themselves be available as Holant gadgets. We overcome this difficulty through further gadget constructions and complexity reductions, converting the entanglement classification result into the required hardness proof for Holant complexity classification.

Beyond its connection with classical simulation of quantum circuits and quantum entanglement theory, our theorem advances the general complexity classification of Boolean Holant problems. 
Significant progress has been made in this program. A complete dichotomy for complex-valued symmetric signatures was proved in \cite{cai-guo-tyson-vanishing}. For asymmetric signatures, dichotomies were established under the availability of auxiliary unary signatures, including $\Holant^\ast$, $\Holant^+$, and $\Holant^c$~\cite{cai-lu-xia-holant-star,Backens-holant-plus,real_holantc,Backens-Holant-c}. Without such assumptions, a dichotomy for nonnegative real-valued signatures was proved in \cite{Lin_Wang_Decomposition_lemma}, and a dichotomy for real-valued signatures when a nonzero odd-arity signature is present was proved in \cite{odd_holant_real}. These results were followed by the full real-valued dichotomy~\cite{realholant}.

Recently, a particularly interesting framework known as counting weighted Eulerian orientations (\#EO) has been introduced and widely studied~\cite{cai-fu-shao-eo,ShaoTang25,MengWX25}. This framework has been recognized as an important step towards a full dichotomy for complex-valued Holant problems~\cite{Holant_Odd,GSS26}.
More recently, an $\mathrm{FP}^{\mathrm{NP}}$ versus \#P-hard classification for weighted \#EO problems was proved~\cite{MengWX25}. Based on this result, an $\mathrm{FP}^{\mathrm{NP}}$ versus \#P-hard classification was also proved for complex-valued Holant problems with a nonzero odd-arity signature~\cite{Holant_Odd}. Subsequently,  for all cases on the $\mathrm{FP}^{\mathrm{NP}}$ side,  polynomial-time algorithms were obtained~\cite{GSS26}. This result completes both $\mathrm{FP}^{\mathrm{NP}}$ versus \#P-hard  classification results as standard $\mathrm{FP}$ versus \#P-hard dichotomies. In addition, a projective-group framework~\cite{MingjiProgram} was proposed for the complete complexity classification of complex-valued Holant problems.
Building on these results, our theorem gives a full classification for complex-valued Holant problems under conjugate closure, including the even-arity cases where the remarkable entanglement structure arises.

\section{Preliminaries}

The symbols $\le_T$ and
$\equiv_T$ denote polynomial-time Turing reducibility and equivalence.

We use $\mathfrak i$ to denote $\sqrt{-1}$.
We use the matrices
\[
I_2=\begin{bmatrix}1&0\\0&1\end{bmatrix},\quad
N_2=\begin{bmatrix}0&1\\1&0\end{bmatrix},\quad
H=\frac1{\sqrt2}\begin{bmatrix}1&1\\1&-1\end{bmatrix},\quad
K=\frac1{\sqrt2}\begin{bmatrix}1&1\\
\mathfrak i&-\mathfrak i\end{bmatrix}.
\]
The notation $\mathbb{C}^\times$ means $\mathbb{C}\setminus\{0\}$.
For a complex valued matrix $A$, write $A^{\mathtt T}$ for transpose,
$\overline A\,$ for entrywise complex conjugation, and
$A^\dagger=\overline A^{\mathtt T}$ for conjugate transpose.  For row
vectors $\mathbf u,\mathbf v\in\mathbb C^m$, our convention is
\[
\langle\mathbf u,\mathbf v\rangle
=\mathbf u\,\overline{\mathbf v}^{\mathtt T},\qquad
\|\mathbf u\|=\sqrt{\langle\mathbf u,\mathbf u\rangle}.
\]
We also use $|\mathbf u|$ for this norm when no confusion with scalar
absolute value can arise.

\subsection{Definitions and Notations}

A \emph{signature} of arity $n$ is a function
$f:\{0,1\}^n\to\mathbb C$. 
For a bit string $\alpha=\alpha_1\cdots\alpha_n$, let $|\alpha|=n$,
\(
\operatorname{wt}(\alpha)=\sum_{j=1}^n\alpha_j,
\)
and denote its bitwise complement by $\overline{\alpha}$.  
The symbols
$0^n=\vec 0^{\,n}$ and $1^n=\vec 1^{\,n}$ denote the all-zero and all-one
strings; the superscript is omitted when its value is clear.  Addition
$\oplus$ of bit strings is coordinatewise over $\mathbb Z_2$.
The \emph{support} of $f$ is
\(
\mathscr S(f)=\{\alpha\in\{0,1\}^n:f(\alpha)\ne0\}.
\)
The signature is zero, written $f\equiv0$, if $\mathscr S(f)=\emptyset$.
A signature is \emph{symmetric} if its value depends only on Hamming weight.
For such a signature of arity $n$, we use the compressed notation
\(
f=[f_0,f_1,\ldots,f_n],
\)
where $f_j$ is the value on every input of weight $j$. 
Let
\(
\Delta_0=[1,0], \,\Delta_1=[0,1].
\)
For $n\ge1$, the equality signature $(=_n)$ is one on $0^n$ and $1^n$ and
zero elsewhere.  The binary disequality signature is
$(\neq_2)=[0,1,0]$. 
More generally, $\neq_{2n}$ denotes the indicator of
\(
x_1= x_2=\ldots= x_n\neq
x_{n+1}=x_{n+2}=\ldots= x_{2n}.
\)
We
write $\mathcal{EQ}=\{=_n\mid n\ge1\}$ and
$\mathcal{DEQ}=\{\neq_{2n}\mid n\ge1\}$.
Also write $\mathcal{EQ}_k=\{=_{nk}\mid n\ge1\}$.
Let
\(
\mathscr E_n=\{\alpha\in\{0,1\}^n:\operatorname{wt}(\alpha)
\text{ is even}\},\,
\mathscr O_n=\{\alpha\in\{0,1\}^n:\operatorname{wt}(\alpha)
\text{ is odd}\}.
\)
A signature has \emph{even parity} or \emph{odd parity} if its support is
contained in $\mathscr E_n$ or $\mathscr O_n$, respectively; it has
\emph{parity} in either case.

We use $=_2^-$ to denote the binary signature $(1, 0, 0, -1)$ and $\neq_2^-$ to denote the binary signature $(0, 1, -1, 0)$.
We may also write $=_2$ as $=_2^+$ and $\neq_2$ as $\neq_2^+$. 
Let $\mathcal{B}=\{=^+_2, =_2^-, \neq_2^+, \neq_2^-\}$.
We call them Bell signatures which correspond to Bell states $|\Phi^+\rangle=|00\rangle+|11\rangle$, $|\Phi^-\rangle=|00\rangle-|11\rangle$, 
$|\Psi^+\rangle=|01\rangle+|10\rangle$ and 
$|\Psi^-\rangle=|01\rangle-|10\rangle$ in quantum information science \cite{bell1964einstein}.

We use $f_{i_1i_2\ldots i_k}^{a_1a_2\ldots a_k}$ to denote the signature obtained from $f$ by pinning the variables $x_{i_1},x_{i_2},\ldots, x_{i_k}$ to $a_{1},a_2,\ldots, a_k$ respectively.
For example, $f_{i}^a=f(x_1,\ldots,x_{i-1},a,x_{i+1},\ldots,x_n).$
For $S\subseteq[n]$, let $M_S(f)$ be the
$2^{|S|}\times2^{n-|S|}$ matrix whose rows are indexed by assignments
to the variables in $S$ and whose columns are indexed by assignments to
the remaining variables, both in lexicographic order.  We also write
$M_{S,T}(f)$ when the ordered row and column variable lists $S,T$
partition the variables.  In particular,
\[
M_i(f)=
\begin{bmatrix}\mathbf f_i^0\\ \mathbf f_i^1\end{bmatrix},
\qquad
M_{ij}(f)=
\begin{bmatrix}
\mathbf f_{ij}^{00}\\ \mathbf f_{ij}^{01}\\
\mathbf f_{ij}^{10}\\ \mathbf f_{ij}^{11}
\end{bmatrix},
\]\
where $ {\bf {f}}^{a}_{i}$ denotes the row vector indexed by $x_i=a$ in $M_{i}(f)$, and ${\bf {f}}^{ab}_{ij}$ denotes the row vector indexed by $(x_i, x_j)=(a, b)$ in $M_{i j}(f)$.
For a binary signature $b=b(x_1,x_2)$, $M(b)=\left[\begin{smallmatrix}
    b(0,0) & b(0,1)\\
    b(1,0) & b(1,1)
\end{smallmatrix}\right]$ denotes its $2\times2$ signature matrix.

A \emph{signature grid} over $\mathcal F$ is a pair
$\Omega=(G,\pi)$, where $G=(V,E)$ is a graph and $\pi$ assigns to every
$v\in V$ a signature $f_v\in\mathcal F$ of arity $\deg(v)$, and labels the incident edges $E(v)$ at $v$ with input variables of $f_v$.
We consider all 0-1 edge assignments $\sigma$, and each gives an evaluation $\prod_{v\in V}f_v(\sigma|_{E(v)})$, where $\sigma|_{E(v)}$ denotes the restriction of $\sigma$ to $E(v)$.

\begin{definition}[Holant Problems]
The input to the problem $\Holant(\mathcal{F})$ is a signature grid $\Omega=(G,\pi)$ over $\mathcal{F}$.
The output is the partition function
\[
\operatorname{Holant}({\Omega})
=\sum_{\sigma:E(G)\to\{0,1\}}
 \prod_{v\in V}f_v(\sigma|_{E(v)}).
\]
Bipartite Holant problems $\holant{\mathcal{F}}{\mathcal{G}}$
are {Holant} problems
 over bipartite graphs $H = (U,V,E)$,
where each vertex in $U$ or $V$ is labeled by a signature in $\mathcal{F}$ or $\mathcal{G}$ respectively.
When $\{f\}$ is a singleton set, we write $\Holant(\{f\})$ as $\Holant(f)$ and  $\Holant(\{f\}\cup \mathcal{F})$ as $\Holant(f, \mathcal{F}).$ 
\end{definition}

We say a signature $f$ is \emph{realizable} from $\F$, if $\Holant(\F,f)\leq_T \Holant(\F)$.
The \emph{counting constraint satisfaction} problem \#CSP($\mathcal{F}$) is defined as $\Holant(\mathcal{EQ}\mid \mathcal{F}).$
The problem \#$\mathrm{CSP}_k(\mathcal{F})$ is defined as $\holant{\mathcal{EQ}_k}{\mathcal{F}}$.

For a signature $f$ of arity $n$, define its conjugation $\overline{f}$ by $\overline{f}(\alpha)=\overline{f(\alpha)},\,\forall\alpha\in \{0,1\}^n$.
We define the \emph{arrow reversal} of $f$ by
\(
f^{\textsc{ar}}(\alpha)
:=\overline{f(\overline\alpha)},\,\forall\alpha\in \{0,1\}^n.
\)
We say $f$ satisfies \emph{arrow reversal symmetry} (\textsc{ars}) if
$f=f^{\textsc{ar}}$.  
A signature set $\mathcal F$ is
\emph{conjugate closed}, written $\mathcal F=\overline{\mathcal F}$, if
$f\in\mathcal F$ implies $\overline f\in\mathcal F$.  
In this paper, we focus on the case where $\mathcal{F}$ is conjugate closed.
A set $\mathcal G$
is \emph{arrow-reversal closed}, written
$\mathcal G=\mathcal G^{\textsc{ar}}$, if $g\in\mathcal G$ implies
$g^{\textsc{ar}}\in\mathcal G$.

Let $\mathcal U$ be the set consisting of the binary zero signature and
all binary signatures $b$ for which
\(
M(b)M(b)^\dagger=\lambda I_2
\text{ for some }\lambda>0.
\)
Similarly, $\mathcal O$ consists of the binary zero signature and the
real-valued binary signatures $b$ satisfying
$M(b)M(b)^{\mathtt T}=\lambda I_2$ for some $\lambda>0$.  

We use projective binary matrices in the arity-six argument.  For nonzero
matrices $A,B$, write $A\sim B$ if $A=\lambda B$ for some
$\lambda\in\mathbb C^\times$, and let $[A]$ denote the projective class.
Set
\[
X=N_2,\qquad
Y=\begin{bmatrix}0&-\mathfrak i\\ \mathfrak i&0\end{bmatrix},\qquad
Z=\begin{bmatrix}1&0\\0&-1\end{bmatrix},\qquad
J=\mathfrak iY=\begin{bmatrix}0&1\\-1&0\end{bmatrix}.
\]
Notice that $I_2,X,Z,J$ are matrices of the signatures $=_2^{+},\neq_2^{+},=_2^{-},\neq_2^{-}$, respectively.
For $\varepsilon_1,\varepsilon_2,\varepsilon_3\in\{\pm1\}$, put
\[
C_{\varepsilon_1,\varepsilon_2,\varepsilon_3}
=\frac12\bigl(I_2+\varepsilon_1\mathfrak iX
                 +\varepsilon_2J+\varepsilon_3\mathfrak iZ\bigr).
\]
The projective tetrahedral group and its normal Klein-four subgroup are
\[
\Gamma
=\{[I_2],[X],[Z],[J]\}
 \cup
 \{[C_{\varepsilon_1,\varepsilon_2,\varepsilon_3}]:
        \varepsilon_1,\varepsilon_2,\varepsilon_3\in\{\pm1\}\},
\qquad
K_4=\{[I_2],[X],[Z],[J]\}.
\]
Thus $\Gamma\cong A_4$ and $K_4\cong V_4$, where $A_4$ is the alternating group of order 12 and $V_4$ is the Klein group.
We suppress the projective
brackets when a displayed unitary representative is intended. 
In this paragraph, and everywhere in the quantum arguments, $Z$ is the Pauli
matrix rather than a holographic transformation.

\subsection{Holographic Transformation}
To introduce the idea of holographic transformation,
it is convenient to consider bipartite graphs.
For a general graph,
we can always transform it into a bipartite graph while preserving the Holant value,
as follows.
For each edge in the graph,
we replace it by a path of length two.
(This operation is called the \emph{2-stretch} of the graph and yields the edge-vertex incidence graph.)
Each new vertex is assigned the binary \textsc{Equality} signature $=_2$. Thus, we have $\holant{=_2}{\mathcal{F}}\equiv_T \Holant(\mathcal{F})$.

For an invertible $2$-by-$2$ matrix $T \in {\bf GL}_2({\mathbb{C}})$
 and a signature $f$ of arity $n$, written as
a column vector (covariant tensor) $f \in \mathbb{C}^{2^n}$, we denote by
$Tf = T^{\otimes n} f$ the transformed signature.
  For a signature set $\mathcal{F}$,
define $T\mathcal{F} = \{T f \mid  f \in \mathcal{F}\}$ the set of
transformed signatures.
For signatures written as
 row vectors (contravariant tensors) we define
$f T^{-1}$ and  $\mathcal{F} T^{-1}$ similarly.
Whenever we write $T f$ or $T \mathcal{F}$,
we view the signatures as column vectors;
similarly for $f T^{-1}$ or $\mathcal{F} T^{-1}$ as row vectors.
We can also represent $Tf$ as the matrix $M_{S_k}(Tf)$ with the assignments of variables in $S_k$ as row index and the assignments of the other $n-k$ variables as column index. 
Then,  we  have $M_{S_k}(Tf)=T^{\otimes k}M_{S_k}(f)(T^{\tt T})^{\otimes n-k}$. Similarly, $M_{S_k}(fT^{-1})=({T^{-1}}^{\tt T})^{\otimes k}M_{S_k}(f)(T^{-1})^{\otimes n-k}$.

Let $T \in {\bf GL}_2({\mathbb{C}})$.
The holographic transformation defined by $T$ is the following operation:
given a signature grid $\Omega = (H, \pi)$ of $\holant{\mathcal{F}}{\mathcal{G}}$,
for the same bipartite graph $H$,
we get a new signature grid $\Omega' = (H, \pi')$ of $\holant{\mathcal{F} T^{-1}}{T \mathcal{G}}$ by replacing each signature in
$\mathcal{F}$ or $\mathcal{G}$ with the corresponding signature in $\mathcal{F} T^{-1}$ or $T \mathcal{G}$.

\begin{theorem}[\cite{Valiant08}]\label{thm: holographic transformation}
 For every $T \in {\bf GL}_2({\mathbb{C}})$,
  $\Holant(\mathcal{F} \mid \mathcal{G}) \equiv_T \Holant(\mathcal{F} T^{-1} \mid T \mathcal{G}).$
\end{theorem}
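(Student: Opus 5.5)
The plan is to give a bijection between signature grids of $\holant{\mathcal{F}}{\mathcal{G}}$ and of $\holant{\mathcal{F}T^{-1}}{T\mathcal{G}}$ on the same bipartite graph $H=(U,V,E)$, and to show that corresponding grids have equal Holant values. The transformed signatures may not be literally in $\mathcal{F}$ or $\mathcal{G}$, so I treat $\mathcal{F}T^{-1}$ and $T\mathcal{G}$ as the given signature sets of the second problem. The reduction in each direction is then the identity map on the graph, combined with relabeling each vertex by the corresponding (inverse-)transformed signature. Since $T^{-1}$ exists, the same argument applied to $T^{-1}$ gives the reverse reduction, because $(fT^{-1})T=f$ and $T^{-1}(Tg)=g$. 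Hence both reductions are polynomial-time and we obtain $\equiv_T$.

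The key step is a tensor-contraction identity. Write $\Holant(\Omega)$ as the contraction of the tensor $\bigotimes_{u\in U} f_u$ (a row vector, covariant in the $E$ indices) with $\bigotimes_{v\in V} g_v$ (a column vector). Because $H$ is bipartite, every edge $e\in E$ has exactly one endpoint in $U$ and one in $V$. After reordering the edge indices consistently on both sides according to $\pi$, we therefore have
\[
\Holant(\Omega)=\Bigl(\bigotimes_{u\in U} f_u\Bigr)\Bigl(\bigotimes_{v\in V} g_v\Bigr),
\]
a product of a $1\times 2^{|E|}$ and a $2^{|E|}\times 1$ vector. Now insert $I=(T^{-1})^{\otimes |E|}T^{\otimes |E|}$ between the two factors. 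Since the edges partition as $\bigsqcup_u E(u)=\bigsqcup_v E(v)$, the factor $(T^{-1})^{\otimes|E|}$ splits as $\bigotimes_u (T^{-1})^{\otimes \deg u}$. This yields $\bigotimes_u f_uT^{-1}$ on the left, and likewise $\bigotimes_v Tg_v$ on the right. The resulting product is exactly $\Holant(\Omega')$.

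The main point of care, though not a real obstacle, is the bookkeeping of index orderings. The variable order at each vertex is fixed by $\pi$, and the global edge order must be matched by a permutation. I plan to note that $T^{\otimes |E|}$ commutes with every permutation of tensor factors, so the permutations matching the local orders to the global edge order can be moved past $T^{\otimes|E|}$ freely.

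For $T$ with algebraic entries, all entries stay algebraic and the relabeling is computable. The computation involves nothing beyond the definition of the Holant sum, rewritten as the sum over $\sigma\in\{0,1\}^E$ of the products of entries.
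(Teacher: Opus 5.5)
Your proposal is correct and is the standard argument of \cite{Valiant08}, which the paper cites without reproducing: you write $\Holant(\Omega)$ as the bilinear contraction of $\bigotimes_{u\in U} f_u$ with $\bigotimes_{v\in V} g_v$, insert $(T^{-1})^{\otimes|E|}T^{\otimes|E|}=I$ and split it vertex by vertex using bipartiteness, which gives equal values on corresponding instances, and then apply the same argument with $T^{-1}$ for the reverse direction. One point of terminology: the paper calls the row-vector side contravariant, not covariant.
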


Therefore,
a holographic transformation does not change the complexity of the Holant problem in the bipartite setting. 
A particular useful holographic transformation is the $K$-transformation.
Direct calculation shows $(=_2)K=(\neq_2)$, so
\(
\holant{=_2}{\mathcal F}
\equiv_T
\holant{\neq_2}{K^{-1}\mathcal F}.
\)
Accordingly, throughout this paper
\[
\widehat f:=K^{-1}f,
\qquad
\widehat{\mathcal F}:=K^{-1}\mathcal F.
\]
We also call
$\holant{\neq_2}{\widehat{\mathcal F}}$ the
{$K$-Holant} formulation of $\Holant(\mathcal F)$.

\begin{lemma}\label{lm: conjugation and K-trans}
Let $f$ be a complex-valued signature of arity $n$.  For every
$\alpha\in\{0,1\}^n$,
\(
\widehat{\overline f}(\alpha)
=\overline{\widehat f(\overline\alpha)}.
\)
\end{lemma}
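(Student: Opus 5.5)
The plan is a direct computation from the definition $\widehat f=(K^{-1})^{\otimes n}f$, reduced to a single identity about the $2\times2$ matrix $K^{-1}$. Write $K^{-1}=\frac{1}{\sqrt2}\left[\begin{smallmatrix}1&-\mathfrak i\\ 1&\mathfrak i\end{smallmatrix}\right]$. Since $\det K=-\mathfrak i$, inverting gives $K^{-1}=\mathfrak i\cdot\frac1{\sqrt2}\left[\begin{smallmatrix}-\mathfrak i&-1\\ -\mathfrak i&1\end{smallmatrix}\right]$, which equals the displayed matrix; I will verify this by checking $KK^{-1}=I_2$. The key observation is then
\[
\overline{K^{-1}}=N_2\,K^{-1}.
\]
Indeed, conjugating entrywise turns $\left[\begin{smallmatrix}1&-\mathfrak i\\1&\mathfrak i\end{smallmatrix}\right]$ into $\left[\begin{smallmatrix}1&\mathfrak i\\1&-\mathfrak i\end{smallmatrix}\right]$, which is exactly the original matrix with its two rows swapped. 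Written in entries, this reads $\overline{K^{-1}(a,b)}=K^{-1}(1-a,b)$.

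Next I tensor this identity. Taking $n$-fold tensor powers gives
\[
\overline{(K^{-1})^{\otimes n}}=N_2^{\otimes n}(K^{-1})^{\otimes n}.
\]
Applying both sides to $f$ (as a column vector), and using that conjugation distributes over the product, yields
\[
\overline{\widehat f}=\overline{(K^{-1})^{\otimes n}}\,\overline f .
\]
Since $\overline{(K^{-1})^{\otimes n}}=N_2^{\otimes n}(K^{-1})^{\otimes n}$, the right-hand side equals $N_2^{\otimes n}(K^{-1})^{\otimes n}\overline f=N_2^{\otimes n}\,\widehat{\overline f}$.

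Finally I read off entries. The operator $N_2^{\otimes n}$ acts on a vector by complementing the index, so $(N_2^{\otimes n}g)(\alpha)=g(\overline\alpha)$. The previous step therefore gives $\overline{\widehat f(\alpha)}=\widehat{\overline f}(\overline\alpha)$ for every $\alpha$. Replacing $\alpha$ by $\overline\alpha$ gives exactly the claim $\widehat{\overline f}(\alpha)=\overline{\widehat f(\overline\alpha)}$.

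There is no real obstacle here; the only care needed is to get $K^{-1}$ and its row-swap symmetry right, and to be consistent about $T$ acting on column vectors.
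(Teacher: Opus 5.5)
Your proof is correct and is essentially the paper's argument. The paper uses $K^{\mathtt T}=N_2K^{-1}$ together with $\overline{K^{-1}}=K^{\mathtt T}$ (unitarity of $K$), which is exactly your identity $\overline{K^{-1}}=N_2K^{-1}$, then tensors it and reads off entries; the only cosmetic difference is that you conjugate $\widehat f$ and substitute $\alpha\mapsto\overline\alpha$, whereas the paper expands $\widehat{\overline f}$ directly.
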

\begin{proof}
Using $K^{\mathtt T}=N_2K^{-1}$, we have
\(
\widehat{\overline f}
=(K^{-1})^{\otimes n}\overline f
=\overline{(K^{\mathtt T})^{\otimes n}f}
=N_2^{\otimes n}\overline{\widehat f},
\)
which is the stated entrywise identity.
\end{proof}

The preceding lemma immediately gives
the following equivalence.
\begin{lemma}\label{lm: conjugate closed and ars closed}
$\mathcal F$ is conjugate closed if and only if
$\widehat{\mathcal F}$ is arrow-reversal closed.
\end{lemma}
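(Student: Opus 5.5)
The plan is to use \autoref{lm: conjugation and K-trans} to identify, signature by signature, the hat of a conjugate with the arrow reversal of the hat. That lemma says exactly that for every signature $f$ and every $\alpha$,
\[
\widehat{\overline f}(\alpha)=\overline{\widehat f(\overline\alpha)}=(\widehat f)^{\textsc{ar}}(\alpha),
\]
that is, $\widehat{\overline f}=(\widehat f)^{\textsc{ar}}$ as signatures. In addition, the map $f\mapsto\widehat f=(K^{-1})^{\otimes n}f$ is a bijection from signatures of arity $n$ to signatures of arity $n$, because $K$ is invertible. Its inverse is $g\mapsto K^{\otimes n}g$.

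\emph{Forward direction.} Assume $\mathcal F=\overline{\mathcal F}$ and take an element of $\widehat{\mathcal F}$. It has the form $\widehat f$ for some $f\in\mathcal F$. By conjugate closure, $\overline f\in\mathcal F$. Hence $(\widehat f)^{\textsc{ar}}=\widehat{\overline f}\in\widehat{\mathcal F}$, so $\widehat{\mathcal F}$ is arrow-reversal closed.

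\emph{Converse.} Assume $\widehat{\mathcal F}=\widehat{\mathcal F}^{\textsc{ar}}$ and let $f\in\mathcal F$. Then $(\widehat f)^{\textsc{ar}}\in\widehat{\mathcal F}$, so $(\widehat f)^{\textsc{ar}}=\widehat g$ for some $g\in\mathcal F$. The identity above gives $\widehat{\overline f}=\widehat g$. Injectivity of $K^{-1}$ on each arity then yields $\overline f=g\in\mathcal F$.

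There is no real obstacle. The only point needing care is this use of injectivity of the hat map, together with the fact that the hat map preserves arity. These guarantee that membership of a hatted signature in $\widehat{\mathcal F}$ pulls back to membership of the original signature in $\mathcal F$.
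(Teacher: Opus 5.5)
Your proof is correct and matches the paper's approach: the paper derives this equivalence directly from \cref{lm: conjugation and K-trans}, i.e.\ from the identity $\widehat{\overline f}=(\widehat f)^{\textsc{ar}}$, exactly as you do. You spell out the two directions and the arity-preserving injectivity of $f\mapsto\widehat f$, which the paper leaves implicit.
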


\subsection{Signature Factorization}

For signatures $f$ and $g$ on disjoint variable sets $\{x_1,\ldots,x_n\}$ and $\{y_1,\ldots,y_m\}$, $f\otimes g$ denotes their tensor product, with $(f\otimes g)(x_1,\ldots, x_n,y_1,\ldots,y_m)=f(x_1,\ldots,x_n)\cdot g(y_1,\ldots, y_m)$.
Recall that by our definition, every signature has arity at least one.
A nonzero signature $g$ \emph{divides} $f$ denoted by $g \mid f$, if there is a signature $h$ such that  $f=g \otimes h$
(with possibly a permutation of variables) or there is a constant $\lambda$ such that $f= \lambda \cdot g$.
In the latter case, if $\lambda \neq 0$, then we also have $f \mid g$ since $g= \frac{1}{\lambda} \cdot f$.
For nonzero signatures, if both $g\mid f$ and $f \mid g$, then they are nonzero constant multiples of
each other, and 
we say $g$ is an \emph{associate} of $f$, 
denoted by $g \sim f$.
In terms of  this division relation,
the notions of \emph{irreducible}  signatures  and \emph{prime} signatures have been defined.
They are proved equivalent and thus, the \emph{unique prime factorization} (UPF) of signatures is established \cite{cai-fu-shao-eo}.

A nonzero signature $f$ is irreducible  if there are no signatures $g$ and $h$ such that $f=g\otimes h$. 
A nonzero signature $f$ is a prime signature
if $f \mid g\otimes h$
implies that $f \mid g$ or $f \mid h$.  These notions
are equivalent.
We say a signature $f$ is reducible  if $f = g \otimes h$,
for some signatures $g$ and $h$. All zero signatures 
of arity greater than 1 are reducible.
A prime factorization of a signature $f$ is $f=g_1\otimes \ldots \otimes g_k$ up to a permutation of
variables, where each $g_i$ is irreducible. 

\begin{lemma}[Unique prime factorization~\cite{cai-fu-shao-eo}]\label{unique}
Every nonzero signature $f$ has a prime factorization.
If  $f$ has  prime factorizations
 $f=g_1\otimes \ldots \otimes g_k$ and $f=h_1\otimes \ldots \otimes h_\ell$,
both up to a permutation of variables,
then $k=\ell$ and after reordering the factors we have $g_i \sim h_i$  for all $i$. 
\end{lemma}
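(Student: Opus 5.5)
Existence is proved by strong induction on the arity $n$ of the nonzero signature $f$. If $f$ is irreducible, then $f$ itself is a prime factorization with $k=1$. Otherwise $f=g\otimes h$, where $g$ and $h$ have arities $n_1,n_2\ge1$ with $n_1+n_2=n$, so each is strictly smaller than $n$. Both factors are nonzero, because $f$ is nonzero. By induction each factor has a prime factorization, and concatenating the two lists gives one for $f$, up to a permutation of variables. The base case $n=1$ needs no separate argument: a unary signature cannot be written as a tensor product of two signatures of arity at least one, so every nonzero unary signature is irreducible.

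For uniqueness, the plan is to describe factorizations intrinsically through the variable partition they induce. Suppose $f=g\otimes h$ on a partition $[n]=A\sqcup B$. Then $M_A(f)=\mathbf g^{\mathtt T}\mathbf h$ is a rank-one matrix, where $\mathbf g$ and $\mathbf h$ are the vectors of values of $g$ and $h$. Conversely, if $M_A(f)$ has rank one, then $f$ factors across $(A,B)$, and the factors are determined up to scalars. The key lemma is the following. \emph{If $f$ factors across both $(A,B)$ and $(C,D)$, then $f$ factors across $(A\cap C,\;\cdot\,)$ whenever $A\cap C\neq\emptyset$ and $A\cap C\neq[n]$.} To prove it, pick an assignment $\alpha$ with $f(\alpha)\neq0$. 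A nonzero product signature satisfies
\[
f(x)f(\alpha)=f(x_A\alpha_B)\,f(\alpha_Ax_B)
\]
for all $x$, where $x_A\alpha_B$ denotes the assignment agreeing with $x$ on $A$ and with $\alpha$ on $B$. Applying this identity first for $(A,B)$ and then for $(C,D)$ expresses $f(x)f(\alpha)^{3}$ as a product of four values of $f$, each of which depends on exactly one of the blocks $A\cap C$, $A\cap D$, $B\cap C$, $B\cap D$. Grouping these factors gives the factorization across $A\cap C$ versus its complement.

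Using the lemma, the sets $S$ such that $f$ factors across $(S,[n]\setminus S)$ are closed under intersection, and they are closed under complement by definition. Hence the minimal nonempty such sets form a partition $\mathcal P$ of $[n]$. Now take any prime factorization $f=g_1\otimes\cdots\otimes g_k$ with variable blocks $S_1,\ldots,S_k$. Each $S_i$ admits a factorization across it, so $S_i$ is a union of blocks of $\mathcal P$. On the other hand, if some block $P\in\mathcal P$ were properly contained in $S_i$, then restricting the factorization across $P$ would factor $g_i$. Concretely, we fix the variables outside $S_i$ to a point where the other factors are nonzero; then the rank-one property of $M_P(f)$ passes to $M_{P}(g_i)$, contradicting the irreducibility of $g_i$. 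Therefore $\{S_i\}=\mathcal P$. This gives $k=\ell$, and after matching blocks the factors agree up to scalars: on a common block $S$, both $g_i$ and $h_j$ are proportional to the restriction $f(x_S\,\alpha_{[n]\setminus S})$ for a fixed $\alpha$ with $f(\alpha)\neq 0$. Hence $g_i\sim h_j$.

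The main obstacle is the intersection lemma, specifically the bookkeeping that shows the four-term product really separates by blocks and that division by $f(\alpha)$ is legitimate. The second point needs care only because factors of $f$ may vanish at many points; choosing $\alpha$ in the support of $f$ resolves it, since every factor is then nonzero at the corresponding restriction of $\alpha$. Since the paper cites~\cite{cai-fu-shao-eo} for this result, it is enough to carry out this outline.
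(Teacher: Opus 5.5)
Your proof is correct, but it follows a different route from the one the paper relies on.

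The paper does not prove the lemma; it cites \cite{cai-fu-shao-eo}, noting only that irreducible and prime signatures are shown to be equivalent and that unique prime factorization follows. That is the unique-factorization-domain argument. One first shows that an irreducible $f$ is prime: if $f\mid g\otimes h$, pinning the extra variables at a support point shows that the variables of $f$ cannot straddle $g$ and $h$, which yields $f\mid g$ or $f\mid h$. Then, given two factorizations, $g_1$ divides some $h_j$, irreducibility forces $g_1\sim h_j$, and one cancels and inducts on $k$.

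You never use primality or cancellation. Instead you work with the family of splitting sets $S$, those with $\mathrm{rank}\,M_S(f)=1$. You show this family is closed under complements and nonempty intersections; your four-term identity for $f(x)f(\alpha)^3$ is right. Its atoms therefore form a canonical partition, any prime factorization must have exactly these atoms as blocks, and each factor is recovered as the restriction $f(x_S\alpha_{[n]\setminus S})$.

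What each approach buys:
\begin{itemize}
\item Your version gives an intrinsic description of the factorization: the finest splitting partition, together with explicit formulas for the factors. Uniqueness of the variable partition comes for free, with no induction on $k$.
\item The cited version yields the Euclid-type property ``irreducible $\Rightarrow$ prime'' directly. This matches the divisibility language in which the paper later applies the lemma.
\end{itemize}

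One small point to tidy up. When $f$ is irreducible there are no proper splitting sets, so the ``minimal nonempty such sets'' do not cover $[n]$. Either include $[n]$ itself in the family (trivially $M_{[n]}(f)$ has rank one), or treat $k=1$ separately.
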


For $k\ge1$, let
\[
\mathcal F^{\otimes k}
=\left\{\lambda\bigotimes_{j=1}^k f_j:
  \lambda\in\mathbb C^\times,\ f_j\in\mathcal F\right\},\qquad
\mathcal F^\otimes=\bigcup_{k\ge1}\mathcal F^{\otimes k}.
\]
We also write $\langle\mathcal F\rangle$ for this normalized tensor
closure.

\begin{lemma}[\cite{Lin_Wang_Decomposition_lemma} Lemma 3.1]\label{lm: decomposition of tensor power}
    Let $\F$ be a set of signatures. Then
    $$
        \Holant(\F,f)\leq_T \Holant(\F,f^{\otimes d}),
    $$
    for all $d\geq 1$.
\end{lemma}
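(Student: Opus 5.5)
The plan is to simulate each occurrence of $f$ by grouping occurrences into vertices labeled $f^{\otimes d}$. Given an instance $\Omega$ of $\Holant(\F,f)$ in which $f$ occurs at $n$ vertices, I would form the disjoint union $\Omega^{\sqcup d}$ of $d$ copies of $\Omega$. It contains $nd$ occurrences of $f$. I would partition these occurrences into $n$ groups of size $d$, for example the $d$ copies of the same original vertex. Each group is replaced by a single vertex labeled $f^{\otimes d}$, whose $d$ blocks of variables are wired to the edges of the corresponding $d$ vertices.

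By the definition of the tensor product, this replacement does not change the partition function. We therefore obtain an instance of $\Holant(\F,f^{\otimes d})$ with value $\Holant(\Omega)^d$. The same observation shows more generally that, for any $k$ with $d\mid nk$, the value $\Holant(\Omega)^k$ can be queried. This works because a vertex labeled $f^{\otimes d}$ may host any $d$ occurrences of $f$, even occurrences inside the same copy of $\Omega$.

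The remaining task is to recover $Z=\Holant(\Omega)$ from such powers. This is the main obstacle, since over $\mathbb C$ the value $Z^d$ determines $Z$ only up to a $d$-th root of unity. My first attempt would be to query two exponents $a,b$ with $\gcd(a,b)=1$ and $d\mid na$, $d\mid nb$. From $Z^a$ and $Z^b$ and B\'ezout coefficients $ua+vb=1$, one gets $Z=(Z^a)^u(Z^b)^v$ when $Z\neq 0$, and $Z=0$ is detected directly. Such $a,b$ exist when $d\mid n$, for instance $a=1$ and $b=2$.

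So I would first reduce to the case $d\mid n$. To do this, I would pad $\Omega$ to $\Omega\sqcup\Omega_0$, where $\Omega_0$ is a fixed grid with a computable nonzero value containing the right number of $f$'s modulo $d$. If no such padding grid is available, the fallback is to work with the $d$-fold union directly. In that case I would use further queries $Z^{k}\cdot W_j$, where the $W_j$ come from grids obtained from $\Omega$ by local modifications (such as pinning or stretching one edge). From these one determines the correct root by a consistency or interpolation argument. I expect this root-selection step, rather than the gadget construction, to be where the real work lies.
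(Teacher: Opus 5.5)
Your first two steps are correct, and the padding idea in your last paragraph is the right one. The proof is not finished, though. You never show that a suitable padding grid exists, and the fallback you sketch is not an argument. Pinning would need $\Delta_0,\Delta_1$, which are not available, and you do not say which queries would single out the correct root. The B\'ezout step is also vacuous. The admissible exponents $k$ (those with $d\mid nk$) are exactly the multiples of $d/\gcd(n,d)$, so two coprime ones exist only when $d\mid n$. In that case $\Omega$ itself, after grouping, is already an instance of $\Holant(\F,f^{\otimes d})$.

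The missing observation is that no fallback is needed, because a pad can fail to exist only when the answer is $0$. Fix a residue $r\in\mathbb Z_d$. Either every grid over $\F\cup\{f\}$ with $\#f\equiv r\pmod d$ has value $0$, or you can fix once and for all one such grid $\Omega_r$ with $\Holant(\Omega_r)\neq 0$. This finite table depends only on $\F$, $f$, $d$, so it can be hard-wired into the reduction, and the values $\Holant(\Omega_r)$ are constants.

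Given $\Omega$ with $n\equiv r\pmod d$ copies of $f$, proceed as follows.
\begin{itemize}
\item In the first case, output $0$. This is correct because $\Omega$ is itself such a grid.
\item In the second case, $\Omega_r^{\sqcup(d-1)}$ contains $(d-1)\,\#f(\Omega_r)\equiv -r\pmod d$ copies of $f$ and has nonzero value $\Holant(\Omega_r)^{d-1}$. Hence $\Omega\sqcup\Omega_r^{\sqcup(d-1)}$ has a number of occurrences of $f$ divisible by $d$, and after grouping it is an instance of $\Holant(\F,f^{\otimes d})$. Then $\Holant(\Omega)$ is its value divided by $\Holant(\Omega_r)^{d-1}$.
\end{itemize}
No choice of $d$-th root ever arises in this argument.
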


\subsection{Gadget Construction}

An \emph{$\mathcal F$-gate} is a finite signature grid over $\mathcal F$
with some dangling edges; summing over its internal edges realizes the
signature on the dangling edges. 
It is clear that if $f$ is the signature of an $\mathcal F$-gate, it is realizable from $\mathcal{F}.$
In the following we do not distinguish an $\mathcal{F}$-gate and its signature.

We record some frequently used gadgets in this paper. 
The following four basic constructions are illustrated in \cref{fig:gadget-constructions}.

\paragraph{Merging}
Given a binary signature $b$ and a signature $f$ with distinct variables $x_i,x_j$, the signature obtained \emph{merging} (or \emph{contracting}) variables $x_i,x_j$ of $f$ using $b$ is the following signature
\[
(\partial_{ij}^b f)(x_{[n]\setminus\{i,j\}})
=\sum_{a,c\in\{0,1\}}b(a,c)
  f(x_1,\ldots,x_{i-1},a,x_{i+1},\ldots,x_{j-1},c,x_{j+1},\ldots,x_n).
\]
When $b$ is $=_2$, we simply say merge $x_i$ and $x_j$ of $f$. 
We abbreviate
$\partial_{ij}f:=\partial_{ij}^{=_2}f=f_{ij}^{00}+f_{ij}^{11}$.
We also use the notation $\partial^{M(b)}_{ij}f$ for $\partial^{b}_{ij}f$, where $M(b)$ is the $2$ by 2 signature matrix of $b$.
We use $\partial_{(i_1j_1)\ldots(i_kj_k)}f$ to denote $\partial_{i_1j_1}\ldots\partial_{i_kj_k}f$ for short.
It is clear that the operators $\partial_{ij}$ and $\partial_{k\ell}$ commute if $\{i,j\}\cap \{k,\ell\}=\emptyset.$
For a signature class $\mathscr C$, the notation
$f\in\int\mathscr C$ means that
$\partial_{ij}f\in\mathscr C$ for every pair $i\ne j$.

\paragraph{Binary extension}
    Given a binary signature $b(x_1,x_2)$ and an arity-$n$ signature $f(y_1,\ldots, y_n)$, we may connect $x_i$ of $b$ to $y_j$ of $f$, and obtain a new signature.
    This is called a \emph{binary extension} or \emph{binary modification} to $f$.

\begin{lemma}\label{lem-binary-sim}
    Let ${b_1}(x_1, y_1), {b_2}(x_2, y_2)\in {\mathcal{U}}$. 
    If by connecting the variable $x_1$ of ${b_1}$ and the variable $x_2$ of ${b_2}$ using $=_2$, we get $\lambda\cdot (=_2)(y_1, y_2)$ for some $\lambda\in \mathbb{C}^\times$, then ${b_1}\sim \overline{b_2}$. 
    Moreover, by connecting the variable $y_1$ of ${b_1}$ and the variable $y_2$ of ${b_2}$, we will get $\lambda\cdot (=_2)(x_1, x_2)$.
\end{lemma}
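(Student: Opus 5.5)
The whole statement reduces to $2\times2$ matrix algebra, so I will translate both gadget constructions into matrix products and then use the defining property of $\mathcal U$. Write $B_1=M(b_1)$, with rows indexed by $x_1$ and columns by $y_1$, and $B_2=M(b_2)$, with rows indexed by $x_2$ and columns by $y_2$.

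Connecting $x_1$ to $x_2$ by $=_2$ sums over a common value of $x_1=x_2$. The resulting binary signature on $(y_1,y_2)$ therefore has matrix $B_1^{\mathtt T}B_2$. The hypothesis says exactly that
\[
B_1^{\mathtt T}B_2=\lambda I_2,\qquad \lambda\in\mathbb C^\times.
\]
Since $\lambda\neq0$, neither $b_1$ nor $b_2$ is the zero signature. By the definition of $\mathcal U$, this gives $B_1B_1^\dagger=\mu_1 I_2$ and $B_2B_2^\dagger=\mu_2 I_2$ for some $\mu_1,\mu_2>0$.

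For the first claim, take the conjugate transpose of $B_1B_1^\dagger=\mu_1I_2$ to get the same relation. Its conjugate then gives $\overline{B_1}B_1^{\mathtt T}=\mu_1 I_2$, so $B_1^{\mathtt T}$ is invertible with $(B_1^{\mathtt T})^{-1}=\mu_1^{-1}\overline{B_1}$. Substituting into the hypothesis yields $B_2=\lambda(B_1^{\mathtt T})^{-1}=\frac{\lambda}{\mu_1}\overline{B_1}$. Hence $b_2\sim\overline{b_1}$, which is equivalent to $b_1\sim\overline{b_2}$.

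For the second claim, connecting $y_1$ to $y_2$ produces the signature on $(x_1,x_2)$ with matrix $B_1B_2^{\mathtt T}$. Using $B_2=\frac{\lambda}{\mu_1}\overline{B_1}$, we get $B_2^{\mathtt T}=\frac{\lambda}{\mu_1}B_1^\dagger$, and so
\[
B_1B_2^{\mathtt T}=\frac{\lambda}{\mu_1}B_1B_1^\dagger=\lambda I_2,
\]
which is $\lambda\cdot(=_2)(x_1,x_2)$ with the same $\lambda$.

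The only step that needs care is the bookkeeping: checking that each contraction gives $B_1^{\mathtt T}B_2$ and $B_1B_2^{\mathtt T}$, with the transposes in the right places. After that, everything follows from $\overline{B_1}B_1^{\mathtt T}=\mu_1 I_2$, which comes straight from the defining property of $\mathcal U$.
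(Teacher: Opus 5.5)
Your proof is correct and follows essentially the same route as the paper. Both arguments write the first contraction as $M(b_1)^{\mathtt T}M(b_2)=\lambda I_2$, invert $M(b_1)^{\mathtt T}$ using the defining relation of $\mathcal U$ to get $M(b_2)=\frac{\lambda}{\mu_1}\overline{M(b_1)}$, and then substitute to obtain $M(b_1)M(b_2)^{\mathtt T}=\lambda I_2$. Your explicit use of $\mu_1>0$ (so that conjugating $B_1B_1^\dagger=\mu_1I_2$ gives $\overline{B_1}B_1^{\mathtt T}=\mu_1I_2$) spells out a step the paper leaves implicit.
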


\begin{proof}
Clearly both $b_1$ and $b_2$ are non-zero, otherwise connecting them we obtain a zero signature.
Since $b_1, b_2\in \mathcal{U}$ and they are non-zero, there exists $\lambda_1,\lambda_2\in \mathbb C^\times$ such that $M(b_1)M(b_1)^\dagger=\lambda_1I_2$ and  $M(b_2)M(b_2)^\dagger=\lambda_2I_2$.
By connecting $x_1$ of $b_1$ and $x_2$ of $b_2$, we obtain the binary signature $b_3(y_1,y_2)$ with matrix $M(b_3)=M(b_1)^{\tt T}M(b_2)=\lambda I_2$ for some $\lambda\in \mathbb{C}^\times.$
So $M(b_2)=\lambda (M(b_1)^{\tt T})^{-1}=\frac{\lambda}{\lambda_1} \overline{M(b_1)}$.
Hence $b_1\sim \overline{b_2}$.
Also, by connecting the variable $y_1$ of ${b_1}$ and the variable $y_2$ of ${b_2}$, we will get the signature $b_4(x_1,x_2)$ with matrix $M(b_4)=M(b_1)M(b_2)^{\tt T}=M(b_1)\cdot \frac{\lambda}{\lambda_1}M(b_1)^\dagger=\lambda\cdot(=_2)(x_1, x_2).$
\end{proof}

\paragraph{Mating and first order orthogonality}
Given a (complex-valued) signature $f$ of arity $n\geqslant 2$, we connect $f$ and $\overline{f}$ in the following manner:
Fix a set $S$ of $n-m$ variables among all $n$ variables of $f$. 
For each $x_k\in S$, connect $x_k$ of $f$ with the corresponding variable $x_k$ of $\overline{f}$ using $=_2$. 
The variables that are not in $S$ are called dangling variables.
For $m=1$, there is one dangling variable $x_i$. Then, 
the mating construction  realizes a binary signature, denoted by $\mathfrak m_{i}f$.
\begin{figure}[t]
\centering
\begin{tikzpicture}[
    wire/.style={draw=black, line width=0.6pt, line cap=round},
    boundary/.style={draw=black!55, densely dashed, line width=0.55pt,
        rounded corners=7pt},
    signature/.style={draw=black, circle, fill=white, minimum size=7.2mm,
        inner sep=0pt, line width=0.6pt, font=\small},
    small signature/.style={signature, minimum size=6.5mm},
    port label/.style={font=\scriptsize, inner sep=1pt},
    panel label/.style={font=\small, anchor=north}
]

\begin{scope}[shift={(0,3.75)}]
    \path[use as bounding box] (0,0) rectangle (6.75,3.35);

    \draw[boundary] (1.18,0.64) rectangle (5.45,2.92);
    \node[port label, anchor=south east, text=black!70]
        at (5.33,2.92) {$\mathcal F$};

    \draw[wire] (0.30,2.10) -- (2.05,2.10);
    \draw[wire] (0.52,0.88) -- (2.05,2.10);
    \draw[wire] (2.05,2.10) -- (3.88,2.28);
    \draw[wire] (2.05,2.10) -- (4.12,1.10);
    \draw[wire] (3.88,2.28) -- (4.12,1.10);
    \draw[wire] (3.88,2.28) -- (6.38,2.58);
    \draw[wire] (4.12,1.10) -- (6.38,0.82);

    \node[signature] at (2.05,2.10) {$f_1$};
    \node[signature] at (3.88,2.28) {$f_2$};
    \node[signature] at (4.12,1.10) {$f_3$};

    \node[port label, left] at (0.30,2.10) {$x_1$};
    \node[port label, left] at (0.52,0.88) {$x_2$};
    \node[port label, right] at (6.38,2.58) {$x_3$};
    \node[port label, right] at (6.38,0.82) {$x_4$};
    \node[panel label] at (3.38,0.28)
        {\textup{(a)} An $\mathcal F$-gate};
\end{scope}

\begin{scope}[shift={(7.35,3.75)}]
    \path[use as bounding box] (0,0) rectangle (6.75,3.35);

    \node[small signature] (merge-b) at (3.38,2.60) {$b$};
    \node[signature, minimum size=8.2mm] (merge-f) at (3.38,1.42) {$f$};

    \draw[wire] (merge-b.225) --
        node[port label, left, pos=0.53] {$x_i$} (merge-f.135);
    \draw[wire] (merge-b.315) --
        node[port label, right, pos=0.53] {$x_j$} (merge-f.45);
    \draw[wire] (0.95,1.42) -- (merge-f.180);
    \draw[wire] (merge-f.0) -- (5.80,1.42);
    \draw[wire] (3.38,0.62) -- (merge-f.270);

    \node[panel label] at (3.38,0.28)
        {\textup{(b)} Merging: $\partial_{ij}^{b}f$};
\end{scope}

\begin{scope}
    \path[use as bounding box] (0,0) rectangle (6.75,3.35);

    \node[small signature] (ext-b) at (1.82,1.70) {$b$};
    \node[signature, minimum size=8.2mm] (ext-f) at (4.55,1.70) {$f$};

    \draw[wire] (0.42,1.70) -- (ext-b.180);
    \draw[wire] (ext-b.0) --
        node[port label, above] {$x_i=y_j$} (ext-f.180);
    \draw[wire] (ext-f.0) -- (6.10,1.70);
    \draw[wire] (4.55,2.82) -- (ext-f.90);
    \draw[wire] (4.55,0.64) -- (ext-f.270);

    \node[port label, left] at (0.42,1.70) {$x_{3-i}$};
    \node[panel label] at (3.38,0.28)
        {\textup{(c)} Binary extension};
\end{scope}

\begin{scope}[shift={(7.35,0)}]
    \path[use as bounding box] (0,0) rectangle (6.75,3.35);

    \draw[wire] (0.42,1.70) -- (2.02,1.70);
    \draw[wire] (2.02,2.02) -- (4.72,2.02);
    \draw[wire] (2.02,1.70) -- (4.72,1.70);
    \draw[wire] (2.02,1.38) -- (4.72,1.38);
    \draw[wire] (4.72,1.70) -- (6.32,1.70);

    \node[signature, minimum size=8.5mm] at (2.02,1.70) {$f$};
    \node[signature, minimum size=8.5mm] at (4.72,1.70) {$\overline f$};

    \node[port label, left] at (0.42,1.70) {$x_i$};
    \node[port label, right] at (6.32,1.70) {$x_i'$};
    \node[port label, fill=white] at (3.37,2.30)
        {$x_k=x_k',\ k\ne i$};
    \node[panel label] at (3.38,0.28)
        {\textup{(d)} Mating: $\mathfrak m_i f$};
\end{scope}

\end{tikzpicture}
\caption{Basic gadget constructions.
\textup{(a)} An $\mathcal F$-gate, with every internal signature in
$\mathcal F$.
\textup{(b)} Merging the variables $x_i$ and $x_j$ of $f$ with the binary
signature $b$, producing $\partial_{ij}^{b}f$.
\textup{(c)} A binary extension formed by identifying $x_i$ of $b$ with
$y_j$ of $f$.
\textup{(d)} Mating $f$ with $\overline f$ on all variables except $x_i$,
producing $\mathfrak m_i f$; the three parallel wires represent these
$n-1$ contractions.
Joined half-edges are contracted, whereas unjoined half-edges are free
variables.}
\label{fig:gadget-constructions}
\end{figure}

It can be represented by matrix multiplication. 
We have 
\begin{equation}\label{m-form}
M(\mathfrak m_{i}f)=M_{i}(f)I_2^{\otimes (n-1)}M^{\tt T}_{i}(\overline{f})=M_i(f)M_i(f)^{\dagger}
=\begin{bmatrix}
{\bf {f}}^{0}_i\\
{\bf {f}}^{1}_i\\
\end{bmatrix}
\left[\begin{matrix}
{{\bf {\overline{f}}}^{0}_i}^{\tt T} &{{\bf {\overline{f}}}^{1}_i}^{\tt T}\\
\end{matrix}\right]
=\left[\begin{matrix}
|{\bf f}_i^0|^2 &  \langle {\bf f}_i^0, {\bf f}_i^1 \rangle\\
\langle {\bf f}_i^1, {\bf f}_i^0 \rangle & |{\bf f}_i^1|^2\\
\end{matrix}\right],
\end{equation}
where $\langle \cdot, \cdot\rangle$ denotes the 
inner product and $|\cdot|$ denotes the  norm defined by this inner product.
Note that $\langle{\bf f}^{0}_i, {\bf f}^{1}_i\rangle=\overline{\langle{\bf f}^{1}_i, {\bf f}^{0}_i\rangle}$ and $|\langle{\bf f}^{0}_i, {\bf f}^{1}_i\rangle|\leqslant|{\bf f}^{0}_i||{\bf f}^{1}_i|$ by the Cauchy-Schwarz inequality.

\begin{definition}[First order orthogonality \cite{odd_holant_real}]\label{def-first-order}
       Let $f$ be a complex-valued signature of arity $n \geqslant 2$. It satisfies the \emph{first order orthogonality}, denoted by $f\in$ {\sc 1st-Orth}, if there exists some $\mu\neq 0$ such that for all indices $i\in [n]$, the entries of $f$ satisfy the following equations
\begin{equation*}
    |{\bf f}_{i}^{0}|^2=|{\bf f}_{i}^{1}|^2=\mu, \text{ and } \langle{\bf f}_{i}^{0}, {\bf f}_{i}^{1}\rangle =0.
\end{equation*}
    \end{definition}
    
\begin{proposition}\label{prop: invariant 1st-Orth}
    {\sc 1st-Orth} is invariant up to conjugation and holographic transformation by $T\in \mathbf{U}_2(\mathbb{C})$.
\end{proposition}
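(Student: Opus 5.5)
The plan is to reformulate first order orthogonality as a matrix identity and then check how that identity transforms. By \eqref{m-form}, $f\in$ {\sc 1st-Orth} holds if and only if there is $\mu\neq0$ with
\[
M_i(f)M_i(f)^\dagger=\mu I_2\qquad\text{for every } i\in[n],
\]
since the diagonal entries of $M_i(f)M_i(f)^\dagger$ are $|\mathbf f_i^0|^2,|\mathbf f_i^1|^2$ and the off-diagonal entries are $\langle\mathbf f_i^0,\mathbf f_i^1\rangle$ and its conjugate. Both claimed invariances then reduce to checking that this identity is preserved, with the same $\mu$ or a suitably modified one.

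\emph{Conjugation.} We have $M_i(\overline f)=\overline{M_i(f)}$. Hence
\[
M_i(\overline f)M_i(\overline f)^\dagger=\overline{M_i(f)M_i(f)^\dagger}=\overline{\mu}I_2=\mu I_2,
\]
because $\mu=|\mathbf f_i^0|^2$ is a positive real. The constant $\mu$ is the same for all $i$, so $\overline f\in$ {\sc 1st-Orth}.

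\emph{Unitary holographic transformation.} Let $T\in\mathbf U_2(\mathbb C)$. The preliminaries give
\[
M_i(Tf)=T\,M_i(f)\,(T^{\tt T})^{\otimes(n-1)}.
\]
The matrix $T^{\tt T}$ is unitary, so $(T^{\tt T})^{\otimes(n-1)}$ is unitary as well. Therefore
\[
M_i(Tf)M_i(Tf)^\dagger=T\,M_i(f)\,(T^{\tt T})^{\otimes(n-1)}\bigl((T^{\tt T})^{\otimes(n-1)}\bigr)^\dagger M_i(f)^\dagger T^\dagger=T(\mu I_2)T^\dagger=\mu I_2,
\]
again with the same $\mu$ for every $i$. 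The row-vector form $fT^{-1}$ is handled the same way: $T^{-1}$ and $(T^{-1})^{\tt T}$ are also unitary, and $M_i(fT^{-1})=(T^{-1})^{\tt T}M_i(f)(T^{-1})^{\otimes(n-1)}$.

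\emph{Main obstacle.} The only step needing care is the bookkeeping of the variable ordering in $M_i$. The preliminaries state the transformation formula for $M_{S_k}$ with the row variables listed first. For a general $i$ I will observe that reordering the columns amounts to multiplication by a permutation matrix, which is unitary and commutes with $(T^{\tt T})^{\otimes(n-1)}$. The formula $M_i(Tf)=T\,M_i(f)\,(T^{\tt T})^{\otimes(n-1)}$ therefore holds for every $i$. Since all these properties are reversible (conjugation is an involution and $T^{-1}$ is unitary), the class is invariant, not merely closed, under both operations.
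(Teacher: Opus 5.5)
Your proposal is correct and follows essentially the same route as the paper: the paper also reads {\sc 1st-Orth} as $M_i(f)M_i(f)^\dagger=\mu I_2$, treats conjugation as immediate, and uses $M_i(Tf)=TM_i(f)(T^{\tt T})^{\otimes(n-1)}$ with $(T^{\tt T})^{\otimes(n-1)}\overline{T}^{\otimes(n-1)}=I$ to get $M_i(Tf)M_i(Tf)^\dagger=T\,M_i(f)M_i(f)^\dagger\,T^\dagger$. Your remarks on the row-vector form $fT^{-1}$ and on the column ordering are correct extras that the paper does not spell out.
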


\begin{proof}
    Suppose $f$ has arity $n$.
    Clearly $f\in$ {\sc 1st-Orth} $\iff $ $\overline{f}\in$ {\sc 1st-Orth}.
    Now take $T\in \mathbf{U}_2(\mathbb{C})$, we have
    \begin{equation*}
        \begin{aligned}
            M(\mathfrak m_i (Tf))
            =&M_i(Tf)M_i(Tf)^{\dagger}\\
            =&TM_i(f)(T^{\tt T})^{\otimes n-1}\overline{T}^{\otimes n-1}M_i(f)^\dagger T^\dagger\\
            =&TM_i(f)M_i(f)^\dagger T^\dagger\\
            =&TM(\mathfrak m_i f)T^\dagger.\\
        \end{aligned}
    \end{equation*}
    So, for every $\mu>0$, $M(\mathfrak m_i f)=\mu I_2 \iff M(\mathfrak m_i (Tf))=\mu I_2$.
    That is, $f\in$ {\sc 1st-Orth} $\iff $ $Tf\in$ {\sc 1st-Orth}.
\end{proof}

\begin{remark}
    Since $K\in \mathbf{U}_2(\mathbb{C})$, $f\in$ {\sc 1st-Orth} $\iff \widehat{f}\in $ {\sc 1st-Orth}.
\end{remark}

\begin{lemma}\label{lm: 1st-Orth uniform}
    Let $f$ be a signature of arity $n$.
    If for all indices $i\in [n]$, $M(\mathfrak{m}_{i}f)= \mu_i I_2$ for some real $\mu_i\neq 0$, then $f$ satisfies {\sc 1st-Orth} (i.e., all $\mu_i$ have the same value).
\end{lemma}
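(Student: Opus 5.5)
The plan is to show that every $\mu_i$ equals $\|f\|^2/2$, where $\|f\|^2=\sum_{\alpha\in\{0,1\}^n}|f(\alpha)|^2$ is the squared norm of $f$; this quantity does not depend on $i$. The hypothesis $M(\mathfrak m_i f)=\mu_i I_2$ gives, by \eqref{m-form}, three facts: $|{\bf f}_i^0|^2=\mu_i$, $|{\bf f}_i^1|^2=\mu_i$, and $\langle {\bf f}_i^0,{\bf f}_i^1\rangle=0$. The last fact is already the orthogonality part of {\sc 1st-Orth}. So it only remains to check that the diagonal values agree across indices and are nonzero.

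The key step is to compute the trace of $M(\mathfrak m_i f)=M_i(f)M_i(f)^\dagger$ in two ways. On one side the trace is $|{\bf f}_i^0|^2+|{\bf f}_i^1|^2=2\mu_i$. On the other side, the rows ${\bf f}_i^0$ and ${\bf f}_i^1$ together list every entry $f(\alpha)$ exactly once: each $\alpha$ lands in the row indexed by $\alpha_i$, at the column indexed by the other coordinates. Hence the trace equals $\sum_\alpha |f(\alpha)|^2=\|f\|^2$. Therefore $\mu_i=\|f\|^2/2$ for every $i\in[n]$, and we may set $\mu:=\|f\|^2/2$.

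Finally, $\mu\neq 0$ because $\mu=\mu_1\neq 0$ by hypothesis; equivalently, $f\not\equiv 0$. This also shows $\mu>0$. With a common nonzero $\mu$ and the orthogonality relations in hand, every condition of \cref{def-first-order} holds, so $f\in$ {\sc 1st-Orth}.

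There is no real obstacle here. The only point needing care is the bookkeeping claim that $\{{\bf f}_i^0,{\bf f}_i^1\}$ partitions the entries of $f$, and this holds for any choice of $i$.
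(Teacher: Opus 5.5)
Your proposal is correct and is essentially the paper's argument: the paper connects the two variables of $\mathfrak m_i f$ by $=_2$, which is exactly taking the trace of $M(\mathfrak m_i f)$, obtaining $2\mu_i$ on one side and $\sum_{\alpha}|f(\alpha)|^2$ on the other, which does not depend on $i$. Your explicit bookkeeping that the rows ${\bf f}_i^0,{\bf f}_i^1$ list each entry of $f$ exactly once, together with reading the orthogonality condition off the off-diagonal entries, just spells out what the paper leaves implicit.
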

\begin{proof}
    For an arbitrary $i\in[n]$, connecting two variables of $\mathfrak m_if$ by $=_2$, we will get Holant value $2\mu_i\neq 0.$
    On the other hand, this value is equal to $\sum_{\alpha\in\{0,1\}^n}f(\alpha)\overline{f(\alpha)}$ which is independent of $i$.
    Therefore, all $\mu_i$ have the same value.
\end{proof}

\begin{lemma}\label{lm: not 1st-orth is hard}
    Let $\mathcal{F}$ be a conjugate closed set of signatures containing a nonzero signature that does not satisfy {\sc 1st-Orth}. 
    If $\mathcal{F}$ does not satisfy condition \rm{(\ref{cond: tractable classes})},
    then $\Holant(\mathcal{F})$ is \#P-hard.
\end{lemma}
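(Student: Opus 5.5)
We follow the strategy of the real-valued dichotomy~\cite{odd_holant_real,realholant}: a nonzero signature violating {\sc 1st-Orth} lets us realize, via the mating gadget, a binary signature that is not a multiple of a unitary, and such a binary signature can be used to realize (or interpolate) enough auxiliary signatures to reduce to a known dichotomy.

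\textbf{Step 1: a non-unitary Hermitian binary.} Let $f\in\mathcal F$ be nonzero with $f\notin$ {\sc 1st-Orth}. Since $\overline f\in\mathcal F$, each $\mathfrak m_if$ is realizable, and $M(\mathfrak m_if)=M_i(f)M_i(f)^\dagger$ is Hermitian positive semidefinite and nonzero. If every $M(\mathfrak m_if)$ were a positive multiple of $I_2$, then \cref{lm: 1st-Orth uniform} would give $f\in$ {\sc 1st-Orth}. So some index $i$ gives $A:=M(\mathfrak m_if)$ with eigenvalues $\lambda_1\ge\lambda_2\ge0$, $\lambda_1>\lambda_2$. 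Write $A=U\operatorname{diag}(\lambda_1,\lambda_2)U^\dagger$ with $U$ unitary.

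\textbf{Step 2: interpolation.} Chaining $A$ gives $A^k=U\operatorname{diag}(\lambda_1^k,\lambda_2^k)U^\dagger$, and $\lambda_1/\lambda_2$ is real and not a root of unity, or $\lambda_2=0$. The standard polynomial interpolation along chains, as in the real case, then realizes $U\operatorname{diag}(1,t)U^\dagger$ for every $t\in\mathbb C$. In particular it realizes the rank-one projection $U\operatorname{diag}(1,0)U^\dagger=\mathbf u^\dagger\mathbf u$ for a unit vector $\mathbf u$. Cutting this gadget yields the unary $\mathbf u$ and its conjugate together, as a pair. Using \cref{lm: decomposition of tensor power} and a holographic transformation by the unitary $U$, which preserves $=_2$ and conjugate closure up to a unitary change, we may assume we have the unaries $\Delta_0$ and $\Delta_1$, or at least the pinning pair. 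We also have all binaries $\operatorname{diag}(1,t)$.

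\textbf{Step 3: reduce to a known dichotomy.} With $\Delta_0,\Delta_1$ and the generic diagonal binaries available, we can realize arbitrary unaries $[1,t]$ by merging, which gives $\Holant^\ast$-type power. We then apply the $\Holant^\ast$/$\Holant^c$ dichotomies~\cite{cai-lu-xia-holant-star,Backens-Holant-c}, with all complex unaries or at least the pinning signatures. If $\mathcal F$ fails condition~(\ref{cond: tractable classes}), the augmented set is outside the tractable families of those dichotomies, so the problem is \#P-hard.

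\textbf{Main obstacle.} The delicate part is Step~3: the tractable classes of $\Holant^c$ are larger than those in condition~(\ref{cond: tractable classes}). We must show that if $\mathcal F\cup\{\Delta_0,\Delta_1\}$, transformed by $U$, lands in a $\Holant^c$-tractable class, then $\mathcal F$ itself already satisfies~(\ref{cond: tractable classes}). Otherwise the argument falls back to a case analysis via the realized non-unitary binary, using conjugate closure to exclude the extra local-affine/$\mathcal{DEQ}$ cases. I would handle this by checking that the transformation $U$ is unitary and that the realized binaries force the extra classes to coincide with product-type or affine under a unitary transformation.
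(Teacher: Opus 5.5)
Your Steps 1–2 match the paper's proof up to the point where a degenerate binary is obtained. Mating gives a Hermitian positive semidefinite $M(\mathfrak m_i f)$ that is non-scalar for some $i$, by \cref{lm: 1st-Orth uniform}. Then \cref{lm: 2 by 2 interpolation}, with $\lambda_2/\lambda_1\in[0,1)$, realizes the rank-one projection.

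Two small corrections in these steps. When $\lambda_2=0$ there is no family $U\operatorname{diag}(1,t)U^\dagger$ to interpolate, although the rank-one matrix is then already available. Also, ``cutting'' it into $\mathbf u,\overline{\mathbf u}$ is not free. It is exactly the conjugate-closed decomposition lemma \cref{lm: decomposition}, which itself relies on the \#EO dichotomy when $\hF$ lies in $\mathrm{HW}^{\ge}$ or $\mathrm{HW}^{\le}$.

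\textbf{The gap is in Step 3.}

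\emph{The normalization fails.} A unitary holographic transformation does not preserve $=_2$. The matrix of $(=_2)T^{-1}$ is $(T^{-1})^{\mathtt T}T^{-1}$, which is scalar only if $T$ is orthogonal up to a scalar. A unitary with this property is a phase times a real orthogonal matrix; for example, $K$ is unitary and turns $=_2$ into $\neq_2$. So you cannot normalize $\mathbf u$ to $\Delta_0$ in general. If $\mathbf u\sim[1,\mathfrak i]$, the pair $\mathbf u,\overline{\mathbf u}$ becomes $\Delta_0,\Delta_1$ only in the formulation $\holant{\neq_2}{\hF}$, which is not a $\Holant^c$ problem.

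\emph{The unaries do not follow.} $\Delta_0,\Delta_1$ together with diagonal binaries only produce multiples of $\Delta_0$ and $\Delta_1$, never a general $[1,t]$. So you do not get $\Holant^\ast$ power.

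\emph{The collapse is unproven.} The obstacle you flag, showing that the extra tractable classes collapse to (T), is left unresolved.

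\textbf{The missing idea} is that no pinning pair and no general unaries are needed: any single nonzero unary suffices. Once a nonzero odd-arity signature is available, the odd-arity dichotomy \cref{thm: holant odd} applies. Its tractable cases beyond (T) are cases 1--3 of $(\mathcal{PC})$. Conjugate closure of $\mathcal F$, equivalently arrow-reversal closure of $\hF$ (\cref{lm: conjugate closed and ars closed}), eliminates them:
\begin{itemize}
\item In cases 1 and 2, arrow-reversal closure upgrades the up/down quasi-polymorphism conditions to EO signatures with affine support (\cref{lm: conjugation_polymorphism_affine support}). Hence $\hF\subseteq\mathscr A$ or $\hF\subseteq\mathscr P$.
\item In case 3, arrow reversal maps support of weight at most $1$ to weight at least $n-1$. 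This forces every prime factor to have arity at most $2$, so $\mathcal F\subseteq\mathscr T$.
\end{itemize}
This is \cref{lm: odd holant with conjugation}. With it, the paper's proof is short: realize the degenerate binary, extract a nonzero unary via \cref{lm: decomposition}, and conclude by \cref{lm: odd holant with conjugation}.
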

 
\begin{proof}
    Let $f\in\F$ be a nonzero signature that does not satisfy {\sc 1st-Orth}, and consider $M_{i}(f)$ for an arbitrary index $i$.
    Clearly, $M(\mathfrak{m}_if)=M_{i}(f)M^\dagger_{i}(f)$ is a Hermitian positive semi-definite matrix, which is diagonalizable with two non-negative real eigenvalues $\lambda_i\geqslant\mu_i\geqslant 0$.
    These two eigenvalues are not both zero since $f \not\equiv 0$, and so $M(\mathfrak{m}_if)\neq 0$. Thus, $\lambda_i\neq 0$.
    Then, $|\frac{\mu_i}{\lambda_i}|=1$ iff $\lambda_i=\mu_i$. 

    Since $f$ does not satisfy {\sc 1st-Orth}, by \cref{lm: 1st-Orth uniform}, there is an index $i$ such that $M(\mathfrak{m}_if)\neq  \mu_i I_2$ for any real $\mu_i\neq 0$. 
    Thus, $M(\mathfrak{m}_if)$ has two eigenvalues with different norms. 
    By \cref{lm: 2 by 2 interpolation}, we can realize a nonzero binary signature $g$ such that $M(g)$ is degenerate. This implies that  $g$ can be factorized as a tensor product of two nonzero unary signatures. 
    By \cref{lm: decomposition}, we can realize a nonzero unary signature.
    By \cref{lm: odd holant with conjugation}, $\Holant(\mathcal{F})$ is \#P-hard.
\end{proof}

\subsection{Tractable Signatures}\label{subsec: tractable signatures}

We give some known signature sets that define polynomial time computable (tractable) counting problems.

\begin{definition}
    The class $\mathscr T$ consists of all signatures that, up to a
    permutation of variables, are tensor products of unary and binary signatures.
\end{definition}

\begin{definition}[Affine signatures]\label{def: affine}
A signature $f(x_1,\ldots,x_n)$ is \emph{affine} if it has the form
\[
f(x_1,\ldots,x_n)
=\lambda\,\chi_{A\mathbf x=0}\,
 \mathfrak i^{\,Q(x_1,\ldots,x_n)},
\]
where $\lambda\in\mathbb C$,
$\mathbf x=(x_1,\ldots,x_n,1)^{\mathsf T}$, $A$ is a matrix over
$\mathbb Z_2$, and $Q\in\mathbb Z_4[x_1,\ldots,x_n]$ is multi-linear polynomial of
degree at most two with even mixed coefficients.  
Equivalently,
\[
Q(x_1,\ldots,x_n)
=a_0+\sum_{k=1}^n a_kx_k
 +\sum_{1\le i<j\le n}2b_{ij}x_ix_j,
\qquad a_k,b_{ij}\in\mathbb Z_4.
\]
The equation $A\mathbf x=0$ is evaluated over $\mathbb Z_2$, and $\chi$ is a 0-1 indicator function such that $\chi_{A\mathbf{x}=0}$ is 1 iff $A\mathbf{x}=0.$
The class of affine signatures is denoted by $\mathscr A$. 
\end{definition}

We say that
$f$ has \emph{affine support} if $\mathscr S(f)$ is an affine subspace of
$\mathbb Z_2^n$.
Clearly, any affine signature has affine support. 

\begin{lemma}\label{lm: affine singular value}
    Let $f\in \mathscr{A}$ be a signature of arity n.
    For every subset of indices $S\subseteq[n]$, the nonzero rows of the $2^{|S|}$ by $2^{n-|S|}$ matrix $M_{S,\overline{S}}(f)$ are either orthogonal or proportional.
    Moreover, the $2^{|S|}$ by $2^{n-|S|}$ matrix $M_{S,\overline{S}}(f)$ has equal non-zero singular values, and $\mathrm{rank}(M_{S,\overline{S}}(f))$ is the power of 2.
\end{lemma}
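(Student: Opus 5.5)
We will use the structure of affine signatures directly. Write $f=\lambda\chi_{A\mathbf x=0}\,\mathfrak i^{Q(\mathbf x)}$ with $\lambda\neq0$; the case $f\equiv 0$ is vacuous. Fix $S\subseteq[n]$ and write $\mathbf x=(\mathbf y,\mathbf z)$, where $\mathbf y$ are the variables in $S$ and $\mathbf z$ the rest. The support $\mathscr S(f)$ is an affine subspace $V$. Let $W=\{\mathbf z:(\mathbf 0,\mathbf z)\in V_0\}$, where $V_0$ is the linear subspace underlying $V$. For a row index $\mathbf y$, the row $\mathbf f_{\mathbf y}$ has support $\{\mathbf z:(\mathbf y,\mathbf z)\in V\}$, which is either empty or a coset $\mathbf z_{\mathbf y}+W$. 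Hence any two nonzero rows have supports that are equal or disjoint. Disjoint supports give orthogonal rows.

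The key step is the case of two nonzero rows $\mathbf y,\mathbf y'$ with the same support coset $C$. Parametrize $C$ as $\mathbf z=\mathbf z_0+\mathbf w$ with $\mathbf w\in W$. Because the mixed coefficients of $Q$ are even, $Q(\mathbf y,\mathbf z)-Q(\mathbf y',\mathbf z)$ restricted to $C$ is, modulo $4$, of the form $c+2L(\mathbf w)$, where $c\in\mathbb Z_4$ and $L$ is a $\mathbb Z_2$-linear form on $W$. The cross terms $2b_{ij}y_iz_j$ contribute only linear terms times $2$. The pure-$\mathbf z$ part, including the quadratic terms and the $a_kz_k$ terms lifted to $\mathbb Z_4$ via $z_k\in\{0,1\}$, cancels in the difference.

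Therefore
\[
\langle\mathbf f_{\mathbf y},\mathbf f_{\mathbf y'}\rangle=|\lambda|^2\,\mathfrak i^{c}\sum_{\mathbf w\in W}(-1)^{L(\mathbf w)}.
\]
This sum is $|W|$ if $L\equiv0$ on $W$ and $0$ otherwise. In the first case the two rows differ by the constant factor $\mathfrak i^{c}$, so they are proportional. In the second case they are orthogonal. This proves the first claim.

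For the singular values, group the nonzero rows into proportionality classes. Rows in different classes are orthogonal, and every nonzero row has the same norm $|\lambda|^2|W|$, since all entries have modulus $|\lambda|$ on a coset of size $|W|$. Thus $MM^\dagger$ is block-structured: choosing one representative per class gives an orthogonal set of $r$ vectors of equal norm. Then $MM^\dagger$ has rank $r$, and $M^\dagger M=\sum_{\text{classes}} m_k\,\mathbf u_k^\dagger\mathbf u_k$, where $m_k$ is the class size and $\mathbf u_k$ is the representative.

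The main obstacle is showing that all classes have the same size $m$ and that $r$ is a power of $2$. For this, note that the rows $\mathbf y'$ proportional to a fixed row $\mathbf y$, translated by $\mathbf y$, form a coset of a linear subgroup. Concretely, $\mathbf y\oplus\mathbf y'$ lies in the subgroup of $\pi_S(V_0)$ consisting of $\mathbf d$ such that $(\mathbf d,\mathbf e)\in V_0$ for some $\mathbf e\in W$ (equivalently, with shifted coset equal to $W$), and such that the induced linear form $L$ vanishes. This vanishing condition is linear in $\mathbf d$, again because the mixed coefficients are even. So each class is a coset of a fixed subgroup $D\le\pi_S(V_0)$, and the nonzero rows form the affine set $\pi_S(V)$. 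Hence $m=|D|$ and $r=|\pi_S(V)|/|D|$, which is a power of $2$.

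Consequently $M^\dagger M=m\sum_k\mathbf u_k^\dagger\mathbf u_k$ with the $\mathbf u_k$ orthogonal of equal norm. All nonzero eigenvalues therefore equal $m|\lambda|^2|W|$, so all nonzero singular values are equal and $\operatorname{rank}M=r$ is a power of $2$.
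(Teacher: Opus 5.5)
Your proof is correct and follows essentially the same route as the paper's. Both arguments proceed in the same steps:
\begin{enumerate}
\item The supports of nonzero rows are cosets of the fiber subspace (your $W$, the paper's $V$), so disjoint supports give orthogonal rows.
\item On a common coset the phase difference is $c+2L$ with $L$ linear, and a character-sum argument yields ``orthogonal or proportional.''
\item The proportionality classes are cosets of a fixed subgroup (your $D$, the paper's $H$) inside the projection of the support, which gives a power-of-two rank and the single nonzero singular value $|\lambda|\sqrt{|V||H|}$ in the paper's notation.
\end{enumerate}
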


\begin{proof}
    By the definition of affine signature, it has normal form $f(\mathbf{x})=\lambda \mathbf{1}_{A\mathbf{x}=b}\mathfrak{i}^{Q(\mathbf{x})}$, where the solution set of $A\mathbf x=b$ over $\mathbb Z_2$ is an affine subspace over $\mathbb Z_2^n$ and the $Q(\mathbf{x})$ is a multi-linear polynomial of degree at most two with even mixed coefficients.

    For any subset $S\subseteq [n]$, let $\mathbf{x}=(\mathbf{x}_S,\mathbf{x}_{\overline{S}})$ and  $\mathbf{x}_S,\mathbf{x}_{\overline{S}}$ be the vector of variables in $S$ and $\overline{S}$ respectively. Thus, the signature matrix $M_{S,\overline{S}}(f)$ has row index $\mathbf{x}_S$ and column index $\mathbf{x}_{\overline{S}}$. Since the support of $f$ is an affine subspace, we can assume that $\mathscr{S}(f)=L+(\mathbf{x}_S',\mathbf{x}_{\overline{S}}')$ where $L$ is the solution set of $A\mathbf{x}=0$ over $\mathbb Z_2$, hence $L$ is an linear subspace over $\mathbb{Z}_2^n$, and $(\mathbf{x}_S',\mathbf{x}_{\overline{S}}')$ is a special solution of $A\mathbf{x}=b$.

    Let $W:= \{\mathbf{u}_S\in \{0,1\}^{|S|}: \exists \mathbf{v}_{\overline{S}},(\mathbf{u}_S,\mathbf{v}_{\overline{S}})\in L\}$. This $W$ help us to give the set of all row indices of all nonzero rows, which is the set $\mathbf{x}_S+W$. Let $V:= \{\mathbf{v}_{\overline{S}}\in \{0,1\}^{|\overline{S}|}:(\mathbf{0},\mathbf{v}_{\overline{S}})\in L\}$. This $V$ help us to give the set of column indices that the entry of the matrix is nonzero when giving a nonzero row index. Giving an index $\mathbf{x}_S^{(1)}$ of a nonzero row, there must exist a index $\mathbf{x}_{\overline{S}}^{(1)}$ such that $(\mathbf{x}_S^{(1)},\mathbf{x}_{\overline{S}}^{(1)})\in L+(\mathbf{x}_S',\mathbf{x}_{\overline{S}}')$ since the row is not all zero. Then the index of column that the entry is nonzero belongs to $\mathbf{x}_{\overline{S}}^{(1)}+V$, since for any $\mathbf{v}_{\overline{S}}\in V$, we have $(\mathbf{x}_S^{(1)},\mathbf{x}_{\overline{S}}^{(1)}+\mathbf{v}_{\overline{S}})=(\mathbf{x}_S^{(1)},\mathbf{x}_{\overline{S}}^{(1)})+(\mathbf{0},\mathbf{v}_{\overline{S}})\in L+(\mathbf{x}_S',\mathbf{x}_{\overline{S}}').$ 
    
    Let $\mathscr{S}(f_{S}^{\mathbf{x}_S^{(1)}})=\{\mathbf{x}_{\overline{S}}^{(1)}\in\{0,1\}^{|S|}:(\mathbf{x}_S^{(1)},\mathbf{x}_{\overline{S}}^{(1)})\in L+(\mathbf{x}_S',\mathbf{x}_{\overline{S}}')\}$ be the partly support set of $f$ when fixed the variables in $S$ to $\mathbf{x}_S^{(1)}$. Thus, $\mathscr{S}(f_{S}^{\mathbf{x}_S^{(1)}})= \mathbf{x}_{\overline{S}}^{(1)}+V$ where $\mathbf{x}_{\overline{S}}^{(1)}$ is an arbitrary column index such that $(\mathbf{x}_S^{(1)},\mathbf{x}_{\overline{S}}^{(1)})\in L+(\mathbf{x}_S',\mathbf{x}_{\overline{S}}')$. Hence, for any nonzero row $\mathbf{x}_S^{(1)}$, the partly support $\mathscr{S}(f_{S}^{\mathbf{x}_S^{(1)}})$ is a coset of $V$. Suppose that $\mathscr{S}(f_{S}^{\mathbf{x}_S^{(1)}})=\mathbf{x}_{\overline{S}}^{(1)}+V$ and $\mathscr{S}(f_{S}^{\mathbf{x}_S^{(2)}})=\mathbf{x}_{\overline{S}}^{(2)}+V$ are intersecting, then exist $\mathbf{v}_{\overline{S}}^{(1)},\mathbf{v}_{\overline{S}}^{(2)}\in V$ such that $\mathbf{x}_{\overline{S}}^{(1)}+\mathbf{v}_{\overline{S}}^{(1)}=\mathbf{x}_{\overline{S}}^{(2)}+\mathbf{v}_{\overline{S}}^{(2)}$. We can get $\mathbf{x}_{\overline{S}}^{(1)}-\mathbf{x}_{\overline{S}}^{(2)}=\mathbf{v}_{\overline{S}}^{(2)}-\mathbf{v}_{\overline{S}}^{(1)}\in L$. Thus, for any $\mathbf{x}_{\overline{S}}^{(1)}+\mathbf{v}_{\overline{S}}^{'}\in \mathscr{S}(f_{S}^{\mathbf{x}_S^{(1)}})$, we have $\mathbf{x}_{\overline{S}}^{(1)}+\mathbf{v}_{\overline{S}}^{'}=\mathbf{x}_{\overline{S}}^{(2)}+(\mathbf{x}_{\overline{S}}^{(1)}-\mathbf{x}_{\overline{S}}^{(2)}+\mathbf{v}_{\overline{S}}^{'})\in \mathscr{S}(f_{S}^{\mathbf{x}_S^{(2)}})$. Then $\mathscr{S}(f_{S}^{\mathbf{x}_S^{(1)}})\subseteq \mathscr{S}(f_{S}^{\mathbf{x}_S^{(2)}})$, by symmetric, we can get $\mathscr{S}(f_{S}^{\mathbf{x}_S^{(2)}})\subseteq \mathscr{S}(f_{S}^{\mathbf{x}_S^{(1)}})$. Hence, we can get $\mathscr{S}(f_{S}^{\mathbf{x}_S^{(1)}})= \mathscr{S}(f_{S}^{\mathbf{x}_S^{(2)}})$. This indicates that for any two coset of $V$, either they are identical or they are disjoint. And we also have $\|f_{S}^{\mathbf{x}_S^{(i)}}\|^2=|\lambda|^2|V|$ have the same norm for any nonzero row.

    Until now, if for any the two nonzero rows $\mathbf{x}_S^{(i)},\mathbf{x}_S^{(j)}$ and these partly support $\mathscr{S}(f_{S}^{\mathbf{x}_S^{(i)}}), \mathscr{S}(f_{S}^{\mathbf{x}_S^{(j)}})$ are disjoint, then obviously $\langle f_{S}^{\mathbf{x}_S^{(i)}},f_{S}^{\mathbf{x}_S^{(j)}}\rangle=0$. Next, we consider the case that the two rows have the same partly support $\mathscr{S}(f_{S}^{\mathbf{x}_S^{(i)}})=\mathscr{S}(f_{S}^{\mathbf{x}_S^{(j)}})$, we will prove that $f_{S}^{\mathbf{x}_S^{(i)}},f_{S}^{\mathbf{x}_S^{(j)}}$ are proportional.

    Suppose that for two nonzero rows $\mathbf{x}_S^{(i)},\mathbf{x}_S^{(j)}$, we have that $\mathscr{S}(f_{S}^{\mathbf{x}_S^{(i)}})=\mathbf{x}_{\overline{S}}^{(i)}+V=\mathbf{x}_{\overline{S}}^{(j)}+V =\mathscr{S}(f_{S}^{\mathbf{x}_S^{(j)}})$ where $(\mathbf{x}_S^{(i)},\mathbf{x}_{\overline{S}}^{(i)}),(\mathbf{x}_S^{(j)},\mathbf{x}_{\overline{S}}^{(j)})\in L+(\mathbf{x}_S',\mathbf{x}_{\overline{S}}')$. For any $\mathbf{x}_{\overline{S}}^{(*)}\in \mathscr{S}(f_{S}^{\mathbf{x}_S^{(i)}})=\mathscr{S}(f_{S}^{\mathbf{x}_S^{(j)}})$, we have $(\mathbf{x}_S^{(i)}+\mathbf{x}_S^{(j)},\mathbf{0})=(\mathbf{x}_S^{(i)},\mathbf{x}_{\overline{S}}^{(*)})+(\mathbf{x}_S^{(j)},\mathbf{x}_{\overline{S}}^{(*)})\in L$. Then we can define $W_0=\{\mathbf{d}_S\in\{0,1\}^{|S|}:(\mathbf{d}_S,\mathbf{0})\in L\}$ is a linear subspace. This $W_0$ will help us analyze the class of rows where all rows in the class are pairwise proportional.  

    Let $\mathbf{d}_S=\mathbf{x}_S^{(i)}+\mathbf{x}_S^{(j)}$, we know that $(\mathbf{d}_S,\mathbf{0})\in L$. Suppose that $Q(\mathbf{x}_S,\mathbf{x}_{\overline{S}})=Q_S(\mathbf{x}_S)+Q_{\overline{S}}(\mathbf{x}_{\overline{S}})+2B(\mathbf{x}_S,\mathbf{x}_{\overline{S}}) \mod 4$. Then we analyze the distance between $Q(\mathbf{x}_S^{(i)},\mathbf{x}_{\overline{S}}^{(*)})$ and $Q(\mathbf{x}_S^{(j)},\mathbf{x}_{\overline{S}}^{(*)})$. We have $Q(\mathbf{x}_S^{(i)},\mathbf{x}_{\overline{S}}^{(*)})-Q(\mathbf{x}_S^{(j)},\mathbf{x}_{\overline{S}}^{(*)})=Q(\mathbf{x}_S^{(j)}+\mathbf{d}_S,\mathbf{x}_{\overline{S}}^{(*)})-Q(\mathbf{x}_S^{(j)},\mathbf{x}_{\overline{S}}^{(*)})=c_{\mathbf{x}_S^{(j)},\mathbf{d}_S}+2l_d(\mathbf{x}_{\overline{S}}^{(*)})\mod 4$, where $c_{\mathbf{x}_S^{(j)},\mathbf{d}_S}$ is independent to $\mathbf{x}_{\overline{S}}^{(*)}$ and $l_d(\mathbf{x}_{\overline{S}}^{(*)})$ is a linear polynomial of $\mathbf{x}_{\overline{S}}^{(*)}$. Thus, 
    \begin{equation*}
\begin{aligned}
&f\left(
  \mathbf{x}_S^{(i)},
  \mathbf{x}_{\overline S}^{(*)}
 \right)                                                     \\
&\quad =
 f\left(
  \mathbf{x}_S^{(j)}+\mathbf d_S,
  \mathbf{x}_{\overline S}^{(*)}
 \right)                                                     \\
&\quad =
 \lambda
 \mathfrak{i}^{
   Q\left(
     \mathbf{x}_S^{(j)},
     \mathbf{x}_{\overline S}^{(*)}
   \right)
 }
 \mathfrak{i}^{
   c_{\mathbf{x}_S^{(j)},\mathbf d_S}
   +\ell_{\mathbf d}\left(
      \mathbf{x}_{\overline S}^{(*)}
    \right) 
 }\\
 &\quad =
 \omega_{\mathbf{x}_S^{(j)},\mathbf d_S}(-1)^{\ell_{\mathbf d}}f\left(
     \mathbf{x}_S^{(j)},
     \mathbf{x}_{\overline S}^{(*)}
   \right),
\end{aligned}
\end{equation*}
where $\omega_{\mathbf{x}_S^{(j)},\mathbf d_S}=\mathfrak{i}^{c_{\mathbf{x}_S^{(j)},\mathbf d_S}}$ is independent to $\mathbf{x}_{\overline{S}}^{(*)}$ and thus $|\omega_{\mathbf{x}_S^{(j)},\mathbf d_S}|=1$. Hence, we have that the ratio of the two entries $f\left(
  \mathbf{x}_S^{(i)},
  \mathbf{x}_{\overline S}^{(*)}
 \right)  ,f\left(
     \mathbf{x}_S^{(j)},
     \mathbf{x}_{\overline S}^{(*)}
   \right)$ is $\omega_{\mathbf{x}_S^{(j)},\mathbf d_S}(-1)^{\ell_d(\mathbf{x}_{\overline{S}}^{(*)})}$.

Recall that $\mathscr{S}(f_{S}^{\mathbf{x}_S^{(i)}})=\mathbf{x}_{\overline{S}}^{(i)}+V=\mathbf{x}_{\overline{S}}^{(j)}+V =\mathscr{S}(f_{S}^{\mathbf{x}_S^{(j)}})$. If $\ell_d$ is constant zero, then we have $\ell_d(\mathbf{x}_{\overline{S}}^{(*)})=\ell_d(\mathbf{x}_{\overline{S}}^{(i)})+\ell_d(\mathbf{v_{\overline{S}}})=\ell_d(\mathbf{x}_{\overline{S}}^{(i)})$ where $\mathbf{v}_{\overline{S}}\in V$ and $\mathbf{x}_{\overline{S}}^{(*)}=\mathbf{x}_{\overline{S}}^{(i)}+\mathbf{v_{\overline{S}}}$. Thus, $f\left(
  \mathbf{x}_S^{(i)},
  \mathbf{x}_{\overline S}^{(*)}
 \right)  = \omega_{\mathbf{x}_S^{(j)},\mathbf d_S} f\left(
     \mathbf{x}_S^{(j)},
     \mathbf{x}_{\overline S}^{(*)}
   \right)$. If $\ell_d$ is not constant zero, choosing $\mathbf{v}_{\overline{S}}^{(0)}\in V$ such that $\ell_d(\mathbf{v}_{\overline{S}}^{(0)})=1$. Then we have $S=\sum_{\mathbf{v}_{\overline{S}}\in V}(-1)^{\ell_d(\mathbf{v}_{\overline{S}})}=\sum_{\mathbf{v}_{\overline{S}}\in V}(-1)^{\ell_d(\mathbf{v}_{\overline{S}})+1}=-S$ since $\mathbf{v}_{\overline{S}}+V=V$. Thus, $S=0$ and this indicates that $\langle f_{S}^{\mathbf{x}_S^{(i)}},f_{S}^{\mathbf{x}_S^{(j)}}\rangle=\sum_{\mathbf{w}_{\overline{S}}\in V+\mathbf{x}_{\overline{S}}^{(i)}} f\left(
  \mathbf{x}_S^{(i)},
  \mathbf{w}_{\overline{S}}
 \right)  \overline{f\left(
     \mathbf{x}_S^{(j)},
     \mathbf{w}_{\overline{S}}
   \right)}=\overline{\omega_{\mathbf{x}_S^{(j)},\mathbf d_S}}\sum_{\mathbf{w}_{\overline{S}}\in V+\mathbf{x}_{\overline{S}}^{(i)}}(-1)^{\ell_d(\mathbf{w}_{\overline{S}})}|f\left(
     \mathbf{x}_S^{(j)},
     \mathbf{w}_{\overline{S}}
   \right)|^2=\overline{\omega_{\mathbf{x}_S^{(j)},\mathbf d_S}}|\lambda|^2\sum_{\mathbf{v}_{\overline{S}}\in V}(-1)^{\ell_d(\mathbf{v}_{\overline{S}})+\ell_d(\mathbf{x}_{\overline{S}}^{(i)})}=\overline{\omega_{\mathbf{x}_S^{(j)},\mathbf d_S}}|\lambda|^2(-1)^{\ell_d(\mathbf{x}_{\overline{S}}^{(i)})}S=0$.

   Hence, for any two nonzero rows, either they are orthogonal or they are proportional, i.e., $\langle f_{S}^{\mathbf{x}_S^{(i)}},f_{S}^{\mathbf{x}_S^{(j)}}\rangle=0$ or $f_{S}^{\mathbf{x}_S^{(i)}}=\omega_{\mathbf{x}_S^{(i)},\mathbf{x}_S^{(j)}}f_{S}^{\mathbf{x}_S^{(j)}}$ where $|\omega_{\mathbf{x}_S^{(i)},\mathbf{x}_S^{(j)}}|=1$.

   Let $H=\{\mathbf{d}_S\in W_0: \ell_d(\mathbf{v}_{\overline{S}})=0\quad \text{for every }\mathbf{v}_{\overline{S}}\in V\}$ and $H$ is a linear subspace. Thus, when given any nonzero row vector $f_S^{\mathbf{x}_{S}}$, their are $|H|$ many rows that are proportional to $f_S^{\mathbf{x}_{S}}$(include itself). And since the number of nonzero rows is $|W|$, thus there are $|W|/|H|$ many different proportional classes such the elements in one class are pairwise proportional and the elements in the different classes are orthogonal. Since $W,H$ are linear subspace, then $|W|/|H|=2^{\dim W-\dim H}$.

   Finally, using the above proofs, we can represent the vector in one class are $\omega_1r,\dots,\omega_{|H|}r$ where $|\omega_i|=1$ for $i\in [|H|]$. Then the eigenvalue of $(\omega_1r,\dots,\omega_{|H|}r)^T(\overline{\omega_1r},\dots,\overline{\omega_{|H|}r})$ is $|\lambda|^2|V||H|$. Thus, the matrix $M_{S,\overline{S}}(f)$ has same singular value $|\lambda|\sqrt{|V||H|}$. And we can get that the $\operatorname{rank}(M_{S,\overline{S}})=|W|/|H|=2^{\dim W-\dim H}$ is the power of $2$.
    
\end{proof}

\begin{lemma}\label{lm: affine is 1st-orth}
    Let $f\in \mathscr{A}$ be an irreducible signature of arity $n\ge 2$.
    Then $f$ satisfies {\sc 1st-Orth}.
\end{lemma}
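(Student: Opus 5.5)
The plan is to read off {\sc 1st-Orth} one index at a time from \cref{lm: affine singular value} with $S=\{i\}$, and then use \cref{lm: 1st-Orth uniform} to make the constants agree. Fix $i\in[n]$ and consider the $2\times 2^{n-1}$ matrix $M_i(f)=M_{\{i\},[n]\setminus\{i\}}(f)$. Since $f$ is irreducible, it is nonzero by definition, so $M_i(f)\neq 0$. Its rank is therefore $1$ or $2$, and both are powers of $2$, consistent with \cref{lm: affine singular value}.

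First I rule out rank $1$. If $\operatorname{rank} M_i(f)=1$, then $M_i(f)=\mathbf u^{\mathtt T}\mathbf v$ for some nonzero $\mathbf u\in\mathbb C^2$ and nonzero $\mathbf v\in\mathbb C^{2^{n-1}}$. This says $f(x_1,\ldots,x_n)=u(x_i)\,v(x_{[n]\setminus\{i\}})$. Because $n\ge 2$, the factor $v$ has arity $n-1\ge 1$, so $f=u\otimes v$ up to a permutation of variables, contradicting irreducibility.

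Hence $\operatorname{rank} M_i(f)=2$. By \cref{lm: affine singular value}, $M_i(f)$ has two equal nonzero singular values $\sigma_i>0$. Then the Hermitian matrix $M_i(f)M_i(f)^\dagger$ has both eigenvalues equal to $\sigma_i^2$, so it is a scalar matrix. By \eqref{m-form}, this gives $M(\mathfrak m_i f)=\mu_i I_2$ with $\mu_i=\sigma_i^2$ real and nonzero.

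Since this holds for every $i\in[n]$, \cref{lm: 1st-Orth uniform} shows that all $\mu_i$ coincide, and therefore $f\in$ {\sc 1st-Orth}.

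There is no real obstacle here. The only point needing care is the rank-$1$ case, including when one row of $M_i(f)$ is zero. That case is still a rank-$1$ outer product, with $\mathbf u$ equal to $\Delta_0$ or $\Delta_1$ up to scaling, so the same factorization argument applies.
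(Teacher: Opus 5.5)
Your proposal is correct and follows essentially the same route as the paper: apply \cref{lm: affine singular value} with $S=\{i\}$, and use irreducibility to exclude the rank-one case (proportional rows, or one zero row). The differences are minor. You invoke the equal-singular-value clause rather than the orthogonal-or-proportional clause, and you make the index-independence of $\mu$ explicit via \cref{lm: 1st-Orth uniform}, which the paper leaves implicit.
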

\begin{proof}
    By~\cref{lm: affine singular value}, for any $S={i}$ and $\overline{S}=[n]\setminus \{i\}$, then we can get that $M_{S,\overline{S}}(f)$ has two rows, $\|f_i^0\|^2=\|f_i^1\|^2$, and either $f_i^0,f_i^1$ are orthogonal or proportional. If $f_i^0,f_i^1$ are proportional, then $\operatorname{rank}(M_{S,\overline{S}}(f))=1$ and this indicates that $f$ is irreducible. Hence, the only case is that $f_i^0,f_i^1$ are orthogonal, namely $\langle f_i^0,f_i^1\rangle=0$. This gives the {\sc 1st-Orth}.
\end{proof}

\begin{definition}[Product-type signatures]\label{def: product-type}
A signature is of \emph{product type} if it is an ordinary pointwise
product of unary signatures, binary equality signatures, and binary
disequality signatures; different factors may involve overlapping
variables.  The class of all product-type signatures is denoted by
$\mathscr P$.
\end{definition}

Note that the product in \cref{def: product-type} are ordinary products of functions (not tensor products); in particular they may be applied on overlapping sets of variables.

\cref{def: product-type} is succinct.
But an alternative definition of $\mathscr{P}$ is
useful. This is given below in \cref{definition-product-1}.
\begin{definition}[Product-type signatures]\label{definition-product-1}
Let $\mathcal{E}$ be the set of all signatures $f$ such that
the support set
$\mathscr{S}(f)$ is contained in two antipodal points, i.e.,
if $f$ has arity $n$,
then $f$ is zero except on (possibly) two inputs $\alpha
= (a_1, a_2, \ldots, a_n)$
and $\overline{\alpha} =
(\overline{a}_1, \overline{a}_2, \ldots, \overline{a}_n)=(1- a_1, 1- a_2,
 \ldots, 1 - a_n)$.
Then $\mathscr{P}=\langle \mathcal{E}\rangle$.
\end{definition}

By Definition~\ref{definition-product-1}, if a signature $f\in\mathscr{P}$, then the support of $f$ is affine.
Thus the number of nonzero entries of $f$ is a power of 2.

Let $\alpha=e^{\frac{\pi \mathfrak i}{4}}$ such that $\alpha^2=\mathfrak i.$
For $s\in\mathbb Z$, let
\(
T_{\alpha^s}=\left[\begin{smallmatrix}1&0\\0&\alpha^s\end{smallmatrix}\right].
\)

\begin{definition}[Local-affine signatures]
An arity-$n$ signature $f$ is \emph{local-affine} if, for every
$\sigma=s_1\cdots s_n\in\mathscr S(f)$,
\(
\left(T_{\alpha^{s_1}}\otimes\cdots\otimes
      T_{\alpha^{s_n}}\right)f\in\mathscr A.
\)
The class of local-affine signatures is denoted by $\mathscr L$.
\end{definition}

We also use
\(
\alpha\mathscr A
=\left\{T_\alpha^{\otimes\operatorname{arity}(f)}f\mid 
  f\in\mathscr A\right\}.
\)

\begin{definition}[Transformable]
Let $\mathscr C$ be a class of signatures.  A signature set $\mathcal F$
is \emph{$\mathscr C$-transformable} if there exists
$T\in\mathrm{GL}_2(\mathbb C)$ such that
\[
(=_2)(T^{-1})^{\otimes2}\in\mathscr C
\qquad\text{and}\qquad
T\mathcal F\subseteq\mathscr C.
\]
\end{definition}

These definitions give the four alternatives in
\textup{(\ref{cond: tractable classes})}.  If
$\mathcal F\subseteq\mathscr T$, or if $\mathcal F$ is
$\mathscr P$-, $\mathscr A$-, or $\mathscr L$-transformable, then
$\Holant(\mathcal F)$ is computable in polynomial time
by~\cite{realholant}.

\begin{theorem}[\cite{realholant}]\label{thm: tractable classes}
    Let $\mathcal{F}$ be a set of complex valued signatures.
    Then $\Holant(\mathcal{F})$ is tractable if
    \begin{equation}\label{cond: tractable classes}
    \mathcal{F}\subseteq \mathscr{T}, ~~~
     \mathcal{F} \text{ is }\mathscr{P}\text{-transformable,}~~~
     \mathcal{F} \text{ is } \mathscr{A}\text{-transformable,~~~or~} 
     \mathcal{F} \text{ is } \mathscr{L}\text{-transformable.}\tag{$\mathrm{ \textcolor{red}{T}}$}
     \end{equation}
\end{theorem}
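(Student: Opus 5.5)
The statement is the tractability theorem cited from \cite{realholant}. It contains no hardness content, so the plan is to check each of the four alternatives in (\ref{cond: tractable classes}) separately and reduce each to a known polynomial-time algorithm. Throughout, we use \cref{thm: holographic transformation} to move between $\Holant(\mathcal F)\equiv_T\holant{=_2}{\mathcal F}$ and a transformed bipartite problem.

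For $\mathcal F\subseteq\mathscr T$, decompose every signature in the grid into its unary and binary tensor factors. Each such factor occupies at most two edge endpoints, so the signature grid breaks into a disjoint union of paths and cycles. Each cycle contributes the trace of a product of $2\times 2$ matrices. Each path contributes a bilinear form $u^{\mathtt T}M_1\cdots M_k v$ with unary end vectors. The Holant value is the product of these numbers, which is computable in polynomial time.

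For the transformable cases, suppose $T\in\mathrm{GL}_2(\mathbb C)$ satisfies $(=_2)(T^{-1})^{\otimes 2}\in\mathscr C$ and $T\mathcal F\subseteq\mathscr C$, where $\mathscr C\in\{\mathscr P,\mathscr A,\mathscr L\}$. Then
\[
\Holant(\mathcal F)\equiv_T\holant{(=_2)(T^{-1})^{\otimes 2}}{T\mathcal F}.
\]
Every signature on both sides lies in $\mathscr C$. It therefore suffices to show that $\Holant(\mathscr C)$, restricted to finitely many signatures, is tractable. For $\mathscr P$ and $\mathscr A$ this reduces to \#CSP. The problem $\Holant(\mathscr C)$ reduces to $\#\mathrm{CSP}(\mathscr C)$ by replacing each edge with an $=_2$ vertex (\#CSP allows arbitrary equalities). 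The classes $\mathscr P$ and $\mathscr A$ are the known tractable \#CSP classes. For $\mathscr A$, the algorithm computes a sum of the form $\sum_{x\in\text{affine subspace}}\mathfrak i^{Q(x)}$ by Gaussian elimination on the quadratic form over $\mathbb Z_4$. For $\mathscr P$, each connected component of the constraint graph after merging $=$/$\neq$ constraints has only two consistent assignments up to unary weights, so each component sum is trivial.

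The main obstacle is the $\mathscr L$ case, since $\mathscr L$ is not closed under arbitrary contractions and there is no direct \#CSP reduction. The plan there is as follows. First, note that each edge signature $(=_2)(T^{-1})^{\otimes2}$ in $\mathscr L$ is supported on its allowed assignments. Next, for a fixed assignment $\sigma$ on the edges, insert the local factors $T_{\alpha^{s}}$ at each edge endpoint; the weight $\alpha^{s}$ with $s=\sigma(e)$ appears once at each end. Then the pair of factors on an edge combines into $\mathfrak i^{\sigma(e)}$, which is an affine unary. The result rewrites the whole grid as an instance over $\mathscr A$ with extra unary $\mathfrak i$-weights. Checking that these local factors glue consistently across the two endpoints of each edge is the delicate step, and it is where the argument of \cite{realholant} is actually needed. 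With that established, the instance reduces to the affine algorithm and the theorem follows.
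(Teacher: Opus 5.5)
Your treatment of $\mathcal F\subseteq\mathscr T$ and of the $\mathscr P$- and $\mathscr A$-transformable cases is fine. You apply the holographic transformation, then use $\Holant(\mathscr C)\le_T\#\mathrm{CSP}(\mathscr C)$ for $\mathscr C\in\{\mathscr P,\mathscr A\}$. The gap is the $\mathscr L$ case.

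Your construction chooses the local matrices $T_{\alpha^{\sigma(e)}}$ according to the assignment $\sigma$ that is being summed over. This yields a correct identity for each individual term:
\[
\prod_v f_v(\sigma|_{E(v)})=\mathfrak i^{-\operatorname{wt}(\sigma)}\prod_v g_{v,\sigma}(\sigma|_{E(v)}),\qquad g_{v,\sigma}=\Bigl(\bigotimes_{e\in E(v)}T_{\alpha^{\sigma(e)}}\Bigr)f_v\in\mathscr A .
\]
However, the affine signatures $g_{v,\sigma}$ change with $\sigma$. So $\sum_\sigma$ of the right-hand side is not the partition function of any single instance over $\mathscr A$, and no algorithm follows. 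The step you flag as delicate, consistency at the two endpoints of an edge, is not the issue: both endpoints see the same bit $\sigma(e)$ by definition. What is missing is decoupling the transformation from the summation variable. Deferring this to \cite{realholant} leaves the whole $\mathscr L$ case unproved.

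The fix starts from the fact that every $f\in\mathscr L$ has affine support. The diagonal matrices $T_{\alpha^{s}}$ do not change supports, and affine signatures have affine support. So Gaussian elimination finds the affine space of edge assignments $\tau$ with $\tau|_{E(v)}\in\mathscr S(f_v)$ for all $v$. If this space is empty, the Holant value is $0$. Otherwise fix one such $\tau$ and use
\[
g_v=\Bigl(\bigotimes_{e\in E(v)}T_{\alpha^{\tau(e)}}\Bigr)f_v\in\mathscr A
\]
for all $\sigma$ simultaneously. Since each edge has exactly two endpoints,
\[
\prod_v g_v(\sigma|_{E(v)})=\mathfrak i^{\sum_e\tau(e)\sigma(e)}\prod_v f_v(\sigma|_{E(v)}) .
\]
Hence the Holant equals the value of the grid in which each $f_v$ is replaced by $g_v$ and each edge carries the affine binary weight $\operatorname{diag}(1,\mathfrak i^{-\tau(e)})$. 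This is an instance over $\mathscr A$, which is tractable.

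This is exactly the argument behind the tractability of $\#\mathrm{CSP}_2(\mathscr L)$, and it gives the shortcut you dismissed when you said there is no direct reduction. A Holant instance is a $\#\mathrm{CSP}_2$ instance, since every edge variable occurs exactly twice. So the $\mathscr L$-transformable case follows at once from the tractable side of \cref{thm: CSP2}, applied to $\{(=_2)(T^{-1})^{\otimes 2}\}\cup T\mathcal F\subseteq\mathscr L$. The paper itself gives no proof of this theorem; it cites \cite{realholant}.
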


The following lemma can be easily checked from the above definitions.
\begin{lemma}\label{lm: APL-transformable}
    Let $f$ be an irreducible signature of arity $n\ge 2$.
    If $f$ is $\mathscr{A}$- or $\mathscr{P}$- or $\mathscr{L}$-transformable, then there exists $T_1,T_2,\ldots,T_n\in \mathrm{GL}_2(\C)$ such that $f=(T_1\otimes T_2\otimes \cdots \otimes T_n)a$, where $a\in \mathscr{A}.$
\end{lemma}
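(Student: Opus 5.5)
We treat the three classes separately, in each case unfolding the definition of $\mathscr C$-transformability and using that $f$ is irreducible. If $f$ is $\mathscr A$-transformable, there is $T\in\mathrm{GL}_2(\mathbb C)$ with $Tf\in\mathscr A$. Then $f=(T^{-1})^{\otimes n}(Tf)$, so we may take $T_1=\cdots=T_n=T^{-1}$ and $a=Tf$.

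If $f$ is $\mathscr P$-transformable, there is $T$ with $Tf\in\mathscr P$. By \cref{definition-product-1}, $Tf$ is a tensor product of signatures from $\mathcal E$. Since $T^{\otimes n}$ is invertible and acts factorwise, a nontrivial tensor factorization of $Tf$ would give one of $f$. Hence $Tf$ is irreducible, and so $Tf\in\mathcal E$ up to a scalar, i.e.\ $Tf=u\,\chi_{\{\beta\}}+v\,\chi_{\{\overline\beta\}}$ for some $\beta$ and scalars $u,v$. Both $u,v$ are nonzero, since otherwise $Tf$ would be a tensor product of unaries, which is reducible because $n\ge 2$.

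We write this as a local transform of an equality signature. Let $D_i=N_2^{\beta_i}$, which flips coordinate $i$ when $\beta_i=1$. Then $(D_1\otimes\cdots\otimes D_n)Tf=u\,\chi_{0^n}+v\,\chi_{1^n}=\operatorname{diag}(u,v)\otimes I_2^{\otimes(n-1)}(=_n)$. Since $(=_n)\in\mathscr A$, we take $T_1=T^{-1}D_1\operatorname{diag}(u,v)$ and $T_i=T^{-1}D_i$ for $i\ge2$, using $D_i^{-1}=D_i$.

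If $f$ is $\mathscr L$-transformable, there is $T$ with $g:=Tf\in\mathscr L$, and $g$ is nonzero. Pick any $\sigma=s_1\cdots s_n\in\mathscr S(g)$. By the definition of $\mathscr L$, $a:=(T_{\alpha^{s_1}}\otimes\cdots\otimes T_{\alpha^{s_n}})g\in\mathscr A$. Then $f=\bigotimes_i\bigl(T^{-1}T_{\alpha^{s_i}}^{-1}\bigr)a$, and we set $T_i=T^{-1}T_{\alpha^{-s_i}}$.

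There is no real obstacle. The only point needing care is the $\mathscr P$ case, where we must check two things: that irreducibility is preserved under $T^{\otimes n}$, so that $Tf$ is a single factor of $\mathcal E$, and that both antipodal values are nonzero, so that the diagonal rescaling is invertible.
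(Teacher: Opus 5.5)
Your proof is correct and follows the paper's case analysis. The only variation is in the $\mathscr P$ case: the paper takes $a=\chi_{\{\beta,\overline\beta\}}$ directly, which is affine because its support is a coset of $\{0^n,1^n\}$, and uses a single $\operatorname{diag}(u,v)$ instead of your bit flips to $=_n$; you also correctly carry the factor $T^{-1}$ through the $\mathscr L$ and $\mathscr P$ cases, which the paper's written proof omits by treating $f$ itself as the transformed signature.
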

\begin{proof}
    Since $f$ is irreducible, $f$ is a non-zero signature.
    If $f$ is $\mathscr{A}$-transformable, there exists $T\in \mathrm
    GL_2(\C)$ such that $Tf=a$, where $a\in \mathscr{A}$.
    Let $T_1=T_2=\cdots=T_n=T^{-1}$, then $f=(T_1\otimes T_2 \otimes \cdots \otimes T_n)a.$
    If $f$ is $\mathscr{L}$-transformable, pick an arbitrary $s_1\cdots s_n\in \mathscr{S}(f)$, we have $T_{\alpha^{s_1}}\otimes\cdots \otimes T_{\alpha^{s_n}}f=a\in\mathscr{A}.$
    Let $T_i=T_{\alpha^{s_i}}^{-1}$ for $1\le i\le n$, the lemma is proved.

    Now we consider the case that $f$ is $\mathscr{P}$-transformable.
    There exists $T\in \mathrm{GL}_2(\C)$ such that $Tf=g\in \mathscr{P}.$
    Since holographic transformation doesn't change irreducibility, $g$ is also irreducible.
    By \cref{definition-product-1}, $g\in \mathcal{E}$.
    i.e. $\mathscr{S}(g)\subseteq \{\alpha,\overline{\alpha}\}$, where $\alpha\in \{0,1\}^n$.
    If $|\mathscr{S}(g)|=0$, $g$ is the zero signature, contradicting irreducibility.
    If $|\mathscr{S}(g)|=1$, $g$ is a tensor product of unary signatures, again contradicting irreducibility.
    So $\mathscr{S}(g)=\{\alpha,\overline{\alpha}\}.$
    Suppose $g(\alpha)=p$ and $g(\overline{\alpha})=q$, $pq\neq 0.$
    Without loss of generality, assume $\alpha_1=0.$
    Let $T_1=\left[\begin{smallmatrix}
        p & 0\\
        0 & q
    \end{smallmatrix}\right],T_2=\cdots=T_n=I_2.$
    Let $a(x)=1$ iff $x=\alpha$ or $x=\overline{\alpha}.$
    Clearly $a\in \mathscr{A}.$
    We claim $f=(T_1\otimes T_2\otimes \cdots T_n)a.$
    To see this,
    first notice $\mathscr{S}(f)=\mathscr{S}(a)$ since $T_1,\ldots, T_n$ are all diagonal matrices.
    Then $f(\alpha)=p\cdot a(\alpha)=p$, $f(\overline{\alpha})=q\cdot a(\overline{\alpha})=q.$
    This finishes the proof.
\end{proof}

We give the following lemma about the invariance of conjugate closure, condition~\eqref{cond: tractable classes} and irreducibility under real orthogonal transformation.
The proof is deferred to~\autoref{subapp: proof of subsec tractable signature}.

\begin{restatable}{lemma}
{InvarianceUnderRealOrth}\label{lem: invariance under real orthogonal transformation}
Let $\F$ be a set of complex-valued signatures and let $O\in \mathbf{O}_2$ be a real orthogonal $2\times 2$ matrix. Then the following statements hold:
\begin{enumerate}
    \item $\F$ is conjugate closed if and only if $O\F$ is conjugate closed.
    \item $\F$ satisfies condition~\eqref{cond: tractable classes} if and only if $O\F$ satisfies condition~\eqref{cond: tractable classes}.
    \item For every nonzero signature $f$, the signature $f$ is irreducible if and only if $Of$ is irreducible.
\end{enumerate}
Consequently, under a real orthogonal holographic transformation, conjugate closure and outside condition~\eqref{cond: tractable classes} are preserved preserved for signature set, while irreducibility is preserved for each signature.
\end{restatable}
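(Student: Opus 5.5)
The three parts use the same two facts about a real orthogonal $O\in\mathbf O_2$: $O$ is invertible, and it commutes with entrywise conjugation. Indeed, since $O$ is real, $\overline{Of}=\overline{O^{\otimes n}f}=O^{\otimes n}\overline f=O\overline f$ for every signature $f$. Each direction of every equivalence then comes from applying the forward statement to $O^{-1}=O^{\mathtt T}\in\mathbf O_2$ and the set $O\F$.

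\emph{Part 1.} Assume $\F$ is conjugate closed and take $g=Of\in O\F$. Then $\overline g=O\overline f$, and $\overline f\in\F$, so $\overline g\in O\F$. The converse follows by replacing $O$ with $O^{-1}$.

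\emph{Part 3.} If $Of=g\otimes h$, up to a permutation of variables, then $f=(O^{-1}g)\otimes(O^{-1}h)$. This holds because $(O^{-1})^{\otimes n}$ acts factorwise on a tensor product and commutes with permuting variables. Nonzero-ness is preserved since $O$ is invertible. So $f$ is reducible iff $Of$ is reducible, and hence irreducible iff $Of$ is irreducible. The same argument works for any invertible $T$.

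\emph{Part 2.} First consider the class $\mathscr T$. If $f$ is a tensor product of unary and binary signatures, then so is $Of$, by the factorwise action used in Part 3. Hence $\F\subseteq\mathscr T$ iff $O\F\subseteq\mathscr T$.

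For $\mathscr C\in\{\mathscr P,\mathscr A,\mathscr L\}$, suppose $\F$ is $\mathscr C$-transformable via $T$, so that $(=_2)(T^{-1})^{\otimes2}\in\mathscr C$ and $T\F\subseteq\mathscr C$. I claim $O\F$ is $\mathscr C$-transformable via $T'=TO^{-1}$. First, $T'(O\F)=T\F\subseteq\mathscr C$. Second, $(T'^{-1})^{\otimes 2}=(OT^{-1})^{\otimes2}$, and as a row vector the equality $=_2$ transforms to
\[
(=_2)(O\otimes O)(T^{-1})^{\otimes 2}.
\]
The row vector $(=_2)$ corresponds to the identity matrix, so $(=_2)(O\otimes O)$ corresponds to the matrix $O^{\mathtt T}I_2O=I_2$. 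Thus $(=_2)(O\otimes O)=(=_2)$, and the binary condition reduces to the original one, $(=_2)(T^{-1})^{\otimes2}\in\mathscr C$.

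The only care needed is to apply the row-vector convention $M(bA^{\otimes 2})=A^{\mathtt T}M(b)A$ consistently. That convention is exactly where the orthogonality $O^{\mathtt T}O=I_2$ is used, and it is the single place in the argument where "orthogonal" rather than merely "invertible" matters. The converse direction again follows by replacing $O$ with $O^{-1}$. This gives Part 2, including its contrapositive, that failing condition~\eqref{cond: tractable classes} is preserved.
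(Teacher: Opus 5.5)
Your proposal is correct and follows the paper's proof essentially step for step. Part~1 uses that the real matrix $O$ commutes with conjugation. Part~3 and the $\mathscr T$ case use the factorwise action of $O^{\pm1}$ on tensor products. The $\mathscr P,\mathscr A,\mathscr L$ cases use the witness $TO^{-1}$ together with $(=_2)O^{\otimes 2}=(=_2)$. In every part the converse comes from applying the same argument to $O^{-1}$.
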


\subsection{Known Dichotomies and Hardness Results}\label{subsec: hardness results}


\paragraph{Dichotomy for Counting Eulerian Orientations}
$\newline$
  
For $\bowtie \,\in\{=,\ge,\le,>,<\}$, define
\(
\mathrm{HW}^{\bowtie}
=\left\{\sigma\in\{0,1\}^*\mid 
  \operatorname{wt}(\sigma)\mathrel{\bowtie}\frac{|\sigma|}{2}\right\}.
\)
With a slightly abuse of notation,
we also write $f\in\mathrm{HW}^{\bowtie}$ when
$\mathscr S(f)\subseteq\mathrm{HW}^{\bowtie}$.
An arity-$2d$ signature
$f$ is an \emph{$\eo$
signature} if $f\in\eoe$.
For a set $\mathcal F$ of $\eo$
signatures, we use
\[
\ceo(\mathcal F):=\holant{\neq_2}{\mathcal F}.
\]
For completeness, we now give the $\#\eo$ support notation used
in~\cite{GSS26}.  
For an $\eo$ signature $f$, and for all
$\beta,\gamma,\delta\in\mathscr S(f)$, define
\begin{align*}
f\in\POLMID
&\iff \beta\oplus\gamma\oplus\delta\in\mathscr S(f),\\
f\in\POLUP
&\iff \beta\oplus\gamma\oplus\delta
      \in\mathscr S(f)\cup\mathrm{HW}^{>},\\
f\in\POLDOWN
&\iff \beta\oplus\gamma\oplus\delta
      \in\mathscr S(f)\cup\mathrm{HW}^{<}.
\end{align*}
The latter two classes are the up- and down-quasi-polymorphism classes.
A \emph{perfect pairing} $M$ of the variables of an arity-$2d$
signature is a partition into $d$ unordered pairs.  It determines
\[
\{0,1\}^{M}
=\{x\in\{0,1\}^{2d}\mid x_p\ne x_q
  \text{ for every }\{p,q\}\in M\}.
\]
For any signature class $\mathscr C$, define
\[
\eo^{\mathscr C}
=\left\{f\mid f\text{ is an $\eo$ signature and }
  f|_{\{0,1\}^{M}}\in\mathscr C
  \text{ for every perfect pairing }M \text{ of } f\right\}.
\]

\begin{theorem}[Dichotomy for \#EO~\cite{MengWX25},\cite{GSS26}]
\label{thm: dichotomy for EO}
Let $\mathcal F$ be a finite set of complex-valued $\eo$ signatures.  If
\[
\bigl(\mathcal F\subseteq\POLUP
      \ \text{or}\ \mathcal F\subseteq\POLDOWN\bigr)
\quad\text{and}\quad
\bigl(\mathcal F\subseteq\eo^{\mathscr A}
      \ \text{or}\ \mathcal F\subseteq\eo^{\mathscr P}\bigr),
\]
then $\ceo(\mathcal F)$ is computable in polynomial time.  Otherwise,
$\ceo(\mathcal F)$ is \#P-hard.
\end{theorem}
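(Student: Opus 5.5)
This theorem is imported rather than proved from scratch, so the plan is to assemble it from the two cited works. The first step is to use the $\mathrm{FP}^{\mathrm{NP}}$ versus \#P-hard classification for weighted $\ceo$ problems of~\cite{MengWX25}. Its hardness half says the following: if $\mathcal F$ fails the support condition ($\mathcal F\subseteq\POLUP$ or $\mathcal F\subseteq\POLDOWN$), or fails the pairing condition ($\mathcal F\subseteq\eo^{\mathscr A}$ or $\mathcal F\subseteq\eo^{\mathscr P}$), then $\ceo(\mathcal F)=\holant{\neq_2}{\mathcal F}$ is \#P-hard. I would check two things carefully here.
\begin{itemize}
\item The notions $\POLMID$, $\POLUP$, $\POLDOWN$ and $\eo^{\mathscr C}$ defined above (following~\cite{GSS26}) coincide with the quasi-polymorphism and perfect-pairing restriction conditions used in~\cite{MengWX25}.
\item The closure of $\mathscr S(f)\cup\eosg$ under $\beta\oplus\gamma\oplus\delta$ is stated for triples in $\mathscr S(f)$ exactly as there.
\end{itemize}
With this dictionary in place, the ``otherwise \#P-hard'' direction is a direct quotation.

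For the tractable side, \cite{MengWX25} only places the remaining problems in $\mathrm{FP}^{\mathrm{NP}}$. The second step invokes~\cite{GSS26}, which gives polynomial-time algorithms for every case on the $\mathrm{FP}^{\mathrm{NP}}$ side. I would confirm that their tractability hypothesis is precisely the conjunction displayed in the theorem, namely a one-sided quasi-polymorphism of the supports together with $\mathscr A$- or $\mathscr P$-structure on every perfect-pairing restriction. Concretely, I would verify that each of the four combinations ($\POLUP$ or $\POLDOWN$, paired with $\eo^{\mathscr A}$ or $\eo^{\mathscr P}$) is covered, using the arrow-reversal/complement symmetry $\alpha\mapsto\overline\alpha$ of $\neq_2$ to reduce the $\POLDOWN$ cases to the $\POLUP$ ones.

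The only real obstacle is bookkeeping across conventions. The cited works may state results for finite sets of algebraic complex weights, which is consistent with our setting. They may also phrase $\eo$ signatures via $\eoe$ supports or via orientations of edges, so I would make explicit the identification of $\holant{\neq_2}{\mathcal F}$ with counting weighted Eulerian orientations. Once these identifications are checked, the theorem follows by combining the hardness direction of~\cite{MengWX25} with the algorithms of~\cite{GSS26}, and no new argument is needed.
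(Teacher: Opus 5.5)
Your proposal matches how the paper handles this result. The paper does not prove the theorem but assembles it exactly as you describe: the \#P-hardness half comes from the $\mathrm{FP}^{\mathrm{NP}}$ versus \#P-hard classification of~\cite{MengWX25}, and the polynomial-time algorithms for every $\mathrm{FP}^{\mathrm{NP}}$-side case come from~\cite{GSS26}. The dictionary checks you list, including the complement symmetry of $\neq_2$ for the $\POLDOWN$ cases, are routine bookkeeping rather than missing steps.
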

The two alternatives are uniform over $\mathcal F$: signatures from the
up and down classes cannot be mixed, nor can
$\eo^{\mathscr A}$ and $\eo^{\mathscr P}$ be mixed.

\begin{lemma}\label{lm: conjugation_polymorphism_affine support}
    Suppose $\mathcal{F}$ is a conjugate closed signature set.
    If $\widehat{\mathcal{F}}\subseteq \eog$ or $\widehat{\mathcal{F}}\subseteq \eol$, then every $\widehat{f}\in \widehat{\mathcal{F}}$ is $\eo$.
    Moreover, if $\widehat{\mathcal{F}}\subseteq \POLUP$ or $\widehat{\mathcal{F}}\subseteq \POLDOWN$, then every $\widehat{f}\in \widehat{\mathcal{F}}$ has affine support.
\end{lemma}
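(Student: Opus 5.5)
The first claim comes from combining the two one-sided weight conditions using arrow reversal. By \cref{lm: conjugate closed and ars closed}, $\widehat{\mathcal F}$ is arrow-reversal closed. So for every $\widehat f\in\widehat{\mathcal F}$ we also have $\widehat f^{\textsc{ar}}\in\widehat{\mathcal F}$. Assume $\widehat{\mathcal F}\subseteq\eog$; the case $\eol$ is symmetric. Take any $\sigma\in\mathscr S(\widehat f)$. Then $\overline\sigma\in\mathscr S(\widehat f^{\textsc{ar}})$, because $\widehat f^{\textsc{ar}}(\overline\sigma)=\overline{\widehat f(\sigma)}\neq0$. Applying the hypothesis to both signatures gives $\operatorname{wt}(\sigma)\ge|\sigma|/2$ and $\operatorname{wt}(\overline\sigma)=|\sigma|-\operatorname{wt}(\sigma)\ge|\sigma|/2$. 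Hence $\operatorname{wt}(\sigma)=|\sigma|/2$, so $\mathscr S(\widehat f)\subseteq\eoe$ and $\widehat f$ is an $\eo$ signature. If $\widehat f\equiv0$ the statement holds vacuously. If the arity is odd, the support is forced to be empty, which is consistent with the convention.

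For the second claim, I first reduce to the $\eo$ setting. The classes $\POLUP$ and $\POLDOWN$ are defined only for $\eo$ signatures, so the hypothesis already places every $\widehat f$ in $\eoe$. Alternatively, one can note that membership in $\POLUP$ requires $\widehat f\in\eoe$ by definition. In either case $\mathscr S(\widehat f)\subseteq\eoe$ for all $\widehat f\in\widehat{\mathcal F}$.

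Now fix $\widehat f\in\widehat{\mathcal F}\subseteq\POLUP$ and $\beta,\gamma,\delta\in\mathscr S(\widehat f)$, and set $\tau=\beta\oplus\gamma\oplus\delta$. By the definition of $\POLUP$, either $\tau\in\mathscr S(\widehat f)$ or $\tau\in\eosg$. To rule out the second case, I use $g=\widehat f^{\textsc{ar}}\in\widehat{\mathcal F}\subseteq\POLUP$, whose support is $\{\overline\sigma:\sigma\in\mathscr S(\widehat f)\}$. Apply the $\POLUP$ condition to $g$ with $\overline\beta,\overline\gamma,\overline\delta$. Since $\overline\beta\oplus\overline\gamma\oplus\overline\delta=\overline\tau$ (three complements cancel to one), we get $\overline\tau\in\mathscr S(g)$ or $\overline\tau\in\eosg$. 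The first option gives $\tau\in\mathscr S(\widehat f)$ directly. The second gives $\operatorname{wt}(\tau)<|\tau|/2$, which contradicts $\tau\in\eosg$. So if $\tau\notin\mathscr S(\widehat f)$, then $\tau$ must lie in both $\eosg$ and $\eosl$, which is impossible. Therefore $\tau\in\mathscr S(\widehat f)$, and $\widehat f\in\POLMID$. The $\POLDOWN$ case is identical with the inequalities swapped.

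It remains to conclude affine support. A nonempty set $S\subseteq\mathbb Z_2^n$ closed under $\beta\oplus\gamma\oplus\delta$ is an affine subspace: fix $\beta_0\in S$; then $S\oplus\beta_0$ contains $0$ and is closed under $\oplus$, since $(\beta\oplus\beta_0)\oplus(\gamma\oplus\beta_0)\oplus\beta_0=\beta\oplus\gamma\oplus\beta_0$ lies in $S$. Hence $S\oplus\beta_0$ is a linear subspace. An empty support is affine by convention.

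The only delicate point is the one flagged above: making sure the $\POLUP$ hypothesis legitimately gives $\eo$ membership, since $\POLUP$ is defined only for $\eo$ signatures. Everything else is short bookkeeping with arrow reversal.
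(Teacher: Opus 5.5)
Your proof is correct and follows the paper's argument. Conjugate closure gives arrow-reversal closure, which forces both one-sided weight conditions. Applying the quasi-polymorphism condition to $\widehat f^{\textsc{ar}}$ at the complemented triple then places $\beta\oplus\gamma\oplus\delta$ in $(\mathscr S(\widehat f)\cup\eosg)\cap(\mathscr S(\widehat f)\cup\eosl)=\mathscr S(\widehat f)$. Your concern about $\eo$ membership is harmless but unnecessary, since this intersection step never uses it, and your explicit ternary-closure-implies-affine argument fills in a step the paper leaves implicit.
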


\begin{proof}
    We assume $\widehat{\mathcal{F}}\subseteq \eog$, the other case can be handled similarly.
    By \cref{lm: conjugate closed and ars closed}, $\widehat{\mathcal{F}}$ is arrow reversal closed.
    Hence $\widehat{\mathcal{F}}\subseteq \eog\cap \eol$, i.e., $\widehat{\mathcal{F}}$ is a set of $\eo$ signatures.
    We assume $\widehat{\mathcal{F}}\subseteq \POLUP$ without loss of generality.
    Pick arbitrary $\widehat{f}\in \widehat{\mathcal{F}}$ and $\alpha,\beta,\gamma\in \mathscr{S}(\widehat{f})$, we have $\alpha\oplus \beta\oplus \gamma\in \mathscr{S}(\widehat{f})\cup \eosg$.
    By \cref{lm: conjugate closed and ars closed}, $\widehat{f}^{\textsc ar}\in \widehat{\mathcal{F}}$.
    Notice that $\overline{\alpha},\overline{\beta},\overline{\gamma}\in \mathscr{S}(\widehat{f}^{\textsc ar})$.
    Then $\overline{\alpha\oplus\beta\oplus\gamma}\in \mathscr{S}(\widehat{f}^{\textsc ar})\cup \eosg$.
    It follows that $\alpha\oplus\beta\oplus\gamma\in \mathscr{S}(\widehat{f})\cup \eosl$.
    Thus, $\alpha\oplus\beta\oplus\gamma\in (\mathscr{S}(\widehat{f})\cup \eosl)\cap (\mathscr{S}(\widehat{f})\cup \eosg)=\mathscr{S}(\widehat{f})$.
    Therefore, $\widehat{f}$ has affine support.
\end{proof}

\paragraph{Dichotomy for Holant with an Odd Arity Signature}
$\newline$
A signature of arity $k$ is \emph{single-weighted} if its support is
contained in the strings of one Hamming weight $d$.  In that case define
the associated $\eo$ signature
\[
f_{\to\eo}=
\begin{cases}
f\otimes\Delta_0^{\otimes(2d-k)},&2d\ge k,\\
f\otimes\Delta_1^{\otimes(k-2d)},&2d<k.
\end{cases}
\]
Also set $f|_{\eo}:=f|_{\mathrm{HW}^{=}}$, and extend both operations
elementwise to signature sets:
\[
\mathcal F|_{\eo}=\{f|_{\eo}:f\in\mathcal F\},\qquad
\mathcal F_{\to\eo}=\{f_{\to\eo}:f\in\mathcal F\}.
\]
Finally,
\[
\mathcal M
=\{f:f(\sigma)=0\text{ whenever }\operatorname{wt}(\sigma)>1\}
\]
is the class of weighted-matching signatures.
Notice that 
\[
X\mathcal M
=\{f:f(\sigma)=0\text{ whenever }\operatorname{wt}(\sigma)<n-1\}
\]

\begin{theorem}[Dichotomy for $\holodd$ \cite{Holant_Odd}]\label{thm: holant odd}
    Let $\mathcal{F}$ be a set of signatures that contains a non-zero signature of odd arity, then $\Holant(\mathcal{F})$ is \#P-hard unless:
    \begin{enumerate}
        \item $\widehat{\mathcal{F}}$ only contains $\eog$ signatures (or $\eol$ signatures respectively), and $\widehat{\mathcal{F}}|_{\eo}\subseteq \POLUP$ or $\widehat{\mathcal{F}}|_{\eo}\subseteq \POLDOWN$, and $\widehat{\mathcal{F}}|_{\eo}\subseteq \eo^\mathscr{A}$ or $\widehat{\mathcal{F}}|_{\eo}\subseteq \eo^\mathscr{P}$;
        
        \item 
        All signatures in $\widehat{\mathcal{F}}$ are single-weighted. 
        There exists a signature in $\eosg$ and another in $\eosl$ belonging to $\widehat{\mathcal{F}}$.  
        In addition, $\widehat{\mathcal{F}}_{\to\eo}\subseteq \POLUP$ or $\widehat{\mathcal{F}}_{\to\eo}\subseteq \POLDOWN$, and $\widehat{\mathcal{F}}_{\to\eo}\subseteq \eo^\mathscr{A}$ or $\widehat{\mathcal{F}}_{\to\eo}\subseteq \eo^\mathscr{P}$;

        \item $\widehat{\mathcal{F}}\subseteq\langle \mathcal{M}\rangle$ or $\widehat{\mathcal{F}}\subseteq\langle X\mathcal{M}\rangle$;

        \item $\mathcal{F}\subseteq\mathscr{T}$;
        
        \item $\mathcal{F}$ is $\mathscr{A}$-transformable;

        \item $\mathcal{F}$ is $\mathscr{P}$-transformable;

        \item $\mathcal{F}$ is $\mathscr{L}$-transformable;
    \end{enumerate}
    We denote Case 1-7 as condition $(\mathcal{PC})$.
\end{theorem}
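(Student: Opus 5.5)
This statement is imported from \cite{Holant_Odd}, so I would not reprove it from scratch. I would obtain it by citation and only check that its hypotheses and conclusion match what the paper needs.

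The first thing to check is the form of the cited result. In \cite{Holant_Odd} the classification is stated as $\mathrm{FP}^{\mathrm{NP}}$ versus \#P-hard. Its \#P-hardness half says exactly that $\Holant(\mathcal F)$ is \#P-hard whenever $\mathcal F$ contains a nonzero odd-arity signature and fails every case of $(\mathcal{PC})$. That is the only direction asserted here, so it transfers verbatim. The companion paper \cite{GSS26} shows that the $\mathrm{FP}^{\mathrm{NP}}$ cases are polynomial-time computable. This is not needed for the hardness statement, but it confirms that $(\mathcal{PC})$ is the exact tractability boundary.

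The second thing to check is the notation. Cases 1, 2 and 3 are stated in the $K$-Holant formulation $\holant{\neq_2}{\widehat{\mathcal F}}$ with $\widehat f=K^{-1}f$, which I would confirm matches the convention of \cite{Holant_Odd}. Cases 1 and 2 are phrased through \cref{thm: dichotomy for EO} via the operations $f|_{\eo}$ and $f_{\to\eo}$ and the classes $\POLUP$, $\POLDOWN$, $\eo^{\mathscr A}$, $\eo^{\mathscr P}$, all of which are defined exactly as in \cite{GSS26}. Cases 4 to 7 are the four classes of \eqref{cond: tractable classes}.

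If I had to reconstruct the argument, I would follow the route of \cite{Holant_Odd}:
\begin{enumerate}
\item Use the nonzero odd-arity signature, after the $K$-transformation, to realize gadgets whose supports are biased toward Hamming weight above or below half the arity.
\item Case on whether every signature in $\widehat{\mathcal F}$ lies in $\eog$, every one lies in $\eol$, both biases occur, or neither does.
\item In the balanced-bias cases, realize $\Delta_0$ and $\Delta_1$ (or pin to the $\eo$ slice). This reduces $\ceo(\widehat{\mathcal F}|_{\eo})$ or $\ceo(\widehat{\mathcal F}_{\to\eo})$ to $\Holant(\mathcal F)$, and \cref{thm: dichotomy for EO} then gives hardness unless case 1 or case 2 holds.
\item Otherwise, realize a genuinely non-$\eo$ unary signature. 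This reduces the problem to the $\Holant^c$ or $\Holant^*$-type dichotomies, which yield cases 3 to 7.
\end{enumerate}
The main obstacle in that reconstruction is the single-weighted case 2. There the odd signature cannot be split into pinning unaries directly, and padding to $f_{\to\eo}$ must be simulated by gadgets. Since the paper only invokes this theorem, however, I would simply cite \cite{Holant_Odd} together with \cite{GSS26}.
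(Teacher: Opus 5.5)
Your proposal agrees with the paper, which gives no proof of this theorem and simply imports it from \cite{Holant_Odd}. The paper uses \cite{GSS26} alongside it to settle the $\mathrm{FP}^{\mathrm{NP}}$ side, so your citation plus the notation check is all that is required; your reconstruction sketch is unverified and unnecessary, but nothing in your argument depends on it.
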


\begin{theorem}[\cite{Holant_Odd}]\label{thm: decomposition lemma}
    Let $\mathcal{F}$ be a set of signatures and $f$, $g$ be two signatures. Then

$$\Holant(f,g,\mathcal{F})\equiv_T\Holant(f\otimes g,\mathcal{F})$$ 
holds unless $\widehat{\mathcal{F}}'=\widehat{\mathcal{F}}\cup\{\widehat{f}\otimes\widehat{g}\}$ only contains $\eog$ signatures (or $\eol$ signatures respectively).
\end{theorem}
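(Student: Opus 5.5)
The plan is to prove the two reductions separately. The direction $\Holant(f\otimes g,\mathcal F)\le_T\Holant(f,g,\mathcal F)$ is immediate, because $f\otimes g$ is realized by an $\{f,g\}$-gate made of one copy of $f$ and one copy of $g$ placed side by side with no internal edges. All the work is in the converse, $\Holant(f,g,\mathcal F)\le_T\Holant(f\otimes g,\mathcal F)$.

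Fix an input grid $\Omega$ over $\{f,g\}\cup\mathcal F$ with $c_f$ copies of $f$ and $c_g$ copies of $g$, and assume without loss of generality $c_f\ge c_g$. If $c_f=c_g$, pair the copies of $f$ and $g$ arbitrarily. Each pair is then one vertex labelled $f\otimes g$ with the same incident edges, so $\Omega$ is literally a grid over $\{f\otimes g\}\cup\mathcal F$. Otherwise we need a \emph{closing gadget}: a closed signature grid $\Lambda_g$ over $\{g\}\cup\mathcal F$ containing exactly one copy of $g$ and with $\Holant(\Lambda_g)\neq0$. Given it, the disjoint union $\Omega\sqcup(c_f-c_g)\Lambda_g$ has balanced counts, so it is a grid over $\{f\otimes g\}\cup\mathcal F$. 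Its value is $\Holant(\Omega)\cdot\Holant(\Lambda_g)^{c_f-c_g}$. Since $\Holant(\Lambda_g)$ is a fixed nonzero constant, we recover $\Holant(\Omega)$ by division. A symmetric gadget $\Lambda_f$ handles $c_g>c_f$.

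To build $\Lambda_g$ I would pass to the $K$-Holant formulation $\holant{\neq_2}{\widehat{\mathcal F}'}$. There, every closed grid obeys a weight-counting identity. Each $\neq_2$ edge contributes exactly one $1$, so in any nonzero term the total Hamming weight over all vertices equals the number of edges, which is half the total degree. Hence if all signatures lie in $\eosg$, or all in $\eosl$, every closed grid vanishes. This identity shows why the excluded case $\widehat{\mathcal F}'\subseteq\eog$ (or $\eol$) is genuinely necessary. Conversely, outside that case, $\widehat{\mathcal F}'$ contains one signature with support meeting $\eosl$ and one with support meeting $\eosg$. These two can be combined with $\widehat g$ so that the weight balance holds on at least one support point.

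Concretely, I would connect $\widehat g$, several copies of these two witnesses, and further copies of other signatures through $\neq_2$ edges, so that the closed grid has a consistent assignment with nonzero product. I would also keep the number of copies and the wiring as free parameters. The delicate point is that in any closed gadget the copies of $f$ and $g$ must appear in equal numbers, since only $f\otimes g$ is available, not $f$ or $g$ alone. So the gadget must be designed so that its $f$-part and $g$-part separate into independent closed components, or the $f$-part must be absorbed into the count correction.

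The main obstacle is ensuring $\Holant(\Lambda_g)\neq0$, i.e. avoiding cancellation among the nonzero terms. My first attempt is an interpolation/genericity argument. View the value of a family of such gadgets, indexed by the number of repeated blocks (as in \cref{lm: decomposition of tensor power}), as a sum of exponentials in the block count. Its nonvanishing for some block count then follows from the existence of a single nonzero support-compatible term. I would then check that the equal-count constraint on $f$ and $g$ can be met by duplicating the whole construction symmetrically for $f$.
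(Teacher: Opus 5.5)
The paper imports this theorem from the cited work without proof, so your argument has to stand on its own, and it has a genuine gap. The easy direction and the padding idea are fine, but you state the constraint on the padding components wrongly. A padding component is never sent to the oracle by itself. Its value is a constant you compute once by brute force, so it may contain copies of $f$ and $g$ in any numbers; only the disjoint union with $\Omega$ must be balanced. Your worry that $f$ and $g$ ``must appear in equal numbers'' in the gadget, and the fix of duplicating the construction symmetrically for $f$, would produce components of net imbalance $0$, which correct nothing.

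In the other direction, requiring $\Lambda_g$ to contain exactly one copy of $g$ and no $f$ is too rigid. Take $\mathcal F=\emptyset$, $f=\Delta_0$, $g=[1,1]$, where $\widehat f\otimes\widehat g$ has full support, so the exception does not apply. Every closed grid is a perfect matching on unary vertices, so no such $\Lambda_g$ exists. The theorem still holds: all imbalances are even, and one pads with two joined copies of $g$ or of $f$.

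The workable formulation is this. Let $D\subseteq\mathbb Z$ be the set of imbalances $\#g(\Lambda)-\#f(\Lambda)$ of closed grids $\Lambda$ over $\{f,g\}\cup\mathcal F$ with nonzero value. $D$ is closed under addition, so if it meets both $\mathbb Z_{>0}$ and $\mathbb Z_{<0}$ it is a subgroup $h\mathbb Z$. Instances whose imbalance lies outside $h\mathbb Z$ then have value $0$, and all others can be padded using finitely many fixed components. The entire content of the theorem is to show that $D$ meets both signs whenever $\widehat{\mathcal F}'\not\subseteq\eog$ and $\widehat{\mathcal F}'\not\subseteq\eol$.

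For that step you give only the necessity direction (weight counting) and a non-cancellation heuristic that is false. A single nonzero, weight-balanced term in a state sum does not make the sum nonzero. Nor does it make any coefficient $c_i$ nonzero in an expansion $Z_k=\sum_i c_i\lambda_i^k$ over a family of gadgets. Joining $[1,1]$ and $[1,-1]$ by $\neq_2$ gives $1\cdot(-1)+1\cdot 1=0$, although both terms are nonzero and balanced. A ring family with transfer matrix $M=\left[\begin{smallmatrix}1&1\\-1&-1\end{smallmatrix}\right]$ has $\operatorname{tr}(M^k)=0$ for all $k\ge1$, although $M$ has nonzero diagonal entries.

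You also do not explain how to obtain nonzero imbalance when the only witnesses of $\eosl$ and $\eosg$ are $\widehat f\otimes\widehat g$ itself (e.g.\ $\widehat{\mathcal F}\subseteq\eoe$). Components built from copies of $\widehat f\otimes\widehat g$ have imbalance $0$. You must instead exploit how a deficient or surplus support point of $\widehat f\otimes\widehat g$ splits between $\widehat f$ and $\widehat g$, and then rule out cancellation in the resulting component. That construction, with a real non-vanishing proof, is the missing idea.

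Finally, the statement needs $f,g\not\equiv 0$, as in the paper's application to factors of a nonzero signature. If $g\equiv0$, then $f\otimes g$ is zero and $\Holant(f\otimes g,\mathcal F)\equiv_T\Holant(\mathcal F)$, while $\Holant(f,g,\mathcal F)$ contains $\Holant(f,\mathcal F)$. The exception need not apply in that case, and no $\Lambda_g$ can exist.
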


\begin{lemma}[Decomposition lemma]\label{lm: decomposition}
    Let $\mathcal{F}$ be a conjugate closed set of signatures.
    If a non-zero signature $f\in \mathcal{F}$ has a factorization $g\otimes h$, then $\Holant(g, h, \mathcal{F})\equiv_T\Holant( \mathcal{F})$.
\end{lemma}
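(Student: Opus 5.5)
The plan is to invoke \cref{thm: decomposition lemma} with $\mathcal F$ in place of $\mathcal F$ and $f=g\otimes h$. First, the easy direction: $\Holant(\mathcal F)\le_T\Holant(g,h,\mathcal F)$ is trivial, since $\mathcal F\subseteq\{g,h\}\cup\mathcal F$. For the converse, note that $f=g\otimes h\in\mathcal F$. Hence $\Holant(g\otimes h,\mathcal F)=\Holant(\mathcal F)$, and by \cref{thm: decomposition lemma} we get $\Holant(g,h,\mathcal F)\equiv_T\Holant(g\otimes h,\mathcal F)=\Holant(\mathcal F)$. This works unless we are in the exceptional case, where $\widehat{\mathcal F}\cup\{\widehat g\otimes\widehat h\}=\widehat{\mathcal F}$ contains only $\eog$ signatures, or only $\eol$ signatures.

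The main step is therefore to handle that exceptional case using conjugate closure. Suppose $\widehat{\mathcal F}\subseteq\eog$; the $\eol$ case is symmetric. By \cref{lm: conjugation_polymorphism_affine support}, which uses \cref{lm: conjugate closed and ars closed} (arrow-reversal closure of $\widehat{\mathcal F}$), every signature of $\widehat{\mathcal F}$ is actually $\eo$, i.e. lies in $\eog\cap\eol$. Then $\widehat f=\widehat g\otimes\widehat h$ is a nonzero $\eo$ signature. Since $\widehat f\neq0$, both factors are nonzero. Every support string of $\widehat f$ is a concatenation $\beta\gamma$ with $\beta\in\mathscr S(\widehat g)$ and $\gamma\in\mathscr S(\widehat h)$, and conversely every such concatenation lies in $\mathscr S(\widehat f)$. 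I would argue that $\widehat g$ and $\widehat h$ are themselves $\eo$, as follows. Fix $\gamma$ and vary $\beta$: all $\beta\in\mathscr S(\widehat g)$ must have the same weight $w$. Likewise all $\gamma\in\mathscr S(\widehat h)$ share a weight $w'$. Moreover $w+w'=(|\beta|+|\gamma|)/2$.

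This alone does not force $w=|\beta|/2$, and that is the obstacle I expect. If $\widehat g$ is single-weighted but not balanced, then $\widehat g\in\eosg$ or $\eosl$ and the exception genuinely occurs for $\{g,h\}\cup\mathcal F$. The remedy is to use conjugates. Since $\overline f\in\mathcal F$ and $\overline f=\overline g\otimes\overline h$, \cref{lm: conjugation and K-trans} gives $\widehat{\overline g}=\widehat g^{\textsc{ar}}$ up to the complement map. So if $\widehat g\in\eosg$, then $\widehat{\overline g}\in\eosl$. I would then realize $g$ together with a gadget built from $f$ and $\overline f$. Concretely, form $f\otimes\overline f=(g\otimes\overline h)\otimes(\overline g\otimes h)$ after regrouping; the new factor $g\otimes\overline h$ still has $\widehat{\ }$ in $\eo$. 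I would then apply \cref{thm: decomposition lemma} to the set $\mathcal F\cup\{\overline g\otimes h\}$ and the pair $(g,\overline h)$, choosing the grouping so that the extended set is not purely $\eog$ or purely $\eol$, because it contains both a strictly-above and a strictly-below weight signature. Alternatively, if all factors are balanced, the exceptional condition is irrelevant, since $\eo$ signatures lie in both classes. In that case I would instead appeal to the \#EO structure and argue directly: pinning via a $\Delta$-gadget realized from $\widehat h$ with $\neq_2$ lets us extract $\widehat g$ by contracting $\widehat f$ with the arrow-reversed copy $\widehat h^{\textsc{ar}}$ on matching variables. This gives $\widehat g\cdot\|h\|^2$, which is nonzero by the mating identity \eqref{m-form}.

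In fact this last mating trick seems the cleanest uniform route, and I would try it first. Connect the $h$-variables of $f$ to the corresponding variables of $\overline f$ with $=_2$ (a mating gadget on those variables). The result is $g\otimes\overline g\cdot\|h\|^2$, realizable from $\mathcal F$ without any use of \cref{thm: decomposition lemma}. It remains to split $g\otimes\overline g$ into $g$ and $\overline g$. There I again invoke \cref{thm: decomposition lemma}, and the exception is avoided precisely because the $\widehat{\ }$ images of $g$ and $\overline g$ have arrow-reversed, hence complementary, weight profiles. Checking that the exceptional case is still excluded for the enlarged set is the step I expect to require the most care.
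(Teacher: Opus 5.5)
Your treatment of the non-exceptional case is correct and matches the paper's first step. The gap is in the exceptional case. There $\hF\subseteq\eog$, so by \cref{lm: conjugation_polymorphism_affine support} every member of $\hF$ is an $\eo$ signature. None of your remedies gets you out of this case.

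The exception in \cref{thm: decomposition lemma} is a condition on the tensor product together with the rest of the set, not on the separate factors. Your mating computation is right and does realize $\|h\|^2\, g\otimes\overline g$. However, $\widehat{\overline g}(\alpha)=\overline{\widehat g(\overline\alpha)}$. So if $\widehat g$ has arity $m$ and is supported on weight $w$, then $\widehat{\overline g}$ is supported on weight $m-w$. Hence $\widehat g\otimes\widehat{\overline g}$ is supported on weight exactly $m$ out of $2m$: it is $\eo$. The enlarged set therefore still lies in $\eog$, and the exception still applies. The complementary weight profiles are exactly what keeps you inside the exception.

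The same confusion affects your other two remarks. Balanced factors do not make the exception ``irrelevant'': $\eo$ signatures belong to $\eog$, which is precisely the exceptional condition. The regrouping $f\otimes\overline f=(g\otimes\overline h)\otimes(\overline g\otimes h)$ presupposes that $g\otimes\overline h$ is already available. Getting it requires decomposing $f\otimes\overline f$, which is again in the exceptional situation.

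In fact no strategy of this type can work. In $\holant{\neq_2}{\hF}$ with $\hF$ all $\eo$, every signature realizable on the right is again $\eo$. So adding realizable signatures never leaves the exception. Moreover, $g$ itself need not be realizable at all. Take $\widehat g=\Delta_0$, $\widehat h=\Delta_1$, and $\hF=\{\Delta_0\otimes\Delta_1,\Delta_1\otimes\Delta_0\}$: no gadget yields a unary signature.

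The missing idea is to prove the equivalence at the level of complexity, using the $\#\eo$ dichotomy \cref{thm: dichotomy for EO}. In the exceptional case, $\Holant(\F)\equiv_T\ceo(\hF)$.
\begin{itemize}
\item If this problem is \#P-hard, the equivalence with $\Holant(g,h,\F)$ is immediate.
\item Otherwise $\hF\subseteq\POLUP$ or $\hF\subseteq\POLDOWN$, and $\hF\subseteq\eo^{\mathscr A}$ or $\hF\subseteq\eo^{\mathscr P}$.
\item Conjugate closure, via \cref{lm: conjugation_polymorphism_affine support}, upgrades the quasi-polymorphism condition to affine support. Together with $\eo^{\mathscr A}$ or $\eo^{\mathscr P}$, this gives $\hF\subseteq\mathscr A$ or $\hF\subseteq\mathscr P$.
\item Tensor factors of $\widehat f=\widehat g\otimes\widehat h$ stay in the same class, and $\neq_2\in\mathscr A\cap\mathscr P$. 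So $\holant{\neq_2}{\widehat g,\widehat h,\hF}$ is also tractable, and the two problems are equivalent.
\end{itemize}
This is how the paper closes the exceptional case.
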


\begin{proof}
    By \cref{thm: decomposition lemma}, we only need to consider the case that $\widehat{\mathcal{F}}$ only contains $\eog$ signatures (or $\eol$ signatures respectively).
    Since $\mathcal{F}$ is conjugate closed, $\widehat{\mathcal{F}}$ is $\eo$ by \cref{lm: conjugation_polymorphism_affine support}.
    By \cref{thm: dichotomy for EO}, either $\Holant(\mathcal{F})\equiv_T\holant{\neq_2}{\widehat{\mathcal{F}}}$ is \#P-hard, then trivially $\Holant(g,h,\mathcal{F})\equiv_T\Holant(\mathcal{F})$, or $\holant{\neq_2}{\hF}$ is tractable.
    In the tractable case, $\widehat{\mathcal{F}}\subseteq \POLUP$ or $\widehat{\mathcal{F}}\subseteq \POLDOWN$ and $\widehat{\mathcal{F}}\subseteq \eo^{\mathscr{A}}$ or $\widehat{\mathcal{F}}\subseteq \eo^{\mathscr{P}}$.
    By \cref{lm: conjugation_polymorphism_affine support}, every $\widehat{f}\in \mathcal{F}$ has affine support.
    Combined with $\widehat{\mathcal{F}}\subseteq \eo^{\mathscr{A}}$ or $\widehat{\mathcal{F}}\subseteq \eo^{\mathscr{P}}$, we have $\widehat{\mathcal{F}}\subseteq \mathscr{A}$ or $\widehat{\mathcal{F}}\subseteq\mathscr{P}$.
    Since $\widehat{f}=\widehat{g}\otimes\widehat{h}$, $\widehat{g}$ and $\widehat{h}$ are both affine or both product.
    Then trivially $\holant{\neq_2}{\widehat{g},\widehat{h},\widehat{\mathcal{F}}}\equiv_T \holant{\neq_2}{\widehat{\mathcal{F}}}$, which is equivalent to $\Holant(g, h, \mathcal{F})\equiv_T\Holant( \mathcal{F}).$ 
\end{proof}

\begin{lemma}\label{lm: odd holant with conjugation}
    Let $\mathcal{F}$ be a conjugate closed set of signatures containing a nonzero signature of odd arity.
    If $\mathcal{F}$ does not satisfy condition \rm{(\ref{cond: tractable classes})},
    then $\Holant(\mathcal{F})$ is \#P-hard.
\end{lemma}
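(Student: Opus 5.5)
The plan is to invoke \cref{thm: holant odd} and show that, under conjugate closure, each of the exceptional cases 1--3 of condition $(\mathcal{PC})$ either collapses into one of the tractable alternatives of (\ref{cond: tractable classes}) or cannot occur. Since $\mathcal F$ contains a nonzero odd-arity signature and does not satisfy (\ref{cond: tractable classes}), cases 4--7 of $(\mathcal{PC})$ fail. It therefore suffices to rule out cases 1, 2 and 3. The main tool is \cref{lm: conjugate closed and ars closed}: $\widehat{\mathcal F}$ is arrow-reversal closed, and arrow reversal maps the support of $\widehat f$ to its complement set $\{\overline\alpha:\alpha\in\mathscr S(\widehat f)\}$.

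\emph{Case 1.} Suppose $\widehat{\mathcal F}\subseteq\eog$. Arrow-reversal closure gives $\widehat{\mathcal F}\subseteq\eol$ as well, so every $\widehat f$ is supported on $\eoe$. In particular every nonzero signature has even arity, which contradicts the existence of a nonzero odd-arity signature. The same argument handles $\eol$.

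\emph{Case 2.} All $\widehat f$ are single-weighted, so take the nonzero odd-arity $\widehat f$ of arity $k$ with weight $d$. Then $\widehat f^{\textsc{ar}}\in\widehat{\mathcal F}$ has weight $k-d$, and these two signatures lie on opposite sides of $k/2$. I would then check that the remaining conditions on $\widehat{\mathcal F}_{\to\eo}$ (the $\POLUP$/$\POLDOWN$ condition and the $\eo^{\mathscr A}$/$\eo^{\mathscr P}$ condition), together with arrow-reversal closure, force tractability. Applying the argument of \cref{lm: conjugation_polymorphism_affine support} to $\widehat{\mathcal F}_{\to\eo}$ should upgrade the quasi-polymorphism condition to an affine-support condition. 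I expect, however, that this upgrade does not transfer directly, because arrow reversal of $f_{\to\eo}$ pads with $\Delta_0$ instead of $\Delta_1$, which is not the same as $(f^{\textsc{ar}})_{\to\eo}$. This is the step I expect to be the main obstacle.

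The route I would try first avoids that upgrade. A single-weighted odd-arity signature together with its arrow reversal should let us realize (via $\neq_2$ connections in the $K$-Holant) the signatures $\Delta_0$ and $\Delta_1$, or nonzero unary signatures in the original setting. The goal is then to show that either case 2 fails or $\widehat{\mathcal F}$ lies in $\langle\text{affine or product}\rangle$ after undoing $K$. The latter would mean $\mathcal F$ is $\mathscr A$- or $\mathscr P$-transformable, because $K$ maps $=_2$ to $\neq_2$, which is in both $\mathscr A$ and $\mathscr P$. Concretely, I would argue that in the single-weighted case the restriction $\widehat f$ itself agrees with $\widehat f_{\to\eo}$ up to tensoring with $\Delta_0$ or $\Delta_1$, both of which are affine and product-type. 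Hence $\widehat f_{\to\eo}\in\mathscr A$ (resp. $\mathscr P$) gives $\widehat f\in\mathscr A$ (resp. $\mathscr P$), provided the support is affine. Affine support I would derive from arrow-reversal closure as in \cref{lm: conjugation_polymorphism_affine support}, by pairing each $f$ with its reversal.

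\emph{Case 3.} Suppose $\widehat{\mathcal F}\subseteq\langle\mathcal M\rangle$. Arrow reversal sends $\mathcal M$ to $X\mathcal M$, so $\widehat{\mathcal F}\subseteq\langle\mathcal M\rangle\cap\langle X\mathcal M\rangle$. Each prime factor of a signature in this intersection must have support of weight $\le1$ and also of weight $\ge(\text{arity}-1)$, so it has arity at most $2$. Hence every $\widehat f$ is a tensor product of unary and binary signatures. This property is preserved by $K$, so $\mathcal F\subseteq\mathscr T$, a contradiction.
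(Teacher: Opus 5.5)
Your cases 1 and 3 are fine. Case 1 is more direct than the paper's argument, which passes through affine support: a nonzero $\widehat f$ of odd arity cannot be supported on $\eoe$.

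The gap is case 2. You assert that $(\widehat f_{\to\eo})^{\textsc{ar}}$ differs from $(\widehat f^{\textsc{ar}})_{\to\eo}$, and then propose a route that supposedly avoids this. Its decisive step, obtaining affine support ``by pairing each $f$ with its reversal'', is the same step. The hypothesis $\POLUP$ (or $\POLDOWN$) is imposed on the padded set $\widehat{\mathcal F}_{\to\eo}$, not on $\widehat{\mathcal F}$. So the argument of \cref{lm: conjugation_polymorphism_affine support} needs the arrow reversal of $\widehat f_{\to\eo}$ to lie again in $\widehat{\mathcal F}_{\to\eo}$, which is exactly what you said fails. The detour through realizing $\Delta_0,\Delta_1$ or unary signatures contributes nothing toward placing $\widehat{\mathcal F}$ in $\mathscr A$ or $\mathscr P$. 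As written, case 2 is not proved.

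The missing observation is that the obstacle does not exist: arrow reversal commutes with ${\to\eo}$. Let $\widehat f$ have arity $k$ and weight $d$ with $2d\ge k$. Then $\widehat f_{\to\eo}=\widehat f\otimes\Delta_0^{\otimes(2d-k)}$. Since $\Delta_0^{\textsc{ar}}=\Delta_1$ and arrow reversal distributes over tensor products, $(\widehat f_{\to\eo})^{\textsc{ar}}=\widehat f^{\textsc{ar}}\otimes\Delta_1^{\otimes(2d-k)}$. On the other hand, $\widehat f^{\textsc{ar}}$ is single-weighted of weight $k-d$ with $2(k-d)\le k$. So its own padding uses $\Delta_1$, and $(\widehat f^{\textsc{ar}})_{\to\eo}=\widehat f^{\textsc{ar}}\otimes\Delta_1^{\otimes(k-2(k-d))}$, which is the same signature. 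The case $2d<k$ is symmetric. In short, the swap $\Delta_0\leftrightarrow\Delta_1$ is exactly compensated by $\widehat f^{\textsc{ar}}$ lying on the other side of $k/2$.

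Hence $\widehat{\mathcal F}_{\to\eo}$ is arrow-reversal closed, and the proof of \cref{lm: conjugation_polymorphism_affine support} applies to it verbatim, giving affine support. Together with $\eo^{\mathscr A}$ or $\eo^{\mathscr P}$, this yields $\widehat{\mathcal F}_{\to\eo}\subseteq\mathscr A$ or $\widehat{\mathcal F}_{\to\eo}\subseteq\mathscr P$. Pinning the padded coordinates gives $\widehat{\mathcal F}\subseteq\mathscr A$ or $\widehat{\mathcal F}\subseteq\mathscr P$. Since $(=_2)K^{\otimes2}=(\neq_2)\in\mathscr A\cap\mathscr P$, $\mathcal F$ is then $\mathscr A$- or $\mathscr P$-transformable, contradicting the failure of \eqref{cond: tractable classes}. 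This is the paper's route for case 2; the paper applies the lemma to $\widehat{\mathcal F}_{\to\eo}$ directly, tacitly using this commutation.
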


\begin{proof}
    Notice that Case 4-7 in $(\mathcal{PC})$ is {\rm(\ref{cond: tractable classes})}.
    By \cref{thm: holant odd}, we only need to show that Case 1-3 in $(\mathcal{PC})$ cannot happen.
    \begin{enumerate}
        \item 
        Case 1 in $(\mathcal{PC})$ happens.
        By \cref{lm: conjugation_polymorphism_affine support}, $\hF$ is $\eo$ and has affine support.
        Combined with $\widehat{\mathcal{F}}|_{\eo}\subseteq \eo^\mathscr{A}$ or $\widehat{\mathcal{F}}|_{\eo}\subseteq \eo^\mathscr{P}$, we have $\hF\subseteq \mathscr{A}$ or $\hF\subseteq \mathscr{P}$.
        So $\F$ is $\mathscr{A}$-transformable or $\mathscr{P}$-transformable, contradicting that $\F$ does not satisfy \rm{(\ref{cond: tractable classes})}.

        \item 
        Case 2 in $(\mathcal{PC})$ happens.
        By \cref{lm: conjugation_polymorphism_affine support}, $\widehat{\mathcal{F}}_{\to\eo}$ has affine support.
        Combined with $\widehat{\mathcal{F}}_{\to\eo}\subseteq \eo^\mathscr{A}$ or $\widehat{\mathcal{F}}_{\to\eo}\subseteq \eo^\mathscr{P}$, we have
        $\widehat{\mathcal{F}}_{\to\eo}\subseteq \mathscr{A}$ or  $\widehat{\mathcal{F}}_{\to\eo}\subseteq \mathscr{P}$.
        So $\hF\subseteq \mathscr{A}$ or $\hF\subseteq \mathscr{P}$,
        contradicting \rm{(\ref{cond: tractable classes})}.

        \item 
        Case 3 in $(\mathcal{PC})$ happens.
        Pick an arbitrary $\widehat{f}\in \hF$.
        Write its unique prime factorization $\widehat{f}=\widehat{q_1}\otimes \widehat{q_2}\otimes\cdots\otimes \widehat{q_s}.$
        We first assume $\hF\in \braket{\widehat{\mathcal{M}}}.$
        Then every $\widehat{q_i}$ is supported on Hamming weight at most 1.
        By \cref{lm: conjugate closed and ars closed}, $\hF$ is arrow reversal closed.
        Taking {\sc ars} doesn't change factorization and irreducibility.
        So $\widehat{f}^{\textsc ar}=\widehat{q_1}^{\textsc ar}\otimes \widehat{q_2}^{\textsc ar}\otimes\cdots\otimes \widehat{q_3}^{\textsc ar}\in \hF$, where every $\widehat{q_i}^{\textsc ar}$ is also supported on Hamming weight at most 1.
        This is impossible if some $\widehat{q_i}$ has arity greater than 2.
        Thus, every $q_i$ has arity at most 2.
        So $\widehat{f}\in {\mathscr{T}}$, and $\hF\subseteq \mathscr{T}$.
        After a $K$-transformation,  $\F\subseteq \mathscr{T}$.
        Contradiction.
    \end{enumerate}
\end{proof}

\paragraph{Dichotomy for Quaternary Signatures}
$\newline$

\begin{theorem}[Dichotomy for six-vertex models]\label{thm: six vertex model}
Let $f$ be a 4-ary signature with the signature matrix
$M_{x_1x_2, x_4x_3}=\begin{bmatrix}
0 & 0 & 0 & a\\
0 & b & c & 0\\
0 & z & y & 0\\
x & 0 & 0 & 0\\
\end{bmatrix}$, then
$\Holant(\neq_2\mid f)$ is \#P-hard except for the following cases:
\begin{itemize}
\item $f\in\mathscr{P}$;
\item $f\in\mathscr{A}$;
\item there is a zero in each pair $(a,x), (b,y), (c,z)$;
\end{itemize}
in which cases Holant$(\neq_2\mid f)$ is computable in polynomial time.
For the last tractable case, $f\in \braket{\mathcal{M}}$ or $f\in \braket{X\mathcal{M}}$. 
\end{theorem}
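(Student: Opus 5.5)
The plan is to take the dichotomy itself from the existing classification of the six-vertex model (Cai, Fu, and Xia) and not re-derive it. In that work, the problem $\Holant(\neq_2\mid f)$ is shown \#P-hard for every $4$-ary signature $f$ of the displayed form, unless $f\in\mathscr P$, $f\in\mathscr A$, or each pair $(a,x),(b,y),(c,z)$ contains a zero. The same work gives the polynomial-time algorithms in the three exceptional cases. We invoke it directly, matching its matrix convention $M_{x_1x_2,x_4x_3}$ to ours. The new content to verify is the last sentence: in the third case, $f\in\langle\mathcal M\rangle$ or $f\in\langle X\mathcal M\rangle$.

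First read off the support. Row $x_1x_2$ and column $x_4x_3$ give the following inputs $(x_1,x_2,x_3,x_4)$:
\[
a\mapsto 0011,\quad b\mapsto 0110,\quad c\mapsto 0101,\quad z\mapsto 1010,\quad y\mapsto 1001,\quad x\mapsto 1100.
\]
So $\mathscr S(f)\subseteq \mathrm{HW}^{=}$. The three pairs $(a,x),(b,y),(c,z)$ are exactly the three complementary pairs of weight-$2$ strings in $\{0,1\}^4$. Suppose each pair has a zero. Then $\mathscr S(f)$ is contained in a set of three strings, one chosen from each complementary pair. There are $8$ such transversals.

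Next classify these $8$ transversals combinatorially. For each coordinate $i\in[4]$, exactly three weight-$2$ strings have a $1$ in position $i$. These three lie in distinct complementary pairs, and the same holds for the three strings with a $0$ in position $i$. This produces $8$ distinct transversals, hence all of them. Therefore either some $x_i$ is identically $1$ on $\mathscr S(f)$, or some $x_i$ is identically $0$ on it.

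In the first case, $f=\Delta_1(x_i)\otimes g$, where $g$ is a ternary signature supported on weight-$1$ strings. Then $g\in\mathcal M$, and also $\Delta_1\in\mathcal M$, so $f\in\langle\mathcal M\rangle$. In the second case, $f=\Delta_0(x_i)\otimes g$ with $g$ supported on weight-$2$ strings. For arity $3$ this means $g\in X\mathcal M$, and also $\Delta_0\in X\mathcal M$ (weight $0=n-1$ when $n=1$), so $f\in\langle X\mathcal M\rangle$. The degenerate situations, where fewer than three entries are nonzero or $f\equiv0$, are covered by the same factorization, allowing a zero factor.

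The only real obstacle is bookkeeping, not mathematics: we must make sure our variable order and our $\neq_2$ versus $=_2$ side match the cited theorem's formulation, so that the three exceptional cases transfer verbatim. I would check this first, by confirming that the matrix above coincides with the standard six-vertex parametrization under the ordering $x_1x_2,x_4x_3$.
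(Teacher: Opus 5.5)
Your proposal is correct and matches the paper's treatment: the paper does not prove this theorem at all, but quotes it as the known Cai--Fu--Xia six-vertex dichotomy in the same $M_{x_1x_2,x_4x_3}$ convention. Your transversal argument correctly supplies the final sentence, which the paper asserts without justification: every transversal of the three complementary pairs of weight-two strings is the set of weight-two strings with $x_i=1$, or the set with $x_i=0$, for some $i$. This gives $f=\Delta_1(x_i)\otimes g$ with $g\in\mathcal M$, or $f=\Delta_0(x_i)\otimes g$ with $g\in X\mathcal M$.
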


For an arity 4 signature $h$, we say that $h$ has \emph{$H$-form}
if $h\in\mathrm{HW}^{\ge}$ or $h\in\mathrm{HW}^{\le}$, and define its
reduced form by $r(h):=h|_{\eo}$.  Equivalently, $M(h)$ has one of the
following two forms:
\[
\begin{bmatrix}
0&0&0&b\\
0&c&d&\ast\\
0&w&z&\ast\\
y&\ast&\ast&\ast
\end{bmatrix}
\quad\text{or}\quad
\begin{bmatrix}
\ast&\ast&\ast&b\\
\ast&c&d&0\\
\ast&w&z&0\\
y&0&0&0
\end{bmatrix},
\qquad
M(r(h))=
\begin{bmatrix}
0&0&0&b\\
0&c&d&0\\
0&w&z&0\\
y&0&0&0
\end{bmatrix}.
\]
If every right-hand vertex of an instance of
$\holant{\neq_2}{h}$ is assigned $h$, then the total Hamming weight on
the right is exactly half the number of incident edges.  Hence, when
$h$ has $H$-form, every local assignment occurring in a nonzero global
assignment has weight two.  Therefore $\holant{\neq_2}{h}\equiv_T\holant{\neq_2}{r(h)}.$

Thus the reduced problem is a six-vertex model, equivalently a quaternary
$\#\mathrm{EO}$ problem.  We place it together with the following
eight-vertex result since both serve as quaternary vertex-model base
cases used later.

Following~\cite{HuangFu4Regular}, let $\mathscr H$ denote the set of
arity-four signatures $f$ for which $\widehat f$ has $H$-form and
$\holant{\neq_2}{r(\widehat f)}$ is tractable.

\begin{theorem}[Dichotomy for a single quaternary signature
\cite{HuangFu4Regular}]\label{thm: single quaternary dichotomy}
    Let $f$ be an arity-four signature.  Then $\holant{=_2}{f}$ is
    \#P-hard unless one of the following holds:
    \begin{enumerate}
        \item $f\in\langle K\mathcal M\rangle$ or
        $f\in\langle KX\mathcal M\rangle$;
        \item $f$ is $\mathscr A$-transformable;
        \item $f$ is $\mathscr L$-transformable;
        \item $f$ is $\mathscr P$-transformable;
        \item $f\in\mathscr H$;
        \item $f\in\langle\mathscr T\rangle$.
    \end{enumerate}
    In these cases, the problem is computable in polynomial time.
\end{theorem}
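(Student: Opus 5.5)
The plan is to work throughout in the $K$-Holant formulation $\holant{\neq_2}{\widehat f}$, which is equivalent to $\holant{=_2}{f}$ by \cref{thm: holographic transformation}.

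\textbf{Tractability.} This direction is a collection of known algorithms.
\begin{itemize}
\item Cases 2--4 and 6 are instances of \cref{thm: tractable classes}; for case 6, $\langle\mathscr T\rangle$ differs from $\mathscr T$ only by a nonzero scalar.
\item Case 1 becomes $\widehat f\in\langle\mathcal M\rangle$ or $\widehat f\in\langle X\mathcal M\rangle$, which is the matching-type tractable family of \cref{thm: six vertex model} and \cref{thm: holant odd}.
\item For case 5, the $H$-form observation made just before the statement gives $\holant{\neq_2}{\widehat f}\equiv_T\holant{\neq_2}{r(\widehat f)}$, and the latter is tractable by the definition of $\mathscr H$.
\end{itemize}

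\textbf{Hardness.} I would split on the Hamming-weight profile of $\mathscr S(\widehat f)$.

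If $\mathscr S(\widehat f)\subseteq\eog$ or $\subseteq\eol$, then $\widehat f$ has $H$-form and the same reduction gives $\holant{\neq_2}{\widehat f}\equiv_T\holant{\neq_2}{r(\widehat f)}$. \Cref{thm: six vertex model} then leaves only three tractable outcomes for $r(\widehat f)$:
\begin{itemize}
\item $r(\widehat f)\in\mathscr P$,
\item $r(\widehat f)\in\mathscr A$,
\item $r(\widehat f)\in\langle\mathcal M\rangle\cup\langle X\mathcal M\rangle$.
\end{itemize}
Each of these means exactly $f\in\mathscr H$, so every other $f$ in this case is \#P-hard.

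Otherwise $\mathscr S(\widehat f)$ meets both $\eosg$ and $\eosl$. I would first separate out the reducible case. If $f$ factors as a tensor product of lower-arity signatures, then either all factors have arity at most $2$, giving $f\in\langle\mathscr T\rangle$, or $f$ has a unary and a ternary factor. In the latter case I would use \cref{lm: decomposition of tensor power} and \cref{thm: decomposition lemma} to split off the odd-arity factor and conclude with \cref{thm: holant odd}.

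For irreducible $f$, the main tools are merging with $=_2$ and binary extensions. These realize binary signatures $\partial_{ij}f$ and quaternary signatures $f$ modified by realizable binaries. Two sub-cases then follow.
\begin{itemize}
\item If some $\partial_{ij}f$ is non-degenerate with eigenvalues of distinct moduli, polynomial interpolation (as in the $2\times2$ interpolation lemma) realizes the degenerate binaries $\Delta$-type signatures. These give a pinning gadget, and hence an odd-arity signature, at which point \cref{thm: holant odd} finishes.
\item Otherwise every realizable binary is, projectively, of finite order or a Bell/orthogonal-type matrix. This forces strong parity and orthogonality constraints on the rows of $M_{ij}(f)$. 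With these constraints I would reduce to an eight-vertex model, i.e.\ $f$ with support in $\mathscr E_4$ or $\mathscr O_4$, possibly after a holographic transformation by a realizable binary. I would then apply the known eight-vertex dichotomy, whose tractable cases are precisely $\mathscr A$-, $\mathscr P$-, $\mathscr L$-transformability or $\langle K\mathcal M\rangle$.
\end{itemize}

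\textbf{Main obstacle.} The hardest step is the last sub-case: irreducible $f$ with no parity and all merged binaries of "unitary-like" type. There interpolation gives no unary, and one must still derive either a parity structure or an explicit hard gadget. My first attempt would be to show that these unitarity constraints on all six $\partial_{ij}f$ force $f$ into an $=_2$-preserving orbit, namely $Tf$ of even or odd parity for some $T$ with $(=_2)T^{-1}{}^{\otimes2}\sim(=_2)$. The eight-vertex dichotomy then applies directly. Any leftover sporadic signatures would be handled by explicit triangle or square gadgets that produce a binary with eigenvalues of distinct moduli, which sends them back to the interpolation sub-case.
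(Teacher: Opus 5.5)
The paper does not prove this statement. It is quoted from \cite{HuangFu4Regular} and used as a black box in \cref{lm: not 2nd-orth is hard}, so there is no in-paper argument to compare against.

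Judged on its own, your proposal handles the routine parts correctly.
\begin{itemize}
\item The $H$-form case goes through $r(\widehat f)$ and \cref{thm: six vertex model}.
\item The reducible case is fine.
\item So is any case in which an odd-arity signature becomes realizable. There, $\widehat f\notin\eog$ and $\widehat f\notin\eol$ is exactly what makes \cref{thm: decomposition lemma} applicable. The same fact rules out the first two exceptional cases of \cref{thm: holant odd}.
\end{itemize}
For tractability of case 1 you cite the wrong results. \cref{thm: six vertex model} needs EO support, and \cref{thm: holant odd} is a hardness criterion for sets that contain an odd-arity signature. You need the $\langle\mathcal M\rangle$ algorithm from the odd-arity literature, or a direct argument.

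The genuine gap is the remaining case, which is the substance of the theorem. There $f$ is irreducible, $\mathscr S(\widehat f)$ meets both $\eosg$ and $\eosl$, and no odd-arity signature is realizable. You assert that finite projective order of the realizable binaries ``forces strong parity and orthogonality constraints on the rows of $M_{ij}(f)$''. From this you conclude that $f$ lies in the complex-orthogonal orbit of a parity signature, and you leave unspecified ``sporadic'' leftovers to unspecified gadgets. No mechanism is given, and the one used in this paper does not transfer.
\begin{itemize}
\item Row orthogonality in this paper comes from mating $f$ with $\overline f$. That produces the Hermitian positive semidefinite matrices $M_S(f)M_S(f)^\dagger$, whose eigenvalue moduli can be compared (\cref{lm: not 1st-orth is hard,lm: not 2nd-orth is hard}).
\item In $\holant{=_2}{f}$ with a single complex $f$ there is no $\overline f$. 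The only Gram-type gadget is the self-mating $M_S(f)M_S(f)^{\tt T}$. It is merely complex symmetric and can even vanish for nonzero $f$: take $f=u^{\otimes 4}$ with $u=[1,\mathfrak i]$, where $u^{\tt T}u=0$.
\item Constraints from such gadgets, or from finite order of merged binaries, are bilinear rather than sesquilinear. They tolerate isotropic directions, which is why the exceptional classes $\langle K\mathcal M\rangle$ and $\mathscr H$ are described in the isotropic $K$-basis.
\end{itemize}
So nothing you wrote yields parity after an $=_2$-preserving transformation, and the eight-vertex dichotomy is never actually applied.

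Your preceding case split is also imprecise. Binaries with equal-modulus eigenvalues whose ratio is not a root of unity still give pinning by interpolation, and so do non-diagonalizable binaries and rank-one binaries. The residual hypothesis should therefore be that every realizable binary is nondegenerate, diagonalizable and of finite projective order.

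As it stands, the proposal establishes the dichotomy only in three situations: $\widehat f$ has $H$-form, $f$ is reducible, or an odd-arity signature can be realized. The general irreducible case, which is the content of \cite{HuangFu4Regular}, is not proved.
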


\paragraph{Other Hardness Results}
The following is a standard lemma by polynomial interpolation. See for example, \cite{jcbook}.
\begin{restatable}{lemma}{InterpolationBinary}\label{lm: 2 by 2 interpolation}
    Let $f$ and $g$ be nonzero binary signatures with $M(f)=P^{-1}\left[\begin{smallmatrix}
    \lambda_1 & 0\\
    0 & \lambda_2\\
    \end{smallmatrix}\right]P$ and
    $M(g)=P^{-1}\left[\begin{smallmatrix}
    1 & 0\\
    0 & 0\\
    \end{smallmatrix}\right]P$ for some invertible matrix $P$.
    If $\lambda_1\neq 0$ and $\frac{\lambda_2}{\lambda_1}$ is not a root of unity,
    then $$\Holant(g, \mathcal{F})\leqslant_T\Holant(f, \mathcal{F})$$ for any signature set $\mathcal{F}$.
\end{restatable}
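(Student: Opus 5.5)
The plan is the standard interpolation argument for a diagonalizable $2\times 2$ matrix. Set $D=\mathrm{diag}(\lambda_1,\lambda_2)$, so that $M(f)=P^{-1}DP$. Take an instance $\Omega$ of $\Holant(g,\mathcal F)$, and suppose $g$ occurs $m$ times. Each occurrence of $g$ is a binary vertex placed on a path. I will replace each such occurrence by a chain of $k$ copies of $f$ joined in series, for $k=1,\ldots,m+1$. The chain of $k$ copies has signature matrix $M(f)^k=P^{-1}D^kP$, possibly with transposes depending on orientation. To avoid orientation issues, I will either orient every copy consistently along the path or work in the bipartite form $\holant{=_2}{\cdot}$. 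In the bipartite form, series composition through $=_2$ is literally matrix multiplication, so the chain realizes $P^{-1}D^kP$ exactly.

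Next, write $M(f)^k = \lambda_1^k P^{-1}E_1P+\lambda_2^kP^{-1}E_2P$, where $E_1=\mathrm{diag}(1,0)$ and $E_2=\mathrm{diag}(0,1)$. Note that $P^{-1}E_1P=M(g)$. Denote by $\Omega_k$ the instance with every occurrence of $g$ replaced by the $k$-chain. Expanding multilinearly over the $m$ occurrences gives
\[
\Holant(\Omega_k)=\sum_{i=0}^{m}\lambda_1^{ki}\lambda_2^{k(m-i)}\,c_i ,
\]
where $c_i$ is the Holant value obtained by placing $P^{-1}E_1P$ at $i$ of the occurrences and $P^{-1}E_2P$ at the remaining $m-i$, summed over all such choices. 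The quantity we want is $\Holant(\Omega)=c_m$.

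Factor out $\lambda_1^{km}$, which is legitimate since $\lambda_1\ne0$. This gives $\Holant(\Omega_k)=\lambda_1^{km}\sum_i (\rho^{k})^{m-i}c_i$, where $\rho=\lambda_2/\lambda_1$. This is a Vandermonde system in the unknowns $c_i$ with nodes $\rho^{m-i}$, $i=0,\ldots,m$.

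The only real point is invertibility of this system. It requires the nodes $\rho^{j}$, $0\le j\le m$, to be pairwise distinct. If $\rho^{j}=\rho^{j'}$ with $j\ne j'$, then either $\rho=0$, which makes it a root of unity only in a degenerate sense, or $\rho^{j-j'}=1$, which is excluded by hypothesis. The case $\rho=0$, i.e. $\lambda_2=0$, needs separate handling: then $M(f)=\lambda_1 M(g)$ already, and $g$ is realized directly by scaling. In all other cases the $\rho^j$ are distinct, so $m+1$ oracle calls to $\Holant(f,\mathcal F)$ determine the $c_i$. In particular they determine $c_m=\Holant(\Omega)$ in polynomial time, as long as the arithmetic stays in a fixed algebraic number field. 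That condition holds because all the values involved are algebraic.
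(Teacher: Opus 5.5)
Your proposal is correct: it is the standard polynomial-interpolation argument that the paper relies on, since the paper gives no proof of its own and only cites \cite{jcbook}. Your separate treatment of $\lambda_2=0$ is the right way to handle that degenerate case; also, because you use $k=1,\dots,m+1$, your coefficient matrix is a Vandermonde matrix times $\mathrm{diag}(\rho^{j})$, so you need the nodes $\rho^{j}$ to be nonzero as well as distinct, which holds automatically once $\rho\neq 0$.
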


\begin{theorem}[\cite{real_holantc}]\label{thm: CSP2}
    Let $\mathcal{F}$ be any set of signatures. 
    Then $\#\operatorname{CSP}_2(\mathcal{F})$ is $\#\operatorname{P}$-hard unless $\mathcal{F}\subseteq\mathscr{A}$ or $\mathcal{F}\subseteq\alpha\mathscr{A}$ or $\mathcal{F}\subseteq\mathscr{P}$ or $\mathcal{F}\subseteq\mathscr{L}$, in which cases the problem is computable in polynomial time.
\end{theorem}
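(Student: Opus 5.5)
This is the standard \#$\mathrm{CSP}_2$ dichotomy, cited from \cite{real_holantc}; within this paper it is used as a black box. If I had to reprove it, I would reduce it to the known dichotomy for \#CSP with complex weights (Cai--Lu--Xia), which has the same tractable families $\mathscr A$ and $\mathscr P$. The extra families $\alpha\mathscr A$ and $\mathscr L$ would then have to account for exactly what is lost because equalities only come in even arities.

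\textbf{Step 1: realize unary signatures.} First I would extract useful unaries from $\#\mathrm{CSP}_2(\mathcal F)$.
\begin{itemize}
\item Connecting two edges of $=_4$ gives $=_2$, so $\#\mathrm{CSP}_2(\mathcal F)$ can realize $=_{2k}$ together with any even-arity gadget from $\mathcal F$.
\item Mating each $f\in\mathcal F$ with itself on all but one variable, through $=_2$ edges, produces unaries of the form $\sum_y f(x,y)f(x',y)$ restricted to the diagonal via $=_3$-like constraints. However, $=_3$ is not available, so this does not work directly.
\item Instead, I would use $f\otimes f$ ideas: pairs of copies of a variable let me simulate $\#\mathrm{CSP}(\mathcal F\cup\{\text{squared unaries}\})$.
\end{itemize}

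\textbf{Step 2: simulate $\#\mathrm{CSP}(\mathcal F')$.} The plan is to show $\#\mathrm{CSP}_2(\mathcal F)\equiv_T\#\mathrm{CSP}(\mathcal F')$ for a family $\mathcal F'$ obtained by pinning. Pinning is obtained by interpolation on the unary $[a,b]^{\otimes 2}$ realized from $\mathcal F$, using \cref{lm: 2 by 2 interpolation} whenever $a/b$ is not a root of unity. In the remaining case every realizable unary ratio is a root of unity, and I would do a case analysis on the order of that root. Orders $1,2,4$ should be harmless. Order $8$ is where $T_\alpha$ enters, giving the $\alpha\mathscr A$ and $\mathscr L$ families. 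Any other order leads to hardness via a \#CSP reduction.

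\textbf{Step 3: apply the dichotomy and match the tractable cases.} Apply the complex \#CSP dichotomy to the simulated problem. Then check that membership of the simulated set in $\mathscr A$ or $\mathscr P$ lifts back to $\mathcal F\subseteq\mathscr A$, $\alpha\mathscr A$, $\mathscr P$, or $\mathscr L$. Tractability in those four cases follows because each class is closed under gadget composition with $=_{2k}$, and $T_\alpha^{\otimes 2}$ fixes $=_2$ up to affine factors.

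\textbf{Main obstacle.} The hard part is the root-of-unity case analysis in Step 2. Proving that $\mathscr L$ is exactly the extra tractable class when the realizable unaries are $8$th roots requires the delicate "local affine" support-dependent argument. This is precisely why I would rely on the cited result rather than reprove it here.
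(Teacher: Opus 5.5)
Your treatment matches the paper. Theorem~\ref{thm: CSP2} is not proved here; it is imported from \cite{real_holantc} and used as a black box, and it is invoked only in the proof of \cref{lm: CSP with =2}. Citing it is all that is required.

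Do not treat your reproof sketch as a viable fallback, though, because it fails in concrete places.

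First, Step~1 cannot work in general. Suppose every signature in $\mathcal F$ has even arity. Then in any instance of $\#\mathrm{CSP}_2(\mathcal F)$ every vertex, equality or constraint, has even degree. So every gadget has an even number of dangling edges, and no unary (indeed no odd-arity signature) is ever realizable. Nothing in the sketch then supplies the $[a,b]^{\otimes 2}$ that Step~2 interpolates with.

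Second, the order of a realizable unary ratio cannot by itself force hardness, because every unary signature lies in $\mathscr P$. For example, $\#\mathrm{CSP}_2(\{[1,e^{2\pi\mathfrak i/3}]\})$ is trivially tractable, so ``any other order leads to hardness'' is false as stated. The case split has to be organized around the four tractable classes, not around that order.

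Third, tractability of $\mathscr L$ does not follow from closure under gadget composition. The argument that works is as follows. Fix one global solution $\tau$ of the affine support constraints, and multiply each constraint $f_c$ by $\alpha^{\langle\tau|_c,x|_c\rangle}$. This product is affine by the definition of $\mathscr L$, since $\tau|_c\in\mathscr S(f_c)$. For a variable occurring $2k_j$ times, the total compensating factor is $\prod_j\mathfrak i^{-k_j\tau_jx_j}$. This factor is affine precisely because every multiplicity in $\#\mathrm{CSP}_2$ is even.
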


\begin{lemma}[\cite{jcbook}]\label{lm: CSP2 to holant =4}
    $\#\mathrm{CSP}_2(\mathcal{F})\leqslant_T\Holant(=_4, \mathcal F)$.
\end{lemma}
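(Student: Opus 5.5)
The statement is the reduction $\#\mathrm{CSP}_2(\mathcal{F})\leqslant_T\Holant(=_4,\mathcal F)$, and the natural route is a direct gadget simulation. By definition, $\#\mathrm{CSP}_2(\mathcal{F})=\holant{\mathcal{EQ}_2}{\mathcal{F}}$. An instance is therefore a bipartite signature grid. Each left vertex carries some $=_{2k}$, and each right vertex carries a signature from $\mathcal F$. The task is to realize every even-arity equality $=_{2k}$ from $=_4$ alone, with edges connected by $=_2$ as in the non-bipartite framework. The bipartite instance then becomes an instance of $\Holant(=_4,\mathcal F)$ with the same value.

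The construction goes by induction on $k$. The case $=_2$ needs no gadget: a left vertex of degree $2$ labelled $=_2$ is simply an edge joining its two right neighbours. In $\Holant(\cdot)$ every edge already enforces equality of its endpoints' inputs, so we contract it. For $=_4$ nothing needs to be done. For $k\ge 3$, take a copy of $=_{2k-2}$ and a copy of $=_4$ and join one variable of each by a single edge. Summing over the internal edge forces all remaining $2k-4+3=2k-1$ dangling variables to agree, which gives $=_{2k-1}$. That is odd arity, so this step must be corrected. The fix is to join two variables of $=_{2k-2}$ to two variables of $=_4$, which yields $=_{2k-4+2}=\,=_{2k-2}$; this does not increase the arity. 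The correct step is instead to connect one variable of $=_4$ to one variable of $=_{2(k-1)}$; this produces arity $2(k-1)-1+3=2k$. I had the arithmetic wrong above: the result is $=_{2k}$ with $2k$ dangling edges, since $(2k-2)-1$ plus $4-1$ equals $2k$. So chaining $k-1$ copies of $=_4$ in a path, each linked to the next by one edge, realizes $=_{2k}$ for every $k\ge2$.

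The value is preserved because all internal edges in each chain are forced to equal the common value. Each gadget therefore contributes exactly the indicator of $=_{2k}$, with no extra factor. The gadgets have size linear in the degree, so the reduction is polynomial. Arities appearing in a given instance are bounded by the instance size, so no uniformity issue arises even though $\mathcal{EQ}_2$ is infinite.

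The only subtle point is the bipartite-to-general translation. The right side is $\mathcal F$ and the left side consists of equalities, so in the resulting graph an $\mathcal F$-vertex may become adjacent to another $\mathcal F$-vertex through a contracted $=_2$. This is permitted in $\Holant(=_4,\mathcal F)$, because general graphs are allowed. There is no real obstacle here; the step needing care is simply the arity bookkeeping and the $=_2$ degenerate case.
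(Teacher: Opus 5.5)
Your argument is correct and is the standard gadget proof behind the citation to \cite{jcbook}: a bipartite $\holant{\mathcal{EQ}_2}{\mathcal F}$ grid is already a $\Holant$ grid, each $=_2$ vertex contracts to an edge, and each $=_{2k}$ with $k\ge 2$ is realized exactly, with no scalar factor, by a path of $k-1$ copies of $=_4$. In a write-up, delete the false start in your second paragraph: joining one variable does not give $=_{2k-1}$, and the two-edge ``fix'' is unnecessary, so only the count $(2k-3)+3=2k$ is needed.
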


\begin{restatable}{lemma}{CSPwithEQ}\label{lm: CSP with =2}
    Suppose $\mathcal{F}$ doesn't satisfy \eqref{cond: tractable classes}.
    Then for every $T\in \mathrm{GL}_2(\C)$, $\#\mathrm{CSP}_2(T(\mathcal{F}\cup \{=_2\}))$ is \#P-hard.
\end{restatable}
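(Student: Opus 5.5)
Fix $T\in\mathrm{GL}_2(\C)$ and suppose, for contradiction, that $\#\mathrm{CSP}_2(T(\F\cup\{=_2\}))$ is not \#P-hard. By \cref{thm: CSP2}, $T(\F\cup\{=_2\})\subseteq\mathscr C$ for some $\mathscr C\in\{\mathscr A,\alpha\mathscr A,\mathscr P,\mathscr L\}$. The plan is to turn this membership into one of the transformability conditions in \eqref{cond: tractable classes}, contradicting the hypothesis on $\F$. The point of including $=_2$ is that it gives us $T(=_2)=(=_2)T^{\tt T\otimes 2}\in\mathscr C$, that is, $M(T(=_2))=TT^{\tt T}$ lies in $\mathscr C$. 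What remains is to produce a matrix $T'$ with $T'\F\subseteq\mathscr C'$ and $(=_2)(T'^{-1})^{\otimes 2}\in\mathscr C'$, where $\mathscr C'\in\{\mathscr A,\mathscr P,\mathscr L\}$.

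\textbf{Step 1 (the binary condition).} Let $g=T(=_2)$, a nondegenerate binary signature with $M(g)=TT^{\tt T}$. Since $g\in\mathscr C$ and $\mathscr C$ is closed under gadget contraction with $=_2$ (each class here is a \#CSP-tractable class, closed under contraction), the matrix inverse $(TT^{\tt T})^{-1}$ is, up to a scalar, also the matrix of a signature in $\mathscr C$. The reason is that the inverse of a nondegenerate affine, product-type or local-affine binary matrix stays in the same class; this should be checked case by case on the finite list of such $2\times 2$ matrices. The inverse matrix $(TT^{\tt T})^{-1}=(T^{-1})^{\tt T}T^{-1}$ is exactly $M\bigl((=_2)(T^{-1})^{\otimes2}\bigr)$. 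Hence, for $\mathscr C\in\{\mathscr A,\mathscr P,\mathscr L\}$, both requirements of $\mathscr C$-transformability hold with $T'=T$, and we reach a contradiction.

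\textbf{Step 2 (the case $\alpha\mathscr A$).} Here $T\F\subseteq\alpha\mathscr A=T_\alpha\mathscr A$, so $T_\alpha^{-1}T\F\subseteq\mathscr A$. Set $T'=T_\alpha^{-1}T$. Then $M(T'(=_2))=T_\alpha^{-1}(TT^{\tt T})T_\alpha^{-1}$, and this lies in $\mathscr A$ because $TT^{\tt T}\in T_\alpha^{\otimes 2}\mathscr A$. Applying the inversion argument of Step 1 inside $\mathscr A$ gives $(=_2)(T'^{-1})^{\otimes2}\in\mathscr A$. So $\F$ is $\mathscr A$-transformable, again a contradiction.

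\textbf{Main obstacle.} The delicate point is the inversion claim in Step 1, especially for $\mathscr L$. The local-affine condition depends on the support of the signature, and a nondegenerate binary signature together with its inverse may have different supports. I would handle this by enumeration. A nondegenerate binary local-affine signature either has support of size 2 (diagonal or antidiagonal), in which case it is product type and its inverse is again of the same form, or has full support, in which case it is affine up to the twist. I would check directly that inversion preserves $\mathscr L$ in each case. A fallback is to cite the standard characterisation from \cite{realholant}: $\F$ is $\mathscr C$-transformable iff $\F\subseteq T^{-1}\mathscr C$ for some $T$ with $(=_2)T^{-1\otimes2}\in\mathscr C$, where $T$ ranges over an explicit finite list of matrices. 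I would then match $TT^{\tt T}$ against that list.
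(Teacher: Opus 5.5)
Your proposal is correct and follows essentially the same route as the paper. Both arguments invoke the $\#\mathrm{CSP}_2$ dichotomy and use $=_2$ to place $E=TT^{\tt T}$ in the class. Both then show that $E^{-1}$, the matrix of $(=_2)(T^{-1})^{\otimes 2}$, stays in the class, and reduce the $\alpha\mathscr A$ case to $\mathscr A$ via $T_\alpha^{-1}T$; for $\mathscr A$ the paper proves closure under inversion via $E^{-1}\propto E^\dagger$ rather than by enumeration.

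One correction to your $\mathscr L$ enumeration. The full-support case is vacuous: the twists at $00$ and $01$ together would force $\alpha$ to be a power of $\mathfrak i$. In the support-two case, being diagonal or antidiagonal is not enough; you must check the phase condition ($b/a\in\{\pm1,\pm\mathfrak i\}$ for diagonal, $c/d$ an odd power of $\alpha$ for antidiagonal), which inversion does preserve.
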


\begin{restatable}{lemma}{DisFourgivesHard}\label{lm: disequality 4 gives dichotomy}
    Let $\F$ be a set of signatures. Then $\holant{\neq_2}{\neq_4,\hF}$ is either polynomial-time solvable or $\#\operatorname{P}$-hard.
\end{restatable}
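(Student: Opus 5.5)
The statement to prove is that $\holant{\neq_2}{\neq_4,\hF}$ is either polynomial-time solvable or \#P-hard. The plan is to turn $\neq_4$ into an equality of arity four through a holographic transformation, and then invoke the $\#\mathrm{CSP}_2$ dichotomy (\cref{thm: CSP2}).

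First, undo the $K$-transformation. Since $\neq_4 = K^{-1}(=_4)$ up to a nonzero constant, \cref{thm: holographic transformation} gives
\[
\holant{\neq_2}{\neq_4,\hF}\equiv_T\holant{=_2}{=_4,\F}=\Holant(=_4,\F).
\]
To see the constant, compute $K^{-1}(=_4)$: it is $(K^{-1})^{\otimes 4}(e_0^{\otimes4}+e_1^{\otimes4})$. The columns of $K^{-1}$ give $K^{-1}e_0=\tfrac1{\sqrt2}(1,1)^{\mathtt T}$ and $K^{-1}e_1=\tfrac1{\sqrt2}(-\mathfrak i,\mathfrak i)^{\mathtt T}$. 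Hence the sum is $\tfrac14\bigl[(1,1)^{\otimes4}+(-\mathfrak i,\mathfrak i)^{\otimes4}\bigr]$, which is supported on even-size subsets of the disequality pattern. This is only a sign or constant issue, and I will verify it carefully. If the precise normalization differs, I will instead directly note that $\holant{\neq_2}{\neq_4}$ corresponds to $=_4$ under $K$, which is the standard fact.

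Second, relate $\Holant(=_4,\F)$ to $\#\mathrm{CSP}_2(\F)$. With $=_4$ available, gluing copies of it realizes every $=_{2k}$, and $=_2$ is available since edges are free. Hence $\Holant(=_4,\F)\equiv_T\holant{\mathcal{EQ}_2}{\F\cup\{=_4\}}=\#\mathrm{CSP}_2(\F\cup\{=_4\})$. The reduction in the other direction is \cref{lm: CSP2 to holant =4}. Moreover, $\#\mathrm{CSP}_2(\F\cup\{=_4\})\equiv_T\#\mathrm{CSP}_2(\F)$, because $=_4\in\mathcal{EQ}_2$ can be simulated by the left-side equalities. The one subtlety is a non-bipartite instance, where $\F$-vertices are adjacent; this is handled by inserting $=_2\in\mathcal{EQ}_2$ on those edges.

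Third, apply \cref{thm: CSP2} to $\F$. Either $\F\subseteq\mathscr A$, $\alpha\mathscr A$, $\mathscr P$ or $\mathscr L$, in which case the problem is in polynomial time; otherwise it is \#P-hard. The equivalences above then transfer this dichotomy back to $\holant{\neq_2}{\neq_4,\hF}$.

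The main obstacle is the bookkeeping in the first step, namely checking that $K$ maps $=_4$ to $\neq_4$ (or its correct multiple) under the row/column conventions of \cref{thm: holographic transformation}. If it only maps to a scalar multiple, that scalar is harmless for Turing equivalence. Since the statement asks only for the existence of a dichotomy, no explicit tractability criterion needs to be matched.
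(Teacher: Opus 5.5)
Your first step is false, and everything afterwards rests on it. The identity $(=_2)K^{\otimes 2}=(\neq_2)$ for the edge signature does not extend to arity four. Your own computation gives $K^{-1}(=_4)=\tfrac14\bigl[(1,1)^{\otimes4}+(-\mathfrak i,\mathfrak i)^{\otimes4}\bigr]=\tfrac12[1,0,1,0,1]$. This is the even-parity signature, with eight support points, whereas $\neq_4$ is supported only on $0011$ and $1100$. No scalar can fix this, and in fact no holographic transformation can: $T^{\otimes4}$ commutes with permutations of the variables, so it sends the symmetric signature $=_4$ to a symmetric signature, and $\neq_4$ is not symmetric. In the other direction, $K\neq_4$ is $\tfrac12$ times the signature with matrix $I_4+Y\otimes Y$, as computed in the proof of \cref{lm: I+AA implies hard}. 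What your computation actually shows is $\Holant(=_4,\F)\equiv_T\holant{\neq_2}{[1,0,1,0,1],\hF}$. Your second and third steps are correct, but they classify that problem, not $\holant{\neq_2}{\neq_4,\hF}$.

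This cannot be patched, because the two problems can have different complexity. Take $\hF=\{\widehat f\}$ with $\widehat f$ binary, $\widehat f(01)=1$, $\widehat f(10)=2$, $\widehat f(00)=\widehat f(11)=0$.
\begin{itemize}
\item Both $\neq_4$ and $\widehat f$ are $\eo$ signatures with affine support (hence in $\POLUP$), and both lie in $\eo^{\mathscr P}$. So $\holant{\neq_2}{\neq_4,\widehat f}=\ceo(\neq_4,\widehat f)$ is tractable by \cref{thm: dichotomy for EO}.
\item On the other hand, $f=K\widehat f$ has $M(f)=\tfrac12\left[\begin{smallmatrix}3&\mathfrak i\\-\mathfrak i&3\end{smallmatrix}\right]$. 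It has full support, is nondegenerate, and has entries of different moduli, so $f\notin\mathscr A\cup\alpha\mathscr A\cup\mathscr P\cup\mathscr L$. Hence $\#\mathrm{CSP}_2(f)$, and therefore $\Holant(=_4,f)$, is \#P-hard by \cref{thm: CSP2,lm: CSP2 to holant =4}.
\end{itemize}

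The missing idea is this: while every signature of $\hF$ is an $\eo$ signature, $\holant{\neq_2}{\neq_4,\hF}$ is a genuine $\ceo$ problem. Its tractable side is not captured by \cref{thm: CSP2}, so the $\ceo$ dichotomy must be invoked.

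The paper proceeds by splitting into three cases.
\begin{itemize}
\item If $\F$ has a nonzero odd-arity signature, transform back by $K$ and apply \cref{thm: holant odd}.
\item If all of $\hF$ is $\eo$, apply \cref{thm: dichotomy for EO}.
\item If some $\widehat f$ is not $\eo$, first realize $[a,0,\dots,0,b]_r$ with the help of $\neq_{2r}$. If $b=0$, this yields $\Delta_0$ and returns to the odd case. If $b\neq0$, the transformation $\operatorname{diag}(q,q^{-1})$, which fixes both $\neq_2$ and $\neq_4$, turns it into $=_r$.
\end{itemize}
Only at that last point does a $\#\mathrm{CSP}$-type dichotomy enter, namely for $\#\mathrm{CSP}_r$, with $r=2$ first boosted to $r=4$ using $\neq_8$.
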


\begin{restatable}{lemma}{ConjuDisFourHard}\label{lem: In conjugate closed setting disequality 4 gives hardness}
    Let $\F$ be a conjugate closed signature set. If $\F$ does not satisfy the condition \rm{(\ref{cond: tractable classes})}, then $\holant{\neq_2}{\neq_4,\hF}$ is $\#\operatorname{P}$-hard.
\end{restatable}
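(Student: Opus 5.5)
The plan is to combine the preceding lemma with the arrow-reversal closure of $\widehat{\mathcal F}$. By \cref{lm: disequality 4 gives dichotomy}, $\holant{\neq_2}{\neq_4,\hF}$ is either polynomial-time solvable or \#P-hard. So it suffices to show that it is \emph{not} tractable when $\F$ is conjugate closed and violates \eqref{cond: tractable classes}.

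\textbf{Step 1: reduce to a \#CSP.} Since $\neq_4$ has matrix form $(\neq_2)K^{\otimes 4}$-compatible with $=_4$, undoing the $K$-transformation gives
\[
\holant{\neq_2}{\neq_4,\hF}\equiv_T\Holant(K\neq_4,\F).
\]
By \cref{lm: conjugation and K-trans}, the signature $\neq_4$ is a $K^{-1}$-image of a multiple of $=_4$ composed with a local relabelling. Concretely, I would check that $K(\neq_4)$ (as a column vector) is proportional to $=_4$, or to $(=_4)$ modified by $Z$/$X$ on some coordinates. Then \cref{lm: CSP2 to holant =4} yields
\[
\#\mathrm{CSP}_2(\F\cup\{=_2\})\le_T \Holant(=_4,\F).
\]
Here $=_2$ is available freely, since we are in standard Holant. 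If instead the computation gives a twisted form $T(=_4)$ with $T$ orthogonal or diagonal, I would apply \cref{thm: holographic transformation} to move to $\#\mathrm{CSP}_2(T(\F\cup\{=_2\}))$.

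\textbf{Step 2: invoke the CSP hardness lemma.} \cref{lm: CSP with =2} states that $\#\mathrm{CSP}_2(T(\F\cup\{=_2\}))$ is \#P-hard for every $T$ whenever $\F$ fails \eqref{cond: tractable classes}. This gives hardness of $\holant{\neq_2}{\neq_4,\hF}$ directly. In this route conjugate closure appears unused, which is suspicious.

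\textbf{Where conjugate closure enters, and the main obstacle.} The potential gap is that $\neq_4$ under $K$ is not $=_4$. Indeed, $\neq_4$ corresponds to $\Delta$-pairs with disequality $x_1=x_2\neq x_3=x_4$, so $K\neq_4$ is a $\mathscr{P}$-type signature, not equality. If Step 1 fails, I would fall back on the dichotomy route. Suppose the problem were tractable by \cref{lm: disequality 4 gives dichotomy}. Its tractable cases (from the underlying \#EO or Holant dichotomies) force $\hF$ into $\eog$/$\eol$ classes with polymorphism conditions, or into $\langle\mathcal M\rangle$, $\mathscr A$, $\mathscr P$, $\mathscr L$-type forms relative to $\neq_2$.

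I would then repeat the argument of \cref{lm: odd holant with conjugation} case by case:
\begin{itemize}
\item Arrow-reversal closure (\cref{lm: conjugate closed and ars closed}) upgrades $\eog$ to $\eo$ and quasi-polymorphisms to affine support (\cref{lm: conjugation_polymorphism_affine support}). This forces $\hF\subseteq\mathscr A$ or $\mathscr P$, i.e.\ $\F$ is $\mathscr A$- or $\mathscr P$-transformable.
\item The matching case collapses to $\F\subseteq\mathscr T$.
\end{itemize}
Each outcome contradicts the assumption. The main obstacle is that the exact list of tractable cases in \cref{lm: disequality 4 gives dichotomy} is not stated. I would expect it to be essentially the $(\mathcal{PC})$ list, and would handle each item exactly as in the proof of \cref{lm: odd holant with conjugation}.
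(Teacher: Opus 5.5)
Your Step~1 does not merely need checking; it fails. With $u_\pm=(1,\pm\mathfrak i)^{\mathtt T}$ one gets $K^{\otimes4}(\neq_4)=\frac14\bigl(u_+^{\otimes2}\otimes u_-^{\otimes2}+u_-^{\otimes2}\otimes u_+^{\otimes2}\bigr)$. Its value at $x$ is $\frac12\cos\bigl(\frac\pi2(x_1+x_2-x_3-x_4)\bigr)$, so it is supported on all eight even-weight inputs, equals $-\frac12$ on $0011,1100$, and equals $+\frac12$ on the other six. This is exactly half of the signature with matrix $I_4+Y\otimes Y$ in \cref{lm: I+AA implies hard}. It is affine but not of product type, so your remark that it is $\mathscr P$-type is wrong. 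It is not a Pauli twist of $=_4$, since its support has size $8$ rather than $2$. Because it is not symmetric, it is not $T^{\otimes4}(=_4)$ for any $T$. In particular it is not $O^{\otimes4}(=_4)$ for an orthogonal $O$, which is what you would need to keep $=_2$ on the left and invoke \cref{lm: CSP2 to holant =4}. Indeed, in the proof of \cref{lm: I+AA implies hard} the case $A\neq\pm Y$ yields $U^{\otimes4}(=_4)$ and goes through $\#\mathrm{CSP}_2$. The case $A=\pm Y$ yields exactly $2K(\neq_4)$, and that is precisely where the present lemma is needed. So there is no route via \cref{lm: CSP with =2}, and your suspicion that conjugate closure cannot be idle is justified.

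Your fallback is the paper's strategy, and its $\eo$ part is right. If all of $\hF$ is $\eo$, the problem is $\ceo(\neq_4,\hF)$, and arrow-reversal closure upgrades the quasi-polymorphism condition to affine support. Hence $\hF\subseteq\mathscr A$ or $\hF\subseteq\mathscr P$, and $K^{-1}$ witnesses transformability since $K^{\mathtt T}K=N_2\in\mathscr A\cap\mathscr P$. The odd-arity case is immediate from \cref{lm: odd holant with conjugation}, because $\Holant(\F)\le_T\holant{\neq_2}{\neq_4,\hF}$. What is missing is the branch not governed by $(\mathcal{PC})$ at all: all arities are even, but some $\widehat f\in\hF$ is not $\eo$. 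The proof of \cref{lm: disequality 4 gives dichotomy} then uses $\neq_{2r}$ (realizable from $\neq_4$) to build $\widehat h=[a,0,\ldots,0,b]_r$ with $a\neq0$. If $b=0$, then $\widehat h=a\Delta_0^{\otimes r}$ gives the unary $\Delta_0$ by \cref{lm: decomposition of tensor power}, and conjugate closure, through \cref{lm: odd holant with conjugation}, turns this into hardness. If $b\neq0$, a diagonal $\widehat Q$ with $(\widehat Q^{-1})^{\mathtt T}N_2\widehat Q^{-1}=N_2$ normalizes $\widehat h$ to $=_r$ (after passing to $r=4$ via $\neq_8$ when $r=2$). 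The problem then becomes $\#\mathrm{CSP}_r(\neq_2,\neq_4,\widehat Q\hF)$. The paper excludes its tractable side by the $\#\mathrm{CSP}_r$ dichotomy together with the failure of \eqref{cond: tractable classes}, using that $(=_2)(K\widehat Q^{-1})^{\otimes2}=\neq_2$ is itself one of the constraints. Neither subcase is an item of $(\mathcal{PC})$, and ``repeating the argument of \cref{lm: odd holant with conjugation}'' does not reach them. As written, the proposal therefore does not prove the lemma.
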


The proofs of~\cref{lm: CSP with =2}, ~\cref{lm: disequality 4 gives dichotomy} and~\cref{lem: In conjugate closed setting disequality 4 gives hardness} are given in \autoref{app: disequality reduction}.

\subsection{Signatures and Quantum States}\label{subsec: signatures and quantum states}
Similar to \cref{def-first-order}, we can define $k$-th order orthogonality for a signature of arity $n\ge 2k$:
\begin{definition}[$k$-th order orthogonality]\label{def: k-th order orthogonality}
    A signature $f$ of arity $n$ satisfies $k$-th order orthogonality, denoted by {\sc $k$th-Orth} ($n\ge 2k$), if there exists some $\mu\neq 0$ such that for all $S\subseteq [n]$ of size $k$, $ M_S(f)M_S(f)^\dagger=\mu I_{2^k}.$
\end{definition}

Similar to \cref{prop: invariant 1st-Orth}, we have the following invariance property:
\begin{proposition}\label{prop: invariant kth Orth}
    {\sc $k$th-Orth} is invariant up to conjugation and holographic transformation by $T\in \mathbf{U}_2(\mathbb{C})$.
\end{proposition}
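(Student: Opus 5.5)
The plan mirrors the proof of \cref{prop: invariant 1st-Orth}, replacing the single index $i$ by an arbitrary $k$-subset $S\subseteq[n]$. Throughout, let $f$ have arity $n\ge 2k$ and let $\mu\neq0$ be the constant in \cref{def: k-th order orthogonality}.

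\emph{Conjugation.} Since $M_S(\overline f)=\overline{M_S(f)}$, we get
\[
M_S(\overline f)M_S(\overline f)^\dagger=\overline{M_S(f)M_S(f)^\dagger}=\overline{\mu}\,I_{2^k}.
\]
The constant $\mu$ is real, because $M_S(f)M_S(f)^\dagger$ is Hermitian positive semidefinite, so its diagonal entries are real. Hence $\overline f$ satisfies {\sc $k$th-Orth} with the same constant $\mu$. Applying the same argument to $\overline f$ gives the converse, since $\overline{\overline f}=f$.

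\emph{Unitary holographic transformation.} Let $T\in\mathbf U_2(\mathbb C)$ and fix $S$ with $|S|=k$. By the matrix identity recorded in the holographic transformation subsection, with the variables of $S$ as rows,
\[
M_S(Tf)=T^{\otimes k}M_S(f)(T^{\mathtt T})^{\otimes (n-k)}.
\]
The row and column variables need not be contiguous, but the identity still holds: permuting variables only permutes rows and columns consistently, and each variable still receives one copy of $T$. Therefore
\[
M_S(Tf)M_S(Tf)^\dagger=T^{\otimes k}M_S(f)\bigl(T^{\mathtt T}\overline T\bigr)^{\otimes(n-k)}M_S(f)^\dagger (T^\dagger)^{\otimes k}.
\]
Since $T^{\mathtt T}\overline T=\overline{T^\dagger T}=I_2$, the middle factor drops out, and we obtain
\[
M_S(Tf)M_S(Tf)^\dagger=T^{\otimes k}\bigl(M_S(f)M_S(f)^\dagger\bigr)(T^{\otimes k})^\dagger .
\]
Because $T^{\otimes k}$ is unitary, the right-hand side equals $\mu I_{2^k}$ if and only if $M_S(f)M_S(f)^\dagger=\mu I_{2^k}$. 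This holds for every $k$-subset $S$ with the same $\mu$, so $f\in$ {\sc $k$th-Orth} if and only if $Tf\in$ {\sc $k$th-Orth}. The reverse direction also follows directly by applying the argument to $T^{-1}=T^\dagger$, which is unitary.

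\emph{Main point of care.} The only nonroutine issue is the bookkeeping of variable order when $S$ is not an initial segment. I would handle it by applying a permutation of variables first, which commutes with the uniform transformation $T^{\otimes n}$, and then using the displayed identity.
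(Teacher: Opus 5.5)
Your proposal is correct and follows essentially the same route as the paper's proof. Conjugation goes through $M_S(\overline f)=\overline{M_S(f)}$, and unitary invariance goes through $M_S(Tf)=T^{\otimes k}M_S(f)(T^{\mathtt T})^{\otimes(n-k)}$ together with $T^{\mathtt T}\overline T=I_2$. Your extra remarks fill in details the paper leaves tacit: that $\mu$ is real (used implicitly when the paper writes $\overline{\mu I_{2^k}}=\mu I_{2^k}$), the non-contiguous choice of $S$, and the converse directions.
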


The proof of~\cref{prop: invariant kth Orth} is in~\cref{subapp: proof of subsec signatures and quantum states}. In particular, every real orthogonal matrix
$O\in\mathbf{O}_2$ is unitary, and hence {\sc $k$th-Orth} is invariant up to orthogonal transformation by $O\in\mathbf{O}_2$.
Together with \cref{lem: invariance under real orthogonal transformation},
this shows that a real orthogonal transformation preserves conjugate
closure, failure of condition~\eqref{cond: tractable classes},
irreducibility, and {\sc $k$th-Orth}.

There is a natural interpretation of the $k$-th order orthogonality if we view $f$ as a quantum state.

Given two signatures $f$ and $g$ both of arity $n$, their inner product is defined as $\braket{f,g}=\sum_{x\in\{0,1\}^n}f(x)\overline{g(x)}.$
Define $||f||=\sqrt{\braket{f,f}}.$
A nonzero arity-$n$ signature $f$ determines the normalized $n$-qubit
state
\[
\ket f=\frac{1}{\|f\|}
 \sum_{x\in\{0,1\}^n}f(x)\ket x.
\]
For $S\subseteq[n]$, let $\overline S=[n]\setminus S$ and define the
reduced density matrix
\(
\rho_S=\operatorname{Tr}_{\overline S}(\ket f\!\bra f)
\)
(see \autoref{sec: AME62 has a common local unitary normal} for the exact definition and more details).
The state $\ket f$ is \emph{$k$-uniform} if
$\rho_S=2^{-k}I_{2^k}$ for every $S\subseteq[n]$ with $|S|=k$.
Equivalently, $f$ satisfies \textsc{$k$th-Orth}.  
An $\operatorname{AME}(n,2)$ state is a $\lfloor n/2\rfloor$-uniform $n$-qubit state.  
Equivalently, we say a non-zero arity $n$ signature $f$ is an AME($n,2$) state if $f$ satisfies {\sc $\lfloor n/2\rfloor$th-Orth}.
We use a signature and its
associated quantum state interchangeably when no confusion can arise.  

\paragraph{Pauli representation}
Define the Hilbert–Schmidt inner product on $\mathbb{C}^{n\times n}$ by $\braket{A,B}:=\mathrm{Tr}(A^\dagger B).$
The four Pauli matrices are 
$$
\sigma_0=I=\left[\begin{matrix}
    1 & 0\\
    0 & 1\\
\end{matrix}\right],
\sigma_1=X=\left[\begin{matrix}
    0 & 1\\
    1 & 0\\
\end{matrix}\right],
\sigma_2=Y=\left[\begin{matrix}
    0 & -\mathfrak i\\
    \mathfrak i & 0\\
\end{matrix}\right],
\sigma_3=Z=\left[\begin{matrix}
    1 & 0\\
    0 & -1\\
\end{matrix}\right]
$$
such that $\braket{\sigma_i,\sigma_j}=2\delta_{ij}$.
Thus they form an orthogonal basis of the vector space of $\mathbb{C}^{2\times 2}.$
More generally, the $4^m$ matrices $\sigma_{i_1}\otimes\cdots\otimes  \sigma_{i_m}$ form an orthogonal basis of $\mathbb{C}^{2^m\times 2^m}$.
Therefore every $m$-qubit density matrix has a unique expansion
$$
    \rho=\frac{1}{2^m}\sum_{i_1,\ldots,i_m}r_{i_1,\ldots,i_m}\sigma_{i_1}\otimes \cdots\otimes \sigma_{i_m},
$$
where $r_{i_1,\ldots,i_m}=\mathrm{Tr}((\sigma_{i_1}\otimes\cdots\otimes\sigma_{i_m})\rho)=\mathrm{Tr}(\rho(\sigma_{i_1}\otimes\cdots\otimes\sigma_{i_m}))$.

\subsection{Algebra Toolkits}
\paragraph{Homomorphism from $\mathbf{SU}(2)$ to $\mathbf{SO}(3)$}\label{para: SU2 and SO3}

We briefly recall the standard double-covering map from $\mathbf{SU}(2)$ to $\mathbf{SO}(3).$
See for example, \cite{Hall2015LieGroups}.
For $\mathbf{x}=(x_1,x_2,x_3)^{\tt T}\in\mathbb R^3$, write
\[
    \sigma(\mathbf{x})=\mathbf{x}\mathbin{\cdot}\boldsymbol{\sigma}
    :=x_1\sigma_1+x_2\sigma_2+x_3\sigma_3=\left[\begin{matrix}
        x_3 & x_1-\mathfrak{i}x_2\\
        x_1+ \mathfrak{i}x_2 & -x_3
    \end{matrix}\right].
\]
The map $\mathbf{x}\mapsto\sigma(\mathbf{x})$ identifies $\mathbb R^3$ with the real vector
space $V$ of traceless Hermitian $2\times 2$ matrices.  
For $U\in\mathbf{SU}(2)$, define a linear map $\Phi_U:V\to V$ by
$
    \Phi_U(X)=UXU^{\dagger}.
$

\begin{proposition}[\cite{Hall2015LieGroups}]
    The map $U\mapsto \Phi_U$ is a 2-1 and onto map of $\mathbf{SU}(2)$ to $\mathbf{SO}(3)$, with kernel equal to $\{I,-I\}$. i.e., $\mathbf{SU}(2)/\{\pm I\}\cong \mathbf{SO}(3).$
\end{proposition}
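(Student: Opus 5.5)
The plan is to prove the three assertions separately: $U\mapsto\Phi_U$ is a homomorphism into $\mathbf{SO}(3)$, its kernel is $\{I,-I\}$, and it is onto. The isomorphism $\mathbf{SU}(2)/\{\pm I\}\cong\mathbf{SO}(3)$ and the 2-to-1 property then follow from the first isomorphism theorem.

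\textbf{Well-definedness and homomorphism.} For $X\in V$ the matrix $UXU^\dagger$ is Hermitian, and $\operatorname{Tr}(UXU^\dagger)=\operatorname{Tr}(X)=0$, so $\Phi_U$ is a real-linear map $V\to V$. The identity $\Phi_{UW}=\Phi_U\Phi_W$ is immediate. Orthogonality comes from $\det\sigma(\mathbf x)=-|\mathbf x|^2$, since $\det(UXU^\dagger)=\det X$; so $\Phi_U$ preserves the Euclidean norm and lies in $\mathbf O(3)$. To land in $\mathbf{SO}(3)$, I will use that $\mathbf{SU}(2)\cong S^3$ is connected, $U\mapsto\det\Phi_U$ is continuous with values in $\{\pm1\}$, and $\det\Phi_I=1$.

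\textbf{Kernel.} If $UXU^\dagger=X$ for all $X\in V$, then $U$ commutes with $\sigma_1,\sigma_2,\sigma_3$, and also with $I$. These span $\mathbb C^{2\times2}$, so $U$ is a scalar. Combined with $\det U=1$ this gives $U=\pm I$. Conversely, $\pm I$ clearly act trivially.

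\textbf{Surjectivity.} This is the main step. I will compute explicitly that $U_\theta=\operatorname{diag}(e^{-\mathfrak i\theta/2},e^{\mathfrak i\theta/2})$ maps to the rotation by $\theta$ about the $x_3$-axis, since it fixes $\sigma_3$ and rotates the pair $(\sigma_1,\sigma_2)$. Similarly, $e^{-\mathfrak i\theta\sigma_1/2}$ gives the rotation about the $x_1$-axis. Every element of $\mathbf{SO}(3)$ is a product of rotations about these two axes, via the Euler-angle decomposition $R_3(\alpha)R_1(\beta)R_3(\gamma)$, so the image is all of $\mathbf{SO}(3)$.

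An alternative route to surjectivity is to note that the image is a closed subgroup of dimension $3$, because the differential $\mathfrak{su}(2)\to\mathfrak{so}(3)$ is injective, and that $\mathbf{SO}(3)$ is connected. The Euler-angle argument is more elementary, so that is the one I will follow.
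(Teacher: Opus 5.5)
Your proof is correct and complete: $\det\sigma(\mathbf x)=-|\mathbf x|^2$ together with connectedness of $\mathbf{SU}(2)$ places $\Phi_U$ in $\mathbf{SO}(3)$, the commutant argument (using that $I,\sigma_1,\sigma_2,\sigma_3$ span $\mathbb C^{2\times2}$) gives kernel $\{\pm I\}$, and the explicit rotations about the $x_3$- and $x_1$-axes combined with the Euler-angle decomposition give surjectivity. The paper does not prove this proposition itself and only cites Hall's textbook; your argument is the standard textbook proof, so nothing needs to be added.
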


For $U\in \mathbf{SU}(2)$, let $R_U\in \mathbf{SO}(3)$ be the matrix of the linear map $\Phi_U.$ 
By definition, we have $U(x_1\sigma_1+x_2\sigma_2+x_3\sigma_3)U^\dagger=R_U(x_1,x_2,x_3)^{\tt T}\cdot \boldsymbol{\sigma},\,\forall (x_1,x_2,x_3)\in \mathbb{R}^3.$
So $U\sigma_pU^{\dagger}=\sum_{r=1}^3(R_U)_{rp}\sigma_r$, for $p=1,2,3.$

\section{Main Theorem and Proof Outline}

\begin{theorem}[Main theorem]\label{thm: main theorem}
    Let $\mathcal{F}$ be a complex signature set that is closed under conjugation. 
    Then $\Holant(\mathcal{F})$ is \#P-hard unless $\mathcal{F}$ satisfies {\rm (\ref{cond: tractable classes})}, in which case it is tractable.
\end{theorem}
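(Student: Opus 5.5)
Tractability is immediate from \cref{thm: tractable classes}, so the plan concerns hardness. Assume $\mathcal F$ is conjugate closed and fails \eqref{cond: tractable classes}; we want \#P-hardness. First we reduce to a clean setting. If $\mathcal F$ contains a nonzero odd-arity signature, \cref{lm: odd holant with conjugation} finishes. If some nonzero $f\in\mathcal F$ violates {\sc 1st-Orth}, \cref{lm: not 1st-orth is hard} finishes. Otherwise every nonzero signature has even arity and satisfies {\sc 1st-Orth}. Conjugate closure survives tensor factorization, since $\overline{g\otimes h}=\overline g\otimes\overline h$. So the decomposition lemma (\cref{lm: decomposition}) lets us replace each signature by its prime factors, and we may assume $\mathcal F$ consists of irreducible signatures. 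If a factor of odd arity appears, we are done by the odd case. Because $\mathcal F\not\subseteq\mathscr T$, some irreducible $f\in\mathcal F$ has even arity $2n\ge4$.

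The core is an induction on the arity $2n$, with the following hypothesis: either $\Holant(\mathcal F)$ is \#P-hard, or we can realize an irreducible signature of strictly smaller even arity $\ge4$ (still together with its conjugate) that keeps $\mathcal F$ outside \eqref{cond: tractable classes}. The main gadgets are merging $\partial_{ij}f$ and mating. Mating realizes $M_S(f)M_S(f)^\dagger$, which is a reduced density matrix up to scale. Using \cref{lm: 2 by 2 interpolation} and its higher-dimensional analogues on these Hermitian PSD matrices, we show the following. Either the eigenvalues of every such matrix are equal, which is $k$th-Orth, or interpolation produces a degenerate or lower-rank gadget that factors and again yields hardness. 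We push this to second- and higher-order orthogonality. For arity $4$ the base case is handled by \cref{thm: single quaternary dichotomy}, together with \cref{thm: six vertex model} in the $K$-Holant picture. There we must check that each exceptional case, such as $\mathscr H$ or $\langle K\mathcal M\rangle$, either forces \eqref{cond: tractable classes} under conjugate closure, via arrow-reversal closure as in \cref{lm: conjugation_polymorphism_affine support}, or becomes hard once $\overline f$ is present. If $=_4$ or $\neq_4$ can be realized in a suitable basis, \cref{lm: CSP2 to holant =4}, \cref{lm: CSP with =2} and \cref{lem: In conjugate closed setting disequality 4 gives hardness} give hardness. So a recurring subgoal is to produce $\neq_4$ in the $K$-Holant setting.

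For the inductive step at arity $2n\ge6$ we consider all merges $\partial_{ij}f$. If some merge is irreducible and nonzero, and keeps the set outside the tractable classes, we induct. Otherwise every $\partial_{ij}f$ lies in $\mathscr T$ or in a tractable class, giving $f\in\int\mathscr C$. We then use the projective binary group framework: the binary signatures realizable from $\partial_{ij}f$ and mated gadgets generate a projective subgroup of unitary matrices, essentially $K_4$ or $\Gamma$. This, together with {\sc 1st-Orth} and higher orthogonality, pins down $f$. For $2n\ge8$ (the paper suggests $2n\ge10$ separately), we expect these constraints to force $f$ to be affine or product type up to a unitary, or to give a contradiction. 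In either case we obtain hardness via the lemmas above, or membership in \eqref{cond: tractable classes}.

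The main obstacle is arity $6$. There all merges can be degenerate while $f$ satisfies third-order orthogonality, meaning $f$ is an $\operatorname{AME}(6,2)$ state. We invoke the uniqueness of $\operatorname{AME}(6,2)$ up to local unitaries and qubit permutation. This gives $f=(U_1\otimes\cdots\otimes U_6)h$ with $h$ the hexacode state, which is affine. The difficulty is that the $U_i$ need not be realizable. I would first use the merge and mating gadgets to show that the projective classes of $U_i^\dagger U_j$ must lie in $\Gamma$. If all of them lie in a common coset, $f$ is $\mathscr A$-transformable, and this contradicts the assumption unless other signatures of $\mathcal F$ break it. In the remaining case we would build an explicit gadget from $f$ and $\overline f$ realizing $\neq_4$ or a non-unitary binary signature, and then conclude with the earlier hardness lemmas.
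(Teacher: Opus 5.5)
There is a genuine gap, and it sits at the step that carries the difficulty. At arity $6$, and implicitly at arity $8$, your endgame is to show that $f$ is affine up to local unitaries and then conclude ``hardness or membership in \eqref{cond: tractable classes}''. But a single $\mathscr A$-transformable signature never conflicts with $\mathcal F$ failing \eqref{cond: tractable classes}.

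Nor is this a side case you can dismiss. Contracting with $=_2$ controls $U_i^{\mathtt T}U_j$, not $U_i^\dagger U_j$. Once all $[U_i^{\mathtt T}U_j]\in\Gamma$, \cref{lem:f6-common-orthogonal} gives one real orthogonal $O$ with every $[OU_i]\in\Gamma$. So your ``common coset'' case is the only one that survives. Differing coset indices are absorbed by symmetries of $f_6$ (\cref{lem:f6-klein-case}); they do not yield a hard gadget. Every lower-arity signature obtained from $f_6$ or $f_8$ by merging, mating and factorization lies in $\mathcal U^{\otimes}$, which is exactly why they are obstructions. You offer no other construction of $\neq_4$ or a non-unitary binary from $f,\overline f$.

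What is missing is a proof that $\Holant(f_6,\mathcal B,\mathcal F)$ and $\Holant(f_8,\mathcal F)$ are \#P-hard whenever $\mathcal F$ fails \eqref{cond: tractable classes}. That proof must use the non-affine members of $\mathcal F$. The paper first realizes $f_6$ and the Bell signatures exactly. In the parity case it shows every signature is projectively real and invokes the real dichotomy of~\cite{realholant}. In the non-parity case it obtains all of $\Gamma$ and uses $f_6$ as a Clifford-type gadget to lower Pauli weights, until a minimal non-affine signature produces an irreducible signature failing \textsc{2nd-Orth}.

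You also miss $f_8$ entirely. Arity $8$ needs the strong-equality normalization, the $RM(1,3)$ one-factorization argument, Bell simulation by oracle queries, and realification again. Only for $2n\ge10$ does \cref{lm: high arity equality rigidity} make the descent automatic.

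The orthogonality steps are not justified by the tools you name either. Interpolation on powers of $R=M_S(f)M_S(f)^\dagger$ only yields spectral projectors of $R$, and these have the same arity $2|S|$. For $|S|=2$, a rank-$2$ or rank-$3$ projector need not factor, and a rank-one projector $v\otimes\overline v$ with $v$ projectively unitary gives no hardness. For $|S|=3$ at arity $6$ you merely get another $6$-ary signature.

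The paper instead applies \cref{thm: single quaternary dichotomy} to the mating $\mathfrak m_{ij}f$, not to $f$; for $f$ itself, landing in a tractable quaternary class says nothing about $\mathcal F$. It rules out $\langle K\mathcal M\rangle$ and $\langle KX\mathcal M\rangle$ by \cref{lm: mating is self inversive}. It turns the rank-$2$ transformable case $\tfrac{\mu}{2}(I_4+A\otimes B)$ into $\neq_4$ or $=_4$ (\cref{lm: I+AA implies hard}). The arity-$4$ base case then closes only because no AME$(4,2)$ state exists, a fact your outline never invokes.

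\textsc{3rd-Orth} at arity $6$ comes from Xia's theorem that $B(f)$ is a \emph{finite} subgroup of $\mathbf{PU}(2)$, proved by interpolation plus Burnside. This is followed by a case analysis over all finite subgroups of $\mathbf{SO}(3)$. Pauli-slice orthogonality handles the polyhedral, large dihedral and standard Klein cases, and combinatorial impossibility arguments handle the cyclic and non-standard Klein cases. Your claim that the group is ``essentially $K_4$ or $\Gamma$'' skips all of this.
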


\paragraph{Proof Overview}
The tractable cases of the main theorem follow from \cref{thm: tractable classes}.
It remains to prove that $\Holant(\mathcal{F})$ is \#P-hard if $\mathcal{F}$ does not satisfy condition~(\ref{cond: tractable classes}).

We first generalize the second order orthogonality ({\sc 2nd-Orth}) in \cite{realholant} to the conjugate closed setting.
This requires a genuinely new argument heavily depending on a recent dichotomy on quaternary signatures (\cref{thm: single quaternary dichotomy}).
Throughout the rest of the proof, {\sc 2nd-Orth} plays an important role.

Suppose $\mathcal{F}$ fails condition \eqref{cond: tractable classes}.
In particular, $\mathcal{F}\not\subseteq \mathscr{T}$.
Since $\mathcal{U}^{\otimes}\subseteq \mathscr{T}$, we also have $\mathcal{F}\not\subseteq \mathcal{U}^{\otimes}$.
Thus, there is a nonzero signature $f\in\mathcal{F}$ of arity $2n$ such that $f\notin\mathcal{U}^{\otimes}$.
We prove \#P-hardness by induction on $2n$.
The base case $2n=2$ is handled by \textsc{1st-Orth} (\cref{lm: not 1st-orth is hard}).
The case $2n=4$ is handled by \textsc{2nd-Orth} and the nonexistence of an $\operatorname{AME}(4,2)$ state (\cref{lm: base case 2n=4}).

For higher arities, we aim to realize a signature $f'$ from $f$ with lower arity while keeping the property $f'\notin\mathcal{U}^{\otimes}$.
Regular approaches for arity reduction are merging, mating, and factorization.
However, this approach does not always succeed due to the existence of the special signatures $f_6$ and $f_8$ discovered in~\cite{realholant}.
These signatures themselves are not in $\mathcal{U}^{\otimes}$, but the lower-arity signatures obtained from them by merging, mating, and factorization all lie in $\mathcal{U}^{\otimes}$.
We therefore need separate proofs for $2n=6$ and $2n=8$.

For $2n=6$, our idea is to first realize $f_6$, and then prove that $\Holant(f_6,\mathcal{F})$ is \#P-hard under the assumption that $\mathcal{F}\not\subseteq\mathscr{A}$.
In \cref{sec: Third Order Orth}, we first realize an $\operatorname{AME}(6,2)$ state, or equivalently, a signature satisfying \textsc{3rd-Orth}.
We use the framework of Xia~\cite{MingjiProgram}, which shows that, unless hardness already follows, the ``projective binary group'' generated by $f$ is a finite subgroup of $\mathbf{PU}(2)=\mathbf{SU}(2)/\{\pm I_2\}$.
We give an exposition of this proof in \autoref{appendix: mingji}.
Since $\mathbf{SU}(2)/\{\pm I_2\}\cong\mathbf{SO}(3)$, we use the classification of finite subgroups of $\mathbf{SO}(3)$ to consider all possible cases of the projective binary group in \cref{sec: Third Order Orth}.
For the polyhedral group case (\cref{subsec: polyhedral groups}), the large dihedral group case (\cref{subsec: dihedral group}), and the standard Klein group case (\cref{subsubsec: Standard Case Klein}), we prove that a signature satisfying \textsc{3rd-Orth} can be realized.
We also prove that the non-standard Klein group case (\cref{subsubsec: Non-standard Klein}) and the cyclic group cases (\cref{subsec: Trivial group,subsec: Order Two group,subsec: Cyclic group of Order at least three}) cannot occur.

Following the realization of an $\operatorname{AME}(6,2)$ state in~\cref{sec: Third Order Orth}, we show how to recover $f_6$ in the~\cref{sec:recover-f6}. By the uniqueness of $\operatorname{AME}(6,2)$ states up to local
unitary transformations and input permutations, which is proved in~\cref{sec: AME62 has a common local unitary normal}, we can realize a signature that is locally unitarily equivalent to $f_6$. However, the corresponding local unitary matrices may not be able to realized. This makes us to find the relations between the unitary matrices. Thus, in~\cref{lem:f6-common-orthogonal}, we show that either $\Holant(\mathcal{F})$ is \#P-hard, or a real orthogonal holographic transformation places these matrices projectively in the tetrahedral group. After the transformation, in~\cref{lem:f6-group-alternatives}, we then proved that the projective binary group of the $\Holant$ problem: it is either the tetrahedral group $\Gamma$ or a subgroup of $K_4$. We deal with the two cases in~\cref{lem:f6-tetrahedral-case,lem:f6-klein-case}, both cases give us the realization of $f_6$ and four Bell signatures. 

In \cref{sec: f6 is hard}, we prove hardness when $f_6$ is available.
The proof has two cases.
If every signature in $\mathcal{F}$ has parity, we apply a technique called ``realification" to prove every signature in $\mathcal{F}$ is \emph{real} up to a complex phase, and then apply the real Holant dichotomy (\cref{subsec: f6 parity}).
Otherwise, we reduce the arity while preserving the non-parity property until we obtain a binary signature without parity.
We then extend the ``non-$\mathcal{B}$ hard'' property of $f_6$ in~\cite{realholant} to the ``non-$\Gamma$ hard'' property and complete the proof using a minimal counterexample argument (\cref{subsec: f6 non-parity}).

For $2n=8$, we first realize $f_8$ in \cref{subsec: realize f8 Reed Muller}.
In \cref{subsec: f8 available hardness}, we prove that $\Holant(f_8,\mathcal{F})$ is \#P-hard under the assumption that $\mathcal{F}\not\subseteq\mathscr{A}$ .
The proof follows the framework for real Holant problems~\cite{realholant} and again uses a realification argument similar to that in \cref{subsec: f6 parity}.
Finally, for $2n\geq 10$, 
we can reduce the arity while keeping the resulting signature outside $\mathcal{U}^{\otimes}$, thus completing the induction (\cref{sec: 2n >= 10}).

\section{Second Order Orthogonality}\label{sec: second order orthoganality}
In this section, we will prove that if $\mathcal{F}$ contains an irreducible signature that doesn't satisfy {\sc 2nd-Orth}, then $\Holant(\mathcal{F})$ is \#P-hard (\cref{lm: not 2nd-orth is hard}).
First, \cref{def: k-th order orthogonality} specializes to the following when $k=2$.
\begin{definition}[Second order orthogonality \cite{realholant}]\label{def: second-order-orth}
    Let $f$ be a complex-valued signature of arity $n \geqslant 4$. It satisfies the \emph{second order orthogonality ({\sc 2nd-Orth})} if there exists some $\lambda\neq 0$ such that for all pairs of indices $\{i, j\}\subseteq [n]$, the entries of $f$ satisfy 
    \begin{equation*}
        |{\bf f}_{ij}^{00}|^2=|{\bf f}_{ij}^{01}|^2=|{\bf f}_{ij}^{10}|^2=|{\bf f}_{ij}^{11}|^2=\lambda, 
    \text{ ~~~~\rm and }~~~~ \langle{\bf f}_{ij}^{ab}, {\bf f}_{ij}^{cd}\rangle =0 ~~~~\text{\rm  for  all } (a, b)\neq (c, d).
    \end{equation*}
\end{definition}

Similar to \cref{lm: 1st-Orth uniform}, one can easily prove:
\begin{lemma}\label{lm: 2nd-Orth uniform}
    Let $f$ be a signature of arity $n$.
    If for all indices $i\neq j\in [n]$, $M(\mathfrak{m}_{ij}f)= \lambda_{ij} I_4$ for some real $\lambda_{ij}\neq 0$, then $f$ satisfies {\sc 2nd-Orth} (i.e., all $\lambda_{ij}$ have the same value).
\end{lemma}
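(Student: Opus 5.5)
The plan mirrors the proof of \cref{lm: 1st-Orth uniform}: compute the total weight $\sum_{\alpha}|f(\alpha)|^2$ in two ways. Here $\mathfrak m_{ij}f$ denotes the mating gadget with dangling variables $x_i,x_j$. It is the quaternary signature obtained by connecting every other variable of $f$ to the corresponding variable of $\overline f$ via $=_2$. Its matrix, with rows indexed by $(x_i,x_j)$ of $f$ and columns by $(x_i',x_j')$ of $\overline f$, is
\[
M(\mathfrak m_{ij}f)=M_{ij}(f)M_{ij}(f)^\dagger ,
\]
by the same computation as \eqref{m-form}. Its $((a,b),(c,d))$ entry is $\langle \mathbf f_{ij}^{ab},\mathbf f_{ij}^{cd}\rangle$.

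First I would take the trace. On one side, $\operatorname{Tr}M(\mathfrak m_{ij}f)=\sum_{a,b}|\mathbf f_{ij}^{ab}|^2=\sum_{\alpha\in\{0,1\}^n}|f(\alpha)|^2=\|f\|^2$, which does not depend on the pair $\{i,j\}$. As a gadget, this trace is realized by further connecting $x_i$ with $x_i'$ and $x_j$ with $x_j'$ using $=_2$, so the Holant value is $\|f\|^2$. On the other side, the hypothesis $M(\mathfrak m_{ij}f)=\lambda_{ij}I_4$ gives trace $4\lambda_{ij}$. Hence $\lambda_{ij}=\|f\|^2/4$ for every pair $i\ne j$, so all $\lambda_{ij}$ share a common value $\lambda$. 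This $\lambda$ is nonzero because each $\lambda_{ij}\neq0$; equivalently, $f\not\equiv 0$.

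Next I would read off \cref{def: second-order-orth} entrywise. The diagonal entries of $\lambda I_4$ give $|\mathbf f_{ij}^{ab}|^2=\lambda$ for every $(a,b)$. The off-diagonal zeros give $\langle\mathbf f_{ij}^{ab},\mathbf f_{ij}^{cd}\rangle=0$ for every $(a,b)\ne(c,d)$. These are exactly the conditions of {\sc 2nd-Orth}; the arity requirement $n\ge4$ is implicit whenever the statement is applied.

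There is no real obstacle here. The only care needed is the ordering convention: one must match the row/column indexing of $M(\mathfrak m_{ij}f)$ with $M_{ij}(f)$ so that the diagonal of the product really consists of the squared norms $|\mathbf f_{ij}^{ab}|^2$. Once that is fixed, the trace argument is immediate.
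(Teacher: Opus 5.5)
Your proof is correct and is essentially the argument the paper intends when it says the lemma follows ``similar to'' the first-order uniformity lemma: closing the dangling edges of $\mathfrak m_{ij}f$ pairwise with $=_2$ gives $\operatorname{Tr}M(\mathfrak m_{ij}f)=4\lambda_{ij}=\|f\|^2$, independent of $\{i,j\}$. Reading off the diagonal and off-diagonal entries of $\lambda I_4$ then gives exactly the {\sc 2nd-Orth} conditions, as you say.
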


We first record a consequence of {\sc 2nd-Orth} that will be frequently used in the subsequent sections.
\begin{lemma}\label{lm: 2nd-orth contraction nonzero}
    Suppose $f$ satisfies {\sc 2nd-Orth}, and $b$ is a non-zero binary signature.
    Then for every pair of indices $\{i,j\}$, $\partial_{ij}^bf$ is a non-zero signature.
    Moreover, $|\partial_{ij}^bf|^2=\lambda|b|^2$, where $\lambda=|f_{ij}^{00}|^2=|f_{ij}^{01}|^2=|f_{ij}^{10}|^2=|f_{ij}^{11}|^2>0.$
\end{lemma}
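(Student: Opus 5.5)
We compute the norm of $\partial_{ij}^b f$ directly from the {\sc 2nd-Orth} conditions. Write $b=\sum_{a,c}b(a,c)$ and note that, as a vector indexed by assignments to the remaining $n-2$ variables,
\[
\partial_{ij}^b f=\sum_{(a,c)\in\{0,1\}^2} b(a,c)\,\mathbf f_{ij}^{ac}.
\]
In other words, $\partial_{ij}^b f$ is the row vector $\mathbf b\, M_{ij}(f)$, where $\mathbf b=(b(0,0),b(0,1),b(1,0),b(1,1))$ is the row vector of entries of $b$ and $M_{ij}(f)$ is the $4\times 2^{n-2}$ matrix whose rows are $\mathbf f_{ij}^{00},\mathbf f_{ij}^{01},\mathbf f_{ij}^{10},\mathbf f_{ij}^{11}$.

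The key step is to expand the squared norm:
\[
|\partial_{ij}^b f|^2=\mathbf b\,M_{ij}(f)M_{ij}(f)^\dagger\,\overline{\mathbf b}^{\mathtt T}
=\sum_{(a,c),(a',c')} b(a,c)\overline{b(a',c')}\,\langle \mathbf f_{ij}^{ac},\mathbf f_{ij}^{a'c'}\rangle .
\]
By {\sc 2nd-Orth} (\cref{def: second-order-orth}), all cross terms vanish and every diagonal term equals $\lambda$, i.e.\ $M_{ij}(f)M_{ij}(f)^\dagger=\lambda I_4$. Hence $|\partial_{ij}^b f|^2=\lambda\sum_{a,c}|b(a,c)|^2=\lambda|b|^2$, where $|b|$ is the norm of $b$ viewed as a vector in $\mathbb C^4$.

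Finally, $\lambda>0$: it is the squared norm $|\mathbf f_{ij}^{00}|^2$, and it is nonzero by the definition of {\sc 2nd-Orth}. Since $b\neq 0$, we have $|b|^2>0$, so $|\partial_{ij}^b f|^2>0$ and $\partial_{ij}^b f$ is nonzero.

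The only point needing care is matching the paper's inner-product convention $\langle\mathbf u,\mathbf v\rangle=\mathbf u\,\overline{\mathbf v}^{\mathtt T}$ so that the conjugates fall on the correct factors. Beyond that there is no real obstacle.
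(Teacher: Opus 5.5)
Your proof is correct and is essentially the paper's argument: expand $|\partial_{ij}^b f|^2$ as $\sum b(a,c)\overline{b(a',c')}\langle \mathbf f_{ij}^{ac},\mathbf f_{ij}^{a'c'}\rangle$, use {\sc 2nd-Orth} to kill the cross terms, and conclude $|\partial_{ij}^b f|^2=\lambda|b|^2>0$. One small fix: drop the garbled opening phrase ``Write $b=\sum_{a,c}b(a,c)$''; the displayed formula $\partial_{ij}^b f=\sum_{a,c} b(a,c)\,\mathbf f_{ij}^{ac}$ is what you actually use.
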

\begin{proof}
    $$|\partial_{ij}^{b}f|^2
    =
    \sum_{r,s,p,q\in\{0,1\}}
    b(r,s)\overline{b(p,q)}
    \langle f_{ij}^{rs},f_{ij}^{pq}\rangle=\sum_{r,s\in\{0,1\}}|b(r,s)|^2\braket{f_{ij}^{rs},f_{ij}^{rs}}=|b|^2\cdot\lambda$$
    by {\sc 2nd-Orth}, where $\lambda=|f_{ij}^{00}|^2=|f_{ij}^{01}|^2=|f_{ij}^{10}|^2=|f_{ij}^{11}|^2>0$.
\end{proof}

Recall $\mathfrak{m}_{ij}f$ is the 4-arity signature representing the mating of $f$ and $\overline{f}$ by their corresponding variables except for $x_i$ and $x_j$.
Note that $f$ satisfies {\sc 2nd-Orth} iff $M(\mathfrak{m}_{ij}f)=\lambda I_4$ for some $\lambda>0$.
In the setting of $\holant{\neq_2}{\hF}$, the mating operation corresponds to connecting variables except for $x_i$ and $x_j$ of $\widehat{f}$ and $\widehat{\overline{f}}$ using $\neq_2$.
We denote the resulting signature by $\widehat{\mathfrak m}_{ij} \widehat{f}$.
Clearly $\widehat{\mathfrak m}_{ij} \widehat{f}=\widehat{\mathfrak{m}_{ij}f}$, so $f$ satisfies {\sc 2nd-Orth} iff $M(\widehat{\mathfrak m}_{ij} \widehat{f})=\lambda N_4$ for some $\lambda>0$.

In the following, we assume that $\mathcal{F}$ is conjugate closed and doesn't satisfy condition \rm{(\ref{cond: tractable classes})}.
Recall that $\hF$ is arrow reversal closed by \cref{lm: conjugate closed and ars closed}.
We want to show that every irreducible signature of arity $2n\ge 4$ must satisfy {\sc 2nd-Orth}, otherwise $\Holant(\mathcal{F})$ is \#P-hard.
First we can realize the 4-ary signature $\mathfrak{m}_{ij}f$ for every pair of indices $\{i,j\}$.
We consider all possible cases that $\mathfrak{m}_{ij}f$ falls into the tractable classes of \cref{thm: single quaternary dichotomy}.

\subsection{Unary and Binary Tensor}
We first consider the case that $\mathfrak{m}_{ij}f\in \braket{\mathcal{T}}$ for some pair of indices $\{i,j\}.$

\begin{lemma}\label{lm: 2nd-orth mating reducible}
    Let $f\in \mathcal{F}$ be a signature of arity $2n\ge 4.$
    If there exists pair of indices $\{i,j\}$ such that $\mathfrak{m}_{ij}f\in \braket{\mathcal{T}}$, then one of the following holds:
    \begin{itemize}
        \item $\Holant(\mathcal{F})$ is \#P-hard.
        \item $f$ is reducible.
        \item $M(\mathfrak{m}_{ij}f)=\lambda_{ij}I_4$ for some $\lambda_{ij}>0.$
    \end{itemize}
\end{lemma}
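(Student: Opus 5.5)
The plan is to split according to which of the four variables $x_i,x_j,x_i',x_j'$ of $g:=\mathfrak m_{ij}f$ are grouped together in a factorization $g=\bigotimes$ (unary and binary factors). Throughout I use two facts. First, $M(g)=M_{ij}(f)M_{ij}(f)^\dagger$ is Hermitian positive semidefinite, with rows indexed by $(x_i,x_j)$ and columns by $(x_i',x_j')$. Second, we may assume $f$ satisfies \textsc{1st-Orth}, since otherwise \cref{lm: not 1st-orth is hard} already gives \#P-hardness. The first fact holds because $f\not\equiv0$: if $f\equiv 0$ it would be reducible, since its arity is $2n\ge4$. Under the second assumption, $M(\mathfrak m_i f)=M(\mathfrak m_j f)=\mu I_2$, and $\mathfrak m_if$ is obtained from $g$ by connecting $x_j$ with $x_j'$ via $=_2$ (a partial trace), and similarly for $\mathfrak m_jf$.

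\emph{Case A: some factor separates $\{x_i,x_j\}$ entirely from $\{x_i',x_j'\}$.} This covers $g=b_1(x_i,x_j)\otimes b_2(x_i',x_j')$ and any factorization whose factors each lie inside one side. Then $M(g)$, viewed as a matrix from the $(x_i,x_j)$ side to the $(x_i',x_j')$ side, has rank $1$. Since $\operatorname{rank}M_{ij}(f)=\operatorname{rank}M_{ij}(f)M_{ij}(f)^\dagger$, the matrix $M_{ij}(f)$ has rank $1$, so $f=u(x_i,x_j)\otimes v$ is reducible.

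Now suppose some factor contains exactly one variable from each side but the factorization is not of Case A's type. If a unary factor is present, say on $x_i$, then $M(\mathfrak m_if)$ has rank $1$. This contradicts $M(\mathfrak m_i f)=\mu I_2$, so in that situation $f$ is reducible directly: $M_i(f)$ has rank $1$. Hence we may assume the factorization pairs each of $x_i,x_j$ with one of $x_i',x_j'$.

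\emph{Case B: $g=b_1(x_i,x_i')\otimes b_2(x_j,x_j')$.} Here $M(g)=M(b_1)\otimes M(b_2)$, and the partial trace gives $M(\mathfrak m_if)=\operatorname{tr}M(b_2)\cdot M(b_1)$. The scalar $\operatorname{tr}M(b_2)$ is nonzero, since otherwise $\mathfrak m_if=0$, contradicting $f\neq0$. Therefore $M(b_1)\propto I_2$, and symmetrically $M(b_2)\propto I_2$. Hence $M(g)=\lambda_{ij}I_4$, and $\lambda_{ij}>0$ by positivity of the trace $\|f\|^2$.

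\emph{Case C: $g=b(x_i,x_j')\otimes b'(x_j,x_i')$.} Write $B=M(b)$ and $B'=M(b')$, so that $M(g)_{(a,c),(e,d)}=B_{ad}B'_{ce}$. As an operator this is $(B\otimes B')\circ\mathrm{SWAP}$ up to ordering. The partial traces give $M(\mathfrak m_if)=BB'$ and $M(\mathfrak m_jf)=(B'B)^{\mathtt T}$ (or similar), and both equal $\mu I_2$, so $B'=\mu B^{-1}$. Then $\bigl((B\otimes B')\mathrm{SWAP}\bigr)^2=(BB')\otimes(B'B)=\mu^2 I_4$, so every eigenvalue of $M(g)$ is $\pm\mu$. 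Since $M(g)$ is positive semidefinite (indeed diagonalizable with nonnegative spectrum), all eigenvalues equal $\mu$, and hence $M(g)=\mu I_4$.

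The main obstacle I expect is keeping the bookkeeping of index orders straight in Case C, so that the swap identity and the partial-trace identifications are applied correctly. I also need to check that every mixed factorization, including those with unary factors, is genuinely covered by Cases A--C.
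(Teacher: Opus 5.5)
Your proof is correct, and its skeleton matches the paper's. You reduce to $f$ satisfying \textsc{1st-Orth}, then split by how a factorization of $\mathfrak m_{ij}f$ pairs $x_i,x_j,x_i',x_j'$. The within-side pairing gives $\operatorname{rank}M_{ij}(f)=1$, and the parallel pairing is forced to a scalar matrix by the partial-trace identities $M(\mathfrak m_if)=M(\mathfrak m_jf)=\mu I_2$.

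Two local steps differ from the paper.
\begin{itemize}
\item \emph{Unary factor.} The paper extracts the unary factor with the decomposition lemma and then invokes hardness from a nonzero odd-arity signature. Your argument is more elementary and avoids that machinery: a unary factor makes $M(\mathfrak m_if)$ or $M(\mathfrak m_jf)$ rank one, which contradicts $\mu I_2$.
\item \emph{Crossed pairing.} The paper shows $\det M(\mathfrak m_{ij}f)=-\mu^4<0$, contradicting positive semidefiniteness. You instead show that $M(\mathfrak m_{ij}f)=(B\otimes B')\,\mathrm{SWAP}$ squares to $(BB')\otimes(B'B)=\mu^2I_4$, and conclude $M(\mathfrak m_{ij}f)=\mu I_4$. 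This is valid. Note, however, that the case is actually empty: your conclusion would give trace $4\mu$, whereas $\operatorname{Tr}M(\mathfrak m_{ij}f)=\|f\|^2=2\mu$. So your conclusion there holds only vacuously.
\end{itemize}

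A small bookkeeping point: with your indexing, $M(\mathfrak m_jf)=B'B$ exactly, with no transpose.
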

\begin{proof}
    If $\mathfrak{m}_{ij}f$ fails {\sc 1st-Orth}, then $\Holant(\mathcal{F})$ is \#P-hard by \cref{lm: not 1st-orth is hard}.
    Now we assume that $\mathfrak{m}_{ij}f$ satisfies {\sc 1st-Orth}.
    If $\mathfrak{m}_{ij}f$ has a unary factor $u$, $u$ is realizable by \cref{lm: decomposition}.
    By \cref{lm: odd holant with conjugation}, $\Holant(\mathcal{F}
    )$ is \#P-hard.
    In the following, assume that $\mathfrak{m}_{ij}f=b_1\otimes b_2$ for two binary signatures $b_1$ and $b_2$.
    Let $R=M(\mathfrak{m}_{ij}f),\,U=M(b_1)$ and $V=M(b_2)$.
    Since $\mathfrak{m}_{ij}f=b_1\otimes b_2$, there are three possibilities.
    \begin{itemize}
        \item $R_{ab,cd}=U_{ab}V_{cd}$ for every $\{a,b,c,d\}\in\{1,2\}.$
        Then $\mathrm{rank}(R)=1.$
        Since $\mathrm{rank}(R)=\mathrm{rank}(M(\mathfrak{m}_{ij}f))=\mathrm{rank}(M_{ij}(f)M_{ij}(f)^\dagger)=\mathrm{rank}(M_{ij}f)$, we have $\mathrm{rank}(M_{ij}f)=1.$
        Therefore, $f$ factors into two binaries.
        Thus $f$ is reducible.
        
        \item $R_{ab,cd}=U_{ac}V_{bd}$ for every $\{a,b,c,d\}\in\{1,2\}.$
        Then $R=U\otimes V.$
        Recall that $R_{ab,cd}=\braket{\widehat{\mathbf{f}}_{ij}^{ab},\widehat{\mathbf{f}}_{ij}^{cd}}.$
        By {\sc 1st-Orth} at $x_i$, there exist some $\mu>0$ such that $\mu \delta_{ac}=\braket{\widehat{\mathbf{f}_i^a},\widehat{\mathbf{f}_i^c}}=R_{a1,c1}+R_{a2,c2}=U_{ac}(V_{11}+V_{22})$, for every $a,c\in\{1,2\}.$
        Set $s=V_{11}+V_{22}$.
        Taking $a=c=1$ gives $\mu=U_{11}s>0$, so $s\neq 0.$
        Therefore $U_{11}=U_{22}=\frac{\mu}{s}$ and $U_{12}=U_{21}=0$.
        Thus $U=uI_2$ for some $u\neq 0.$
        Similarly, {\sc 1st-Orth} at $x_j$ gives $\mu\delta_{bd}=(U_{11}+U_{22})V_{bd}=2uV_{bd}.$
        Hence $V=\frac{\mu}{2u}I_2.$
        So $R=U\otimes V=\frac{\mu}{2}I_4$, $\mu>0.$
        It follows that $M(\widehat{\mathfrak{m}}_{ij}\widehat{f})= \frac{\mu}{2} N_4$.
        Taking $\lambda_{ij}=\frac{\mu}{2}$ completes this case.
    
        \item $R_{ab,cd}=U_{ad}V_{bc}$ for every $\{a,b,c,d\}\in\{1,2\}.$
        {\sc 1st-Orth} at $x_i$ says there exists some $\mu>0$, \(\mu\delta_{ac}=R_{a1,c1}+R_{a2,c2}=U_{a1}V_{1c}+U_{a2}V_{2c}=(UV)_{ac}\), for every $\{a,c\}\in\{1,2\}.$
        Thus $UV=\mu I_2,\,\det(U)\det(V)=\mu^2.$
        Consider $R'=U\otimes V$, so $R'_{ab,cd}=U_{ac}V_{bd}.$
        Exchanging the two middle columns of $R'$ indexed by $12$ and $21$ switches the column index $c$ and $d$.
        So we obtain $U_{ad}V_{bc}=R_{ab,cd}.$
        Thus, $\det(R)=-\det(R')=-\det(U)^2\det(V)^2=-\mu^4<0.$
        However, $\det(R)=\det(M(\mathfrak{m}_{ij}f))=\det(M_{ij}(f)M_{ij}(f)^\dagger)\ge0$, contradiction.
        So this case cannot happen.
    \end{itemize}
    In all above cases, either $f$ is reducible, or $M(\mathfrak{m}_{ij}f)=\lambda_{ij}I_4$ for some $\lambda_{ij}>0.$
    The lemma is proved.
\end{proof}

In the following, we may assume that $\mathfrak{m}_{ij}f$ is irreducible for every pair of indices $\{i,j\}$.

\subsection{Affine, Product, Local-affine Transformable}

In this subsection, we consider the case that some irreducible $\mathfrak{m}_{ij}f$ is $\mathscr{A}$- or $\mathscr{P}$- or $\mathscr{L}$-transformable.
We prove either $\Holant(\mathcal{F})$ is \#P-hard or $M(\mathfrak{m}_{ij}f)=\lambda_{ij}I_4$ for some $\lambda_{ij}>0.$

\begin{lemma}\label{lm: APL-transformable singular value}
    Let $h$ be an irreducible signature of arity 4, and $h$ satisfies {\sc 1st-Orth}.
    If $h$ is $\mathscr{A}$- or $\mathscr{P}$- or $\mathscr{L}$-transformable, then for any $i,j\in[4]$, the 4 by 4 signature matrix $M_{ij}(h)$ has equal non-zero singular values and $\mathrm{rank}(M_{ij}(h))\in \{2,4\}.$
\end{lemma}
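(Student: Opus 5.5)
The plan is to reduce everything to \cref{lm: affine singular value} by writing $h$ as a locally transformed affine signature and showing that the relevant local transformations can be taken unitary up to scalars, since unitary local maps preserve singular values.

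\textbf{Setting up.} By \cref{lm: APL-transformable}, since $h$ is irreducible of arity $4$, we can write $h=(T_1\otimes T_2\otimes T_3\otimes T_4)a$ with $a\in\mathscr A$. Irreducibility is preserved by invertible local maps, so $a$ is irreducible, and by \cref{lm: affine is 1st-orth} $a$ satisfies {\sc 1st-Orth}. For a pair $\{i,j\}$ with complement $\{k,\ell\}$ we have
\[
M_{ij}(h)=(T_i\otimes T_j)\,M_{ij}(a)\,(T_k\otimes T_\ell)^{\mathtt T}.
\]
Suppose each $T_r$ is $c_rU_r$ with $c_r\in\mathbb C^\times$ and $U_r$ unitary. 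Then the singular values of $M_{ij}(h)$ are a common scalar multiple of those of $M_{ij}(a)$. By \cref{lm: affine singular value} these are all equal, and the rank is a power of $2$, hence in $\{1,2,4\}$. Rank $1$ would make $M_{ij}(h)$ a product of a vector on $(x_i,x_j)$ and a vector on $(x_k,x_\ell)$, contradicting irreducibility. So the rank lies in $\{2,4\}$. The whole proof thus reduces to showing that the local transformations are unitary up to scalars, using {\sc 1st-Orth} of $h$.

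\textbf{Product and local-affine cases.} In the $\mathscr L$ case, the proof of \cref{lm: APL-transformable} gives $T_r=T_{\alpha^{s_r}}^{-1}$. These are diagonal with unimodular entries, hence unitary, and we are done.

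In the $\mathscr P$ case, the proof gives $T_1=\operatorname{diag}(p,q)$ and $T_r=I_2$ otherwise. Here I would argue directly: $h$ is supported on $\{\beta,\overline\beta\}$ with $h(\beta)=p$ and $h(\overline\beta)=q$. {\sc 1st-Orth} at $x_1$ gives $|\mathbf h_1^0|^2=|\mathbf h_1^1|^2$, i.e.\ $|p|=|q|$. Then $T_1$ is $|p|$ times a unitary. Alternatively, one sees directly that $M_{ij}(h)$ has exactly two nonzero entries, of equal modulus, in distinct rows and columns, so it has rank $2$ with equal singular values.

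\textbf{Affine-transformable case (main obstacle).} Here $h=T^{\otimes4}a$ for some $T\in\mathrm{GL}_2(\mathbb C)$, where $T^{-1}$ is the transforming matrix, subject to $(=_2)T^{\otimes 2}\in\mathscr A$. This $T$ need not be unitary. I would first use the known normal form of such $T$: up to scalar and right multiplication by the stabilizer of $\mathscr A$ (which consists of unitary-up-to-scalar matrices), $T$ is a complex orthogonal matrix $Q$, possibly composed with $T_\alpha$. Since $T_\alpha$ is unitary, it suffices to handle $Q\in\mathbf O_2(\mathbb C)$.

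A complex orthogonal $Q$ factors as a real rotation or reflection times a ``hyperbolic'' matrix. The hyperbolic matrix is diagonal, $\operatorname{diag}(\lambda,\lambda^{-1})$, in the $\{(1,\mathfrak i),(1,-\mathfrak i)\}$ basis, i.e.\ after a $K$-type unitary change of basis. Passing through this unitary, it becomes $h'=D^{\otimes4}a'$ with $D=\operatorname{diag}(\lambda,\lambda^{-1})$, and $a'$ is still affine (since $K$ maps $\mathscr A$-transformable sets appropriately) and still satisfies {\sc 1st-Orth}.

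Then {\sc 1st-Orth} of $h'$ at $x_1$ reads
\[
|\lambda|^{2}\sum_{y}|a'(0,y)|^2\,|\lambda|^{2(\#0(y)-\#1(y))}
=|\lambda|^{-2}\sum_{y}|a'(1,y)|^2\,|\lambda|^{2(\#0(y)-\#1(y))}.
\]
Summing this over all positions and comparing with the corresponding identity for $a'$, I would show that it forces $|\lambda|=1$ unless the support of $a'$ is concentrated on a single Hamming weight. In that exceptional case, $D^{\otimes4}$ only rescales $a'$ by a constant, so again nothing changes.

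Either way, $T$ is effectively unitary up to a scalar, and the opening argument finishes the proof. I expect the careful case split on the support weights of $a'$ in this last step to be the delicate part.
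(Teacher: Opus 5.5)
Your $\mathscr{P}$ and $\mathscr{L}$ cases have a genuine gap. You take the explicit matrices from the proof of \cref{lm: APL-transformable}. In those two cases, however, the written formulas are valid for the \emph{transformed} signature $Th\in\mathscr{P}$ (resp.\ $\mathscr{L}$), not for $h$. For a general witness $T$, the local maps for $h$ are $T^{-1}\operatorname{diag}(p,q),T^{-1},T^{-1},T^{-1}$, respectively $T^{-1}T_{\alpha^{s_r}}^{-1}$. The only constraint on $T^{-1}$ is $(=_2)(T^{-1})^{\otimes 2}\in\mathscr{P}$ (resp.\ $\mathscr{L}$), and every complex orthogonal matrix satisfies it, including non-unitary ones.

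Concretely, $h=H^{\otimes4}(=_4)$ is irreducible, satisfies {\sc 1st-Orth}, and is $\mathscr{P}$-transformable with witness $H$. Yet it is supported on all eight even-weight strings. So your claims that $h$ lives on two antipodal points, and that $M_{ij}(h)$ has exactly two nonzero entries, are false under the lemma's hypotheses. The real work in these cases is using {\sc 1st-Orth} to neutralize a possibly non-unitary $T^{-1}$, and you only attempt that for $\mathscr{A}$. Your single-$\lambda$ argument would not transfer verbatim either. In the $\mathscr{P}$ case the local maps are not uniform: $\operatorname{diag}(p,q)$ sits on one coordinate only, and $|p|\neq|q|$ is not excluded a priori. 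You would also need normal forms for $\mathscr{P}$- and $\mathscr{L}$-transformability.

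The missing idea is to run your convexity argument coordinatewise. This is what the paper does, and it removes both the case split and the external normal form.
\begin{itemize}
\item Use only the \emph{statement} of \cref{lm: APL-transformable}. Write each $T_r=c_rU_r\operatorname{diag}(\rho_r,\rho_r^{-1})V_r^\dagger$ by singular value decomposition, and put $c=c_1c_2c_3c_4$.
\item Set $f=(V_1^\dagger\otimes\cdots\otimes V_4^\dagger)a$ and $g=c^{-1}(U_1^\dagger\otimes\cdots\otimes U_4^\dagger)h$. Then $g(x)=d(x)f(x)$ with $d(x)=\prod_r\rho_r^{1-2x_r}$, and both $f$ and $g$ satisfy {\sc 1st-Orth}.
\item Let $p(x)=|f(x)|^2/\|f\|^2$. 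Weighted AM--GM, with {\sc 1st-Orth} of $f$ applied at each coordinate separately, gives $\|g\|^2/\|f\|^2=\sum_xp(x)d(x)^2\ge\prod_r\rho_r^{2\sum_x(1-2x_r)p(x)}=1$.
\item The same argument starting from $g$ gives the reverse inequality. Hence $d\equiv1$ on $\mathscr{S}(f)$, so $g=f$, and $h$ is a scalar times a local-unitary image of $a$.
\end{itemize}
This is the right target, rather than ``each $T_r$ is unitary up to a scalar'' as in your opening reduction; the latter can fail when the support of $a$ makes the scalings cancel. Your uniform-$\lambda$ computation in the $\mathscr{A}$ case is a correct special instance of this: sum the coordinate identities and use strict convexity of $s\mapsto\sum_xp(x)e^{s(4-2\operatorname{wt}(x))}$. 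But it relies on the cited normal form and does not cover the other two classes.
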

\begin{proof}
    By \cref{lm: APL-transformable}, there exists $T_1,T_2,T_3,T_4\in \mathrm{GL}_2(\C)$ such that $h=(T_1\otimes T_2 \otimes T_3 \otimes T_4)a$, where $a\in \mathscr{A}.$
    By \cref{lm: affine is 1st-orth}, $a$ satisfies {\sc 1st-Orth}.
    Consider the singular value decomposition for every $T_i, \,1\le i\le 4$.
    We have $T_i=U_i\mathrm{diag}(\sigma_{i,1},\sigma_{i,2})V_i^\dagger$, where $U_i,V_i$ are 2 by 2 unitary matrices, and $\sigma_{i,1}\ge\sigma_{i,2}>0$ are the singular values of $T_i.$
    Let $c_i=\sqrt{\sigma_{i,1}\cdot \sigma_{i,2}}$, $r_i=\sqrt{\sigma_{i,1}/\sigma_{i,2}}$, $D_i=\mathrm{diag}(r_i,r_i^{-1})$, we have $T_i=c_iU_i D_i V_i^\dagger$ for $1\le i\le 4.$
    Let $c=c_1c_2c_3c_4$. 
    Then
    $$
    h=c(U_1\otimes \cdots \otimes U_4)(D_1\otimes \cdots \otimes D_4)(V_1^\dagger \otimes \cdots \otimes V_4^\dagger)a.
    $$
    Let $f=(V_1^\dagger\otimes \cdots\otimes V_4^\dagger)a$ and $g=c^{-1}(U_1^\dagger\otimes \cdots \otimes U_4^\dagger)h$, we have
    \begin{equation}\label{eq: g = Df}
        g=(D_1\otimes \cdots \otimes D_4)f,\qquad D_i=\mathrm{diag}(r_i,r_i^{-1}).
    \end{equation}
    Since scalar and binary extension by unitary matrices do not change {\sc 1st-Orth}, both $f$ and $g$ satisfies {\sc 1st-Orth}.
    
    We next show $f=g$.
    Let $x=(x_1,x_2,x_3,x_4)\in \{0,1\}^{4}.$
    By \eqref{eq: g = Df}, $g(x)=d(x)f(x)$, where $d(x)=\prod_{i=1}^4r_i^{1-2x_i}>0$.
    Because $f$ satisfies {\sc 1st-Orth}, $||f_i^0||^2=||f_i^1||^2$ for every $1\le i\le 4$.
    So $\sum_{x\in\{0,1\}^4,\,x_i=0}|f(x)|^2=\sum_{x\in\{0,1\}^4,\,x_i=1}|f(x)|^2$, for every $1\le i\le 4.$
    Define $p(x)=|f(x)|^2/||f||^2$ for every $x\in\{0,1\}^4.$
    Then 
    $$
    \sum_{x\in\{0,1\}^4}(1-2x_i)p(x)=\sum_{x\in\{0,1\}^4,x_i=0}p(x)-\sum_{x\in\{0,1\}^4,x_i=1}p(x)=0,\quad \forall i, 1\le i\le 4.
    $$
    Thus
    \begin{equation}\label{eq: ||g|| over ||f||}
        \begin{aligned}
            \frac{||g||^2}{||f||^2}=&\frac{\sum_{x\in\{0,1\}^4}d(x)^2|f(x)|^2}{||f||^2}
            =\sum_{x\in\{0,1\}^4}p(x)d(x)^2\ge \prod_{x\in\{0,1\}^4}(d(x)^2)^{p(x)}\\
            =&\prod_{x\in\{0,1\}^4}\prod_{i=1}^4(r_i^{1-2x_i})^{2p(x)}
            =\prod_{i=1}^4r_i^{2\sum_{x\in\{0,1\}^4}(1-2x_i)p(x)}=1.
        \end{aligned}
    \end{equation}
    The inequality is by AM-GM.
    By \eqref{eq: ||g|| over ||f||}, we know $||g||^2\ge ||f||^2.$

    Conversely, we have 
    \begin{equation}\label{eq: f = Dg}
        f=(D_1^{-1}\otimes \cdots \otimes D_4^{-1})g,\qquad D_i^{-1}=\mathrm{diag}(r_i^{-1},r_i).
    \end{equation}
    Using the fact that $g$ satisfies {\sc 1st-Orth}, the same argument above gives $||f||^2\ge ||g||^2.$
    So $||f||=||g||,$ and the inequality in AM-GM becomes equality.
    So $d(x)^2=c$ for all $x$ with $p(x)>0$.
    Notice that $1=\prod_{x\in\{0,1\}^4}d(x)^{2p(x)}=\prod_{x\in\{0,1\}^4}c^{p(x)}=c^{\sum_{x\in\{0,1\}^4}p(x)}=c$.
    Also, $p(x)>0\iff x\in \mathscr{S}(f)$.
    So $d(x)=1$ for every $x\in \mathscr{S}(f).$
    Therefore, $f(x)=g(x)$.

    Recall $f=(V_1^\dagger\otimes \cdots\otimes V_4^\dagger)a$ and $g=c^{-1}(U_1^\dagger\otimes \cdots \otimes U_4^\dagger)h$.
    So $h$ is a scalar multiple of local unitary transformation of \(a\). 
    This doesn't change the rank and singular values of the signature matrix.
    By \cref{lm: affine singular value}, for any $\{i,j\}\in [4]$, $M_{ij}(a)$ has equal non-zero singular values, and $\mathrm{rank}(M_{ij}(a))\in\{1,2,4\}.$
    So $M_{ij}(h)$ has equal non-zero singular values, and $\mathrm{rank}(M_{ij}(h))\in\{1,2,4\}.$

    It remains to exclude that case where $\mathrm{rank}(M_{ij}(h))=1.$
    If so, $M_{ij}(h)=\mathbf{uu^{\tt T}}$ for some $\mathbf{u}\in \mathbb{C}^4.$
    Then $h$ is a tensor product of two binary signatures, both with truth table $\mathbf{u}.$
    This contradicts $h$ is irreducible.
    The lemma is proved.
\end{proof}

The following lemma proves $M(\mathfrak{m}_{ij}f)=\lambda_{ij}I_4$ when $\mathrm{rank}(M(\mathfrak{m}_{ij}f))=4.$

\begin{lemma}\label{lm: APL-trans rank 4}
    Let $f\in \mathcal{F}$ be a signature of arity $2n\ge 4$.
    If there exists a pair of indices $\{i,j\}$ such that (1) $\mathfrak{m}_{ij}f$ is irreducible and satisfies {\sc 1st-Orth}; (2) $\mathfrak{m}_{ij}f$ is $\mathscr{A}$- or $\mathscr{P}$- or $\mathscr{L}$-transformable; (3) $\mathrm{rank}(M(\mathfrak{m}_{ij}f))=4$, then $M(\mathfrak{m}_{ij}f)=\lambda_{ij}I_4$ for some $\lambda_{ij}>0.$
\end{lemma}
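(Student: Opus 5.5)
Write $h=\mathfrak m_{ij}f$. The plan is to apply \cref{lm: APL-transformable singular value} to $h$ and then read off the identity matrix from the positive semidefinite structure of $R:=M(h)$.

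First, recall that $R=M_{ij}(f)M_{ij}(f)^\dagger$, with rows indexed by $(x_i,x_j)$ and columns by $(x_i',x_j')$. This matrix is Hermitian positive semidefinite. Hypothesis (3) says it is of full rank, so it is positive definite.

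Second, apply \cref{lm: APL-transformable singular value} to $h$. Its hypotheses are exactly (1) and (2), since $h$ has arity $4$. The lemma is stated for every pair of variables of $h$. We take the pair of dangling variables $\{x_i,x_j\}$ coming from the $f$ side, with $\{x_i',x_j'\}$ as the columns. The lemma then says that the $4\times 4$ matrix $M_{\{x_i,x_j\}}(h)$ has all its nonzero singular values equal. This matrix is precisely $R$ in the indexing convention, possibly up to a permutation of rows and columns. Such a permutation does not change singular values, so I would verify this ordering but expect no issue.

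Third, combine the two facts. Since $\operatorname{rank} R=4$, all four singular values are nonzero, and by the lemma they all equal a common value $\sigma>0$. For a Hermitian positive definite matrix, the singular values coincide with the eigenvalues. Hence every eigenvalue of $R$ is $\sigma$. Diagonalizing unitarily, $R=W(\sigma I_4)W^\dagger=\sigma I_4$. Setting $\lambda_{ij}=\sigma>0$ proves the claim.

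The only point needing care is the second step: confirming that the matrix $M_{ij}(h)$ in \cref{lm: APL-transformable singular value} can be taken with the two $f$-side dangling variables as rows, so that it equals $M(\mathfrak m_{ij}f)$. The lemma holds for every pair $i,j\in[4]$ of variables of $h$, so choosing the pair $\{x_i,x_j\}$ suffices, and the rest is routine.
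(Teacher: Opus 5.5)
Your proposal is correct and matches the paper's proof essentially step for step. Both apply \cref{lm: APL-transformable singular value} to $h=\mathfrak m_{ij}f$, identify $M(h)=M_{ij}(f)M_{ij}(f)^\dagger$ as Hermitian positive semidefinite so that its singular values are its eigenvalues, use rank $4$ to make the common value $\lambda_{ij}$ positive, and conclude $M(h)=\lambda_{ij}I_4$ by unitary diagonalization. Your extra check on taking the $f$-side dangling variables as row indices is harmless, since the cited lemma holds for every pair of variables of $h$.
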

\begin{proof}
    Let $h=\mathfrak{m}_{ij}f$.
    Apply \cref{lm: APL-transformable singular value}, $M_{ij}(h)$ has 4 equal singular values $\lambda_{ij}$.
    Because $M_{ij}(h)=M(\mathfrak{m}_{ij}f)=M_{ij}(f)M_{ij}(f)^\dagger$ is Hermitian and positive semi-definite, its eigenvalues and singular values are the same, and are all equal to $\lambda_{ij}\ge 0$.
    Since $\mathrm{rank}(M_{ij}(h))=4$, $\lambda_{ij}>0.$
    By spectral decomposition of Hermitian matrix, there exists unitary matrix $U$ such that  $M_{ij}(h)=U\mathrm{diag}(\lambda_{ij},\lambda_{ij},\lambda_{ij},\lambda_{ij})U^{\dagger}=\lambda_{ij}UU^\dagger=\lambda_{ij}I_4$.
    The lemma is proved.
\end{proof}

In the following, we will show $\Holant(\mathcal{F})$ is \#P-hard if $\mathrm{rank}(M(\mathfrak{m}_{ij}f))=2$ for some pair of indices $\{i,j\}.$

\begin{definition}
Let $R\in\mathbb C^{4\times4}$ have rows and columns indexed
by pairs of bits. 
Its partial traces are the $2\times2$ matrices
defined by
\[
  (\operatorname{Tr}_1R)_{b,d}
  :=\sum_{a\in\{0,1\}}R_{ab,ad},
  \qquad
  (\operatorname{Tr}_2R)_{a,c}
  :=\sum_{b\in\{0,1\}}R_{ab,cb}.
\]
For an arity 4 signature $g$, we write
$\operatorname{Tr}_k(g):=\operatorname{Tr}_k(M(g))$ for $k=1,2$,
where $M(g)_{ab,cd}=g(a,b,c,d)$.
\end{definition}

\begin{lemma}\label{lem:mating-partial-traces}
Let $f$ be a signature of arity $2n\ge 4$ satisfying {\sc 1st-Orth}.
Then there exists some $\mu>0$, for every pair of distinct indices $\{i,j\}$,
\(
  \operatorname{Tr}_1(\mathfrak m_{ij}f)
  =\operatorname{Tr}_2(\mathfrak m_{ij}f)
  =\mu I_2.
\)
\end{lemma}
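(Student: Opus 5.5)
The claim is a direct computation: the partial traces of $M(\mathfrak m_{ij}f)=M_{ij}(f)M_{ij}(f)^\dagger$ are exactly the first-order mating matrices $M(\mathfrak m_i f)$ and $M(\mathfrak m_j f)$. Once this identification is made, {\sc 1st-Orth} together with \cref{lm: 1st-Orth uniform} gives the common scalar $\mu$.

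First I write out the entries. With rows and columns of $M(\mathfrak m_{ij}f)$ indexed by $(x_i,x_j)$, we have $M(\mathfrak m_{ij}f)_{ab,cd}=\langle \mathbf f_{ij}^{ab},\mathbf f_{ij}^{cd}\rangle$. This is because the mating contracts $f$ with $\overline f$ over all variables other than $x_i,x_j$.

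For $\operatorname{Tr}_2$, I fix $a,c$ and sum over $b$:
\[
(\operatorname{Tr}_2 R)_{a,c}=\sum_{b}\langle \mathbf f_{ij}^{ab},\mathbf f_{ij}^{cb}\rangle .
\]
The vector $\mathbf f_i^a$, indexed by all variables except $x_i$, is the concatenation over $b\in\{0,1\}$ of the blocks $\mathbf f_{ij}^{ab}$. Hence $\langle \mathbf f_i^a,\mathbf f_i^c\rangle=\sum_b\langle \mathbf f_{ij}^{ab},\mathbf f_{ij}^{cb}\rangle$, and this is the $(a,c)$ entry of $M(\mathfrak m_i f)$ by \eqref{m-form}. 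So $\operatorname{Tr}_2(\mathfrak m_{ij}f)=M(\mathfrak m_i f)$. Symmetrically, summing over $a$ gives $\operatorname{Tr}_1(\mathfrak m_{ij}f)=M(\mathfrak m_j f)$. This is the same computation already used inside the proof of \cref{lm: 2nd-orth mating reducible}, where $R_{a1,c1}+R_{a2,c2}$ was identified with $\langle\mathbf f_i^a,\mathbf f_i^c\rangle$.

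Then I apply {\sc 1st-Orth} (\cref{def-first-order}). It gives a single $\mu\neq0$, independent of the index, with $M(\mathfrak m_k f)=\mu I_2$ for every $k\in[2n]$. Since $\mu=|\mathbf f_k^0|^2$ and the definition requires $\mu\neq 0$, we get $\mu>0$. Both partial traces therefore equal $\mu I_2$ for every pair $\{i,j\}$, with the same $\mu$.

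The only step needing care is the bookkeeping: checking that the ordering of the remaining variables in the $M_{ij}$ and $M_i$ conventions makes the concatenation identity hold exactly. The inner product is a sum over all remaining coordinates, so it does not depend on their order, and no real obstacle arises.
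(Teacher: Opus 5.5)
Your proposal is correct and follows the paper's own proof. The paper likewise writes $R_{ab,cd}=\sum_x f(a,b,x)\overline{f(c,d,x)}$, identifies $\operatorname{Tr}_2 R$ with $M_i(f)M_i(f)^\dagger$ and $\operatorname{Tr}_1 R$ with $M_j(f)M_j(f)^\dagger$, and concludes directly from {\sc 1st-Orth}. Your appeal to the uniformity lemma is unnecessary, since the definition of {\sc 1st-Orth} already supplies a single $\mu$ for all indices.
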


\begin{proof}
By permuting variables, assume that $(i,j)=(1,2)$.
Since $f$ satisfies {\sc 1st-Orth}, there exists $\mu>0$ such that
$M_1(f)M_1(f)^\dagger=M_2(f)M_2(f)^\dagger=\mu I_2$.

Put $R=M(\mathfrak m_{12}f)$. By the definition of mating,
$R_{ab,cd}=\sum_{x\in\{0,1\}^{2n-2}}
f(a,b,x)\overline{f(c,d,x)}$.
Consequently, $(\operatorname{Tr}_2R)_{a,c}=\sum_{b\in\{0,1\},x\in\{0,1\}^{2n-2}}
f(a,b,x)\overline{f(c,b,x)}$,
which is precisely the $(a,c)$ entry of $M_1(f)M_1(f)^\dagger$.
Thus $\operatorname{Tr}_2R=\mu I_2$.
Summing over the first coordinate instead gives
$\operatorname{Tr}_1R=M_2(f)M_2(f)^\dagger=\mu I_2$,
proving the claim.
\end{proof}

\begin{lemma}\label{lm:rank-two-matrix-form}
Let $h$ be an arity 4 signature.
Suppose that $M(h)$ is Hermitian, positive semi-definite of rank two with two nonzero equal eigenvalues, and
$\operatorname{Tr}_1R=\operatorname{Tr}_2R=\mu I_2$
for some $\mu>0$.
Then there exist traceless Hermitian matrices
$A,B\in\mathbb C^{2\times2}$ with $A^2=B^2=I_2$ such that
\(
  M(h)=\frac{\mu}{2}(I_4+A\otimes B).
\)
\end{lemma}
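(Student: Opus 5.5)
Write $R=M(h)$. Since $R$ is Hermitian positive semi-definite of rank two with equal nonzero eigenvalues $\nu$, we have $R=\nu P$, where $P$ is the orthogonal projector onto a two-dimensional subspace of $\mathbb C^2\otimes\mathbb C^2$. Taking the full trace gives $2\nu=\operatorname{Tr}R=\operatorname{Tr}(\operatorname{Tr}_1R)=2\mu$, so $\nu=\mu$. It therefore suffices to show that every rank-two projector $P$ on two qubits with $\operatorname{Tr}_1P=\operatorname{Tr}_2P=I_2$ has the form $\frac12(I_4+A\otimes B)$ with $A,B$ traceless Hermitian involutions.

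Expand in the Pauli basis:
\[
P=\tfrac14\sum_{p,q=0}^3 r_{pq}\,\sigma_p\otimes\sigma_q ,\qquad r_{pq}\in\mathbb R .
\]
Since $\operatorname{Tr}P=2$, we get $r_{00}=2$. The partial trace conditions say that the local terms vanish, i.e. $r_{p0}=r_{0q}=0$ for $p,q\ge1$. Hence
\[
P=\tfrac12 I_4+\tfrac14\sum_{p,q\ge1}C_{pq}\,\sigma_p\otimes\sigma_q
\]
for a real $3\times3$ correlation matrix $C$. Using the double cover $\mathbf{SU}(2)\to\mathbf{SO}(3)$ from the preliminaries, conjugate by a local unitary $U\otimes V$; this conjugation preserves the hypotheses and the desired form. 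It replaces $C$ by $R_U^{\tt T} C R_V$ with $R_U,R_V\in\mathbf{SO}(3)$, so by a signed singular value decomposition we may assume $C=\operatorname{diag}(c_1,c_2,c_3)$.

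Now $P$ is a linear combination of $I$, $X\otimes X$, $Y\otimes Y$, $Z\otimes Z$, which are diagonal in the Bell basis. The eigenvalues of $P$ are $\frac12+\frac14(\pm c_1\pm c_2\pm c_3)$ with the four sign patterns $(1,-1,1),(1,1,-1),(-1,1,1),(-1,-1,-1)$ for $\Phi^+,\Phi^-,\Psi^+,\Psi^-$ respectively (the product of the three signs is $-1$ only for $\Psi^-$). Requiring the eigenvalue multiset to be $\{1,1,0,0\}$, I solve the resulting linear system. Adding all four eigenvalues gives the trace condition automatically. Choosing which two Bell states carry eigenvalue $1$ (six choices) and solving shows that exactly one $c_k$ equals $\pm2$ and the other two vanish. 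For example, $\Phi^\pm$ carrying eigenvalue $1$ forces $c_1=c_2=0$ and $c_3=2$. Hence $P=\frac12(I_4\pm\sigma_k\otimes\sigma_k)$ in the rotated frame.

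Finally undo the local unitary:
\[
P=\tfrac12\bigl(I_4\pm U\sigma_kU^\dagger\otimes V\sigma_kV^\dagger\bigr).
\]
Set $A=\pm U\sigma_kU^\dagger$ and $B=V\sigma_kV^\dagger$. Both are traceless Hermitian with square $I_2$, which gives $M(h)=\frac{\mu}{2}(I_4+A\otimes B)$. The main obstacle is the diagonalization step. One must check that the signed SVD is achievable with proper rotations $R_U,R_V\in\mathbf{SO}(3)$, absorbing a possible determinant $-1$ into the sign of one $c_k$. One must also check that the six eigenvalue cases are handled uniformly, which the linear solve does.

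A rotation-free alternative is also available. Since $P^2=P$ and $P=\frac12 I+\frac14 Q$ with $Q=\sum C_{pq}\sigma_p\otimes\sigma_q$, we get $Q^2=4I$. Expanding $Q^2$ with the Pauli product rules forces $C^{\tt T}C$ and $CC^{\tt T}$ to have rank one with singular value $2$, which gives $Q=2\,\mathbf a\!\cdot\!\boldsymbol\sigma\otimes\mathbf b\!\cdot\!\boldsymbol\sigma$ for unit vectors $\mathbf a,\mathbf b$. I would use the Bell-basis computation as the primary route, since it is more direct.
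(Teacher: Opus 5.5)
Your proof is correct, but it takes a different route from the paper. After the same trace argument giving $\lambda=\mu$, the paper works directly with the Hermitian involution $W=2R/\mu-I_4$. Its vanishing partial traces give the block form $W=\begin{pmatrix}D&C\\ C^\dagger&-D\end{pmatrix}$ with $\operatorname{Tr}D=\operatorname{Tr}C=0$. Then $W^2=I_4$ gives $D^2+CC^\dagger=D^2+C^\dagger C=I_2$ and $DC=CD$, so $C$ is normal and commutes with $D$. A common unitary diagonalization $V$ then yields $D=dB$ and $C=zB$ with $B=V\operatorname{diag}(1,-1)V^\dagger$, hence $W=A\otimes B$ with $A=\begin{pmatrix}d&z\\ \bar z&-d\end{pmatrix}$. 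This is short and purely elementary.

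Your route (Pauli expansion, signed SVD of the correlation matrix by proper rotations via $\mathbf{SU}(2)\to\mathbf{SO}(3)$, then the Bell-diagonal spectrum) is more conceptual, since it shows $P$ is locally unitarily equivalent to $\frac12(I_4\pm\sigma_k\otimes\sigma_k)$. It costs the double cover, the determinant fix in the SVD, and a six-case check. Your rotation-free alternative is the closest analogue of the paper's argument, since both exploit $W^2=I_4$. It also works: the $\sigma_r\otimes\sigma_s$ coefficient of $Q^2$ is $-2$ times the $(r,s)$ cofactor of $C$, so $Q^2=4I_4$ forces $\operatorname{rank}C\le 1$ and $\|C\|_F=2$.

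There are two slips in the Bell-basis step, neither affecting the result:
\begin{itemize}
\item $\Phi^-$ has sign pattern $(-1,1,1)$ and $\Psi^+$ has $(1,1,-1)$; you swapped them. Under your stated labels, your worked example would give $c_1=2$ rather than $c_3=2$.
\item All four patterns have sign product $-1$, since $(X\otimes X)(Y\otimes Y)(Z\otimes Z)=-I_4$, so your parenthetical remark is false.
\end{itemize}
Only the set of patterns matters, and the six cases are uniform. Any two distinct patterns agree in exactly one coordinate $k$, so adding the two eigenvalue-$1$ equations gives $c_k=\pm 2$. Subtracting within each of the two pairs gives independent equations $c_a\pm c_b=0$ on the remaining two coordinates, forcing $c_a=c_b=0$.
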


\begin{proof}
Let $R=M(h)$, and $\lambda>0$ be the common nonzero eigenvalue of $R$.
Since $\mathrm{rank}(R)=2$, we have $\operatorname{Tr}R=2\lambda$.
Also,
$\operatorname{Tr}R=\operatorname{Tr}(\operatorname{Tr}_1R)=2\mu$,
so $\lambda=\mu$.
By the spectral decomposition of Hermitian matrices, we have $R=U\mathrm{diag}(\lambda,\lambda,0,0)U^{\dagger}$ for some unitary matrix $U$.
Hence $R^2=\lambda R=\mu R$.
Let $W:=2R/\mu-I_4$.
It is Hermitian, satisfies $W^2=I_4$,
and has both partial traces zero.

Writing $W$ in $2\times2$ blocks, Hermitian symmetry and
$\operatorname{Tr}_1W=0$ give
\[
  W=\begin{pmatrix}
      D&C\\
      C^\dagger&-D
    \end{pmatrix},
  \qquad D=D^\dagger.
\]
The condition $\operatorname{Tr}_2W=0$ further gives
$\operatorname{Tr}D=\operatorname{Tr}C=0$.
Comparing blocks in $W^2=I_4$, we obtain
$D^2+CC^\dagger=D^2+C^\dagger C=I_2$ and $DC=CD$.
Therefore $C$ is normal and commutes with the Hermitian matrix $D$.

By simultaneous unitary diagonalization, there is a unitary matrix
$V$ such that
$V^\dagger DV=\operatorname{diag}(d,-d)$ and
$V^\dagger CV=\operatorname{diag}(z,-z)$,
where $d\in\mathbb R$ and $z\in\mathbb C$.
Here the opposite diagonal entries follow from the zero traces.
The identity $D^2+CC^\dagger=I_2$ gives $d^2+|z|^2=1$.

Define
\[
  A=\begin{pmatrix}d&z\\\overline z&-d\end{pmatrix},
  \qquad
  B=V\begin{pmatrix}1&0\\0&-1\end{pmatrix}V^\dagger.
\]
Both matrices are traceless and Hermitian.
Moreover, $B^2=I_2$ and
$A^2=(d^2+|z|^2)I_2=I_2$.
Since $D=dB$ and $C=zB$, the block expression for $W$
gives $W=A\otimes B$.
Substituting the definition of $W$ proves the claimed formula.
\end{proof}

\begin{lemma}\label{lm:realize-I-plus-AA}
Let $\mathcal F$ be a set of signatures. 
Suppose an arity 4 signature $h$ is realizable from $\mathcal F$ and
$M(h)=\lambda(I_4+A\otimes B)$, where $\lambda>0$
and $A,B\in\mathbb C^{2\times2}$ are traceless Hermitian matrices
satisfying $A^2=B^2=I_2$.
Then we can realize a signature $g$ with matrix
$M(g)=I_4+A\otimes A$.
\end{lemma}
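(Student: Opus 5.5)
We want $I_4+A\otimes A$ from $I_4+A\otimes B$. Since $A,B$ are traceless Hermitian involutions, each is unitarily diagonalizable with eigenvalues $\pm1$. Hence there is a unitary $U$ with $UBU^\dagger=A$; if $U$ were realizable as a binary signature, a binary extension on the second pair of variables would finish the proof. In general $U$ is not available, so the plan is to build the correction out of $h$ itself. The key observation is that contracting copies of $h$ multiplies the matrices $\frac{1}{2}(I_4+A\otimes B)$, which are projectors. Transposes and conjugates also appear, because $\overline h\in$ the realizable set when $\mathcal F$ is conjugate closed. Even without that assumption, $h$ is Hermitian, so $\overline{M(h)}=M(h)^{\tt T}$.

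\textbf{Step 1: two realizable operations.} I will identify the operations on a quaternary signature $g(a,b,c,d)$ with $M(g)_{ab,cd}$ that are realizable from $h$ alone.
\begin{itemize}
\item Connecting $c,d$ of one copy to $a,b$ of another realizes the matrix product.
\item Reordering variables realizes the partial transpose on one tensor factor. This reshapes $I_4+A\otimes B$ into $I_4+A^{\tt T}\otimes B$, or into $I_4+A\otimes B^{\tt T}$ with the factor ordering changed.
\item Reordering variables also realizes the swap of the two tensor factors, giving $I_4+B\otimes A$.
\end{itemize}

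\textbf{Step 2: a chain of such matrices.} The main construction is the product
\[(I_4+A\otimes B)(I_4+B^{\tt T}\otimes A^{\tt T})\cdots .\]
A cleaner route is the ``swap-and-contract'' gadget: connect variable $d$ of $h$ with variable $d$ of a copy of $h$ in which the two tensor factors are swapped. Under this contraction the second factors pair up through a trace, $\operatorname{Tr}(B\cdot B)=2$ and $\operatorname{Tr}(B)=0$. The Pauli expansion then kills every cross term, and the resulting matrix is proportional to $I_4+A\otimes A^{\tt T}$ or $I_4+A\otimes A$, depending on which indices are joined.

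Concretely, I will take $h(a,b,c,d)$ and $h(a',b',c',d')$ and join $b$ with $b'$ and $d$ with $d'$. Write $M(h)=\lambda\sum P\otimes Q$ with $(P,Q)\in\{(I,I),(A,B)\}$. The joined second factors contribute $\sum_{b,d}Q_{bd}Q'_{bd}=\operatorname{Tr}(QQ'^{\tt T})$. With $Q,Q'\in\{I,B\}$, this gives $2$ when $Q=Q'=I$. For $Q=Q'=B$ it gives $\operatorname{Tr}(BB^{\tt T})$, which is not the desired trace, so I should join the second copy through its transpose or conjugate, i.e.\ use $\overline h$. Since $\overline{M(h)}=\lambda(I+\overline A\otimes\overline B)$ and $\operatorname{Tr}(B\,\overline B^{\tt T})=\operatorname{Tr}(BB^\dagger)=\operatorname{Tr}(B^2)=2$, the mixed terms with a single $B$ vanish because $\operatorname{Tr}B=0$. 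The result is the matrix $2\lambda^2(I_4+A\otimes \overline A^{\tt T})=2\lambda^2(I_4+A\otimes A)$, using that $A$ is Hermitian. The realizability of $\overline h$ here comes from $\overline{M(h)}=M(h)^{\tt T}$ (Hermitian), which is simply a reordering of the variables of $h$. So no conjugate closure is needed.

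\textbf{Main obstacle.} The hard part is the index bookkeeping: checking which pairing of variables realizes $\operatorname{Tr}(QQ'^\dagger)$ rather than $\operatorname{Tr}(QQ')$ or $\operatorname{Tr}(Q)\operatorname{Tr}(Q')$, and that the surviving first factors combine as $A\otimes\overline{A}^{\tt T}$ in the right row/column order. Having done that, I rescale by $2\lambda^2>0$ to obtain $M(g)=I_4+A\otimes A$.
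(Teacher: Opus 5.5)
Your final construction is correct and is exactly the paper's gadget: because $M(h)$ is Hermitian, $\overline{h}(a,b,c,d)=h(c,d,a,b)$, so joining $b,d$ of $h$ to $b,d$ of $\overline{h}$ is the same as the paper's connection of $x_2,x_4$ of one copy of $h$ to $x_4,x_2$ of another. The cancellation is also the paper's: the cross terms vanish by $\operatorname{Tr}B=0$, and the surviving term uses $\operatorname{Tr}(BB^\dagger)=\operatorname{Tr}(B^2)=2$. The bookkeeping you deferred is immediate: with $h(a,b,c,d)$ and $\overline{h}(a',b,c',d)$, order the dangling variables as $(a,c',c,a')$ and use $\overline{A}_{a'c'}=A_{c'a'}$; this gives $M=2\lambda^2(I_4+A\otimes A)$ directly, so the product-chain and single-edge ideas earlier in Step 2 are not needed.
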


\begin{proof}
Connect variables $x_2,x_4$ of the first copy of $h$ to variables
$x_4,x_2$ of the second, respectively.
The resulting signature is
$g'(a,b,c,d)=\sum_{x,y\in\{0,1\}}h(a,x,c,y)h(b,y,d,x)$.
Since $\operatorname{Tr}B=0$ and
$\operatorname{Tr}(B^2)=2$, direct expansion gives
\[
\begin{aligned}
g'(a,b,c,d)
&=\lambda^2\sum_{x,y\in\{0,1\}}
  (\delta_{ac}\delta_{xy}+A_{ac}B_{xy})
  (\delta_{bd}\delta_{yx}+A_{bd}B_{yx})\\
&=2\lambda^2
  (\delta_{ac}\delta_{bd}+A_{ac}A_{bd}).
\end{aligned}
\]
Thus $M(g')=2\lambda^2(I_4+A\otimes A)$.
Normalizing by $2\lambda^2$ proves the lemma.
\end{proof}

\begin{lemma}\label{lm: I+AA implies hard}
    Let $\mathcal{F}$ be a conjugate closed set of signatures.
    If from $\mathcal{F}$ we can realize a 4-ary signature $g$ with matrix $M(g)=I_4+A\otimes A$, with $A=A^{\dagger}, A^2=I_2$ and $\mathrm{Tr}(A)=0$, then $\Holant(\mathcal{F})$ is \#P-hard.
\end{lemma}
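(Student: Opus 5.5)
The plan is to put $g$ into a canonical form using that $A$ is a traceless Hermitian involution, and then to reduce a hard $\#\mathrm{CSP}_2$ problem to $\Holant(\mathcal F)$ through gadgets built from copies of $g$.

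\emph{Step 1 (normal form of $g$).} Since $A=A^\dagger$, $A^2=I_2$ and $\operatorname{Tr}A=0$, the eigenvalues of $A$ are $\pm1$. So $A=UZU^\dagger$ for some unitary $U$, and
\[
M(g)=(U\otimes U)(I_4+Z\otimes Z)(U\otimes U)^\dagger .
\]
The middle factor $I_4+Z\otimes Z$ has nonzero entries only at $(a,b,c,d)$ with $a=c$, $b=d$ and $a=b$. Hence it is the matrix of $2\cdot(=_4)$. Using $M_{S}(Tf)=T^{\otimes 2}M_S(f)(T^{\tt T})^{\otimes 2}$ and $U^\dagger=\overline U^{\tt T}$, this reads
\[
g=2\,(U\otimes U\otimes\overline U\otimes\overline U)(=_4).
\]
So $g$ is an equality whose four legs are decorated by $U$ or $\overline U$. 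I call the first two the $U$-legs and the last two the $\overline U$-legs.

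\emph{Step 2 (longer equalities).} If a $U$-leg of one copy is joined by an edge to a $\overline U$-leg of another copy, the edge contributes $\sum_x\overline U_{x,p}U_{x,q}=(U^\dagger U)_{pq}=\delta_{pq}$. The two decorations therefore cancel and the two underlying equalities fuse. Chaining copies of $g$ in this way realizes
\[
E_{2k}:=(U^{\otimes k}\otimes\overline U^{\otimes k})(=_{2k})\qquad\text{for every }k\ge 2 .
\]
Now attach a $\overline U$-leg of some $E_{2k}$ to an input of $f\in\mathcal F$. On that variable this applies $U^\dagger=U^{-1}$, so on the CSP side $f$ is seen as $U^{-1}f$. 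Joining two $\overline U$-legs by an edge yields the constraint $U^{-1}(=_2)$. This gives the intended simulation of $\#\mathrm{CSP}_2\bigl(U^{-1}(\mathcal F\cup\{=_2\})\bigr)$ inside $\Holant(\mathcal F,g)$. Here the variables become the $E$'s, and constraints $f$ and $=_2$ are attached only through $\overline U$-legs. Conjugate closure enters in that attaching a $U$-leg to $\overline f\in\mathcal F$ gives $U^{\tt T}\overline f=\overline{U^{-1}f}$. That constraint is also available, although it is not needed. Once the simulation is in place, \cref{lm: CSP with =2} with $T=U^{-1}$ gives \#P-hardness, because $\mathcal F$ fails \eqref{cond: tractable classes}.

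\emph{Step 3 (main obstacle: the unused $U$-legs).} A variable of degree $d$ in the CSP instance needs $d$ legs of $\overline U$ type. But $E_{2d}$ also carries $d$ legs of $U$ type, and these must be closed off without introducing a nontrivial weight. Joining two $U$-legs to each other contributes the binary $U^{\tt T}U$ on the underlying equality. This produces the unary weight $(U^{\tt T}U)_{vv}$, which is not constant unless $U^{\tt T}U$ is diagonal up to phase, that is, unless $U$ is essentially real or diagonal.

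I would first try to cap each surplus $U$-leg with a $\overline U$-leg coming from a further copy of $g$ or of its conjugate $\overline g\in\mathcal F$. Recall that $\overline g$ is again of the same form, with the roles of $U$ and $\overline U$ swapped. The aim is a cap gadget whose net effect on $=_{2k}$ is a constant. If no such cap exists, the fallback is to accept these legs and absorb them into the constraint language. Closing a pair of $U$-legs with $U^{\tt T}U$ is equivalent to adding the unary signature $u$ with $u(v)=(U^{\tt T}U)_{vv}$. Then I would verify, using \cref{thm: CSP2} directly instead of \cref{lm: CSP with =2}, that $\#\mathrm{CSP}_2(U^{-1}(\mathcal F\cup\{=_2\})\cup\{u\})$ is still \#P-hard. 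Adding a constraint can only increase complexity, so hardness of the smaller set suffices.

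With either resolution, the reduction chain is $\#\mathrm{CSP}_2(\cdots)\le_T\Holant(\mathcal F,g)\le_T\Holant(\mathcal F)$, which proves the lemma.
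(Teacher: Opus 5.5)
Your Steps 1 and 2 are correct, but Step 3 misidentifies the obstacle and leaves the real one open.

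The weight $(U^{\tt T}U)_{vv}$ is never genuinely non-constant. The matrix $U^{\tt T}U$ is symmetric and unitary, so its diagonal entries $q_1=\mathbf u_1^{\tt T}\mathbf u_1$ and $q_2=\mathbf u_2^{\tt T}\mathbf u_2$ have equal modulus. Rephasing the columns of $U$ changes neither $A$ nor $g$, and it makes $q_1=q_2=r\ge 0$. Every variable of a $\#\mathrm{CSP}_2$ instance has even degree, so the surplus $U$-legs of $E_{2d}$ can then be closed in pairs at constant cost $r$ per pair. For $r>0$ your plan therefore goes through via \cref{lm: CSP with =2}, and this is essentially the paper's argument in that case.

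Your fallback, as written, is not sound in any event. A weight forced onto every variable means you are simulating a restricted problem, not $\#\mathrm{CSP}_2(\cdots\cup\{u\})$. So ``adding a constraint only increases complexity'' points the wrong way.

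The gap is the case $r=0$. This occurs exactly when the eigenvectors of $A$ are $\frac1{\sqrt2}(1,\pm\mathfrak i)^{\tt T}$, i.e.\ $A=\pm Y$. There, pairing two $U$-legs contributes $(U^{\tt T}U)_{vv}=0$ for both values of $v$. So $u\equiv 0$ and every closed-off variable gadget vanishes.

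No cap built from $g$ and $\overline g$ can rescue this. In this case $U$ is $K$ up to column phases and order, so $K^{-1}g=2(\neq_4)$ and likewise $\widehat{\overline g}=2(\neq_4)$. Every gadget built from them is therefore an $\eo$ signature in the $\holant{\neq_2}{\cdot}$ picture. The variable gadget you need, $\overline U^{\otimes 2n}(=_{2n})$, becomes $=_{2n}$ (up to diagonal weights) after $K^{-1}$, and that is not $\eo$.

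The paper handles this case by a different route. It observes $K^{-1}g=2(\neq_4)$ and applies \cref{lem: In conjugate closed setting disequality 4 gives hardness}, whose proof rests on the $\ceo$ dichotomy, the odd-arity dichotomy, and conjugate closure. Without such an argument your proof does not cover $M(g)=I_4+Y\otimes Y$.
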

\begin{proof}
    By assumption, we may write the spectral decomposition of $A$ as $A=U\mathrm{diag}(1,-1)U^\dagger$, where $U=[\mathbf{u}_1, \mathbf{u}_2]$ is a 2 by 2 unitary matrix.
    Consider $U^{\tt T}U=\left[\begin{matrix}
        \mathbf{u}_1^{\tt T}\mathbf{u}_1 & \mathbf{u}_1^{\tt T}\mathbf{u}_2\\
        \mathbf{u}_2^{\tt T}\mathbf{u}_1 & \mathbf{u}_2^{\tt T}\mathbf{u}_2
    \end{matrix}\right]=\left[\begin{matrix}
        q_1 & z\\
        z & q_2
    \end{matrix}\right],$
    where $q_1=\mathbf{u}_1^{\tt T}\mathbf{u}_1,q_2=\mathbf{u}_2^{\tt T}\mathbf{u}_2$ and $z=\mathbf{u}_1^{\tt T}\mathbf{u}_2=\mathbf{u}_2^{\tt T}\mathbf{u}_1.$
    Since $U$ is unitary, $U^{\tt T}U$ is also unitary.
    So $|q_1|^2+|z|^2=|q_2|^2+|z|^2$, $|q_1|=|q_2|.$
    We may choose a complex phase of the eigenvectors $\mathbf{u}_1$ and $\mathbf{u}_2$, such that $q_1=\mathbf{u}_1^{\tt T}\mathbf{u}_1=\mathbf{u}_2^{\tt T}\mathbf{u}_2=q_2=r\ge 0$.
    \begin{itemize}
        \item If $r=0$, then $\mathbf{u}_1^{\tt T}\mathbf{u}_1=\mathbf{u}_2^{\tt T}\mathbf{u}_2=0$.
        So $\{\mathbf{u_1},\mathbf{u_2}\}=\{\frac{1}{\sqrt{2}}[1, \mathfrak{i}]^{\tt T},\frac{1}{\sqrt{2}}[1,-\mathfrak{i}]^{\tt T}\}$.
        So $A=U\mathrm{diag}(1,-1)U^\dagger=\pm Y$, where $Y=\left[\begin{matrix}
            0 & -\mathfrak{i}\\
            \mathfrak{i} & 0
        \end{matrix}\right]$.
        So $M(g)=I_4+Y\otimes Y.$
        Direct computation shows
        \[
        (K^{-1})^{\otimes 2}M(g)((K^{-1})^{\tt T})^{\otimes 2}=\begin{bmatrix}
            0& 0 & 0 & 2\\
            0& 0 & 0 & 0\\
            0& 0 & 0 & 0\\
            2& 0 & 0 & 0\\
        \end{bmatrix}.
        \]
        So $K^{-1}g=2(\neq_4).$
        By a $K$-holographic transformation, we have $\Holant(\neq_2\mid \widehat{\mathcal{F}},\neq_4)\le_T \Holant(\mathcal{F},g)\le_T\Holant(\mathcal{F}).$
        By \cref{lem: In conjugate closed setting disequality 4 gives hardness}, $\Holant(\neq_2\mid \widehat{\mathcal{F}},\neq_4)$ is \#P-hard.
        It follows that $\Holant(\mathcal{F})$ is \#P-hard.
        
        \item Suppose $r>0.$
        Consider merging variables $x_1$ and $x_2$ of $g$ and obtain the signature $b=\partial_{12}g.$
        Since $g(x_1,x_2,x_3,x_4)=\delta_{x_1,x_3}\delta_{x_2,x_4}+A_{x_1,x_3}A_{x_2,x_4}$, we have 
        $b(x_3,x_4)=\partial_{12}g(x_3,x_4)=\sum_{t\in\{0,1\}}g(t,t,x_3,x_4)=\sum_{t\in\{0,1\}}(\delta_{t,x_3}\delta_{t,x_4}+A_{t,x_3}A_{t,x_4})=\delta_{x_3,x_4}+(A^{\tt T}A)_{x_3,x_4}.$
        So $M(b)=I_2+A^{\tt T}A.$
        Let $B=M(b).$
        We have
        \begin{equation}
            \begin{aligned}
                U^{\tt T}BU
                =&U^{\tt T}U+U^{\tt T}A^{\tt T}AU\\
                =&U^{\tt T}U+U^{\tt T}\overline{U}\mathrm{diag}(1,-1)U^{\tt T}U\mathrm{diag}(1,-1)U^{\dagger}\\
                =& \left[\begin{matrix}
                r & z\\
                z & r
                \end{matrix}\right]
                +\mathrm{diag}(1,-1)\left[\begin{matrix}
                r & z\\
                z & r
                \end{matrix}\right]\mathrm{diag}(1,-1)\\
                =&\left[\begin{matrix}
                2r & 0\\
                0 & 2r
                \end{matrix}\right].
            \end{aligned}
        \end{equation}
        Thus, $B=2r\overline{U}U^\dagger.$
        Let $c=\frac{b}{2r}$ and $C=M(c)$.
        Since $\mathcal{F}$ is conjugate closed, $\overline{c}$ is also realizable, with matrix $\overline{C}=UU^{\tt T}.$
        Connecting $c$ and $\overline{c}$ using $=_2$, we can realize the signature with matrix $C\overline{C}=\overline{U} U^\dagger UU^{\tt T}=I_2$, i.e., we can realize $=_2$ on RHS.
        Now connect two copies of $\overline{c}$ with variables $x_3$ and $x_4$ of $g$ respectively.
        This will give the 4-ary signature $e=2\cdot U^{\otimes}(=_4)$.
        To see this, first notice that $M(g)=I_2\otimes I_2+A\otimes A=(\mathbf{u}_1\mathbf{u}_1^\dagger+\mathbf{u}_2\mathbf{u}_2^\dagger)^{\otimes 2}+(\mathbf{u}_1\mathbf{u}_1^\dagger-\mathbf{u}_2\mathbf{u}_2^\dagger)^{\otimes 2}=2\sum_{s=0}^1\mathbf{u}_s\otimes \mathbf{u}_s\otimes \overline{\mathbf{u}_s}\otimes \overline{\mathbf{u}_s}$.
        Then $e=(I_2\otimes I_2\otimes \overline{C}\otimes\overline{C})g=2(\mathbf{u}_0^{\otimes 4}+\mathbf{u}_1^{\otimes 4})=2\cdot U^{\otimes}(=_4).$
        On the other hand, $c$ is realizable on LHS by connecting two copies of $=_2$ on LHS with one copy of $c$ on RHS.
        Therefore,
        \[
        \Holant(c\mid \mathcal{F}\cup\{=_2,e\})\le_T\Holant(\mathcal{F}).
        \]
        Consider the holographic transformation by $U$.
        We have
        \[
        \Holant(=_2\mid U^\dagger (\mathcal{F}\cup\{=_2\}),=_4)
        \le_T
        \Holant(c\mid \mathcal{F}\cup\{=_2,e\})\le_T\Holant(\mathcal{F}).
        \]
        By \cref{lm: CSP2 to holant =4}, $\mathrm{CSP}_2({U^\dagger}(\mathcal{F}\cup\{=_2\}))\le_T\Holant(=_2\mid U^\dagger (\mathcal{F}\cup\{=_2\}),=_4).$
        By \cref{lm: CSP with =2}, $\mathrm{CSP}_2({U^\dagger}(\mathcal{F}\cup\{=_2\}))$ is \#P-hard.
        It follows that $\Holant(\mathcal{F})$ is \#P-hard.
    \end{itemize}
    
\end{proof}

\begin{lemma}\label{lm: APL-trans rank 2}
    Let $f\in \mathcal{F}$ be a signature of arity $2n\ge 4$.
    If there exists a pair of indices $\{i,j\}$ such that (1) $\mathfrak{m}_{ij}f$ is $\mathscr{A}$- or $\mathscr{P}$- or $\mathscr{L}$-transformable; (2) $\mathfrak{m}_{ij}f$ is irreducible and satisfies {\sc 1st-Orth}; (3) $\mathrm{rank}(M(\mathfrak{m}_{ij}f))=2$, then $\Holant(\mathcal{F})$ is \#P-hard.
\end{lemma}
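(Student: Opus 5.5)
The plan is to chain the four lemmas just established. Set $h=\mathfrak m_{ij}f$ and $R=M(h)=M_{ij}(f)M_{ij}(f)^\dagger$; the conjugate closure of $\mathcal F$ makes $h$ realizable by the mating gadget. By hypothesis (2), $h$ is irreducible and satisfies {\sc 1st-Orth}, and by hypothesis (1) it is $\mathscr A$-, $\mathscr P$- or $\mathscr L$-transformable. So \cref{lm: APL-transformable singular value} applies: $R$ has equal nonzero singular values. Since $R$ is Hermitian positive semi-definite, these singular values are its eigenvalues, and by (3) exactly two of them are nonzero. Hence $R$ is positive semi-definite of rank two with two equal nonzero eigenvalues.

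Next I will supply the partial-trace hypothesis of \cref{lm:rank-two-matrix-form}. \cref{lem:mating-partial-traces} gives $\operatorname{Tr}_1 R=\operatorname{Tr}_2 R=\mu I_2$, but it requires $f$ itself to satisfy {\sc 1st-Orth}. I will secure this first. If $f$ fails {\sc 1st-Orth}, then $f$ is nonzero, because $R\neq0$. In that case \cref{lm: not 1st-orth is hard} already gives \#P-hardness, since we are working under the standing assumption that $\mathcal F$ is conjugate closed and fails \eqref{cond: tractable classes}. So we may assume $f\in$ {\sc 1st-Orth}.

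One point needs checking here: the index convention of $\operatorname{Tr}_k$ must match the ordering of $M(h)=M_{ij}(f)M_{ij}(f)^\dagger$, with rows $(x_i,x_j)$ and columns $(x_i',x_j')$. The proof of \cref{lem:mating-partial-traces} computes exactly this, so I expect no difficulty. With that in place, \cref{lm:rank-two-matrix-form} gives $R=\frac{\mu}{2}(I_4+A\otimes B)$, where $A,B$ are traceless Hermitian involutions.

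Finally, \cref{lm:realize-I-plus-AA}, applied with $\lambda=\mu/2>0$, realizes a quaternary signature $g$ with $M(g)=I_4+A\otimes A$. Here $A=A^\dagger$, $A^2=I_2$ and $\operatorname{Tr}A=0$, so \cref{lm: I+AA implies hard} yields that $\Holant(\mathcal F)$ is \#P-hard.

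The main obstacle is the bookkeeping around the matrix conventions. The matrix $M(h)$ used in \cref{lm:realize-I-plus-AA} is indexed as $M(g)_{ab,cd}=g(a,b,c,d)$, and I must confirm that this matches the row/column split of $\mathfrak m_{ij}f$, namely the first two variables as rows and the primed copies as columns. If the convention differs, I will permute the variables of $h$ accordingly. Variable permutation is free in Holant, so it costs nothing, but it must be done consistently so that the gadget in \cref{lm:realize-I-plus-AA} produces $A\otimes A$ rather than a transposed variant.
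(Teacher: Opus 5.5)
Your proposal is correct and follows the paper's proof essentially step for step. You reduce to the case $f\in$ {\sc 1st-Orth} via \cref{lm: not 1st-orth is hard}, get $\operatorname{Tr}_1 R=\operatorname{Tr}_2 R=\mu I_2$ from \cref{lem:mating-partial-traces}, turn the equal nonzero singular values from \cref{lm: APL-transformable singular value} into equal eigenvalues by positive semi-definiteness, and then chain \cref{lm:rank-two-matrix-form}, \cref{lm:realize-I-plus-AA} and \cref{lm: I+AA implies hard}, exactly as the paper does. The only differences are cosmetic: the order in which you dispose of the {\sc 1st-Orth} case, and your explicit remark that $f\neq 0$ because $R\neq 0$, which the paper leaves implicit.
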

\begin{proof}
    If $f$ doesn't satisfy {\sc 1st-Orth}, then $\Holant(\mathcal{F})$ is \#P-hard by \cref{lm: not 1st-orth is hard}.
    Now assume that $f$ satisfies {\sc 1st-Orth}.
    By \cref{lem:mating-partial-traces}, there exists $\mu>0$ such that $\mathrm{Tr}_1(\mathfrak{m}_{ij}f)=\mathrm{Tr}_2(\mathfrak{m}_{ij}f)=\mu I_2.$
    
    We realize $h:=\mathfrak{m}_{ij}f$ by mating.
    Since $h$ is irreducible, satisfies {\sc 1st-Orth}, and is $\mathscr{A}$- or $\mathscr{P}$- or $\mathscr{L}$-transformable, 
    \cref{lm: APL-transformable singular value} applies.
    Also, $\mathrm{rank}(M(h))=2$ by assumption, so $M(h)$ has two equal non-zero singular values.
    Notice $M(h)=M_{ij}(f)M_{ij}(f)^\dagger$ is Hermitian and positive semi-definite, so it has two equal non-zero eigenvalues.
    Then \cref{lm:rank-two-matrix-form} applies.
    We have $M(h)=\frac{\mu}{2}(I_4+A\otimes B)$, where $A,B\in \C^{2\times 2}$ are traceless Hermitian matrices with $A^2=B^2=I_2.$
    By \cref{lm:realize-I-plus-AA}, we can realize an arity 4 signature $g$ with matrix $M(g)=I_4+A\otimes A.$
    By \cref{lm: I+AA implies hard}, $\Holant(\mathcal{F})$ is \#P-hard.
    The lemma is proved.
\end{proof}

\begin{lemma}\label{lm: mating is APL-trans implies 2nd-orth}
    Let $f\in \mathcal{\mathcal{F}}$ be a signature of arity $2n\ge 4$.
    If there exists a pair of indices $\{i,j\}$ such that $\mathfrak{m}_{ij}f$ is $\mathscr{A}$- or $\mathscr{P}$- or $\mathscr{L}$-transformable, then either $f$ is reducible, or $\Holant(\mathcal{F})$ is \#P-hard, or $M(\mathfrak{m}_{ij}f)=\lambda_{ij}I_4$ for some $\lambda_{ij}>0.$
\end{lemma}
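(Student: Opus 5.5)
The plan is to assemble the preceding lemmas through a case analysis on $h:=\mathfrak m_{ij}f$, where the pair $\{i,j\}$ is the one for which $h$ is $\mathscr A$-, $\mathscr P$- or $\mathscr L$-transformable.

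\emph{Step 1: reduce to irreducible $h$ satisfying {\sc 1st-Orth}.}
If $f$ fails {\sc 1st-Orth}, then $\Holant(\mathcal F)$ is \#P-hard by \cref{lm: not 1st-orth is hard}. Likewise, if the realizable signature $h$ fails {\sc 1st-Orth}, the same lemma gives hardness, since $h$ is nonzero whenever $f$ is. So we may assume both $f$ and $h$ satisfy {\sc 1st-Orth}.

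If $h$ is reducible, factor it. A unary factor is realizable by \cref{lm: decomposition}, and then \cref{lm: odd holant with conjugation} gives hardness. Otherwise $h=b_1\otimes b_2$ is a tensor of two binaries, so $h\in\langle\mathscr T\rangle$. In that case \cref{lm: 2nd-orth mating reducible} yields one of three outcomes: hardness, $f$ reducible, or $M(h)=\lambda_{ij}I_4$. Each of these is among the desired conclusions.

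\emph{Step 2: $h$ irreducible with {\sc 1st-Orth}.}
Now \cref{lm: APL-transformable singular value} applies. Since $M(h)=M_{ij}(f)M_{ij}(f)^\dagger$, we have $\mathrm{rank}(M(h))\in\{2,4\}$.
\begin{itemize}
\item If the rank is $4$, then \cref{lm: APL-trans rank 4} gives $M(h)=\lambda_{ij}I_4$ with $\lambda_{ij}>0$.
\item If the rank is $2$, then \cref{lm: APL-trans rank 2} gives that $\Holant(\mathcal F)$ is \#P-hard.
\end{itemize}
This covers all cases.

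The only delicate point is Step 1: making sure that every degenerate situation (an $h$ violating {\sc 1st-Orth}, or a reducible $h$) is caught by an earlier lemma whose hypotheses are actually met. Two checks matter here. First, the unary-factor case needs conjugate closure, which is a standing assumption. Second, \cref{lm: 2nd-orth mating reducible} requires $h\in\langle\mathscr T\rangle$, and this is automatic once every prime factor of $h$ has arity at most $2$. Beyond that, the argument is a direct combination of the lemmas already established.
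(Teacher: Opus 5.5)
Your proposal is correct and follows essentially the same route as the paper. Both dispose of a unary factor of $h$ via the odd-arity lemma and of a two-binary factorization via the $\langle\mathscr T\rangle$ lemma, then, for irreducible $h$ satisfying {\sc 1st-Orth}, use the singular-value lemma to force rank $2$ or $4$ and conclude with the rank-$2$ hardness lemma and the rank-$4$ identity lemma; your extra upfront check that $f$ satisfies {\sc 1st-Orth} (with $f\equiv 0$ covered by ``$f$ is reducible'') is harmless, since the paper performs that check inside the rank-$2$ lemma.
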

\begin{proof}
    If $\mathfrak{m}_{ij}f$ has a unary factor, $\Holant(\mathcal{F})$ is \#P-hard by \cref{lm: odd holant with conjugation}.
    If $\mathfrak{m}_{ij}f$ is a tensor product or two binary signatures, the lemma is proved by \cref{lm: 2nd-orth mating reducible}.
    Next we assume that $\mathfrak{m}_{ij}f$ is irreducible.
    
    If $\mathfrak{m}_{ij}f$ doesn't satisfy {\sc 1st-Orth}, then $\Holant(\mathcal{F})$ is \#P-hard by \cref{lm: not 1st-orth is hard}.
    Now we assume that $\mathfrak{m}_{ij}f$ satisfies {\sc 1st-Orth}.
    By \cref{lm: APL-transformable singular value}, $\mathrm{rank}(M(\mathfrak{m}_{ij}f))=2$ or $4$.
    If $\mathrm{rank}(M(\mathfrak{m}_{ij}f))=2$, then $\Holant(\mathcal{F})$ is \#P-hard by \cref{lm: APL-trans rank 2}.
    If $\mathrm{rank}(M(\mathfrak{m}_{ij}f))=4$, then $M(\mathfrak{m}_{ij}f)=\lambda_{ij}I_4$ for some $\lambda_{ij}>0$ by \cref{lm: APL-trans rank 4}.
\end{proof}

\subsection{Weighted Matching and $\mathscr{H}$ Type}

We next show that the case $\mathfrak{m}_{ij}f\in \braket{K\mathcal{M}}$ or $\mathfrak{m}_{ij}f\in \braket{KX\mathcal{M}}$ cannot happen.

\begin{lemma}\label{lm: mating is self inversive}
    Let $h=\mathfrak{m}_{ij}f$.
    Then $\widehat{h}(a,b,c,d)=\overline{\widehat{h}(\overline{c},\overline{d},\overline{a},\overline{b})}$, for every $a,b,c,d\in \{0,1\}.$
\end{lemma}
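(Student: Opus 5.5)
The plan is to start from the matrix identity $M(h)=M_{ij}(f)M_{ij}(f)^\dagger$, recorded in the mating discussion. This says $h$ is Hermitian as a $4\times4$ matrix with rows indexed by $(x_i,x_j)$ and columns by $(x_i',x_j')$. Entrywise, $h(a,b,c,d)=\overline{h(c,d,a,b)}$ for all $a,b,c,d\in\{0,1\}$. Equivalently, the signature $h^\ast$ obtained from $\overline h$ by swapping the variable pair $(x_1,x_2)$ with $(x_3,x_4)$ equals $h$ itself. The statement to prove is the image of this Hermitian symmetry under the $K$-transformation.

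To transfer it, I will compute $\widehat{h^\ast}$ in two ways. Permuting variables commutes with the tensor-power holographic transformation $(K^{-1})^{\otimes4}$. Therefore $\widehat{h^\ast}$ is obtained from $\widehat{\overline h}$ by the same swap of variable pairs. By \cref{lm: conjugation and K-trans}, $\widehat{\overline h}(\alpha)=\overline{\widehat h(\overline\alpha)}$. Combining these gives
\[
\widehat{h^\ast}(a,b,c,d)=\widehat{\overline h}(c,d,a,b)=\overline{\widehat h(\overline c,\overline d,\overline a,\overline b)}.
\]
Since $h^\ast=h$, the left side equals $\widehat h(a,b,c,d)$, which is exactly the claim.

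The only step needing care is the bookkeeping of variable order. In the paper's convention $M(g)_{ab,cd}=g(a,b,c,d)$, the first two variables are the dangling ones of $f$ and the last two are those of $\overline f$. I need to check that Hermitian symmetry corresponds to swapping positions $\{1,2\}\leftrightarrow\{3,4\}$ rather than some other pairing. I also need to check that $K^{-1}$ acts on each variable independently, so that it commutes with this permutation. Both checks are routine. I expect no real obstacle beyond keeping these conventions consistent.
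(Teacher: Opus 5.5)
Your proof is correct, but it takes a somewhat different route from the paper's.

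The paper stays in the $K$-Holant picture throughout. It views $\widehat h$ as the gadget joining $\widehat f$ and $\widehat{\overline f}$ by $\neq_2$ edges and reflects that gadget so the two copies exchange roles. It then applies \cref{lm: conjugation and K-trans} to the constituent signature $f$, implicitly using that $\neq_2$ is symmetric and invariant under complementing both endpoints.

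You reverse the order. You first obtain the Hermitian symmetry $h(a,b,c,d)=\overline{h(c,d,a,b)}$ in the standard basis, directly from $M(h)=M_{ij}(f)M_{ij}(f)^\dagger$. You then transport it through $(K^{-1})^{\otimes 4}$ by applying \cref{lm: conjugation and K-trans} to $h$ itself, using that a uniform tensor-power transformation commutes with permuting variables.

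Your version buys a crisp two-line computation with no gadget-reflection bookkeeping. It also gives a slightly more general statement: the identity holds for every arity-$4$ signature whose $(x_1x_2\mid x_3x_4)$ flattening is Hermitian, not only for matings. The paper's argument instead works natively with the $\neq_2$-connected mating gadget $\widehat{\mathfrak m}_{ij}\widehat f$, which is the form in which the lemma is used afterwards.
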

\begin{proof}
    
    Recall in the mating gadget $\widehat{\mathfrak{m}}_{ij}\widehat{f}$, we connect corresponding variables of $\widehat{f}$ and $\widehat{\overline{f}}$ using $\neq_2.$
    Assign $a,b$ to the two dangling edges on LHS, and $c,d$ to the two dangling edges on RHS.
    Now reflect the gadget by connecting $\widehat{\overline{f}}$ and $\widehat{f}$ using $\neq_2$, and flip the inputs on all dangling edges.
    The inputs change from $(a,b,c,d)$ to $(\overline{c},\overline{d},\overline{a},\overline{b}).$
    Since $\widehat{\overline{f}}(x)=\overline{\widehat{f}(\overline{x})}$ by \cref{lm: conjugation and K-trans},
    in the reversed gadget, the evaluation of $\widehat{\overline{f}}(x)$ on LHS equals to the conjugation of $\widehat{f}(\overline{x})$, while the evaluation of $\widehat{f}(x)$ on RHS equals to the conjugation of $\widehat{\overline{f}}(x)$.
    That is, $\widehat{h}(a,b,c,d)=\overline{\widehat{h}(\overline{c},\overline{d},\overline{a},\overline{b})}$, for every $a,b,c,d\in \{0,1\}.$
\end{proof}

\begin{lemma}\label{lm: 2nd-Orth M and XM cannot happen}
    Let $f\in \mathcal{F}$ be a signature of arity $2n\ge 4$.
    For an arbitrary pair of indices $\{i,j\}$, if $\mathfrak{m}_{ij}f$ is irreducible, 
    then $\mathfrak{m}_{ij}f\notin \braket{K\mathcal{M}}$ and $\mathfrak{m}_{ij}f\notin \braket{KX\mathcal{M}}$.
\end{lemma}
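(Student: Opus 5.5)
We work in the $K$-Holant picture and use the symmetry of \cref{lm: mating is self inversive}. Let $h=\mathfrak m_{ij}f$. Suppose $h\in\braket{K\mathcal M}$; the case $\braket{KX\mathcal M}$ is symmetric, via complementing all inputs. Then $\widehat h=K^{-1}h\in\braket{\mathcal M}$. Since $h$ is irreducible, so is $\widehat h$, because a holographic transformation preserves factorization. Hence $\widehat h$ is a nonzero multiple of a single irreducible signature in $\mathcal M$. That is, $\widehat h(\sigma)=0$ whenever $\operatorname{wt}(\sigma)>1$, so $\mathscr S(\widehat h)$ consists of strings of weight at most $1$.

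Now apply \cref{lm: mating is self inversive}: $\widehat h(a,b,c,d)=\overline{\widehat h(\overline c,\overline d,\overline a,\overline b)}$. The map $(a,b,c,d)\mapsto(\overline c,\overline d,\overline a,\overline b)$ sends a string of weight $w$ to one of weight $4-w$. Therefore $\mathscr S(\widehat h)$ is closed under a weight-complementing bijection. Every support point then has weight $w\le1$ and also $4-w\le1$, which is impossible. So $\mathscr S(\widehat h)=\emptyset$, i.e.\ $\widehat h\equiv0$. This contradicts irreducibility, since irreducible signatures are nonzero.

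For the $\braket{KX\mathcal M}$ case, $\widehat h\in\braket{X\mathcal M}$ means the support lies in weight at least $3$. The same involution forces weight at most $1$, so again $\widehat h\equiv 0$.

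One point needs care: deducing from $h\in\braket{K\mathcal M}$ and irreducibility that $\widehat h$ itself is supported on weight at most $1$. Membership in the tensor closure $\braket{\mathcal M}$ allows tensor products of weighted-matching signatures, and such a product can have support of weight greater than $1$. Irreducibility removes this: by UPF (\cref{unique}), an irreducible element of $\braket{\mathcal M}$ is a scalar multiple of one factor in $\mathcal M$. The paper treats $\braket{K\mathcal M}$ as $\braket{\mathcal M}$ transformed by $K$, so the needed identification is $K^{-1}\braket{K\mathcal M}=\braket{\mathcal M}$. I expect this bookkeeping, rather than the parity argument, to be the only subtle step.

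As an alternative check, $M(h)=M_{ij}(f)M_{ij}(f)^\dagger\neq0$ is Hermitian. This also rules out the degenerate cases, but the weight argument above already suffices.
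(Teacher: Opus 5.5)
Your proof is correct and follows the paper's argument: pass to $\widehat h=K^{-1}h$, use irreducibility to place $\widehat h$ in $\mathcal M$ or $X\mathcal M$, and then use the involution of \cref{lm: mating is self inversive}, which sends weight $w$ to $4-w$, to force $\widehat h\equiv 0$. Your formulation, that $w\le 1$ and $4-w\le 1$ cannot both hold, is slightly more complete than the paper's. The paper only discusses support points of weight $1$ and $3$ and leaves the weight-$0$ and weight-$4$ cases implicit.
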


\begin{proof}
    Suppose for a contradiction that $\mathfrak{m}_{ij}f\in \braket{K\mathcal{M}}$ or $\mathfrak{m}_{ij}f\in \braket{KX\mathcal{M}}$.
    Let $h=\mathfrak{m}_{ij}f$.
    Then $\widehat{h}\in \braket{\mathcal{M}}$ or $\widehat{h}\in \braket{X\mathcal{M}}.$
    Since $h$ is irreducible, $\widehat{h}$ is also irreducible.
    So $\widehat{h}\in \mathcal{M}$ or $\widehat{h}\in X\mathcal{M}.$
    Thus $\widehat{h}$ is supported on Hamming weight at most 1 or at least 3 by the definition of $\mathcal{M}$ and $X\mathcal{M}$.

    If $\mathscr{S}(\widehat{h})$ contains an element of Hamming weight 1 (resp. 3), then by \cref{lm: mating is self inversive}, $\mathscr{S}(\widehat{h})$ also contains an element of Hamming weight 3 (resp. 1).
    This contradicts $\widehat{h}\in {\mathcal{M}}$ or $\widehat{h}\in {X\mathcal{M}}.$
    Therefore, $\widehat{h}=0$.
    This contradicts $\widehat{h}$ is irreducible.
\end{proof}

Finally, we deal with the case that some $\mathfrak{m}_{ij}f\in \mathscr{H}.$

\begin{lemma}\label{lm: 2nd-orth H type}
    Let $f\in \mathcal{F}$ be a signature of arity $2n\ge 4$.
    For an arbitrary pair of indices $\{i,j\}$, if $\mathfrak{m}_{ij}f\in \mathscr{H}$, then either $f$ is reducible, or $\Holant(\mathcal{F})$ is \#P-hard, or $M(\mathfrak{m}_{ij}f)=\lambda_{ij}f$ for some $\lambda_{ij}>0.$
\end{lemma}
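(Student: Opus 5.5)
Set $h=\mathfrak m_{ij}f$ and $R=M(h)=M_{ij}(f)M_{ij}(f)^\dagger$, which is Hermitian positive semidefinite. (The conclusion "$M(\mathfrak m_{ij}f)=\lambda_{ij}f$" is read as $\lambda_{ij}I_4$.)

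\textbf{Reductions.} As in \cref{lm: mating is APL-trans implies 2nd-orth}, I first dispose of the easy cases:
\begin{itemize}
\item If $h$ has a unary factor, hardness follows from \cref{lm: decomposition} and \cref{lm: odd holant with conjugation}.
\item If $h$ is a tensor product of two binaries, \cref{lm: 2nd-orth mating reducible} applies.
\item If $h$ or $f$ fails {\sc 1st-Orth}, \cref{lm: not 1st-orth is hard} applies.
\end{itemize}
So I assume $h$ is irreducible, $f$ satisfies {\sc 1st-Orth}, and therefore $\operatorname{Tr}_1R=\operatorname{Tr}_2R=\mu I_2$ by \cref{lem:mating-partial-traces}.

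\textbf{Support of $\widehat h$.} Since $h\in\mathscr H$, $\widehat h$ has $H$-form: it lies in $\eog$ or $\eol$. By \cref{lm: mating is self inversive},
\[
\widehat h(a,b,c,d)=\overline{\widehat h(\overline c,\overline d,\overline a,\overline b)},
\]
and this map sends weight $w$ to weight $4-w$. Hence $\widehat h$ is supported only on weight two, i.e.\ $\widehat h=r(\widehat h)$ is a six-vertex signature $(a,x,b,y,c,z)$ in the layout of \cref{thm: six vertex model}. By the same identity the parameters pair up under conjugation: the entry on $\alpha$ equals the conjugate of the entry on $\overline\alpha$ after swapping the first and second pairs of variables. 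Tractability of $\holant{\neq_2}{\widehat h}$ then forces one of three cases: $\widehat h\in\mathscr P$, $\widehat h\in\mathscr A$, or a zero in each antipodal pair.
\begin{itemize}
\item If $\widehat h\in\mathscr A$ or $\mathscr P$, then $h$ is $\mathscr A$- or $\mathscr P$-transformable by $K^{-1}$ (note $(=_2)K=(\neq_2)$ lies in both classes), and \cref{lm: mating is APL-trans implies 2nd-orth} finishes.
\item In the third case $\widehat h\in\braket{\mathcal M}$ or $\braket{X\mathcal M}$, which \cref{lm: 2nd-Orth M and XM cannot happen} excludes for irreducible $h$. Even without that lemma, a zero in each antipodal pair combined with the conjugate pairing forces $\widehat h=0$ in the relevant pairs.
\end{itemize}

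\textbf{Main obstacle.} The delicate point is checking that the mating symmetry really pairs the six-vertex entries as I claimed. The reflection also swaps the positions $(a,b)\leftrightarrow(c,d)$, so I need to track which of $(a,x),(b,y),(c,z)$ map to conjugates of each other. If some pairing is $b\leftrightarrow \overline b$ rather than $b\leftrightarrow\overline y$, the third case does not collapse automatically.

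If that happens, my fallback is direct. Compute $R=K^{\otimes2}M(\widehat h)(K^{\otimes 2})^{\tt T}$ for the six-vertex form, and use positive semidefiniteness together with $\operatorname{Tr}_1R=\operatorname{Tr}_2R=\mu I_2$ to force the surviving case to be either reducible or $R=\lambda I_4$. The expected route is that an $\braket{\mathcal M}$-type support of size at most two makes $\operatorname{rank} R\le 2$. With equal eigenvalues this reduces to \cref{lm:rank-two-matrix-form}, \cref{lm:realize-I-plus-AA} and \cref{lm: I+AA implies hard}, which give hardness.
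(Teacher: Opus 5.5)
Your proof is correct and follows the paper's route: the $H$-form plus the self-inversive identity of \cref{lm: mating is self inversive} forces weight-two support, and the six-vertex dichotomy then leaves two options. The $\mathscr A$/$\mathscr P$ case is handled by \cref{lm: mating is APL-trans implies 2nd-orth}, the $\langle\mathcal M\rangle$/$\langle X\mathcal M\rangle$ case is excluded by \cref{lm: 2nd-Orth M and XM cannot happen}, and your explicit reduction to irreducible $h$ just spells out the paper's standing assumption. Your worry about the pairing is justified but harmless: the map $(a,b,c,d)\mapsto(\bar c,\bar d,\bar a,\bar b)$ fixes $0011,1100,0110,1001$ and only swaps $0101\leftrightarrow1010$, so your ``even without that lemma'' remark is false as stated. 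The main argument never uses that remark, though, and the surviving at-most-two-point support would have a fixed coordinate and hence make $\widehat h$ reducible, so the fallback is unnecessary.
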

\begin{proof}
    Let $h=\mathfrak{m}_{ij}f$ and $\widehat{h}=\widehat{\mathfrak{m}}_{ij}\widehat{f}.$
    By the definition of $\mathscr{H}$, $\widehat{h}$ is supported on $\mathrm{HW}^{\ge}$ or $\mathrm{HW}^{\le}$.
    By \cref{lm: mating is self inversive}, $\widehat{h}(a,b,c,d)=\overline{\widehat{h}(\overline{c},\overline{d},\overline{a},\overline{b})}$, for every $a,b,c,d\in \{0,1\}.$
    So $\widehat{h}$ is supported on Hamming weight $2.$
    Therefore, $\Holant(\neq_2\mid \widehat{h})$ is a six-vertex model.
    By \cref{thm: six vertex model}, either
    $\Holant(\neq_2\mid \widehat{h})\le_T\Holant(\mathcal{F})$ is \#P-hard, or $\widehat{h}\in \mathscr{A}$, or $\widehat{h}\in \mathscr{P}$, or $\widehat{h}\in \braket{\mathcal{M}}$, or $\widehat{h}\in \braket{X\mathcal{M}}$.
    If $\widehat{h}\in \mathscr{A}$ or $\widehat{h}\in \mathscr{P}$,
    then $h$ is $\mathscr{A}$- or $\mathscr{P}$-transformable, and the lemma follows from \cref{lm: mating is APL-trans implies 2nd-orth}.
    Also, the case $\widehat{h}\in \braket{\mathcal{M}}$ or $\widehat{h}\in \braket{X\mathcal{M}}$ cannot happen by \cref{lm: 2nd-Orth M and XM cannot happen}.
    The lemma is proved.
\end{proof}

\subsection{Proof of Second Order Orthogonality}

\begin{lemma}\label{lm: not 2nd-orth is hard}
    Let $f\in {\mathcal{F}}$ be a irreducible signature of arity $2n \geqslant 4$.
    If $f$ does not satisfy {\sc 2nd-Orth}, then $\Holant({\mathcal{F}})$ is \#P-hard.
\end{lemma}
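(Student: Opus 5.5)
The plan is to prove the contrapositive by reducing to the single-quaternary dichotomy. We assume $\Holant(\mathcal F)$ is not \#P-hard and show that for every pair $\{i,j\}$ there is a real $\lambda_{ij}>0$ with $M(\mathfrak m_{ij}f)=\lambda_{ij}I_4$. Then \cref{lm: 2nd-Orth uniform} shows that all $\lambda_{ij}$ coincide, so $f$ satisfies {\sc 2nd-Orth}, which is a contradiction. As set up at the start of the section, $\mathcal F$ is conjugate closed and fails (\ref{cond: tractable classes}), since otherwise the statement would be vacuous in the intended use.

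\emph{Preliminary reductions.} If $f$ fails {\sc 1st-Orth}, hardness follows from \cref{lm: not 1st-orth is hard}, so we assume {\sc 1st-Orth}. Fix an arbitrary pair $\{i,j\}$. Since $\mathcal F$ is conjugate closed, $\overline f\in\mathcal F$, and the mating gadget realizes $h=\mathfrak m_{ij}f$ with $M(h)=M_{ij}(f)M_{ij}(f)^\dagger$. Writing $\Holant(\mathcal F)\equiv_T\holant{=_2}{\mathcal F}$, we get $\holant{=_2}{h}\le_T\Holant(\mathcal F)$. By \cref{thm: single quaternary dichotomy}, either this problem is \#P-hard (and we are done), or $h$ falls into one of the six tractable classes.

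\emph{Case analysis over the six classes.} Irreducibility of $f$ excludes the "$f$ is reducible" outcomes in the lemmas below.
\begin{itemize}
\item If $h\in\langle\mathscr T\rangle$, \cref{lm: 2nd-orth mating reducible} gives either hardness or $M(h)=\lambda_{ij}I_4$.
\item Otherwise we may assume $h$ is irreducible: a unary factor gives hardness via \cref{lm: decomposition} and \cref{lm: odd holant with conjugation}, and a binary$\otimes$binary factorization is already the previous case.
\item If $h$ is $\mathscr A$-, $\mathscr P$-, or $\mathscr L$-transformable, \cref{lm: mating is APL-trans implies 2nd-orth} gives hardness or $M(h)=\lambda_{ij}I_4$. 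Internally, rank $4$ yields the scalar matrix and rank $2$ yields hardness through the $I_4+A\otimes A$ construction.
\item The cases $h\in\langle K\mathcal M\rangle$ or $h\in\langle KX\mathcal M\rangle$ are impossible for irreducible $h$ by \cref{lm: 2nd-Orth M and XM cannot happen}.
\item If $h\in\mathscr H$, \cref{lm: 2nd-orth H type} gives hardness or $M(h)=\lambda_{ij}I_4$. Its statement reads "$\lambda_{ij}f$", which should be read as $\lambda_{ij}I_4$; it reduces to the transformable case via the six-vertex dichotomy.
\end{itemize}

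\emph{Conclusion and main obstacle.} Since the pair was arbitrary, outside the hard cases every $M(\mathfrak m_{ij}f)$ is a positive multiple of $I_4$, and \cref{lm: 2nd-Orth uniform} closes the argument. The substantive work sits in the earlier lemmas. The only delicate point here is checking that the case split is exhaustive. In particular, irreducibility of $h$ must be established before invoking \cref{lm: 2nd-Orth M and XM cannot happen}, and the reducible-$h$ possibilities (unary factor versus binary pairs) must be routed correctly. Each branch's "reducible $f$" alternative is discarded using the irreducibility hypothesis on $f$.
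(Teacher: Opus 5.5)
Your proposal is correct and follows the paper's proof. You mate $f$ with $\overline f$ at each pair $\{i,j\}$, apply the single-quaternary dichotomy to $\mathfrak m_{ij}f$, and dispatch the six tractable classes with the same auxiliary lemmas, each giving hardness, reducibility of $f$, or $M(\mathfrak m_{ij}f)=\lambda_{ij}I_4$; you then finish with the uniformity lemma. Your explicit routing of a reducible $\mathfrak m_{ij}f$ before excluding the weighted-matching classes, and your corrections (the $\mathscr H$-case conclusion should read $\lambda_{ij}I_4$, and the proof's ``$f\in\mathscr H$'' should be $\mathfrak m_{ij}f\in\mathscr H$), only make explicit what the paper leaves implicit.
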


\begin{proof}
    Fix an arbitrary pair of indices $\{i,j\}$, we can realize the arity 4 signature $\mathfrak{m}_{ij}f$ by mating.
    By \cref{thm: single quaternary dichotomy}, either $\Holant(\mathfrak{m}_{ij}f)\le_T\Holant(\mathcal{F})$ is \#P-hard, or $\mathfrak{m}_{ij}f$ falls in the six tractable classes of \cref{thm: single quaternary dichotomy}.
    
    The first two classes $\braket{K\mathcal{M}}$ and $\braket{KX\mathcal{M}}$ cannot occur by \cref{lm: 2nd-Orth M and XM cannot happen}.
    
    If $\mathfrak{m}_{ij}f$ is $\mathscr{A}$- or $\mathscr{P}$- or $\mathscr{L}$-transformable, then by \cref{lm: mating is APL-trans implies 2nd-orth}, either $f$ is reducible, contradiction; or $\Holant(\mathcal{F})$ is \#P-hard; or $M(\mathfrak{m}_{ij}f)=\lambda_{ij}I_4$ for some $\lambda_{ij}>0.$
    By \cref{lm: 2nd-Orth uniform}, all $\lambda_{ij}$ have the same value $\lambda$. 
    So $f$ satisfies {\sc 2nd-Orth}, unless $\Holant(\mathcal{F})$ is \#P-hard.

    If $f\in \mathscr{H}$, the lemma follows from \cref{lm: 2nd-orth H type} and \cref{lm: 2nd-Orth uniform}.
    If $f\in \braket{\mathscr{T}}$, the lemma follows from from \cref{lm: 2nd-orth mating reducible} and \cref{lm: 2nd-Orth uniform}.
\end{proof}

\section{The Induction Framework and Base Case}\label{sec-induction}
    In this section, we introduce the induction framework and handle the base cases (\cref{lm: base case 2n=2,lm: base case 2n=4}).
    Recall that we assume that ${\mathcal{F}}$ does not satisfy condition \rm{(\ref{cond: tractable classes})}, and we want to prove $\Holant(\mathcal{F})$ is \#P-hard.
    By assumption, ${\mathcal{F}}\not\subseteq \mathscr{T}$.
    Also, since  ${\mathcal{U}}^{\otimes}\subseteq \mathscr{T}$, ${\mathcal{F}}\not\subseteq {\mathcal{U}}^{\otimes}$.
    Thus, there is a nonzero signature ${f}\in {\mathcal{F}}$ of arity $2n$ such that ${f}\notin {\mathcal{U}}^{\otimes}$.
    We want to achieve a proof of  \#P-hardness by induction on $2n$. 
    
\subsection{Base Case}\label{subsec: Base Case}
    We first consider the base that $2n=2$.
    Notice that a nonzero binary signature  ${f}$ satisfies {\sc 1st-Orth} iff its matrix form (as a 2-by-2 matrix) is projective unitary.
    Thus, ${f}\notin{\mathcal{U}}$ implies that it does not satisfy {\sc 1st-Orth}.
    Then, we have the following result.

\begin{lemma}\label{lm: base case 2n=2}
    Let $\mathcal F$ contain a binary signature $f\notin \mathcal{U}^{\otimes}$.
    Then, $\Holant(\mathcal{F})$ is \#P-hard.
\end{lemma}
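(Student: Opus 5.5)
The plan is to reduce directly to \cref{lm: not 1st-orth is hard}, as the remark preceding the statement suggests. The condition \cref{lm: not 1st-orth is hard} needs is a nonzero $f\in\mathcal F$ failing {\sc 1st-Orth}, together with $\mathcal F$ conjugate closed and failing \eqref{cond: tractable classes}; both of the latter are standing assumptions of this section.

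First, $f\neq 0$: the zero binary signature lies in $\mathcal U$ by definition, hence in $\mathcal U^{\otimes}$, so $f\notin\mathcal U^{\otimes}$ forces $f\not\equiv0$. Next I verify that a nonzero binary signature satisfies {\sc 1st-Orth} if and only if it lies in $\mathcal U$. For arity $2$ we have $M_1(f)=M(f)$, so by \eqref{m-form}, $M(\mathfrak m_1 f)=M(f)M(f)^\dagger$. Likewise $M_2(f)=M(f)^{\tt T}$, so $M(\mathfrak m_2 f)=M(f)^{\tt T}\overline{M(f)}=\overline{M(f)^\dagger M(f)}$.

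If $M(f)M(f)^\dagger=\lambda I_2$ with $\lambda>0$, then $M(f)/\sqrt\lambda$ is unitary. Hence $M(f)^\dagger M(f)=\lambda I_2$ as well, and both matings equal $\lambda I_2$ with the same $\mu=\lambda$. Conversely, {\sc 1st-Orth} at index $1$ gives $M(f)M(f)^\dagger=\mu I_2$ with $\mu\neq0$, and $\mu>0$ since it is a squared norm. So $f\in\mathcal U$.

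Thus $f\notin\mathcal U$. Also $f\in\mathcal U^{\otimes 1}$ would mean $f=\lambda g$ with $g\in\mathcal U$, and $\mathcal U$ is closed under nonzero scalars, so $f\notin\mathcal U^{\otimes}$ gives $f\notin \mathcal U$. Therefore $f$ fails {\sc 1st-Orth}, and \cref{lm: not 1st-orth is hard} yields \#P-hardness.

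No step here is a real obstacle; the only point needing care is the bookkeeping that $\mathcal U^{\otimes}$ restricted to arity $2$ includes $\mathcal U$ up to scalar, including the zero case. Tensor products of two unaries also have arity $2$, but we only need the containment $\mathcal U\subseteq\mathcal U^{\otimes}$, so those products do not matter.
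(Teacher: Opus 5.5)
Your proposal is correct and follows the paper's route. Since $f\neq 0$ and $f\notin\mathcal U$, $f$ fails {\sc 1st-Orth}, and \cref{lm: not 1st-orth is hard} gives hardness; the paper additionally splits off the reducible case $f=u\otimes v$ and handles it via \cref{lm: decomposition} and \cref{lm: odd holant with conjugation}, but as you observe this split is unnecessary, because a rank-one binary also fails {\sc 1st-Orth} and \cref{lm: not 1st-orth is hard} does not require irreducibility.
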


\begin{proof}
    We prove this lemma in the setting of $\Holant(\mathcal{F})$.
    Since ${\mathcal{U}}^{\otimes}$ contains the binary zero signature, $f\notin {\mathcal{U}}^{\otimes}$ implies that $f\not\equiv 0$.
    If $f$ is reducible, then it is a tensor product of two nonzero unary signatures. 
    By \cref{lm: decomposition}, either $\Holant(\mathcal{F})$ is \#P-hard, or we can realize a nonzero unary signature by factorization, and we are done by \cref{lm: odd holant with conjugation}.
    Otherwise, $f$ is irreducible. 
    Since $f\notin \mathcal{U}^{\otimes}$, $f$ does not satisfy {\sc 1st-Orth}. 
    By \cref{lm: not 1st-orth is hard},  $\Holant(\mathcal{F})$ is \#P-hard.
\end{proof}

Next, we consider the base case $2n=4$.
\begin{lemma}\label{lm: base case 2n=4}
    Let ${\mathcal{F}}$ contain a $4$-ary signature ${f}\notin {\mathcal{U}}^{\otimes}$.
    Then, $\Holant({\mathcal F})$ is {\rm \#}P-hard.
\end{lemma}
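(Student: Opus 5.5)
The plan is to split according to whether $f$ is reducible, and in the irreducible case to show that $f$ would have to be an $\operatorname{AME}(4,2)$ state, which is impossible.

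\emph{Reducible case.} Since $f\notin\mathcal U^{\otimes}$, we have $f\not\equiv0$. Suppose $f$ is reducible. By \cref{unique} it factors either with a unary factor or as $b_1\otimes b_2$ with $b_1,b_2$ binary.

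\begin{itemize}
\item If some prime factor is unary, \cref{lm: decomposition} makes a nonzero unary signature realizable. Then \cref{lm: odd holant with conjugation} gives \#P-hardness, because $\mathcal F$ together with that unary is still conjugate closed after also adding its conjugate, which is realizable by the same decomposition applied to $\overline f$. It also still fails condition~\eqref{cond: tractable classes}.
\item Otherwise $f=b_1\otimes b_2$. Since $f\notin\mathcal U^{\otimes}$, some $b_k\notin\mathcal U$. By \cref{lm: decomposition}, $b_k$ and $\overline{b_k}$ are realizable, and \cref{lm: base case 2n=2} gives hardness.
\end{itemize}

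\emph{Irreducible case: reduction to 2nd-Orth.} Now assume $f$ is irreducible. By \cref{lm: not 2nd-orth is hard}, either $\Holant(\mathcal F)$ is \#P-hard, or $f$ satisfies {\sc 2nd-Orth}. In the latter case,
\[
M_{ij}(f)M_{ij}(f)^\dagger=\lambda I_4 \quad\text{for every pair } \{i,j\}\subseteq[4].
\]
For arity $4$ this says that $\ket f$ is $2$-uniform, that is, an $\operatorname{AME}(4,2)$ state. It then suffices to derive a contradiction, namely the classical fact that no $\operatorname{AME}(4,2)$ exists.

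\emph{Main step: no $\operatorname{AME}(4,2)$.} I expect this to be the main obstacle, since it must be proved self-containedly in the paper's language. The plan is to use the Pauli representation. Every single-qubit marginal is maximally mixed, because {\sc 2nd-Orth} implies {\sc 1st-Orth} by taking partial traces. Hence the expansion
\[
\ket f\!\bra f=\frac{1}{16}\sum r_{i_1i_2i_3i_4}\,\sigma_{i_1}\otimes\cdots\otimes\sigma_{i_4}
\]
has $r=0$ for every Pauli string of weight $1$ or $2$. This uses that the two-qubit marginals are $\tfrac14 I_4$, and that the coefficient on a pair $S$ equals the coefficient in $\rho_S$. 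By the Schmidt decomposition across $\{1,2\}\mid\{3,4\}$, the complementary marginals $\rho_{12}$ and $\rho_{34}$ have the same spectrum, and the analogous complementarity relates weight-$3$ and weight-$1$ terms. Concretely, the standard argument is a purity count:
\[
\operatorname{Tr}\rho^2=1 \;\Rightarrow\; \sum r^2=16, \quad\text{with } r_{0000}=1.
\]
So the weight-$3$ and weight-$4$ coefficients must account for a total of $15$. Using $\rho_{123}$ and $\rho_4$ having equal purity, with $\rho_4=\tfrac12 I$, gives
\[
\operatorname{Tr}\rho_{123}^2=\tfrac12,
\]
and this forces the weight-$3$ coefficients supported in $\{1,2,3\}$ to have total square $8-1=3$. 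Summing over the four triples, each weight-$3$ string is counted once, so the weight-$3$ total is $12$. That leaves weight-$4$ total $3$.

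I then need a second relation to reach a contradiction. I would use the fact that $\rho$ is pure, so $\rho^2=\rho$, and compare a weight-$4$ coefficient equation. Alternatively, I can use the shadow inequality (Rains): for four qubits it forces the weight-distribution $A_4$ to satisfy an inequality that fails. If that route gets messy, the fallback is a direct linear-algebra argument. Here $M_{12,34}(f)$ is proportional to a $4\times4$ unitary, and so are $M_{13,24}(f)$ and $M_{14,23}(f)$. Writing $f$ in its $M_{12,34}$ form as $\tfrac12 U$ and imposing the other two reshapings to be unitary yields an overdetermined system, the Higuchi–Sudbery argument, which I would show is inconsistent. Finally, all these hypotheses are invariant under unitary holographic transformations by \cref{prop: invariant kth Orth}, which can be used to normalize $f$ before the computation.
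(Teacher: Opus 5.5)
This is the paper's proof. You handle the reducible case with \cref{lm: decomposition}, \cref{lm: odd holant with conjugation} and \cref{lm: base case 2n=2}, and the irreducible case with \cref{lm: not 2nd-orth is hard}, which forces an $\operatorname{AME}(4,2)$ state. The only difference is that the paper simply cites the non-existence of $\operatorname{AME}(4,2)$, so your unfinished self-contained ``main step'' is optional.

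If you want it anyway, your own weight counts $1,0,0,12,3$ already finish it via the spin-flip (shadow) identity. Note the slip: the triple count is $4-1=3$, not $8-1$. Then
\[
0\le\bigl|\bra{f}Y^{\otimes4}\ket{\overline f}\bigr|^2=\tfrac1{16}\sum_{P}(-1)^{\mathrm{wt}(P)}r_P^2=\tfrac{1-12+3}{16}<0,
\]
which is a contradiction.
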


\begin{proof}
    Since ${f}\notin{{\mathcal{U}}}^{\otimes}$, $f\not\equiv 0$.
    First, we may assume that $f$ is irreducible. 
    Otherwise, we can realize a nonzero unary signature or a binary signature that is not in ${\mathcal{U}}$.
    Then, by \cref{lm: odd holant with conjugation,lm: base case 2n=2}, we have \#P-hardness.
    Since $f$ is irreducible, we may further assume that ${f}$ satisfies {\sc 2nd-Orth}.
    Otherwise, by \cref{lm: not 2nd-orth is hard}, we get \#P-hardness. 
    However, this is impossible due to the non-existence of an \rm{AME}$(4,2)$ state {\cite{AME42}}.
    A contradiction!
\end{proof}

\subsection{Induction Framework}\label{subsec: Induction framework}
The general induction framework is  that we start with a signature $f$ of arity $2n>4$ that is not in ${\mathcal{U}}^{\otimes}$, and realize a signature $g$ of arity $2k \leqslant 2n-2$ that is also not in ${\mathcal{U}}^{\otimes}$, or we can directly show $\Holant({\mathcal F})$ is {\rm \#}P-hard.
If we can reduce the arity down to $2$ (by a sequence of reductions of length independent of the problem instance size), then we have a binary signature ${b}\notin{\mathcal{U}}$. 
By \cref{lm: not 1st-orth is hard}, we are done.

\vspace{0.5em}
\noindent\textbf{Induction Hypothesis:} If we can realize a $2k$-arity $(1\le k<n)$ signature $f$ from $\mathcal{F}$ such that $f\notin \mathcal{U}^{\otimes}$, then $\Holant(\mathcal{F})$ is \#P-hard.
\vspace{0.5em}

For the inductive step, we first consider the case that ${f}$ is reducible.
Suppose that ${f}={f_1} \otimes {f_2}$.
If ${f_1}$ or ${f_2}$ have odd arity, then we can realize a signature of odd arity by factorization and we are done.
Otherwise, ${f_1}$ and ${f_2}$ both have even arity. 
Since ${f}\notin {\mathcal{U}}^{\otimes}$, we know ${f_1}$ and ${f_2}$ cannot both be in ${\mathcal{U}}^{\otimes}$. 
Then, we can realize a signature of lower arity that is not in ${\mathcal{U}}^{\otimes}$ by factorization. 
We are done.

In the following we may assume that $f$ is irreducible.
Then, we may further assume that $f$ satisfies {\sc 2nd-Orth}. 
Otherwise, we get \#P-hardness by \cref{lm: not 2nd-orth is hard}.
We use merging with $=_2$ to realize signatures of arity $2n-2$ from $f$.
Consider ${\partial}_{ij} f$ for all pairs of indices $\{i, j\}$. 
If there exists a pair $\{i, j\}$ such that ${\partial}_{ij} f \notin {\mathcal{U}}^{\otimes}$, 
then we can realize $g={\partial}_{ij}f$ which has arity $2n-2$, and we are done.
Thus, we may assume ${\partial}_{ij} f \in {\mathcal{U}}^{\otimes}$ for all $\{i, j\}$. 
We denote this property by $ f\in{\int}{\mathcal{U}}^{\otimes}$.

Now, can assume that $f$ is irreducible, satisfies {\sc 2nd-Orth} and $f\in {\int}{\mathcal{U}}^{\otimes}$. 
By factorization, we can realize all the binary factors of $\partial_{ij}f$ for any pair ${i,j}$.
Then we can merge $f$ by any of the binary factor $b$ and realize $\partial^b_{ij}f$.
We may further assume that $\partial^b_{ij}f\in \mathcal{U}^{\otimes}$, otherwise we obtain \#P-hardness by the Induction Hypothesis.
Again we can realize all binary factors of $\partial^b_{ij}f\in \mathcal{U}^{\otimes}$, and so on.
In each step, we can also concatenate two binary factors using $=_2$.
Finally, we can realize a set of binary signatures from $f$ by consecutive merging, factorization and concatenation.
Clearly, if $f$ satisfies all the assumptions above, so does $\overline{f}$.
If we can realize $b$ from $f$, then we can also realize $\overline{b}$ from $\overline{f}$.
We summarize the above generating process as follows:

\begin{definition}[Projective binary group] \label{def: generating process}
    Given a signature $f$ of arity $2n$, we recursively define a sequence of sets:
    \begin{enumerate}
        \item $B_0(f)=\{=_2\}$.
        
        \item Given $B_{i-1}(f)$, $A_i(f)$ consists of all binary factors of $\partial^{b}_{ij}f\in \mathcal{U}^{\otimes}$ and $\partial^{b}_{ij}\overline{f}\in \mathcal{U}^{\otimes}$ for arbitrary distinct pair $\{i,j\}$ and $b$ arbitrarily chosen from $B_{i-1}(f)$.

        \item Given $A_i(f)$, $B_i(f)$ consists of all binary signatures realized by connecting $k$ signatures from $A_i(f)$ as a path for arbitrary $k\ge 1$.
    \end{enumerate}
    Let $B(f)=\mathop{\bigcup}\limits_{i=1}^{\infty}B_i(f)$, and call it the projective binary group of $f$.
\end{definition}

We may assume that $f$ has the following closure property, otherwise \#P-hardness follows.
\begin{proposition}[$B(f)$ closure]\label{prop: binary closure}
    For all pairs of indices $\{i,j\}$ and $b\in B(f)$, $\partial^b_{ij}f\in B(f)^{\otimes}$.
\end{proposition}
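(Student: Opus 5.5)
The plan is to argue that, under the standing assumptions of the inductive step, \cref{prop: binary closure} either holds or \#P-hardness already follows. These assumptions are that $f$ is irreducible of arity $2n>4$, satisfies {\sc 2nd-Orth}, and lies in $\int\mathcal U^{\otimes}$; that $\mathcal F$ is conjugate closed; and that $\mathcal F$ fails \eqref{cond: tractable classes}. I would first record that every $b\in B(f)$ is realizable from $\mathcal F$ and is a nonzero element of $\mathcal U$, and prove this by induction on the level $i$ in \cref{def: generating process}.

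For $B_0(f)=\{=_2\}$ the claim is trivial. Suppose it holds for $B_{i-1}(f)$, and take $b\in B_{i-1}(f)$. The signature $\partial^b_{ij}f$ is realizable by merging, and it is nonzero by \cref{lm: 2nd-orth contraction nonzero}. If $\partial^b_{ij}f\notin\mathcal U^{\otimes}$, then it has arity $2n-2<2n$, and the Induction Hypothesis gives \#P-hardness. Otherwise $\partial^b_{ij}f$ is a tensor product of binary signatures in $\mathcal U$, and by the Decomposition lemma (\cref{lm: decomposition}) each of its binary factors is realizable, and each is nonzero since the product is nonzero. The same holds for $\overline f$. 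Here I use that $\overline f\in\mathcal F$ satisfies the same hypotheses, and that $\partial^{\overline b}_{ij}\overline f=\overline{\partial^b_{ij}f}$, whose factors are the conjugates of the factors of $\partial^b_{ij}f$. This places every element of $A_i(f)$ in $\mathcal U\setminus\{0\}$ and makes it realizable. Connecting such elements in a path multiplies their matrices, possibly after transposition. Since nonzero scalar multiples of unitary matrices are closed under products and transposes, every element of $B_i(f)$ is again a realizable nonzero element of $\mathcal U$.

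With this in hand, the proposition follows almost by definition. Fix $b\in B(f)$, so $b\in B_{i-1}(f)$ for some $i$, and fix a pair $\{i,j\}$. As above, either hardness follows or $\partial^b_{ij}f\in\mathcal U^{\otimes}$. In the latter case its binary factors lie in $A_i(f)$ by definition, and paths of length $k=1$ show that $A_i(f)\subseteq B_i(f)\subseteq B(f)$. So $\partial^b_{ij}f$ is, up to a nonzero scalar, a tensor product of elements of $B(f)$, which means $\partial^b_{ij}f\in B(f)^{\otimes}$.

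The one delicate point is making "realizable" legitimate while $B(f)$ is infinite. Each individual $b\in B(f)$ is obtained by a constant-length chain of gadget and factorization steps, and that is all the Induction Hypothesis and \cref{lm: decomposition} need, because these are applied to one $b$ at a time. I would also note that factorization is well-defined only up to scalars, by \cref{unique}. This is harmless because $\mathcal U$ and $B(f)^{\otimes}$ are closed under nonzero scaling.
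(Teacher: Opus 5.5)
Your proposal is correct and follows the same route as the paper. The paper justifies \cref{prop: binary closure} only through the informal generating-process paragraph before \cref{def: generating process}: each $\partial^b_{ij}f$ with $b$ already realized either lies outside $\mathcal U^{\otimes}$, giving hardness by the Induction Hypothesis, or splits into binary factors that by definition lie in $A_i(f)\subseteq B_i(f)\subseteq B(f)$. Your level-by-level induction, with nonvanishing via \cref{lm: 2nd-orth contraction nonzero} and the conjugate-closure bookkeeping, is a more careful write-up of that same argument.
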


Next, we explain the notion ``projective binary group".
Every binary signature $b(x,y)\in \mathcal{U}$ can be represented as a $2\times 2$ matrix $M(b)=\left[\begin{smallmatrix}
    b(00) & b(01)\\
    b(10) & b(11)
\end{smallmatrix}\right]\in \mathbf{PU}(2)$.
Since normalizing by a nonzero constant doesn't change the complexity, we can realize every binary signature in $[b]:=\{c\cdot b\mid c\in \mathbb{C}\setminus\{0\}\}.$
We identify every binary signature $b$ in $B(f)$ with its projective class $[b]$, and its representative can be chosen in $\mathbf{SU}(2).$ 
Note that there are two possible choices of representatives, according to normalizing by $\pm\det(M(b))$.

We can define the multiplication $b_3:=b_1\cdot b_2$ by the resulting signature connecting $b_1$ and $b_2$ using $=_2$.
Then $b_3$ has the matrix $M(b_3)=M(b_1)M(b_2)$.
In the following, we always assume $M(b)\in \mathbf{SU}(2)$ unless otherwise stated.
An important observation by Xia in \cite{MingjiProgram} is that:

\begin{restatable}{theorem}{MingjiTheorem}\label{thm: binary set is finite group}
    $B(f)$ is 
    a finite subgroup of $\mathbf{PU}(2)=\mathbf{SU}(2)/\{\pm I_2\}$, otherwise $\Holant(\mathcal{F})$ is \#P-hard.
\end{restatable}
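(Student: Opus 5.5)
The plan has three parts: check the group structure, argue that hardness follows unless $B(f)$ is finite, and then control the finitely many remaining cases.

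\emph{Group structure.} By construction, $B(f)$ is closed under path concatenation, so it is closed under matrix multiplication in $\mathbf{PU}(2)$. It contains $[=_2]=[I_2]$ because $B_0(f)=\{=_2\}$. For inverses, every element $b$ lies in $\mathcal U$, and $\overline b$ is realizable whenever $b$ is, since $\mathcal F$ is conjugate closed and the generating process treats $f$ and $\overline f$ symmetrically. Then $M(b)M(\overline b)^{\mathtt T}\sim I_2$, and the transpose is obtained simply by reading the gadget's two dangling edges in the opposite order. Hence $[M(b)]^{-1}=[M(b)^\dagger]$ lies in $B(f)$, and $B(f)$ is a subgroup of $\mathbf{PU}(2)$.

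\emph{Finiteness.} Suppose $B(f)$ is infinite. Its closure in $\mathbf{SO}(3)\cong\mathbf{PU}(2)$ is then a closed infinite subgroup. Such a subgroup is either all of $\mathbf{SO}(3)$ or contains a one-parameter torus, and in either case $B(f)$ contains an element $[U]$ of infinite order. Take $U\in\mathbf{SU}(2)$ with eigenvalues $e^{\pm\mathfrak i\theta}$ and $\theta/\pi$ irrational. Since $U=P^{-1}\mathrm{diag}(e^{\mathfrak i\theta},e^{-\mathfrak i\theta})P$ and the eigenvalue ratio $e^{2\mathfrak i\theta}$ is not a root of unity, \cref{lm: 2 by 2 interpolation} lets us realize the rank-one binary signature $P^{-1}\mathrm{diag}(1,0)P$. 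This is a nonzero degenerate binary signature, so it factors into two unary signatures. \cref{lm: decomposition} then gives a nonzero unary signature, and \cref{lm: odd holant with conjugation} gives \#P-hardness, because $\mathcal F$ fails (\ref{cond: tractable classes}).

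\emph{Bounded generation, the main obstacle.} To use $B(f)$ as a realizable object we need every element to be realizable by a gadget of constant size. Once finiteness is known this is automatic: a finite group has finitely many elements, each produced at some finite stage $B_i(f)$ by a fixed gadget. The subtle point is to make the infinite case rigorous, namely to exhibit a \emph{single} realizable element of infinite order. A countable infinite subgroup of $\mathbf{PU}(2)$ could in principle be a torsion group with unbounded orders, for example the roots of unity on a circle subgroup. I would rule this out as follows. By Jordan's theorem, or directly from the classification of closed subgroups of $\mathbf{SO}(3)$, every infinite subgroup of $\mathbf{SO}(3)$ generated by finitely many elements either contains an element of infinite order or is contained in a finite group. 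Since $B(f)$ is generated by finitely many binary factors at each stage, if every stage is finite then $B(f)$ is an increasing union of finite groups. I would then bound their orders using that they are all normalized by the constraints coming from the fixed $f$; otherwise some finitely generated stage is infinite and, by Jordan plus Tits' alternative for compact groups, contains an element of infinite order.

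If this bounding argument gets stuck, I would try a different route: interpolate directly with a high-order rotation, using that a rotation of order $m>2$ acting on $\partial_{ij}^b f$ still yields pairwise-distinct eigenvalue ratios. This would give the same degenerate binary signature, and hence hardness, by the interpolation in \cref{lm: 2 by 2 interpolation}.
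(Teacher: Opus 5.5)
Your group-structure argument and your use of \cref{lm: 2 by 2 interpolation} on an element of infinite projective order match the paper. The finiteness step, however, has a genuine gap.

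The inference from ``the closure of $B(f)$ in $\mathbf{SO}(3)$ contains a torus'' to ``$B(f)$ contains an element of infinite order'' is false. The group of all roots of unity inside a circle subgroup is infinite and torsion, and its closure is the whole circle. You notice this yourself, but your repair does not close it. Schur's theorem (finitely generated torsion linear groups are finite) reduces you to $B(f)$ being an increasing union of finite groups. But ``bound their orders using that they are all normalized by the constraints coming from the fixed $f$'' names no actual mechanism. Your fallback also cannot work: an element of finite order has only finitely many distinct powers and a root-of-unity eigenvalue ratio. So \cref{lm: 2 by 2 interpolation} does not apply, and there is no unbounded family of distinct signatures to interpolate from.

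The missing idea is arithmetic, and it is what the paper uses.
\begin{itemize}
\item Since $f$ takes algebraic values, every $b\in B(f)$ is projectively a matrix over one fixed number field $E$. Merging and path concatenation are polynomial in the entries, and a tensor factor of a signature over $E$ is proportional to a slice of it.
\item Hence the eigenvalue ratio $\mu(b)$ is a root of a quadratic over $E$, so $[\mathbb{Q}(\mu(b)):\mathbb{Q}]\le 2[E:\mathbb{Q}]$.
\item A primitive $d$-th root of unity has degree $\varphi(d)\to\infty$. Combined with the finite-order conclusion from interpolation, this gives a uniform bound on all projective orders, i.e.\ finite exponent.
\item Burnside's theorem (\cref{thm: Burnside}), applied to the preimage of $B(f)$ in $\mathbf{SU}(2)$, then gives finiteness. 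Equivalently, you can use Schur's theorem together with the fact that large cyclic or dihedral subgroups of $\mathbf{SO}(3)$ contain elements of large order.
\end{itemize}
Without this uniform bound on element orders, nothing in your argument rules out an infinite, locally cyclic, torsion $B(f)$.
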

We give an exposition of Xia's proof in \autoref{appendix: mingji}.
Since $\mathbf{SU}(2)/\{\pm I_2\}\cong\mathbf{SO}(3)$, $B(f)$ is isomorphic to a finite subgroup of $\mathbf{SO}(3).$
All finite subgroups of $\mathbf{SO}(3)$ are: cyclic groups, dihedral groups and polyhedral groups.
Moreover, in our conjugate closed setting, once we can realize $b$, we can realize $\overline{b}$.
Also, we can rename the two variables of $b$.
Therefore, we may assume that $B(f)$ is closed under taking conjugation and transpose.

Similarly, we can do a $K$-transformation and define the projective binary group $\widehat{B}(\widehat{f})$ is the $\holant{\neq_2}{\hF}$ setting.
We have $\widehat{B}(\widehat{f})=K^{-1}B(f).$
The multiplication $\widehat{b_3}=\widehat{b_1}\cdot \widehat{b_2}$ has the matrix $M(\widehat{b_3})=M(\widehat{b_1})N_2M(\widehat{b_2})$, and $\widehat{B}(\widehat{f})$ has the identity elemetn $\widehat{e}$ with matrix $M(\widehat{e})=N_2.$

\section{Third Order Orthogonality}\label{sec: Third Order Orth}
Recall that if $\mathcal{F}$ contains an odd arity signature, then the main theorem is true by \cref{lm: odd holant with conjugation}.
Also, \cref{subsec: Base Case} proves the main theorem if $\mathcal{F}$ only consists of signatures of arity no larger than $4$.
In this section, we assume $\mathcal{F}$ contains a signature of arity $6$.
The induction framework \cref{subsec: Induction framework} shows we may assume the following, otherwise $\Holant(\mathcal{F})$ is already \#P-hard:

\begin{equation}\label{eq: arity 6 assumption}
    f\in \mathcal{F} \text{ is arity } 6, \text{irreducible, satisfies {\sc 2nd-Orth} and } B(f) \text{ closure \cref{prop: binary closure}.}
\end{equation}

In this section, we are going to prove the following theorem:

\begin{restatable}{theorem}{ThirdOrderOrth}\label{thm: third order orthogonality}
    Under assumption \eqref{eq: arity 6 assumption}, $f$ satisfies {\sc 3rd-Orth}.
    Equivalently, $\frac{\ket{f}}{||f||}$ is an AME(6,2) state.
\end{restatable}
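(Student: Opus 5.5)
The plan is to reduce \textsc{3rd-Orth} to a statement about weight-three Pauli expectations, and then read those expectations off from the structure that merging imposes on $f$. Write $\rho_S$ for the reduced density matrix of $\ket f$ on $S$ and expand it in the Pauli basis of \cref{subsec: signatures and quantum states}. \textsc{3rd-Orth} is equivalent to $\operatorname{Tr}(\rho_S P)=0$ for every non-identity Pauli string $P$ supported on a set $S$ with $|S|=3$.

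\textsc{2nd-Orth} already gives this for all strings of weight $1$ and $2$. It also gives that every merge $\partial^b_{ij}f$ with $b\neq0$ is nonzero and has norm $\lambda|b|^2$ (\cref{lm: 2nd-orth contraction nonzero}). So only the weight-three expectations $\bra f\sigma_p\otimes\sigma_q\otimes\sigma_r\ket f$ on triples $\{i,j,k\}$ remain. Since $\ket f$ is pure, $\rho_S$ and $\rho_{\overline S}$ have the same spectrum, so it suffices to treat one triple from each complementary pair.

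The key structural input is the $B(f)$ closure (\cref{prop: binary closure}) together with \cref{thm: binary set is finite group}. Merging variables $x_i,x_j$ with $b\in B(f)$ projects qubits $i,j$ onto the maximally entangled vector $(M(b)\otimes I)\ket{\Phi^+}$, up to conjugation. The result $\partial^b_{ij}f$ has arity $4$ and lies in $B(f)^{\otimes}$. Irreducibility of $f$ and the odd-arity lemma exclude unary factors, so $\partial^b_{ij}f$ is a tensor product of two unitary binaries in $B(f)$ on some pairing of the remaining four variables. In other words, each such projection of $\ket f$ is a product of two maximally entangled pairs.

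Now suppose $B(f)$, viewed projectively in $\mathbf{SO}(3)$, contains four elements whose $\mathbf{SU}(2)$ representatives are pairwise Hilbert--Schmidt orthogonal, i.e.\ a Klein four-group of projective Paulis. Then the four vectors $(M(b)\otimes I)\ket{\Phi^+}$ form an orthonormal basis of qubits $i,j$. Expanding $f$ in this basis writes $f$ as a sum of four terms, each a Bell-type vector on $\{i,j\}$ tensored with a product of two $B(f)$-binaries on the other four qubits. Orthogonality of the four terms is supplied by \textsc{2nd-Orth}.

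From this explicit form, a weight-three Pauli on $\{i,j,k\}$ acts on $k$ inside one Bell pair of each term. Its expectation becomes a sum of traces of products of group elements with one Pauli each, and these should vanish. The pairings occurring in different terms must be compared; if all four terms used the same pairing, $f$ would factor, contradicting irreducibility. This is where the case analysis over the classification of finite subgroups of $\mathbf{SO}(3)$ enters. The polyhedral groups, the dihedral groups containing a Klein subgroup in standard position, and the standard Klein case all fit the argument above. I would verify in each case that the group elements appearing satisfy the trace identities, using that $B(f)$ is closed under conjugation and transpose.

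The remaining cases must be shown impossible. These are the cyclic groups (trivial, order two, order at least three) and the non-standard Klein group. For cyclic $B(f)$, all $M(b)$ commute, so they are simultaneously diagonal up to a unitary. The plan is to show that the resulting constraints on $\partial_{ij}f$ for all fifteen pairs $\{i,j\}$ either force $\partial^b_{ij}f$ to contain a binary outside the group, violating closure, or force a rank deficiency in some $M_{ij}(f)$ that contradicts \textsc{2nd-Orth}. I expect this exclusion, especially for the non-standard Klein group, where the projections exist but are not compatible with the pairing patterns, to be the main obstacle. There I would first try to derive a contradiction by computing $\partial^{b}_{ij}\partial^{b'}_{kl}f$ in two orders and comparing the resulting binaries against the group table.
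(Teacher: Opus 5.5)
Your reduction to weight-three Pauli expectations is sound, and expanding $f$ across a pair $\{i,j\}$ in the orthonormal basis $u_s=\frac1{\sqrt2}\operatorname{vec}(M(b_s))$ is a good idea. That basis comes from four pairwise Hilbert--Schmidt-orthogonal elements of $B(f)$. However, the step that should produce $t_{pqr}=0$ is wrong.

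You assert that if all four terms used the same pairing, then $f$ would factor. This is false. By \cref{lem:f6-contraction-test}, merging $x_1,x_2$ of the irreducible $f_6$ with each of $I,X,Z,J$ gives the same pairing $(x_3,x_6),(x_4,x_5)$. In fact \textsc{2nd-Orth} \emph{forces} a common pairing.

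Take $\{i,j\}=\{1,2\}$ and $\|f\|=1$. Then $f=\frac12\sum_s u_s\otimes g_s$ with $g_s=\sqrt2\,\partial^{\overline{b_s}}_{12}f$ orthonormal. Each $g_s$ is a unimodular multiple of $\beta_s\otimes\gamma_s$, where $\beta_s,\gamma_s$ are maximally entangled unit vectors on a pairing $P_s$ of $\{3,4,5,6\}$. Tracing out $x_5,x_6$ and comparing with $\rho_{34}=\frac14 I_4$ gives $\sum_{s:\,34\in P_s}\beta_s\beta_s^\dagger=\frac{k_{34}}{4}I_4$. A rank count then gives $k_{34}\in\{0,4\}$. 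Hence all $P_s$ coincide, say $\{34,56\}$, and the labels on each of the two edges form an orthonormal basis. This is the argument of \cref{lm: common matching}, with a rank count replacing the multiplicity count, so it does not need $B(f)=K_4$.

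This common pairing is exactly what closes the argument. Tracing out the edge not containing $k$ kills every cross term $s\ne t$, so $\rho_{12k}=\frac14\sum_s u_su_s^\dagger\otimes\frac12 I_2=\frac18 I_8$ for every $k$. In your setup, with pairings varying between terms, the cross terms have no reason to vanish, and ``these should vanish'' is asserted rather than derived. Trace identities from conjugation and transpose closure are not enough by themselves. In the standard Klein case, the Pauli-slice constraints leave the $t_{pqr}$ with distinct indices, and the paper needs realification to kill them.

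Carried out correctly, your route gives \textsc{3rd-Orth} whenever $B(f)$ contains any Klein four-group: $A_4$, $S_4$, $A_5$, the even dihedral groups, and both Klein positions. That is simpler than the paper's span argument, and it makes excluding the non-standard Klein case unnecessary.

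The case analysis still has holes. A dihedral group $D_n$ with $n$ odd has order $2n\equiv 2\pmod 4$, so it contains no Klein four-group and your expansion is unavailable. Yet you neither treat it nor list it as impossible. The paper covers all $D_n$, $n\ge3$, via \cref{lm: Pauli slice}. That lemma makes every slice $T^{(r)}$ Frobenius-orthogonal to $\operatorname{span}_{\mathbb R}\{C(b)\}=\mathrm{End}(P)\oplus\mathrm{End}(L)$, and the three families of slices then force all $t_{pqr}=0$.

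Your exclusion of the cyclic groups is only an intention, and its first step is not legitimate as stated. If $f=U^{\otimes6}f'$, then $\partial^{b}_{ij}f=U^{\otimes4}\partial^{U^{\mathtt T}M(b)U}_{ij}f'$. This is a congruence, which is a similarity only for real orthogonal $U$, so commuting $M(b)$ cannot simply be taken to be diagonal. This is why the paper separates standard and non-standard cyclic cases. Each exclusion then needs a specific argument:
\begin{itemize}
\item for $C_1$, the $1$-factorization of $K_6$ and eight explicit equations;
\item for the two forms of $C_2$, norm comparisons;
\item for $C_n$ with $n\ge3$, support and incidence counts modulo $7$ and $9$.
\end{itemize}
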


The proof enumerates all possible cases of the finite group $B(f)\le \mathbf{SO}(3)$.
We fix arbitrary distinct indices $\{i,j,k\}\subseteq [6]$.
For the polyhedral group case (\cref{subsec: polyhedral groups}), large dihedral group case (\cref{subsec: dihedral group}) and standard case of Klein group (\cref{subsubsec: Standard Case Klein}), we prove that $f$ satisfies {\sc 3rd-Orth}, by showing the coefficients of Pauli weight $3$ in the Pauli representation of the marginal density matrix $\rho_{ijk}$ are all 0. 
We also prove the non-standard case of Klein group (\cref{subsubsec: Non-standard Klein}) and the cyclic group case (\cref{subsec: Trivial group,subsec: Order Two group,subsec: Cyclic group of Order at least three}) can not happen.

\subsection{Polyhedral Groups}\label{subsec: polyhedral groups}

In this subsection, we consider the case that $B(f)\le \mathbf{SO}(3)$ is a polyhedral group.
It can be the tetrahedral group $A_4$, the octahedral group $S_4$, or the icosahedral group $A_5$.

Let $\rho=\ket{f}\bra{f}$ be the density matrix of $\ket{f}$.
By merging variables $x_i,x_j,x_k$ of $f$ and variables $x_i,x_j,x_k$ of $\overline{f}$ using $=_2$, we can realize the state with marginal density matrix in Pauli representation:
$$\rho_{ijk}=\frac{1}{8}\sum_{p,q,r=0}^3 t_{pqr} \sigma_p\otimes\sigma_q\otimes\sigma_r.$$
By {\sc 2nd-Orth}, we have 
\begin{lemma}\label{lm: Pauli rep by 2nd-Orth}
    $$\rho_{ijk}=\frac{1}{8}\left(I_8+\sum_{p,q,r=1}^3 t_{pqr} \sigma_p\otimes\sigma_q\otimes\sigma_r\right).$$
\end{lemma}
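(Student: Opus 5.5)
We have to show that in the Pauli expansion of $\rho_{ijk}$ the identity coefficient is $1$ and every coefficient $t_{pqr}$ with exactly one or two nonzero indices vanishes. Only the weight-three terms $p,q,r\ge1$ are allowed to remain.

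We start from the definition. Since $t_{pqr}=\mathrm{Tr}\bigl((\sigma_p\otimes\sigma_q\otimes\sigma_r)\rho_{ijk}\bigr)$, the coefficients can be computed from any reduced density matrix of $\rho_{ijk}$ on fewer qubits. Normalization gives $\mathrm{Tr}\rho_{ijk}=1$, hence $t_{000}=1$. For a term with exactly two nonzero indices, say $p,q\ge1$ and $r=0$, we have
\[
t_{pq0}=\mathrm{Tr}\bigl((\sigma_p\otimes\sigma_q)\rho_{ij}\bigr),
\]
where $\rho_{ij}=\mathrm{Tr}_k\rho_{ijk}$. By construction $\rho_{ij}$ is the matrix $M_{ij}(f)M_{ij}(f)^\dagger/\|f\|^2$, possibly transposed or conjugated depending on the indexing convention. {\sc 2nd-Orth} says this equals $\lambda I_4/\|f\|^2$.

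Taking the trace shows $\lambda=\|f\|^2/4$, so $\rho_{ij}=\frac14 I_4$. Because the Pauli matrices are traceless, $\mathrm{Tr}(\sigma_p\otimes\sigma_q)=0$ whenever $(p,q)\ne(0,0)$. Therefore all $t_{pq0}$ with $(p,q)\neq(0,0)$ vanish. The same argument applies to the other two placements of the zero index, using $\rho_{ik}$ and $\rho_{jk}$.

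For a term with a single nonzero index we apply the same reasoning to $\rho_i$. Either we use {\sc 1st-Orth}, which follows from {\sc 2nd-Orth} by taking a further partial trace, or we note that $t_{p00}$ is already a coefficient of $\rho_{ij}$. Either way the weight-one coefficients vanish. Substituting these into the expansion yields the displayed formula.

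The only delicate point is the convention check: whether the realized mating matrix is $\rho$ or $\rho^{\mathtt T}$, and the normalization between $\lambda$ and $\|f\|^2$. Transposition and conjugation both preserve the identity matrix, so neither affects the conclusion.
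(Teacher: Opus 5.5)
Your proof is correct and follows the paper's argument. The identity coefficient is $t_{000}=\mathrm{Tr}\,\rho_{ijk}=1$, and any coefficient with a zero index reduces to a trace against $\rho_{ij}$ (or $\rho_{ik}$, $\rho_{jk}$), which {\sc 2nd-Orth} forces to be $\frac14 I_4$, so tracelessness of the Pauli matrices makes it vanish. Your extra check that $\lambda=\|f\|^2/4$, and that transposition or conjugation conventions do not matter, makes explicit a normalization the paper uses without comment.
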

\begin{proof}
    First, $t_{000}=\mathrm{Tr}(\rho_{ijk}(\sigma_0\otimes\sigma_0\otimes\sigma_0))=\mathrm{Tr}(\rho_{ijk})=1.$
    We only need to prove $t_{pqr}=0$ if there are $1$ or $2$ zeros among $\{p,q,r\}.$
    For example,
    $t_{120}=\mathrm{Tr}(\rho_{ijk}(\sigma_1\otimes \sigma_2\otimes I))=\mathrm{Tr}(\rho_{ij}(\sigma_1\otimes \sigma_2)).$
    By {\sc 2nd-Orth}, $\rho_{ij}=\frac{1}{4}I$.
    So $t_{120}=\frac{1}{4}\mathrm{Tr}(\sigma_1\otimes \sigma_2)=\frac{1}{4}\mathrm{Tr}(\sigma_1)\mathrm{Tr}(\sigma_2)=0.$
    Similarly, $t_{pqr}=0$ if there are $1$ or $2$ zeros among $\{p,q,r\}.$
\end{proof}

So $f$ satisfies $3$rd-Orth iff $t_{pqr}=0$ for all $p,q,r\in\{1,2,3\}.$

\begin{definition}
    For every $b\in B(f)$ with matrix $M(b)\in \mathbf{SU}(2)$, 
    define $\operatorname{vec}(b)=(b_{00},b_{01},b_{10},b_{11})^{\tt T}$, $R_b\in \mathbf{SO}(3)$ be the adjoint rotation of $b$, and the matrix $C(b)\in \mathbb{R}^{3\times 3}$ by $C(b)_{pq}=\operatorname{vec}(b)^\dagger (\sigma_p\otimes\sigma_q) \operatorname{vec}(b)$ for $p,q\in\{1,2,3\}$.
\end{definition}

\begin{lemma}\label{lm: Cb and Rb}
    $C(b)=2R_{b^{-1}}^{\tt T}S$, where $S=\mathrm{diag}(1,-1,1).$
\end{lemma}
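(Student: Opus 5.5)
The plan is to convert the quadratic form defining $C(b)$ into a trace of $2\times2$ matrices, and then read off the rotation $R_{b^{-1}}$ from the adjoint action. Throughout, write $B=M(b)\in\mathbf{SU}(2)$, so that $\operatorname{vec}(b)$ is the row-major vectorization of $B$.

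\emph{Step 1: the quadratic form as a trace.} I will use two standard identities for row-major vectorization:
\[
(P\otimes Q)\operatorname{vec}(B)=\operatorname{vec}(PBQ^{\tt T}),
\qquad
\operatorname{vec}(A)^\dagger\operatorname{vec}(M)=\operatorname{Tr}(A^\dagger M).
\]
Both can be checked entrywise: $\bigl((P\otimes Q)\operatorname{vec}(B)\bigr)_{ac}=\sum_{b,d}P_{ab}Q_{cd}B_{bd}$. Applying them with $P=\sigma_p$, $Q=\sigma_q$ gives
\[
C(b)_{pq}=\operatorname{Tr}\bigl(B^\dagger\sigma_pB\,\sigma_q^{\tt T}\bigr).
\]

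\emph{Step 2: remove the transpose.} Since $\sigma_1^{\tt T}=\sigma_1$, $\sigma_3^{\tt T}=\sigma_3$ and $\sigma_2^{\tt T}=-\sigma_2$, we have $\sigma_q^{\tt T}=S_{qq}\sigma_q$ with $S=\operatorname{diag}(1,-1,1)$. Hence
\[
C(b)_{pq}=S_{qq}\operatorname{Tr}\bigl(B^\dagger\sigma_pB\,\sigma_q\bigr).
\]

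\emph{Step 3: insert the adjoint rotation.} Because $B\in\mathbf{SU}(2)$, we have $B^\dagger=B^{-1}$, so $B^\dagger\sigma_pB=\Phi_{B^{-1}}(\sigma_p)$. By the relation recorded after the $\mathbf{SU}(2)\to\mathbf{SO}(3)$ proposition,
\[
\Phi_{B^{-1}}(\sigma_p)=\sum_{r=1}^3(R_{b^{-1}})_{rp}\sigma_r .
\]
Using $\operatorname{Tr}(\sigma_r\sigma_q)=2\delta_{rq}$, the trace in Step 2 equals $2(R_{b^{-1}})_{qp}$. Therefore
\[
C(b)_{pq}=2(R_{b^{-1}})_{qp}S_{qq}=2\bigl(R_{b^{-1}}^{\tt T}S\bigr)_{pq},
\]
which is the claim. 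As a by-product, this also shows that $C(b)$ is real.

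\emph{Main obstacle.} The only delicate point is consistency of conventions, since a mistake there would move $S$ to the wrong side or replace $R_{b^{-1}}$ by $R_b$. Three conventions must be checked: the row-major ordering $(b_{00},b_{01},b_{10},b_{11})$ used in the vectorization identity, the index convention $U\sigma_pU^\dagger=\sum_r(R_U)_{rp}\sigma_r$, and the fact that the projective sign ambiguity $\pm B$ is harmless because $C(b)$ is quadratic in $B$. I will verify these first, for instance by checking the identity on $B=I_2$, where $C=\operatorname{diag}(2,-2,2)=2S$ and $R_{b^{-1}}=I_3$.
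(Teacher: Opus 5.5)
Your proposal is correct and is essentially the paper's proof. Both rewrite the quadratic form as $\mathrm{Tr}\bigl(M(b)^\dagger\sigma_pM(b)\sigma_q^{\tt T}\bigr)$, replace $\sigma_q^{\tt T}$ by $s_q\sigma_q$, expand $M(b)^\dagger\sigma_pM(b)=\sum_r(R_{b^{-1}})_{rp}\sigma_r$, and use $\mathrm{Tr}(\sigma_r\sigma_q)=2\delta_{rq}$. The only addition is that you verify the row-major vectorization identity entrywise, which the paper uses without comment.
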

\begin{proof}
    Recall the correspondence between $M(b)\in\mathbf{SU}(2)$ and $R_b\in \mathbf{SO}(3)$ in \cref{para: SU2 and SO3} gives
    $$M(b) \sigma_pM(b)^\dagger=\sum_{r=1}^3(R_b)_{rp}\sigma_r\implies M(b)^\dagger\sigma_pM(b)=\sum_{r=1}^3(R_{b^{-1}})_{rp}\sigma_r$$ 
    for $p=1,2,3.$
    Let $s_1=1, s_2=-1, s_3=1$, notice that $s_q\sigma_q=\sigma_q^{\tt T}$ for $q=1,2,3.$
    Then
    \begin{equation*}
        \begin{aligned}
            C(b)_{pq}=&\operatorname{vec}(b)^\dagger(\sigma_p\otimes\sigma_q) \operatorname{vec}(b)=\mathrm{Tr}(M(b)^\dagger \sigma_p M(b)\sigma_q^{\tt T})\\
            =&\mathrm{Tr}\left(\sum_{r=1}^3(R_{b^{-1}})_{rp}\sigma_rs_q\sigma_q\right)
            =s_q \sum_{r=1}^3 (R_{b^{-1}})_{rp}\mathrm{Tr}(\sigma_r\sigma_q)\\
            =&2s_q \sum_{r=1}^3 (R_{b^{-1}})_{rp}\delta_{rq}
            =2s_q(R_{b^{-1}})_{qp}
            =(2R_{b^{-1}}^{\tt T}S)_{pq}.
        \end{aligned}
    \end{equation*}
\end{proof}

\begin{lemma}[Pauli slice]\label{lm: Pauli slice}
    For every $b\in B(f)$ and $r=1,2,3$, we have $\sum_{p,q=1}^3C(b)_{pq}t_{p,q,r}=0.$
\end{lemma}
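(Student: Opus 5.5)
The plan is to read the left-hand side as a Bloch-vector component of a one-qubit reduced density matrix of the merged signature $g:=\partial^b_{ij}f$. By the $B(f)$ closure (\cref{prop: binary closure}), $g$ is a tensor product of binary signatures in $\mathcal U$. That structure forces its one-qubit marginals to be maximally mixed, which will give the vanishing.

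\emph{Step 1: the marginal of $g$ at $x_k$ in terms of $\rho_{ijk}$.} Write $g(x_k,y)=\sum_{a,c}b(a,c)f(a,c,x_k,y)$, where $y$ denotes the remaining three variables. Then
\[
\bigl(M_k(g)M_k(g)^\dagger\bigr)_{x,x'}=\sum_{a,c,a',c'}b(a,c)\overline{b(a',c')}\sum_y f(a,c,x,y)\overline{f(a',c',x',y)} .
\]
The inner sum is $\|f\|^2(\rho_{ijk})_{(acx),(a'c'x')}$. Hence
\[
M_k(g)M_k(g)^\dagger=\|f\|^2\operatorname{Tr}_{ij}\bigl[\rho_{ijk}(P\otimes I_2)\bigr],\qquad P=\operatorname{vec}(b)^{*}\operatorname{vec}(b)^{\mathtt T} \text{ (up to transpose/conjugation)} .
\]
I will fix the exact convention by direct index bookkeeping. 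Any mismatch between $\operatorname{vec}(b)$ and $\overline{\operatorname{vec}(b)}$ is harmless, because $B(f)$ is closed under conjugation and transpose. So if the natural quantity involves $C(\overline b)$ or $C(b^{\mathtt T})$, I apply the argument to $\overline b$ or $b^{\mathtt T}$ instead.

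\emph{Step 2: Pauli expansion of $P$.} Expand the $4\times4$ rank-one matrix $P$ in the Pauli basis with coefficients $\tfrac14\operatorname{vec}(b)^\dagger(\sigma_p\otimes\sigma_q)\operatorname{vec}(b)$. The $(0,0)$ coefficient is $\tfrac14\|b\|^2=\tfrac12$, since $M(b)\in\mathbf{SU}(2)$. The mixed coefficients vanish: $\operatorname{vec}(b)^\dagger(\sigma_p\otimes I)\operatorname{vec}(b)=\operatorname{Tr}(M(b)^\dagger\sigma_pM(b))=\operatorname{Tr}\sigma_p=0$, and similarly for $(0,q)$. The remaining coefficients are $\tfrac14 C(b)_{pq}$. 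Plugging this into Step 1 and taking $\operatorname{Tr}(\,\cdot\,\sigma_r)$ gives
\[
\operatorname{Tr}\bigl(M_k(g)M_k(g)^\dagger\sigma_r\bigr)=\tfrac{\|f\|^2}{4}\Bigl(2t_{00r}+\sum_{p,q=1}^3C(b)_{pq}t_{pqr}\Bigr).
\]
Here $t_{00r}=0$ by \cref{lm: Pauli rep by 2nd-Orth} (or directly from {\sc 1st-Orth}).

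\emph{Step 3: the left side vanishes.} By \cref{lm: 2nd-orth contraction nonzero}, $g\not\equiv0$. By closure, $g=b_1\otimes b_2$ with $b_1,b_2\in\mathcal U$ nonzero, so $M(b_\ell)M(b_\ell)^\dagger=\lambda_\ell I_2$. Whichever factor contains $x_k$, the mating at $x_k$ factorizes, so $M_k(g)M_k(g)^\dagger=\lambda I_2$ for some $\lambda>0$. Therefore its trace against the traceless $\sigma_r$ is zero, and the identity follows.

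The main obstacle is the convention bookkeeping in Step 1: whether the row or column index of $M(b)$ pairs with $x_i$, and which of $\operatorname{vec}(b)$ or its conjugate appears. These choices affect whether one obtains $C(b)$, $C(\overline b)$, or $C(b^{\mathtt T})$. I would resolve this by the closure of $B(f)$ under conjugation and transpose noted above, rather than by tracking signs through $S$.
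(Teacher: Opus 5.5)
Your proof is correct and is essentially the paper's argument. The paper also contracts $x_i,x_j$ with $\overline b$ on the $f$ side and $b$ on the $\overline f$ side, which amounts to mating $\partial_{ij}^{\overline b}f$ at $x_k$; it then observes that the result is a scalar times $I_2$ by the $B(f)$ closure and reads off the $\sigma_r$ coefficient. Your bookkeeping does produce $C(\overline b)$ when you contract with $b$, which is exactly why the paper uses $\overline b$ directly. Expanding the rank-one projector (so that unitarity of $M(b)$ kills the mixed terms and only $t_{00r}=0$ is needed), rather than using the {\sc 2nd-Orth} vanishing of all low-weight $t$'s, is a harmless variant.
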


\begin{proof}
    Suppose after merging $f$ and $\overline{f}$, the dangling edges are $x_i,x_j,x_k$ and $y_i,y_j,y_k$.
    Since $b\in B(f)$, we have $\overline{b}\in B(f)$.
    Connect $x_i$ and $x_j$ using $\overline{b}$, $y_i$ and $y_j$ using $b$, we get a binary signature $g(x_k,y_k)\in \mathcal{U}$ by binary closure \cref{prop: binary closure}.
    We have
    \begin{equation*}
        \begin{aligned}
            \mu I_2
            =&M(g)M(g)^{\dagger}=(\operatorname{vec}(b)^\dagger \otimes I_2) \rho_{ijk}(\operatorname{vec}(b) \otimes I_2)\\
            =&\frac{1}{8}(\operatorname{vec}(b)^\dagger \otimes I_2)\left(I_2^{\otimes3}+\sum_{p,q,r=1}^3 t_{pqr} \sigma_p\otimes\sigma_q\otimes\sigma_r\right)(\operatorname{vec}(b) \otimes I_2)\\
            =& \frac{1}{8}\left(\operatorname{vec}(b)^{\dagger}\operatorname{vec}(b)\otimes I_2+\sum_{p,q,r=1}^3t_{pqr}(\operatorname{vec}(b)^\dagger (\sigma_p\otimes\sigma_q) \operatorname{vec}(b))\sigma_r\right)\\
            =&\frac{|\operatorname{vec}(b)|^2}{8}I_2+\frac{1}{8}\sum_{r=1}^3\left(\sum_{p,q=1}^3 C(b)_{pq}t_{pqr}\right)\sigma_r.&
        \end{aligned}
    \end{equation*}
    Since $\{I_2,\sigma_1,\sigma_2,\sigma_3\}$ is a basis of $\mathbb{C}^{2\times 2}$, we have $\sum_{p,q=1}^3 C(b)_{pq}t_{pqr}=0$ for $r=1,2,3.$
\end{proof}
\begin{remark}
    For $r=1,2,3$, we define $T^{(r)}$ be the $3\times 3$ matrix with entries $T^{(r)}_{pq}=t_{pqr}$.
    This is a slice of the the Pauli coefficients $t_{pqr}.$
    \cref{lm: Pauli slice} says $T^{(r)}$ is orthogonal to $C(b)$ under the Frobenius inner product in $M_3(\mathbb{R}).$
    Similarly, we can define $T^{(p)}$ and $T^{(q)}$, and they are also orthogonal to $C(b)$.
\end{remark}

The following lemma utilizes the assumption that $B(f)=A_4,S_4$ or $A_5$.
\begin{lemma}\label{lm: rep lemma}
    Let $W=\mathrm{Span}_{\mathbb{R}}\{C(b)\mid b\in B(f)\}$, then $W=M_3(\mathbb{R})$.
\end{lemma}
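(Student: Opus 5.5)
The plan is to reduce the statement to a claim about the rotation group $G:=\{R_b\mid b\in B(f)\}\le\mathbf{SO}(3)$, and then verify that claim concretely using a tetrahedral subgroup.

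First, by \cref{lm: Cb and Rb}, $C(b)=2R_{b^{-1}}^{\tt T}S$ with $S=\mathrm{diag}(1,-1,1)$ invertible. This formula does not depend on the choice of $\mathbf{SU}(2)$ representative of $[b]$, since the sign $\pm$ cancels in $\operatorname{vec}(b)^\dagger(\sigma_p\otimes\sigma_q)\operatorname{vec}(b)$ and $R_b=R_{-b}$. Since $B(f)$ is a group, $b\mapsto b^{-1}$ is a bijection of $B(f)$. Moreover $R^{\tt T}=R^{-1}$ for $R\in\mathbf{SO}(3)$, and $G$ is closed under inverses. Hence
\[
W=\Bigl(\mathrm{Span}_{\mathbb R}\{R^{\tt T}\mid R\in G\}\Bigr)S=\Bigl(\mathrm{Span}_{\mathbb R}G\Bigr)S .
\]
Right multiplication by $S$ is a linear bijection of $M_3(\mathbb R)$, so it suffices to prove $\mathrm{Span}_{\mathbb R}G=M_3(\mathbb R)$. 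The span of a finite group of matrices is closed under multiplication, so $\mathrm{Span}_{\mathbb R}G$ is a real algebra, namely the algebra generated by $G$.

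Second, we use that each polyhedral group ($A_4$, $S_4$, $A_5$) contains a tetrahedral subgroup. After conjugating by a suitable orthogonal matrix $Q\in\mathbf{SO}(3)$, this subgroup is the standard rotation group of the tetrahedron inscribed in the cube. Conjugation by $Q$ is an algebra automorphism of $M_3(\mathbb R)$, so we may assume $G$ contains the diagonal rotations
\[
\mathrm{diag}(1,-1,-1),\qquad \mathrm{diag}(-1,1,-1),\qquad \mathrm{diag}(-1,-1,1),
\]
the identity, and the cyclic permutation matrix $P$ (the $3$-fold rotation about $(1,1,1)$). The four diagonal matrices span all real diagonal matrices. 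Multiplying by $P$ and $P^2$ then puts every $DP^k$ in the span, for $D$ diagonal and $k=0,1,2$. These matrices together cover all $9$ matrix-unit positions, so the span is $M_3(\mathbb R)$.

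The only point needing care is the standard fact that $A_5$ (and $S_4$) contains a tetrahedral subgroup, and that in an adapted orthonormal frame this subgroup has the explicit form above. For $S_4$ this is the index-$2$ subgroup. For $A_5$ we take the stabilizer of one of the five inscribed cubes (equivalently, of one of the five compounds of the icosahedron), which is $\cong A_4$. As an alternative route, we could argue that the defining $3$-dimensional representation is absolutely irreducible: its character norm is $\frac1{|G|}\sum|\chi|^2=1$. Burnside's theorem then gives $\mathrm{Span}_{\mathbb C}G=M_3(\mathbb C)$, and since the matrices are real, $\dim_{\mathbb R}\mathrm{Span}_{\mathbb R}G=9$. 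I expect the explicit tetrahedral computation to be the cleaner argument to write.
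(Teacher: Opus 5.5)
Your proof is correct, and your primary argument takes a different, more elementary route than the paper's. The paper cites that the rotation representation of $A_4$, $S_4$, $A_5$ is irreducible, applies the density theorem to get $\mathrm{Span}_{\mathbb C}\{R_b\}=M_3(\mathbb C)$, and passes to the real span since the $R_b$ are real. It then transfers to $W$ via $C(b)=2R_{b^{-1}}^{\tt T}S$ in one line. Your fallback (character norm $1$, hence absolute irreducibility, then Burnside) is exactly this argument, stated more carefully: the density theorem needs irreducibility over $\mathbb C$, which the paper leaves implicit.

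Your main route instead uses that the span can only grow with the group, so a tetrahedral subgroup handles all three cases at once, and you then fill the nine matrix-unit positions by hand. This is self-contained and shows concretely why the span is full. The price is the normal-form fact you flag, which has a short proof worth including:
\begin{itemize}
\item The nontrivial elements of the Klein four-subgroup of $A_4\le\mathbf{SO}(3)$ are commuting half-turns, so their axes are mutually perpendicular. These axes give a frame in which the three elements are the diagonal sign matrices of determinant $1$.
\item An element of order $3$ normalizes this subgroup, so it permutes the three axes by a $3$-cycle. Hence it is a signed permutation matrix $DP^{\pm1}$ with $\det D=1$, i.e.\ $D$ lies in the Klein group, and therefore $P\in G$.
\end{itemize}

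Your explicit reduction from $W$ to $\mathrm{Span}_{\mathbb R}G$ also fills in steps the paper skips: independence of the $\mathbf{SU}(2)$ representative, bijectivity of $b\mapsto b^{-1}$, and invertibility of right multiplication by $S$.
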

\begin{proof}
    By Section 4.8 and Theorem 4.5.1 in \cite{reptheory}, for $B(f)=A_4,S_4,A_5$, the natural three-dimensional rotation representation $\rho:b\mapsto R_b$ is irreducible.
    Apply the density theorem (Theorem 3.2.2 in \cite{reptheory}), we have $\mathrm{Span}_{\mathbb{C}}\{R_b\mid b\in B(f)\}=M_3(\mathbb{C})$.
    Since all matrices $R_b$ are real, $\mathrm{Span}_{\mathbb{R}}\{R_b\mid b\in B(f)\}=M_3(\mathbb{R})$.
    By \cref{lm: Cb and Rb}, $W=M_3(\mathbb{R}).$
\end{proof}

\begin{lemma}\label{lm: arity 6 polyhedral group 3rd orth}
    If $B(f)$ is isomorphic to a polyhedral group, then $f$ satisfies {\sc 3rd-Orth}.
\end{lemma}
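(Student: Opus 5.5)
By \cref{lm: Pauli rep by 2nd-Orth}, it suffices to show that $t_{pqr}=0$ for all $p,q,r\in\{1,2,3\}$, where $i,j,k$ are arbitrary distinct indices of $f$. We fix such $i,j,k$ and a slice index $r\in\{1,2,3\}$, and consider the real $3\times3$ matrix $T^{(r)}$ with entries $T^{(r)}_{pq}=t_{pqr}$. First we check that these entries are real. Since $\rho_{ijk}$ is Hermitian and each $\sigma_p\otimes\sigma_q\otimes\sigma_r$ is Hermitian, every coefficient $t_{pqr}=\mathrm{Tr}(\rho_{ijk}(\sigma_p\otimes\sigma_q\otimes\sigma_r))$ is real. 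For the same reason, $C(b)$ has real entries, since $C(b)_{pq}$ is the expectation value of a Hermitian operator; this is also consistent with \cref{lm: Cb and Rb}.

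Next we apply \cref{lm: Pauli slice}. For every $b\in B(f)$ it gives $\langle C(b),T^{(r)}\rangle_F=\sum_{p,q}C(b)_{pq}t_{pqr}=0$, where $\langle\cdot,\cdot\rangle_F$ is the Frobenius inner product on $M_3(\mathbb R)$. By linearity, $T^{(r)}$ is then Frobenius-orthogonal to the real span $W=\mathrm{Span}_{\mathbb R}\{C(b)\mid b\in B(f)\}$. Because $B(f)$ is polyhedral, \cref{lm: rep lemma} gives $W=M_3(\mathbb R)$. In particular $T^{(r)}\in W$, so $\langle T^{(r)},T^{(r)}\rangle_F=0$. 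As $T^{(r)}$ is real, this forces $T^{(r)}=0$.

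Since $r$ was arbitrary, $t_{pqr}=0$ for all $p,q,r\in\{1,2,3\}$. Then $\rho_{ijk}=\frac18 I_8$, that is, $M_{\{i,j,k\}}(f)M_{\{i,j,k\}}(f)^\dagger$ is a positive multiple of $I_8$. The multiple is $\|f\|^2/8$, which does not depend on $\{i,j,k\}$, so $f$ satisfies {\sc 3rd-Orth}.

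The only delicate step is applying \cref{lm: Pauli slice} for every $b\in B(f)$. We need that merging with $\overline b$ and $b$ as described in that lemma is legitimate: $B(f)$ is closed under conjugation, and the binary closure \cref{prop: binary closure} ensures that the resulting $g$ lies in $\mathcal U$. Both facts hold under assumption \eqref{eq: arity 6 assumption}. We also need the reality argument above, because \cref{lm: rep lemma} is stated over $\mathbb R$.
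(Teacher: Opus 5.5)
Your proposal is correct and is essentially the paper's own argument: \cref{lm: Pauli slice} places each slice $T^{(r)}$ in $W^{\perp}$, and \cref{lm: rep lemma} gives $W=M_3(\mathbb R)$, so $T^{(r)}=0$. Your reality check and the explicit constant $\mu=\|f\|^2/8$ just spell out steps the paper leaves implicit.

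One small precision about your final paragraph. What \cref{lm: Pauli slice} actually uses is that the gadget $g$ is a scalar multiple of $I_2$, and membership in $\mathcal U$ alone would not give this. The scalar form holds because $g$ is the mating of $\partial^{\overline b}_{ij}f\in B(f)^{\otimes}$ on the single dangling variable $x_k$, and $x_k$ lies in a unitary binary factor.
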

\begin{proof}
    By \cref{lm: Pauli slice}, the matrix $T^{(r)}$ with entries $(T^{(r)})_{pq}=t_{pqr}$ is in $W^{\perp}$. 
    Here we use the Frobenius inner product in $M_3(\mathbb{R})$: $\braket{A,B}=\mathrm{Tr}(A^TB).$
    By \cref{lm: rep lemma}, $W^\perp=\{O\}.$ 
    Thus, $T^{(r)}=O$ for every $r=1,2,3$, or $t_{pqr}=0$ for every $p,q,r\in\{1,2,3\}.$
\end{proof}

\subsection{Dihedral Groups of Order at Least Six}\label{subsec: dihedral group}
In this subsection, we assume that $B(f)\cong D_n(n\ge 3)$ which is the dihedral group of order $2n.$
We still use the Pauli representation of $\rho_{ijk}$ in \cref{subsec: polyhedral groups}.

Suppose in $\mathbb{R}^3$, the regular $n$-polygon $P_n$ lies on the $x$-$y$ plane, and $P_n$ is symmetric relative to the $x$-axis.
Then $D_{n}$ is generated by a rotation $r$ through angle $\theta=\frac{2\pi}{n}$ around the $z$-axis, and a rotation $h$ through angle $\pi$ around the $x$-axis.
Specifically,
$$r=\left[\begin{matrix}
    \cos \theta & -\sin \theta &0\\
    \sin \theta & \cos \theta & 0\\
    0 & 0 & 1
\end{matrix}\right],\quad
h=\left[\begin{matrix}
    1 & 0 & 0\\
    0 & -1 & 0\\
    0 & 0 & -1
\end{matrix}\right].
$$
These matrices satisfy $r^n=I,h^2=I,hrh=r^{-1}.$
Therefore $D_n=\braket{r,h}=\{r^k,hr^k\mid k\in[n]\}.$
Let $L$ be the space spanned by the $z$-axis in $\mathbb{R}^3,$ and $P$ be the $x$-$y$ plane.
Notice that every element in $D_n$ is block diagonal relative to $P\oplus L.$
So $\mathrm{Span}_{\mathbb{R}}\{R_b\mid b\in B(f)\}\subseteq\mathrm{End}(P)\oplus \mathrm{End}(L)$.
By inspection, we can see $D_n$ generates every element in $\mathrm{End}(P)\oplus \mathrm{End}(L)$.
Thus, $\mathrm{Span}_{\mathbb{R}}\{R_b\mid b\in B(f)\}=\mathrm{End}(P)\oplus \mathrm{End}(L)$.
By \cref{lm: Cb and Rb}, $W=\mathrm{Span}_{\mathbb{R}}\{C(b)\mid b\in B(f)\}=\mathrm{End}(P)\oplus \mathrm{End}(L).$
Next we compute $W^\perp.$
Suppose $X\in W^\perp$, then $\mathrm{Tr}(C^{\tt T}X)=0$ for every $C\in \mathrm{End}(P)\oplus \mathrm{End}(L).$
$C$ has the form $\left[\begin{smallmatrix}
    A & 0\\
    0 & \alpha
\end{smallmatrix}\right]$ where $A\in M_2(\mathbb{R})$ and $\alpha\in \mathbb{R}$.
We write $X=\left[\begin{smallmatrix}
    Y & u\\
    v & \beta
\end{smallmatrix}\right]$ where $Y\in M_{2}(\mathbb{R})$, $u\in \mathbb{R}^{2\times 1}$, $v\in \mathbb{R}^{1\times 2}$, $\beta\in \mathbb{R}$.
Then 
$$
\mathrm{Tr}\left(
\left[\begin{matrix}
    A^{\tt T} & 0\\
    0 & \alpha
\end{matrix}\right]
\left[\begin{matrix}
    Y & u\\
    v & \beta
\end{matrix}\right]
\right)
=\mathrm{Tr}(A^{\tt T}Y)+\alpha\beta=0
$$
for every $A\in M_{2}(\mathbb{R})$ and $\alpha\in \mathbb{R}.$
Thus, $Y=O$ and $\beta=0.$
We have 
$$
W^{\perp}=\{\left[\begin{matrix}
    O & u\\
    v & 0
\end{matrix}\right]\mid u\in \mathbb{R}^{2\times 1}, v\in \mathbb{R}^{1\times 2}\}.
$$
By Pauli slice \cref{lm: Pauli slice}, $T^{(p)},T^{(q)}$ and $T^{(r)}\in W^\perp$ for every $p,q,r\in\{1,2,3\}.$
By inspection, the only possibility is that $t_{pqr}=0$ for every $p,q,r\in\{1,2,3\}.$
Thus, we prove the following lemma:
\begin{lemma}\label{lm: arity 6 dihedral group 3rd orth}
    If $B(f)\cong D_{n}(n\ge 3)$, then $f$ satisfies {\sc 3rd-Orth}.
\end{lemma}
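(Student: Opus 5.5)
The plan is to follow the template of \cref{lm: arity 6 polyhedral group 3rd orth}: write $\rho_{ijk}$ in its Pauli form from \cref{lm: Pauli rep by 2nd-Orth}, so that {\sc 3rd-Orth} is equivalent to $t_{pqr}=0$ for all $p,q,r\in\{1,2,3\}$. Every slice of the coefficient tensor must then be shown to lie in $W^\perp$, where $W=\mathrm{Span}_{\mathbb R}\{C(b)\mid b\in B(f)\}$. We fix the triple $\{i,j,k\}$ and regard the $t_{pqr}$ as a real $3\times3\times3$ tensor; they are real because $\rho_{ijk}$ is Hermitian.

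First, we conjugate so that $B(f)\cong D_n$ is in the standard position described above, with rotation $r$ about the $z$-axis and flip $h$ about the $x$-axis. This step needs some care, since the realized binaries are fixed and we cannot simply change basis. The span $\mathrm{Span}_{\mathbb R}\{R_b\}$ is basis-independent up to simultaneous conjugation by some $Q\in\mathbf{SO}(3)$. We therefore carry out the argument in the rotated frame, replacing the Pauli triple by $Q$ applied to $(\sigma_1,\sigma_2,\sigma_3)$. Concretely, we set $\tilde t_{pqr}$ to be the coefficients with respect to the rotated Pauli basis on each qubit. Because $t\mapsto\tilde t$ is an invertible linear change on each leg, $t\equiv0$ if and only if $\tilde t\equiv0$. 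By \cref{lm: Cb and Rb}, $C(b)=2R_{b^{-1}}^{\tt T}S$, so the relation in \cref{lm: Pauli slice} transforms covariantly. The sign matrix $S$ only permutes the structure, and after the change of frame it becomes a fixed orthogonal matrix that is absorbed into the second leg.

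The main computation is the span. The rotations $r^k$ for $n\ge3$ span all of $\mathrm{SO}(2)$-type $2\times2$ blocks $\{aI+bJ\}$ together with the $1$ in the $L$ block. The reflections $hr^k$ contribute the block $\{\text{reflection-type}\}\oplus(-1)$. Together these give $\mathrm{End}(P)\oplus\mathrm{End}(L)$, which is the dimension count $4+1=5$; here $n\ge3$ is used, because for $n=2$ the elements $r^k$ do not span the rotation plane. The orthogonal complement $W^\perp$ then consists of the off-block-diagonal matrices $\left[\begin{smallmatrix}O&u\\v&0\end{smallmatrix}\right]$.

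We apply \cref{lm: Pauli slice} with the roles of the three legs permuted, by merging the pair $\{i,k\}$ or $\{j,k\}$ with $b$ and $\overline b$ instead of $\{i,j\}$. This puts every slice $T^{(r)}$ (legs $1,2$), $T^{(p)}$ (legs $2,3$) and $T^{(q)}$ (legs $1,3$) in $W^\perp$. Writing $P$-indices as $1,2$ and the $L$-index as $3$, the statement ``slice in $W^\perp$'' means that the entry is zero whenever the two free indices are both in $P$ or both equal to $3$. We now split into cases for an arbitrary triple $(p,q,r)$. Among three indices drawn from the two-element type set $\{P,L\}$, two must share a type, so some pair of legs has the same type, and the slice in the remaining leg kills $t_{pqr}$. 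The case $p=q=r=3$ is covered as well, since the pair is of type $L,L$ with $\beta=0$. Hence $t\equiv0$, and $f$ satisfies {\sc 3rd-Orth}.

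The main obstacle is justifying that \cref{lm: Pauli slice} applies to every pair of legs with the same space $W$, which includes the transpose and conjugate issues (the sign $S$, and the use of $b$ versus $\overline b$). The first thing to try is to use the closure of $B(f)$ under conjugation and transpose noted after \cref{thm: binary set is finite group}, so that the relevant span is again $W$ up to multiplication by $S$. Since $S$ is block diagonal with respect to $P\oplus L$, it preserves the block structure, and the complement is unchanged.
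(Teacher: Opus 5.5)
Your route is the paper's: the Pauli form of $\rho_{ijk}$, the Pauli slice lemma applied to all three pairs of legs, $W=\mathrm{End}(P)\oplus\mathrm{End}(L)$ (your $4+1$ count with $n\ge 3$ is the paper's ``by inspection''), and the observation that among three indices of two types some pair of legs shares a type. The paper simply puts $B(f)$ in standard position, where $S=\mathrm{diag}(1,-1,1)$ visibly preserves $P$ (the $xy$-plane) and $L$ (the $z$-axis), and reads off $W^\perp$ directly in Pauli coordinates. What you call the main obstacle is not really one. The slice lemma's proof does not depend on which pair is merged, and $W$ depends only on $B(f)$.

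You are right that the standard position needs justification, but the mechanism you propose fails. Take a frame $Q$ with $Q^{\mathtt T}\mathrm{Span}_{\mathbb R}\{R_b\}Q=\mathcal A:=\mathrm{End}(P)\oplus\mathrm{End}(L)$. Since $C(b)=2R_bS$, the relation for a merged ordered pair says that the slice, multiplied by $S'=Q^{\mathtt T}SQ$ on the second merged leg, lies in $\mathcal A^\perp$. To turn all three relations into plain membership in $\mathcal A^\perp$ by transforming legs, each pair among the three legs would need exactly one transformed leg. That is impossible for a triangle, and reversing the merge order does not help. So ``absorbing $S$ into the second leg'' cannot work unless $S'\in\mathcal A$.

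Your closing claim that $S$ is block diagonal with respect to $P\oplus L$ is the correct fact, but in the rotated frame it is exactly what has to be proved. It follows from the conjugation closure of $B(f)$:
\begin{itemize}
\item Since $\overline{\sigma_p}=s_p\sigma_p$ with $s=(1,-1,1)$, one gets $R_{\overline b}=SR_bS$.
\item Hence conjugation by $S$ preserves the group $\{R_b\}\cong D_n$, and so preserves its unique cyclic subgroup of order $n\ge 3$.
\item Therefore $S$ fixes the axis of that subgroup, giving $SL=L$ and $SP=P$, i.e.\ $S'\in\mathcal A$.
\end{itemize}
Then $Q^{\mathtt T}WQ=\mathcal A S'=\mathcal A$. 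A single orthogonal change of Pauli basis, applied identically to all three legs, puts every slice into off-block-diagonal form. Your pigeonhole step then finishes the proof with no absorption needed.
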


\subsection{Klein Group}\label{subsec: Klein group}
Let $B(f)=\{b_0,b_1,b_2,b_3\}\cong K_4$.
It is proved in \cite{MingjiProgram} that we only need to consider the following two cases:
\begin{enumerate}
    \item
    (Standard Case).
    $$
    M(b_0)=\left[\begin{matrix}
    1 & 0\\
    0 & 1
    \end{matrix}\right],
    M(b_1)=\left[\begin{matrix}
    0 & \mathfrak{i}\\
    \mathfrak{i} & 0
    \end{matrix}\right],
    M(b_2)=\left[\begin{matrix}
    \mathfrak{i} & 0\\
    0 & -\mathfrak{i}
    \end{matrix}\right],
    M(b_3)=\left[\begin{matrix}
    0 & 1\\
    -1 & 0
    \end{matrix}\right].$$
        
    \item
    (Non-standard Case).
    $$M(b_0)=\left[\begin{matrix}
    1 & 0\\
    0 & 1
    \end{matrix}\right],
    M(b_1)=\left[\begin{matrix}
    \mathfrak{i} & 0\\
    0 & -\mathfrak{i}
    \end{matrix}\right],
    M(b_2)=\frac{1}{\sqrt{2}}\left[\begin{matrix}
    0 & 1+\mathfrak{i}\\
    -1+\mathfrak{i} & 0
    \end{matrix}\right],
    M(b_3)=\frac{1}{\sqrt{2}}\left[\begin{matrix}
    0 & -1+\mathfrak{i}\\
    1+\mathfrak{i} & 0
    \end{matrix}\right].$$
\end{enumerate}

\subsubsection{Standard Case}\label{subsubsec: Standard Case Klein}
By \cref{para: SU2 and SO3}, we can compute the corresponding rotation matrices of $b_0,b_1,b_2,b_3$ in $\mathbf{SO}(3)$:
$$
R_{b_0}=
\left[\begin{matrix}
    1 & 0 & 0\\
    0 & 1 & 0\\
    0 & 0 & 1\\
\end{matrix}\right],
R_{b_1}=
\left[\begin{matrix}
    1 & 0 & 0\\
    0 & -1 & 0\\
    0 & 0 & -1\\
\end{matrix}\right],
R_{b_2}=
\left[\begin{matrix}
    -1 & 0 & 0\\
    0 & -1 & 0\\
    0 & 0 & 1\\
\end{matrix}\right],
R_{b_3}=
\left[\begin{matrix}
    -1 & 0 & 0\\
    0 & 1 & 0\\
    0 & 0 & -1\\
\end{matrix}\right].
$$
So $\mathrm{Span}_{\mathbb R}\{R_{b}\mid b\in B(f)\}=\{\mathrm{diag}(x,y,z)\mid x,y,z\in \mathbb R\}.$
By \cref{lm: Cb and Rb}, $W:=\mathrm{Span}_{\mathbb R}\{C(b)\mid b\in B(f)\}=\{\mathrm{diag}(x,y,z)\mid x,y,z\in \mathbb R\}.$
So $W^\perp=\{A\in M_3(\mathbb{R})\mid A_{11}=A_{22}=A_{33}=0\}.$
By Pauli slice \cref{lm: Pauli slice}, $t_{pqr}\neq 0\implies p,q,r\text{ are mutually distinct}.$

In the standard case, by the realification (or realnumberizing) method (see \cite{MingjiProgram} v3, Section 5), $f$ is a real signature up to a normalization.
Since normalization doesn't change the complexity, we assume $f$ is real.
Then $\rho_{ijk}=M(f)M(f)^{\dagger}=M(f)M(f)^{\tt T}=\rho_{ijk}^{\tt T}.$
It follows that $t_{pqr}=\mathrm{Tr}(\rho_{ijk}(\sigma_p\otimes \sigma_q\otimes \sigma_r))=\mathrm{Tr}((\sigma_p\otimes \sigma_q\otimes \sigma_r)^{\tt T}\rho_{ijk}^{\tt T})=\mathrm{Tr}(\rho_{ijk}(\sigma_p\otimes \sigma_q\otimes \sigma_r)^{\tt T})$.
Assume that $p,q,r$ are mutually distinct. 
Then $(p,q,r)$ contains exactly one $2$.
Notice that $\sigma_1^{\tt T}=\sigma_1,\sigma_2^{\tt T}=-\sigma_2,\sigma_3^{\tt T}=\sigma_3$, we have  $(\sigma_p\otimes \sigma_q\otimes \sigma_r^{\tt T})=-\sigma_p\otimes \sigma_q\otimes \sigma_r.$
So $t_{pqr}=\mathrm{Tr}(\rho_{ijk}(\sigma_p\otimes \sigma_q\otimes \sigma_r)^{\tt T})=-\mathrm{Tr}(\rho_{ijk}(\sigma_p\otimes \sigma_q\otimes \sigma_r))=-t_{pqr}$.
Thus, $t_{qpr}=0$ if $p,q,r$ are mutually distinct.
We have proved:
\begin{lemma}\label{lm: K4 standard case}
    In the standard case of $B(f)\cong K_4$, $f$ satisfies {\sc 3rd-Orth}.
\end{lemma}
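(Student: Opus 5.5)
Under assumption \eqref{eq: arity 6 assumption}, with $B(f)=\{b_0,b_1,b_2,b_3\}$ in the standard form above, I will show that every weight-three Pauli coefficient $t_{pqr}$ of $\rho_{ijk}$ vanishes, for an arbitrary triple $\{i,j,k\}\subseteq[6]$. By \cref{lm: Pauli rep by 2nd-Orth}, this is equivalent to {\sc 3rd-Orth}. The argument has two independent inputs: the Pauli slice constraints, which use only the group structure, and the realification, which uses the fact that $f$ is real up to a scalar.

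\textbf{Step 1 (Pauli slice).} The adjoint rotations $R_{b_0},\dots,R_{b_3}$ computed above are the identity and the three diagonal sign matrices with exactly two entries $-1$. Their real span is the full diagonal subalgebra of $M_3(\mathbb R)$. By \cref{lm: Cb and Rb}, $C(b)=2R_{b^{-1}}^{\tt T}S$ with $S$ diagonal. Since each $R_b$ is diagonal and involutive, $W=\mathrm{Span}_{\mathbb R}\{C(b)\}$ is exactly the diagonal subalgebra, so $W^\perp$ consists of the matrices with zero diagonal.

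Applying \cref{lm: Pauli slice} to each slice $T^{(r)}$ gives $t_{ppr}=0$ for all $p,r$. Using the same lemma with the roles of the three indices permuted (pair $(i,k)$, then pair $(j,k)$), we also get $t_{prp}=0$ and $t_{rpp}=0$. Hence $t_{pqr}\neq0$ forces $p,q,r$ to be pairwise distinct, that is, $\{p,q,r\}=\{1,2,3\}$.

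\textbf{Step 2 (Realification).} I will invoke the realification argument of \cite{MingjiProgram} (v3, Section 5): in the standard Klein case, $f$ is a scalar multiple of a real signature. Scalars change neither the complexity nor the normalized state, so I may assume $f$ is real. Then $M_{\{ijk\}}(f)$ is real, and so $\rho_{ijk}\propto M M^{\tt T}$ is real symmetric.

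\textbf{Step 3 (Transposition).} For real symmetric $\rho$, taking the transpose inside the trace gives $t_{pqr}=\mathrm{Tr}(\rho\,\sigma_p\otimes\sigma_q\otimes\sigma_r)=\mathrm{Tr}(\rho\,(\sigma_p\otimes\sigma_q\otimes\sigma_r)^{\tt T})$. We have $\sigma_1^{\tt T}=\sigma_1$, $\sigma_3^{\tt T}=\sigma_3$ and $\sigma_2^{\tt T}=-\sigma_2$. Since a triple with $\{p,q,r\}=\{1,2,3\}$ contains exactly one index $2$, the transpose contributes exactly one sign, giving $t_{pqr}=-t_{pqr}$ and hence $t_{pqr}=0$. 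Combined with Step 1, all weight-three coefficients vanish, so $\rho_{ijk}=I_8/8$. As $\{i,j,k\}$ was arbitrary, $f$ satisfies {\sc 3rd-Orth}.

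The only nontrivial ingredient is Step 2, since the group-theoretic constraints alone leave the six coefficients with $\{p,q,r\}=\{1,2,3\}$ free. If I had to verify it directly rather than cite it, I would attempt the following. The standard $B(f)$ contains $[\mathfrak iX]$ and $[\mathfrak iZ]$; after normalizing phases this yields $X$ and $Z$ as binary signatures, both real. The closure \cref{prop: binary closure} constrains $\partial^{b}_{ij}f$ to lie in $B(f)^{\otimes}$. I would use this to show that the entries of $f$ share a common phase, which is presumably how the cited argument proceeds.
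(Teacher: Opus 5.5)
Your proposal is correct and matches the paper's proof step for step. Both arguments first use the Pauli slice constraints with $W$ equal to the diagonal subalgebra, so that $t_{pqr}\neq0$ forces $\{p,q,r\}=\{1,2,3\}$. Both then cite the realification from \cite{MingjiProgram} and use the sign from $\sigma_2^{\tt T}=-\sigma_2$ to kill the remaining coefficients; your closing sketch of a direct realification argument is unnecessary, since the paper also just cites it.
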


\subsubsection{Non-standard Case}\label{subsubsec: Non-standard Klein}
In this case the Pauli slice constraints alone do not force all the
weight-three Pauli coefficients to vanish.  We instead use the full
factorization assertion in \cref{prop: binary closure}.

\begin{lemma}[Common matching]\label{lm: common matching}
    Up to a permutation of variables,
    \begin{equation}\label{eq: common matching}           
        f=\sum_{s=0}^3w_sb_s(x_1,x_2)b_{p(s)}(x_3,x_4)b_{q(s)}(x_5,x_6),
    \end{equation}
    where $w_s\neq 0\,(\forall s, 0\le s\le 3)$ and $p,q$ are both permutations of $\{0,1,2,3\}.$
\end{lemma}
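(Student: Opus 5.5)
We plan to use the factorization part of the $B(f)$ closure (\cref{prop: binary closure}) in the non-standard Klein case. Since $B(f)=\{b_0,b_1,b_2,b_3\}$ has four elements whose matrices are linearly independent (each is projectively a distinct Pauli matrix up to a fixed unitary conjugation), they form a basis of $\mathbb C^{2\times2}$. Hence, for the pair $(x_1,x_2)$, we can write
\[
f=\sum_{s=0}^3 b_s(x_1,x_2)\otimes g_s(x_3,\ldots,x_6)
\]
for uniquely determined quaternary $g_s$. Merging $x_1,x_2$ with a suitable element of $B(f)$ extracts a single $g_s$ up to a nonzero scalar: using the trace form $\operatorname{Tr}(M(b_s)^{\mathtt T}M(b))$, contracting with $b$ gives $\sum_s \operatorname{Tr}(M(b_s)^{\mathtt T} M(b))\,g_s$. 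By choosing $b=\overline{b_t}$ or $b_t$ as appropriate, we isolate one term, exploiting orthogonality of the Pauli-type basis and conjugate closure of $B(f)$. By \cref{lm: 2nd-orth contraction nonzero} each such merge is nonzero, so every $g_s\neq0$, and by the closure assumption $g_s\in B(f)^{\otimes}$. A nonzero quaternary element of $B(f)^{\otimes}$ is $w\, b_{p}(x_a,x_b)\,b_{p'}(x_c,x_d)$ for some perfect matching $\{\{a,b\},\{c,d\}\}$ of $\{3,4,5,6\}$.

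The next step shows that all four $g_s$ use the same matching of $\{3,4,5,6\}$. Here we use {\sc 2nd-Orth}. Suppose instead that $g_0$ pairs $\{3,4\},\{5,6\}$ while $g_1$ pairs $\{3,5\},\{4,6\}$. Consider $M_{12}(f)$: its rows are (after the basis change) proportional to $g_0,\ldots,g_3$, and {\sc 2nd-Orth} makes these rows pairwise orthogonal with equal norms. Also merging, e.g., $x_3,x_4$ of $f$ with an element of $B(f)$ must again lie in $B(f)^{\otimes}$, i.e.\ be a product of two binaries. Each merge produces a 4-ary $\sum_s b_s(x_1,x_2)\,h_s(x_5,x_6)$, which lies in $B(f)^{\otimes}$ only if the $h_s$ are all proportional to one binary or the sum is a single matching term. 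We would check that a mixed matching forces a rank condition that contradicts the linear independence of the $b_s$ or the nonvanishing from \cref{lm: 2nd-orth contraction nonzero}. So a common matching emerges, which after permuting variables is $\{3,4\},\{5,6\}$.

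Finally, with $g_s=w_s\,b_{p(s)}(x_3,x_4)\,b_{q(s)}(x_5,x_6)$, we must show that $p$ and $q$ are bijections. Suppose $p(s)=p(s')$ with $s\neq s'$. Then $M_{34}(f)$ (rows indexed by $x_3,x_4$ in the $b$-basis) has a row supported on fewer than four of the $b_s(x_1,x_2)$-directions, while some other basis direction receives no contribution. That makes $\operatorname{rank} M_{\{3,4\}}(f)<4$, contradicting {\sc 2nd-Orth}, which forces $M_{34}(f)M_{34}(f)^\dagger=\lambda I_4$. The same argument applies to $q$. The main obstacle is the common-matching step: ruling out mixed matchings. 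We would first try the rank argument via $M_{\{3,4\}}(f)$ and $M_{\{3,5\}}(f)$ both needing full rank together with merges producing $B(f)^{\otimes}$ elements, falling back on explicit computation with the non-standard $b_s$ if needed.
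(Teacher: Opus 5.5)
Your first and last steps are sound. The identity $\partial_{12}^{\overline{b_t}}f=2g_t$ follows from $\operatorname{Tr}(M(b_t)^\dagger M(b_s))=2\delta_{st}$. Once a common matching $34\mid56$ is known, the column space of $M_{34}(f)$ lies in $\operatorname{span}\{\operatorname{vec}(b_{p(s)})\}$, so a non-injective $p$ contradicts the full rank that {\sc 2nd-Orth} forces. This is a slightly more direct route than the paper's label count.

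The gap is the common-matching step, which is the real content of the lemma. You only say you ``would check'' that a mixed matching gives a contradiction. Moreover, your criterion for $\partial_{34}^bf=\sum_s b_s(x_1,x_2)h_s(x_5,x_6)\in B(f)^{\otimes}$ is incomplete. The split $12\mid56$ gives exactly one $h_s\neq0$, since the factor on $x_1,x_2$ is projectively a single $b_r$. But the crossing splits $15\mid26$ and $16\mid25$ are also allowed. They make $M_{12,56}$ of rank $4$, i.e.\ all four $h_s$ nonzero and linearly independent. Without this ``one or four'' dichotomy, and without knowing which $h_s$ vanish, no contradiction can be extracted.

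With that dichotomy your idea does go through. Let $A_i$ be the set of $s$ for which $g_s$ uses the $i$-th matching of $\{3,4,5,6\}$, and merge along an edge of that matching with some $b\in B(f)$.
\begin{itemize}
\item For $s\notin A_i$, $h_s$ is, up to $w_s$, a product of three invertible $2\times2$ matrices (up to transposes), hence nonzero.
\item For $s\in A_i$, $h_s\neq0$ only if $b\sim\overline{B_s}$, where $B_s$ is the factor of $g_s$ on the merged edge.
\end{itemize}
So if $|A_i^c|\ge2$, every such merge needs all four $h_s\neq0$. This fails for some $b\in B(f)$ unless $A_i=\emptyset$. Hence $|A_i|\in\{0,3,4\}$, and $|A_1|+|A_2|+|A_3|=4$ forces a single $|A_i|=4$.

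The paper argues differently. It writes $f=\frac12\sum_s u_s\otimes g_s$ with orthonormal $g_s$ and traces out $x_5,x_6$. Each normalized $b_j$ is maximally entangled, so every crossing $g_s$ contributes $\frac1{16}I_4$ to $\rho_{34}$. Then {\sc 2nd-Orth} gives $\sum_{s:\,34\text{ matched}}u_{\ell_s}u_{\ell_s}^\dagger=\frac{k_{34}}{4}I_4$. This forces $k_{34}\in\{0,4\}$, with each label used exactly once when $k_{34}=4$. Counting the eight edge incidences then yields the common matching and the bijectivity of $p,q$ at the same time.
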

\begin{proof}
    Suppose $f$ is normalized with norm 1.
    Define $u_i=\frac{1}{\sqrt{2}}({b_i}(00),b_i(01),b_i(10),b_i(11))^{\tt T}$ be the normalized vector.
    Notice that 
    $$\braket{u_i,u_j}=\frac{1}{2}\mathrm{Tr}(M(b_i)^\dagger M(b_j))=\delta_{ij},\quad  \forall i,j=0,1,2,3.$$ 
    So $\{u_0,u_1,u_2,u_3\}$ is an orthonormal basis for the vector space $ (\mathbb C^{4}).$
    Let $F=M_{x_1x_2,x_3x_4x_5x_6}(f)\in \mathbb{C}^{4\times 16}$, and $F_y$ denote the column in $F$ indexed by $y=(x_3,x_4,x_5,x_6).$ Then $F_y$ can be expanded as $F_y=\sum_{s=0}^3 c_s(y)u_s$, where $c_s(y)\in \mathbb{C}.$  
    Define $c_s=(c_s(0000),\ldots,c_s(1111))^{\tt T}\in \mathbb{C}^{16}.$
    Then $F=\sum_{s=0}^3 u_s c_s^{\tt T}$.
    The coefficient is $c_s(y)=u_s^\dagger F_y=\frac{1}{\sqrt{2}}\partial_{12}^{\overline{b_s}}f(y).$
    Since $f$ satisfies {\sc 2nd-Orth}, we have $$\frac{1}{4}\sum_{s=0}^3u_su_s^\dagger=\frac{1}{4}I_4=\rho_{12}=FF^\dagger=\sum_{s,t=0}^3u_sc_s^{\tt T}\overline{c_t}u_t^\dagger=\sum_{s,t=0}^3\braket{c_t,c_s}u_su_t^\dagger.$$
    Then $\braket{c_t,c_s}=\frac{1}{4}\delta_{st}$.
    Let $g_s=2c_s$, then $\braket{g_s,g_t}=\delta_{st}$, and $F=\frac{1}{2}\sum_{s=0}^3 u_sg_s^{\tt T}$.
    We can view $g_s$ as a signature on $y=(x_3,x_4,x_5,x_6).$
    Then $g_s=\sqrt{2}\partial_{12}^{\overline{b_s}}f$ and
    \begin{equation*}\label{eq: Schmidt decomposition of f}
        f(x_1,x_2,x_3,x_4,x_5,x_6)=\frac{1}{2}\sum_{s=0}^3 u_s(x_1,x_2)\otimes g_s(x_3,x_4,x_5,x_6).
    \end{equation*}
    Tracing out $x_1,x_2$, we have 
    \begin{equation}\label{eq: tracing out x1 x2}
        \rho_{3456}=\frac{1}{4}\sum_{s=0}^3g_sg_s^{\dagger}.
    \end{equation}
    Here \(g_sg_s^\dagger\) simply means the matrix whose \((x,y)\)-entry is \(g_s(x)\overline{g_s(y)}\).
    Since we can realize $g_s=\sqrt{2}\partial_{12}^{\overline{b_s}}f$ for $s=0,1,2,3$,
    by assumption \eqref{eq: arity 6 assumption}, $g_s$ factors into two binaries $b_{\ell_s}$ and $b_{k_s}.$
    The two binaries match up the four variables $\{x_3,x_4,x_5,x_6\}$ into two pairs.
    Consider the pair $\{3,4\}$.
    Let $k_{34}$ be the number of the four signatures \(g_0,g_1,g_2,g_3\) whose matching contains the edge \(34\).
    Tracing out $x_5,x_6$ in \eqref{eq: tracing out x1 x2},
    \begin{itemize}
        \item If $g_s=u_{\ell_s}(x_3,x_4)\otimes u_{k_s}(x_5,x_6)$, then the $s$-th term in \eqref{eq: tracing out x1 x2} contributes $\frac{1}{4}{u_{\ell_s}}{u_{\ell_s}}^\dagger.$
        
        \item If $g_s=u_{\ell_s}(x_3,x_5)\otimes u_{k_s}(x_4,x_6)$, then the $s$-th term in \eqref{eq: tracing out x1 x2} contributes $\frac{1}{4}(\frac{1}{2}I_2\otimes \frac{1}{2}I_2)$.
    \end{itemize}
    By {\sc 2nd-Orth}, $$\rho_{34}=\frac{1}{4}\left(\sum_{s: 34 \text{ is matched in } g_s}u_{\ell_s}u_{\ell_s}^\dagger+(4-k_{34})\frac{1}{4}I_4 \right)=\frac{1}{4}I_4.$$
    Multiplying by \(4\) and rearranging gives
    \begin{equation*}
        \sum_{s:\,34\text{ is matched in }g_s}u_{\ell_s}u_{\ell_s}^\dagger = \frac{k_{34}}4I_4.
    \end{equation*}
    Let \(m_j\) be the number of times the label \(u_j\) occurs on the edge \(34\).
    So
    $$\sum_{j=0}^3 m_j u_{j}u_j^\dagger=\frac{k_{34}}{4}I_4=\frac{k_{34}}{4} \sum_{j=0}^3 u_ju_j^\dagger.$$
    Multiply $u_i$ on both sides, we have $m_iu_i=\frac{k_{34}}{4}u_i,\,\forall i=0,1,2,3.$
    So $m_0=m_1=m_2=m_3=\frac{k_{34}}{4}.$
    Since $m_i\in \mathbb{N},$ we have $k_{34}=0$ or $4.$
    Moreover, if $k_{34}=4$, then $m_0=m_1=m_2=m_3=1.$
    Similarly, we can define $k_e$ be number of the four signatures $g_0,g_1,g_2,g_3$ whose matching contains the edge $e=ij$, and the same argument shows $k_e=0$ or $4.$

    Now we count the edge occurrences and show that $k_{34}=k_{56}=4$ after possibly renaming the variables.
    The four signatures \(g_0,g_1,g_2,g_3\) each have a perfect matching consisting of two edges. 
    Therefore the total number of edge occurrences is
    $4\cdot2=8.$
    For every residual edge \(e\), since $k_e=0$ or $4$, its number of occurrences is either \(0\) or \(4\). The only way these counts can sum to \(8\) is that exactly two edges occur four times each.
    Therefore all four contractions use the same matching.
    i.e., after renaming variables possibly, we may assume $$g_s(x_3,x_4,x_5,x_6)=u_{p(s)}(x_3,x_4)\otimes u_{q(s)}(x_5,x_6),\quad \text{for all } s=0,1,2,3.$$
    On either edge of that common matching, the factors \(u_0,u_1,u_2,u_3\) occur exactly once.
    That is, $p(s),q(s)$ are both permutations of $\{0,1,2,3\}.$
    The lemma is proven.
\end{proof}

\begin{lemma}\label{lm: K4 non-standard}
Under assumption \eqref{eq: arity 6 assumption}, the non-standard case of
$B(f)\cong K_4$ cannot occur.
\end{lemma}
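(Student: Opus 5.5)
Assume, for contradiction, that $B(f)=\{b_0,b_1,b_2,b_3\}$ is in the non-standard case. By \cref{lm: common matching}, after renaming variables we may write
\[
f=\sum_{s=0}^3 w_s\, b_s(x_1,x_2)\, b_{p(s)}(x_3,x_4)\, b_{q(s)}(x_5,x_6),
\]
with all $w_s\neq0$ and $p,q$ permutations. The strategy is to exploit \cref{prop: binary closure} once more, this time merging across two different blocks of the common matching, for instance $\partial_{13}f$ or $\partial^{b_t}_{13}f$. Each such merge must again lie in $B(f)^{\otimes}$. The resulting $4$-ary signatures will be written explicitly in terms of the non-standard matrices. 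The expected outcome is either that they are not in $B(f)^{\otimes}$, or that they force a relation inconsistent with the non-standard group.

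\textbf{Computing the cross-block merges.} The first step is to compute $\partial^{b}_{13}f$. Connecting $x_1$ of $b_s(x_1,x_2)$ and $x_3$ of $b_{p(s)}(x_3,x_4)$ through $b$ yields the binary signature with matrix $M(b_s)^{\tt T}M(b)M(b_{p(s)})$ on $(x_2,x_4)$. Hence
\[
\partial^b_{13}f=\sum_s w_s\, c_s(x_2,x_4)\, b_{q(s)}(x_5,x_6),
\qquad M(c_s)=M(b_s)^{\tt T}M(b)M(b_{p(s)}).
\]
The $b_{q(s)}$ are orthogonal and $q$ is a permutation, so the operator Schmidt rank across $\{2,4\}\mid\{5,6\}$ equals the dimension of $\mathrm{span}\{\mathrm{vec}(c_s)\}$.

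\textbf{Constraints from closure.} Closure requires $\partial^b_{13}f\in B(f)^{\otimes}$. The only possible pairings are $\{2,4\}\{5,6\}$, $\{2,5\}\{4,6\}$ and $\{2,6\}\{4,5\}$. In the first pairing, all $c_s$ must be proportional to a single element of $B(f)$. In the other two, $\rho_{56}$ of the merged signature must be maximally mixed in a way that can be checked from the expansion above. The key computation is the transpose action in the non-standard case. There $M(b_1)^{\tt T}=M(b_1)$, while $M(b_2)^{\tt T}=-M(b_3)$ up to phase, so transposition swaps $b_2$ and $b_3$. In the standard case, by contrast, transposition fixes every class. Consequently $[M(b_s)^{\tt T}M(b_{p(s)})]$, taking $b=b_0$, equals $[b_{\tau(s)}\cdot b_{p(s)}]$, where $\tau=(2\,3)$. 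For these to coincide projectively for all $s$ (the first pairing), we would need $p(s)=\tau(s)\cdot c$ as a map of group elements. I would then apply the same argument to $\partial^{b_t}_{13}$ for all $t$, and to the merges $\partial_{15}$ and $\partial_{35}$. This should pin $p$ and $q$ down to a few possibilities. For each remaining possibility, the aim is to exhibit a concrete $t$ and pair for which the $c_s$ span dimension $2$ or $3$. Such a merge is neither rank 1 nor a perfect product in the other pairings, contradicting closure. Using \cref{lm: 2nd-orth contraction nonzero}, the merge is nonzero, so closure genuinely applies.

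\textbf{Main obstacle.} The delicate step is the case analysis over $(p,q)$ together with the alternative pairings $\{2,5\}\{4,6\}$. There the merged signature could still factor even though the Schmidt rank across $\{2,4\}\mid\{5,6\}$ exceeds 1. I would first handle this by computing the reduced matrix $\rho_{24}$ of the merge, which is $\sum_s|w_s|^2\mathrm{vec}(c_s)\mathrm{vec}(c_s)^\dagger$. A factorization across the crossed pairings forces $\rho_{24}\propto I_4$. Separately, 2nd-Orth forces the $|w_s|$ to be equal. So the crossed case requires $\{c_s\}$ to be orthogonal, that is, $\{b_{\tau(s)}b_{p(s)}\}$ to be all of $K_4$. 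Combining this with a second merge (via $b_2$ instead of $b_0$, where conjugation by transpose acts differently) should yield incompatible requirements on $p$. The remaining bookkeeping should be a finite check, possibly organized by symmetry under relabeling $s$.
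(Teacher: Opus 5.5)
Your setup matches the paper's. You use the common matching, merge across two blocks, and let closure force the label map $s\mapsto[c_s]$ to be either constant (pairing $\{2,4\}\{5,6\}$) or a bijection onto $K_4$ (crossed pairings). Your $\rho_{24}$ argument for the crossed pairings is a valid substitute for the paper's rank count. The constant case, however, is not a constraint on $p$; it is excluded outright. The $(x_5,x_6)$ factor would be $\sum_s w_s' b_{q(s)}$ with four nonzero coefficients in the linearly independent family $\{b_r\}$, so it is not projectively in $B(f)$.

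The genuine gap is the final contradiction. You leave it as ``a finite check'' and expect it to come from a second merge $\partial^{b_2}_{13}f$. That merge carries no new information. The matrix $M(b_s)^{\tt T}M(b_t)M(b_{p(s)})$ has class $b_{\tau(s)\oplus t\oplus p(s)}$, so merging through $b_t$ only translates every label by $t$. Thus $s\mapsto\tau(s)\oplus t\oplus p(s)$ is constant or bijective exactly when $s\mapsto\tau(s)\oplus p(s)$ is. What is missing has two parts:
\begin{enumerate}
\item a merge whose label map uses the untransposed product, e.g.\ $\partial_{23}f$, which forces $s\mapsto s\oplus p(s)$ to be a bijection;
\item a combinatorial argument that $s\oplus p(s)$ and $\tau(s)\oplus p(s)$ cannot both be bijections (the paper's claim inside the proof).
\end{enumerate}
A short way to see the second part: if $s\mapsto\alpha(s)\oplus\beta(s)$ is a bijection of $\mathbb{Z}_2^2$, then $\gamma=\beta\alpha^{-1}$ makes $t\mapsto t\oplus\gamma(t)$ bijective. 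Hence $\gamma\oplus\gamma(0)$ fixes $0$ and has no fixed point on $\{1,2,3\}$. So $\gamma$ is a translation composed with a $3$-cycle and is therefore an even permutation. Then $\partial_{23}$ makes $p$ even, while $\partial_{13}$ makes $p\tau^{-1}$ even, i.e.\ $p$ odd, since $\tau=(2\,3)$ is odd. The merges $\partial_{15},\partial_{35}$ you mention in passing would also work by the same count: they force $q$ odd and $p$ odd, yet $qp^{-1}\tau^{-1}$ must be even. A constant label map likewise gives an even $\beta\alpha^{-1}$, namely a translation. Without some such step, the proposal does not yet prove the lemma.
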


\begin{proof}
    We note that $$M(b_r)M(b_s)=M(b_{r\oplus s}),$$
    where $\oplus$ is the commutative addition of the indices of the Klein group $B(f)$ defined by
    $$
    s\oplus s=0, \quad 1\oplus2=3, \quad 1\oplus 3=2,\quad 2\oplus 3=1.
    $$
    Also note that $B(f)$ has the transpose closure property:
    $$
    M(b_0)^{\tt T}=M(b_0), \quad
    M(b_1)^{\tt T}=M(b_1), \quad
    M(b_2)^{\tt T}=M(b_3), \quad 
    M(b_3)^{\tt T}=M(b_2).
    $$
    Let \(\tau(s)\) denote the label obtained by transpose. 
    Thus
    \[
    \tau(0)=0,\quad \tau(1)=1,\quad
    \tau(2)=3,\quad \tau(3)=2.
    \]
    Now we start from the common matching expression \eqref{eq: common matching} in \cref{lm: common matching}:
    \begin{equation*}    
        f=\sum_{s=0}^3w_sb_s(x_1,x_2)b_{p(s)}(x_3,x_4)b_{q(s)}(x_5,x_6),
    \end{equation*}
    Merge $x_2$ and $x_3$ using $=_2$, we get 
    \begin{equation}\label{eq: first expansion of g}
        \begin{aligned}
            \partial_{23}f(x_1,x_4,x_5,x_6)
            =&\sum_{s=0}^3\sum_{a=0}^1 w_sb_s(x_1,a)b_{p(s)}(a,x_4)b_{q(s)}(x_5,x_6)\\
            =&\sum_{s=0}^3w_sb_{s\oplus p(s)}(x_1,x_4)b_{q(s)}(x_5,x_6).
        \end{aligned}
    \end{equation}
    
    Next we show that $s\mapsto s\oplus p(s)$ is also a permutation of $\{0,1,2,3\}.$
    Since $\{b_0,b_1,b_2,b_3\}$ is a basis for the set of binary signatures, $\{b_r\otimes b_t\mid 0\le r,t\le 3\}$ is a basis for the set of 4-ary signatures.
    We can expand $\partial_{23}f$ under this basis:
    \begin{equation}\label{eq: second expansion of g}
        \partial_{23}f(x_1,x_4,x_5,x_6)=\sum_{r,t=0}^3 A_{rt}b_r(x_1,x_4)b_t(x_5,x_6).
    \end{equation}
    Compare \eqref{eq: first expansion of g} and \eqref{eq: second expansion of g}, we have $A_{s\oplus p(s),q(s)}=w_s.$
    Since $q(s)$ is a permutation, every column of $A$ receices a non-zero entry. 
    So $\mathrm{rank}(A)=\#\{s\oplus p(s)\mid s=0,1,2,3\}.$

    On the other hand, $\partial_{23}f$ factors into two binaries by assumption \eqref{eq: arity 6 assumption}.
    There are three cases:
    \begin{enumerate}
        \item $\partial_{23}f(x_1,x_4,x_5,x_6)=\lambda b_{i}(x_1,x_4)b_j(x_5,x_6)$.
        Then the matrix $A$ only has one non-zero entry $A_{ij}=\lambda.$
        So $\mathrm{rank}(A)=1.$
        \item $\partial_{23}f(x_1,x_4,x_5,x_6)=\lambda b_{i}(x_1,x_5)b_j(x_4,x_6)$.
        Then $M_{x_1x_4,x_5x_6}(\partial_{23}f)=\lambda M_{x_1,x_5}(b_i)\otimes M_{x_4,x_6}(b_j)$ and thus $\mathrm{rank}(M_{x_1x_4,x_5x_6}(\partial_{23}f))=2\times 2=4.$
        Basis change preserves the rank.
        So $\mathrm{rank}(A)=4.$
        \item $\partial_{23}f(x_1,x_4,x_5,x_6)=\lambda b_{i}(x_1,x_6)b_j(x_4,x_5)$.
        Similar to Case 2, $\mathrm{rank}(A)=4.$
    \end{enumerate}
    So far we proved $\mathrm{rank}(A)=1$ or $4$.
    If $\mathrm{rank}(A)=1,$ then $s\oplus p(s)=r$ for $s=0,1,2,3.$
    By \eqref{eq: first expansion of g}, $$\partial_{23}f=\sum_{s=0}^3w_sb_r(x_1,x_4)b_{q(s)}(x_5,x_6)=b_r(x_1,x_4)\otimes \sum_{s=0}^3w_sb_{q(s)}(x_5,x_6).$$
    But $q(s)$ is a permutation, \(b_0,b_1,b_2,b_3\) are linearly independent, and all the coefficients $w_s\neq 0$.
    Therefore $\sum_{s=0}^3 w_s b_{q(s)}$ cannot be proportional to a single member of \(B(f)\).
    This contradicts \cref{prop: binary closure}.
    Thus, $\mathrm{rank}(A)=\#\{s\oplus p(s)\mid s=0,1,2,3\}=4.$
    We proved that $s\mapsto s\oplus p(s)$ is a permutation of $\{0,1,2,3\}.$

    We now consider merging $x_1$ and $x_3$ using $=_2.$
    \begin{equation}\label{eq: merge 13}
        \begin{aligned}
            \partial_{13}f(x_2,x_4,x_5,x_6)
            =&\sum_{s=0}^3\sum_{a=0}^1 w_sb_s(a,x_2)b_{p(s)}(a,x_4)b_{q(s)}(x_5,x_6)\\
            =&\sum_{s=0}^3w_sb_{\tau(s)\oplus p(s)}(x_2,x_4)b_{q(s)}(x_5,x_6).
        \end{aligned}
    \end{equation}
    By the same rank argument above, $s\mapsto \tau(s)\oplus p(s)$ is a permutation of $\{0,1,2,3\}.$
    We are ready to arrive at a contradiction by the following claim:
    \begin{claim}\label{claim: impossible permutations}
        It is impossible that $p(s),s\oplus p(s),\tau(s)\oplus p(s)$ are all permutations of $\{0,1,2,3\}.$
    \end{claim}
    \begin{claimproof}{\cref{claim: impossible permutations}}
        Let
        $
        d=p(0), r(s)=p(s)\oplus d.
        $
        Since $p$ is a permutation, $r$ is also a
        permutation, and $r(0)=0$.
        By assumption, $s\mapsto s\oplus p(s)$ is a permutation. 
        If $r(s)=s$ for some $s\neq0$, then
        $
        s\oplus p(s)=s\oplus r(s)\oplus d=d=0\oplus p(0),
        $
        a contradiction. 
        Hence $r$ has no fixed point on $\{1,2,3\}$, so
        \[
        r|_{\{1,2,3\}}\in\{(1\,2\,3),(1\,3\,2)\}.
        \]
        In the first case, $r(2)=3=\tau(2)$; in the second case,
        $r(3)=2=\tau(3)$. Thus, in either case, there exists
        $s_*\in\{2,3\}$ such that $r(s_*)=\tau(s_*)$. Consequently,
        \[
        \tau(s_*)\oplus p(s_*)
         =\tau(s_*)\oplus r(s_*)\oplus d
         =d
         =\tau(0)\oplus p(0).
        \]
        Therefore $s\mapsto\tau(s)\oplus p(s)$ is not injective, hence is not
        a permutation. It is thus impossible for
        $
        p(s), s\oplus p(s), \tau(s)\oplus p(s)
        $
        to all be permutations of $\{0,1,2,3\}$.
    \end{claimproof}
    The lemma is proved.
\end{proof}

\subsection{Trivial Group}\label{subsec: Trivial group}
In this subsection, we assume $B(f)\cong C_1$ is the trivial group, i.e., the only binary signature generated from $f$ is $=_2$.
For a pair of distinct indices $\{i,j\}$, we can view it as an edge in $K_6$.
We write $\partial_{ij}f=\partial_ef$.
\cref{prop: binary closure} gives
\begin{equation}
    \partial_e f
    =
    \lambda_e \bigotimes_{d\in M_e} (=_2)_d ,
    \qquad \lambda_e\neq 0,
    \label{eq:C1-contraction}
\end{equation}
where \(M_e\) is a perfect matching of the four vertices outside \(e\);
in particular, \(M_e\) consists of two disjoint edges.
We first show that all \(\lambda_e\) have the same modulus.  
By {\sc 2nd-Orth}, there is a common \(\mu>0\) such that
$
    |f_e^{ab}|^2=\mu,
    \text{for every \(e\) and \(a,b\in\{0,1\}\),}
$
and the four slices \(f_e^{ab}\) are pairwise orthogonal.  Hence
$
    |\partial_e f|^2
    =
    |f_e^{00}+f_e^{11}|^2
    =
    2\mu.
$
Since \(|(=_2)|^2=2\), the right-hand side of
\eqref{eq:C1-contraction} has squared norm \(4|\lambda_e|^2\).
Consequently,
$
    |\lambda_e|^2=\frac{\mu}{2},
$
which is independent of \(e\).

Now let \(e,d\) be disjoint edges.  The contractions commute:
$
    \partial_d\partial_e f
    =
    \partial_e\partial_d f.
$
If \(d\in M_e\), then the second contraction closes an equality loop
and contributes a factor \(2\).  If \(d\notin M_e\), then \(d\) crosses
the two matching edges and contributes no additional scalar.  In both
cases the remaining tensor is equality on the two uncontracted
vertices.  Therefore
\begin{equation}
    \lambda_e\,2^{[d\in M_e]}
    =
    \lambda_d\,2^{[e\in M_d]},
    \label{eq:C1-double-contraction}
\end{equation}
where \([P]\in\{0,1\}\) is the indicator of the statement \(P\).
Taking absolute values in \eqref{eq:C1-double-contraction} and using
\(|\lambda_e|=|\lambda_d|\) gives the reciprocity relation
\begin{equation}
    d\in M_e
    \quad\Longleftrightarrow\quad
    e\in M_d.
    \label{eq:C1-reciprocity}
\end{equation}
The powers of \(2\) in \eqref{eq:C1-double-contraction} then agree, so
\(\lambda_e=\lambda_d\) whenever \(e\cap d=\varnothing\).  
Hence there is
one common nonzero scalar
$
    \lambda_e=\lambda\neq 0
    \text{ for every edge \(e\).}
$
Define
$
    F_e:=\{e\}\cup M_e.
$
Then \(F_e\) is a perfect matching of \(K_6\).  
If \(d\in M_e\), write
\(M_e=\{d,c\}\).  By \eqref{eq:C1-reciprocity}, \(e\in M_d\), and the
remaining edge is necessarily \(c\).  Thus
$
    M_d=\{e,c\},
    \text{ and therefore }
    F_d=F_e.
$
It follows that the distinct sets \(F_e\) partition the \(15\) edges of
\(K_6\) into five perfect matchings.  
Up to relabelling, they are
\begin{equation}
\begin{aligned}
    &12\mid 34\mid 56, \qquad
     13\mid 25\mid 46, \qquad
     14\mid 26\mid 35,\\
    &15\mid 24\mid 36, \qquad
     16\mid 23\mid 45.
\end{aligned}
\label{eq:C1-factorization}
\end{equation}
We now derive a contradiction from four of these factors.  From
\[
    M_{12}=\{34,56\},\qquad
    M_{13}=\{25,46\},\qquad
    M_{25}=\{13,46\},\qquad
    M_{14}=\{26,35\},
\]
equation \eqref{eq:C1-contraction}, evaluated at suitable residual
assignments, gives
\begin{align}
    f_{001010}+f_{111010}&=0,
    &
    f_{000110}+f_{110110}&=0,
    \label{eq:C1-eight-a}\\
    f_{010010}+f_{111010}&=\lambda,
    &
    f_{000110}+f_{101110}&=0,
    \label{eq:C1-eight-b}\\
    f_{000000}+f_{010010}&=\lambda,
    &
    f_{100100}+f_{110110}&=0,
    \label{eq:C1-eight-c}\\
    f_{000000}+f_{100100}&=\lambda,
    &
    f_{001010}+f_{101110}&=\lambda.
    \label{eq:C1-eight-d}
\end{align}
The first seven equations imply
\[
\begin{aligned}
    f_{001010}&=-f_{111010},&
    f_{101110}&=f_{110110},\\
    f_{000000}&=f_{111010},&
    f_{100100}&=-f_{110110}.
\end{aligned}
\]
Using the first equation in \eqref{eq:C1-eight-d}, we therefore obtain
\[
\begin{aligned}
    f_{001010}+f_{101110}
    &=-f_{111010}+f_{110110}\\
    &=-\bigl(f_{000000}+f_{100100}\bigr)\\
    &=-\lambda.
\end{aligned}
\]
However, the second equation in \eqref{eq:C1-eight-d} says that the same
quantity is \(\lambda\).  Hence \(\lambda=-\lambda\), so
\(\lambda=0\), contradicting \(\lambda\neq 0\).
Therefore, we proved:
\begin{lemma}\label{lm: trivial group}
    Under assumption \eqref{eq: arity 6 assumption}, the case $B(f)\cong C_1$ cannot occur.
\end{lemma}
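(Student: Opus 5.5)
The plan is to derive a contradiction purely from the contraction identities forced by \cref{prop: binary closure} when $B(f)=\{=_2\}$. Viewing each pair $\{i,j\}$ as an edge $e$ of $K_6$, closure says $\partial_e f=\lambda_e\bigotimes_{d\in M_e}(=_2)_d$ with $\lambda_e\neq0$, where $M_e$ is a perfect matching of the four remaining vertices.

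\textbf{Step 1 (equal moduli).} I would first normalize the scalars. By {\sc 2nd-Orth} (or directly \cref{lm: 2nd-orth contraction nonzero} with $b$ equal to $=_2$), $|\partial_e f|^2=2\mu$ for a common $\mu>0$. The right-hand side has squared norm $4|\lambda_e|^2$. Hence $|\lambda_e|^2=\mu/2$ is independent of $e$.

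\textbf{Step 2 (reciprocity and a common scalar).} Next I would exploit commutativity $\partial_d\partial_e f=\partial_e\partial_d f$ for disjoint edges $d,e$. Contracting $d$ after $e$ produces a closed equality loop, hence a factor $2$, exactly when $d\in M_e$. Otherwise it just splices the two equalities. In either case the result is a multiple of $=_2$ on the last two vertices. This gives $\lambda_e 2^{[d\in M_e]}=\lambda_d 2^{[e\in M_d]}$. Combined with Step 1, taking absolute values yields the reciprocity $d\in M_e\iff e\in M_d$, and then $\lambda_e=\lambda_d$ for all disjoint pairs. Since the Kneser-type disjointness graph on the edges of $K_6$ is connected, there is one common $\lambda\neq0$.

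\textbf{Step 3 (a one-factorization of $K_6$).} Reciprocity shows that the sets $F_e=\{e\}\cup M_e$ are perfect matchings that are either equal or disjoint. Thus they partition the $15$ edges into five perfect matchings, i.e.\ a one-factorization of $K_6$. Since $K_6$ has a unique one-factorization up to isomorphism, after relabeling variables I can fix the explicit list \eqref{eq:C1-factorization}.

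\textbf{Step 4 (contradiction).} Finally I would read off individual entries of the contraction identities for a few edges (for instance $e=12,13,25,14$) at well-chosen residual assignments. Each such entry gives a linear equation $f_{\cdots}+f_{\cdots}\in\{0,\lambda\}$ in entries of $f$. I would pick eight of them so that chaining seven equations expresses one two-term sum as $-\lambda$ while the eighth asserts that the same sum equals $\lambda$. This forces $\lambda=0$, a contradiction.

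The main obstacle is this last step: choosing the residual assignments so that the linear system closes up inconsistently. I would search among assignments on which two different contractions share a common pair of entries, so that the chain of substitutions returns to its starting sum with a sign flip.
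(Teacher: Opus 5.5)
Your proposal follows the paper's proof essentially step for step: equal moduli of the $\lambda_e$ from {\sc 2nd-Orth}, then reciprocity and a common scalar from commuting disjoint contractions, then the one-factorization of $K_6$ fixed up to relabeling, and finally a linear contradiction from the contractions on $12,13,25,14$. Your Kneser-graph connectivity remark justifies a step the paper only asserts, and the Step~4 search you leave open does succeed with the paper's choice: $\partial_{12}f$ at $x_3x_4x_5x_6\in\{1010,0110\}$, $\partial_{13}f$ at $x_2x_4x_5x_6\in\{1010,0110\}$, $\partial_{25}f$ at $x_1x_3x_4x_6\in\{0000,1010\}$ and $\partial_{14}f$ at $x_2x_3x_5x_6\in\{0000,0110\}$ give $f_{001010}+f_{111010}=f_{000110}+f_{110110}=f_{000110}+f_{101110}=f_{100100}+f_{110110}=0$ and $f_{010010}+f_{111010}=f_{000000}+f_{010010}=f_{000000}+f_{100100}=f_{001010}+f_{101110}=\lambda$, where all equations except the last force $f_{001010}+f_{101110}=-\lambda$.
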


\subsection{Order Two Group}\label{subsec: Order Two group}
In this subsection, we assume that $B(f)\cong C_2$, the group of order $2$. 
We will show this case cannot occur under assumption \eqref{eq: arity 6 assumption}.
Suppose $B(f)=\{e,b\}$, where $M(e)=I_2$ and $b^2=e.$
\begin{lemma}\label{lm: C2 standard form}
    Up to a normalization and a real orthogonal holographic transformation, there are two cases: 1. (Symmetric Case): $M(b)=\left[\begin{matrix}
        \mathfrak{i} & 0\\
        0 & -\mathfrak{i}
    \end{matrix}\right]$
    or 2. (Asymmetric Case):
    $M(b)=\left[\begin{matrix}
        0 & 1\\
        -1 & 0
    \end{matrix}\right].$
\end{lemma}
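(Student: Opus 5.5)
The plan is to classify the single nontrivial element $b$ of $B(f)$ through its $\mathbf{SU}(2)$ representative, and then use conjugation closure to force one of two shapes. Write $M(b)\in\mathbf{SU}(2)$. Since $[b]^2=[e]$ and $[b]\ne[e]$, we have $M(b)^2=\pm I_2$ with $M(b)\ne\pm I_2$.

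If $M(b)^2=I_2$, then the eigenvalues of $M(b)$ lie in $\{\pm1\}$ and their product is $1$. They are therefore equal, and since $M(b)$ is diagonalizable (it is unitary) this forces $M(b)=\pm I_2$, which is excluded. Hence $M(b)^2=-I_2$, so $M(b)$ has eigenvalues $\pm\mathfrak i$ and trace zero. Every traceless element of $\mathbf{SU}(2)$ has the form $M(b)=\mathfrak i(n_1X+n_2Y+n_3Z)$ with $(n_1,n_2,n_3)\in\mathbb R^3$ a unit vector.

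Next I would use the closure of $B(f)$ under conjugation noted after \cref{thm: binary set is finite group}: $[\overline b]\in\{[e],[b]\}$. Since $\overline{M(b)}$ is again traceless and nonscalar, we get $\overline{M(b)}\sim M(b)$. Using $\overline X=X$, $\overline Y=-Y$, $\overline Z=Z$, we compute $\overline{M(b)}=-\mathfrak i(n_1X-n_2Y+n_3Z)$. Comparing coefficients in the basis $\{X,Y,Z\}$, proportionality forces either $n_2=0$, in which case $\overline{M(b)}=-M(b)$, or $n_1=n_3=0$, in which case $\overline{M(b)}=M(b)$ and $M(b)=\pm\mathfrak iY=\pm J$.

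It then remains to normalize by a real orthogonal $O\in\mathbf O_2$. I first check that such a transformation is harmless. By \cref{lem: invariance under real orthogonal transformation}, \cref{prop: invariant kth Orth} and the remark following it, it preserves conjugate closure, failure of \eqref{cond: tractable classes}, irreducibility and {\sc 2nd-Orth}. Moreover $O(=_2)=(=_2)$ because $OO^{\tt T}=I_2$, so merging, factorization and path concatenation all commute with $O$. Hence $B(Of)=OB(f)$, assumption \eqref{eq: arity 6 assumption} transfers to $Of$, and a binary $b$ is sent to the binary with matrix $OM(b)O^{\tt T}=OM(b)O^{-1}$.

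With this in hand I treat the two cases separately.

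\textbf{Case $n_2=0$.} The matrix $n_1X+n_3Z$ is a real symmetric traceless matrix with eigenvalues $\pm1$. A rotation $O\in\mathbf{SO}(2)$ diagonalizes it to $\pm Z$, since conjugation by a rotation through angle $\theta$ rotates the vector $(n_3,n_1)$ by angle $2\theta$. This gives $M(Ob)=\pm\mathfrak iZ$. If the sign is negative, I normalize by $-1$, which is allowed since only the projective class $[b]$ matters. This yields the symmetric case $\operatorname{diag}(\mathfrak i,-\mathfrak i)$.

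\textbf{Case $n_1=n_3=0$.} Here $M(b)=\pm J$, and after normalizing the sign we obtain the asymmetric case directly with no transformation needed. One could note that $OJO^{\tt T}=\det(O)J$, so this class is invariant under real orthogonal transformations and cannot be moved into the first case; this is why two cases genuinely remain.

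I expect no real obstacle. The only point needing care is justifying that the real orthogonal holographic transformation commutes with the construction of $B(f)$ and preserves \eqref{eq: arity 6 assumption}, which rests on $O(=_2)=(=_2)$ together with the cited invariance results.
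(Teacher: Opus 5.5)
Your proof is correct and follows essentially the same route as the paper. The paper uses transpose closure ($M(b)^{\tt T}=\pm M(b)$) with explicit entry computations, whereas you use conjugation closure together with the parametrization $M(b)=\mathfrak i(n_1X+n_2Y+n_3Z)$. Both arrive at the same split: a symmetric case $\mathfrak i(n_1X+n_3Z)$, rotated by an element of $\mathbf{SO}(2)$ to $\operatorname{diag}(\mathfrak i,-\mathfrak i)$, and the antisymmetric case $\pm J$. Your additional check that the real orthogonal transformation commutes with the construction of $B(f)$, via $O(=_2)=(=_2)$, is a point the paper leaves implicit.
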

\begin{proof}
    Notice that $B(f)$ is closed under transpose. 
    So $M(b)^{\tt T}=\pm M(b).$
    \begin{enumerate}
        \item $M(b)^{\tt T}=M(b)$.
        Assume $M(b)=\left[\begin{matrix}
            x & y\\
            y & z
        \end{matrix}\right]\in \mathbf{SU}(2).$
        Since $b$ has order two,
        we have $M(b)^2=\left[\begin{matrix}
            x^2+y^2 & xy+yz\\
            xy+yz & y^2+z^2
        \end{matrix}\right]=\pm I.$
        So $y(x+z)=0$ and $x^2=z^2.$
        \begin{itemize}
            \item 
            If $y=0$, $M(b)=\left[\begin{matrix}
            x & 0\\
            0 & z
            \end{matrix}\right].$
            By $\det(M(b))=xz=1$ and $x^2=z^2$, we have $x^4=1$.
            If $x=\pm 1$, then $z=\pm 1$ by $xz=1$, contradicting $b$ has order 2.
            So $x=\pm \mathfrak{i}.$
            It follows that $z=\mp \mathfrak{i}.$
            Up to a normalization, $M(b)=\left[\begin{matrix}
                \mathfrak{i} & 0\\
                0 & -\mathfrak{i}
            \end{matrix}\right].$

            \item If $x+z=0$, $M(b)=\left[\begin{matrix}
                x & y\\
                y & -x
            \end{matrix}\right].$
            By $M(b)\in \mathbf{SU}(2)$, 
            we have $x^2+y^2=-1$ and 
            $M(b)M(b)^\dagger=\left[\begin{matrix}
                |x|^2+|y|^2 & x\overline{y}-\overline{x}y\\
                \overline{x}y-x\overline{y} & |x|^2+|y|^2
            \end{matrix}\right]=\pm I_2$. 
            So $x\overline{y}=\overline{x}y$.
            If $xy\neq 0$, then $\arg(x)=\arg(y)$.
            After a normalization, we may assume $x,y\in \mathbb{R}$.
            If $xy=0$, we may still normalize and assume $x,y\in \mathbb{R}.$
            Since $x^2+y^2=-1$, we may assume $x=\mathfrak{i}\cos(2\theta)$ and $y=\mathfrak{i}\sin(2\theta).$
            Let $T=\left[\begin{matrix}
                \cos \theta & -\sin \theta\\
                \sin \theta & \cos \theta
            \end{matrix}\right]\in \mathbf{SO}(2).$
            A direct calculation shows that $T^{\tt T}M(b)T=\left[\begin{matrix}
                \mathfrak{i} & 0\\
                0 & -\mathfrak{i}
            \end{matrix}\right].$
            So the lemma is true by a $T$-transformation.
        \end{itemize}
        
        \item 
        $M(b)^{\tt T}=-M(b).$
        Assume that 
        $M(b)=\left[\begin{matrix}
                0 & x\\
                -x & 0
            \end{matrix}\right].$
        Since $M(b)\in \mathbf{SU}(2)$, we have $x^2=1$, $x=\pm 1$.
        So up to a normalization, $M(b)=\left[\begin{matrix}
                0 & 1\\
                -1 & 0
            \end{matrix}\right].$
    \end{enumerate}
\end{proof}

For simplicity, in the standard case, we no longer require $M(b)\in \mathbf{SU}(2)$ 
and assume that $M(b)=\left[\begin{matrix}
    1 & 0\\
    0 & -1
\end{matrix}\right].$
We first record two lemmas that will be used in both cases:
\begin{lemma}\label{lem:c2-two-projection}
    Suppose that $g=\lambda\phi(x_\ell,x_m)\psi(x_u,x_v)$ for some $\lambda\neq 0$, $\phi,\psi\in\{e,b\}$ and $\{1,2\}\subseteq\{\ell,m,u,v\}.$
    If $\{1,2\}=\{\ell,m\}$ or $\{1,2\}=\{u,v\}$, then exactly one of
    $\partial_{12}^{e}g$ and $\partial_{12}^{b}g$ is nonzero.  
    Otherwise, both $\partial_{12}^{e}g$ and $\partial_{12}^{b}g$ are nonzero and are not proportional.
\end{lemma}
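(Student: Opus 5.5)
The plan is a direct computation in the two configurations allowed by the statement, using only that $M(e)$ and $M(b)$ are fixed explicit invertible matrices. In the symmetric case $M(b)=Z=\mathrm{diag}(1,-1)$, and in the asymmetric case $M(b)=J$. In both cases we have $M(b)^{\tt T}=\pm M(b)$. This lets us ignore the order in which the two variables of $\phi$ or $\psi$ appear, at the cost of a harmless sign.

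\emph{Case $\{1,2\}$ is one of the pairs.} By symmetry assume $\{\ell,m\}=\{1,2\}$. For $c\in\{e,b\}$, merging $x_1,x_2$ with $c$ touches only the factor $\phi$. It therefore gives
\[
\partial^{c}_{12}g=\lambda\Bigl(\sum_{a,d}c(a,d)\phi(a,d)\Bigr)\psi(x_u,x_v)=\lambda\operatorname{Tr}\bigl(M(c)^{\tt T}M(\phi)\bigr)\psi(x_u,x_v),
\]
possibly with $M(\phi)$ replaced by $\pm M(\phi)^{\tt T}$, which changes nothing below. The scalar is computed from the explicit matrices. If $c=\phi$, it equals $\operatorname{Tr}(I_2)=2$, or $\operatorname{Tr}(Z^2)=2$, or $\operatorname{Tr}(J^{\tt T}J)=2$. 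If $c\neq\phi$, it equals $\pm\operatorname{Tr}(M(b))$, which is $\operatorname{Tr}(Z)=0$ or $\operatorname{Tr}(J)=0$. Since $\lambda\neq0$ and $\psi\neq 0$, exactly one of $\partial^e_{12}g$ and $\partial^b_{12}g$ is nonzero.

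\emph{Case $x_1$ and $x_2$ lie in different factors.} Write, after reordering and absorbing signs from transposes, $g=\lambda\phi(x_1,x_m)\psi(x_2,x_v)$. Then $\partial^c_{12}g$ is a binary signature on $(x_m,x_v)$ with
\[
M(\partial^c_{12}g)=\lambda\,M(\phi)^{\tt T}M(c)M(\psi).
\]
All three matrices on the right are invertible, so each $\partial^c_{12}g$ is nonzero. Suppose $\partial^e_{12}g$ and $\partial^b_{12}g$ were proportional. Cancelling the invertible outer factors $M(\phi)^{\tt T}$ and $M(\psi)$ would then force $M(b)\sim M(e)=I_2$. This is false for both $Z$ and $J$, so the two signatures are not proportional.

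The only delicate point is bookkeeping. One must track variable order and transposes so that the sign conventions $M(b)^{\tt T}=\pm M(b)$ are applied consistently. One must also check that the relevant pairing is the bilinear one, $\operatorname{Tr}(M(c)^{\tt T}M(\phi))$, and not the Hermitian one. This causes no trouble because the normalized matrices used here are real.
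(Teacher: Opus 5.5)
Your proof is correct and follows the paper's argument. In the paired case you compute the bilinear pairing $\operatorname{Tr}(M(c)^{\tt T}M(\phi))\in\{0,\pm2\}$, just as the paper does, and in the crossed case you use $M(\partial^c_{12}g)=\lambda M(\phi)^{\tt T}M(c)M(\psi)$. The only cosmetic difference is in the crossed case: your cancellation of the invertible outer factors handles the symmetric and asymmetric cases uniformly, whereas the paper treats the symmetric case through the group law $\phi^{p+q+t}$ and uses the matrix product only in the asymmetric case.
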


\begin{proof}
    Define $\phi^p=e$ if $p \equiv 0 \text{ (mod }{2})$, and $\phi^p=b$ if $p \equiv 1 \text{ (mod }{2})$.
    First suppose that $g=\lambda\phi^p(x_1,x_2)\phi^q(x_u,x_v)$ for some $p,q\in \N.$
    Notice that
    \[
     \sum_{\alpha,\beta\in\{0,1\}}
     \phi^p(\alpha,\beta)\phi^t(\alpha,\beta)
     =
     \begin{cases}
     2,&p \equiv t \text{ (mod }{2})\\
     0,&p \not\equiv t \text{ (mod }{2}).
     \end{cases}
    \]
    Consequently, merging $x_1$ and $x_2$ by $e=\phi^t \,(t \text{ is even})$ or $b=\phi^t\,(t \text{ is odd})$ gives exactly one nonzero result.

    Now suppose that $g=\lambda\phi^p(x_1,x_u)\phi^q(x_2,x_v)$.
    In the symmetric case, merging $x_1$ and $x_2$ by $\phi^t$ produces $\partial_{12}^{\phi^t} g=\lambda\phi^{p+q+t}(x_u,x_v)$.
    The two signatures $\partial_{12}^eg$ and $\partial_{12}^bg$ therefore produce the two distinct factors $e$ and $b$, up to nonzero scalars.
    In the asymmetric case, $M_{x_u,x_v}(\partial_{12}^{\phi^t} g)=\lambda M(\phi^p)^{\tt T}M(\phi^t)M(\phi^q).$
    For $t=0$ and $t=1$, these matrices are projectively $I_2$ and $M(b)$. 
    Hence they are again nonzero and not proportional.
\end{proof}

\begin{lemma}\label{lm: C2 contraction orth}
    For any pair of indices $\{i,j\},$ $|\partial_{ij}^{e}f|=|\partial_{ij}^bf|> 0$ and $\braket{\partial_{ij}^ef,\partial_{ij}^bf}=0.$
\end{lemma}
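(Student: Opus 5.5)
This is the computation already used in \cref{lm: 2nd-orth contraction nonzero}, now applied to the inner product of two contractions. Fix $\{i,j\}$ and write $\partial^{c}_{ij}f=\sum_{r,s}c(r,s)\,\mathbf f_{ij}^{rs}$ for a binary $c$. By {\sc 2nd-Orth} there is $\lambda>0$ with $\langle \mathbf f_{ij}^{rs},\mathbf f_{ij}^{pq}\rangle=\lambda\,\delta_{(r,s),(p,q)}$. Expanding bilinearly gives, for any binary $c,d$,
\[
\langle\partial^{c}_{ij}f,\partial^{d}_{ij}f\rangle=\lambda\sum_{r,s\in\{0,1\}}c(r,s)\overline{d(r,s)}=\lambda\operatorname{Tr}\bigl(M(c)^{\tt T}\overline{M(d)}\bigr).
\]
Everything therefore reduces to the Frobenius inner product of the matrices $M(e)$ and $M(b)$.

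\textbf{Norms.} Take $c=d=e$ and $c=d=b$. Then $|\partial^e_{ij}f|^2=\lambda|e|^2$ and $|\partial^b_{ij}f|^2=\lambda|b|^2$, which is also \cref{lm: 2nd-orth contraction nonzero}. We use the normalization fixed after \cref{lm: C2 standard form}, namely $M(e)=I_2$ and $M(b)$ equal to $\mathrm{diag}(1,-1)$ in the symmetric case or $\left[\begin{smallmatrix}0&1\\-1&0\end{smallmatrix}\right]$ in the asymmetric case. In both cases $|e|^2=|b|^2=2$. Hence $|\partial^e_{ij}f|^2=|\partial^b_{ij}f|^2=2\lambda>0$.

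\textbf{Orthogonality.} Take $c=e$ and $d=b$. The inner product becomes $\lambda\operatorname{Tr}(\overline{M(b)})=\lambda\,\overline{\operatorname{Tr}M(b)}$. This is zero because both normal forms are traceless. More intrinsically, $M(b)\in\mathbf{SU}(2)$ represents a nonidentity element of order two in $\mathbf{PU}(2)$, so it is projectively a rotation by $\pi$. Such a matrix has eigenvalues $\pm\mathfrak i$ up to scalar, and hence trace zero. This gives $\langle\partial^e_{ij}f,\partial^b_{ij}f\rangle=0$.

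\textbf{Main obstacle.} There is no real difficulty here. The only point to check is that the holographic transformation and normalization in \cref{lm: C2 standard form} do not break the {\sc 2nd-Orth} hypothesis. The transformation is real orthogonal, and the paper notes that such transformations preserve {\sc $k$th-Orth} by \cref{prop: invariant kth Orth}. Normalization by a scalar changes $e$ and $b$ by the same modulus factor only if it is applied consistently. So we fix representatives with $|M(e)|_F=|M(b)|_F$, which the stated normal forms satisfy.
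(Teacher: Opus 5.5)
Your proof is correct and is essentially the paper's argument: both expand the two contractions in the slices $\mathbf f_{ij}^{rs}$ and use {\sc 2nd-Orth}, so the claim reduces to $M(e)=I_2$ and $M(b)$ having equal Frobenius norm $\sqrt2$ and $\operatorname{Tr}M(b)=0$. Your uniform formula $\langle\partial^c_{ij}f,\partial^d_{ij}f\rangle=\lambda\sum_{r,s}c(r,s)\overline{d(r,s)}$ also handles the asymmetric case more cleanly than the paper, which writes $\partial^e_{ij}f=f_{ij}^{01}+f_{ij}^{10}$ although with $M(e)=I_2$ this is $f_{ij}^{00}+f_{ij}^{11}$; the conclusion is unaffected, since that is orthogonal to $f_{ij}^{01}-f_{ij}^{10}$ because distinct slices are orthogonal.
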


\begin{proof}
    In the symmetric case, $\partial_{ij}^ef=f_{ij}^{00}+f_{ij}^{11}$, and $\partial_{ij}^bf=f_{ij}^{00}-f_{ij}^{11}$.
    By {\sc 2nd-Orth}, $|f_{ij}^{00}|=|f_{ij}^{11}|>0$ and $\braket{f_{ij}^{00},f_{ij}^{11}}=0.$
    So
    \begin{equation*}
        \begin{aligned}
            |\partial_{ij}^{e}f|^2=\braket{f_{ij}^{00}+f_{ij}^{11},f_{ij}^{00}+f_{ij}^{11}}=|f_{ij}^{00}|^2+|f_{ij}^{11}|^2>0.
        \end{aligned}
    \end{equation*}
    Similarly, $|\partial_{ij}^{b}f|^2=|f_{ij}^{00}|^2+|f_{ij}^{11}|^2=|\partial_{ij}^{e}f|^2>0.$
    Also,
    \begin{equation*}
        \begin{aligned}
            \braket{\partial_{ij}^{e}f,\partial_{ij}^{b}f}=\braket{f_{ij}^{00}+f_{ij}^{11},f_{ij}^{00}-f_{ij}^{11}}=|f_{ij}^{00}|^2-|f_{ij}^{11}|^2-\braket{f_{ij}^{00},f_{ij}^{11}}+\braket{f_{ij}^{11},f_{ij}^{00}}
            =0.
        \end{aligned}
    \end{equation*}
    In the asymmetric case, $\partial_{ij}^ef=f_{ij}^{01}+f_{ij}^{10}$, and $\partial_{ij}^bf=f_{ij}^{01}-f_{ij}^{10}$.
    {\sc 2nd-Orth} gives $|f_{ij}^{01}|=|f_{ij}^{10}|>0$ and $\braket{f_{ij}^{01},f_{ij}^{10}}=0.$
    So $|\partial_{ij}^{e}f|^2=|\partial_{ij}^bf|^2=|f_{ij}^{01}|^2+|f_{ij}^{10}|^2> 0$ and $\braket{\partial_{ij}^ef,\partial_{ij}^bf}=0.$
\end{proof}

\subsubsection{Symmetric Case}

In this case, $B(f)=\{e,b\}$ where $M(e)=I_2$ and $M(b)=\left[\begin{matrix}
    1 & 0\\
    0 & -1
\end{matrix}\right]$.

\begin{lemma}\label{lm: C2 symmetric impossible}
    Under assumption \eqref{eq: arity 6 assumption}, the symmetric case of $B(f)\cong C_2$ cannot occur.
\end{lemma}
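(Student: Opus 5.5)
In the symmetric case the structure is the same as for the trivial group, now with two labels $e=I_2$ and $b=Z$. For each edge $d=\{i,j\}$ of $K_6$, \cref{prop: binary closure} says that both $\partial^e_d f$ and $\partial^b_d f$ are nonzero multiples of tensor products of two binaries from $\{e,b\}$. Their nonvanishing comes from \cref{lm: C2 contraction orth}.

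\textbf{Step 1: a common matching for each edge.} I will show that $\partial^e_d f$ and $\partial^b_d f$ use the same perfect matching $M_d$ of the four remaining vertices. Since $e$ and $b$ are diagonal, this means $f^{00}_d$ and $f^{11}_d$ are both combinations of products of $e$ and $b$ along $M_d$. For this I will use \cref{lem:c2-two-projection}. Suppose the two matchings differ. Take $g=\partial^e_d f$ and apply $\partial_{d'}^{e}$ and $\partial_{d'}^{b}$, where $d'$ is an edge of the matching of $\partial^b_d f$ that crosses the matching of $g$. Then both $\partial^e_{d'}g$ and $\partial^b_{d'}g$ are nonzero and non‑proportional. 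Compare this with the commuted contractions $\partial^{\cdot}_d\partial^{\cdot}_{d'}f$. Because $\partial^e_{d}$ and $\partial^b_{d}$ act as sum and difference of the diagonal slices, the four double contractions $\partial^{x}_{d'}\partial^{y}_{d}f$ for $x,y\in\{e,b\}$ must satisfy norm constraints. Norms are fixed by {\sc 2nd-Orth} through \cref{lm: 2nd-orth contraction nonzero}, and the zero/nonzero pattern of the double contractions is dictated by \cref{lem:c2-two-projection}. I expect a counting contradiction here, similar to the one in \cref{lm: common matching}. That argument used the orthogonality decomposition $\rho_{34}=\frac14 I_4$ to force each edge to be used by either $0$ or all of the contractions; I will use the same counting. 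Tracing $\rho_{\overline d}$ over one pair gives $\frac14 I_4$ only if the common matching holds.

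\textbf{Step 2: reciprocity and a partition of $K_6$.} As in the trivial‑group subsection, I will prove that $d'\in M_d$ iff $d\in M_{d'}$. This uses commuting of disjoint contractions together with the loop factor: a closed $e$-loop gives $2$ and a closed $b$-loop gives $\operatorname{tr}Z=0$. So $F_d=\{d\}\cup M_d$ partitions the $15$ edges into five perfect matchings, i.e.\ the $1$-factorization \eqref{eq:C1-factorization}.

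\textbf{Step 3: local form and contradiction.} On each $F_d$, \cref{lm: C2 contraction orth} and the diagonal form give
\[
f_d^{aa}=\sum_{\phi,\psi\in\{e,b\}} c_{d,a,\phi\psi}\,\phi\otimes\psi,
\]
with all off‑diagonal pieces $f_d^{01},f_d^{10}$ vanishing on the relevant pairs. In particular $f$ is supported on assignments where every pair of $F_d$ is equal. Applying this to the five matchings $F_d$ at once forces all six coordinates to be equal, so $f$ is supported on $\{0^6,1^6\}$. Then $f$ is a multiple of $(=_6)$ with diagonal weights, which is reducible‑to‑binary under merging and violates {\sc 2nd-Orth}: the slices $f_{ij}^{01}$ vanish. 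This is the contradiction.

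The main obstacle is Step 1, since two different labels can in principle let $\partial^e_d f$ and $\partial^b_d f$ choose different matchings. Should the counting fail, I would fall back on explicit entry equations as in \eqref{eq:C1-eight-a}–\eqref{eq:C1-eight-d}, exploiting that $\partial^b$ only changes signs of the $11$-slices.
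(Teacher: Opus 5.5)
Your Step~1 is where all the difficulty lies: two different matchings for $\partial^e_d f$ and $\partial^b_d f$ is exactly the crossed configuration, which is the case that takes real work in the paper. The mechanism you name for it does not transfer. In \cref{lm: common matching} the four labels form an orthonormal basis of binary signatures, so $f=\frac12\sum_s u_s\otimes g_s$ is a complete decomposition, and $\rho_{34}=\frac14 I_4$ becomes the exact identity $\sum_j m_j u_ju_j^\dagger=\frac{k_{34}}{4}I_4$ that gets counted.

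With only $e=I_2$ and $b=Z$ you control $f^{00}_d$ and $f^{11}_d$ alone. The slices $f^{01}_d,f^{10}_d$ each have squared norm $\lambda>0$ by \textsc{2nd-Orth}, and they add an unknown positive semidefinite term to every reduced matrix. So there is no identity to count with. What survives is an inequality, which does suffice, but you have to compute it.

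Write $P=\partial^e_{12}f=x\,\phi(x_3,x_4)\psi(x_5,x_6)$ and $Q=\partial^b_{12}f=y\,\phi'(x_3,x_5)\psi'(x_4,x_6)$ with $\phi,\psi,\phi',\psi'\in\{e,b\}$. Then \cref{lm: 2nd-orth contraction nonzero} gives $|x|^2=|y|^2=\lambda/2$. Since $f^{00}_{12}=\frac12(P+Q)$ and $f^{11}_{12}=\frac12(P-Q)$, these two slices contribute $\frac12\bigl(M_{34}(P)M_{34}(P)^\dagger+M_{34}(Q)M_{34}(Q)^\dagger\bigr)=|x|^2\mathbf v\mathbf v^\dagger+\frac{|y|^2}{2}I_4$ to $M_{34}(f)M_{34}(f)^\dagger$. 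Here $\mathbf v=\operatorname{vec}(\phi)$, and the second term uses that $M_{34}(Q)=y\,(\phi'\otimes\psi')$ is a multiple of a unitary. For $\mathbf u=\mathbf v/\sqrt2$ this gives $\mathbf u^\dagger M_{34}(f)M_{34}(f)^\dagger\mathbf u\ge\frac54\lambda>\lambda$, contradicting \textsc{2nd-Orth} on $\{3,4\}$.

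The paper rules out the same configuration differently. It normalizes to $\partial^e_{12}f=x\,e(x_3,x_4)e(x_5,x_6)$ and $\partial^b_{12}f=y\,e(x_3,x_5)b(x_4,x_6)$, and sets $g=\partial^b_{34}f$. Then $\partial^e_{12}g=0\neq\partial^b_{12}g$, so \cref{lem:c2-two-projection} forces $g=z\,b(x_1,x_2)e(x_5,x_6)$. The closed loop gives $y=2z$, while equal norms give $|y|=|z|$. This is the loop-versus-crossing comparison you already planned for Step~2.

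Step~3 is also wrong as written. You assert that $f^{01}_d,f^{10}_d$ vanish and that $f$ is supported where every pair of $F_d$, including $d$ itself, is equal. That contradicts \textsc{2nd-Orth} outright and does not follow from Step~1. What Step~1 gives is an implication: if $x\in\mathscr S(f)$ and $x_i=x_j$, then $x$ is constant on each pair of $M_{\{i,j\}}$. This holds because $f^{00}_{ij}$ and $f^{11}_{ij}$ are half the sum and half the difference of two products of diagonal binaries along $M_{\{i,j\}}$. Hence the set of equal pairs of a supported word is a union of classes of the $1$-factorization from Step~2. Step~2 itself is fine, since all contraction coefficients have modulus $\sqrt{\lambda/2}$.

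For a non-constant word, that set has $10$, $7$ or $6$ edges, the last being two disjoint triangles. The first two counts are not multiples of $3$, and two disjoint triangles contain no perfect matching of $K_6$. So $\mathscr S(f)\subseteq\{0^6,1^6\}$, contradicting $f^{01}_{12}\neq0$.

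With both repairs your argument becomes a valid, more global alternative to the paper's. The paper instead stays at the single edge $\{1,2\}$ and uses \cref{lm: C2 contraction orth} to reduce $\partial^b_{12}f$ to three shapes. It disposes of the two same-matching shapes with one auxiliary contraction each, via \cref{lem:c2-two-projection}.
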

\begin{proof}
    
    By permuting the residual variables $(x_3,x_4,x_5,x_6)$ and applying local binary modification by $b$, we may normalize
    \begin{equation}
         \partial_{12}^{e}f=x\cdot e(x_3,x_4)e(x_5,x_6),
         \qquad x\in\mathbb C^\times.
         \label{eq:c2-symmetric-normalization}
    \end{equation}
    Here $x\neq 0$ by \cref{lm: C2 contraction orth}.
    With a little abuse of notation, we still use $f$ to denote the 6-ary signature after the binary modification.
    Notice that a binary modification by $b$ doesn't change assumption \eqref{eq: arity 6 assumption} and the conclusion in \cref{lem:c2-two-projection}.
    \begin{claim}\label{claim: C2 symmetry}
        Up to the symmetries preserving \eqref{eq:c2-symmetric-normalization}, there are three possibilities for $\partial_{12}^{b}f$:
        \[
         y\cdot e(x_3,x_4)b(x_5,x_6),\quad
         y\cdot b(x_3,x_4)b(x_5,x_6),\quad
         y\cdot e(x_3,x_5)b(x_4,x_6),
         \qquad y\in\mathbb C^\times.
        \]
    \end{claim}
    \begin{claimproof}{\cref{claim: C2 symmetry}}
        By \cref{lm: C2 contraction orth}, $\braket{\partial_{12}^{e}f,\partial_{12}^{b}f}=0.$
        If $\partial_{12}^bf=y\cdot \phi^p(x_3,x_4)\phi^q(x_5,x_6)$ where $y\neq 0$ by \cref{lm: C2 contraction orth}, orthogonality to \(\partial_{12}^{e}f=x\cdot e(x_3,x_4)e(x_5,x_6)\) requires at least one \(b\)-factor in $\phi^p$ and $\phi^q$. 
        Up to exchanging $(x_3,x_4)$ and $(x_5,x_6)$, we have $\partial_{12}^bf=y\cdot e(x_3,x_4)b(x_5,x_6)$ or $\partial_{12}^bf= y\cdot b(x_3,x_4)b(x_5,x_6).$
        Otherwise, $x_3$ and $x_4$ are in distinct binary factors of $\partial_{12}^bf$.
        Up to exchanging $x_5$ and $x_6$, we may assume
        $\partial_{12}^bf=y\cdot \phi^p(x_3,x_5)\phi^q(x_4,x_6), \,y\ne 0.$
        Then $0=\braket{\partial_{12}^ef,\partial_{12}^bf}=xy\cdot\mathrm{Tr}(I_2\cdot M(\phi^q)\cdot I_2\cdot M(\phi^p)^{\tt T})=(-1)^{p+q}.$
        So exactly one binary factor of $\partial_{12}^bf$ is $b$.
        Up to exchanging $(x_3,x_5)$ and $(x_4,x_6)$, $\partial_{12}^bf=y\cdot e(x_3,x_5)b(x_4,x_6).$
    \end{claimproof}

    Suppose first that
    $\partial_{12}^bf=y\cdot e(x_3,x_4)b(x_5,x_6),\,y\in\C^\times$, and put $g=\partial_{34}^{b}f$.  Commutativity of contractions gives
    \[
     \partial_{12}^{e}g
     =\partial_{34}^{b}\partial_{12}^{e}f=0,
     \qquad
     \partial_{12}^{b}g
     =\partial_{34}^{b}\partial_{12}^{b}f=0.
    \]
    Since $g\neq 0$ by \cref{lm: C2 contraction orth}, the above equations contradicts \cref{lem:c2-two-projection}.
    Next suppose that $\partial_{12}^{b}f=y\cdot b(x_3,x_4)b(x_5,x_6)$, and put $g=\partial_{35}^{e}f$.  Then
    \[
     \partial_{12}^{e}g
     =\partial_{35}^{e}\partial_{12}^{e}f=x\cdot e(x_4,x_6),
     \qquad
     \partial_{12}^{b}g
     =\partial_{35}^{e}\partial_{12}^{b}f=y\cdot e(x_4,x_6).
    \]
    Both projections are nonzero and proportional, again contradicting
    \cref{lem:c2-two-projection}.

    It remains to consider the case
    \begin{equation}
        \partial_{12}^{e}f=x\cdot e(x_3,x_4)e(x_5,x_6),
         \qquad
         \partial_{12}^{b}f=y\cdot e(x_3,x_5)b(x_4,x_6),\qquad x,y\in \C^\times.
         \label{eq:c2-symmetric-crossed}
    \end{equation}
    Put $g=\partial_{34}^{b}f$.  Commuting the two contractions gives
    \begin{equation}\label{eq: C2 sym}
        \partial_{12}^{e}g=\partial_{34}^b\partial_{12}^ef=0,
     \qquad
     \partial_{12}^{b}g=\partial_{34}^b\partial_{12}^bf=y\cdot e(x_5,x_6)\ne0,
    \end{equation}
    where the second equation follows from $I_2^{\tt T}ZZ=I_2$.

    By \cref{lem:c2-two-projection}, $e(x_1,x_2)\mid g$ or $b(x_1,x_2)\mid g$.
    If $g=z\cdot e(x_1,x_2)t(x_5,x_6)$ for some $t\in \{e,b\}$ and $z\in \C^\times$, then $\partial_{12}^e g=2z\cdot t(x_5,x_6)\neq 0$, contradiction.
    Thus
    $g=z\cdot b(x_1,x_2)t(x_5,x_6)$ for some $t\in\{e,b\}$.  
    Merge $x_1$ and $x_2$ by $b$ gives
    \(
     \partial_{12}^{b}g=2z\cdot t(x_5,x_6).
    \)
    Comparison with \eqref{eq: C2 sym} gives $|y|=2|z|$ and $t=e$.
    On the other hand, for any pair of indices $\{i,j\}$, we have $|\partial_{ij}^bf|^2=\lambda|b|^2=2\lambda$ by \cref{lm: 2nd-orth contraction nonzero}, where $\lambda=|f_{ij}^{00}|^2=|f_{ij}^{01}|^2=|f_{ij}^{10}|^2=|f_{ij}^{11}|^2>0$.
    Consequently $|\partial_{12}^{b}f|=|\partial_{34}^{b}f|=|g|>0$.
    So $|y|=|z|>0$, contradicting $|y|=2|z|$.
    Therefore, the case \eqref{eq:c2-symmetric-crossed} cannot occur.
\end{proof}

\subsubsection{Asymmetric Case}
In the asymmetric case, we have $B(f)=\{e,b\}$, where $M(e)=I_2$ and $M(b)=\left[\begin{matrix}
    0 & 1\\
    -1 & 0
\end{matrix}\right].$
\begin{lemma}\label{lm: C2 asymmetric impossible}
    Under assumption \eqref{eq: arity 6 assumption}, the asymmetric case of $B(f)\cong C_2$ cannot occur.
\end{lemma}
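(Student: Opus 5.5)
I will follow the template of the symmetric case (\cref{lm: C2 symmetric impossible}), now with $M(e)=I_2$ and $M(b)=J$. In this setting $\partial^e_{ij}f=f^{01}_{ij}+f^{10}_{ij}$ and $\partial^b_{ij}f=f^{01}_{ij}-f^{10}_{ij}$ (the convention of \cref{lm: C2 contraction orth}). First I normalize $\partial_{12}^e f$. By \cref{prop: binary closure} it equals $x\,\phi(x_u,x_v)\psi(x_w,x_z)$ with $\phi,\psi\in\{e,b\}$, and $x\neq 0$ by \cref{lm: C2 contraction orth}. Relabel the residual variables so the pairs are $\{3,4\},\{5,6\}$. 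A binary modification by $b$ (which is realizable, unitary, and preserves assumption \eqref{eq: arity 6 assumption} and \cref{lem:c2-two-projection}) on $x_4$ and/or $x_6$ turns each factor into $e$. This uses $I_2J\sim J$ and $JJ\sim I_2$. So I may assume $\partial_{12}^e f=x\,e(x_3,x_4)e(x_5,x_6)$.

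Next I classify $\partial_{12}^b f$ using $\langle\partial_{12}^ef,\partial_{12}^bf\rangle=0$ and $y\neq0$ from \cref{lm: C2 contraction orth}, as in \cref{claim: C2 symmetry}. If it has the same matching $\{34\},\{56\}$, at least one factor must be $b$. This gives $y\,e(x_3,x_4)b(x_5,x_6)$ or $y\,b(x_3,x_4)b(x_5,x_6)$ up to swapping the pairs. If it has a crossed matching, say $y\,\phi^p(x_3,x_5)\phi^q(x_4,x_6)$, the inner product is proportional to $\mathrm{Tr}(M(\phi^q)M(\phi^p)^{\tt T})$. This trace vanishes exactly when one factor is $b$, since $\mathrm{Tr}(J)=0$ while $\mathrm{Tr}(J J^{\tt T})=2$. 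That leaves $y\,e(x_3,x_5)b(x_4,x_6)$ up to symmetry. The crossed matching $\{35\},\{46\}$ versus $\{36\},\{45\}$ is handled by swapping $x_5,x_6$.

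Now I refute each case by choosing a second contraction $g=\partial_{kl}^{c}f$ on a disjoint pair. Commuting contractions computes $\partial_{12}^e g$ and $\partial_{12}^b g$, which I compare with \cref{lem:c2-two-projection}; note $g\ne0$ by \cref{lm: 2nd-orth contraction nonzero}. In the first case I take $g=\partial_{34}^b f$; both projections vanish, because merging $e(x_3,x_4)$ with $b$ gives $\mathrm{Tr}(J)=0$, and this is a contradiction. In the second case I take $g=\partial_{35}^e f$. The two projections become $x\,e(x_4,x_6)$ and $y\,\phi(x_4,x_6)$, where $\phi$ is computed from $J^{\tt T} J\sim I_2$. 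If they are proportional, this contradicts the lemma. If not, I try $\partial_{35}^b$ instead, and one of the two choices should yield two proportional nonzero projections. The third case mirrors \eqref{eq:c2-symmetric-crossed}. Put $g=\partial_{34}^bf$; then $\partial_{12}^eg=0$ and $\partial_{12}^bg$ is a nonzero multiple of $e$ or $b$ on $(x_5,x_6)$, computed from $J^{\tt T}J$-type products. So \cref{lem:c2-two-projection} forces $g=z\,b(x_1,x_2)t(x_5,x_6)$, since the $e(x_1,x_2)$ option gives $\partial_{12}^eg\neq0$. Then $\partial_{12}^bg=z\,\mathrm{Tr}(J^{\tt T}J)\,t=2z\,t$, giving $|y|=2|z|$. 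Meanwhile \cref{lm: 2nd-orth contraction nonzero} gives $|\partial_{12}^bf|=|\partial_{34}^bf|$, i.e.\ $2|y|=\ldots$ and hence $|y|=|z|$ once norms of $e$ and $b$ (both $\sqrt2$) are accounted for. These two relations contradict each other.

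The main obstacle is the bookkeeping of transposes, since $J^{\tt T}=-J$ while $Z^{\tt T}=Z$. The variable order inside each factor, and which side of the merging the factor sits on, changes signs and sometimes the scalar. This matters for the norm comparison in the third case, where the factor $2$ must come out exactly as in the symmetric proof. I would carefully recompute $\partial_{34}^b\partial_{12}^bf$ and $\partial_{12}^bg$ using the matrix identities $M(\partial^c_{ij}(\phi\otimes\psi))$ from the proof of \cref{lem:c2-two-projection}. In the second case, if neither $\partial_{35}^e$ nor $\partial_{35}^b$ works, I would instead fall back on $g=\partial_{34}^e f$ and argue with norms as in the third case.
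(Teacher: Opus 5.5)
Your proposal is correct and follows the paper's proof essentially step for step. It uses the same normalization $\partial_{12}^e f=x\,e(x_3,x_4)e(x_5,x_6)$, the same three possibilities for $\partial_{12}^b f$, and the same refutations: via $g=\partial_{34}^b f$, via $g=\partial_{35}^e f$, and via the clash between $|y|=2|z|$ and $|y|=|z|$ in the crossed case.

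Your hedges are unnecessary, since $J^{\mathtt T}J=I_2$ already makes the two projections in the second case proportional. As a side remark, with $M(e)=I_2$ and $M(b)=J$ one actually has $\partial_{ij}^e f=f_{ij}^{00}+f_{ij}^{11}$ and $\partial_{ij}^b f=f_{ij}^{01}-f_{ij}^{10}$. The equal-norm and orthogonality conclusions you use still hold by {\sc 2nd-Orth}.
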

\begin{proof}
    Similar to the proof of \cref{lm: C2 symmetric impossible}, we may permute the variables and apply binary modification and assume
    \[
     \partial_{12}^{e}f=x\cdot e(x_3,x_4)e(x_5,x_6),
     \qquad x\in\mathbb C^\times.
    \] 
    By \cref{lm: C2 contraction orth}, $\partial_{ij}^{e}f\neq 0,\,\partial_{ij}^bf\neq 0,$ and $\braket{\partial_{ij}^ef,\partial_{ij}^bf}=0$ for any pair of indices ${i,j}.$ 
    Again orthogonality of $\partial_{12}^{e}f$ and
    $\partial_{12}^{b}f$ leaves three possibilities for $\partial_{12}^bf:$
    \[
     y\cdot e(x_3,x_4)b(x_5,x_6),\quad
     y\cdot b(x_3,x_4)b(x_5,x_6),\quad
     y\cdot e(x_3,x_5)b(x_4,x_6),
    \]
    for some $y\in \mathbb{C}^\times.$
    Suppose first that $\partial_{12}^{b}f=y\cdot e(x_3,x_4)b(x_5,x_6)$, and put $g=\partial_{34}^{b}f$.  Commutativity gives
    \[
     \partial_{12}^{e}g
     =\partial_{34}^{b}\partial_{12}^{e}f=0,
     \qquad
     \partial_{12}^{b}g
     =\partial_{34}^{b}\partial_{12}^{b}f=0.
    \]
    This contradicts \cref{lem:c2-two-projection}.
    Next suppose that $\partial_{12}^bf=y\cdot b(x_3,x_4)b(x_5,x_6)$, and put $g=\partial_{35}^{e}f$.  Since $M(b)^{\tt T}I_2M(b)=I_2$, we have
    \[
     \partial_{12}^{e}g=\partial_{35}^e\partial_{12}^ef=x\cdot e(x_4,x_6),
     \qquad
     \partial_{12}^{b}g=\partial_{35}^e\partial_{12}^bf=y\cdot e(x_4,x_6).
    \]
    They are nonzero and proportional, again contradicting \cref{lem:c2-two-projection}.

    It remains to consider the case
    \begin{equation}
        \partial_{12}^{e}f=x\cdot e(x_3,x_4)e(x_5,x_6),
        \qquad     \partial_{12}^{b}f=y\cdot e(x_3,x_5)b(x_4,x_6).
    \end{equation}
    Put $g=\partial_{34}^{b}f$.  Commuting the two contractions gives
    \begin{equation}\label{eq: C2 asym}
        \partial_{12}^{e}g=\partial_{34}^b\partial_{12}^ef=0,
     \qquad
     \partial_{12}^{b}g=\partial_{34}^b\partial_{12}^bf=-y\cdot e(x_5,x_6)\ne0,
    \end{equation}
    where the second identity follows from $I_2^{\tt T}JJ=-I_2$.

    By \cref{lem:c2-two-projection}, $e(x_1,x_2)\mid g$ or $b(x_1,x_2)\mid g$.
    If $g=z\cdot e(x_1,x_2)t(x_5,x_6)$ for some $t\in \{e,b\}$, then $\partial_{12}^e g=2z\cdot t(x_5,x_6)\neq 0$, contradiction.
    Thus
    $g=z\cdot b(x_1,x_2)t(x_5,x_6)$ for some $t\in\{e,b\}$.  
    Merge $x_1$ and $x_2$ by $b$ gives
    \(
     \partial_{12}^{b}g=2z\cdot t(x_5,x_6).
    \)
    Comparison with \eqref{eq: C2 asym} gives $|y|=2|z|$ and $t=e$.
    On the other hand, for any pair of indices $\{i,j\}$, we have $|\partial_{ij}^bf|^2=\lambda|b|^2=2\lambda$ by \cref{lm: 2nd-orth contraction nonzero}, where $\lambda=|f_{ij}^{00}|^2=|f_{ij}^{01}|^2=|f_{ij}^{10}|^2=|f_{ij}^{11}|^2>0$.
    Consequently $|\partial_{12}^{b}f|=|\partial_{34}^{b}f|=|g|>0$.
    So $|y|=|z|>0$, contradicting $|y|=2|z|$.
\end{proof}

\subsection{Cyclic Group of Order at Least Three}\label{subsec: Cyclic group of Order at least three}
In this subsection, we assume that $B(f)\cong C_n(n\ge 3)$ which is the cyclic group of order $n$.
We are going to prove that under assumption \eqref{eq: arity 6 assumption}, $B(f)\cong C_n$ cannot occur.
In \cite{MingjiProgram} v1, it is proved that we only need to consider two possible cases:
\begin{enumerate}
    \item 
    (Standard Case).
    In the setting of $\Holant(\mathcal{F}),$ up to a real orthogonal holographic transformation, 
    the projective binary group of $f$ is $B(f)=\{b^0, b^1,\ldots,b^{n-1}\}$, where $b$ is the binary signature $\left[\begin{matrix}
        1 & 0\\
        0 & \zeta
    \end{matrix}\right]$ and  $\zeta=e^{\frac{2\pi \mathfrak{i}}{n}}.$
    
    \item 
    (Non-standard Case).
    In the setting of $\holant{\neq_2}{\hF}$, the projective binary group of $\widehat{f}$ is $\widehat{B}(\widehat{f})=\{b^0,b^1,\ldots,b^n\}$,
    where $b$ is the binary signature $\left[\begin{matrix}
        0 & 1\\
        \zeta & 0
    \end{matrix}\right]$ and $\zeta=e^{\frac{2\pi \mathfrak{i}}{n}}.$
\end{enumerate}

\subsubsection{Standard Case}
By \cref{prop: binary closure}, after merging $x_i$ and $x_j$ using some binary signature $b^k$ in $B(f)$, $\partial_{ij}^{b^k}f$ factors into two binary signatures in $B(f)$. 
This two binaries matches up the remaining variables by pairs.
The following lemma shows that for any $\{i,j\}$ there is a common matching which is independent of $k\in [n]$.
\begin{lemma}[Common matching]\label{lm: common matching Cn}
    Fix an arbitrary pair of indices $\{i,j\}$.
    For any $b^k\in B(f)$, $\partial_{ij}^{b^k}f(x_\ell,x_m,x_p,x_q)$ factors into binaries $b^{k_1}(x_\ell,x_n)$ and $b^{k_2}(x_p,x_q)$ for some $k_1,k_2\in [n]$ after permuting the variables possibly, where $\{i,j,\ell,m,p,q\}=\{1,2,3,4,5,6\}$.
\end{lemma}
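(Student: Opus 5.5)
The plan is to exploit that in the standard case every element of $B(f)$ is diagonal, $M(b^k)=\mathrm{diag}(1,\zeta^k)$. Fix $\{i,j\}$ and set $g_0=f_{ij}^{00}$ and $g_1=f_{ij}^{11}$, viewed as $4$-ary signatures on the remaining variables $x_\ell,x_m,x_p,x_q$. Then
\[
h_k:=\partial_{ij}^{b^k}f=g_0+\zeta^k g_1 ,\qquad k=0,1,\ldots,n-1 .
\]
By {\sc 2nd-Orth}, $g_0$ and $g_1$ are nonzero, orthogonal, and of equal norm, so they are linearly independent. Consequently every $h_k$ is nonzero (this also follows from \cref{lm: 2nd-orth contraction nonzero}). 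Moreover, for $k\neq k'$ the vectors $h_k,h_{k'}$ are linearly independent, because $\zeta^k\neq\zeta^{k'}$. Since $n\ge 3$, the $h_k$ are at least three pairwise independent vectors in the two-dimensional space $\mathrm{Span}\{g_0,g_1\}$.

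Next I use \cref{prop: binary closure}: each $h_k$ is a nonzero scalar multiple of a tensor product of two elements of $B(f)$ along some perfect matching $M_k$ of $\{\ell,m,p,q\}$. Every $b^a$ is supported exactly on $\{00,11\}$ with unimodular entries. Hence the support of $h_k$ is exactly the four points of $\{0,1\}^4$ that are constant on each edge of $M_k$. For the three possible matchings these supports are
\[
\{0000,0011,1100,1111\},\quad \{0000,0101,1010,1111\},\quad \{0000,0110,1001,1111\}
\]
(coordinates ordered $x_\ell x_m x_p x_q$). In particular, the point $0011$ lies only in the first support and $0101$ only in the second, and similarly for the other pairs of matchings.

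Suppose, for contradiction, that $M_k\neq M_{k'}$ for some $k\neq k'$. Choose a third index $m\notin\{k,k'\}$, which exists since $n\ge3$. Because $h_k$ and $h_{k'}$ span $\mathrm{Span}\{g_0,g_1\}$, we can write $h_m=a h_k+c h_{k'}$. Pairwise independence forces both $a\neq0$ and $c\neq0$. Take a point in the support of $h_k$ that lies in no other matching's support (such as $0011$ when $M_k=\{\ell m,pq\}$). There $h_{k'}$ vanishes and $h_k$ does not, so $h_m\neq0$ at that point. Symmetrically, $h_m\neq0$ at the corresponding point private to $M_{k'}$. The support of $h_m$ would then contain points private to two different matchings, which is impossible for a tensor product along a single matching. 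Therefore all $M_k$ coincide, and after permuting variables this gives the common matching claimed.

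The main obstacle is justifying that the support of $h_k$ is exactly, not merely contained in, the four-point set of its matching. This holds because the factors lie in $B(f)$, whose diagonal entries are roots of unity, so no entry of the factors vanishes. Everything else is linear algebra in a two-dimensional space, and the hypothesis $n\ge3$ is used only to guarantee the third index $m$.
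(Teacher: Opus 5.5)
Your proof is correct, but it is organized differently from the paper's. Both arguments rest on the same three facts: the pencil $h_k=f_{ij}^{00}+\zeta^k f_{ij}^{11}$, the three matching supports $S_1,S_2,S_3$, and the existence of points such as $0011$ that lie in exactly one $S_i$.

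The paper proceeds in three steps. It first proves a claim: if two of the $f^{(k)}$ share a support $S$, then $f_{ij}^{00}$ and $f_{ij}^{11}$, and hence every $f^{(t)}$, are supported in $S$. It then settles $n\ge4$ by pigeonhole. Finally, it treats $n=3$ separately, using the identity $f^{(0)}+\omega f^{(1)}+\omega^2 f^{(2)}=0$ evaluated at $0011$.

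You instead derive pairwise independence of the $h_k$ from {\sc 2nd-Orth}, and you do so correctly. From this, any third $h_m$ equals $a h_k+c h_{k'}$ with $a,c\neq0$, so its support would contain points private to two different matchings. This is essentially the paper's $n=3$ argument made uniform in $n$, and it removes the case split.

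The paper's route has one advantage: it produces the intermediate fact $\mathscr S(f_{ij}^{00}),\mathscr S(f_{ij}^{11})\subseteq S$, which it reuses in the next lemma. In your version this fact follows immediately once the matchings coincide, so nothing is lost.

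You are right to flag that exactness of the supports needs the nonvanishing diagonal entries of $\mathrm{diag}(1,\zeta^a)$; the paper uses this implicitly as well. Note also that the private-point trick is specific to the standard case. In the non-standard case each support is contained in the union of the other two, which is why the paper needs an extra computation there.
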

\begin{proof}
    Let $f^{(k)}=\partial_{ij}^{b^k}f,\, g=f_{ij}^{00},\,h=f_{ij}^{11}$ and $y=(x_\ell,x_m,x_p,x_q).$
    We have 
    \begin{equation}\label{eq: cyclic group merging}
        f^{(k)}(y)=\partial_{ij}^{b^{k}}f(y)=f_{ij}^{00}(y)+\zeta^k f_{ij}^{11}(y)=g(y)+\zeta^k h(y), \quad \text{for every } y\in \{0,1\}^4.
    \end{equation}
    By assumption \eqref{eq: arity 6 assumption}, every merging signature $f^{(k)}$ factors into two binary signatures in $B(f).$
    In the standard case, the two binary signatures match up $x_\ell,x_m,x_p,x_q$ into two equal pairs.
    So the support of $f^{(k)}$ is
    \begin{equation*}
        \begin{aligned}
            &S_1=\{0000,0011,1100,1111\}, \text{ or }\\
            &S_2=\{0000,0101,1010,1111\}, \text{ or }\\
    &S_3=\{0000,0110,1001,1111\},
        \end{aligned}
    \end{equation*}
    according to the matching $\ell m\mid pq, \ell p\mid mq$ or $\ell q\mid mp.$
    \begin{claim}\label{claim: cyclic group common matching}
        If there exists two distinct $k,k'\in [n]$, such that $\mathscr{S}(f^{(k)})=\mathscr{S}(f^{(k')})=S$, then $\mathscr{S}(f^{(t)})=S$ for every $t\in[n]$. 
    \end{claim}
    \begin{claimproof}{\cref{claim: cyclic group common matching}}
    By \eqref{eq: cyclic group merging}, we have
    \[\left[
    \begin{matrix}
        f^{(k)}(y)\\
        f^{(k')}(y)
    \end{matrix}
    \right]
    =\left[\begin{matrix}
        1 & \zeta^k\\
        1 & \zeta^{k'}
    \end{matrix}
    \right]
    \left[\begin{matrix}
        g(y)\\
        h(y)
    \end{matrix}
    \right],\quad \forall y\in \{0,1\}^4.
    \]
    Thus, $g(y)=h(y)=0 \iff f^{(k)}(y)=f^{(k')}(y)=0.$
    Since $\mathscr{S}(f^{(k)})=\mathscr{S}(f^{(k')})=S$, $g$ and $h$ vanishes outside $S$.
    By \eqref{eq: cyclic group merging}, $f^{(t)}$ vanishes outside $S$ for every $t\in [n]$.
    So $\mathscr{S}(f^{(t)})\subseteq S.$
    Since $\mathscr{S}(f^{(t)})=S_1,S_2$ or $S_3$, we have $\mathscr{S}(f)^{(t)}=S,$ for every $t\in [n].$
    \end{claimproof}
    
    If $n\ge 4$, then by pigeon hole principle, there exists distinct $k,k'\in [n]$ such that $\mathscr{S}(f^{(k)})=\mathscr{S}(f^{(k')})=S$.
    By \cref{claim: cyclic group common matching}, the lemma is proved.
    In the following, we may assume that $n=3$, and $\{\mathscr{S}(f^{(0)}),\mathscr{S}(f^{(1)}),\mathscr{S}(f^{(2)})\}=\{S_1,S_2,S_3\}.$
    Now $\zeta=\omega=e^{\frac{2\pi \mathfrak{i}}{3}}.$
    By \eqref{eq: cyclic group merging}, 
    \begin{equation}\label{eq: cyclic group n=3}
        f^{(0)}+\omega f^{(1)}+\omega^2 f^{(2)}=(1+\omega+\omega^2)g+(1+\omega^2+\omega^4)h=0.
    \end{equation}
    Notice that $y=0011$ is in $S_1$ but not in $S_2\cup S_3.$
    Evaluate \eqref{eq: cyclic group n=3} at $y=0011,$ LHS is nonzero, contradiction.
\end{proof}

\begin{lemma}\label{lm: cyclic group support size 2}
    Fix an arbitrary pair of indices $\{i,j\}$, $|\mathscr{S}(f_{ij}^{00})|=|\mathscr{S}(f_{ij}^{11})|=2.$ 
\end{lemma}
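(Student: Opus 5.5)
The plan is to work entirely with the two pinned signatures $g=f_{ij}^{00}$ and $h=f_{ij}^{11}$. By \eqref{eq: cyclic group merging}, $f^{(k)}=\partial_{ij}^{b^k}f=g+\zeta^k h$ for every $k\in[n]$. By \cref{lm: common matching Cn}, after renaming we may take every $f^{(k)}$ to have support $S_1=\{0000,0011,1100,1111\}$, so each $f^{(k)}$ equals $c_k\, b^{a_k}(x_\ell,x_m)\,b^{d_k}(x_p,x_q)$ with $c_k\ne 0$. As in the proof of \cref{claim: cyclic group common matching}, $g$ and $h$ are obtained from $f^{(0)}$ and $f^{(1)}$ by inverting a $2\times 2$ Vandermonde system, so both vanish outside $S_1$.

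Index $S_1$ by $(r,s)\in\{0,1\}^2$, where $x_\ell=x_m=r$ and $x_p=x_q=s$. This turns $g,h,f^{(k)}$ into $2\times2$ matrices $G,H,F_k$ with $F_k=G+\zeta^kH$. Because $b^a$ is diagonal with entries $1,\zeta^a$, each $F_k$ has rank one and all four of its entries have the same modulus $|c_k|$.

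\textbf{Step 1 (rank).} The quadratic $t\mapsto\det(G+tH)$ vanishes at the $n\ge3$ distinct points $\zeta^k$, so it is identically zero. Hence $\det G=\det H=0$ and the mixed coefficient also vanishes. A pencil in which every member has rank at most one must share a column space or a row space. Therefore, up to transposing (that is, exchanging the roles of the two matched pairs), $G=\mathbf u\mathbf v^{\tt T}$ and $H=\mathbf u\mathbf w^{\tt T}$. (If $G$ or $H$ were zero, this would contradict $|g|^2=|h|^2=\lambda>0$ from {\sc 2nd-Orth}.)

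\textbf{Step 2 (moduli).} For each fixed $k$, $|G_{rs}+\zeta^kH_{rs}|^2$ is independent of $(r,s)$. Expanding gives
\[
|G_{rs}|^2+|H_{rs}|^2+2\,\mathrm{Re}\bigl(\zeta^{-k}G_{rs}\overline{H_{rs}}\bigr),
\]
which is independent of $(r,s)$ for all $k\in\mathbb Z_n$. Averaging over $k$ shows that $|G_{rs}|^2+|H_{rs}|^2$ is a constant, nonzero since $g\ne0$. Taking the discrete Fourier coefficient at frequency one, which is legitimate because $n\ge3$ so the frequencies $\pm1$ are distinct, shows that $G_{rs}\overline{H_{rs}}=c$ is independent of $(r,s)$.

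\textbf{Step 3 (orthogonality).} By {\sc 2nd-Orth}, $\langle g,h\rangle=\langle\mathbf f_{ij}^{00},\mathbf f_{ij}^{11}\rangle=0$. The left side equals $\sum_{r,s}G_{rs}\overline{H_{rs}}=4c$, so $c=0$. Thus $G$ and $H$ have disjoint supports, and since $|G_{rs}|^2+|H_{rs}|^2>0$ everywhere, these supports partition the four cells of the $2\times2$ grid. A nonzero rank-one $2\times2$ matrix has support of size $1$, $2$ or $4$, and a support of size $3$ is impossible. Size $4$ and size $1$ are then excluded, because each forces its complement to have size $0$ or $3$. Hence $|\mathscr S(g)|=|\mathscr S(h)|=2$.

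I expect the main obstacle to be Step 1: justifying that a rank-at-most-one pencil shares a row or column space. One could avoid this entirely, since Steps 2 and 3 already give disjoint supports covering all of $S_1$ with $\det G=\det H=0$. Those facts alone force each support to be a row or a column (a support of size $1$ would leave a complement of size $3$, which has nonzero determinant). So I would present Step 1 only as the determinant identity, which is all that is needed.
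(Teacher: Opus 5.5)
Your proof is correct, and it reaches the conclusion by a somewhat different route than the paper.

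The paper uses the vanishing of $\det(\widetilde g+t\widetilde h)$ structurally, in three steps:
\begin{itemize}
\item It first shows that $M(\widetilde g)=uv^{\tt T}$ and $M(\widetilde h)=uz^{\tt T}$ share a common factor.
\item It then applies {\sc 2nd-Orth} in factored form. This gives $v_0\overline{z_0}+v_1\overline{z_1}=0$ and $|v|=|z|$, hence $|v_0|=|z_1|$ and $|v_1|=|z_0|$.
\item Finally it uses $|v_0+\zeta^kz_0|=|v_1+\zeta^kz_1|$ together with a real-part argument to force $\overline{v_0}z_0=\overline{v_1}z_1=0$.
\end{itemize}

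You instead apply the equal-modulus property of every $F_k$ entrywise and invert it by discrete Fourier analysis on $\mathbb Z_n$. This shows that both $|G_{rs}|^2+|H_{rs}|^2$ and $G_{rs}\overline{H_{rs}}$ are constant over the grid. Orthogonality then makes the product vanish, so the supports of $G$ and $H$ partition $S_1$. The rank condition enters only at the end, through $\det G=\det H=0$, to exclude a $1+3$ split.

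What your route buys:
\begin{itemize}
\item It avoids the shared row/column-space step for the rank-$\leq 1$ pencil and the auxiliary norm claim.
\item From {\sc 2nd-Orth} it uses only $\langle g,h\rangle=0$ and $g,h\neq 0$, not $|g|=|h|$.
\item It yields the slightly stronger fact that the two supports are complementary inside $S_1$.
\end{itemize}
The paper's route, in exchange, gives the explicit product form of $g$ and $h$, with a common factor on one matched pair.

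Your closing remark is also right: from Step 1 you need only the determinant identity, together with $g,h\neq0$.
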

\begin{proof}
    Follow the notations in the proof of \cref{lm: common matching Cn},
    let $f^{(k)}=\partial_{ij}^{b^k}f,$ $g=f_{ij}^{00}$ and $h=f_{ij}^{11}.$
    Without loss of generality, we assume $\{i,j\}=\{5,6\}$.
    By \cref{lm: common matching Cn}, we may assume that for every $k\in[n]$, $f^{(k)}(x_1,x_2,x_3,x_4)=\lambda_k b^{^{k_1}}(x_1,x_2)b^{k_2}(x_3,x_4)$ for some $\lambda_k\neq 0$ and $k_1,k_2\in [n]$.
    Define $\widetilde{f^{(k)}}(x,y)=f^{(k)}(x,x,y,y),\,\widetilde{g}(x,y)=g(x,x,y,y),\,\widetilde{h}(x,y)=h(x,x,y,y)$ for every $x,y\in \{0,1\}.$
    By \eqref{eq: cyclic group merging}, we have $\widetilde{f^{(k)}}=\widetilde{g}+\zeta^k \widetilde{h}.$
    Notice that $$M(\widetilde{f^{(k)}})=\lambda_k\left[\begin{matrix}
        b^{(k_1)}(0,0)\\
        b^{(k_1)}(1,1)
    \end{matrix}\right]
    \left[\begin{matrix}
        b^{(k_2)}(0,0)&
        b^{(k_2)}(1,1)
    \end{matrix}\right].$$
    So $\mathrm{rank}(M(\widetilde{g}+\zeta^k\widetilde{h}))=\mathrm{rank}(M(\widetilde{f^{(k)}}))=1.$
    Hence $\det(M(\widetilde{g}+\zeta^k\widetilde{h}))=0.$
    Notice that $\det(M(\widetilde{g}+t\widetilde{h}))$ is a polynomial in $t$ of degree at most $2$, and it vanishes at $\zeta^k$, for all $k\in [n]$.
    Since $n\ge 3$, we have $\det(M(\widetilde{g}+t\widetilde{h}))\equiv 0.$
    \begin{claim}\label{claim: cyclic group rank 1}
        $\mathrm{rank}(M(\widetilde{g}))=\mathrm{rank}(M(\widetilde{h}))=1.$
    \end{claim}
    \begin{claimproof}{\cref{claim: cyclic group rank 1}}
        Let $t=0$ in $\det(M(\widetilde{g}+t\widetilde{h}))= 0$, we have $\det(M(\widetilde{g}))=0.$
        Hence $\mathrm{rank}(M(\widetilde{g}))\le 1.$
        Let $t\to +\infty$ in $\det(M(\widetilde{g}+t\widetilde{h}))= 0$, we have $\det(M(\widetilde{h}))=0$ and $\mathrm{rank}(M(\widetilde{h}))\le 1.$
        Next we show $\widetilde{g}\neq 0$ and $\widetilde{h}\ne 0.$
        By the assumption $f^{(k)}=\lambda_k b^{^{k_1}}(x_1,x_2)b^{k_2}(x_3,x_4)$, we have $\mathscr{S}(f^{(k)})=S$ for every $k\in [n]$, where $S=\{0000,0011,1100,1111\}.$
        By \eqref{eq: cyclic group merging}, we have 
        $$g=\frac{\zeta^{k_2}f^{(k_1)}-\zeta^{k_1}f^{(k_2)}}{\zeta^{k_2}-\zeta^{k_1}},\quad \text{for every distinct }k_1,k_2\in[n].$$
        So $\mathscr{S}(g)\subseteq S.$
        By assumption \eqref{eq: arity 6 assumption}, $f$ satisfies {\sc 2nd-Orth}.
        So $g=f_{ij}^{00}\neq 0.$
        It follows that $\widetilde{g}\neq 0$ by the definition of $\widetilde{g}$ and $\mathscr{S}(g)\subseteq S.$
        Thus, $\mathrm{rank}(M(\widetilde{g}))=1.$
        Similarly, we can show $\widetilde{h}\neq 0$ and $\mathrm{rank}(M(\widetilde{h}))=1.$
    \end{claimproof}
    By \cref{claim: cyclic group rank 1}, we may write $M(\widetilde{g})=uv^{\tt T}$ and $M(\widetilde{h})=wz^{\tt T}$, where $u,v,w,z\in \mathbb{C}^2.$
    If $u$ and $w$, $v$ and $z$ are both linearly independent, then $$M(\widetilde{g}+\widetilde{h})=\left[\begin{matrix}
        u & w
    \end{matrix}\right]
    \left[\begin{matrix}
        v^{\tt T} \\ z^{\tt T}
    \end{matrix}\right]$$
    has rank two, contradicting $\det(M(\widetilde{g}+t\widetilde{h}))\equiv 0.$
    Thus we may assume $u=w$ or $v=z$.
    After renaming variables, we may always assume $u=w.$
    Since $f$ satisfies {\sc 2nd-Orth}, we have $|g|=|h|$ and $\braket{g,h}=0.$
    So
    $$
    \braket{g,h}=(|u_0|^2+|u_1|^2)(v_0\overline{z_0}+v_1\overline{z_1})=0.
    $$
    Clearly $u\neq 0$, otherwise $\widetilde{g}=0.$
    So $v_0\overline{z_0}+v_1\overline{z_1}=0.$
    Also, $|g|=|h|$ gives $|v_0|^2+|v_1|^2=|z_0|^2+|z_1|^2.$ 
    \begin{claim}\label{claim: cyclic group equal norm}
        $|v_0|=|z_1|, |v_1|=|z_0|.$
    \end{claim}
    \begin{claimproof}{\cref{claim: cyclic group equal norm}}
        Let $r=(-\overline{v_1},\overline{v_0})^{\tt T}$, then $\braket{v,r}=0.$
        Since $\braket{v,z}=0$ and $v\neq 0$, there exists some $\lambda\in \mathbb{C}$ such that $z=\lambda r.$
        Substitute into $|v_0|^2+|v_1|^2=|z_0|^2+|z_1|^2$, we obtain $|v_0|^2+|v_1|^2=|\lambda|^2(|r_0|^2+|r_1|^2)=|\lambda|^2(|v_1|^2+|v_0|^2)$.
        So $|\lambda|=1.$
        Substituting back, we have $|v_0|=|z_1|$ and $|v_1|=|z_0|$.
    \end{claimproof}
    Now $\widetilde{f^{(k)}}=\widetilde{g}+\zeta^k\widetilde{h}=u(v+\zeta^k z)^{\tt T}$.
    So $f^{(k)}(x_1,x_2,x_3,x_4)=u(x_1,x_2)u'(x_3,x_4)$, where $u(0,0)=u_0$, $u(1,1)=u_1$, $u(0,1)=u(1,0)=0$; $u'(0,0)=v_0+\zeta^k z_0$, $u'(1,1)=v_1+\zeta^k z_1$ and $u'(0,1)=u'(1,0)=0.$
    By assumption \eqref{eq: arity 6 assumption}, both $u$ and $u'$ are proportional to some binary signatures in $B(f)$.
    Hence $|u_0|=|u_1|\neq 0$, and $|v_0+\zeta^k z_0|=|v_1+\zeta^k z_1|,\,\forall k\in [n].$
    Square both sides, we have
    \begin{equation*}
        |v_0|^2+2\Re(\zeta^k\overline{v_0}z_0)+|z_0|^2=|v_1|^2+2\Re(\zeta^k\overline{v_1}z_1)+|z_1|^2,\quad \forall k\in[n].
    \end{equation*}
    Using \cref{claim: cyclic group equal norm}, we have 
    \begin{equation*}
        \Re(\zeta^k(\overline{v_0}z_0-\overline{v_1}z_1))=0,\quad \forall k\in [n].
    \end{equation*}
    So $\overline{v_0}z_0-\overline{v_1}z_1=0.$
    Combining $v_0\overline{z_0}+v_1\overline{z_1}=0$, we have $\overline{v_0}z_0=\overline{v_1}z_1=0.$
    Since both $v$ and $z$ are non-zero, after possibly exchanging the coordinates, $v=(v_0,0)$ and $z=(0,z_1)$ with $v_0z_1\neq 0.$
    Consequently, $\widetilde{g}=uv^{\tt T}$ has two non-zero entries.
    By the definition of $\widetilde{g}$, we have $\mathscr{S}(g)=\mathscr{S}(\widetilde{g})=2$.
    Similarly, $\mathscr{S}(h)=2.$
    The lemma is proved.
\end{proof}

\begin{lemma}\label{lm: cyclic group even support}
    For any $x\in\mathscr{S}(f)$, $\mathrm{wt}(x)$ is even.
\end{lemma}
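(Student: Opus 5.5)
The plan is to read parity off the slice structure already obtained in the standard cyclic case. Fix any $x\in\mathscr S(f)$ and choose a pair $\{i,j\}$ such that $(x_i,x_j)\in\{00,11\}$. Such a pair exists, since among six bits two coincide. Then the restriction of $x$ to the other four coordinates, call it $y$, lies in $\mathscr S(f_{ij}^{00})$ or in $\mathscr S(f_{ij}^{11})$. Write $g=f_{ij}^{00}$ and $h=f_{ij}^{11}$ as in the proof of \cref{lm: common matching Cn}.

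By that proof, every $f^{(k)}=g+\zeta^k h$ is supported in the common support $S$ of the matching $\ell m\mid pq$; after renaming, $S=\{0000,0011,1100,1111\}$. I will recover $g$ and $h$ from two distinct values of $k$ by the same interpolation used in \cref{claim: cyclic group rank 1}. This gives $\mathscr S(g),\mathscr S(h)\subseteq S$, and every element of $S$ has even weight. Hence $\operatorname{wt}(y)$ is even, and so $\operatorname{wt}(x)=\operatorname{wt}(y)+x_i+x_j$ is even, since $x_i+x_j\in\{0,2\}$.

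The point that needs care is that the containment $\mathscr S(g)\subseteq S$ must hold for every pair $\{i,j\}$, not only for $\{5,6\}$. \cref{lm: common matching Cn} is stated for an arbitrary pair, and in the standard case every element of $B(f)$ is diagonal. Therefore each factor $b^{k_1},b^{k_2}$ is supported on equal pairs, and each of the candidate sets $S_1,S_2,S_3$ consists only of even-weight strings. I also need to confirm that the $n=3$ exceptional case in the proof of \cref{lm: common matching Cn} was in fact ruled out, which it was, so a common matching exists for every pair.

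With this argument \cref{lm: cyclic group support size 2} is not needed; it would only serve as a cross-check that $g$ and $h$ are supported on two antipodal strings of $S$. The remaining possible obstacle is the case $(x_i,x_j)\in\{01,10\}$, but choosing the pair so that $x_i=x_j$ avoids it entirely.
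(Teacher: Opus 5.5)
Your argument is correct, but it takes a somewhat different route from the paper's.

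The paper argues by contradiction. For an odd-weight $a\in\mathscr S(f)$ it arranges $a_1=a_2$, $a_3=a_4$, $a_5\neq a_6$ and contracts \emph{two} pairs, with $b^s$ and $b^t$. The resulting binary is a multiple of a diagonal element of $B(f)$, so it vanishes at $(a_5,a_6)$. The paper then inverts the $4\times4$ system whose coefficient matrix is a Kronecker product of two $2\times2$ Vandermonde matrices in powers of $\zeta$.

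You instead contract a single equal pair $\{i,j\}$. You use the support of the $4$-ary factorization $\partial_{ij}^{b^k}f=\lambda_k\, b^{k_1}\otimes b^{k_2}$, together with the $2\times2$ interpolation $g=(\zeta^{k'}f^{(k)}-\zeta^{k}f^{(k')})/(\zeta^{k'}-\zeta^{k})$ that the paper itself uses in \cref{claim: cyclic group rank 1}. This gives a stronger intermediate fact: for every pair $\{i,j\}$, both slices $f_{ij}^{00}$ and $f_{ij}^{11}$ are supported on even-weight strings, and the lemma follows at once.

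Both arguments rest on the same two ingredients: every element of $B(f)$ is diagonal with nonzero diagonal entries in the standard case, and Vandermonde systems in distinct powers of $\zeta$ are invertible. The paper's version only needs the doubly contracted binary to be diagonal, not the support sets of the $4$-ary contractions. Yours is shorter given the lemmas that precede it.

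You also do not need \cref{lm: common matching Cn}, nor the check that the $n=3$ exceptional case was excluded. Each of $S_1,S_2,S_3$ consists only of even-weight strings, so $\mathscr S(g)\subseteq\mathscr S(f^{(k)})\cup\mathscr S(f^{(k')})$ is already even-weight for any two distinct $k,k'$, whether or not the matchings agree.
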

\begin{proof}
    Suppose $a=(a_1,a_2,a_3,a_4,a_5,a_6)\in \mathscr{S}(f)$ but $\mathrm{wt}(a)$ is odd.
    After renaming the variables, we may assume $a_1=a_2,a_3=a_4,a_5\neq a_6.$
    Consider merging $x_1$ and $x_2$ using $b^s\in B(f)$, and merging $x_3$ and $x_4$ using $b^t\in B(f)$.
    The resulting signature $\partial_{12}^{b^s}\partial^{b^t}_{34}f$ is also in $B(f)$ by assumption \eqref{eq: arity 6 assumption}.
    So $\partial_{12}^{b^s}\partial^{b^t}_{34}f(a_5,a_6)=0.$
    Thus,
    \begin{equation}\label{eq: cyclic group even support}
        \sum_{i,j\in\{0,1\}}(\zeta^s)^i(\zeta^t)^jf(i,i,j,j,a_5,a_6)=\partial_{12}^{b^s}\partial^{b^t}_{34}f(a_5,a_6)=0.
    \end{equation}
    Pick $s_1,s_2,t_1,t_2\in [n]$ such that $s_1\neq s_2$ and $t_1\neq t_2.$
    Then \eqref{eq: cyclic group even support} gives that 
    \begin{equation*}
        \left[\begin{matrix}
            1 & \zeta^{t_1} & \zeta^{s_1} & \zeta^{s_1t_1}\\
            1 & \zeta^{t_2} & \zeta^{s_1} & \zeta^{s_1t_2}\\
            1 & \zeta^{t_1} & \zeta^{s_2} & \zeta^{s_2t_1}\\
            1 & \zeta^{t_2} & \zeta^{s_2} & \zeta^{s_2t_2}\\
        \end{matrix}\right]
        \left[\begin{matrix}
            f(0,0,0,0,a_5,a_6)\\
            f(0,0,1,1,a_5,a_6)\\
            f(1,1,0,0,a_5,a_6)\\
            f(1,1,1,1,a_5,a_6)\\
        \end{matrix}\right]
        =\left[\begin{matrix}
            0\\
            0\\
            0\\
            0\\
        \end{matrix}\right].
    \end{equation*}
    The coefficient matrix is 
    $\left[\begin{smallmatrix}
        1 & \zeta^{s_1}\\
        1 & \zeta^{s_2}
    \end{smallmatrix}\right]\otimes\left[\begin{smallmatrix}
        1 & \zeta^{t_1}\\
        1 & \zeta^{t_2}
    \end{smallmatrix}\right]$, which has rank 4.
    Thus $f(a)=0$, contradiction.
\end{proof}

\begin{lemma}\label{lm: cyclic group impossible}
    Under assumption \eqref{eq: arity 6 assumption}, the standard case of $B(f)\cong C_n (n\ge 3)$ cannot occur.
\end{lemma}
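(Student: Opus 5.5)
The plan is a purely combinatorial counting argument on the support $\mathscr S(f)$. It combines \cref{lm: cyclic group support size 2} and \cref{lm: cyclic group even support}; no further gadget construction should be needed.

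\textbf{Step 1: double count.} By \cref{lm: cyclic group support size 2}, for every pair $\{i,j\}\subseteq[6]$ we have $|\mathscr S(f_{ij}^{00})|=|\mathscr S(f_{ij}^{11})|=2$. Hence exactly $4$ points $x\in\mathscr S(f)$ satisfy $x_i=x_j$. I would double count the set
\[
P=\{(x,\{i,j\}) : x\in\mathscr S(f),\ x_i=x_j\}.
\]
Summing over the $15$ pairs gives $|P|=4\cdot 15=60$. Summing instead over support points, a point $x$ of weight $w$ contributes $\binom{w}{2}+\binom{6-w}{2}$ pairs on which it is constant.

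\textbf{Step 2: use even support.} By \cref{lm: cyclic group even support}, every $x\in\mathscr S(f)$ has $w\in\{0,2,4,6\}$. For $w\in\{0,6\}$ the contribution is $15$, and for $w\in\{2,4\}$ it is $1+6=7$. Let $a$ be the number of support points of weight $0$ or $6$, and $b$ the number of weight $2$ or $4$. Then
\[
15a+7b=60,\qquad 0\le a\le 2.
\]
Reducing modulo $7$ gives $a\equiv 4\pmod 7$, which is incompatible with $0\le a\le 2$ because only $0^6$ and $1^6$ have those weights. This contradiction shows that the standard cyclic case cannot occur.

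\textbf{Main obstacle.} The only delicate point is checking that the two lemmas apply uniformly to every pair $\{i,j\}$. \cref{lm: cyclic group support size 2} was proved for an arbitrary pair, after a relabeling that does not change $\mathscr S(f)$. If there is any doubt that the relabelings used there, such as exchanging coordinates, alter which slices are meant, I would re-derive directly that $|\mathscr S(f)\cap\{x_i=x_j\}|=4$ for each pair. Note that \textsc{2nd-Orth} already makes $f_{ij}^{00}$ and $f_{ij}^{11}$ nonzero, so the count of $2$ each comes solely from that lemma. Beyond this check the argument is pure arithmetic.
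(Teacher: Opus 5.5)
Your proposal is correct and is essentially the paper's proof. The paper double counts the pairs $(x,\{i,j\})$ with $f(x)\neq 0$ and $x_i=x_j$, obtains $15e+7(m_2+m_4)=60$ from \cref{lm: cyclic group support size 2} and \cref{lm: cyclic group even support}, and derives the impossible congruence $e\equiv 4 \pmod 7$; your relabeling concern is harmless, since exchanging coordinates there does not change the support sizes.
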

\begin{proof}
    We count the number of pairs \((x,\{i,j\})\) such that $f(x)\ne0\text{ and }x_i=x_j.$
    There are fifteen choices of \(\{i,j\}\).
    For each pair, exactly four supported words satisfy $x_i=x_j$ by \cref{lm: cyclic group support size 2}.
    Thus the total number of incidences is
    $15\times4=60.$

    On the other hand,
    let \(e=\#\{x\in \mathscr{S}(f)\mid \mathrm{wt}(x)=0  \text{ or } 6\}\),
    \(m_2=\#\{x\in \mathscr{S}(f)\mid \mathrm{wt}(x)=2\}\) and \(m_4=\#\{x\in \mathscr{S}(f)\mid \mathrm{wt}(x)=2\}\). 
    Clearly
    $0\le e\le2.$
    A word of weight \(0\) or \(6\) has $\binom{6}{2}=15$ equal pairs $\{x_i,x_j\}$.
    A word of weight \(2\) has $\binom22+\binom42=1+6=7$ equal pairs. 
    A word of weight \(4\) has the same number.
    By \cref{lm: cyclic group even support}, the total number of pairs $(x,\{i,j\})$ such that $f(x)\neq 0$ and $x_i=x_j$ is
    $15e+7(m_2+m_4).$
    So $60=15e+7(m_2+m_4)$,
    $4\equiv e\pmod7.$
    Contradicting $0\le e\le 2.$
\end{proof}

\subsubsection{Non-standard Case}
The non-standard case can be handled similarly to the standard case by the following lemma.
We only sketch the proof and focus on differences from the standard case.
\begin{lemma} 
    Under assumption \eqref{eq: arity 6 assumption}, the non-standard case of $B(f)\cong C_n (n\ge 3)$ cannot occur.
\end{lemma}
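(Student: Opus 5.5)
We will work in the $\holant{\neq_2}{\hF}$ setting and mirror the three lemmas of the standard case. There $\widehat B(\widehat f)=\{b^0,\dots,b^{n-1}\}$ with $M(b)=\left[\begin{smallmatrix}0&1\\ \zeta&0\end{smallmatrix}\right]$ and multiplication $M(b_1)N_2M(b_2)$. Every element of the group is then anti-diagonal: $M(b^k)$ has entries $b^k(0,1)=1$ and $b^k(1,0)=\zeta^k$ up to normalization. Equivalently, each $b^k$ is a weighted $\neq_2$.

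Merging two variables of $\widehat f$ by $b^k$ gives
\[
\partial_{ij}^{b^k}\widehat f=\widehat f_{ij}^{01}+\zeta^k\widehat f_{ij}^{10}=:g+\zeta^k h.
\]
This replaces $f_{ij}^{00}+\zeta^k f_{ij}^{11}$ in \eqref{eq: cyclic group merging}, and the roles of $00,11$ are now played by $01,10$. Since {\sc 2nd-Orth} is invariant under the unitary $K$ (\cref{prop: invariant kth Orth}), $g$ and $h$ are nonzero, orthogonal, and of equal norm.

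\textbf{Common matching and support size.} By \cref{prop: binary closure}, each merge factors into two members of $\widehat B$, i.e.\ into two weighted disequalities. Its support is therefore one of three sets $S_1',S_2',S_3'$, one per perfect matching $\ell m\mid pq$, $\ell p\mid mq$, $\ell q\mid mp$. For instance, $S_1'=\{0101,0110,1001,1010\}$ for the matching $\ell m\mid pq$.

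The argument of \cref{claim: cyclic group common matching} carries over unchanged (a Vandermonde inversion), which settles $n\ge4$. For $n=3$ we use $f^{(0)}+\omega f^{(1)}+\omega^2 f^{(2)}=0$, evaluated at a word in exactly one $S_m'$; the word $0101$ lies only in $S_1'$. This yields the common matching.

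We then repeat the proof of \cref{lm: cyclic group support size 2}. The restriction now uses $\widetilde g(x,y)=g(x,\bar x,y,\bar y)$, and the same determinant-polynomial, rank-one, and $|v_0|=|z_1|$ arguments give $|\mathscr S(\widehat f_{ij}^{01})|=|\mathscr S(\widehat f_{ij}^{10})|=2$ for every pair.

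\textbf{Parity and counting.} Next we prove the analogue of \cref{lm: cyclic group even support}. Suppose $a\in\mathscr S(\widehat f)$ with $a_1\ne a_2$ and $a_3\ne a_4$. Double merging by $b^s,b^t$ yields a binary in $\widehat B$, which vanishes on equal inputs. The Kronecker-Vandermonde system then forces $a_5\ne a_6$. Consequently every supported word pairs into complementary pairs on any two disjoint pairs. Together with arrow reversal closure, this forces $\mathscr S(\widehat f)\subseteq\eoe$, the weight-$3$ words.

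Finally we double count pairs $(x,\{i,j\})$ with $x\in\mathscr S(\widehat f)$ and $x_i\ne x_j$. Each of the $15$ pairs contributes exactly $4$, giving $60$. Each weight-$3$ word has $3\cdot3=9$ unequal pairs, so $9\,|\mathscr S(\widehat f)|=60$. This is impossible since $9\nmid 60$.

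\textbf{Main obstacle.} The hardest step is the parity step. The translation of \cref{lm: cyclic group even support} only constrains $a_5,a_6$ when the first two pairs are both unequal. We must also handle a word having some equal pair: merging that pair by $b^s$ kills it, since the $b^s$ are off-diagonal. We would first try to show, via the common matching together with double merges over all choices of disjoint pairs, that every supported word is complementary on all three pairs of every perfect matching, and hence has weight $3$. If that fails, we would instead adapt the count by weight classes, with $e$ now counting words of each weight and congruences mod small numbers. We expect some such modular contradiction as in \cref{lm: cyclic group impossible}.
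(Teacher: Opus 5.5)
Your outline follows the paper's proof: common matching, support size two for $\widehat f^{01}_{ij}$ and $\widehat f^{10}_{ij}$, restriction to weight-three words, and the count $60$ versus multiples of $9$. However, the $n=3$ case of the common-matching step, as you wrote it, is false. You claim $0101$ lies only in $S_1'$, but $0101\in S_1'\cap S_3'$. In the non-standard case each $S_m'$ omits exactly one complementary pair of weight-two words: $S_1'$ omits $0011,1100$, $S_2'$ omits $0101,1010$, and $S_3'$ omits $0110,1001$. Hence every word of the union lies in exactly two of the three supports. Evaluating $f^{(0)}+\omega f^{(1)}+\omega^2 f^{(2)}=0$ at any single word therefore gives only a two-term relation $\omega^{a}f^{(a)}(y)+\omega^{c}f^{(c)}(y)=0$, which is not contradictory. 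This is exactly why the standard-case argument does not transfer.

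The missing idea is to use the actual entries of the factors. After absorbing the nonzero scalars, write the identity as $\alpha\, b^{r_1}(x_1,x_2)b^{r_2}(x_3,x_4)+\beta\, b^{s_1}(x_1,x_3)b^{s_2}(x_2,x_4)+\gamma\, b^{t_1}(x_1,x_4)b^{t_2}(x_2,x_3)=0$ with $\alpha\beta\gamma\neq 0$. Evaluate it at the three words $0110$, $0011$, $0101$, using $b^r(0,1)=1$, $b^r(1,0)=\omega^r$ and $b^r(a,a)=0$. This gives $\alpha\omega^{r_2}+\beta\omega^{s_2}=0$, $\beta+\gamma=0$ and $\alpha+\gamma\omega^{t_2}=0$. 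Hence $\omega^{t_2}=-\omega^{s_2-r_2}$, and cubing both sides yields $1=-1$.

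Your parity step is also incomplete, as you suspected. The double-merge argument needs two disjoint unequal pairs, which weight-$1$ and weight-$5$ words do not have. Arrow-reversal closure of $\hF$ is a statement about the set, not about the support of $\widehat f$ itself, so $\mathscr S(\widehat f)\subseteq\eoe$ is not established, and constant words cannot be excluded this way. A single merge suffices instead. Let $x$ be a non-constant supported word, pick $x_i\neq x_j$, and let $y$ be the remaining four bits. One of $\widehat f^{01}_{ij}(y)$, $\widehat f^{10}_{ij}(y)$ equals $\widehat f(x)\neq 0$, so the degree-one polynomial $t\mapsto \widehat f^{01}_{ij}(y)+t\,\widehat f^{10}_{ij}(y)$ has at most one root. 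It is therefore nonzero at some $\zeta^k$, so $y\in\mathscr S(f^{(k)})$. All words in $\mathscr S(f^{(k)})$ have weight two, hence $\mathrm{wt}(x)=3$. Constant words contribute no unequal pairs, so your count becomes $60=9m$, where $m$ is the number of weight-three supported words. This is still impossible. With these two repairs your argument coincides with the paper's.
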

\begin{proof}
    Fix a pair of indices $\{x_i,x_j\}.$
    Let $f^{(k)}=\partial_{ij}^{b^k}f,\, g=f_{ij}^{01},\, h=f_{ij}^{10}.$
    Without loss of generality, we assume $\{i,j\}=\{5,6\}.$
    We have 
    $
        f^{(k)}=g+\zeta^k h.
    $
    For every $k\in[n]$, $\mathscr{S}(f^{(k)})$ is one of the following:
    \begin{equation*}
        \begin{aligned}
        S_1&=\{0101,0110,1001,1010\}, \text{ or}\\
        S_2&=\{0011,0110,1001,1100\},\text{ or}\\
        S_3&=\{0011,0101,1010,1100\}.
        \end{aligned}
    \end{equation*}
    Again, if there exists two distinct $k,k'\in[n]$ such that $\mathscr{S}(f^{(k)})=\mathscr{S}(f^{(k')})=S$, then $\mathscr{S}(f^{(t)})=S$ for any $t\in[n]$.
    For $n\ge 4$, this is true by the pigeon hole principle.
    For $n=3$,
    unlike in \cref{lm: common matching Cn}, we require an exceptional argument here.
    Suppose $\{\mathscr{S}(f^{(0)}),\mathscr{S}(f^{(1)}),\mathscr{S}(f^{(2)})\}=\{S_1,S_2,S_3\}.$
    Then 
    \[\{f^{(0)},f^{(1)},f^{(2)}\}=\{\lambda_0b^{r_1}(x_1,x_2)b^{r_2}(x_3,x_4),\lambda_1b^{s_1}(x_1,x_3)b^{s_2}(x_2,x_4),\lambda_2b^{t_1}(x_1,x_4)b^{t_2}(x_2,x_3)\}\]
    for some $\lambda_0,\lambda_1,\lambda_2\neq 0$ and $r_1,r_2,s_1,s_2,t_1,t_2\in\{0,1,2\}.$
    By
    $
        f^{(k)}=g+\omega^k h
    $
    where $\omega=e^{\frac{2\pi \mathfrak{i}}{3}}$, we have
    $f^{(0)}+\omega f^{(1)}+\omega^2 f^{(2)}=0.$
    Absorbing the coefficients $\lambda_k$, we have
    $$
    \alpha b^{r_1}(x_1,x_2)b^{r_2}(x_3,x_4)+\beta b^{s_1}(x_1,x_3)b^{s_2}(x_2,x_4)+\gamma b^{t_1}(x_1,x_4)b^{t_2}(x_2,x_3)=0
    $$
    for some $\alpha,\beta,\gamma\neq0.$
    Evaluate at $(x_1,x_2,x_3,x_4)=0110,0011$ and $0101$ respectively, we have
    \begin{equation*}
        \begin{aligned}
            &\alpha \omega^{r_2}+\beta \omega^{s_2}=0,\\
            & \beta+\gamma=0,\\
            &\alpha+\gamma \omega^{t_2}=0.
        \end{aligned}
    \end{equation*}
    The second equation gives $\beta=-\gamma$.
    Plug into the first, $\alpha=\gamma\omega^{s_2-r_2}.$
    Plug into the third, $\omega^{t_2}=-\omega^{s_2-r_2}.$
    Cube both sides, we have $-1=1$, contradiction.
    Thus, the same statement (after replacing $B(f)$ by $\widehat{B}(\widehat{f})$) in \cref{lm: common matching Cn} can be proved in the non-standard case.
    Similar to \cref{lm: cyclic group support size 2}, we can prove $|\mathscr{S}(f_{ij}^{01})|=|\mathscr{S}(f_{ij}^{10})|=2$ for any pair of indices $\{i,j\}.$

    Instead of proving $f$ is supported on even Hamming weights like \cref{lm: cyclic group even support}, we will show that for any $x\in \mathscr{S}(f)$, $\mathrm{wt}(x)=0,6$ or $3$.
    Pick a non-constant $x\in \mathscr{S}(f)$.
    Choose two positions \(i,j\) on which \(x_i\ne x_j\), and let \(y\) denote the remaining four bits.
    Since $f(x)\neq 0$, the degree one polynomial
    \(
    P_y(t)=f_{ij}^{01}(y)+tf_{ij}^{10}(y)
    \)
    is not identically zero.
    Therefore, there exists some $k\in[n]$ such that $f^{(k)}(y)=P_y(\zeta^k)\neq 0.$
    Since $\mathscr{S}(f^{(k)})=S_1,S_2$ or $S_3$, we have $\mathrm{wt}(y)=2.$
    Hence $\mathrm{wt}(x)=3.$
    We have proved $\mathscr{S}(f)\subseteq \{000000,111111\}\cup \{x\mid x\in \{0,1\}^6,\mathrm{wt}(x)=3\}$.

    Now we count the number of pairs $(x, \{i,j\})$ such that $f(x)\neq 0$ and $x_i\ne x_j$.
    Again there are fifteen choices of \(\{i,j\}\), and
    for each pair, exactly four supported words satisfy $x_i=x_j$.
    Thus the total number of incidences is
    $15\times4=60.$
    On the other hand, the two constant words contribute zero unequal pairs, and
    a weight-three word has $3\times 3=9$ unequal pairs.
    So $60=9m$, where $m\in \mathbb{Z}$ is the number of $x\in \mathscr{S}(f)$ such that $\mathrm{wt}(x)=3.$
    This is a contradiction.
\end{proof}

\subsection{Synthesis of All Cases}

We are ready to prove the main theorem of this section.
\ThirdOrderOrth*

\begin{proof}
    By \cref{thm: binary set is finite group}, $B(f)$ is isomorphic to a finite subgroup of $\mathbf{SO}(3)$.
    It can either be one of the polyhedral groups, or a dihedral group, or a cyclic group.
    If $B(f)$ is a polyhedral group, or a dihedral group of order at least six, or the Klein group in the standard case, $f$ satisfies {\sc 3rd-Orth} by \cref{lm: arity 6 polyhedral group 3rd orth,lm: arity 6 dihedral group 3rd orth,lm: K4 standard case}, respectively.
    The non-standard case and the cyclic group case cannot happen by \cref{lm: K4 non-standard,lm: trivial group,lm: C2 symmetric impossible,lm: C2 asymmetric impossible,lm: cyclic group impossible}.
    The above cases are exhaustive.
\end{proof}

\section{Recovering \texorpdfstring{$f_6$}{f6} from Six-Qubit $\operatorname{AME}(6,2)$ Signatures}
\label{sec:recover-f6}

By \cref{thm: third order orthogonality}, we have realized a 6-ary signature $f$ and its normalized state as an $\mathrm{AME}(6,2)$ state.
In this section, we will prove either $\Holant(\F)$ is $\#\operatorname{P}$-hard, or we can realize an special 6-ary signature $f_6$ and four Bell signatures $\mathcal{B}=\{I_2,X,Z,J\}$ after a real orthogonal holographic transformation by $O$, i.e., $\Holant(O\F,f_6,\mathcal{B})\leq_T \Holant(\F)$. 
With this $f_6$ and the four Bell signatures $\mathcal{B}$ in hand, in the next~\cref{sec: f6 is hard}, we will prove that $\Holant(O\mathcal{F},f_6,\mathcal{B})$ is $\#\operatorname{P}$-hard. 

We retain the standing assumptions that $\mathcal F$ is closed under entrywise conjugation, and is
outside condition \eqref{cond: tractable classes}.

The sepcial signature $f_6$ is first found in the real Holant problems~\cite{realholant}, and it is a obstacle for the proof of hardness of real Holant~\cite{realholant}. In this paper, we illustrate that why such a special signature can be realized in real Holant and our conjugate-closed Holant, it is corresponding to the $\mathrm{AME}(6,2)$ state in the quantum perspective. The condtion outside~\eqref{cond: tractable classes} and the induction hypothesis make the case $2n=6$ that either yields $\#\operatorname{P}$-hardness or an $\mathrm{AME}(6,2)$ state. And combined with the proof of uniqueness of the $\mathrm{AME}(6,2)$ state in~\cref{sec: AME62 has a common local unitary normal}, we can finally realize a concrete $\mathrm{AME}(6,2)$ state $f_6$ in our conjugate-closed Holant. Since the this $\mathrm{AME}(6,2)$ state $f_6$ has a simple expression, we can easily generalize the proof of $f_6$ giving $\#\operatorname{P}$-hardness in~\cite{realholant} to our conjugate-closed Holant problems.

We now give the definition of 6-ary signature $f_6$:
\[
 f_6(x)=\mathbf1_{x_1+\cdots+x_6=0}
 (-1)^{x_1+x_2+x_3+q(x)},\qquad
 q(x)=x_1x_2+x_1x_3+x_2x_3+x_1x_4+x_2x_5+x_3x_6,
\]
where Boolean arithmetic is over $\mathbb{F}_2$. With rows and columns indexed by $x_1x_2x_3$ and $x_4x_5x_6$, we have the signature matrix of $f_6$ as following:
\begingroup
\setlength{\arraycolsep}{4pt}
\renewcommand{\arraystretch}{1.1}
\[
M_{x_1x_2x_3,x_4x_5x_6}(f_6)=
\begin{pmatrix}
 1 &  0 &  0 &  1 &  0 &  1 &  1 &  0 \\
 0 &  1 & -1 &  0 & -1 &  0 &  0 &  1 \\
 0 & -1 &  1 &  0 & -1 &  0 &  0 &  1 \\
-1 &  0 &  0 & -1 &  0 &  1 &  1 &  0 \\
 0 & -1 & -1 &  0 &  1 &  0 &  0 &  1 \\
-1 &  0 &  0 &  1 &  0 & -1 &  1 &  0 \\
-1 &  0 &  0 &  1 &  0 &  1 & -1 &  0 \\
 0 & -1 & -1 &  0 & -1 &  0 &  0 & -1
\end{pmatrix}.
\]
\endgroup

Its normalized state is an $\mathrm{AME}(6,2)$ state. Now we have a 6-ary signature $f$ and its normalized state is $\mathrm{AME}(6,2)$, since $\mathrm{AME}(6,2)$ has one local-unitary normal form (see~\cref{sec: AME62 has a common local unitary normal} for the detailed proofs), after relabeling its inputs, we can get that 
\begin{equation}\label{eq:f6-local-unitary-form}
    f=\lambda(U_1\otimes\cdots \otimes U_6)f_6, \quad \lambda\neq 0, \quad U_iU^\dagger=I.
\end{equation}
One can refer to~\cref{fig:f6-local-unitary-gadget}. for a more intuitive understanding.
\begin{figure}[H]
\centering
\begin{tikzpicture}[
    wire/.style={draw=black, line width=0.6pt, line cap=round},
    boundary/.style={draw=black!55, densely dashed, line width=0.55pt,
        rounded corners=7pt},
    signature/.style={draw=black, circle, fill=white, minimum size=7.2mm,
        inner sep=0pt, line width=0.6pt, font=\small},
    local unitary/.style={signature, minimum size=6.5mm, font=\scriptsize},
    port label/.style={font=\scriptsize, inner sep=1pt}
]
    \path[use as bounding box] (0,0) rectangle (13.2,4.05);

    \coordinate (f) at (1.35,2.10);
    \draw[wire] (0.35,3.00) -- (f);
    \draw[wire] (0.25,2.10) -- (f);
    \draw[wire] (0.35,1.20) -- (f);
    \draw[wire] (f) -- (2.35,3.00);
    \draw[wire] (f) -- (2.45,2.10);
    \draw[wire] (f) -- (2.35,1.20);
    \node[signature, minimum size=8.5mm] at (f) {$f$};

    \node[font=\normalsize] at (2.90,2.10) {$=$};
    \node[font=\normalsize] at (3.55,2.10) {$\lambda$};

    \draw[boundary] (4.72,0.35) rectangle (12.28,3.85);
    \node[port label, anchor=south east, text=black!70]
        at (12.15,3.85) {$f/\lambda$};

    \coordinate (f6) at (8.50,2.10);
    \draw[wire] (4.25,3.40) -- (f6);
    \draw[wire] (4.25,2.10) -- (f6);
    \draw[wire] (4.25,0.80) -- (f6);
    \draw[wire] (f6) -- (12.75,3.40);
    \draw[wire] (f6) -- (12.75,2.10);
    \draw[wire] (f6) -- (12.75,0.80);

    \node[local unitary] at (5.90,2.90) {$U_1$};
    \node[local unitary] at (5.90,2.10) {$U_2$};
    \node[local unitary] at (5.90,1.30) {$U_3$};
    \node[local unitary] at (11.10,2.90) {$U_4$};
    \node[local unitary] at (11.10,2.10) {$U_5$};
    \node[local unitary] at (11.10,1.30) {$U_6$};
    \node[signature, minimum size=9mm] at (f6) {$f_6$};

    \node[port label, left]  at (4.25,3.40) {$x_1$};
    \node[port label, left]  at (4.25,2.10) {$x_2$};
    \node[port label, left]  at (4.25,0.80) {$x_3$};
    \node[port label, right] at (12.75,3.40) {$x_4$};
    \node[port label, right] at (12.75,2.10) {$x_5$};
    \node[port label, right] at (12.75,0.80) {$x_6$};
\end{tikzpicture}
\caption{Gadget form of~\eqref{eq:f6-local-unitary-form}.
The dashed enclosure realizes
$(U_1\otimes\cdots\otimes U_6)f_6=f/\lambda$, where $U_i$ is an binary extension on variable $x_i$ of $f_6$.
This need not be an $\mathcal F$-gate, since the local unitaries $U_i$ maybe can not be realized in $\mathcal F$.}
\label{fig:f6-local-unitary-gadget}
\end{figure}

However, the matrices $U_1,\dots,U_6$ maybe not available in $\Holant(\F)$, so instead of proving that we can directly realize $f_6$ in $\Holant(\F)$,
we will prove that we can convert $U_1,\dots,U_6$ to $V_1,\dots,V_6$ where $[V_i]\in \Gamma$ for $1\leq i\leq 6$ after a orthogonal holographic transformation. Let $h=O^{\otimes6}f=\lambda(V_1\otimes\cdots\otimes V_6)f_6$.
For each pair of variables, we contract $h$ using $=_2$ and extract
the two resulting binary signatures. Let $G$ be the group generated by
the projective classes of these factors, their conjugates, and their
transposes. We then treat the cases $G=\Gamma$ and $G\subseteq K_4$
to remove the remaining local matrices and obtain $f_6$ together
with the four Bell signatures.

The main theorem of the section is the following theorem.

\begin{theorem}[Realizing $f_6$]
\label{thm:f6-equality-realization}
Suppose that $\mathcal F$ doesn't satisfy condition~\eqref{cond: tractable classes} and contains a nonzero six-ary signature whose
normalized state is $\operatorname{AME}(6,2)$. Then either
$\operatorname{Holant}(\mathcal F)$ is $\#\mathrm P$-hard, or there
is a real orthogonal matrix $O$ such that
\(
 \operatorname{Holant}(O\mathcal F\cup\{f_6\}\cup\mathcal B)
 \leq_T \operatorname{Holant}(\mathcal F).
\)
\end{theorem}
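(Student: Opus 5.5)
The plan is to follow the roadmap sketched just before the statement. We start from the nonzero six-ary signature $f\in\mathcal F$ whose state is $\operatorname{AME}(6,2)$. By the uniqueness of $\operatorname{AME}(6,2)$ up to local unitaries and qubit permutations (\cref{sec: AME62 has a common local unitary normal}), we may write $f=\lambda(U_1\otimes\cdots\otimes U_6)f_6$ as in \eqref{eq:f6-local-unitary-form}.

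The first step is to extract realizable binaries that constrain the $U_i$. Since $f$ satisfies {\sc 3rd-Orth}, it also satisfies {\sc 2nd-Orth}. We may assume $f$ is irreducible and $\partial_{ij}f\in\mathcal U^{\otimes}$ for every pair: otherwise the induction hypothesis, \cref{lm: decomposition} and \cref{lm: base case 2n=4} already give \#P-hardness. Compute $\partial_{ij}f$ explicitly. Contracting $x_i,x_j$ of $(U_1\otimes\cdots\otimes U_6)f_6$ with $=_2$ amounts to contracting $f_6$ with the binary of matrix $U_i^{\tt T}U_j$. For $f_6$, merging any pair with a Bell-type binary yields a tensor product of two Bell binaries (up to local Clifford-type factors), and the binary closure \cref{prop: binary closure} forces $\partial_{ij}f$ to factor as two binaries. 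Comparing the two descriptions gives relations of the form $[U_k^{\tt T}\,P\,U_\ell]\in B(f)$, where $P$ ranges over Pauli/Bell matrices, for many index pairs.

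Next we use \cref{thm: binary set is finite group}, together with closure of $B(f)$ under conjugation and transpose, to deduce that $B(f)$ is a finite subgroup of $\mathbf{PU}(2)$. The relations above should then show that all products $U_k^{\tt T}U_\ell$ and $U_k^{\tt T}PU_\ell$ lie projectively in one finite group that contains a conjugate of $K_4$. This is \cref{lem:f6-common-orthogonal}. The concrete aim is a single unitary $W$ with $U_k\sim W V_k$ and $[V_k]\in\Gamma$ for all $k$. We then need to show that $W$ can be taken real orthogonal, up to factors absorbed into the $V_k$. For this we would use conjugate closure: $\overline f$ gives the relations with $\overline{U_k}$, which forces $\overline W^{-1}W$ to normalize $\Gamma$ projectively, and an argument in $\mathbf{SO}(3)$ (the rotation $R_W$ normalizing the tetrahedron axes) then lets us choose $W\in\mathbf O_2$. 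Transforming by $O=W^{-1}$ (noting $W^{\tt T}=W^{-1}$) gives $h=O^{\otimes 6}f=\lambda(V_1\otimes\cdots\otimes V_6)f_6$. By \cref{lem: invariance under real orthogonal transformation} and \cref{prop: invariant kth Orth}, all standing hypotheses survive the transformation.

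Finally, let $G$ be the group generated by the projective binary factors extracted from the $\partial_{ij}h$ together with their conjugates and transposes. We show that either $G=\Gamma$ or $G\le K_4$, which is \cref{lem:f6-group-alternatives}. If $G=\Gamma$, every $[V_k]$ is realizable, so we attach $V_k^{-1}$ on each leg of $h$ to recover $f_6$. Merging $f_6$ with $=_2$ then yields factors that generate $K_4=\mathcal B$. If $G\le K_4$, the $V_k$ may lie outside $G$. In that case we pair legs: since the $V_k$ lie in $\Gamma=K_4\rtimes C_3$, we would show that the nontrivial $C_3$ parts must coincide across legs. We then absorb this common order-three element using the symmetry of $f_6$ under simultaneous local $C$-rotations, which $f_6$ should admit as a stabilizer-type state. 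This removes the local matrices up to Bell factors, which are realizable because merging $h$ produces all of $K_4$.

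The main obstacle will be the middle step. We must prove rigorously that a single real orthogonal $O$ simultaneously conjugates all $U_k$ into $\Gamma$, or else obtain hardness (for example via \cref{lm: I+AA implies hard} or \cref{lm: 2 by 2 interpolation} when some extracted binary has infinite order). The first thing to try is to compute the binary factors of $\partial_{ij}f_6$ for all 15 pairs and read off the constraints on the $U_k$ directly.
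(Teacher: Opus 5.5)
Your outline follows the paper's roadmap, but two of the steps you sketch do not go through as stated.

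\emph{The constraint on the $U_k$.} Merging $x_i,x_j$ of $f$ with $=_2$ contracts $f_6$ with $Q=U_i^{\mathtt T}U_j$, and $Q$ is an unknown unitary.

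\begin{itemize}
\item Your remark that merging $f_6$ with a Bell-type binary gives two Bell factors therefore does not apply. ``Comparing the two descriptions'' to get $[U_k^{\mathtt T}PU_\ell]\in B(f)$ assumes the very fact you are trying to prove. Computing $\partial_{ij}f_6$ for the fifteen pairs only covers $Q=I$.
\item Finiteness of $B(f)$ (\cref{thm: binary set is finite group}) does not close the gap. It yields some finite subgroup of $\mathbf{PU}(2)$, possibly octahedral, icosahedral or dihedral. It does not yield the fixed group $\Gamma$ that the later normalization needs.
\end{itemize}

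The missing idea is the ``only if'' direction of \cref{lem:f6-contraction-test}. For every nonzero $Q$, $\partial^Q_{ij}f_6\neq0$, and it is a tensor product of two binaries only if $[Q]\in\Gamma$. The paper proves this by requiring all $2\times2$ minors of the three $2|2$ flattenings of $\partial^Q_{12}f_6$ to vanish. This leaves exactly $K_4$ and the eight classes $[C_{\epsilon_1,\epsilon_2,\epsilon_3}]$. The remaining local unitaries preserve non-factorization, so \cref{lm: base case 2n=4} then gives $[U_i^{\mathtt T}U_j]\in\Gamma$ for all $i\neq j$ directly.

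Your route to a real $O$ also targets the wrong condition. Conjugate closure adds nothing here: the relations for $\overline f$ are the conjugates of those for $f$, and $\overline\Gamma=\Gamma$. What is needed is not a normalizer statement but $[W^{\mathtt T}W]\in\Gamma$ for $W=\overline{U_1}$. The paper obtains this in three steps:
\begin{itemize}
\item It derives $[U_1^{\mathtt T}U_1]\in\Gamma$ from the off-diagonal relations, via $U_2^{\mathtt T}U_3=A_2^{\mathtt T}A_1^{-1}A_3$ with $A_i=U_1^{\mathtt T}U_i$.
\item It notes that the symmetric unitary $S=U_1^\dagger\overline{U_1}$ is then projectively $I$, $X$ or $Z$.
\item It picks $A_0\in\Gamma$ with $A_0^{\mathtt T}SA_0=I$, so that $\overline{U_1}A_0$ is unitary and orthogonal, hence real.
\end{itemize}

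\emph{The Klein case.} Your claim that the $\Gamma/K_4$-components of the $V_k$ must coincide is false, so simultaneous local rotations cannot remove them. Write $[V_k]\in[r]^{c_k}K_4$ with $r=C_{+,+,+}$. Up to a common shift, the contraction constraints at the pairs $\{1,2\}$ and $\{1,3\}$ allow the patterns $000000$, $001221$, $010212$, $012120$, $021102$, $022011$. The non-constant ones genuinely occur. For example, $I\otimes I\otimes r\otimes C_{+,-,+}\otimes C_{+,-,+}\otimes C_{+,-,-}$ has tuple $(0,0,1,2,2,1)$. Yet it maps the Bell twist $g$ of $f_6$ to a scalar multiple of a nontrivial \emph{relabeling} of $g$, so all its equality contractions still have Bell factors.

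Choosing the $V_k$ differently does not help. Permutation-free local symmetries of $g$ form a group of admissible tuples, and twice $(0,0,1,2,2,1)$ is $(0,0,2,1,1,2)$, which is not admissible. What you need are symmetries of $f_6$ that combine local $\Gamma$ matrices with input permutations. The paper exhibits three such symmetries that generate all eighteen tuples, and then relabels inputs so that $h$ becomes $g$ up to Bell matrices.

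Finally, ``merging $h$ produces all of $K_4$'' is exactly what has to be checked at that point. Write $h=\alpha\mathbf 1_{\sum_i x_i=s}(-1)^{q_0(x)+\sum_i\ell_ix_i}$, where $q_0$ is the quadratic form of $g$. You must then exhibit contraction factors whose products have classes $[X]$ and $[Z]$ for every choice of $s$ and $\ell_i$. The paper does this with an explicit table.
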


\subsection{Binary Contractions and Hardness}
\label{subsec:f6-binary-contractions}

In this subsection, we first analyze binary contractions of $f_6$,
then use the local-unitary expression of $f$ to constrain the matrices
$U_i$. By the preceding reductions and \cref{lm: base case 2n=4},
realizing a nonzero four-ary signature outside $\mathcal U^\otimes$
gives hardness. Thus, unless hardness already follows, every nonzero
contraction of $f$ through a realizable binary signature belongs to
$\mathcal U^\otimes$.

Recall that the matrices $I=\left(\begin{matrix}
    1 & 0\\
    0 & 1\\
\end{matrix}\right),X=\left(\begin{matrix}0&1\\1&0\end{matrix}\right)$,
$Z=\left(\begin{matrix}1&0\\0&-1\end{matrix}\right)$, and
$J=\left(\begin{matrix}0&1\\-1&0\end{matrix}\right)$. For a nonzero matrix $A$, $[A]$ denotes its class up to a nonzero scalar. Also recall the following notation of Kelin group and tetrahedral,
\[
 C_{\epsilon_1,\epsilon_2,\epsilon_3}
 =\tfrac12(I+\epsilon_1\mathfrak iX+\epsilon_2J+\epsilon_3\mathfrak iZ),
 \qquad
 \Gamma=K_4\cup
 \{[C_{\epsilon_1,\epsilon_2,\epsilon_3}]:
          \epsilon_1,\epsilon_2,\epsilon_3\in\{\pm1\}\},
\]
where $K_4=\{[I],[X],[Z],[J]\}$. $K_4$ is the projective Kelin group and $\Gamma$ is the projective tetrahedral group. Both are closed under multiplication, inverse, transpose, and conjugation.
The matrices in $\mathcal B=\{I,X,Z,J\}$ are the four Bell signatures.

Also recall the gadget construction of $\Holant(\F)$, for a 6-ary signature $h$ and a binary matrix $Q$, write $\partial_{ij}^Qh=\sum_{x_i,x_j}Q(x_i,x_j)h(x)$, which is called a contraction of $h$ through $Q$ on variables $x_i,x_j$. In particular,
$\partial_{ij}^Ig$ joins ports $i,j$ directly by $=_2$.
We first determine when such a contraction of $f_6$ splits into two
binary factors.

\begin{lemma}[Binary contractions of $f_6$]
\label{lem:f6-contraction-test}
For every nonzero binary matrix $Q$ and every pair $1\le i<j\le6$,
the signature $\partial_{ij}^Qf_6$ is nonzero. $\partial_{ij}^Qf_6$ is a tensor product
of two binary signatures if and only if $[Q]\in\Gamma$.
In that case, both factors are projectively unitary and their
projective classes belong to $\Gamma$.
\end{lemma}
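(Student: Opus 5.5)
Nonzeroness is immediate from earlier results. The normalized state of $f_6$ is $\operatorname{AME}(6,2)$, so $f_6$ satisfies {\sc 3rd-Orth} and hence {\sc 2nd-Orth}. By \cref{lm: 2nd-orth contraction nonzero}, $|\partial_{ij}^Qf_6|^2=\lambda|Q|^2>0$ for every nonzero $Q$.

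For the factorization criterion I would first reduce to a single pair $\{i,j\}$ using symmetries of $f_6$. The hexacode/affine structure of $f_6$ should give an automorphism group acting transitively on the $15$ pairs of indices. Concretely, I would look for permutations of variables combined with local matrices from $\Gamma$, i.e. $f_6=\mu(A_1\otimes\cdots\otimes A_6)f_6^{\pi}$ with $[A_k]\in\Gamma$. Under such a symmetry, $\partial_{ij}^Q f_6$ is carried to $\partial_{\pi(i)\pi(j)}^{A_i^{\tt T}QA_j}f_6$ with the other four legs locally transformed by elements of $\Gamma$. Because $\Gamma$ is a group closed under transpose, neither the condition $[Q]\in\Gamma$ nor the conclusion is affected. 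The candidate symmetries I would check directly from the matrix display are the $S_3$ permuting $(x_1,x_2,x_3)$ together with $(x_4,x_5,x_6)$, the swap of the two triples, and local Pauli/Clifford gates. If transitivity fails, I would instead treat the few orbits by hand.

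For a fixed pair, say $(1,2)$, write $Q=\sum_{s}q_s\sigma_s$ in the Pauli basis. Then $\partial_{12}^Qf_6=\sum_s q_s g_s$, where $g_s=\partial_{12}^{\sigma_s}f_6$. As in the proof of \cref{lm: common matching}, {\sc 2nd-Orth} makes $M_{12}(f_6)$ a Schmidt decomposition with four orthogonal $4$-ary pieces. Since $f_6$ is affine, each $g_s$ is an affine $4$-ary signature. I expect, and would verify by explicit computation from the displayed matrix, that each $g_s$ is a product of two Bell-type binaries on one fixed matching $\{3,4\}\mid\{5,6\}$ (or whichever matching the structure dictates). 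Moreover, the four $g_s$ should be $B_s\otimes C_s$ where $s\mapsto[B_s]$ and $s\mapsto[C_s]$ are bijections onto $K_4$. With this, $\partial_{12}^Qf_6=\sum_s q_s B_s\otimes C_s$. This is a tensor product across the matching $\{34\}\mid\{56\}$ exactly when the $4\times 4$ coefficient matrix $\operatorname{diag}(q_s)$ has rank one, i.e. $[Q]\in K_4$.

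The remaining two matchings need a separate rank condition. Regroup the sum as a $4\times 4$ matrix $M_{35,46}$ and require it to have rank one. Using the Pauli multiplication table, this should translate into a quadratic condition on the $q_s$. Its projective solutions I expect to be exactly the eight classes $[C_{\epsilon_1,\epsilon_2,\epsilon_3}]$, with one matching for each coset structure. Together with the $K_4$ case, this yields precisely $\Gamma$. In each case the factors are explicit. Because the sum is orthogonal, they have equal singular values and are therefore projectively unitary; their classes land in $\Gamma$ by closure of $\Gamma$ under multiplication.

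The main obstacle is this last step: cleanly showing that the rank-one condition on the two crossed matchings cuts out exactly the eight non-Klein tetrahedral classes, with no extra solutions. My first attempt would compute the $2\times2$ minors symbolically in $q_0,\dots,q_3$ and factor them. If that is messy, I would instead use the $\mathbf{SO}(3)$ picture: a crossed factorization forces $Q$ to intertwine the actions of the Pauli labels on the two sides, which determines $R_Q$ up to the cyclic permutations of axes.
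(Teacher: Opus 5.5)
Your nonzeroness argument is correct and covers more than the paper's. Since the state of $f_6$ is $\operatorname{AME}(6,2)$, $f_6$ satisfies {\sc 2nd-Orth}, and \cref{lm: 2nd-orth contraction nonzero} gives $\partial_{ij}^Qf_6\neq0$ for all pairs at once. The paper reads nonzeroness off the explicit $4\times4$ matrix for $(i,j)=(1,2)$ only.

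Your Pauli expansion also handles the Klein part, once the matching is corrected. At the pair $(1,2)$ the four contractions $\partial_{12}^{\sigma_s}f_6$ all split along $(x_3,x_6),(x_4,x_5)$. The factor classes are $([J],[X]),([X],[Z]),([I],[I]),([Z],[J])$ for $[Q]=[I],[X],[Z],[J]$. So your linear-independence argument shows that $\partial_{12}^Qf_6$ factors along this pairing exactly when $[Q]\in K_4$.

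Projective unitarity of the factors can be recovered. By {\sc 3rd-Orth}, every single-variable mating of $\partial_{12}^Qf_6$ is a multiple of $I_2$, and this forces every binary factor to be projectively unitary.

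Your reduction to a single pair is only planned. The paper simply says ``without loss of generality'' there. But the evident symmetry of $f_6$ (permuting $x_1,x_2,x_3$ and $x_4,x_5,x_6$ together) leaves four orbits of pairs, so $\Gamma$-local symmetries really are needed.

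The genuine gap is the step you flag yourself. For the crossed pairings $(x_3,x_4),(x_5,x_6)$ and $(x_3,x_5),(x_4,x_6)$ you only \emph{expect} the rank-one condition to cut out exactly the eight classes $[C_{\epsilon_1,\epsilon_2,\epsilon_3}]$. That is the substance of the lemma: these eight $Q$ must factor and no other $Q$ may. Nothing in your setup forces either. Likewise, ``classes land in $\Gamma$ by closure under multiplication'' is not an argument, since a crossed factor is not visibly a product of the $B_s,C_s$.

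The paper settles this by brute force. It writes $M_{x_3x_4,x_5x_6}(\partial_{12}^Qf_6)$ in terms of $Q=\left[\begin{smallmatrix}a&b\\c&d\end{smallmatrix}\right]$ and imposes vanishing of all $2\times2$ minors for each of the three flattenings. The $(36)(45)$ system gives $K_4$. Each crossed system forces $d\neq0$ and has exactly four projective solutions, one non-identity coset of $K_4$ in $\Gamma$ each. The factors are then read off explicitly.

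You could instead finish inside your framework. Write $\partial_{12}^Qf_6=\sum_s\tilde q_sB_s(x_3,x_6)C_s(x_4,x_5)$ with unitary $B_s,C_s$; here $\tilde q_s$ is $q_s$ times a fixed nonzero scalar. Suppose this equals $A(x_3,x_4)D(x_5,x_6)$. Contracting $x_4,x_5$ against $\overline{C_s}$ gives $A\overline{C_s}D=2\tilde q_sB_s$. Hence:
\begin{itemize}
\item $A,D$ are invertible and every $\tilde q_s\neq0$;
\item $D\sim A^{-1}$, using the term with $[C_s]=[B_s]=[I]$;
\item since the Bell matrices are real, $[A]$ conjugates $[C_s]$ to $[B_s]$, which is the $3$-cycle $[X]\mapsto[J]\mapsto[Z]\mapsto[X]$.
\end{itemize}
So $[A]$ lies in the one coset of $K_4$ in $\Gamma$ whose elements induce this $3$-cycle, and $[A]$ determines $[Q]$. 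The other crossed pairing is handled the same way. This puts the factors in $\Gamma$ and leaves at most four admissible $[Q]$ per crossed pairing. Identifying those with the $[C_{\epsilon_1,\epsilon_2,\epsilon_3}]$, and showing that they do factor, still needs the explicit scalars in $\partial_{12}^{\sigma_s}f_6$.
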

\begin{proof}[Proof of Lemma~\ref{lem:f6-contraction-test}]
Without loss of generality, we consider the first two variables $x_1,x_2$.
First take $i=1,j=2$ and write
$Q=\left(\begin{smallmatrix}a&b\\c&d\end{smallmatrix}\right)$.
By the definition of $f_6$, with row and column indices ordered
as $00,01,10,11$, we have
\[
M_{x_3x_4,x_5x_6}(\partial_{12}^Qf_6)=
\begin{pmatrix}
a-d&-b-c&b-c&a+d\\
c-b&a+d&a-d&b+c\\
-b-c&a-d&-a-d&c-b\\
-a-d&b-c&b+c&a-d
\end{pmatrix}.
\]
This matrix is nonzero when $Q\ne0$: its entries determine
$a-d,a+d,b-c,b+c$ and hence $a,b,c,d$.
The equality $\partial_{12}^Qf_6=A(x_3,x_4)B(x_5,x_6)$ holds
for nonzero binary signatures $A,B$ if and only if this matrix
has rank one. The other two possible factorizations are described
by $M_{x_3x_5,x_4x_6}(\partial_{12}^Qf_6)$ and
$M_{x_3x_6,x_4x_5}(\partial_{12}^Qf_6)$ in the same way.
A nonzero matrix has rank one if and only if all its
$2$-by-$2$ minors vanish. Applying this criterion gives the following
necessary conditions for each matrix:
\[
\begin{array}{c|l}
M_{x_3x_4,x_5x_6}&a^2+d^2=ab-cd=b^2+d^2=ac+bd=ad+bc=c^2-d^2=0,\\
M_{x_3x_5,x_4x_6}&a^2+d^2=ab+cd=b^2-d^2=ac-bd=ad+bc=c^2+d^2=0,\\
M_{x_3x_6,x_4x_5}&a^2-d^2=ab=b^2-c^2=ac=bd=cd=0.
\end{array}
\]
Here all three matrices are evaluated at $\partial_{12}^Qf_6$.
For example, in the displayed matrix the minors on rows $1,4$
and columns $1,4$, and on rows $1,2$ and columns $2,3$, are
$2(a^2+d^2)$ and $-2(ab-cd)$, respectively.
The last system gives either $b=c=0$, $a=\pm d\ne0$, or
$a=d=0$, $b=\pm c\ne0$, hence exactly the classes in $K_4$.
In either of the first two systems, $d=0$ forces $Q=0$.
Scaling to $d=1$, their solutions are respectively
$(a,b,c)=(\epsilon\mathfrak i,\eta\mathfrak i,-\epsilon\eta)$ and
$(\epsilon\mathfrak i,\eta,-\epsilon\eta\mathfrak i)$, where
$\epsilon,\eta\in\{\pm1\}$.
Their classes are $[C_{\eta,\epsilon\eta,\epsilon}]$ and
$[C_{-\epsilon\eta,\eta,\epsilon}]$, giving the remaining eight
classes of $\Gamma$.

Substituting these solutions gives the following factorizations,
up to a nonzero scalar. In a row with variable pairs $(x_a,x_b)$
and $(x_c,x_d)$, the last column lists the classes of
$A,B$ in $\partial_{12}^Qf_6=A(x_a,x_b)B(x_c,x_d)$.
\[
\begin{array}{c|c|c}
[Q]&\text{variable pairs}&\text{factor classes}\\ \hline
{[I]}&(x_3,x_6),(x_4,x_5)&[J],\ [X]\\
{[X]}&(x_3,x_6),(x_4,x_5)&[X],\ [Z]\\
{[Z]}&(x_3,x_6),(x_4,x_5)&[I],\ [I]\\
{[J]}&(x_3,x_6),(x_4,x_5)&[Z],\ [J]\\
{[C_{\eta,\epsilon\eta,\epsilon}]}&(x_3,x_4),(x_5,x_6)&
[ZC_{\eta,-\epsilon\eta,\epsilon}],\ [C_{-\eta,-\epsilon\eta,\epsilon}]\\
{[C_{-\epsilon\eta,\eta,\epsilon}]}&(x_3,x_5),(x_4,x_6)&
[ZC_{-\epsilon\eta,\eta,\epsilon}],\ [C_{\epsilon\eta,\eta,\epsilon}]
\end{array}
\]
Every listed class lies in $\Gamma$ and has a unitary representative.
This proves the converse as well. For each solution only one of the
three systems holds, and the two factors for that partition are
unique up to reciprocal scalars. Thus the conclusion holds for
any choice of factors.

\end{proof}

Contracting a binary signature $B$ with inputs $i,j$ of $f$ corresponds
to contracting $U_i^TBU_j$ with the same inputs of $f_6$.
The following theorem gives hardness when this relative matrix lies
outside $\Gamma$. We first use its equality case, $B=I$, to obtain
the conditions needed for a common change of basis.

\begin{theorem}[The relation between $U_i$ and $U_j$]
\label{thm:f6-nongamma-hard}
Let $\F$ be a conjugate closed signature set, $\F$ outside condition~\eqref{cond: tractable classes} and contains an $\operatorname{AME}(6,2)$ state $f=\lambda(U_1\otimes\cdots \otimes U_6)f_6$ where $\lambda\neq 0$ and $U_iU^\dagger=I$.
Suppose that a nonzero binary signature $B$ is available in $\Holant(\F)$.
If $[U_i^TBU_j]\notin\Gamma$ for some $i<j$, then
$\Holant(\mathcal F)$ is $\#\mathrm P$-hard.
In particular, hardness follows if $[U_i^TU_j]\notin\Gamma$ for some
$i<j$.
\end{theorem}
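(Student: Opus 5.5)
The plan is to realize the four-ary contraction $g:=\partial_{ij}^B f$ in $\Holant(\mathcal F)$ and show that it lies outside $\mathcal U^{\otimes}$. Then \cref{lm: base case 2n=4} gives \#P-hardness.

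First, I push the local unitaries through the contraction. Since $f=\lambda(U_1\otimes\cdots\otimes U_6)f_6$, summing $B(x_i,x_j)$ against $f$ is the same as contracting $f_6$ on inputs $i,j$ with the binary matrix $Q:=U_i^{\tt T}BU_j$, and then applying the remaining four unitaries on the free inputs. Concretely,
\[
\partial_{ij}^B f=\lambda\Bigl(\bigotimes_{k\neq i,j}U_k\Bigr)\partial_{ij}^{Q}f_6 .
\]
Since $B\neq 0$ and the $U$'s are invertible, $Q\neq 0$. By \cref{lem:f6-contraction-test}, $\partial_{ij}^Qf_6$ is then nonzero, and hence $g\neq 0$.

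Second, I show $g\notin\mathcal U^{\otimes}$. Suppose for contradiction that $g$ is a tensor product of two binary signatures. Applying $U_k^{-1}$ on each free input preserves tensor product structure and the pairing of variables, so $\partial_{ij}^Qf_6$ would also be a tensor product of two binaries. By \cref{lem:f6-contraction-test} this forces $[Q]\in\Gamma$, contrary to hypothesis. We must also rule out a unary factor, but a tensor product of unaries also splits as binary $\otimes$ binary, so the same contradiction applies. Hence $g$ is a nonzero four-ary signature not in $\mathcal U^{\otimes}$. Since $g$ is realizable from $\mathcal F\cup\{B\}$ and $B$ is available, \cref{lm: base case 2n=4} (applied to $\mathcal F\cup\{g\}$, which is still conjugate closed because $\overline g=\partial_{ij}^{\overline B}\overline f$ and $\overline{B}$ is available by conjugate closure) yields hardness.

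The ``in particular'' part takes $B=I$, i.e.\ $=_2$, which is always available. One needs $\mathcal F\cup\{g,\overline g\}$ to remain outside \eqref{cond: tractable classes}; this holds because adding signatures cannot make the set tractable.

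The main obstacle is bookkeeping in the first step. The contraction uses $U_i^{\tt T}$ rather than $U_i^{\dagger}$, and one has to track transposes consistently with how variables $i,j$ enter $Q$. This transpose convention is exactly what the statement uses.
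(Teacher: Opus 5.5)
Your proposal is correct and follows the paper's proof essentially step for step. You rewrite $\partial_{ij}^Bf$ as $\lambda\bigl(\bigotimes_{k\neq i,j}U_k\bigr)\partial_{ij}^{U_i^{\tt T}BU_j}f_6$ and use the binary-contraction lemma for $f_6$ to get a nonzero four-ary signature that is not a product of two binaries, hence outside $\mathcal U^{\otimes}$. You then conclude with the arity-four base case and take $B=I$ for the special case; your remarks on keeping the augmented set conjugate closed (via $\overline B$) and outside condition (T) match the paper's side comment about adjoining $B$ and $\overline B$.
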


\begin{proof}[Proof of Theorem~\ref{thm:f6-nongamma-hard}]
We use the standing assumptions that $\mathcal F$ is fixed, finite,
algebraic, conjugate closed, and outside \textup{(T)}.
If $B$ is realized by a gadget, conjugating its vertex labels
realizes $\bar B$. More generally, the same proof applies when
$B,\bar B$ can be jointly adjoined by a polynomial-time Turing
reduction: work over that enlarged conjugate-closed family and
compose the reductions at the end.
Contract $B$ with inputs $i,j$ of $f$. The resulting signature is
\[
 g=\partial_{ij}^Bf
 =\lambda\left(\bigotimes_{k\ne i,j}U_k\right)
       \partial_{ij}^{\,U_i^TBU_j}f_6.
\]
The matrix on the right is $U_i^TBU_j$ because its $(a,b)$ entry is
$\sum_{s,t}U_i(s,a)B(s,t)U_j(t,b)$.
If its class lies outside $\Gamma$, \cref{lem:f6-contraction-test}
shows that $g$ is nonzero and is not a product of two binary signatures;
the remaining local unitaries preserve these properties.
In particular, $g\notin\mathcal U^\otimes$, since every nonzero
four-ary signature in $\mathcal U^\otimes$ is a product of two
binary signatures. By \cref{lm: base case 2n=4}, this gives hardness.
If $B,\bar B$ were adjoined, the enlarged family still contains
$\mathcal F$ and remains outside \textup{(T)}; composing the
reductions gives hardness of $\Holant(\mathcal F)$.
Taking $B=I$ proves the last assertion, since this contraction
simply joins the two inputs by an equality edge.
\end{proof}

Since $=_2$ is available, it remains to consider the case
$[U_i^TU_j]\in\Gamma$ for every $i<j$.

\subsection{A Common Real Orthogonal Transformation}
\label{subsec:f6-real-orthogonal}

Unless $\Holant(\F)$ is $\#\operatorname{P}$-hard, \cref{thm:f6-nongamma-hard} gives
$[U_i^TU_j]\in\Gamma$ for every $i<j$. The next lemma gives one real
orthogonal holographic transformation that converts all six local matrices $U_i$ into
$V_i$ which belongs to $\Gamma$ up to a scalar.

\begin{lemma}[A common orthogonal transformation]
\label{lem:f6-common-orthogonal}
Let $\F$ be a conjugate closed signature set, $\F$ outside condition~\eqref{cond: tractable classes} and contains an $\operatorname{AME}(6,2)$ state $f=\lambda(U_1\otimes\cdots \otimes U_6)f_6$ where $\lambda\neq 0$ and $U_iU^\dagger=I$.
If
$[U_i^TU_j]\in\Gamma$ for every $i<j$,
then there are a real orthogonal matrix $O$ and unitary matrices
$V_1,\ldots,V_6$ such that $OU_i=V_i$ and $[V_i]\in\Gamma$ for
every $i$. 
\end{lemma}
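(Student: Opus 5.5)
The plan is to normalize everything relative to $U_1$, show that a single matrix $P$ built from $U_1$ is forced into $\Gamma$ and is projectively symmetric, and then choose a representative $V\in\Gamma$ that "absorbs" $P$. The orthogonal matrix will be $O=V U_1^{\dagger}$, up to a phase.

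\emph{Step 1 (reduce to $U_1$).} For $j\ge 2$ set $G_j:=U_1^{\mathtt T}U_j$. By hypothesis $[G_j]\in\Gamma$, and since $(U_1^{\mathtt T})^{-1}=\overline{U_1}$ we get
\[
U_j=\overline{U_1}\,G_j .
\]
Now take two distinct indices $i,j\in\{2,\dots,6\}$ and apply the hypothesis $[U_i^{\mathtt T}U_j]\in\Gamma$. This gives $[G_i^{\mathtt T}\,U_1^{\dagger}\overline{U_1}\,G_j]\in\Gamma$. Because $\Gamma$ is a group closed under transpose, this forces
\[
[P]\in\Gamma,\qquad P:=U_1^{\dagger}\overline{U_1}.
\]
A direct check shows $P^{\mathtt T}=P$, and $P$ is unitary.

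\emph{Step 2 (symmetric elements of $\Gamma$).} Transposition fixes $I,X,Z$, sends $J\mapsto -J$, and maps $C_{\varepsilon_1,\varepsilon_2,\varepsilon_3}\mapsto C_{\varepsilon_1,-\varepsilon_2,\varepsilon_3}$. Therefore the only projectively symmetric classes in $\Gamma$ are $[I],[X],[Z]$, and so $[P]\in\{[I],[X],[Z]\}$.

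\emph{Step 3 (choice of $V$ and $O$).} We look for a unitary $V$ with $[V]\in\Gamma$ such that $O:=cVU_1^{\dagger}$ is real for a suitable phase $c$. The reality condition $\overline{VU_1^{\dagger}}\sim VU_1^{\dagger}$ is equivalent to
\[
V^{-1}\overline V\sim U_1^{\dagger}\overline{U_1}=P .
\]
So it suffices, for each $[P]\in\{[I],[X],[Z]\}$, to exhibit $V\in\Gamma$ with $[V^{\dagger}\overline V]=[P]$.
\begin{itemize}
\item For $[P]=[I]$, take $V=I$.
\item For the other two cases, use $\overline{C_{\varepsilon_1,\varepsilon_2,\varepsilon_3}}=C_{-\varepsilon_1,\varepsilon_2,-\varepsilon_3}$. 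Then $C^{\dagger}\overline C$ is a product of two tetrahedral elements and lies in $\Gamma$. A short computation over the eight $C$'s (or over suitable products with $K_4$ elements) should produce both $[X]$ and $[Z]$.
\end{itemize}
This finite check is the main obstacle, and I would carry it out explicitly.

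Once $V$ is found, $cVU_1^{\dagger}$ is unitary and real for an appropriate phase $c$, hence lies in $\mathbf O_2$. Rescale $V$ by $c$; this keeps it unitary and keeps $[V]\in\Gamma$.

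\emph{Step 4 (all six indices).} Set $V_1:=OU_1=V$. For $j\ge2$,
\[
OU_j=VU_1^{\dagger}\overline{U_1}G_j=VPG_j .
\]
This is unitary and its class lies in $\Gamma$ because $[V],[P],[G_j]\in\Gamma$. Hence $V_j:=OU_j$ for all $j$ satisfies the conclusion of the lemma.
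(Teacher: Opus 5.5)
Your proposal is correct and is essentially the paper's proof. Your $P=U_1^\dagger\overline{U_1}$ is the paper's $S=A_1^{-1}$, and your requirement $V^\dagger\overline V\sim P$ is the paper's condition $A_0^{\mathtt T}SA_0=I$ rewritten via $V=A_0^{\mathtt T}$, so your $O=cVU_1^\dagger$ is the paper's $(\overline{U_1}A_0)^{\mathtt T}$. One wording fix in Step~2: the precise claim is that $[I],[X],[Z]$ are the only classes containing a symmetric matrix, since $[J]$ is transpose-invariant as a class even though $J^{\mathtt T}=-J$.

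The finite check you left open does hold. We have $C_{a,b,c}^{\mathtt T}C_{a,b,c}=a\mathfrak iX$ if $c=ab$ and $C_{a,b,c}^{\mathtt T}C_{a,b,c}=c\mathfrak iZ$ if $c=-ab$, the identity used in \cref{lem:f6-group-alternatives}. Hence $V=C_{+,+,+}$ and $V=C_{+,+,-}$ give $V^\dagger\overline V=\overline{V^{\mathtt T}V}\sim X$ and $\sim Z$, respectively.
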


\begin{proof}[Proof of Lemma~\ref{lem:f6-common-orthogonal}]
Put $A_i=U_1^TU_i$ for $1\le i\le6$. These matrices are unitary,
and $[A_i]\in\Gamma$ for $i>1$. Since
$U_2^TU_3=A_2^TA_1^{-1}A_3$, closure of $\Gamma$ gives
$[A_1^{-1}]=[A_2^{-T}(U_2^TU_3)A_3^{-1}]\in\Gamma$ as well.

Let $R=\overline{U_1}$. Then $U_i=RA_i$ and
$S=R^TR=A_1^{-1}$ is symmetric and unitary.
The only symmetric classes in $\Gamma$ are $[I],[X],[Z]$:
$J$ is antisymmetric, and the off-diagonal entries of
$C_{\epsilon_1,\epsilon_2,\epsilon_3}$ differ by $\epsilon_2$.
Multiplying $R$ by a phase and all $A_i$ by its inverse, we may
assume $S\in\{I,X,Z\}$.

The following choices are unitary, have classes in $\Gamma$, and
satisfy $A_0^TSA_0=I$:
\[
\begin{array}{c|ccc}
S&I&X&Z\\ \hline
A_0&I&e^{\mathrm i\pi/4}C_{+,+,+}^{-1}
       &e^{-\mathrm i\pi/4}C_{+,+,-}^{-1}.
\end{array}
\]
Indeed, $C_{+,+,+}^TC_{+,+,+}=\mathrm iX$ and
$C_{+,+,-}^TC_{+,+,-}=-\mathrm iZ$.
Hence $RA_0$ is both unitary and orthogonal, so it is real.
Set $O=(RA_0)^T$ and $V_i=A_0^{-1}A_i$.
Then $O$ is real orthogonal, $OU_i=V_i$, and each $V_i$ is unitary
with $[V_i]\in\Gamma$.
\end{proof}

We now can realize
\begin{equation}
 h:=O^{\otimes6}f=\lambda(V_1\otimes\cdots\otimes V_6)f_6,
 \label{eq:f6-transformed-signature}
\end{equation}
on the $\Holant(O\mathcal{F})$, and 
$\Holant(\mathcal F)\equiv_T\Holant(\mathcal F)$.

In this basis, $[V_i^TBV_j]\in\Gamma$ if and only if $[B]\in\Gamma$,
since $[V_i],[V_j]\in\Gamma$ and $\Gamma$ is closed under transpose,
multiplication, and inversion. We can therefore state the hardness
condition directly for a binary signature realized over $\Holant(O\mathcal F)$.

\begin{lemma}[Non-$\Gamma$ hard]
\label{lem:f6-nongamma-binary}
Let $\F$ be a conjugate closed signature set, $\F$ outside condition~\eqref{cond: tractable classes} and contains an $\operatorname{AME}(6,2)$ state $f=\lambda(U_1\otimes\cdots \otimes U_6)f_6$ where $\lambda\neq 0$ and $U_iU^\dagger=I$.
Let $O$ be a real orthogonal matrix and 
$V_1,\ldots,V_6$ are unitary matrices such that $OU_i=V_i$ and $[V_i]\in\Gamma$ for
every $i$. If $\Holant(O\F)$ can realize a nonzero binary signature $B$
with $[B]\notin\Gamma$, then $\Holant(O\mathcal F)$, and hence
$\Holant(\mathcal F)$, is $\#\mathrm P$-hard.
\end{lemma}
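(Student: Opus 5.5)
The plan is to reduce the statement to \cref{thm:f6-nongamma-hard}, applied to the transformed family $O\mathcal F$ and the realized signature $h=O^{\otimes6}f=\lambda(V_1\otimes\cdots\otimes V_6)f_6$ from \eqref{eq:f6-transformed-signature}. First I record that the hypotheses of that theorem transfer to $O\mathcal F$. By \cref{lem: invariance under real orthogonal transformation}, $O\mathcal F$ is conjugate closed and lies outside condition~\eqref{cond: tractable classes}. By \cref{prop: invariant kth Orth}, $h$ is still an $\operatorname{AME}(6,2)$ state, and it has the required local-unitary form with the unitaries $V_i$. Also $\Holant(O\mathcal F)\equiv_T\Holant(\mathcal F)$, because $O$ is real orthogonal, so $(=_2)(O^{-1})^{\otimes2}=(=_2)$ and \cref{thm: holographic transformation} applies with the equality on the left unchanged.

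Next, take the nonzero realizable $B$ with $[B]\notin\Gamma$. I claim $[V_i^{\mathtt T}BV_j]\notin\Gamma$ for any (say $i=1,j=2$). Suppose instead that $[V_i^{\mathtt T}BV_j]\in\Gamma$. Then
\[
[B]=[(V_i^{\mathtt T})^{-1}]\,[V_i^{\mathtt T}BV_j]\,[V_j^{-1}],
\]
and the right side lies in $\Gamma$, since $\Gamma$ is closed under transpose, inverse and multiplication and $[V_i],[V_j]\in\Gamma$. This is a contradiction. So \cref{thm:f6-nongamma-hard}, applied to $O\mathcal F$, $h$ and $B$, gives hardness of $\Holant(O\mathcal F)$, and hence of $\Holant(\mathcal F)$.

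Concretely, the contraction $\partial^B_{ij}h$ equals $\lambda\bigl(\bigotimes_{k\neq i,j}V_k\bigr)\partial_{ij}^{V_i^{\mathtt T}BV_j}f_6$. By \cref{lem:f6-contraction-test}, this is a nonzero $4$-ary signature that is not a product of binaries, so it lies outside $\mathcal U^{\otimes}$. Then \cref{lm: base case 2n=4} gives hardness.

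The main point needing care is that \cref{thm:f6-nongamma-hard} and \cref{lm: base case 2n=4} are invoked for the family $O\mathcal F$ rather than $\mathcal F$. The latter lemma requires conjugate closure, which is needed for the mating and decomposition steps. One also has to check that $\overline B$ is available; it is, since conjugating the gadget's labels inside the conjugate-closed $O\mathcal F$ realizes it. Beyond this verification the argument is a direct corollary.
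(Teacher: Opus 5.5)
Your proposal is correct and follows the paper's proof. Both use closure of $\Gamma$ under transpose, inversion and multiplication to get $[V_i^{\mathtt T}BV_j]\notin\Gamma$ from $[B]\notin\Gamma$, and then invoke \cref{thm:f6-nongamma-hard} for $O\mathcal F$ with $h=O^{\otimes6}f$. Your explicit checks (that $O\mathcal F$ stays conjugate closed and outside condition~\eqref{cond: tractable classes}, that $h$ remains $\operatorname{AME}(6,2)$, and that $\Holant(O\mathcal F)\equiv_T\Holant(\mathcal F)$) only make explicit hypotheses the paper leaves implicit.
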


\begin{proof}[Proof of Lemma~\ref{lem:f6-nongamma-binary}]
Since $[V_1],[V_2]\in\Gamma$ and $\Gamma$ is closed under transpose,
multiplication, and inversion, $[B]\notin\Gamma$ implies
$[V_1^TBV_2]\notin\Gamma$.
Apply \cref{thm:f6-nongamma-hard},
this gives that $\#\operatorname{P}$-hardness of $\Holant(O\mathcal{F})$.
\end{proof}

\subsection{Recovering \texorpdfstring{$f_6$}{f6} from Binary Factors}
\label{subsec:f6-binary-factors}
After the holographic transformation of~\cref{subsec:f6-real-orthogonal}, we can get a signature $h:=O^{\otimes6}f=\lambda(V_1\otimes\cdots\otimes V_6)f_6$. In this subsection, we will prove that we can realize $f_6$ and the four Bell signatures $\mathcal B$ in $\Holant(O\F,h)$. 

We define the projective binary group of $\Holant(O\mathcal{F})$, denoted by
\begin{equation}
 G:=\{[B]: B\ne 0 \text{ is a binary signature that can be realized in }\Holant(O\mathcal{F}).\}.
\end{equation}
where the $=_2$ is included, so $[I]\in G$.
By \cref{lem:f6-nongamma-binary}, either
$\Holant(O\mathcal F)$ is $\#\mathrm P$-hard or $G\subseteq\Gamma$.
We henceforth consider the latter case.

Joining binary gadgets gives matrix multiplication, exchanging their
inputs gives transposition, and conjugating their vertex labels gives
entrywise conjugation. Hence $G$ is closed under these operations.
Since $\Gamma$ is finite, inverses are powers, so $G$ is a subgroup
of $\Gamma$. In particular, every binary factor of an equality
contraction of $h$ has its class in $G$.

\begin{lemma}[Two possibilities for the binary group]
\label{lem:f6-group-alternatives}
For the projective binary group $G$ of $\Holant(O\F)$, either
$G=\Gamma$ or $G\subseteq K_4$.
\end{lemma}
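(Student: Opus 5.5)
The plan is to use only the structure of $\Gamma\cong A_4$ and the closure of $G$ under transposition; no new gadget constructions are needed. We already know that $G$ is a subgroup of $\Gamma$, closed under multiplication, transposition and entrywise conjugation, with $[I]\in G$. The subgroups of $A_4$ are the trivial group, the three subgroups of order two, the four subgroups of order three, the Klein group and $A_4$ itself. The subgroups of order one, two and four all consist of elements of order at most two, and the involutions of $A_4$ all lie in the normal Klein subgroup $K_4=\{[I],[X],[Z],[J]\}$. So each such subgroup is contained in $K_4$. It therefore suffices to show that if $G\not\subseteq K_4$, then $G=\Gamma$.

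Suppose $G\not\subseteq K_4$. Then $G$ contains some class $[C_{\varepsilon_1,\varepsilon_2,\varepsilon_3}]$, which has order three in $\Gamma$. The key step is to compare its transpose with its inverse. Since $X^{\mathtt T}=X$, $Z^{\mathtt T}=Z$ and $J^{\mathtt T}=-J$, we get
\[
C_{\varepsilon_1,\varepsilon_2,\varepsilon_3}^{\mathtt T}=C_{\varepsilon_1,-\varepsilon_2,\varepsilon_3}.
\]
Using $C^{-1}\sim C^\dagger$ (the matrix is unitary up to scalar) and $J^\dagger=-J$, $X^\dagger=X$, $Z^\dagger=Z$, we get
\[
[C_{\varepsilon_1,\varepsilon_2,\varepsilon_3}^{-1}]=[C_{-\varepsilon_1,-\varepsilon_2,-\varepsilon_3}].
\]
The eight classes $[C_{\varepsilon}]$ are pairwise distinct, because $\Gamma$ has exactly $12=4+8$ elements. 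Hence $[C^{\mathtt T}]$ differs from $[I]$, $[C]$ and $[C^{-1}]$. Thus $[C^{\mathtt T}]\in G$ lies outside the cyclic subgroup $\langle [C]\rangle$ of order three.

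Now $G$ contains two distinct subgroups of order three, so $|G|>3$, and $|G|$ is divisible by $3$. The only subgroup of $A_4$ with order divisible by $3$ and larger than $3$ is $A_4$ itself, since $A_4$ has no subgroup of order six. Hence $G=\Gamma$.

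The only genuinely delicate point is the sign bookkeeping in the two displayed identities for $C^{\mathtt T}$ and $C^{-1}$, together with the distinctness of the eight classes $[C_\varepsilon]$. I would verify these directly from the definition of $C_{\varepsilon_1,\varepsilon_2,\varepsilon_3}$. Closure under conjugation is not needed for this lemma.
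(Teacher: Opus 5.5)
Your proof is correct, but it takes a different route from the paper.

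\textbf{The paper's route.} It works constructively inside $G$. It computes $C_{a,b,c}^{\mathtt T}C_{a,b,c}$, which equals $a\mathfrak iX$ when $c=ab$ and $c\mathfrak iZ$ when $c=-ab$. So $G$ contains a nonidentity element of $K_4$. Conjugation by $[C_{a,b,c}]$ cyclically permutes $[X],[Z],[J]$, which gives $K_4\subseteq G$. Finally, the image of $[C_{a,b,c}]$ generates $\Gamma/K_4$, which has order three, so $G=\Gamma$.

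\textbf{Your route.} You need only two sign identities, and both are right:
\[
C^{\mathtt T}_{\varepsilon_1,\varepsilon_2,\varepsilon_3}=C_{\varepsilon_1,-\varepsilon_2,\varepsilon_3},\qquad C^{-1}_{\varepsilon_1,\varepsilon_2,\varepsilon_3}=C_{-\varepsilon_1,-\varepsilon_2,-\varepsilon_3}.
\]
The second holds because $C_\varepsilon=\tfrac12\bigl(I+\mathfrak i(\varepsilon_1X+\varepsilon_2Y+\varepsilon_3Z)\bigr)\in\mathbf{SU}(2)$. These show $[C^{\mathtt T}]\notin\langle[C]\rangle$, and Lagrange together with the fact that $A_4$ has no subgroup of order six then forces $G=\Gamma$.

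\textbf{What each buys.} Your argument is shorter and needs no product computation, but its real content is the subgroup lattice of $A_4$. The paper's argument is self-contained at the level of explicit $2\times2$ matrices. It also exhibits the Bell classes inside $G$ directly, which the subsequent tetrahedral-case lemma uses, although that also follows from $G=\Gamma$.

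Two minor remarks.
\begin{itemize}
\item Your first paragraph, listing the subgroups of orders $1,2,4$, is superfluous. The implication $G\not\subseteq K_4\Rightarrow G=\Gamma$ is already the whole lemma.
\item The distinctness of the eight classes $[C_\varepsilon]$ can be checked without invoking $|\Gamma|=12$. If $C_\varepsilon=\lambda C_{\varepsilon'}$, both sides have determinant one, so $\lambda=\pm1$. The identity coefficient $+\tfrac12$ then forces $\lambda=1$, and linear independence of $I,X,J,Z$ gives $\varepsilon=\varepsilon'$.
\end{itemize}
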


\begin{proof}
Suppose $G\nsubseteq K_4$, and choose
$[C_{a,b,c}]\in G\setminus K_4$.
Since $G$ is closed under transpose and multiplication, it contains
the class of
\[
C_{a,b,c}^TC_{a,b,c}
=\begin{cases}
a\mathrm iX,&c=ab,\\
c\mathrm iZ,&c=-ab.
\end{cases}
\]
Thus $G$ contains a nonidentity element of $K_4$.
Conjugation by $[C_{a,b,c}]$ cyclically permutes
$[X],[Z],[J]$, so $K_4\subseteq G$.
The image of $[C_{a,b,c}]$ generates $\Gamma/K_4$, which has
order three. Therefore $G=\Gamma$.
\end{proof}

We handle these two cases separately.
If $G=\Gamma$, every local inverse $[V_i^{-1}]$ is realizable
over $\mathcal E'$, so the local matrices can be removed directly.
If $G\subseteq K_4$, all binary factors of the equality contractions
of $h$ lie in $K_4$. We use this restriction to analyze $h$;
it does not by itself imply that each $[V_i]$ lies in $K_4$.

\begin{lemma}[The tetrahedral case]
\label{lem:f6-tetrahedral-case}
For the projective binary group $G$ of $\Holant(O\F)$,
if $G=\Gamma$, then
\[
\Holant(O\mathcal F\cup\{f_6\}\cup\mathcal B)
\leq_T\Holant(O\mathcal F).
\]
\end{lemma}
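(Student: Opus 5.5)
The reduction is direct: since $G=\Gamma$, every class of $\Gamma$, and in particular each $[V_i^{-1}]$, is realizable in $\Holant(O\mathcal F)$, so we can peel the local matrices off $h$ and then read off the Bell signatures.

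\emph{Step 1 (Bell signatures).} By definition of $G$, each class in $G=\Gamma$ is represented by some nonzero binary signature $B$ realizable in $\Holant(O\mathcal F)$. Since $K_4\subseteq\Gamma$ and $K_4=\{[I],[X],[Z],[J]\}$, we realize nonzero scalar multiples of $I,X,Z,J$. Normalizing by nonzero constants does not change complexity, so all of $\mathcal B$ is available. Here $I$ corresponds to $=_2$ itself.

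\emph{Step 2 (removing the local matrices).} The signature $h=O^{\otimes6}f=\lambda(V_1\otimes\cdots\otimes V_6)f_6$ of \eqref{eq:f6-transformed-signature} is in $O\mathcal F$. For each $i$ we have $[V_i]\in\Gamma$, and $\Gamma$ is a group closed under inverses. Hence $[V_i^{-1}]\in\Gamma=G$, and we can realize some $W_i=c_iV_i^{-1}$ with $c_i\ne0$.

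We attach a binary extension by $W_i$ on the $i$-th variable of $h$. Under the paper's convention $Tf=T^{\otimes n}f$, the extension multiplies the $i$-th tensor leg by either $W_i$ or $W_i^{\tt T}$, depending on which variable of the binary gadget is joined. I will connect the variable that makes the action $W_i$; if the orientation gives $W_i^{\tt T}$ instead, I use the realizable transpose of a representative of $[V_i^{-1}]^{\tt T}$, which is again in $\Gamma$ because $G$ is transpose closed. The resulting $O\mathcal F$-gate has signature $\lambda\prod_i c_i\, f_6$, a nonzero multiple of $f_6$, so $f_6$ is realizable.

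\emph{Step 3 (composition).} All gadgets used are $O\mathcal F$-gates, so we obtain $\Holant(O\mathcal F\cup\{f_6\}\cup\mathcal B)\le_T\Holant(O\mathcal F)$. Combined with $\Holant(O\mathcal F)\equiv_T\Holant(\mathcal F)$ from Theorem~\ref{thm: holographic transformation} (orthogonal $O$ fixes $=_2$), this is the claim.

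The only delicate point is the bookkeeping of index orientation, i.e.\ whether the extension acts by $W_i$ or $W_i^{\tt T}$, together with the scalar normalization. Both are harmless because $\Gamma$ is closed under transpose and projective scalars do not matter. There is no genuine obstacle in this case; the real difficulty lies in the $K_4$ case handled separately.
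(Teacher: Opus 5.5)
Your proposal is correct and matches the paper's proof. Since $G=\Gamma$, the Bell classes in $K_4\subseteq\Gamma$ and the classes $[V_i^{-1}]$ are all realizable, and attaching the $V_i^{-1}$ to the inputs of $h$ recovers $f_6$ up to a nonzero scalar; your orientation bookkeeping, which uses the transpose closure of $G$, is a detail the paper leaves implicit.
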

\begin{proof}[Proof of Lemma~\ref{lem:f6-tetrahedral-case}]
Recall that
$h=\lambda(V_1\otimes\cdots\otimes V_6)f_6$, where
$\lambda\ne0$ and $[V_i]\in\Gamma$.
Since $G=\Gamma$, each $[V_i^{-1}]$ is available.
Attaching these binary signatures to the corresponding inputs of $h$
gives, up to a nonzero scalar,
$
(V_1^{-1}\otimes\cdots\otimes V_6^{-1})h=\lambda f_6.
$
The four Bell classes also belong to $G$, so all signatures in
$\mathcal B$ are available.
Thus,
$
\Holant(O\mathcal F\cup\{f_6\}\cup\mathcal B)
\leq_T\Holant(O\mathcal F).
$
\end{proof}

\begin{lemma}[The Klein case]
\label{lem:f6-klein-case}
For the projective binary group $G$ of $\Holant(O\F)$,
if $G\subseteq K_4$, then $G=K_4$ and 
\[
\Holant(O\mathcal F\cup\{f_6\}\cup\mathcal B)
\leq_T\Holant(O\mathcal F).
\]
\end{lemma}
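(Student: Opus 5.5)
The plan is to show that the $\Gamma$-matrices $V_i$ all lie in one coset of $K_4$, and then strip them off using only Klein elements together with a symmetry of $f_6$. I will read off every binary factor of the equality contractions of $h=\lambda(V_1\otimes\cdots\otimes V_6)f_6$ through \cref{lem:f6-contraction-test}.

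\emph{Step 1: write the contractions and restrict the relative matrices.} For every pair $i<j$,
\[
\partial_{ij}h=\lambda\Bigl(\bigotimes_{k\ne i,j}V_k\Bigr)\partial_{ij}^{Q_{ij}}f_6,\qquad Q_{ij}=V_i^{\mathtt T}V_j .
\]
\begin{itemize}
\item If $[Q_{ij}]\in\Gamma\setminus K_4$, the table in \cref{lem:f6-contraction-test} gives factors of the form $[ZC_{\ast}]$ and $[C_{\ast}]$, which lie outside $K_4$. A realized factor is $V_a A V_b^{\mathtt T}$ for a listed class $A$.
\item Since $K_4$ is normal in $\Gamma$, and $\Gamma/K_4\cong C_3$ is abelian, the map $A\mapsto V_aAV_b^{\mathtt T}$ sends $K_4$-cosets to $K_4$-cosets by a fixed shift. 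I need to check that $V^{\mathtt T}\in V^{-1}K_4$, which holds because $C^{\mathtt T}C\in K_4$.
\item I expect this to force every realized factor outside $K_4$ whenever some $[Q_{ij}]\notin K_4$, contradicting $G\subseteq K_4$. I will verify this by a short coset computation in $\Gamma/K_4$, using the same transpose relation.
\end{itemize}
The conclusion is $[V_i^{\mathtt T}V_j]\in K_4$ for all $i,j$, hence all $[V_i]$ lie in a common coset $[D]K_4$ with $D\in\{I,C_{+,+,+},C_{+,+,+}^2\}$. So $V_i=D W_i$ with $[W_i]\in K_4$.

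\emph{Step 2: show $G=K_4$.} Here $Q_{ij}=W_i^{\mathtt T}(D^{\mathtt T}D)W_j\in K_4$. By the table, each contraction yields factor pairs among $\{[J],[X]\}$, $\{[X],[Z]\}$, $\{[I],[I]\}$, $\{[Z],[J]\}$, shifted by conjugation by $D$ and by elements of $K_4$. Conjugation by $D$ only permutes $[X],[Z],[J]$. Since $Q_{ij}$ varies with the pair, and since I may also contract with elements already in $G$, I will get two distinct nonidentity Klein classes, possibly after one extra contraction $\partial^{B}_{ij}h$ with $B\in G$. Hence $G=K_4$, so every $[W_i^{-1}]$ and all of $\mathcal B$ are available.

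\emph{Step 3: remove the local matrices.} Attaching $W_i^{-1}$ to each input gives, up to a scalar, $D^{\otimes6}f_6$. If $D=I$ we are done. Otherwise I need a symmetry of $f_6$: I will compute $C_{+,+,+}^{\otimes6}f_6$ directly from the quadratic-form description of $f_6$ and aim to show it equals $(K_1\otimes\cdots\otimes K_6)f_6$, up to scalar and a permutation of variables, with $[K_i]\in K_4$. Because $f_6$ is the hexacode state, its local-Clifford symmetry group should contain such an element. Applying the realizable $K_i^{-1}$ then recovers $f_6$.

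I expect Step 3 to be the main obstacle, since it is a genuine identity about $f_6$ rather than a group-theoretic argument. If the direct computation fails for some coset, my fallback is to show that this coset contradicts $G\subseteq K_4$. To do that I would contract $h$ with a Klein element $B$ chosen so that $W_i^{\mathtt T}D^{\mathtt T}BDW_j$ falls outside $K_4$, which would produce a factor outside $K_4$.
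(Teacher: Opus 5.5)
The gap is in Step 1. The claim that all $[V_i]$ lie in a single $K_4$-coset is false, so the coset computation you plan cannot produce a contradiction.

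Let $c_i\in\mathbb Z_3$ be the coset index of $[V_i]$ in $\Gamma/K_4$. A listed factor $A$ with index $u$ on inputs $a,b$ is realized as $V_aAV_b^{\mathtt T}$, which has index $c_a+u-c_b$. Nothing prevents the shift $c_a-c_b$ from cancelling $u$. Take $(c_1,\dots,c_6)=(0,0,1,2,2,1)$.
\begin{itemize}
\item At $\{1,2\}$ we have $[V_1^{\mathtt T}V_2]\in K_4$. The factors have index $0$ on $(x_3,x_6),(x_4,x_5)$, and $c_3=c_6$, $c_4=c_5$ holds.
\item At $\{1,3\}$ we have $[V_1^{\mathtt T}V_3]\notin K_4$. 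Its two factors have index $2$ on $(x_2,x_4),(x_5,x_6)$, and $c_2+2-c_4=0$, $c_5+2-c_6=0$ cancel them.
\end{itemize}
This is not an artifact of checking only two pairs. The signature $f_6$ has local symmetries twisted by a relabeling of inputs, $(S_1\otimes\cdots\otimes S_6)f_6\propto(\text{Bell matrices})\,P_\pi f_6$, with non-constant coset tuples. The paper's identity $(I,I,r,C_{+,-,+},C_{+,-,+},C_{+,-,-})$ with $\pi=(6,4,5,3,2,1)$, stated for $g$, which differs from $f_6$ only by Bell matrices, has exactly the tuple $001221$. In coding terms, the linear monomial automorphisms of the hexacode form $3.A_6$, and over non-identity permutations their $\omega$-scalings are non-constant. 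If $h$ is such an $(S_1\otimes\cdots\otimes S_6)f_6$, every equality contraction of $h$ has Klein factors, so $G\subseteq K_4$ holds, yet $[V_1^{\mathtt T}V_3]\notin K_4$. Your Steps 2 and 3, which assume $V_i=DW_i$ with one common $D$, therefore miss these cases.

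The missing idea is that you must allow a relabeling of the inputs of $f_6$. Every coset tuple compatible with $G\subseteq K_4$ has to be realized by an identity $(S_1\otimes\cdots\otimes S_6)f_6\propto P_\pi f_6$, up to Bell matrices, so that $h=(D_1\otimes\cdots\otimes D_6)P_\pi f_6$ with all $[D_i]\in K_4$. The paper does exactly this. The contractions at $\{1,2\}$ and $\{1,3\}$ cut the possibilities to eighteen tuples, and three explicit identities of $g$ generate all of them, since composition permutes and adds tuples. Only after this rewriting are the local matrices Klein.

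Two smaller points.
\begin{itemize}
\item Your Step 2 is only an expectation. The paper obtains $[X],[Z]\in G$ by computing six contraction factors of the normalized form $\alpha\mathbf 1_{\sum_ix_i=s}(-1)^{q_0(x)+\sum_i\ell_ix_i}$. Their classes $[X^uZ^v]$ sum to $(1,0)$ and $(0,1)$ over two triples, for every $s$ and $\ell_i$.
\item The step you flagged as the main obstacle is actually easy. The stabilizer code of $f_6$ is $\mathbb F_4$-linear, with conjugation by $C_{+,+,+}$ acting as multiplication by $\omega$. Hence $C_{+,+,+}^{\otimes6}f_6$ equals $f_6$ up to a local Pauli string. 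Your fallback could never fire anyway: by normality, $D^{\mathtt T}BD\in K_4$ for every Klein $B$.
\end{itemize}
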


\begin{proof}[Proof of Lemma~\ref{lem:f6-klein-case}]
Every binary factor of an equality contraction of $h$ has its
projective class in $G\subseteq K_4$.
We first use this restriction to show that, after relabeling its
inputs, $h$ differs from $f_6$ only by Bell matrices.
We then obtain the Bell signatures and remove these matrices.

For the calculation, use the fixed representative
$g=(I\otimes I\otimes(-J)\otimes X\otimes I\otimes I)f_6$.
The definition of $f_6$ gives
\[
\begin{aligned}
g(x)={}&I(x_1,x_2)I(x_3,x_6)I(x_4,x_5)\\
&+(\mathrm iX)(x_1,x_2)(\mathrm iZ)(x_3,x_6)J(x_4,x_5)\\
&+J(x_1,x_2)(\mathrm iX)(x_3,x_6)(\mathrm iZ)(x_4,x_5)\\
&+(\mathrm iZ)(x_1,x_2)J(x_3,x_6)(\mathrm iX)(x_4,x_5).
\end{aligned}
\]
Thus $h=\nu(B_1\otimes\cdots\otimes B_6)g$, where $\nu\ne0$
and $[B_i]\in\Gamma$.
Put $r=C_{+,+,+}$, and let $c_i\in\{0,1,2\}$ specify the
coset $[B_i]\in[r]^{c_i}K_4$.
All calculations with these indices are modulo $3$.
Multiplication adds coset indices, while transposition negates them:
indeed, $C_{a,b,c}^T=C_{a,-b,c}$, and
$[C_{a,b,c}]\in[r]K_4$ exactly when $abc=1$.
Consequently $B_i^TB_j$ has coset index $c_j-c_i$.

The four-term expansion of $g$ gives the following contractions.
Each row expresses $\partial_{ij}^Qg$, up to a nonzero scalar,
as $A(x_a,x_b)B(x_c,x_d)$.
The last column gives the coset indices of $[A],[B]$.
\[
\begin{array}{cc|c|c|c}
ij&Q&\text{variable pairs}&(A,B)&\text{coset indices}\\ \hline
12&I&(36)(45)&(I,I)&(0,0)\\
12&r&(34)(56)&
\left(
\begin{pmatrix}1&\mathrm i\\-1&\mathrm i\end{pmatrix},
\begin{pmatrix}1&-1\\-\mathrm i&-\mathrm i\end{pmatrix}
\right)&(2,1)\\
12&r^2&(35)(46)&
\left(
\begin{pmatrix}1&\mathrm i\\1&-\mathrm i\end{pmatrix},
\begin{pmatrix}-1&-1\\\mathrm i&-\mathrm i\end{pmatrix}
\right)&(2,1)\\
13&I&(25)(46)&(Z,Z)&(0,0)\\
13&r&(24)(56)&
\left(
\begin{pmatrix}1&\mathrm i\\-1&\mathrm i\end{pmatrix},
\begin{pmatrix}1&\mathrm i\\1&-\mathrm i\end{pmatrix}
\right)&(2,2)\\
13&r^2&(26)(45)&
\left(
\begin{pmatrix}1&-\mathrm i\\1&\mathrm i\end{pmatrix},
\begin{pmatrix}-1&1\\\mathrm i&\mathrm i\end{pmatrix}
\right)&(2,1).
\end{array}
\]
For example, contracting inputs $1,2$ replaces the first binary
matrix in each term of the expansion by its entrywise contraction
with $Q$. Substituting $Q=I,r,r^2$ and factoring gives the first
three rows. Summing over inputs $1,3$ gives the remaining rows.

The variable pairs and coset indices in each row depend only on
the coset of $[Q]$.
To see this, the same expansion shows that
\[
X_1X_4Z_5Z_6,\qquad
X_2X_4Z_3Z_5,\qquad
X_3X_4Z_2Z_6,\qquad
X_4X_5Z_1Z_2
\]
fix $g$, where a subscript indicates the input on which a matrix
acts and all other inputs carry $I$.
Their restrictions to inputs $1,2$ are
$X_1,X_2,Z_2,Z_1Z_2$, and their restrictions to inputs $1,3$ are
$X_1,Z_3,X_3,Z_1$.
In both cases these restrictions generate all two-input Bell
classes. Hence Bell matrices on the contracted inputs can be
transferred to the remaining inputs. They do not change the
variable pairs or the coset indices of the resulting factors.

Now consider an equality contraction of $h$ at inputs $i,j$.
It corresponds to contracting $B_i^TB_j$ into $g$.
A factor $A$ on remaining inputs $a,b$ becomes $B_aAB_b^T$.
If $[A]\in[r]^uK_4$, this factor has coset index $c_a+u-c_b$,
which must be zero because its class belongs to $G\subseteq K_4$.

These constraints depend only on differences of the $c_i$.
Subtracting $c_1$ from all six indices, the rows for $12$ and $13$
give the following possibilities:
\[
\begin{array}{c|ccc}
 &c_3=0&c_3=1&c_3=2\\ \hline
c_2=0&000000&001221&\text{none}\\
c_2=1&010212&\text{none}&012120\\
c_2=2&\text{none}&021102&022011.
\end{array}
\]
Each entry lists $(c_1,\ldots,c_6)$.
For example, $(c_2,c_3)=(0,1)$ gives
$c_3=c_6$, $c_4=c_5$, $c_2+2=c_4$, and $c_5+2=c_6$,
hence $001221$.
The other eight choices are obtained from the corresponding
rows of the contraction table in the same way.
Restoring the common shift gives at most eighteen tuples.

We next remove these coset indices using identities of $g$.
For $(P_\pi g)(x)=g(x_{\pi(1)},\ldots,x_{\pi(6)})$, the following
rows satisfy
$(S_1\otimes\cdots\otimes S_6)g=\mu P_\pi g$
for some $\mu\ne0$:
\[
\begin{array}{c|c}
(S_1,\ldots,S_6)&(\pi(1),\ldots,\pi(6))\\ \hline
(I,I,I,\mathrm iZ,\mathrm iZ,I)&(5,2,6,1,3,4)\\
(r,r,r,r,r,C_{-,+,-})&(6,5,3,1,4,2)\\
(I,I,r,C_{+,-,+},C_{+,-,+},C_{+,-,-})&(6,4,5,3,2,1).
\end{array}
\]
These identities follow by applying each row to the four-term
expansion of $g$: a binary factor $M$ on inputs $a,b$ becomes
$S_aMS_b^T$, and substitution gives the indicated permutation
of the same tensor, up to scalar.

Composition of these identities permutes and adds their coset
tuples. More explicitly, if
$L_Sg=\mu P_\pi g$ and $L_Tg=\eta P_\sigma g$, where
$L_S=\bigotimes_iS_i$ and $L_T=\bigotimes_iT_i$, then
\[
(P_\pi L_TP_\pi^{-1})L_Sg
=\mu\eta P_\pi P_\sigma g.
\]
The third row has tuple $001221$.
Composing with the first row permutes a tuple to
$(c_4,c_2,c_5,c_6,c_1,c_3)$.
After subtracting its first coordinate, this gives the cycle
\[
001221\longmapsto010212\longmapsto022011
\longmapsto021102\longmapsto012120\longmapsto001221.
\]
The identity gives $000000$, and the second row adds a common
shift of one. Thus these identities supply all eighteen tuples.

Choose an identity with $[S_i]K_4=[B_i]K_4$ for every $i$.
Then $D_i=B_iS_i^{-1}$ has class in $K_4$, and
\[
h=\nu\mu(D_1\otimes\cdots\otimes D_6)P_\pi g.
\]
This is a new expression for the same available signature $h$;
it does not require the matrices $S_i$ to be available.
Relabeling the inputs therefore leaves an available signature
equal, up to scalar, to $g$ with Bell matrices on its inputs.

It remains to obtain the Bell signatures.
The expansion of $g$ also gives
\[
g(x)=\mathbf1_{\sum_i x_i=0}(-1)^{q_0(x)+x_6},
\qquad
q_0(x)=x_2x_3+x_3x_6+x_1x_6+x_1x_5+x_2x_5.
\]
A Bell matrix flips an input bit and adds a linear phase,
up to scalar. Thus the relabeled $h$ has the form
$\alpha\mathbf1_{\sum_i x_i=s}
(-1)^{q_0(x)+\sum_i\ell_i x_i}$,
where $\alpha\ne0$ and $s,\ell_i\in\mathbb F_2$.

The following are six binary factors of its equality contractions.
The last two columns identify each factor's class as $[X^uZ^v]$.
\[
\begin{array}{c|c|cc}
\text{contracted inputs}&\text{factor inputs}&u&v\\ \hline
1,2&(x_4,x_5)&s+\ell_1+\ell_2&\ell_4+\ell_5\\
1,3&(x_2,x_5)&\ell_1+\ell_3&
1+\ell_1+\ell_2+\ell_3+\ell_5\\
2,3&(x_1,x_4)&s+1+\ell_2+\ell_3&
1+\ell_1+\ell_2+\ell_3+\ell_4\\ \hline
1,2&(x_3,x_6)&\ell_1+\ell_2&
1+\ell_1+\ell_2+\ell_3+\ell_6\\
1,4&(x_2,x_3)&s+\ell_1+\ell_4&
1+s+\ell_2+\ell_3\\
2,4&(x_1,x_6)&s+\ell_2+\ell_4&
1+s+\ell_1+\ell_6.
\end{array}
\]
To obtain these entries, set $x_i=x_j=t$ and sum over $t$.
For $ij=12,13,23,14,24$, the coefficients of $t$ in the exponent
are respectively
\[
\begin{gathered}
x_3+x_6+\ell_1+\ell_2,\qquad
x_2+x_5+\ell_1+\ell_3,\qquad
1+x_5+x_6+\ell_2+\ell_3,\\
x_5+x_6+\ell_1+\ell_4,\qquad
x_3+x_5+\ell_2+\ell_4.
\end{gathered}
\]
The sum forces this coefficient to vanish.
Together with the parity equation, this gives the two binary
supports; substitution gives their phases.
For example, $ij=12$ gives
$x_6=x_3+\ell_1+\ell_2$ and
$x_5=x_4+s+\ell_1+\ell_2$.
The remaining exponent is, up to a constant,
$(1+\ell_1+\ell_2+\ell_3+\ell_6)x_3
+(\ell_4+\ell_5)x_4$,
which gives both $12$ rows.

All six factor classes belong to $G$.
Projective multiplication of $X^uZ^v$ adds the pairs $(u,v)$
over $\mathbb F_2$.
The first three rows sum to $(1,0)$, and the last three sum to
$(0,1)$. Joining the corresponding factors therefore gives
$[X],[Z]\in G$.
Hence $K_4=\langle[X],[Z]\rangle\subseteq G$, and the assumption
$G\subseteq K_4$ yields $G=K_4$.

All Bell signatures are now available.
We can remove the six Bell matrices from the relabeled $h$ to
recover $g$, and then use the fixed Bell transformation defining
$g$ to recover $f_6$.
Only finitely many fixed constructions are used.
Removing their nonzero scalar factors gives
\[
\Holant(O\mathcal F\cup\{f_6\}\cup\mathcal B)
\leq_T\Holant(O\mathcal F).
\]
\end{proof}

\begin{proof}[Proof of \cref{thm:f6-equality-realization}]
Suppose that $\Holant(\mathcal F)$
is not $\#\mathrm P$-hard.
By the AME normal form, after relabeling inputs, write
$f=\lambda(U_1\otimes\cdots\otimes U_6)f_6$ with unitary $U_i$.
Applying \cref{thm:f6-nongamma-hard} with $B=I$ gives
$[U_i^TU_j]\in\Gamma$ for every $i<j$.
By \cref{lem:f6-common-orthogonal}, there is a real orthogonal
matrix $O$ such that
$h=O^{\otimes6}f=\lambda(V_1\otimes\cdots\otimes V_6)f_6$
with $[V_i]\in\Gamma$.

Let $G$ be the projective binary group of $\Holant(O\mathcal F)$.
By \cref{lem:f6-nongamma-binary}, $G\leq\Gamma$.
Lemma~\ref{lem:f6-group-alternatives} gives either $G=\Gamma$
or $G\subseteq K_4$.
Applying \cref{lem:f6-tetrahedral-case} or
\cref{lem:f6-klein-case}, respectively, and using orthogonal
holographic equivalence, we obtain
\[
\Holant(O\mathcal F\cup\{f_6\}\cup\mathcal B)
\leq_T\Holant(O\mathcal F)
\equiv_T\Holant(\mathcal F).
\]
We are done.
\end{proof}

\section{Hardness When $f_6$ is Available}\label{sec: f6 is hard}

In this section, we still assume that $\mathcal{F}$ only contains even arity signatures and doesn't satisfy condition (\ref{cond: tractable classes}).
By \cref{thm:f6-equality-realization}, we may assume that $\Holant(f_6,\mathcal{B},O\mathcal{F})\le_T \Holant(\mathcal{F})$.  Since the invariant of outside condtion~\eqref{cond: tractable classes}, which is prove in~\cref{lem: invariance under real orthogonal transformation}, we can use $\F$ instead of $O\F$ for simplicity in this section.
In this section, we prove that $\Holant(f_6,\mathcal{B},\mathcal{F})$ is \#P-hard (\cref{lm: f6 is hard}).
The proof consists of two cases.
In \cref{subsec: f6 parity}, we consider the case that every signature in $\mathcal{F}$ has parity.
We do a realification argument to prove that every signature in $\mathcal{F}$ is \emph{real} up to a complex phase, otherwise $\Holant(f_6,\mathcal{B},\mathcal{F})$ is already \#P-hard (\cref{lm: realification when parity}). 
Then the result in \cite{realholant} applies (\cref{lm: parity f6 is hard}).
In \cref{subsec: f6 non-parity}, $\mathcal{F}$ contains a signature with no parity.
Then we can do arity reduction to realize a binary signature $b$ with no parity.
By \cref{lem:f6-nongamma-binary}, $b\in \Gamma$, which is the tetrahedral group, otherwise $\Holant(\mathcal{F})$ is \#P-hard.
The binary signature $b$ and the four Bell signatures in $\mathcal{B}$ generate the whole tetrahedral group $\Gamma$.
We then prove \#P-hardness when all signatures in $\Gamma$ is available (\cref{lm: non-parity f6 is hard}).


\subsection{Parity Condition}\label{subsec: f6 parity}
In this subsection, we assume every signature in $\mathcal{F}$ has parity.
Recall
$$
\sigma_0=I=\left[\begin{matrix}
    1 & 0\\
    0 & 1\\
\end{matrix}\right],
\sigma_1=X=\left[\begin{matrix}
    0 & 1\\
    1 & 0\\
\end{matrix}\right],
\sigma_2=Y=\left[\begin{matrix}
    0 & -\mathfrak i\\
    \mathfrak i & 0\\
\end{matrix}\right],
\sigma_3=Z=\left[\begin{matrix}
    1 & 0\\
    0 & -1\\
\end{matrix}\right],
J=\left[\begin{matrix}
    0 & 1\\
    -1 & 0\\
\end{matrix}\right].
$$

\begin{definition}[Projectively real]
    A signature $f:\{0,1\}^d\to \mathbb{C}$ is projectively real, if there exists $\lambda\in \mathbb{C}^{\times}$ and a signature $g:\{0,1\}^d\to \mathbb{R}$ such that $f=\lambda g.$
\end{definition}

\begin{restatable}{definition}{Fgate}
    Let $\mathcal{G}=\{g\mid g \text{  is an $\mathcal{F}$-gate or $g$ is a factor of some $\mathcal{F}$-gate}\}.$
\end{restatable}

Clearly $\Holant(\mathcal{F})\equiv_T \Holant(\mathcal{G})$.
The reason for passing through the signature set from $\mathcal{F}$ to $\mathcal{G}$ is that, in the induction proof of \cref{lm: realification when parity}, we will realize some signature $g\in \mathcal{G}$ with parity from a signature $f\in \mathcal{F}$ with parity, and the induction hypothesis will be applied to $g.$
Also, in the non-parity case, proof of \cref{lm: all 12 binary available is hard}, we will do a minimal counterexample argument for a non-parity signature in $\mathcal{G}.$

\begin{lemma}\label{lm: non-zero Bell contraction}
    Let $f,g\in \mathcal{G}$ be two non-zero signatures both of arity $d\ge 4$.
    If $f$ and $g$ have opposite parity, 
    then there exists $\{i,j\}\subseteq [d]$ and $A\in \{I,X,Z,J\}$ such that $\partial_{ij}^Af\neq 0$ and $\partial_{ij}^Ag\neq 0$, otherwise $\Holant(\mathcal{F})$ is \#P-hard.
\end{lemma}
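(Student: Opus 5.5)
We argue by contradiction: suppose that for every pair $\{i,j\}$ and every $A\in\{I,X,Z,J\}$, at least one of $\partial_{ij}^Af$, $\partial_{ij}^Ag$ vanishes. The key observation is that the four Bell matrices form an orthogonal basis of $\mathbb C^{2\times 2}$. Hence the four contractions $\partial_{ij}^Af$, $A\in\{I,X,Z,J\}$, determine the four slices $f_{ij}^{ab}$, and a signature with $\partial_{ij}^Af=0$ for all $A$ would be zero. Parity sorts the Bell matrices: $I,Z$ are supported on $x_i=x_j$ and $X,J$ on $x_i\ne x_j$. Thus for a signature $f$ with parity, $\partial_{ij}^Ad$ only sees slices $f_{ij}^{00},f_{ij}^{11}$ (resp.\ $f_{ij}^{01},f_{ij}^{10}$), and these slices themselves have a definite parity on the remaining $d-2$ variables.

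First, we reduce to the case where $f$ and $g$ are irreducible and satisfy \textsc{2nd-Orth}. If one of them factors, \cref{lm: decomposition} realizes its factors; odd-arity factors give hardness by \cref{lm: odd holant with conjugation}. So we may pass to even-arity factors inside $\mathcal G$, which is exactly why $\mathcal G$ was introduced; we must check the opposite-parity hypothesis survives for some pair of factors. For irreducible ones, \cref{lm: not 2nd-orth is hard} gives \textsc{2nd-Orth}, else hardness. Then \cref{lm: 2nd-orth contraction nonzero} says $\partial_{ij}^Af\ne 0$ for \emph{every} nonzero $A$ and every pair. The same holds for $g$, so any pair $\{i,j\}$ and $A=I$ already work, which finishes this case.

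The remaining difficulty is when $f$ or $g$ fails irreducibility in a way that cannot simply be replaced: the statement is about $f,g$ themselves, not their factors. Here I would write $f=f_1\otimes\cdots$ into prime factors. For a pair $\{i,j\}$ inside one factor $f_1$ we have $\partial_{ij}^Af=(\partial_{ij}^Af_1)\otimes(\text{rest})$, which is nonzero iff $\partial_{ij}^Af_1\ne0$. Every prime factor of arity $\ge4$ can be assumed to satisfy \textsc{2nd-Orth}, hence has all contractions nonzero. Binary factors must lie in $\mathcal U$ (else hardness by \cref{lm: base case 2n=2}), and by \cref{lem:f6-nongamma-binary} in $\Gamma$, since $f_6,\mathcal B$ are available. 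For a pair $\{i,j\}$ split across two binary factors $b_1(x_i,y)$, $b_2(x_j,z)$, the contraction is the binary $M(b_1)^{\tt T}M(A)M(b_2)$, which is invertible, hence nonzero for every $A$. So in all cases it suffices to choose $\{i,j\}$ not forming a single binary factor in either $f$ or $g$. Since $d\ge4$, we pick $i$ in some position and $j$ avoiding the at most two ``partners'' of $i$ (its binary-factor mate in $f$ and in $g$). This is possible when $d\ge4$.

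The main obstacle is the case where $d=4$ and the partner structure is tight. There, or when the factorization of $f$ into binaries has $x_i$ paired with $x_j$, we get $\partial_{ij}^Af=$ (scalar $\mathrm{Tr}(M(b)^{\tt T}M(A))$) $\otimes$ rest, which vanishes for three of the four $A$. If $f$ and $g$ pair the same $\{i,j\}$, opposite parity should force the unique nonvanishing Bell matrix to differ between $f$ and $g$. Then we must switch to another pair, which exists since the partners of $i$ cover at most two of the other $\ge3$ indices. I expect to spend most effort verifying this counting and the parity bookkeeping. The hardness alternative enters only through the reductions above.
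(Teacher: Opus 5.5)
Your argument is correct in outline, but it takes a genuinely different route from the paper's.

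\textbf{The paper's route.} The paper's proof is elementary and in fact unconditional: the hardness alternative is never used. Take $\alpha\in\mathscr S(f)$ and $\beta\in\mathscr S(g)$. Opposite parity makes $\gamma=\alpha\oplus\beta$ of odd weight, so since $d\ge4$ there are three coordinates, say $1,2,3$, on which $\gamma$ is constant. For every pair $\{u,v\}\subseteq\{1,2,3\}$, $\alpha$ and $\beta$ then have the same equal/unequal pattern on $(x_u,x_v)$. So the same two Bell matrices ($I,Z$ or $X,J$) read off corresponding pairs of entries of $f$ and $g$. Suppose that for each such pair and each $A$ one of the two contractions vanished. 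Put $a=f(\alpha)$, $b=f(\alpha\oplus e_1\oplus e_2)$, $c=f(\alpha\oplus e_1\oplus e_3)$, with primed analogues for $g$. Then $(a+b)(a'+b')=(a-b)(a'-b')=0$, hence $aa'+bb'=0$. Likewise $aa'+cc'=0$ and $bb'+cc'=0$, which force $aa'=0$, a contradiction.

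\textbf{Your route.} You instead use the structural machinery: UPF and the decomposition lemma, \cref{lm: not 2nd-orth is hard}, and \cref{lm: 2nd-orth contraction nonzero}. This needs the standing assumptions (conjugate closure and failure of condition \eqref{cond: tractable classes}). It never uses the opposite-parity hypothesis. In exchange it gives a stronger conclusion: a single pair $\{i,j\}$ at which all four Bell contractions of both $f$ and $g$ are nonzero.

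\textbf{Loose ends.}
\begin{enumerate}
\item You only check nonvanishing for a pair inside a prime factor of arity $\ge4$ and for a pair split across two binary factors. The mixed cases need one more line. If $h$ is irreducible of arity $\ge2$, then $M_i(h)$ has rank $2$, since otherwise $x_i$ splits off as a unary factor. Hence $M_i(h_1)^{\tt T}M(A)M_j(h_2)\neq0$ for irreducible factors $h_1$ containing $x_i$ and $h_2$ containing $x_j$, and any invertible $M(A)$. This also makes your detour through $\mathcal U$ and $\Gamma$ (via \cref{lem:f6-nongamma-binary}) unnecessary.
\item The ``main obstacle'' in your last paragraph does not exist. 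Your partner-avoidance count already covers $d=4$: there are $d-1\ge3$ candidates for $j$ and at most two are excluded.
\item The claim there, that opposite parity forces a shared binary factor to carry different Bell classes in $f$ and $g$, is false. A counterexample is $f=(=_2)(x_1,x_2)\otimes(=_2)(x_3,x_4)$ and $g=(=_2)(x_1,x_2)\otimes(\neq_2)(x_3,x_4)$. Nothing in your argument depends on it.
\item The remark in your second paragraph about checking that opposite parity survives passage to factors can be dropped.
\end{enumerate}
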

\begin{proof}
    Since $f$ and $g$ are non-zero and have opposite parity, we may pick $\alpha\in \mathscr{S}(f)$ and $\beta\in \mathscr{S}(g)$ such that $\gamma:=\alpha\oplus \beta$ has odd Hamming weight.
    Combined with $d\ge 4$, there exists three distinct positions $i,j,k$ where $\gamma_i=\gamma_j=\gamma_k.$
    Without loss of generality, we assume $(i,j,k)=(1,2,3).$
    Suppose for a contradiction that, for every $\{u,v\}\subseteq \{1,2,3\}$ and every $A\in\{I,X,Z,J\}$, $\partial_{uv}^Af=0$ or $\partial_{uv}^Ag=0$.
    
    We let $e_i$ be the vector of length $d$ where the $i$-th position is $1$ and other positions are $0$.
    Define 
    \begin{equation*}
        \begin{aligned}
            &a=f(\alpha)\neq 0,&
            &a'=g(\beta)\neq 0;\\
            &b=f(\alpha\oplus e_1\oplus e_2),
            &&b'=g(\beta\oplus e_1\oplus e_2);\\
            &c=f(\alpha\oplus e_1\oplus e_3),
            &&c'=g(\beta\oplus e_1\oplus e_3).
        \end{aligned}
    \end{equation*}
    If $\alpha_1=\alpha_2$, then $\beta_1=\beta_2$ by $\gamma_1=\gamma_2$.
    Consider merging variables $x_1$ and $x_2$ of $f$ and $g$ respectively by $I$,
    we have $\partial_{12}^If(\alpha_3\cdots\alpha_d)=a+b$ and $\partial_{12}^Ig(\beta_3\cdots\beta_d)=a'+b'$.
    Since $\partial_{uv}^If=0$ or $\partial_{uv}^Ig=0$, we have $(a+b)(a'+b')=0$.
    Consider merging variables $x_1$ and $x_2$ of $f$ and $g$ respectively by $Z$, we have $(a-b)(a'-b')=0$.
    If $\alpha_1\neq \alpha_2$, then $\beta_1\neq \beta_2$ by $\gamma_1=\gamma_2.$
    Consider merging by $X$ and $J$, we still have
    $$
    \begin{cases}
        (a+b)(a'+b')=0,\\
        (a-b)(a'-b')=0.
    \end{cases}
    $$
    Plus the above two equations, we have $aa'+bb'=0.$
    Now consider merging variables $x_1$ and $x_3$ of $f$ and $g$ respectively, we will obtain $aa'+cc'=0.$
    Again consider merging variables $x_2$ and $x_3$ of $f$ and $g$ respectively, we have
    $bb'+cc'=0.$
    Thus $aa'=bb'=cc'=0$, contradicting $aa'\neq 0.$
\end{proof}

\begin{lemma}\label{lm: realification when parity}
    Suppose $f\in \mathcal{G}$ has parity. 
    Then $f$ is projectively real, otherwise $\Holant(f_6,\mathcal{B},\mathcal{F})$ is \#P-hard.
\end{lemma}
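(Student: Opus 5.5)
The proof will go by induction on the arity $d$ of $f$, which is legitimate because we work in $\mathcal G$ (gadgets together with their factors), so every lower-arity signature we realize from $f$ is again in $\mathcal G$ and the induction hypothesis applies to it. If $f$ is reducible we factor it (\cref{lm: decomposition}); each factor has parity, so it is projectively real by induction, and hence so is $f$. We may therefore assume $f$ is irreducible. Odd arity is excluded by \cref{lm: odd holant with conjugation}. For $d=2$, the Bell signatures and non-$\Gamma$ hardness (\cref{lem:f6-nongamma-binary}) leave only classes in $\Gamma$. Among these, the binaries with parity are exactly the $K_4$ classes, and $I,X,Z,J$ are real. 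So the base case holds.

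For the inductive step with $d\ge 4$, the key object is the conjugate. By conjugate closure, $\overline f\in\mathcal G$, and it has the same parity as $f$. Applying the Bell matrix $Z$ to one input of $f$ gives $f'=Z_1 f$, whose support equals that of $f$. To get opposite parity we instead apply $X$ (or $J$), setting $g=X_1\overline f$, which has parity opposite to $f$. The aim is to show that $f$ and $\overline f$ are proportional, since $f=\mu\overline f$ with $|\mu|=1$ gives projective reality. For each pair $\{i,j\}$ and $A\in\mathcal B$, the contraction $\partial^A_{ij}f$ has parity and arity $d-2$. By induction it is projectively real, say $\partial^A_{ij}f=\lambda_{ij}^A r_{ij}^A$ with $r_{ij}^A$ real. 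Because $\partial^A_{ij}\overline f=\overline{\partial^A_{ij}f}$ (the Bell matrices are real), on every pair where the contraction is nonzero we obtain $\partial^A_{ij}f=\omega^A_{ij}\,\partial^A_{ij}\overline f$ with $|\omega^A_{ij}|=1$.

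The main obstacle is to glue these local phases $\omega^A_{ij}$ into a single global phase, and then to show that the contractions determine $f$. Both $f$ and $\overline f$ have the same parity, so \cref{lm: non-zero Bell contraction} in its stated form does not apply directly. The first thing I would try is to use the family of all Bell contractions on the three pairs inside a triple $\{i,j,k\}$. For fixed values of the other coordinates, the contractions $\partial^I,\partial^Z$ (or $\partial^X,\partial^J$) on $\{i,j\}$ recover $f^{00}_{ij}\pm f^{11}_{ij}$ (respectively $f^{01}_{ij}\pm f^{10}_{ij}$), so together they determine $f$ entrywise. If two contractions with different phases overlap in support, the compatibility equations, in the style of the $aa'+bb'=0$ computation in \cref{lm: non-zero Bell contraction}, should force the phases to agree. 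A mismatch of phases is then turned into an explicit gadget: mate a Bell contraction of $f$ against one of $\overline f$ to produce a lower-arity parity signature that is not projectively real, contradicting the induction hypothesis. Alternatively, the mismatch produces a binary outside $\Gamma$, giving hardness by \cref{lem:f6-nongamma-binary}. Irreducibility together with {\sc 2nd-Orth} (\cref{lm: not 2nd-orth is hard}) should guarantee enough nonzero contractions, via \cref{lm: 2nd-orth contraction nonzero}, for the consistency graph on (pair, Bell matrix) to be connected. Connectedness yields a common $\omega$, hence $f=\omega\overline f$, and so $f$ is projectively real.
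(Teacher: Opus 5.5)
There is a genuine gap. Reformulating the goal as $f=\omega\overline f$ with $|\omega|=1$ is valid, and you correctly note that the four Bell contractions on one pair determine $f$. But the step that carries the lemma, synchronizing the phases of $\partial^{A}_{ij}f$ over the four Bell matrices $A$, is only hoped for (``should force the phases to agree'', ``should guarantee \ldots\ connected''). The fallbacks you list, mating to get a non-projectively-real parity signature or extracting a non-$\Gamma$ binary, are not worked out.

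The missing idea is where \cref{lm: non-zero Bell contraction} enters. It should be applied not to $f$ and $\overline f$, which have the same parity as you observed, but to $\partial^{B}_{ij}f$ and $\partial^{C}_{ij}f$ with $B\in\{I,Z\}$ and $C\in\{X,J\}$. These do have opposite parity. The lemma then gives a pair $\{u,v\}$ disjoint from $\{i,j\}$ and $A\in\mathcal B$ with both $\partial^{A}_{uv}\partial^{B}_{ij}f$ and $\partial^{A}_{uv}\partial^{C}_{ij}f$ nonzero. Now use the induction hypothesis a second time, on $\partial^{A}_{uv}f=\mu h$ with $h$ real. This is nonzero by {\sc 2nd-Orth}, and it serves as a common real reference. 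Writing $\partial^{B}_{ij}f=\lambda_B r_B$, commutativity gives $\lambda_B\,\partial^{A}_{uv}r_B=\mu\,\partial^{B}_{ij}h$, and likewise for $C$. All contracted signatures here are real, because the Bell matrices are real, and they are nonzero. Hence $\lambda_B/\mu$ and $\lambda_C/\mu$ lie in $\mathbb R^\times$. Running over the choices of $(B,C)$ gives one common phase for all four $\lambda_A$ on the single fixed pair $\{i,j\}$, and the slice-recovery formulas finish the argument. You need no connectivity over all pairs and no use of $\overline f$; your $g=X_1\overline f$ detour does not contribute to this synchronization.

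There is also a base-case gap. \cref{lm: non-zero Bell contraction} requires the contracted signatures to have arity at least $4$, so this gluing only works when $f$ has arity at least $6$. At arity $4$ you must argue separately, as the paper does. By \cref{lm: base case 2n=4} (no $\operatorname{AME}(4,2)$ state exists), an irreducible $f$ of arity $4$ already gives hardness. So $f\in\mathcal U^{\otimes}$, and its binary factors have parity, so they are real Bell signatures. Your inductive step treats arity $4$ generically and does not cover this.
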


\begin{proof}
    We prove this lemma by induction on the arity of $f$.
    If $f$ has arity $2$, then $f\in \Gamma$, otherwise $\Holant(\mathcal{F})$ is \#P-hard by \cref{lem:f6-nongamma-binary}.
    Since $f$ has parity, $f\in \{I,X,Z,J\}.$
    Then $f$ is real.
    If $f$ has arity $4$, then by \cref{lm: base case 2n=4}, $\Holant(\mathcal{F})$ is \#P-hard unless $f\in \mathcal{U}^{\otimes }$.
    Suppose $f=f_1\otimes f_2$, where $f_1,f_2\in \mathcal{U}.$
    Then we can realize $f_1$ and $f_2$ by \cref{lm: decomposition}.
    It is clear that the factor of a parity signature must satisfy parity as well.
    So both $f_1$ and $f_2$ have parity.
    Hence $f_1,f_2\in \{I,X,Z,J\}$, otherwise $\Holant(\mathcal{F})$ is \#P-hard, by the same argument as in the arity 2 case.
    Thus $f=f_1\otimes f_2$ is real.

    Next suppose $f$ has arity $2d\ge 6$.
    If $f$ is reducible, then we can realize its arbitrary factor $f_1$ in $\mathcal{G}$.
    Clearly $f_1$ has parity.
    By induction hypothesis, $f_1$ is projectively real, otherwise $\Holant(\mathcal{F})$ is \#P-hard.
    So $f$ is projectively real.
    In the following, we assume that $f$ is irreducible.
    By \cref{lm: not 2nd-orth is hard},
    we may assume that $f$ satisfies {\sc 2nd-Orth}. 
    Fix a pair of indices $\{i,j\}$ for some $i,j$, $1\le i<j \le 2d$.
    Consider $\partial_{ij}f=f_{ij}^{00}+f_{ij}^{11}$.
    By {\sc 2nd-Orth}, $|\partial_{ij}f|^2=|f_{ij}^{00}|^2+|f_{ij}^{11}|^2>0.$
    So $\partial_{ij}f\neq 0.$
    Similarly, $\partial^{A}_{ij}f\neq 0$ for every $A\in\{I,X,Z,J\}.$

    It is clear that $\partial_{ij}^Af$ has the same parity as $f$ for any $A\in \{I,Z\},$ and $\partial_{ij}^Af$ has the opposite parity to $f$ for any $A\in \{X,J\}.$
    So $\partial_{ij}^Af$ has arity $2(d-1)$ and it has parity.
    By induction hypothesis,
    we may assume that $\partial_{ij}^Af=\lambda_A\cdot  g_A$ for some $\lambda_{A}\in \mathbb{C}^{\times}$ and some non-zero real signature $g_A$, for every $A\in \{I,X,Z,J\}.$
    
    Now fix arbitrary $B\in \{I,Z\}$ and $C\in \{X,J\}.$
    Notice that $\partial_{ij}^Bf$ and $\partial_{ij}^Cf$ are non-zero signatures in $\mathcal{G}$ of arity at least 4, and they have opposite parity.
    By \cref{lm: non-zero Bell contraction}, there exists $A\in\{I,X,Z,J\}$ and a pair of indices $\{u,v\}$ with $1\le u<v\le 2d$ and $\{u,v\}\cap \{i,j\}=\emptyset,$ such that $\partial_{uv}^A\partial_{ij}^Bf\neq 0$ and $\partial_{uv}^A\partial_{ij}^Cf\neq 0.$
    Consider $\partial_{uv}^Af\in \mathcal{G}$, it is non-zero by {\sc 2nd-Orth} and has parity. 
    So we may write $\partial_{uv}^Af=\mu\cdot h$ for some $\mu\in \mathbb{C}^\times$ and non-zero real signature $h.$
    Commutativity gives that
    \begin{equation*}
        \begin{aligned}
            &\lambda_B\cdot \partial_{uv}^Ag_B=\partial_{uv}^A\partial_{ij}^Bf=\partial_{ij}^B\partial_{uv}^Af=\mu\cdot \partial_{ij}^Bh,\\
            &
            \lambda_C\cdot \partial_{uv}^Ag_C=\partial_{uv}^A\partial_{ij}^Cf=\partial_{ij}^C\partial_{uv}^Af=\mu\cdot \partial_{ij}^Ch.\\
        \end{aligned}
    \end{equation*}
    Notice that $\partial_{uv}^Ag_B,\,\partial_{ij}^Bh,\,\partial_{uv}^Ag_C,\,\partial_{ij}^Ch$ are all non-zero and real, we have $\frac{\lambda_B}{\mu}\in \mathbb{R}^\times$ and $\frac{\lambda_C}{\mu}\in \mathbb{R}^\times$.
    Thus $\frac{\lambda_B}{\lambda_C}\in \mathbb{R}^\times.$
    Since $B\in \{I,Z\}$ and $C\in\{X,J\}$ are arbitrarily picked, $\lambda_I,\lambda_X,\lambda_Z,\lambda_J$ are non-zero complex numbers with the same phase.
    Therefore, we may write $\partial_{ij}^Af=e^{\mathfrak i \theta} \cdot r_A$ for some $\theta\in [0,2\pi)$ and some non-zero real signature $r_A$, for every $A\in\{I,X,Z,J\}$.
    We may recover $f$ by 
    \begin{equation*}
        \begin{aligned}
            &f_{ij}^{00}=\frac{1}{2}(\partial_{ij}^If+\partial_{ij}^Zf)=\frac{e^{\mathfrak i \theta}}{2}(r_I+r_Z),\quad 
            f_{ij}^{11}=\frac{1}{2}(\partial_{ij}^If-\partial_{ij}^Zf)=\frac{e^{\mathfrak i \theta}}{2}(r_I-r_Z),\\
            &f_{ij}^{01}=\frac{1}{2}(\partial_{ij}^Xf+\partial_{ij}^Jf)=\frac{e^{\mathfrak i \theta}}{2}(r_X+r_J),\quad 
            f_{ij}^{10}=\frac{1}{2}(\partial_{ij}^Xf-\partial_{ij}^Jf)=\frac{e^{\mathfrak i \theta}}{2}(r_X-r_J).
        \end{aligned}
    \end{equation*}
    Therefore, $f$ is projectively real.
\end{proof}

\begin{lemma}\label{lm: parity f6 is hard}
    If every signature  $f\in\mathcal{F}$ has parity, then $\Holant(f_6,\mathcal{B},\mathcal{F})$ is \#P-hard.
\end{lemma}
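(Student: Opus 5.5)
The plan is to reduce to the real-valued Holant dichotomy of \cite{realholant} via realification. Suppose $\Holant(f_6,\mathcal{B},\mathcal{F})$ is not already \#P-hard.

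\textbf{Step 1: make every signature real.} Every signature of $\mathcal{F}$ lies in $\mathcal{G}$ and has parity, so \cref{lm: realification when parity} gives $f=\lambda_f g_f$ with $\lambda_f\in\mathbb{C}^\times$ and $g_f$ real. Rescaling a signature by a nonzero constant does not change complexity. Hence
\[
\Holant(f_6,\mathcal{B},\mathcal{F})\equiv_T\Holant(f_6,\mathcal{B},\mathcal{F}_{\mathbb R}),
\]
where $\mathcal{F}_{\mathbb R}=\{g_f: f\in\mathcal{F}\}$. Here $f_6$ and the Bell signatures $I,X,Z,J$ are already real. So the problem on the right is a real-valued Holant problem, and \cite{realholant} applies to the set $\mathcal{F}'=\{f_6\}\cup\mathcal{B}\cup\mathcal{F}_{\mathbb R}$.

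\textbf{Step 2: show $\mathcal{F}'$ is not tractable.} By the real dichotomy, $\Holant(\mathcal{F}')$ is \#P-hard unless $\mathcal{F}'$ satisfies (\ref{cond: tractable classes}). Conversely, if $\mathcal{F}'$ satisfied (\ref{cond: tractable classes}), then so would its subset $\mathcal{F}_{\mathbb R}$. Each tractable class in (\ref{cond: tractable classes}) is closed under nonzero scalar multiples, so $\mathcal{F}$ would satisfy (\ref{cond: tractable classes}) as well. This contradicts our standing assumption.

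We must also account for the real orthogonal transformation $O$ that was absorbed into $\mathcal{F}$ earlier in this section. By \cref{lem: invariance under real orthogonal transformation}, $O$ preserves both conjugate closure and failure of (\ref{cond: tractable classes}). So failure of (\ref{cond: tractable classes}) does transfer to the transformed set.

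\textbf{Main obstacle: the real dichotomy requires its hypotheses exactly.} The key point is that the reduction is Turing-equivalent. This needs two checks.

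First, the scalar $\lambda_f$ must be a fixed constant per signature. Since $\mathcal{F}$ is finite, it is. For the algebraic-number model, $\lambda_f$ can be taken to be an entry of $f$, so it is algebraic.

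Second, Step 2 only needs the subset $\mathcal{F}_{\mathbb R}$ to fail (\ref{cond: tractable classes}). Hardness of the larger set $\mathcal{F}'$ then follows directly from the real dichotomy. The presence of $f_6$ in $\mathcal{F}'$ is harmless here: it is a fixed real signature included in the set. Alternatively, one could invoke the result of \cite{realholant} stating that $\Holant(f_6,\mathcal{B},\mathcal{G})$ is hard whenever $\mathcal{G}\not\subseteq\mathscr{A}$. If $\mathcal{F}_{\mathbb R}\subseteq\mathscr{A}$, then $\mathcal{F}$ would be $\mathscr{A}$-transformable with $T=I$, again a contradiction.
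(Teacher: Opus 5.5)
Your argument is correct and follows the paper's route: you realify every signature of $\mathcal{F}$ via the parity realification lemma, rescale by the fixed nonzero scalars, and then appeal to the real-valued results of \cite{realholant}. The only difference is which result you cite at the end. The paper invokes the specific Lemma~7.20 of \cite{realholant} for the $f_6$-plus-Bell setting, whereas you apply the main real dichotomy directly to $\{f_6\}\cup\mathcal{B}\cup\mathcal{F}_{\mathbb R}$. That is legitimate, since this set is real-valued and fails \eqref{cond: tractable classes} because its subset $\mathcal{F}_{\mathbb R}$ does.
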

\begin{proof}
    By \cref{lm: realification when parity}, every signature $f\in \mathcal{F}$ is projectively real, otherwise $\Holant(\mathcal{F})$ is \#P-hard and then we are done.
    Since a normalization doesn't change the complexity, we may assume that $\mathcal{F}$ is a set of real signatures.
    Then by Lemma 7.20 in \cite{realholant}, $\Holant(f_6,\mathcal{B},\mathcal{F})$ is \#P-hard.
\end{proof}

\subsection{Non-parity Condition}\label{subsec: f6 non-parity}

\begin{lemma}\label{lm: non-pairty reduction}
    If $f$ is a signature of arity $2d\ge 4$ with no parity, then we can realize a signature $g$ of arity $2(d-1)$ with no parity.
\end{lemma}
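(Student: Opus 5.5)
The plan is to realize $g$ by merging two variables of $f$, $g=\partial^A_{ij}f$, with a suitable Bell matrix $A\in\mathcal B=\{I,X,Z,J\}$, all of which are available in the present section. No parity means $\mathscr S(f)$ meets both $\mathscr E_{2d}$ and $\mathscr O_{2d}$. Pick $\alpha\in\mathscr S(f)$ of even weight and $\beta\in\mathscr S(f)$ of odd weight.

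The key observation concerns the parity of a merge. Merging through $I$ or $Z$ keeps only entries with $x_i=x_j$, and it preserves the weight parity of the remaining $2d-2$ bits. Merging through $X$ or $J$ keeps only entries with $x_i\ne x_j$, and it flips that parity. Write $f=f_{\mathrm{e}}+f_{\mathrm{o}}$ for the even-weight and odd-weight parts of $f$, both nonzero. Then $\partial^A_{ij}f=\partial^A_{ij}f_{\mathrm{e}}+\partial^A_{ij}f_{\mathrm{o}}$, and the two summands have opposite parity. So $\partial^A_{ij}f$ has no parity if and only if both summands are nonzero.

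It therefore suffices to find $\{i,j\}$ and $A\in\mathcal B$ with $\partial^A_{ij}f_{\mathrm e}\ne0$ and $\partial^A_{ij}f_{\mathrm o}\ne0$. This is exactly the combinatorial content of \cref{lm: non-zero Bell contraction}, applied to the two nonzero signatures $f_{\mathrm e}$ and $f_{\mathrm o}$, which have opposite parity and arity $2d\ge4$.

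That lemma is stated for signatures in $\mathcal G$, but its proof uses nothing about realizability. It picks $\alpha\in\mathscr S(f_{\mathrm e})$ and $\beta\in\mathscr S(f_{\mathrm o})$ with $\gamma=\alpha\oplus\beta$ of odd weight. It then chooses three positions $1,2,3$ on which $\gamma$ is constant; such positions exist since $2d\ge4$. For each pair among these positions, vanishing of one of the two merges for all four Bell matrices gives $aa'+bb'=0$, $aa'+cc'=0$ and $bb'+cc'=0$. These force $aa'=0$, a contradiction. The same argument works verbatim for $f_{\mathrm e},f_{\mathrm o}$, so I would reprove it inline in that form rather than cite it, and it yields the required $\{i,j\}$ and $A$. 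Since $A$ is available, $g=\partial^A_{ij}f$ is realizable, has arity $2(d-1)$, and has no parity.

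The only delicate point is this: the parity claim for $\partial^A_{ij}$ must be checked for each of the four Bell matrices, including the sign twists of $Z$ and $J$. Those signs do not affect supports, so the check is routine. I expect no real obstacle beyond restating the argument of \cref{lm: non-zero Bell contraction} for the even and odd parts of a single signature.
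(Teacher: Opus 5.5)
Your proof is correct and follows essentially the paper's route. The paper only defers to Lemma~7.1 of \cite{realholant}, which is the same Bell-merging argument. Your reduction via $f=f_{\mathrm e}+f_{\mathrm o}$ is sound: $\partial^A_{ij}f$ lacks parity exactly when $\partial^A_{ij}f_{\mathrm e}\neq0$ and $\partial^A_{ij}f_{\mathrm o}\neq0$, which is what the unconditional content of \cref{lm: non-zero Bell contraction} provides. One small point to state explicitly: for $2d=4$, the three positions on which $\gamma$ is constant come from $\operatorname{wt}(\gamma)$ being odd, not from $2d\ge4$ alone.
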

\begin{proof}
    The proof is the same as Lemma 7.1 in \cite{realholant}.
\end{proof}

By \cref{lm: non-pairty reduction}, we can realize a binary signature $b$ with no parity.
By \cref{lem:f6-nongamma-binary}, $b\in\Gamma,$ otherwise $\Holant(f_6,\mathcal{B},\mathcal{F})$ is \#P-hard.
So $b\in \Gamma\setminus \{I,X,Z,J\}\cong A_4 \setminus V_4$, where $V_4$ is the Klein group.
\textcolor{black}{Since an arbitrary element in $A_4 \setminus V_4$ and $V_4$ generate $A_4$,} we can realize all elements in $\Gamma.$
In the following, we will prove $\Holant(\Gamma,f_6,\mathcal{F})$ is \#P-hard.
Recall
\Fgate*
Now we may assume $\Gamma\cup \{f_6\}\subseteq \mathcal{G}.$

\begin{definition}[Pauli tensor]
    For $i_1,i_2,\ldots,i_k\in \{0,1,2,3\}$, we define the Pauli tensor matrix 
    $$Q_{i_1i_2\ldots i_k}=\sigma_{i_1}\otimes \sigma_{i_2}\otimes\cdots\otimes \sigma_{i_k}.$$
    The weight of $Q_{i_1i_2\ldots i_k}$, denoted by $\mathrm{wt}(Q_{i_1i_2\ldots i_k})$, is defined as $\#\{j\mid 1\le j \le k\mid i_j\neq 0\}$, the number of matrices in $\sigma_{i_1},\sigma_{i_2},\ldots,\sigma_{i_k}$ that are not equal to $\sigma_0=I.$
\end{definition}

\begin{lemma}\label{lm: conjugate action by Gamma}
    For any $A,B\in\{X,Y,Z\}$, there exists $\gamma\in \Gamma$ such that $\gamma A\gamma^\dagger =B.$
\end{lemma}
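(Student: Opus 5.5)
The plan is to use the structure of $\Gamma$ as the projective tetrahedral group acting on traceless Hermitian matrices via the double cover $\mathbf{SU}(2)\to\mathbf{SO}(3)$ recalled in the preliminaries. Conjugation $A\mapsto \gamma A\gamma^\dagger$ by a unitary representative of $[\gamma]$ is well defined on projective classes, since the scalar cancels once we normalize the representative to be unitary. It realizes the rotation $R_\gamma\in\mathbf{SO}(3)$ acting on the coordinate vectors of $\sigma_1,\sigma_2,\sigma_3$. So it suffices to show two things: the rotations $R_\gamma$ for $[\gamma]\in\Gamma$ permute the three coordinate axes transitively, and the resulting sign can be fixed to $+1$.

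\textbf{Step 1 (Klein part).} Conjugation by $X,Z,J$ fixes each of $X,Y,Z$ up to sign. For instance $ZXZ^\dagger=-X$ and $XXX^\dagger=X$, so $K_4$ acts on each axis by $\pm1$. Concretely, for any $A\in\{X,Y,Z\}$ and any axis other than $A$'s, there is a Bell matrix that flips that axis's sign while fixing $A$. Hence every sign pattern on a single target $B$ is available: given $\gamma A\gamma^\dagger=-B$, compose with a Klein element $\kappa$ satisfying $\kappa B\kappa^\dagger=-B$ (e.g.\ $\kappa=Z$ if $B=X$, $\kappa=X$ if $B=Z$, $\kappa=X$ if $B=Y$).

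\textbf{Step 2 (3-cycle).} Compute $C=C_{+,+,+}=\tfrac12(I+\mathfrak iX+J+\mathfrak iZ)$. Since $J=\mathfrak iY$, this equals $\tfrac12(I+\mathfrak i(X+Y+Z))$, which is unitary and is a rotation by $\pm 2\pi/3$ about the axis $(1,1,1)$. A direct check ($CXC^\dagger$, etc.) shows that conjugation by $C$ cyclically permutes $X\to Y\to Z\to X$, or the reverse cycle; in either case it is a 3-cycle on the axes with positive signs. Together with $C^2$ and the identity, this gives transitivity on $\{X,Y,Z\}$.

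\textbf{Step 3 (combine).} For given $A,B$, take $\gamma_0\in\{I,C,C^2\}$ with $\gamma_0A\gamma_0^\dagger=\pm B$, then correct the sign with a Klein element as in Step 1. Since $\Gamma$ is closed under multiplication, $\gamma=\kappa\gamma_0$ has its class in $\Gamma$.

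The only real work is the explicit verification in Step 2. This is routine $2\times2$ arithmetic, or follows from the axis-angle description of $C$, so I expect no genuine obstacle. The one point needing care is fixing unitary representatives so that $\gamma^\dagger=\gamma^{-1}$ holds on the nose.
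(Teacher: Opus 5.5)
Your proof is correct and is essentially the paper's argument. The paper also takes $C=C_{+,+,+}$, checks directly that $CXC^\dagger=Z$, $CZC^\dagger=Y$, $CYC^\dagger=X$, and uses powers of $C$. Your Klein-group sign correction in Step 1 is harmless but never needed: as you yourself observe, $C$ corresponds to the cyclic permutation matrix in $\mathbf{SO}(3)$, so it permutes the three axes with all signs equal to $+1$.
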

\begin{proof}
    Let $C=\frac{1}{2}(I+\mathfrak iX+J+\mathfrak iZ)=\frac{1}{2}\left[\begin{matrix}
        1+\mathfrak i & 1+\mathfrak i\\
        -1+\mathfrak i & 1-\mathfrak i
    \end{matrix}\right]\in \Gamma.$
    Direct cauculation shows that
    \begin{equation*}
        CXC^\dagger=Z,\quad 
        CZC^\dagger=Y,\quad
        CYC^\dagger=X.
    \end{equation*}
    Since the power of $C$ is also in $\Gamma,$ the lemma is proved.
\end{proof}

\begin{lemma}\label{lm: decrease Pauli weight by 2}
    Suppose $Q_{i_1i_2\ldots i_k}$ is a Pauli tensor of weight at least $3.$
    Then we can realize a signature $g_{i_1i_2\ldots i_k}\in \mathscr{A}\cap \mathcal{G}$ of arity $2k$ with matrix $U_{i_1i_2\ldots i_k}=M_{x_1\ldots x_k,\, x_{k+1}\ldots x_{2k}}(g_{i_1i_2\ldots i_k})$ which is unitary, such that $U_{i_1i_2\ldots i_k}Q_{i_1i_2\ldots i_k}U_{i_1i_2\ldots i_k}^\dagger$ is a Pauli tensor of weight decreased by 2 compared with the Pauli tensor $Q_{i_1i_2\ldots i_k}$.
\end{lemma}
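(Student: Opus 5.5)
Since $\Gamma\cup\{f_6\}\subseteq\mathcal G$, the plan is to build $g_{i_1\ldots i_k}$ as a tensor product of single-qubit gates from $\Gamma$ and one gate acting on three qubits, realized from $f_6$. Conjugating by $\gamma^{\otimes}$ with $\gamma\in\Gamma$ keeps the weight and, by \cref{lm: conjugate action by Gamma}, can move any nonidentity Pauli to any other (up to sign, which is irrelevant projectively). Hence it suffices to pick three positions $a<b<c$ where $i_a,i_b,i_c\neq0$, first rotate them all to $Z$ with $\Gamma$-gates, and then find one fixed three-qubit gate $W$, realizable from $f_6$, that sends $Z\otimes Z\otimes Z$ to a weight-one Pauli. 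The remaining positions get $I$ (that is, $=_2$), or the $\Gamma$-gates used for normalization. The final $g$ is then a tensor product and composition (a path contraction) of these pieces; its arity is $2k$, with rows indexed by inputs and columns by outputs.

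The key ingredient is $W$. I would read $f_6$ as the $8\times8$ matrix $M=M_{x_1x_2x_3,x_4x_5x_6}(f_6)$ displayed above. Since $f_6$ satisfies {\sc 3rd-Orth} (it is AME$(6,2)$), $MM^\dagger=8I$, so $W=M/(2\sqrt2)$ is unitary. Moreover $f_6\in\mathscr A$ (the indicator of a parity constraint times $(-1)^{\text{quadratic}}$), so $W$ is a Clifford unitary and $W(Z\otimes Z\otimes Z)W^\dagger$ is, up to a phase, a Pauli tensor. I would compute this image directly. Because $f_6$ is supported on even-weight $x$, we get $Z^{\otimes3}$ on the rows equal to $Z^{\otimes3}$ on the columns up to sign, i.e. $(Z\otimes Z\otimes Z)M=M(Z\otimes Z\otimes Z)$ up to sign and a transpose convention. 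That image has weight three, so it is the wrong choice. Instead I would start from a Pauli like $X\otimes X\otimes X$ or $Z\otimes Z\otimes I$ and track it through $M$ using the stabilizer group of $f_6$ (computable from its affine form: parity gives $Z^{\otimes6}$, and each linear-phase shift gives an $X$-type stabilizer with $Z$ corrections). Each stabilizer $P_{\mathrm{in}}\otimes P_{\mathrm{out}}$ of $f_6$ yields $W P_{\mathrm{in}} W^\dagger\propto P_{\mathrm{out}}^{\mathtt T}$. The goal is to find a stabilizer whose input part has weight $3$ and output part weight $1$. Since the code is AME, every nontrivial stabilizer has weight at least $4$, so weight $3+1$ is the minimum and should exist: the stabilizer group has $64$ elements, and by a counting argument weight-4 elements split across a $3|3$ cut as $3+1$, $2+2$, or $1+3$. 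Having found one such $P_{\mathrm{in}}$, I would rotate the chosen three Paulis into $P_{\mathrm{in}}$ with $\Gamma$-gates; this works qubit by qubit, because \cref{lm: conjugate action by Gamma} handles each tensor factor independently.

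Affineness of $g$ follows because $\Gamma$ elements are affine (they are Clifford, and in this basis of $\mathscr A$ the $C_{\epsilon}$'s have entries in $\{\pm1,\pm\mathfrak i\}/2$ times a scalar), $f_6\in\mathscr A$, and $\mathscr A$ is closed under tensor product and contraction. Unitarity is preserved under composition up to a scalar, and I normalize.

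The main obstacle is the explicit stabilizer search: confirming there is a weight-$(3,1)$ stabilizer for the specific cut $\{1,2,3\}|\{4,5,6\}$, and handling the transpose/conjugation conventions in passing from a stabilizer of the tensor to conjugation by the matrix. If that cut fails, I would relabel the inputs of $f_6$, which is allowed since variable order is free, to pick a cut where it works.
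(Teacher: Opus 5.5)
Your plan is the same as the paper's: use \cref{lm: conjugate action by Gamma} to rotate the three nonidentity factors into a fixed weight-three Pauli, push that Pauli through the $3|3$ flattening of $f_6$ so it becomes weight one, and pad with $(=_2)^{\otimes(k-3)}$, with affineness and projective unitarity inherited from $\Gamma\cup\{f_6\}\subseteq\mathscr A$. The stabilizer you are searching for is $X_1X_4Z_5Z_6$, since $f_6(x\oplus e_1\oplus e_4)=(-1)^{x_5+x_6}f_6(x)$; this gives exactly the paper's identity $M(f_6)(X\otimes Z\otimes Z)M(f_6)^\dagger=4\,X\otimes I\otimes I$. Note also that $M(f_6)M(f_6)^\dagger=4I_8$, not $8I_8$, and that your counting heuristic is unnecessary and the relabelling fallback never arises, because the AME property forces every weight-one Pauli on one side of this Clifford flattening to have a partner of weight exactly three on the other side.
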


\begin{proof}
    Since $Q_{i_1i_2\ldots i_k}$ has weight at least 3, we may assume $i_1i_2i_3\neq 0$ without loss of generality.
    First apply \cref{lm: conjugate action by Gamma}, we realize a 6-ary signature $g_6$ with matrix $M(g_6)=\gamma_1\otimes \gamma_2\otimes\gamma_3$ with $\gamma_1,\gamma_2,\gamma_3\in \Gamma$, such that $M(g_6)(\sigma_{i_1}\otimes\sigma_{i_2}\otimes \sigma_{i_3})M(g_6)^\dagger=X\otimes Z\otimes Z.$
    Now connect variables $x_4,x_5,x_6$ of $f_6$ and variables $x_1,x_2,x_3$ of $g_6$, we realize the signature $h_6$ with matrix $M(h_6)$, such that $M(h_6)(\sigma_{i_1}\otimes\sigma_{i_2}\otimes \sigma_{i_3})M(h_6)^\dagger=M(f_6)(X\otimes Z\otimes Z)M(f_6)^\dagger=4X\otimes I\otimes I.$
    Let $h_{i_1i_2\ldots i_k}=h_6 \otimes (=_2)^{\otimes (k-3)}$.
    Then $M(h_{i_1i_2\ldots i_k})Q_{i_1i_2\ldots i_k}M(h_{i_1i_2\ldots i_k})^\dagger$ has Pauli weight decreased by 2.
    Clearly $M(h_{i_1i_2\ldots i_k})$ is projectively unitary, because $f_6$ and $g_6$ are projectively unitary.
    Also $h_{i_1i_2\ldots,i_k}\in \mathscr{A}$, since all signatures in $\Gamma$ and $f_6$ are affine.
    Letting $g_{i_1i_2\ldots i_k}$ be the normalized signature of $h_{i_1i_2\ldots i_k}$ finishes the proof.
\end{proof}

Repeatedly apply \cref{lm: decrease Pauli weight by 2}, we have the following Corollary:

\begin{corollary}\label{cor: decrease Pauli weight}
    Suppose $Q_{i_1i_2\ldots i_k}$ is a Pauli tensor of weight at least $3.$
    Then we can realize a signature $f_{i_1i_2\ldots i_k}\in \mathscr{A}\cap\mathcal{G}$ of arity $2k$ with matrix $V_{i_1i_2\ldots i_k}=M_{x_1\ldots x_k,\, x_{k+1}\ldots x_{2k}}(f_{i_1i_2\ldots i_k})$ which is unitary, such that $V_{i_1i_2\ldots i_k}Q_{i_1i_2\ldots i_k}V_{i_1i_2\ldots i_k}^\dagger$ is a Pauli tensor of weight 1 or 2.
\end{corollary}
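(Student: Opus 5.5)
The plan is to iterate \cref{lm: decrease Pauli weight by 2} and compose the resulting gadgets. I will induct on $w=\mathrm{wt}(Q_{i_1i_2\ldots i_k})\ge 3$. Set $Q^{(0)}=Q_{i_1i_2\ldots i_k}$. At each step $t$, apply \cref{lm: decrease Pauli weight by 2} to the current Pauli tensor $Q^{(t)}$. This produces a signature $g^{(t)}\in\mathscr A\cap\mathcal G$ of arity $2k$ whose unitary matrix $U^{(t)}$ satisfies
\[
U^{(t)}Q^{(t)}U^{(t)\dagger}=Q^{(t+1)},
\]
where $Q^{(t+1)}$ is a Pauli tensor with $\mathrm{wt}(Q^{(t+1)})=\mathrm{wt}(Q^{(t)})-2$. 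Each step lowers the weight by exactly two, and the lemma applies whenever the weight is at least $3$. So after $s=\lceil (w-2)/2\rceil$ steps the weight first drops into $\{1,2\}$ (to $1$ if $w$ is odd, to $2$ if $w$ is even). One small point to check: the lemma is stated for a Pauli tensor, and the proof of \cref{lm: decrease Pauli weight by 2} produces $4X\otimes I\otimes I$ tensored with the untouched factors. So $Q^{(t+1)}$ should be normalized by a scalar, and the normalization of $g^{(t)}$ to a unitary matrix already absorbs this. In any case scalars do not affect weight.

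Next I compose the gadgets. Connect the column variables $x_{k+1},\ldots,x_{2k}$ of $g^{(t)}$ to the row variables $x_1,\ldots,x_k$ of $g^{(t+1)}$ using $=_2$. This is an $\mathcal G$-gate, so it lies in $\mathcal G$ (here $\mathcal G$ is closed under forming gates from its members, since a gate built from gates and factors of gates is again realizable, exactly as in the definition used throughout). Its matrix is the product of the individual matrices in the appropriate order, possibly with a transpose depending on orientation. With the row/column conventions fixed as in \cref{lm: decrease Pauli weight by 2}, the chained matrix is $V=U^{(s-1)}\cdots U^{(0)}$, and
\[
VQV^\dagger=Q^{(s)}
\]
by telescoping. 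A product of unitaries is unitary, so $V$ is unitary.

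Affineness is preserved because $\mathscr A$ is closed under gadget contraction: connecting edges corresponds to summation over affine support with quadratic phase. This is the standard closure of the affine class. So the composite $f_{i_1i_2\ldots i_k}\in\mathscr A\cap\mathcal G$, and I may rescale it by a nonzero constant if needed.

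The only step needing care is bookkeeping of orientation: when contracting columns of one gadget to rows of the next, one must make sure the matrix really multiplies as $U^{(t+1)}U^{(t)}$ and not via a transpose. If a transpose appears, I would instead use the transposed gadget (obtained by relabeling variables). That relabeling is again in $\mathscr A\cap\mathcal G$, its matrix is still unitary, and it conjugates correctly. I expect no genuine obstacle beyond this.
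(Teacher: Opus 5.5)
Your proposal is correct and takes the same approach as the paper, which obtains this corollary simply by repeatedly applying \cref{lm: decrease Pauli weight by 2} and composing the resulting unitary affine gadgets. The one thing to fix is the wiring order: attaching the column variables of $g^{(t)}$ to the row variables of $g^{(t+1)}$ gives the matrix $U^{(0)}U^{(1)}\cdots U^{(s-1)}$, so to get $V=U^{(s-1)}\cdots U^{(0)}$ you should instead attach the column variables of $g^{(t+1)}$ to the row variables of $g^{(t)}$, which is only a relabeling.
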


\begin{lemma}\label{lm: all 12 binary available is hard}
    Suppose $\mathcal{F}$ doesn't satisfy condition \eqref{cond: tractable classes}. 
    Then $\Holant(\Gamma, f_6, \mathcal{F})$ is \#P-hard.
\end{lemma}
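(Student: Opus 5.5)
The plan is to reduce to \cref{lm: I+AA implies hard}. Concretely, I aim to realize from $\Gamma\cup\{f_6\}$ alone a $4$-ary signature $g$ with $M(g)=I_4+P\otimes P$ for some Pauli matrix $P\in\{X,Y,Z\}$. Such a $P$ is Hermitian, traceless and satisfies $P^2=I_2$, so \cref{lm: I+AA implies hard} gives hardness immediately. That lemma needs only conjugate closure of the underlying set, which holds here: $\Gamma$ and $\mathcal B$ are closed under conjugation, $f_6$ is real, and $\mathcal F$ is conjugate closed. Through \cref{lm: CSP with =2} it uses that $\mathcal F$ fails \eqref{cond: tractable classes}. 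The target is natural because for $P=Z$ the signature $I_4+Z\otimes Z$ is exactly $2(=_4)$, and for the other two Paulis it is $2\,U^{\otimes 4}(=_4)$ with $U$ unitary. So the whole task is to manufacture a stabilizer projector of this shape from affine gadgets.

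\textbf{Step 1.} View $f_6$ through its unitary matrix $M_{x_1x_2x_3,x_4x_5x_6}(f_6)$ (projectively a $3$-qubit Clifford unitary), and view $4$-ary or $2k$-ary gadgets through their $k$-by-$k$ qubit matrices. Composing gadgets along $k$ legs multiplies these matrices. By \cref{lm: decrease Pauli weight by 2} and \cref{cor: decrease Pauli weight} we can realize unitary affine gadgets $V$ with $VQV^\dagger$ of Pauli weight $1$ or $2$ whenever $Q$ has weight $\ge3$. By \cref{lm: conjugate action by Gamma}, single-qubit gadgets from $\Gamma$ permute $X,Y,Z$ by conjugation.

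\textbf{Step 2.} Produce \emph{some} realizable gadget whose matrix is, up to scalar, a projector $\tfrac12(I+Q)$ for a nontrivial Pauli tensor $Q$. The first attempt is self-contraction of $f_6$: connect two variables of one copy of $f_6$ to two variables of a second copy through elements of $\Gamma$, then read the resulting $8$-ary affine signature as a matrix. Its $4$-qubit reduced matrix should be a sum of Pauli tensors forming a stabilizer group. Since no $\operatorname{AME}(4,2)$ exists, such a matrix cannot be a multiple of the identity, so some nontrivial stabilizer $Q$ survives. A cleaner alternative is to merge one pair of variables of $f_6$ by an element of $\Gamma$. By \cref{lem:f6-contraction-test} this always splits into binaries in $\Gamma$. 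So instead I would merge pairs \emph{across} two copies of $f_6$ and track the stabilizer group of the result, which is a $\mathbb F_2$-linear computation on the affine data.

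\textbf{Step 3.} Conjugate by the gadgets of Step 1 to lower the weight of $Q$ to $2$; if it reaches $1$, tensor with a second copy first. Then use $\Gamma$ on each leg to turn $Q$ into $P\otimes P$ with equal Paulis on both qubits, giving a matrix $\tfrac12(I_4+P\otimes P)$ on the two-qubit side. Reading this as a $4$-ary signature with the stated row/column split is exactly the input to \cref{lm: I+AA implies hard}.

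The main obstacle is Step 2 together with the bookkeeping of transposes. Conjugation of a gadget's matrix occurs as $V(\cdot)V^{\mathtt T}$ on one side versus $V(\cdot)V^\dagger$ on the other, and $\Gamma$ is only $A_4$, not the full single-qubit Clifford group. The danger is that we can only produce $P\otimes P'$ with $P'$ in a wrong coset. If that happens, I would fall back on \cref{lm:rank-two-matrix-form} and \cref{lm:realize-I-plus-AA}: realize $M=\lambda(I_4+A\otimes B)$ with arbitrary involutions $A,B$, then self-join two copies to get $I_4+A\otimes A$, which removes the need to match the two legs.
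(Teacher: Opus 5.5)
\paragraph{Verdict: there is a genuine gap.} The object you want to build in Steps 2--3, and in the fallback, cannot be realized from $\Gamma\cup\{f_6\}$ at all.

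\paragraph{An invariant that rules it out.} Encode the stabilizer group (modulo phases) of a stabilizer state as an additive code over $\mathbb F_4$ via $I,X,Z,Y\mapsto 0,1,\omega,\omega^2$, using the same encoding on every qubit. Every nonzero signature obtained from $\Gamma\cup\{f_6\}$ by gadgets, conjugation and factor extraction is a stabilizer state whose code is $\mathbb F_4$-linear, i.e.\ closed under the simultaneous substitution $X\mapsto Z\mapsto Y\mapsto X$. The reasons are as follows.
\begin{itemize}
\item For $f_6$ (equivalently the Pauli-twisted $g$ of \cref{lem:f6-klein-case}), the stabilizer group is generated by $X^{\otimes 6}$, $Z^{\otimes 6}$, $X_1X_4Z_5Z_6$, $X_2X_4Z_3Z_5$, $X_3X_4Z_2Z_6$, $X_4X_5Z_1Z_2$. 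A direct check shows it is closed under that substitution.
\item An element of $\Gamma$, viewed as a two-qubit state, has code $\{(\lambda a,a)\}$ with $\lambda\in\mathbb F_4^\times$. This is because $\Gamma/K_4\cong\mathbb Z_3$ permutes the Pauli axes only cyclically.
\item Joining legs $i,j$ by $=_2$ sends a code $C$ to $\{c|_{[n]\setminus\{i,j\}}: c\in C,\ c_i=c_j\}$ (when the result is nonzero).
\item Tensor products give direct sums, conjugation does not change the code, and extracting a tensor factor restricts to a block of coordinates.
\end{itemize}
All of these operations preserve $\mathbb F_4$-linearity.

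\paragraph{Why this kills the plan.} A linear self-dual code of length $4$ always splits. It has a word $c$ of weight $2$: weight $\ge3$ would give an $\operatorname{AME}(4,2)$ state, and weight $1$ is impossible because $c$ and $\omega c$ would be anticommuting one-qubit stabilizers. Every codeword commutes with $c$ and $\omega c$, which forces its restriction to $\operatorname{supp}(c)$ into $\mathbb F_4 c\subseteq C$.

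Hence every nonzero $4$-ary signature realizable from $\Gamma\cup\{f_6\}$ is a product of two binaries. In contrast, $I_4+P\otimes P$ is irreducible: for $P=Z$ it is $2(=_4)$, and the other cases are local-unitary images of it. The consequences for each part of your plan are:
\begin{itemize}
\item Your fallback fails the same way, since $I_4+A\otimes B$ yields $I_4+A\otimes A$ by the gadget of \cref{lm:realize-I-plus-AA}.
\item No gadget with matrix $\tfrac12(I+Q)$ exists either. Your Step 3 uses only such gadgets and factor extraction, so it would turn one into an irreducible $4$-ary signature or a unary, and linearity excludes odd arity too.
\item The Step 2 heuristic cannot work. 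For example, two copies of $f_6$ glued along two legs give the rank-$4$ projector $\tfrac14(I+S_1)(I+S_2)$ of a $[[4,2,2]]$ code, and nothing removes one of the two stabilizers.
\end{itemize}

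\paragraph{The missing idea.} Hardness must be extracted from a non-affine member of $\mathcal F$; the affine gadgets only serve as Clifford-type conjugators. The paper's argument runs as follows:
\begin{itemize}
\item Take a non-affine $f$ of minimal arity $d$ in the gadget-and-factor closure. If $d=2$, \cref{lem:f6-nongamma-binary} applies.
\item Apply the unitary affine gadgets $V_Q$ of \cref{cor: decrease Pauli weight} to $f$ itself.
\item Each $V_Q\mathbf f$ is irreducible by minimality. An affine factorization would make $\mathbf f=V_Q^\dagger(V_Q\mathbf f)$ affine.
\item If every $V_Q\mathbf f$ satisfied {\sc 2nd-Orth}, every nontrivial Pauli coefficient $\mathbf f^\dagger Q\mathbf f$ of the rank-one matrix $\mathbf f\mathbf f^\dagger$ would vanish. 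That forces $\mathbf f\mathbf f^\dagger$ to be a multiple of $I_{2^d}$, which is impossible.
\item So some irreducible $V_Q\mathbf f$ fails {\sc 2nd-Orth}, and \cref{lm: not 2nd-orth is hard} gives hardness.
\end{itemize}
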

\begin{proof}
    
    Since $\mathcal{F}$ doesn't satisfy condition (\ref{cond: tractable classes}), then there exists some $f\in \mathcal{G}$ such that $f\notin \mathscr{A}.$
    We choose such an $f$ with minimal arity $d$.
    If $d=2$, then $f\notin \Gamma$ since $\Gamma\subseteq \mathscr{A}$.
    By \cref{lem:f6-nongamma-binary}, $\Holant(\Gamma,f_6,\mathcal{F})$ is \#P-hard.
    In the following we assume $d\ge 4.$
    Let $\mathbf{f}$ denote the truth table of $f$, listed lexicographically as a column vector in $\mathbb{C}^{2^d}$.

    Consider an arbitrary Pauli tensor $Q_{i_1i_2\ldots i_d}.$
    By \cref{cor: decrease Pauli weight}, we can realize a signature $f_{i_1i_2\ldots i_d}\in \mathscr{A}\cap \mathcal{G}$ of arity $2d$ with matrix $V_{i_1i_2\ldots i_d}=M_{x_1\ldots x_d,\,x_{d+1}\ldots x_{2d}}(f_{i_1i_2\ldots i_d})$ such that $V_{i_1i_2\ldots i_d}Q_{i_1i_2\ldots i_d}V_{i_1i_2\ldots i_d}^\dagger$ is a Pauli tensor of weight 1 or 2.
    Connect variables $x_{d+1},\ldots, x_{2d}$ of $f_{i_1i_2\ldots i_d}$ and variables $x_1,\ldots, x_d$ of $f$, we realize a signature $g_{i_1i_2\ldots i_d}$ of arity $d$ with truth table $\mathbf{g}_{i_1i_2\ldots i_k}=V_{i_1i_2\ldots i_d}\mathbf{f}.$

    \begin{claim}\label{claim: g is irreducible}
        For every $i_1,i_2,\ldots, i_d\in [4]$, $g_{i_1i_2\ldots i_d}$ is irreducible. 
    \end{claim}
    \begin{claimproof}{\cref{claim: g is irreducible}}
        Suppose for a contradiction that $g_{i_1i_2\ldots i_d}$ is reducible for some $i_1,i_2,\ldots, i_d\in [4]$.
        Every factor $g'$ must be affine.
        Otherwise we can realize $g'\in \mathcal{G}$ of arity less than $d$, contradicting minimality of $f$.
        If every factor $g'\in \mathscr{A}$, then $g_{i_1i_2\ldots i_d}\in \mathscr{A}$.
        Since $f_{i_1i_2\ldots i_d}\in \mathscr{A}\cap \mathcal{G}$, we have $\overline{f_{i_1i_2\ldots i_d}}\in \mathscr{A}\cap \mathcal{G}$ by conjugate closure.
        Connect variables $x_{1},\ldots, x_{d}$ of $\overline{{f}_{i_1i_2\ldots i_d}}$ and variables $x_1,\ldots, x_d$ of $g_{i_1i_2\ldots i_d}$, we realize a signature with truth table $V_{i_1i_2\ldots i_d}^\dagger V_{i_1i_2\ldots i_d}\mathbf{f}=\mathbf{f}.$
        Because both $\overline{{f}_{i_1i_2\ldots i_d}}$ and $g_{i_1i_2\ldots i_d}$ are affine, $f$ is affine, contradiction.
    \end{claimproof}
    \begin{claim}\label{claim: reduce Pauli weight}
        There exists $i_1,i_2,\ldots, i_d\in [4]$ such that $g_{i_1i_2\ldots i_d}$ doesn't satisfy {\sc 2nd-Orth.}
    \end{claim}
    \begin{claimproof}{\cref{claim: reduce Pauli weight}}
        
        Suppose for a contradiction that, for all $i_1,i_2,\ldots, i_d\in [4]$, $g_{i_1i_2\ldots i_d}$ satisfies {\sc 2nd-Orth}.
        Consider the Pauli representation of $\mathbf{ff}^\dagger:$
        \begin{equation*}
            \mathbf{ff^\dagger}=\frac{1}{2^d}\sum_{i_1,i_2,\ldots,i_d\in[4]}\mathrm{Tr}(\mathbf{ff}^\dagger Q_{i_1i_2\ldots i_d})Q_{i_1i_2\ldots i_d}.
        \end{equation*}
        The Pauli coefficient is
        \begin{equation*}
            \mathrm{Tr}(\mathbf{ff}^\dagger Q_{i_1i_2\ldots i_d})=\mathbf{f}^\dagger Q_{i_1i_2\ldots i_d} \mathbf{f}=\mathbf{g}_{i_1i_2\ldots i_d}^\dagger V_{i_1i_2\ldots i_d}Q_{i_1i_2\ldots i_d}V_{i_1i_2\ldots i_d}^\dagger \mathbf{g}
            =\mathbf{g}_{i_1i_2\ldots i_d}^\dagger Q'_{i_1i_2\ldots i_d} \mathbf{g},
        \end{equation*}
        where $Q'_{i_1i_2\ldots i_d}$ is a Pauli tensor of weight 1 or 2 if $i_1i_2\cdots i_d\neq 0$.
        So $\mathrm{Tr}(\mathbf{ff}^\dagger Q_{i_1i_2\ldots i_d})=0$ if $i_1i_2\cdots i_d\neq 0$ by {\sc 2nd-Orth},
        and $\mathrm{Tr}(\mathbf{ff}^\dagger Q_{00\ldots 0})=|\mathbf{f}|^2>0.$
        Thus $\mathbf{ff}^\dagger={|\mathbf{f}|^2}/{2^d}\cdot I_{2^d}$.
        LHS has rank 1 while RHS has full rank. 
        Contradiction!
    \end{claimproof}
    By \cref{claim: g is irreducible}, \cref{claim: reduce Pauli weight}, there exists some $g_{i_1i_2\ldots i_d}\in \mathcal{G}$ of arity at least 4 that is irreducible and doesn't satify {\sc 2nd-Orth}.
    So by \cref{lm: not 2nd-orth is hard}, $\Holant(\Gamma, f_6, \mathcal{F})\equiv_T \Holant(\mathcal{G})$ is \#P-hard.
\end{proof}

\begin{lemma}\label{lm: non-parity f6 is hard}
    If $\mathcal{F}$ contains a signature with no parity, then $\Holant(f_6,\mathcal{B},\mathcal{F})$ is \#P-hard.
\end{lemma}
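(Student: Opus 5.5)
The plan is to reduce to \cref{lm: all 12 binary available is hard}, exactly as outlined in the discussion preceding \cref{lm: conjugate action by Gamma}. Let $f\in\mathcal F$ be a signature with no parity. Since $\mathcal F$ contains only even-arity signatures, $f$ has arity $2d$. If $d=1$ we set $b=f$. Otherwise we apply \cref{lm: non-pairty reduction} repeatedly, $d-1$ times, which takes a constant number of steps independent of the instance. This realizes, from $f_6,\mathcal B,\mathcal F$, a binary signature $b$ with no parity. Note that $b\neq 0$, because the zero signature trivially has parity.

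Next we place $b$ in $\Gamma$. By \cref{lem:f6-nongamma-binary}, applied in the setting after the real orthogonal transformation (where we write $\mathcal F$ for $O\mathcal F$), either $\Holant(f_6,\mathcal B,\mathcal F)\le_T\Holant$ of the original set is already \#P-hard, or $[b]\in\Gamma$. Every class in $K_4=\{[I],[X],[Z],[J]\}$ has a representative with parity. Since $b$ has none, $[b]\in\Gamma\setminus K_4$.

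We then generate the rest of $\Gamma$. The Bell signatures $\mathcal B$ are available and represent $K_4$. By the argument of \cref{lem:f6-group-alternatives}, $[b]$ generates $\Gamma/K_4\cong\mathbb Z_3$, so products of $b$ with Bell signatures, formed by joining binary gadgets, realize a representative of every class in $\Gamma$. Multiplying by scalars does not affect complexity, so this gives
\[
\Holant(\Gamma,f_6,\mathcal F)\le_T\Holant(f_6,\mathcal B,\mathcal F).
\]

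Finally, $\mathcal F$ still fails condition~\eqref{cond: tractable classes}, by \cref{lem: invariance under real orthogonal transformation}. So \cref{lm: all 12 binary available is hard} shows that $\Holant(\Gamma,f_6,\mathcal F)$ is \#P-hard, and the lemma follows.

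The main obstacle is the first step: checking that \cref{lm: non-pairty reduction}, which is imported from Lemma~7.1 of \cite{realholant}, carries over to complex signatures in our gadget setting, and that the binary signature it produces is nonzero. I would verify this by reproducing the real-case argument. That argument merges by $=_2$ or by the available Bell signatures and uses only support considerations, so it should not depend on the signatures being real.
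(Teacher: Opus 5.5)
Your proposal is correct and follows the paper's proof. Both reduce via \cref{lm: non-pairty reduction} to a nonzero binary $b$ without parity, use \cref{lem:f6-nongamma-binary} to force $[b]\in\Gamma\setminus K_4$, generate all of $\Gamma$ from $b$ and the Bell signatures, and conclude with \cref{lm: all 12 binary available is hard}; the extra details you supply (iterating the reduction a constant number of times, $b\neq 0$, the $\Gamma/K_4\cong\mathbb{Z}_3$ generation step, and preservation of failing \eqref{cond: tractable classes} under the orthogonal transformation) are exactly what the paper leaves implicit.
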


\begin{proof}
    By \cref{lm: non-pairty reduction}, we can realize a binary signature $b$ with no parity.
    By \cref{lem:f6-nongamma-binary}, $b\in\Gamma,$ otherwise $\Holant(\mathcal{F})$ is \#P-hard.
    So $b\in \Gamma\setminus \{I,X,Z,J\}\cong A_4 \setminus K_4.$
    Since an arbitrary element in $A_4 \setminus K_4$ and $K_4$ generates $A_4$, we can realize all elements in $\Gamma.$
    By \cref{lm: all 12 binary available is hard}, $\Holant(f_6,\mathcal{B},\mathcal{F})\equiv_T\Holant(f_6,\Gamma,\mathcal{F})$ is \#P-hard.
\end{proof}

\subsection{Hardness When $f_6$ is Available}

\begin{lemma}\label{lm: f6 is hard}
    If $\mathcal{F}$ doesn't satisfy condition \eqref{cond: tractable classes}, then $\Holant(f_6,\mathcal{B},\mathcal{F})$ is \#P-hard.
\end{lemma}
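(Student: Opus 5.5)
The proof is a two-way case split on whether the signature set has parity, and both cases have already been settled in the two preceding subsections. Write $\mathcal F$ for the (already orthogonally transformed) conjugate closed set, which by \cref{lem: invariance under real orthogonal transformation} still fails condition~\eqref{cond: tractable classes}. We may assume every signature in $\mathcal F$ has even arity; otherwise a nonzero odd-arity signature exists and \cref{lm: odd holant with conjugation} already gives \#P-hardness of $\Holant(\mathcal F)$, hence of $\Holant(f_6,\mathcal B,\mathcal F)$.

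\textbf{Case 1: every $f\in\mathcal F$ has parity.} We invoke \cref{lm: parity f6 is hard} directly. Inside that lemma, \cref{lm: realification when parity} makes every signature projectively real, and the real dichotomy result of~\cite{realholant} finishes the argument.

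\textbf{Case 2: some $f\in\mathcal F$ has no parity.} We invoke \cref{lm: non-parity f6 is hard}. That lemma first uses \cref{lm: non-pairty reduction} repeatedly to obtain a binary signature with no parity. It then uses \cref{lem:f6-nongamma-binary} to place this binary in $\Gamma\setminus K_4$, so that together with $\mathcal B$ it generates all of $\Gamma$. Finally \cref{lm: all 12 binary available is hard} applies.

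The two cases are exhaustive, so the lemma follows. The only point needing care is bookkeeping. We must check that failure of~\eqref{cond: tractable classes} is the hypothesis both cited lemmas need. We must also check that adding $f_6$ and $\mathcal B$ does not disturb conjugate closure: $f_6$ is real, and the Bell signatures are real, so it does not. There is no real obstacle, since the substantive work sits in the preceding lemmas.
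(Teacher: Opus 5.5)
Your proposal is correct and matches the paper's proof. The paper also proves this lemma by combining \cref{lm: parity f6 is hard,lm: non-parity f6 is hard} through the exhaustive parity/non-parity split. Your extra checks, that odd arity is already handled and that adjoining the real signatures $f_6$ and $\mathcal B$ keeps the set conjugate closed, are harmless.
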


\begin{proof}
    Combine \cref{lm: parity f6 is hard,lm: non-parity f6 is hard}.
\end{proof}

The following lemma completes the induction proof for $2n=6$.
\begin{lemma}\label{lm: induction 2n=6}
    If $\mathcal{F}$ doesn't satisfy condition \eqref{cond: tractable classes} and it contains a signature ${f}$ of arity $6$ such that $f\notin \mathcal{U}^\otimes$, then $\Holant(\mathcal{F})$ is \#P-hard.
\end{lemma}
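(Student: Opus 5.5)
The plan is to chain the results of Sections~\ref{sec: second order orthoganality}--\ref{sec: f6 is hard} inside the induction framework of \cref{subsec: Induction framework}. Throughout we suppose $\Holant(\mathcal F)$ is not already \#P-hard and aim for a contradiction. The induction hypothesis for arities $2$ and $4$ is supplied by \cref{lm: base case 2n=2} and \cref{lm: base case 2n=4}. Odd-arity realizable signatures are excluded by \cref{lm: odd holant with conjugation}.

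\emph{Reducible case.} Write $f=f_1\otimes f_2$. The factors are realizable by \cref{lm: decomposition}. If either factor has odd arity, we are done by \cref{lm: odd holant with conjugation}. Otherwise some even-arity factor of arity $2$ or $4$ lies outside $\mathcal U^{\otimes}$, since $f\notin\mathcal U^{\otimes}$; hardness then follows from the base-case lemmas.

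\emph{Irreducible case.} Here \cref{lm: not 2nd-orth is hard} lets us assume $f$ satisfies {\sc 2nd-Orth}. Every contraction $\partial^b_{ij}f$ has arity $4$. If any such contraction falls outside $\mathcal U^{\otimes}$, \cref{lm: base case 2n=4} gives hardness. So we may assume the closure property \cref{prop: binary closure}, i.e.\ assumption \eqref{eq: arity 6 assumption}. By \cref{thm: binary set is finite group}, $B(f)$ is finite, or else we have hardness. Then \cref{thm: third order orthogonality} shows $f$ satisfies {\sc 3rd-Orth}, so its normalized state is $\operatorname{AME}(6,2)$.

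Next apply \cref{thm:f6-equality-realization}. Either $\Holant(\mathcal F)$ is hard, or there is a real orthogonal $O$ with $\Holant(O\mathcal F\cup\{f_6\}\cup\mathcal B)\le_T\Holant(\mathcal F)$. By \cref{lem: invariance under real orthogonal transformation}, $O\mathcal F$ is still conjugate closed and still fails \eqref{cond: tractable classes}. So \cref{lm: f6 is hard}, applied to $O\mathcal F$, gives \#P-hardness of $\Holant(f_6,\mathcal B,O\mathcal F)$, and hence of $\Holant(\mathcal F)$.

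The main obstacle is bookkeeping rather than new mathematics. \cref{lm: f6 is hard} and its subsidiary lemmas were stated with the standing assumption that $\mathcal F$ has only even-arity signatures. We must make sure this holds, or else dispose of odd arities first via \cref{lm: odd holant with conjugation}. We must also make sure the induction hypothesis invoked inside those lemmas never exceeds arity $4$: the realification induction and the minimal-counterexample argument may touch signatures of larger arity, and we rely only on \cref{lm: not 2nd-orth is hard} and the base cases there. Finally, we must check that the reductions compose with a constant number of gadget steps, so the overall reduction stays polynomial.
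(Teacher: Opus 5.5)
Your proposal is correct and follows the paper's proof. The paper reduces to assumption \eqref{eq: arity 6 assumption} via the induction framework, obtains an $\operatorname{AME}(6,2)$ state from \cref{thm: third order orthogonality}, realizes $f_6$ and $\mathcal B$ via \cref{thm:f6-equality-realization}, and concludes with \cref{lm: f6 is hard}. You additionally spell out two steps the paper compresses: the reducible/irreducible case split leading to \eqref{eq: arity 6 assumption}, and the appeal to \cref{lem: invariance under real orthogonal transformation} to justify working with $O\mathcal F$.
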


\begin{proof}
    We may assume \eqref{eq: arity 6 assumption} in \cref{sec: Third Order Orth}, otherwise $\Holant(\mathcal{F})$ is \#P-hard.
    By \cref{thm: third order orthogonality}, $\ket{f}$ is an AME(6,2) state after normalization.
    By \cref{thm:f6-equality-realization}, $\Holant(f_6,\mathcal{B},\mathcal{F})\equiv_T \Holant(\mathcal{F})$.
    Notice here we ignore the real orthogonal holographic transformation in \cref{thm:f6-equality-realization}, since condition \eqref{cond: tractable classes} is invariant under holographic transformation.
    Then by \cref{lm: f6 is hard}, $\Holant(\mathcal{F})$ is \#P-hard.
\end{proof}

\section{Final Obstacle: \texorpdfstring{$f_8$}{f8}}
\label{sec: 2n=8}
\label{sec: f8}

In this section, we will complete the induction proof for $2n=8$  (\cref{thm: base case 2n=8}).
We assume that $\mathcal F$ does not satisfy condition \rm{(\ref{cond: tractable classes})}, and $\mathcal{F}$ contains a signature $f\notin \mathcal{U}^\otimes.$

Shao and Cai discovered the signature $f_8$ in \cite{realholant}.
It is $f_8=\chi_C$ where
\begin{equation}\label{eq: f8}
 \begin{aligned}
  C=&\mathscr{S}(f_8)=\{(x_1, x_2, \ldots, x_8)\in \mathbb{Z}_2^{8} \mid~
x_1+x_2+x_3+x_4= 0, ~ x_5+x_6+x_7+x_8= 0,\\
  &\hspace{34ex}x_1+x_2+x_5+x_6= 0, ~ x_1+x_3+x_5+x_7= 0\}.\\
=&\{00000000, 00001111, 00110011, 00111100, 01010101, 01011010, 01100110, 01101001,\\
& ~10010110,  10011001, 10100101, 10101010, 11000011, 11001100, 11110000, 11111111\}.
\end{aligned}
\end{equation}

$f_8$ have some extraordinary properties.
For example, it satisfies {\sc 2nd-Orth} and $f_8\in \int \mathcal{U}^{\otimes}$.
Thus, simple merging and factorization cannot realize a signature outside $\mathcal{U}^{\otimes}$.
In \cref{subsec: realize f8 Reed Muller}, we will realize $f_8$.
In \cref{subsec: f8 available hardness}, we prove \#P-hardness when $f_8$ is available.

\subsection{Realizing \texorpdfstring{$f_8$}{f8} Through the Lens of Reed-Muller Code}
\label{subsec: realize f8 Reed Muller}

\begin{definition}[Strong equality property]\label{def: Strong equality property}
    Let $\mathcal{E}^\otimes=\{\lambda\cdot (=_2)^{\otimes n}\mid \lambda\in \mathbb{C}^\times,n\ge1\}$ be the set of tensor products of binary $=_2$ up to a non-zero scalar.
    We say a signature $h$ satisfies the \emph{strong equality property}, if $h\in \int \mathcal{E}^{\otimes}$, i.e., for any pair of indices $\{i,j\}$, $\partial_{ij}h\in \mathcal{E}^\otimes.$
\end{definition}

It can be checked that $f_8$ satisfies the strong equality property.
The following lemma is essentially the same as
\cite[Lemma~8.1]{realholant}.
The only difference is that when doing binary extension, we use the conjugate transpose of the binary instead of the binary itself.

\begin{lemma}
\label{lm: unitary wire normalization}
Let $f\notin \mathcal{U}^\otimes$ be a signature of arity $2n\ge 8$ in $\mathcal{F}$.
Then one of the following holds:
\begin{itemize}
    \item $\Holant(\mathcal{F})$ is \#P-hard;
    \item There is a signature $g\notin \mathcal{U}^\otimes$ of arity $2k\le 2n-2$ that is realizable from $f$;
    \item There is an irreducible signature $f^\ast$ of arity $2n$ that is realizable from $f$ with the strong equality property.
\end{itemize}
\end{lemma}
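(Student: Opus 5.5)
We follow the strategy of \cite[Lemma~8.1]{realholant}, adapted to the conjugate closed setting. First apply the reductions of \cref{subsec: Induction framework}. If $f$ is reducible, either a factor has odd arity (hardness by \cref{lm: decomposition,lm: odd holant with conjugation}) or some even-arity factor lies outside $\mathcal U^\otimes$; that factor is a $g$ of lower arity and gives the second alternative. So assume $f$ irreducible. By \cref{lm: not 2nd-orth is hard} we may assume $f$ satisfies {\sc 2nd-Orth}. If some $\partial_{ij}f\notin\mathcal U^\otimes$, it is the desired $g$ of arity $2n-2$. So assume $f\in\int\mathcal U^\otimes$, and, iterating as in \cref{def: generating process}, that $B(f)$ closure (\cref{prop: binary closure}) holds. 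By \cref{lm: 2nd-orth contraction nonzero}, every $\partial_{ij}f$ is nonzero, so $\partial_{ij}f=\lambda_{ij}\bigotimes_{e\in M_{ij}}b_e$ with each $b_e\in\mathcal U$ nonzero and $M_{ij}$ a perfect matching of $[2n]\setminus\{i,j\}$.

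The main step is to normalize the binary factors to $=_2$ by unitary binary extensions on single wires. Fix a pair $\{1,2\}$ and a factor $b(x_a,x_c)$ of $\partial_{12}f$. The factor $b$ is realizable by \cref{lm: decomposition}, and so is $\overline{b}$ by conjugate closure. Attach the conjugate transpose $b^\dagger$ (that is, $\overline b$ with its variables swapped) to the wire $x_c$ of $f$. This is a unitary modification, since $M(b)M(b)^\dagger=\mu I_2$, so it preserves irreducibility, {\sc 2nd-Orth} (\cref{prop: invariant kth Orth}) and arity. The corresponding factor of $\partial_{12}$ of the new signature becomes $M(b)M(b)^\dagger\sim I_2$, i.e.\ $=_2$. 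Here the conjugate transpose is needed in place of the real-case transpose, and \cref{lem-binary-sim} keeps track of the consistency of such connections. Following \cite{realholant}, we choose a spanning structure of wires, for instance all factors arising from contractions through a fixed variable, and perform one such extension per wire, producing $f^\ast=(\bigotimes_k W_k)f$ with $W_k$ unitary. We then must show that for every other pair $\{i,j\}$ the factors of $\partial_{ij}f^\ast$ are also $=_2$.

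For this, compare double contractions using commutativity $\partial_{ij}\partial_{k\ell}=\partial_{k\ell}\partial_{ij}$ for disjoint pairs. Both sides are tensor products of binaries in $B(f^\ast)$. Already normalized factors (equalities) appearing on one side force, via the unique prime factorization (\cref{unique}) and \cref{lem-binary-sim}, the corresponding factors on the other side to be $\sim(=_2)$. If some factor of some $\partial_{ij}f^\ast$ cannot be forced to $=_2$, then the inconsistent products give a binary outside the current group, or a contraction $\partial^{b}_{ij}f^\ast\notin\mathcal U^\otimes$ of arity $2n-2$. These yield the second alternative, or hardness through \cref{thm: binary set is finite group} or the induction hypothesis. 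Otherwise $f^\ast\in\int\mathcal E^\otimes$, which is the third alternative.

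The main obstacle is this consistency step. It requires a careful combinatorial propagation argument over all pairs, as in \cite[Lemma~8.1]{realholant}, and the hypothesis $2n\ge 8$ is needed so that enough disjoint pairs exist. I would first reproduce the real proof verbatim with transposes replaced by conjugate transposes. I would then check at each point that only unitarity and conjugate closure are used, and never realness.
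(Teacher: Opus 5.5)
\textbf{Verdict: there is a genuine gap.} Your preliminary reductions match the paper, and so does the first normalization move: you realize $\overline b$ by conjugate closure and attach $b^{-1}\propto\overline{M(b)}^{\tt T}$ to one wire of each factor. Your remark that conjugate transposes replace transposes is exactly how the paper adapts \cite[Lemma~8.1]{realholant}. The gap is the consistency step, which is the actual content of the lemma. You never fix which wires receive which binaries. The choice ``all factors arising from contractions through a fixed variable'' is not well defined, because different $\partial_{1k}f$ pair the wires differently and each wire can carry only one extension.

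More seriously, your fallback mechanism is false. Normalize the $n-1$ disjoint factors of the single contraction $\partial_{12}f$ to get $f'$ with $\partial_{12}f'\in\mathcal E^\otimes$. For $\{i,j\}\cap\{1,2\}=\emptyset$, commutativity with $\partial_{12}$ forces only the factors of $\partial_{ij}f'$ that avoid $x_1,x_2$ to be equalities. The factors through $x_1,x_2$ are unconstrained, and they produce no obstruction. Take $f=(X\otimes X\otimes I_2^{\otimes 6})f_8$:
\begin{itemize}
\item it is irreducible, satisfies {\sc 2nd-Orth}, and lies in $\int\mathcal U^\otimes$, by the strong equality and strong Bell properties of $f_8$;
\item $\partial_{12}f=\partial_{12}f_8\in\mathcal E^\otimes$, so your first normalization does nothing;
\item for twelve of the fifteen pairs $\{i,j\}\subseteq\{3,\dots,8\}$, $\partial_{ij}f$ still contains disequality factors $X(x_1,x_u)$ and $X(x_2,x_v)$.
\end{itemize}
No binary outside the group, no contraction outside $\mathcal U^\otimes$, and no hardness arises here. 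So the escape you propose, via \cref{thm: binary set is finite group} or an induction hypothesis, cannot close the argument.

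\textbf{What is missing.} Two ingredients are needed.

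First, a reduction Claim: if $h\in\int\mathcal U^\otimes$ has arity $\ge 8$ and $\partial_{ij}h\in\mathcal E^\otimes$ for every $\{i,j\}$ disjoint from $\{1,2\}$, then $h$ already has the strong equality property.

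Second, a case-dependent normalization at $x_1,x_2$.
\begin{itemize}
\item \emph{Case 1:} for every such $\{i,j\}$, $x_1$ and $x_2$ share a factor $b_{ij}(x_1,x_2)$. Apply \cref{unique} to $\partial_{(k\ell)(ij)}f'=\partial_{(ij)(k\ell)}f'$, passing through a third disjoint pair when needed (this uses $2n\ge 8$). This shows all $b_{ij}$ are associates of one $b$. Attach $b^{-1}$ to $x_1$.
\item \emph{Case 2:} some $\partial_{ij}f'=b'_1(x_1,x_u)\otimes b'_2(x_2,x_v)\otimes h_{ij}$. By \cref{lem-binary-sim}, $b'_1\sim\overline{b'_2}$. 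Attach $\overline{b'_1}$ and $\overline{b'_2}$ to $x_1$ and $x_2$. Then verify, in order:
  \begin{itemize}
  \item $\partial_{12}f^\ast\sim\partial_{12}f'$, by associativity, since the path through the two attached binaries contracts to a multiple of $=_2$;
  \item $\partial_{st}f^\ast\in\mathcal E^\otimes$ for $\{s,t\}$ disjoint from $\{1,2,i,j\}$ with $\{s,t\}\ne\{u,v\}$;
  \item then $\partial_{uv}f^\ast$, $\partial_{ik}f^\ast$ and $\partial_{jk}f^\ast$.
  \end{itemize}
  Each of these uses \cref{unique}, commutativity with an untouched disjoint pair, and $2n\ge8$.
\end{itemize}
The only escape anywhere in this argument is that an intermediate signature leaves $\int\mathcal U^\otimes$, which gives the second alternative. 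Everything else is forced combinatorially, not by group-theoretic or hardness arguments.
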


\begin{proof}
Since ${f}\notin {\mathcal{U}}^{\otimes}$, $ f\not\equiv 0$.
Again, we may assume that ${f}$ is irreducible.
Otherwise, by factorization, we can realize a nonzero signature of odd arity and we get \#P-hardness by \cref{lm: odd holant with conjugation}, or we can realize a signature of lower even arity that is not in $\mathcal{U}^{\otimes}$ and we are done.
Under the assumption that $f$ is irreducible, we may further assume that $f$ satisfies {\sc 2nd-Orth} by \cref{lm: not 2nd-orth is hard}.
Consider signatures ${\partial}_{ij}{f}$ for all pairs of indices $\{i, j\}$. 
If there exists a pair $\{i, j\}$ such that ${\partial}_{ij}{f}\notin {\mathcal{U}}^{\otimes}$, then let ${g}={\partial}_{ij}{f}$, and we are done.
Thus, we may also assume that ${f}\in {\int}{\mathcal{U}}^{\otimes}$.  

Since $f$ satisfies {\sc 2nd-Orth}, $\partial_{ij}f$ is non-zero by \cref{lm: 2nd-orth contraction nonzero}.
Since $ {\partial}_{12} {f} \in  {\mathcal{U}}^{\otimes}$, without loss of generality, we may assume that in the UPF of $ {\partial}_{12} {f}$, variables $x_3$ and $x_4$ appear in one  binary signature $b_1(x_3, x_4)$, $x_5$ and $x_6$ appear in one binary signature $b_2(x_5, x_6)$ and so on. 
Thus, we have 
\begin{equation*}
 {\partial}_{12} {f}=   {b_1}(x_3, x_4)\otimes  {b_2}(x_5, x_6)\otimes {b_3}(x_7, x_8)\otimes \ldots \otimes  {b_{n-1}}(x_{2n-1}, x_{2n}).
\end{equation*}

All these binary factors $b_1,b_2,\ldots,b_{n-2}$ are realizable by
\cref{lm: decomposition}.  
Put $B_i=M(b_i)$.  
Since
$b_i\in\mathcal U$ and $b_i\neq 0$, $B_i$ is projectively unitary.  Reversing the two arguments gives $B_i^{\mathtt T}$, and conjugate closure gives
$\overline B_i$.  
Hence the binary signature $b_i^{-1}$ with matrix $B_i^{-1}\propto \overline{B_i}^{\tt T}$ is also available.
We consider the following gadget construction on $f$.
For $1\leqslant i\leqslant n-1$, by a slight abuse of names of variables, we connect the variable $x_{2i+1}$ of $f$   
with the variable $x_{2i+2}$ of $  {b^{-1}_{i}}(x_{2i+1}, x_{2i+2})$. 
We get a signature $f'$ of arity $2n$. 
We denote this gadget construction by $G_1$ and we write $f'$ as $G_1 \circ f$.
$G_1$ is constructed by extending variables of $f$ using binary signatures realized from ${\partial}_{12}{f}$. 
It does not change the irreducibility of $f$.  
Thus, $f'$ is irreducible since $f$ is irreducible.
Similarly, we may assume that $f'\in {\int}  {\mathcal{U}}^{\otimes}$. Otherwise, we are done.

Consider the signature ${\partial}_{12}{f'}$. 
Since the above gadget construction $G_1$  does not touch variables $x_1$ and $x_2$ of $f$, $G_1$
commutes with the merging gadget ${\partial}_{12}$. 
Thus, ${\partial}_{12}{f'}$ can be realized by performing the gadget construction $G_1$ on ${\partial}_{12}{f}$, which connects each binary signature ${b_{i}}(x_{2i+1}, x_{2i+2})$ in the UPF of ${\partial}_{12}{f}$ with  ${b^{-1}_{i}}(x_{2i+1}, x_{2i+2})$. 
Thus, each binary signature ${b_{i}}$ in ${\partial}_{12}{f}$ is canceled up to a nonzero scalar after this gadget construction $G_1$. 
After normalization and renaming variables, we have 
\begin{equation}\label{eqn:lm5-partial12-f}
{\partial}_{12}{f'}= (=_2)(x_3, x_4)\otimes (=_2)(x_5, x_6)\otimes \ldots \otimes (=_2)(x_{2n-1}, x_{2n}).
\end{equation}
Thus, ${\partial}_{12}{f'}\in {\mathcal{E}}^{\otimes}$.
Moreover, for all pairs of indices $\{i, j\}$ disjoint with $\{1, 2\}$,
we have 
\begin{equation}\label{equ-ij-12-in-D}
    {\partial}_{(ij)(12)}{f'}\in {\mathcal{E}}^{\otimes}, \text{ ~and hence~ } {\partial}_{(ij)(12)}{f'}\not\equiv 0.
    \end{equation}
A fortiori, for all pairs of indices $\{i, j\}$ disjoint with $\{1, 2\}$, ${\partial}_{ij}{f'}\not\equiv 0$.

Next, we show that we can realize an irreducible signature $f^\ast$ of arity $2n$ from ${f'}$ such that ${f^\ast}$ has the strong equality property.
We first prove the following claim.

\begin{claim}\label{claim: strong equality property}
    Let ${h}\in {\int}{{\mathcal{U}}}^{\otimes}$ be a signature of arity $2n\geqslant 8$. 
    If $\partial_{ij}{h}$ is a tensor product of $=_2$ up to a non-zero scalar for all $\{i, j\}$ disjoint with $\{1, 2\}$, then ${h}$ has strong equality property.
\end{claim}

\begin{claimproof}{\cref{claim: strong equality property}}
    Clearly, we only need to show that ${\partial}_{1k}{h}\in {\mathcal{E}}^{\otimes}$ for all $2 \leqslant k \leqslant 2n$. Then, by symmetry we also have ${\partial}_{2k}{h}\in {\mathcal{E}}^{\otimes}$ for $k=1$ and all $3 \leqslant k \leqslant 2n$. This will prove ${h}\in {\int}{\mathcal{E}}^{\otimes}.$
    Consider ${\partial}_{1k}{h}$ for an arbitrary $2\leqslant k \leqslant 2n$.
   Since for all $\{i, j\}$ disjoint with $\{1, 2\}$, we have
    ${\partial}_{ij}{h}\in {\mathcal{E}}^{\otimes}$, a fortiori
    for all  $\{i, j\}$ disjoint with $\{1, 2\} \cup \{k\}$,
    \begin{equation}\label{eqn:partial-1kj-in-sec8}
    \partial_{(1k)(ij)}{h}\in {\mathcal{E}}^{\otimes}.
   \end{equation}
    Since ${h}$ has arity $2n\geqslant 8$, we can pick a pair of indices $\{i, j\}$ disjoint with $\{1, 2\} \cup \{k\}$.
    Since $\partial_{(1k)(ij)}{h}\in {\mathcal{E}}^{\otimes}$, which is nonzero,  a  fortiori  we have
    ${\partial}_{1k}{h} \not \equiv 0$.
    So we may consider  the UPF of ${\partial}_{1k}{h}$,
    which is known to be in ${\mathcal{U}}^{\otimes}$.
    For a contradiction,
    suppose that  there is a binary signature  ${b_1}$ (as a factor of ${\partial}_{1k}{h}$) such that ${b_1}$ is not an associate of $=_2$. 
    Among the two variables in the scope of ${b_1}$,  at least one is not $x_2$. 
    We pick such a variable $x_s$ where $x_s \neq x_2$. 
    Then, we consider another binary signature ${b_2}$ in the UPF of ${\partial}_{1k}{h}$. 
    \begin{itemize}
        \item 
    If ${b_2} =\lambda \cdot (=_2)$, for some  nonzero scalar
    $\lambda$, then we pick a variable $x_t$ in the scope of ${b_2}$ that is not $x_2$. 
    Consider ${\partial}_{(st)(1k)}{h}$. 
    When merging variables $x_s$ and $x_t$ of ${\partial}_{1k}{h}$, 
    we connect the variable $x_s$ of  ${b_1}$ with the variable $x_t$ of $\lambda  \cdot (=_2)$, and the resulting binary signature is just $\lambda \cdot{b_1}$, which is not an associate of $=_2$. 
    Thus, we have ${\partial}_{(st)(1k)}{h}\notin {\mathcal{E}}^{\otimes}$.
    
    \item Otherwise,  ${b_2}$ is not an associate of $=_2$. Since ${\partial}_{1k}{h}$ has arity $2n-2\geqslant 6$, we can find another binary signature ${b_3}$ in the UPF of ${\partial}_{1k}{h}$. 
    We pick a variable $x_t$ in the scope of ${b_3}$ that is not $x_2$. 
    Consider ${\partial}_{(st)(1k)}{h}$. 
    Now, when merging variables $x_s$ and $x_t$ of ${\partial}_{1k}{h}$, 
    the binary signature ${b_2}$ is untouched. 
    Thus, we have ${b_2}\mid {\partial}_{(st)(1k)}{h}$, which implies that ${\partial}_{(st)(1k)}{h}\notin {\mathcal{E}}^{\otimes}$.
   \end{itemize}
   
   Note that in both cases, $\{s,t\} \cap (\{1,2\} \cup \{k\}) = \emptyset$. Therefore
   the two cases above both contradict
   (\ref{eqn:partial-1kj-in-sec8}) by picking $\{i, j\}=\{s, t\}$.
      Thus, ${\partial}_{1k}{h}\in {\mathcal{E}}^{\otimes}$ for all $2 \leqslant k \leqslant 2n$. Then similarly, we can show that ${\partial}_{2k}{h}\in {\mathcal{E}}^{\otimes}$ for all $3 \leqslant k \leqslant 2n$. 
\end{claimproof}

Remember that ${\partial}_{ij}{f'}\not\equiv 0$ for all $\{i, j\}$ disjoint with $\{1, 2\}$.
We consider the UPF of ${\partial}_{ij}{f'}$. Since ${f'}\in {\int}{\mathcal{U}}^{\otimes}$, there are two cases depending on whether variables $x_1$ and $x_2$ appear in one  binary signature or two distinct binary signatures.

\vspace{1ex}
\noindent{\bf Case 1.}
    For every $\{i, j\}$ disjoint with $\{1, 2\}$, in the UPF of ${\partial}_{ij}{f'}$, $x_1$ and $x_2$ appear in one nonzero binary signature
    ${b_{ij}}(x_1, x_2) \in {\mathcal{U}}$.  
    In other words, for every $\{i, j\}$ disjoint with $\{1, 2\}$, $${\partial}_{ij}{f'}={b_{ij}}(x_1, x_2)\otimes {g_{ij}}, ~~\text{ for some } {g_{ij}}\not\equiv 0.$$
    (These factors ${b_{ij}}$ and $ {g_{ij}}$ are nonzero since ${\partial}_{ij}{f'}\not\equiv 0$.)
  Then, ${g_{ij}} \sim {\partial}_{(12)(ij)}{f'}$, and by
    (\ref{equ-ij-12-in-D}), we have ${g_{ij}}\in{\mathcal{E}}^{\otimes}$. 
    Also for $\{k, \ell\}$ disjoint with both $\{i, j\}$
    and $\{1, 2\}$, ${\partial}_{(k\ell)(ij)}{f'}
    \not \equiv 0$ since
    ${\partial}_{(12)(k \ell)(ij)} {f'} = {\partial}_{(ij)(k \ell)(12)} {f'} 
    \not \equiv 0$.
    
    We first show that for any two pairs $\{i, j\}\neq \{k, \ell\}$ that are both disjoint with $\{1, 2\}$,
    ${b_{ij}}(x_1, x_2)\sim {b_{k\ell}}(x_1, x_2)$. 
    If $\{i, j\}$ is disjoint with $\{k, \ell\}$, then  ${b_{ij}}(x_1, x_2)\mid {\partial}_{(k\ell)(ij)}{f'}$ and ${b_{k\ell}}(x_1, x_2)\mid {\partial}_{(ij)(k\ell)}{f'}$. 
    Since ${\partial}_{(k\ell)(ij)}{f'}={\partial}_{(ij)(k\ell)}{f'}\not\equiv 0$, by \cref{unique}, 
    we have ${b_{ij}}(x_1, x_2)\sim {b_{k\ell}}(x_1, x_2)$. 
    Otherwise, $\{i, j\}$ and $\{k, \ell\}$ are not disjoint. Since ${f'}$ has arity $\geqslant 8$, we can find another pair of indices $\{s, t\}$ such that it is disjoint with $\{1, 2\}\cup \{i, j\} \cup \{k, \ell\}$. 
    Then, by the above argument, we have ${b_{ij}}(x_1, x_2)\sim {b_{st}}(x_1, x_2),$ and  ${b_{st}}(x_1, x_2)\sim {b_{k\ell}}(x_1, x_2).$
    Thus, ${b_{ij}}(x_1, x_2)\sim {b_{k\ell}}(x_1, x_2).$
    We can use a binary signature ${b}(x_1, x_2)$  to denote these binary signature ${b_{ij}}(x_1, x_2)$ for all $\{i, j\}$ disjoint with $\{1, 2\}$. 
    Then, ${b}(x_1, x_2) \mid {\partial}_{ij}{f'}$ for all $\{i, j\}$ disjoint with $\{1, 2\}$. 
    Also,  ${b}(x_1, x_2)$ is realizable  from $ f'$ by merging and factorization.
    Then $b^{-1}(x_1,x_2)$ is realizable by taking conjugation and renaming the two variables in $b.$
    
    Then, we consider the following gadget construction $G_2$ on ${f'}$. 
    By a slight abuse of variable names, 
    we connect the variable $x_1$ of ${f'}$ with the variable $x_2$ of $b^{-1}(x_1, x_2)$ and we get a signature ${f^\ast}$. 
    Clearly, $G_2$ is constructed by extending variables of ${f'}$. It does not change the irreducibility of ${f'}$.
    Thus, ${f^\ast}$ is irreducible.
    Again, we may assume that ${f^\ast}\in {\int}{\mathcal{U}}^{\otimes}$.
    Consider ${\partial}_{ij}{f^\ast}$ for all $\{i, j\}$ disjoint with $\{1, 2\}$. 
    Since the above gadget construction $G_2$ only touches the variable $x_1$ of $f'$, it commutes with the merging operation ${\partial}_{ij}$.
    Thus, ${\partial}_{ij}{f^\ast}$ can be realized by performing the gadget construction $G_2$ on ${\partial}_{ij}{f'}$, i.e.,
    connecting the binary signature $ b(x_1, x_2)$ in the UPF of ${\partial}_{ij}{f'}$ with itself, which changes $b(x_1, x_2)$ to $=_2$ up to some nonzero scalar $\lambda_{ij}$.
    Thus,  for all $\{i, j\}$ disjoint with $\{1, 2\}$, after renaming variables, we have
    $${\partial}_{ij}{f^\ast}=\lambda_{ij}\cdot(=_2)(x_1, x_2) \otimes {g_{ij}}\in {\mathcal{E}}^{\otimes}.$$
        Thus, ${\partial}_{ij}{f^\ast}\in {\mathcal{E}}^{\otimes}$ for all $\{i, j\}$ disjoint with $\{1, 2\}$. By our \cref{claim: strong equality property}, ${f^\ast}\in {\int}{\mathcal{E}}^{\otimes}$.
        We are done with {Case 1.}

      \vspace{1ex}
    \noindent{\bf Case 2.}  There is a pair of indices $\{i, j\}$ disjoint with $\{1, 2\}$ such that $x_1$ and $x_2$ appear in two distinct nonzero binary signatures ${b_1'}(x_1, x_u)$ and ${b_2'}(x_2, x_v)$ in the UPF of ${\partial}_{ij}{f'}$. 
 In other words, there exits $\{i, j\}$ such that  \begin{equation}
    {\partial}_{ij}{f'}={b'_1}(x_1, x_u)\otimes {b'_2}(x_2, x_v) \otimes {h_{ij}}, \text{ for some } {h_{ij}}\not\equiv 0.
    \end{equation}
    Since ${h_{ij}} \mid {\partial}_{(12)(ij)}{f'}$ and ${\partial}_{(12)(ij)}{f'}\in 
    {\mathcal{E}}^{\otimes}$, we have ${h_{ij}}\in {\mathcal{E}}^{\otimes}$.
    Also, after merging variables $x_1$ and $x_2$ in ${\partial}_{ij}{f'}$, variables $x_u$ and $x_v$ form a binary $=_2$ up to a nonzero scalar, since we already know ${\partial}_{(12)(ij)}{f'}\in 
    {\mathcal{E}}^{\otimes}$. 
    \textcolor{black}{By \cref{lem-binary-sim}, we have ${b'_1} \sim \overline{b'_2}$.
    Also, connecting the variable $x_u$ of ${b'_1}$ and the variable  $x_v$ of ${b'_2}$ will give the binary signature $\lambda \cdot=_{2}(x_1, x_2)$ as well.}
    
    We consider the following gadget construction $G_3$ on ${f'}$.
    By a slight abuse of variable names, 
    we connect variables $x_1$ and $x_2$ of ${f'}$ with the variable $x_1$ of $\overline{b'_1}(x_1,x_u)$ and $x_2$ of $\overline{b'_2}(x_2,x_v)$ respectively. 
    We get a signature ${f^{\ast}}$.
    Again, ${f^{\ast}}$ is irreducible since the gadget construction $G_3$ does not change the irreducibility of ${f'}.$
    Also, we may assume that ${f^{\ast}}\in{\int}{\mathcal{U}}^{\otimes}$.
    Otherwise, we are done.
    Consider ${\partial}_{ij}{f^{\ast}}$.
    Similarly, by the commutitivity of the  gadget construction $G_3$ and the merging gadget ${\partial}_{ij}$, 
    ${\partial}_{ij}{f^{\ast}}$ can be realized by connecting variables $x_1$ and $x_2$ of ${\partial}_{ij}{f'}$ with the variable $x_1$ of $\overline{b'_1}$ and the variable $x_2$ of $\overline{b'_2}$ respectively.
    After renaming variables, we have
    \begin{equation}\label{form_ij}
        {\partial}_{ij}{f^{\ast}}=\lambda_{ij} \cdot (=_2)(x_1, x_u)\otimes (=_2) \left(x_2, x_v\right) \otimes  h_{ij}\in \mathcal{E}^{\otimes}.
    \end{equation}

We now show that  ${\partial}_{12}{f^{\ast}}\in {\mathcal{E}}^{\otimes}$. Note that it is realized in the following way; 
we first connect variables $x_1$ and $x_2$ of ${f'}$ with the variable $x_1$ of $\overline{b'_1}(x_1, x_u)$ and the variable $x_2$ of $\overline{b'_2}(x_2, x_v)$ respectively to get ${f^{\ast}}$, and then after renaming variables $x_u$ and $x_v$ to $x_1$ and $x_2$ respectively, we merge them using $=_2$ (see Figure~\ref{fig:f^ast-f'}(a)). 
By 
associativity of gadget constructions, we can change the order; 
we first connect the variable $x_u$ of ${b'_1}(x_1, x_u)$ with the variable $x_v$ of ${b'_2}(x_2, x_v)$ (using $=_2$), and then we use the resulting binary signature to connect variables $x_1$ and $x_2$ of ${f'}$ (edges are connected using $=_2$). 
Note that connecting $x_u$ of ${b'_1}(x_1, x_u)$ with $x_v$ of ${b'_2}(x_2, x_v)$ gives $\lambda\cdot (=_2)$ up to a nonzero scalar $\lambda$, and $\lambda\cdot (=_2)$ is unchanged by extending both of its two variables with $=_2$ (see Figure~\ref{fig:f^ast-f'}(b)).
Thus, ${\partial}_{12}{f^{\ast}}$ is actually realized by merging $x_1$ and $x_2$ of ${f'}$ up to a nonzero scalar. 
Thus, we have ${\partial}_{12}{f^{\ast}}\sim {\partial}_{12}{f'},$ and hence  ${\partial}_{12}{f^{\ast}}\in {\mathcal{E}}^{\otimes}$,
by the form (\ref{eqn:lm5-partial12-f}) of ${\partial}_{12}{f'}$.

\begin{figure}[!h]
    \centering
    \includegraphics[height=4.8cm]{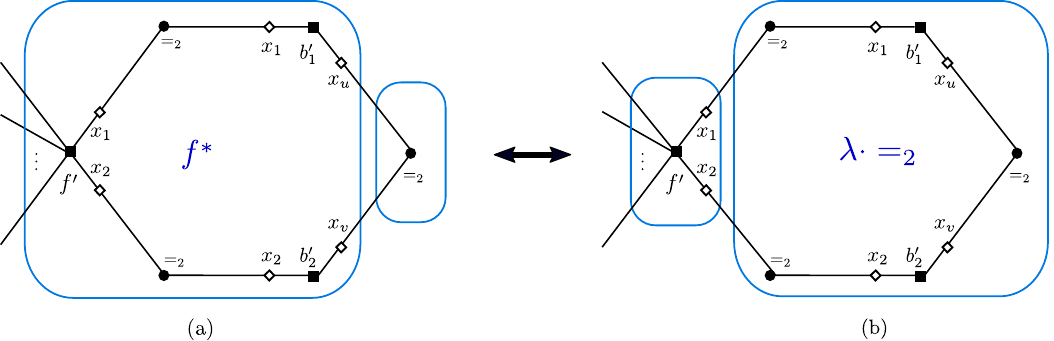}
    \caption{\cite{realholant} Gadget constructions of ${\partial}_{12}{{f^\ast}}$  and ${\partial}_{12}{{f'}}$}
    \label{fig:f^ast-f'}
\end{figure}

Then, we show that  ${\partial}_{st}{f^{\ast}}\in {\mathcal{E}}^{\otimes}$ for all pairs of indices $\{s, t\}$ disjoint with $\{1, 2, i, j\}$ and $\{s, t\} \neq \{u, v\}$ where $u$ and $v$ are named in (\ref{form_ij}). 
Clearly, ${\partial}_{st}{f^{\ast}}\not\equiv 0$ since ${\partial}_{(st)(12)}{f^{\ast}}\in{\mathcal{E}}^{\otimes}$.
 We first show that in the UPF of ${\partial}_{st}{f^{\ast}}$, $x_1$ and $x_2$ appear in two distinct nonzero binary signatures. 
 Otherwise, for a contradiction, suppose that there is a nonzero binary signature ${b^\ast}(x_1, x_2)$ such that ${b^\ast}(x_1, x_2)\mid {\partial}_{st}{f^{\ast}}$. 
 Then, ${b^\ast}(x_1, x_2)\mid{\partial}_{(ij)(st)}{f^{\ast}}={\partial}_{(st)(ij)}{f^{\ast}}\not\equiv 0$. 
 By the form (\ref{form_ij}) of ${\partial}_{ij}{f^{\ast}}$,  
 the only way that $x_1$ and $x_2$ can form a nonzero binary signature in ${\partial}_{(st)(ij)}{f^{\ast}}$ is that the merging gadget is actually merging $x_u$ and $x_v$. Thus, $\{s, t\}=\{u, v\}$. Contradiction. 
Therefore, for some $i'$ and $j'$, we have 
\begin{equation}\label{form_st}
{\partial}_{st}{f^{\ast}}={b^\ast_{st1}}(x_1, x_{i'})\otimes {b^\ast_{st2}}(x_2, x_{j'})\otimes{h_{st}},
\end{equation}
for some ${b^\ast_{st1}}(x_1, x_{i'}),  {b^\ast_{st2}}(x_2, x_{j'}), {h_{st}} \not \equiv 0$ since  ${\partial}_{st}{f^{\ast}}\not\equiv 0$.
  Since ${h_{st}} \mid {\partial}_{(12)(st)}{f^\ast}$ and ${\partial}_{(12)(st)}{f^\ast}\in 
    {\mathcal{E}}^{\otimes}$, 
we have ${h_{st}} \in {\mathcal{E}}^{\otimes}$.
Also, by \cref{lem-binary-sim}, ${b^\ast_{st1}}  \sim \overline{b^\ast_{st2}}$.
For a contradiction,
suppose that ${\partial}_{st}{f^{\ast}}\notin {\mathcal{E}}^{\otimes}$, then  ${b^\ast_{st1}}(x_1, x_{i'}) 
\not\sim (=_2)$,
and ${b^\ast_{st2}}(x_2, x_{j'})  
\not\sim (=_2)$.
Consider the signature ${\partial}_{(st)(ij)}{f^{\ast}}$. 
Since $\{s, t\}\neq \{u, v\}$,  by the form (\ref{form_ij}) of ${\partial}_{ij}{f^{\ast}}$, 
$x_1$ and $x_2$ appear in two binary signatures in the UPF of ${\partial}_{(st)(ij)}{f^{\ast}}.$
Remember that ${\partial}_{(st)(ij)}{f^{\ast}}={\partial}_{(ij)(st)}{f^{\ast}}$.
By the form (\ref{form_st}) of ${\partial}_{st}{f^{\ast}}$, 
if
$\{i', j'\}= \{i, j\}$, 
then, after merging $x_i$ and $x_j$ of ${\partial}_{st}{f^{\ast}}$, $x_1$ and $x_2$ will form a new binary  signature in ${\partial}_{(ij)(st)}{f^{\ast}}$. Contradiction.
Thus, $\{i', j'\}\neq \{i, j\}$.
Then, when merging $x_i$ and $x_j$ of ${\partial}_{st}{f^{\ast}}$,
among ${b^\ast_{st1}}(x_1, x_{i'})$ and ${b^\ast_{st2}}(x_2, x_{j'})$, at least one binary signature is untouched. 
Thus, ${\partial}_{(ij)(st)}{f^{\ast}}$ has a factor that is not an associate of $=_2$. 
Contradicting ${\partial}_{(ij)(st)}{f^{\ast}}\in {\mathcal{E}}^{\otimes}$,
which is a consequence of (\ref{form_ij}).
Thus, ${\partial}_{st}{f^{\ast}}\in {\mathcal{E}}^{\otimes}$.

Then, we show that ${\partial}_{uv}{f^{\ast}} \in {\mathcal{E}}^{\otimes}$. 
Recall the form (\ref{form_ij}) of ${\partial}_{ij}{f^{\ast}}$. Clearly, $\{u, v\}$ is disjoint with $\{1, 2, i, j\}$.
Also, ${\partial}_{uv}{f^{\ast}}\not\equiv 0$ since ${\partial}_{(ij)(uv)}{f^{\ast}} \in {\mathcal{E}}^{\otimes}.$
Consider the UPF of ${\partial}_{uv}{f^{\ast}}$.
\begin{itemize}
    \item 
If $x_1$ and $x_2$ appear in one nonzero binary signature ${b^\ast_{uv}}(x_1, x_2)$, then $${\partial}_{uv}{f^{\ast}}={b^\ast_{uv}}(x_1, x_2)\otimes {g_{uv}} ~~~~\text{ for some } {g_{uv}} \not\equiv 0.$$
Then, we have
 ${g_{uv}} \sim
 {\partial}_{(12)(uv)}{f^{\ast}} \in 
{\mathcal{E}}^{\otimes}$ since ${\partial}_{12}{f^{\ast}}\in {\mathcal{E}}^{\otimes}$.
Also, since ${b^\ast_{uv}}(x_1, x_2)\mid {\partial}_{(ij)(uv)}{f^{\ast}}\in {\mathcal{E}}^{\otimes}$, we have ${b^\ast_{uv}}(x_1, x_2)\in {\mathcal{E}}^{\otimes}$. Hence, ${\partial}_{uv}{f^{\ast}}\in {\mathcal{E}}^{\otimes}$.
\item If $x_1$ and $x_2$ appear in two distinct nonzero binary signatures ${b^\ast_{uv1}}(x_1, x_{i'})$ and ${b^\ast_{uv2}}(x_2, x_{j'})$, then
 $${\partial}_{uv}{f^{\ast}}={b^\ast_{uv1}}(x_1, x_{i'})\otimes {b^\ast_{uv2}}(x_2, x_{j'})\otimes{h_{uv}} ~~~~\text{ for some } {h_{uv}}\not\equiv 0.$$
 
Then, we have $ {h_{uv}} \in {\mathcal{E}}^{\otimes}$ since ${\partial}_{(12)(uv)}{f^{\ast}} \in 
{\mathcal{E}}^{\otimes}$. 
By the form (\ref{form_ij}) of ${\partial}_{ij}{f^{\ast}}$, 
after merging variables $x_u$ and $x_v$ of  ${\partial}_{ij}{f^{\ast}}$,
variables $x_1$ and $x_2$ form a binary $=_2$ in ${\partial}_{(uv)(ij)}{f^{\ast}}={\partial}_{(ij)(uv)}{f^{\ast}}$. 
On the other hand,
by the form of ${\partial}_{uv}{f^{\ast}}$, the only way that $x_1$ and $x_2$  form a binary after merging two variables in ${\partial}_{uv}{f^{\ast}}$ is to merge  $x_{i'}$ and $x_{j'}$. 
Thus, we have $\{i', j'\}=\{i, j\}$. Since ${f^{\ast}}$ has arity $2n \geqslant 8$, we can find another pair of indices $\{s, t\}$ disjoint with $\{1, 2, i, j, u, v\}$. 
When merging variables $x_s$ and $x_t$ in ${\partial}_{uv}{f^{\ast}}$, 
binary signatures ${b^\ast_{uv1}}(x_1, x_{i'})$ and ${b^\ast_{uv2}}(x_2, x_{j'})$ are untouched. 
Thus, we have ${b^\ast_{uv1}}(x_1, x_{i'})\otimes {b^\ast_{uv2}}(x_2, x_{j'})\mid {\partial}_{(st)(uv)}{f^{\ast}}$.
As showed above, we have ${\partial}_{st}{f^{\ast}}\in {\mathcal{E}}^{\otimes}$ and then  ${\partial}_{(st)(uv)}{f^{\ast}}\in {\mathcal{E}}^{\otimes}$.
Thus, ${b^\ast_{uv1}}(x_1, x_{i'})\otimes {b^\ast_{uv2}}(x_2, x_{j'})\in {\mathcal{E}}^{\otimes}$ and then ${\partial}_{uv}{f^{\ast}}\in {\mathcal{E}}^{\otimes}$.
\end{itemize}

So far, we have shown that ${\partial}_{12}{f^{\ast}}\in {\mathcal{E}}^{\otimes}$, ${\partial}_{ij}{f^{\ast}}\in {\mathcal{E}}^{\otimes}$ and ${\partial}_{st}{f^{\ast}}\in {\mathcal{E}}^{\otimes}$ for all $\{s, t\}$ disjoint with $\{1, 2, i, j\}$.
If we can further show that ${\partial}_{ik}{f^{\ast}}\in {\mathcal{E}}^{\otimes}$ for all $k\neq 1, 2, i, j$, and then symmetrically ${\partial}_{jk}{f^{\ast}}\in {\mathcal{E}}^{\otimes}$ for all $k\neq 1, 2, i, j$, then 
${\partial}_{st}{f^{\ast}}\in {\mathcal{E}}^{\otimes}$ for all $\{s, t\}$  disjoint with $\{1, 2\}$.
Thus, by our \cref{claim: strong equality property},  ${f^\ast}\in {\int}\mathcal{E}^{\otimes}$. This will finish the proof of Case 2.


Now we prove ${\partial}_{ik}{f^{\ast}}\in {\mathcal{E}}^{\otimes}$ for all $k\neq 1, 2, i, j$. 
Since ${\partial}_{(ik)(12)}{f^{\ast}}\in \mathcal{E}^{\otimes}$,
we have  ${\partial}_{ik}{f^{\ast}}\not\equiv 0$. 
So we can consider the UPF
of ${\partial}_{ik}{f^{\ast}}$. 
\begin{itemize}
    \item 
If $x_1$ and $x_2$ appear in one nonzero binary signature, then  $${\partial}_{ik}{f^{\ast}}={b^\ast_{ik}}(x_1, x_2)\otimes {g_{ik}} ~~~~\text{ for some } {g_{ik}}\in \mathcal{E}^{\otimes}.$$ 
Here, ${g_{ik}}\in \mathcal{E}^{\otimes}$
since ${\partial}_{(ik)(12)}{f^{\ast}}
\in \mathcal{E}^{\otimes}$.
Since ${f^\ast}$ has arity $2n \geqslant 8$,
we can pick a pair of indices $\{s, t\}$ disjoint with $\{1, 2, i, j, k\}$, and merge variables $x_s$ and $x_t$ of ${\partial}_{ik}{f^{\ast}}$.
Then, ${b^\ast_{ik}}(x_1, x_2)\mid {\partial}_{(st)(ik)}{f^{\ast}}.$
Since ${\partial}_{st}{f^{\ast}}\in \mathcal{E}^{\otimes}$,
${\partial}_{(st)(ik)}{f^{\ast}}={\partial}_{(ik)(st)}{f^{\ast}}\in {\mathcal{E}}^{\otimes}$. Thus, ${b^\ast_{ik}}(x_1, x_2) \in {\mathcal{E}}^{\otimes}$ and then ${\partial}_{ik}{f^{\ast}}\in {\mathcal{E}}^{\otimes}$.
\item    If $x_1$ and $x_2$ appear in two nonzero distinct binary signatures,
then $${\partial}_{ik}{f^{\ast}}={b^\ast_{ik1}}(x_1, x_p)\otimes {b^\ast_{ik2}}(x_2, x_q)\otimes{h_{ik}} ~~~~\text{ for some } {h_{ik}} \in \mathcal{E}^{\otimes}.$$
Again, here ${h_{ik}} \in \mathcal{E}^{\otimes}$
since ${\partial}_{(ik)(12)}{f^{\ast}}
\in \mathcal{E}^{\otimes}$.
By connecting variables $x_1$ and $x_2$ of ${\partial}_{ik}{f^{\ast}}$, $x_p$ and $x_q$ will form a binary $=_2$ up to a nonzero scalar
(this binary signature is $=_2$
because we know that
 ${\partial}_{(ik)(12)}{f^{\ast}}\in \mathcal{E}^{\otimes}$).
By \cref{lem-binary-sim}, 
as the type of binary signatures,
${{b^\ast_{ik1}}} \sim \overline{b^\ast_{ik2}}.$
Between $x_p$ and $x_q$, at least one of them is not $x_j$; suppose that it is $x_p$. 
We pick a variable $x_r$ in the scope of ${h_{ik}}$ that is also not $x_j$ (such a  variable $x_r$  exists as $2n \geqslant 8$). Then, by merging $x_p$ and $x_r$ of ${\partial}_{ik}{f^{\ast}}$, 
the binary signature ${b^\ast_{ik2}}(x_2, x_q)$ is untouched. 
Since $\{p, r\}$ is disjoint with $\{1, 2, i, j\}$, we have ${b^\ast_{ik2}}(x_2, x_q)\mid {\partial}_{(ik)(pr)}{f^{\ast}}\in {\mathcal{E}}^{\otimes}$. Thus, we have ${b^\ast_{ik2}}(x_2, x_q)
\in {\mathcal{E}}^{\otimes}$ and so does
${{b^\ast_{ik1}}}(x_1, x_p)$. 
Thus, ${\partial}_{ik}{f^{\ast}}\in {\mathcal{E}}^{\otimes}$.
    \end{itemize}

As remarked earlier,  by symmetry, we also have ${\partial}_{jk}{f^{\ast}}\in {\mathcal{E}}^{\otimes}$ for all $k\neq 1, 2, i, j$. Thus, we are done with Case 2.

Thus, an irreducible signature ${f^\ast}\in \int\mathcal{E}^{\otimes}$ of arity $2n$ is realized from $ f$.
\end{proof}

We next determine the exact form of an arity-$8$ signature with the strong equality property. 
By \cref{lm: unitary wire normalization}, we may assume that $\mathcal{F}$ contains an arity $8$ signature $h$ satisfying the strong equality property.
Otherwise, the induction is completed by the first two items in \cref{lm: unitary wire normalization}.

\begin{definition}\label{def: f8 matching}
    Let $h\in \mathcal{F}$ be an arity 8 signature satisfying the strong equality property.
    By \cref{def: Strong equality property}, for any pair of distinct indices $e=\{i,j\}$, $$\partial_{e}h=\partial_{ij}h=\lambda_{ij}\cdot (=_2)(x_{i_1},x_{j_1})\cdot(=_2)(x_{i_2},x_{j_2})\cdot (=_2)(x_{i_3},x_{j_3}), \quad \lambda_{ij}\in \C^\times,$$
    and $\{i,j\}\cup \{i_1,j_1\}\cup\{i_2,j_2\}\cup\{i_3,j_3\}=\{1,2,\ldots, 8\}.$
    We define $M_e=\{\{i_1,j_1\},\{i_2,j_2\},\{i_3,j_3\}\}$ and $F_e=\{i,j\}\cup M_e.$
    For two pairs of distinct indices $e_1=\{i,j\}$ and $e_2=\{k,\ell\}$, define $e_1\sim e_2$ if $e_1\in F_{e_2}$.
\end{definition}

\begin{lemma}\label{lm: f8 matching same norm}
    The $\lambda_{ij}$ in \cref{def: f8 matching} have the same norm for every pair of distinct indices $\{i,j\}$, otherwise $\Holant(\mathcal{F})$ is \#P-hard.
\end{lemma}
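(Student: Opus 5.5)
The plan is to copy the norm computation from the trivial-group case in \cref{subsec: Trivial group}. It rests on \textsc{2nd-Orth}, which we may assume as follows. The signature $h$ is irreducible, since \cref{lm: unitary wire normalization} produces it irreducible. If $h$ fails \textsc{2nd-Orth}, then $\Holant(\mathcal F)$ is \#P-hard by \cref{lm: not 2nd-orth is hard}. So assume $h$ satisfies \textsc{2nd-Orth}. Then there is a common $\mu>0$ such that for every pair $\{i,j\}$ the four slices $h_{ij}^{ab}$ satisfy $|h_{ij}^{ab}|^2=\mu$ and are pairwise orthogonal.

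Now compute the norm of $\partial_{ij}h$ in two ways. By \cref{lm: 2nd-orth contraction nonzero} with $b$ equal to $=_2$, we get $|\partial_{ij}h|^2=|h_{ij}^{00}+h_{ij}^{11}|^2=\mu|(=_2)|^2=2\mu$. On the other side, $\partial_{ij}h=\lambda_{ij}\,(=_2)^{\otimes 3}$ on disjoint variable pairs by \cref{def: f8 matching}. The norm is multiplicative over tensor products, so $|\partial_{ij}h|^2=|\lambda_{ij}|^2\cdot 2^3=8|\lambda_{ij}|^2$. Equating the two gives $|\lambda_{ij}|^2=\mu/4$, which does not depend on $\{i,j\}$.

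I expect no step to be a real obstacle. The only point needing care is that the $\mu$ in \textsc{2nd-Orth} must be uniform over all pairs, and the definition provides this. If the realized $h$ were only known to satisfy $M(\mathfrak m_{ij}h)=\lambda_{ij}'I_4$ pair by pair, \cref{lm: 2nd-Orth uniform} would restore uniformity. Finally, $h$ is realizable from $\mathcal F$, so applying \cref{lm: not 2nd-orth is hard} to $\mathcal F\cup\{h\}$, which is still conjugate closed and still outside \eqref{cond: tractable classes}, gives hardness of $\Holant(\mathcal F)$ in the failing case.
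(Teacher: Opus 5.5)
Your proposal is correct and is essentially the paper's proof: you assume $h$ is irreducible and satisfies {\sc 2nd-Orth} (otherwise \cref{lm: not 2nd-orth is hard} gives hardness), then equate $|\partial_{ij}h|^2=2\mu$ from \cref{lm: 2nd-orth contraction nonzero} with $|\partial_{ij}h|^2=8|\lambda_{ij}|^2$ from the factorization into three copies of $=_2$. One small slip: $\mathcal F\cup\{h\}$ need not be conjugate closed, so you should adjoin $\overline h$ as well, which is realizable by conjugating the gadget that realizes $h$.
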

\begin{proof}
    We may assume $h$ is irreducible and satisfies {\sc 2nd-Orth}, otherwise we obtain $\Holant(\mathcal{F})$ is \#P-hard.
    By \cref{lm: 2nd-orth contraction nonzero}, $|\partial_{ij}f|^2=\lambda\cdot |(=_2)|^2=2\lambda$ for every pair of distinct indices $\{i,j\}$, where $\lambda=|f_{ij}^{00}|^2=|f_{ij}^{01}|^2=|f_{ij}^{10}|^2=|f_{ij}^{11}|^2>0$.
    On the other hand, $|\partial_{ij}f|^2=|\lambda_{ij}|^2\cdot \prod_{k=1}^3|(=_2)(x_{i_k},x_{j_k})|^2=8|\lambda_{ij}|^2.$
    Therefore, $|\lambda_{ij}|$ are all the same.
\end{proof}

\begin{lemma}\label{lm: f8 matching equivalence}
    The relation ``$\sim$'' in \cref{def: f8 matching} is an equivalence relation, otherwise $\Holant(\mathcal{F})$ is \#P-hard.
\end{lemma}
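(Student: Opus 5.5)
To prove that ``$\sim$'' is an equivalence relation, I would imitate the reciprocity argument used for the trivial group in \cref{subsec: Trivial group}. Throughout I assume $h$ is irreducible and satisfies {\sc 2nd-Orth}; otherwise hardness follows as in \cref{lm: f8 matching same norm}. By that lemma we may also assume all $|\lambda_e|$ are equal.

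Reflexivity is immediate, since $e\in F_e$ by definition.

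\textbf{Reciprocity.} The key step is to show that for disjoint pairs $e,d$,
\[
d\in M_e \iff e\in M_d .
\]
I would compute $\partial_d\partial_e h=\partial_e\partial_d h$ in two ways.
\begin{itemize}
\item If $d\in M_e$, merging $d$ in $\partial_e h$ closes an equality loop. This contributes a factor $2$ and leaves $\lambda_e\cdot 2\cdot(=_2)^{\otimes 2}$.
\item If $d\notin M_e$, then $d$ joins two distinct equality factors into one. There is no scalar gain, and the result is $\lambda_e(=_2)^{\otimes 2}$ on a possibly different matching.
\end{itemize}
Comparing with the symmetric computation gives $\lambda_e 2^{[d\in M_e]}=\lambda_d 2^{[e\in M_d]}$ as tensors. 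Taking absolute values and using $|\lambda_e|=|\lambda_d|$ forces the two indicators to agree, which is reciprocity.

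\textbf{Symmetry.} If $e\sim d$ with $e\ne d$, then $e\in M_d$. These two pairs are disjoint, since $F_d$ is a perfect matching. Reciprocity then gives $d\in M_e$, so $d\sim e$.

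\textbf{Transitivity.} I would prove the stronger claim that $d\in M_e$ implies $F_d=F_e$. Write $M_e=\{d,c_1,c_2\}$. By reciprocity $e\in M_d$, so I need to show $c_1,c_2\in M_d$.

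The natural route is to evaluate $\partial_d\partial_e h$ in both orders.
\begin{itemize}
\item Through $e$ first, the result is $2\lambda_e\,(=_2)_{c_1}(=_2)_{c_2}$.
\item Through $d$ first, with $M_d=\{e,c_1',c_2'\}$, merging $e$ closes a loop. The result is $2\lambda_d\,(=_2)_{c_1'}(=_2)_{c_2'}$.
\end{itemize}
These two tensors are equal. Since a tensor product of equalities determines its matching by UPF (\cref{unique}), we get $\{c_1,c_2\}=\{c_1',c_2'\}$. Hence $F_d=F_e$.

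Transitivity follows: $e\sim d$ means $F_e=F_d$ (the case $e=d$ being trivial), so the relation is exactly ``lying in the same $F$''.

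\textbf{Main obstacle.} The delicate step is reciprocity. It relies on norm equality to rule out the scalar mismatch, so I must carefully justify that $\partial_d\partial_e h\neq0$ and that both sides are indeed scaled equalities. Both facts should follow from the strong equality property together with \cref{lm: 2nd-orth contraction nonzero}.
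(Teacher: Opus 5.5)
Your proof is correct. Reflexivity and your reciprocity step are exactly the paper's argument. The paper calls reciprocity ``reflective'' and derives it from the same scalar comparison, where $2\lambda_{ij}=\lambda_{k\ell}$ contradicts \cref{lm: f8 matching same norm}.

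You diverge from the paper at transitivity.
\begin{itemize}
\item The paper takes $c\in M_d$ and $d\in M_e$, and uses reciprocity to put both $c$ and $e$ in $M_d$. It then computes the triple contraction $\partial_{(st)(ij)(k\ell)}h$ in two orders. If $c\notin M_e$, this gives $4\lambda_{k\ell}$ versus $2\lambda_{ij}$, and equal norms are invoked a second time.
\item You instead compare the residual matchings of the double contraction $\partial_d\partial_e h$ in both orders. You use that a nonzero product of equalities determines its pairing, by its support or by \cref{unique}. This gives $F_d=F_e$ directly, so $\sim$ is exactly ``$F_x=F_y$'' and all three axioms are immediate.
\end{itemize}

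Your route has several advantages. It uses the norm lemma only once, for reciprocity. It is the exact arity-$8$ analogue of the trivial-group argument in \cref{subsec: Trivial group}, with the counting remark ``the remaining edge is necessarily $c$'' replaced by the pairing comparison. It also makes explicit that the equivalence classes are precisely the $F_e$, which is what the subsequent $1$-factorization step uses. The paper's route stays entirely within scalar bookkeeping and never compares supports.

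The nonvanishing you flag as the main obstacle is automatic from the strong equality property. Merging two variables of a nonzero product of equalities always gives a nonzero product of equalities, scaled by $2$ when a loop closes and by $1$ otherwise. So \cref{lm: 2nd-orth contraction nonzero} is not needed there.
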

\begin{proof}
    First $M_e$ and $F_e$ are well defined for any pair of indices $e=\{i,j\}$, by the UPF of $\partial_eh.$
    Clearly $e\in F_e$, so $\sim$ is self-inversive.
    
    Now suppose $d=\{k,\ell\}$ and $e=\{i,j\}$ are two pairs of distinct indices such that $d\neq e$ and $d\sim e$. 
    Then $d\in M_e$ and $d\cap e=\emptyset.$
    Suppose $M_e=\{\{k,\ell\},\{s,t\},\{u,v\}\}.$
    Then
    \begin{equation}\label{eq: contract h using e}
        \partial_{ij}h=\lambda_{ij}\cdot (=_2)(x_{k},x_{\ell})\cdot (=_2)(x_{s},x_{t})\cdot (=_2)(x_{u},x_{v}).
    \end{equation}
    Suppose $M_d=\{\{i_1,j_1\},\{i_2,j_2\},\{i_3,j_3\}\}$, then 
    \begin{equation}\label{eq: contract h using d}
        \partial_{k\ell}h=\lambda_{k\ell}\cdot (=_2)(x_{i_1},x_{j_1})\cdot (=_2)(x_{i_2},x_{j_2})\cdot (=_2)(x_{i_3},x_{j_3}).
    \end{equation}
    Consider $\partial_{(k\ell)(ij)}h=\partial_{(ij)(k\ell)}h$.
    By \eqref{eq: contract h using e}, we have 
    \begin{equation}\label{eq: contracting h by e then d}
        \partial_{(k\ell)(ij)}h=2\lambda_{ij}\cdot (=_2)(x_{s},x_{t})\cdot (=_2)(x_{u},x_{v}).
    \end{equation} 
    Suppose for a contradiction that $e\notin M_d$.
    Then $x_i$ and $x_j$ appear in two distinct binary factors in \eqref{eq: contract h using d}. 
    Without loss of generality, $i=i_1$ and $j=j_2$.
    Merging $x_i$ and $x_j$ results in 
    \begin{equation}\label{eq: contracting h using d then e}
        \partial_{(ij)(k\ell)}h=\lambda_{k\ell}\cdot (=_2)(x_{j_1},x_{i_2})\cdot (=_2)(x_{i_3},x_{j_3}).
    \end{equation}
    Compare \eqref{eq: contracting h by e then d} and \eqref{eq: contracting h using d then e}, we have $2\lambda_{ij}=\lambda_{k\ell}.$
    However, by \cref{lm: f8 matching same norm}, $|\lambda_{ij}|=|\lambda_{k\ell}|>0$, contradiction.
    Therefore, $e\in M_d$.
    Without loss of generality, $i=i_1$ and $j=j_1$.
    Then 
    \begin{equation}\label{eq: contracting h using d then e 2}
        \partial_{(ij)(k\ell)}h=2\lambda_{k\ell}\cdot (=_2)(x_{i_2},x_{j_2})\cdot (=_2)(x_{i_3},x_{j_3}).
    \end{equation}
    Compare \eqref{eq: contracting h by e then d} and \eqref{eq: contracting h using d then e 2}, we have $\lambda_{ij}=\lambda_{k\ell}.$
    Thus, $\sim$ is reflective.
    Moreover, if $d\in M_e$, then $\lambda_d=\lambda_e.$

    Finally we show transitivity.
    Suppose $c=\{s,t\},d=\{k,\ell\},e=\{i,j\}$, and $c\sim d,\,d\sim e$.
    If any two of them are equal, then trivially $c\sim e.$
    Now assume $c,d,e$ are mutually distinct.
    We have $c\in M_d$ and $d\in M_e$.
    So $\{s,t\}\cap\{k,\ell\}=\emptyset$ and $\{k,\ell\} \cap \{i,j\}=\emptyset.$
    Suppose 
    \begin{equation}\label{eq: transitivity partial e h}
        \partial_{ij}h=\lambda_{ij}\cdot (=_2)(x_k,x_{\ell})\cdot (=_2)\cdot (=_2).
    \end{equation}
    By reflectivity, $e\in M_d$.
    Also $c\in M_d$ and $c\neq e$.
    So $\{k,\ell\}\cap\{i,j\}=\emptyset$ and
    \begin{equation}\label{eq: transitivity partial d h}
        \partial_{k\ell}h=\lambda_{k\ell}\cdot (=_2)(x_i,x_{j})\cdot (=_2)(x_s,x_t)\cdot (=_2).
    \end{equation}
    Now consider $\partial_{(st)(ij)(k\ell)}h=\partial_{(st)(k\ell)(ij)}h$.
    By \eqref{eq: transitivity partial d h}, $\partial_{(st)(ij)(k\ell)}h=4\lambda_{k\ell}\cdot(=_2)$.
    Suppose for a contradiction that $c\notin M_e$.
    Then $s$ and $t$ appear in the second and third binary factors respectively in \eqref{eq: transitivity partial e h}.
    So $\partial_{(st)(k\ell)(ij)}h=2\lambda_{ij}\cdot(=_2)$.
    We obtain $4\lambda_{k\ell}=2\lambda_{ij}>0$, contradiction.
    So $c\in M_e$ and thus $c\sim e$.
    
    We have proved $\sim$ is self-inversive, reflexive and transitive. 
    So it is an equivalence relation. 
\end{proof}

\begin{lemma}\label{lm: f8 same coe}
    The $\lambda_{ij}$ in \cref{def: f8 matching} are all the same, otherwise $\Holant(\mathcal{F})$ is \#P-hard.
\end{lemma}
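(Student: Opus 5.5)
The plan is to exploit the fact, already extracted in the proof of \cref{lm: f8 matching equivalence}, that $\lambda_d=\lambda_e$ whenever $d\in M_e$, and then to connect any two pairs of indices through a chain of such relations. By \cref{lm: f8 matching equivalence} (assuming $\Holant(\mathcal F)$ is not already \#P-hard), the relation $\sim$ partitions the $28$ edges of $K_8$ into equivalence classes. Each class is $F_e$ for any of its members, and is a perfect matching of $[8]$ consisting of four edges. The coefficient $\lambda$ is therefore constant on each class $F_e$. It remains to show that $\lambda$ agrees across different classes.

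Take two classes $F_e\neq F_{e'}$. I will pick disjoint edges $d\in F_e$ and $d'\in F_{e'}$, with $d\notin M_{d'}$ (equivalently $d\not\sim d'$). This is possible: $F_e$ has four disjoint edges, and an edge $d'\in F_{e'}$ meets at most two of them, so some $d\in F_e$ is disjoint from $d'$. Since the classes differ, $d\not\sim d'$.

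Next I compute $\partial_{d}\partial_{d'}h$ in both orders.
\begin{itemize}
\item Contracting $d'$ first gives $\lambda_{d'}\cdot(=_2)^{\otimes 3}$ on $M_{d'}$. Because $d\notin M_{d'}$, the two endpoints of $d$ lie in two distinct factors. Merging them joins those two factors into a single $=_2$, with no loop, so the result is $\lambda_{d'}\cdot(=_2)^{\otimes 2}$.
\item Contracting $d$ first gives, by symmetry, $\lambda_d\cdot(=_2)^{\otimes 2}$, again with no factor of $2$, because $d'\notin M_d$.
\end{itemize}
Since the merging operators commute, $\lambda_d=\lambda_{d'}$. Hence the constant value of $\lambda$ on $F_e$ equals its value on $F_{e'}$.

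Combining this with the within-class equality, all $\lambda_{ij}$ coincide. This is the same mechanism used in the trivial-group case of arity $6$, where the powers of $2$ cancel.

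The only delicate point is checking that neither contraction order produces a loop factor of $2$, and this is guaranteed by $d\notin M_{d'}$ and $d'\notin M_d$. Those two conditions are equivalent by the reflexivity (symmetry) established in \cref{lm: f8 matching equivalence}. The norm equality of \cref{lm: f8 matching same norm} is not even needed for this step. It was used only to derive the equivalence relation, and it makes the conclusion consistent.
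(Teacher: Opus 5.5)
Your proposal is correct and follows essentially the same route as the paper. It uses constancy of $\lambda$ on each class $F_e$ (from the proof of \cref{lm: f8 matching equivalence}), then commutes two $=_2$-contractions on disjoint non-equivalent pairs, where no loop factor of $2$ appears, to get $\lambda_d=\lambda_{d'}$. Your explicit choice of disjoint representatives $d\in F_e$, $d'\in F_{e'}$ is worth keeping. The paper applies the double contraction to arbitrary non-equivalent $\{i,j\},\{k,\ell\}$ and never addresses the case where they share an index, in which $\partial_{k\ell}\partial_{ij}h$ is not defined.
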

\begin{proof}
    In the proof of \cref{lm: f8 matching equivalence}, we have proved $\lambda_{ij}$ are all the same for $\{i,j\}$ in the same equivalent class of $\sim.$
    We only need to show $\lambda_{ij}=\lambda_{k\ell}$ for $\{i,j\}\not \sim \{k,\ell\}.$
    Let $e=\{i,j\},\,d=\{k,\ell\}.$
    We have $\partial_{ij}h=\lambda_{ij}\cdot (=_2)^{\otimes 3}$
    and $\partial_{k\ell}h=\lambda_{k\ell}\cdot (=_2)^{\otimes 3}$.
    Since $\{i,j\}\not \sim\{k,\ell\}$, we know $d\notin M_e$ and $e\notin M_d.$
    So $x_k$ and $x_\ell$ appear in distinct binary factors of $\partial_{ij}h$.
    Then $\partial_{(k\ell)(ij)}h=\lambda_{ij}\cdot (=_2)^{\otimes 2}.$
    Similarly, $x_i$ and $x_j$ appear in distinct binary factors of $\partial_{k\ell}h$, and $\partial_{(ij)(k\ell)}h=\lambda_{k\ell}\cdot (=_2)^{\otimes 2}.$
    By $\partial_{(k\ell)(ij)}h=\partial_{(ij)(k\ell)}h$, we have $\lambda_{ij}=\lambda_{k\ell}$.
\end{proof}

An equivalent class of $\sim$ in \cref{def: f8 matching} can be viewed as a perfect matching of the complete graph $K_8$ with vertices labeled by $\{1,2,\ldots, 8\}.$
Since $K_8$ has 28 edges, and every equivalent class has size $4$, there are 7 equivalent classes in total.
They are 7 perfect matchings, or a 1-factorization of $K_8$.

\begin{lemma}
\label{lm: f8 one factorization}
After identifying the vertices of $K_8$ with the points of
$\mathbb F_2^3$, the seven perfect matchings determined by the equivalent classes of $\sim$ are
\begin{equation}
    F_e=\bigl\{\{z,z+v_e\}\mid z\in\mathbb F_2^3\bigr\}, \qquad v_e\in\mathbb F_2^3\setminus\{0\}.
 \label{eq: translation matchings}
\end{equation}
\end{lemma}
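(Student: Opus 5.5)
The plan is to extract a local ``4-cycle'' property of the seven perfect matchings from the commutation of contractions, and then recognize the resulting 1-factorization as the affine one via the group generated by the matchings.

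\textbf{Step 1 (commuting contractions force 4-cycles).} Let $F_1\neq F_2$ be two of the seven equivalence classes, and pick $e=\{i,j\}\in F_1$ and $d=\{k,\ell\}\in F_2$. Distinct classes are edge-disjoint and each is a perfect matching, so $d\notin M_e$. Hence $x_k,x_\ell$ lie in two distinct factors of $\partial_e h$; say these factors are $\{k,a\},\{\ell,b\}\in F_1$. Contracting $x_k,x_\ell$ through these two factors leaves
\[
\partial_{(k\ell)(ij)}h=\lambda\cdot(=_2)(x_a,x_b)\,(=_2)(p),
\]
where $p$ is the fourth pair of $F_1$ and $\lambda$ is the common value from \cref{lm: f8 same coe}. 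Symmetrically, let $\{i,c\},\{j,c'\}\in F_2$ be the factors of $\partial_d h$ containing $x_i,x_j$, and let $q$ be the fourth pair of $F_2$; then $\partial_{(ij)(k\ell)}h=\lambda\cdot(=_2)(x_c,x_{c'})\,(=_2)(q)$. The two contractions commute, so by \cref{unique} the two pair sets coincide: $\{\{a,b\},p\}=\{\{c,c'\},q\}$. Since $p\in F_1$ and $q\in F_2$ are distinct edges, the only consistent matching is $p=\{c,c'\}$ and $q=\{a,b\}$. In particular $F_2$ contains $\{k,\ell\}$ and $\{a,b\}$ while $F_1$ contains $\{k,a\}$ and $\{\ell,b\}$, so $k,\ell,b,a$ span a 4-cycle of $F_1\cup F_2$. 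Every edge of $F_2$ arises as such a $d$, so $F_1\cup F_2$ is a disjoint union of two 4-cycles for every pair of classes.

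\textbf{Step 2 (translations).} For each class $F$, let $\tau_F$ be the fixed-point-free involution of $\{1,\dots,8\}$ swapping the endpoints of each edge of $F$. The 4-cycle property says that for any $x$ the walk $x,\ \tau_{F_1}x,\ \tau_{F_2}\tau_{F_1}x,\ \tau_{F_1}\tau_{F_2}\tau_{F_1}x$ closes after four steps, i.e. $(\tau_{F_1}\tau_{F_2})^2=\mathrm{id}$. So the $\tau_F$ pairwise commute, and the group $G$ they generate is an elementary abelian $2$-group.

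\textbf{Step 3 (identifying $G$ with $\mathbb F_2^3$).} $G$ is transitive on the eight vertices, because the seven classes cover all $28$ edges of $K_8$, so from any vertex some $\tau_F$ reaches any other vertex. A transitive abelian permutation group acts regularly, so $|G|=8$ and $G\cong\mathbb F_2^3$. Regularity means every non-identity element of $G$ is fixed-point-free. The seven involutions $\tau_F$ are pairwise distinct (they come from distinct matchings) and non-identity, so they exhaust $G\setminus\{\mathrm{id}\}$.

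Now fix a base vertex $z_0$ and identify each vertex $g z_0$ with $g\in G\cong\mathbb F_2^3$. Under this identification $\tau_F$ acts by adding a fixed nonzero vector $v_F$. Its matching is therefore $\{\{z,z+v_F\}\mid z\in\mathbb F_2^3\}$, which is exactly \eqref{eq: translation matchings}.

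The main obstacle is Step 1. One has to track carefully which factors of $\partial_e h$ and $\partial_d h$ are touched by the second contraction, and use \cref{lm: f8 same coe} so that scalars match and unique factorization can compare the pair sets. The group-theoretic recognition in Steps 2 and 3 is then routine.
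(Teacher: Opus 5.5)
Your proof is correct and is essentially the paper's argument. The paper rules out an alternating $8$-cycle in $F_1\cup F_2$ by exactly your comparison of the two residual matchings of a commuting double contraction. It then proves that $\{m_F\}\cup\{\mathrm{id}\}$ is closed under composition via a fixed base vertex, which is the same content as your ``transitive abelian implies regular'' step.

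One point should be stated explicitly: in Step 1 you must choose $e\in F_1$ disjoint from $d$, since otherwise $\partial_{(k\ell)(ij)}h$ is undefined. Such an $e$ always exists, because $d\notin F_1$ meets exactly two of the four edges of $F_1$.
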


\begin{proof}
Let $\mathcal{PM}$ be the 7 perfect matchings determined by the equivalent classes of $\sim$.
The union of two perfect matchings in $\mathcal{PM}$ is either one $8$-cycle or two $4$-cycles.  
The first case is impossible.  Indeed, label an alternating $8$-cycle so that the two perfect matchings are
\[
 F_1=\{\{1,2\},\{3,4\},\{5,6\},\{7,8\}\},\qquad
 F_2=\{\{2,3\},\{4,5\},\{6,7\},\{8,1\}\}.
\]
Merging first $x_1,x_2$ and then $x_4,x_5$ leaves the residual matching
$\{\{3,6\},\{7,8\}\}$.
By commutativity, it is equivalent to first merge $x_4,x_5$ and then $x_1,x_2$, leaving $\{\{3,8\}, \{6,7\}\}$.  
This contradicts the UPF of $\partial_{(12)(34)}h$.
Thus every two perfect matchings form two $4$-cycles.

For every perfect matching $F\in\mathcal{PM}$, we identify $F$ with a map $m_F:V(K_8)\to V(K_8)$, $u\mapsto v$ if $\{u,v\}\in F.$
So $m_F$ is an involution map, i.e., $m_F^{(2)}(u)=m_F(m_F(u))=u,\,\forall u\in [8].$
Let $G=\{m_F\mid F\in \mathcal{PM}\}\cup \{\mathrm{id}\}$.
Consider two perfect matchings $F_1,F_2\in \mathcal{PM}$.
Since $F_1\cup F_2$ form two 4-cycles of $K_8$, we have $m_{F_1}m_{F_2}=m_{F_2}m_{F_1}$.
Now fix one vertex $z_0\in V(K_8)$.
Given two maps $m_1,m_2\in G$, let $m_3$ be the unique member in $G$ satisfying $m_3(z_0)=m_1m_2(z_0)$.
This $m_3$ exists, because if $m_1=m_2$ then $m_3=\mathrm{id}$, otherwise it is the map determined by the matching containing the edge $\{z_0,m_1m_2(z_0)\}.$
For every $m\in G$, commutativity gives $m_3(m(z_0))=m(m_3(z_0))=m(m_1m_2(z_0))=m_1m_2(m(z_0))$.  
The eight points $\{m(z_0)\mid m\in G\}$ exhaust the vertex set, so $m_3=m_1m_2$.  
Thus, $G$ is an abelian group of
order eight, and every non-zero element in $G$ has order $2$.
Hence $G\cong \mathbb F_2^3$. 
Now we identify the vertex $m(z_0)\in V(K_8)$ with $m\in G\cong \mathbb F_2^3,\,\forall m\in G.$
Then the seven non-zero elements in $G$ are exactly \eqref{eq: translation matchings}.
\end{proof}

In the following, we give an interpretation of the signature $f_8$ through the lens of the Reed-Muller code $RM(1,3).$
Let
\begin{equation}
 C=RM(1,3)
 =\left\{(a_0+a\cdot z)_{z\in\mathbb F_2^3}\mid 
 a_0\in\mathbb F_2,\ a\in\mathbb F_2^3\right\}.
 \label{eq: RM code definition}
\end{equation}
The coordinates are ordered as
$000,001,010,011,100,101,110,111$.  Thus $C$ is the self-dual
$[8,4,4]$ first order Reed-Muller code, with codewords
\begin{align*}
&00000000,00001111,00110011,00111100,01010101,01011010,\notag\\
&01100110,01101001,10010110,10011001,10100101,10101010,\notag\\
&11000011,11001100,11110000,11111111.
\end{align*}
Therefore, $f_8=\chi_C.$

\begin{lemma}\label{lm: f8 indistinguishble}
    Let $h\in \mathcal{F}$ be an arity $8$ signature satisfying the strong equality property.
    Then there exists $\lambda\in \mathbb C^\times$ such that $\partial_{ij}h= \lambda\cdot\partial_{ij}f_8$ for every pair of distinct indices $\{i,j\},$ otherwise $\Holant(\mathcal{F})$ is \#P-hard.
\end{lemma}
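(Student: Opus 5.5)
We combine \cref{lm: f8 same coe} and \cref{lm: f8 one factorization} with a direct computation of the contractions of $f_8$. Unless $\Holant(\mathcal{F})$ is \#P-hard, \cref{lm: f8 same coe} gives one common scalar $\lambda\in\mathbb C^\times$ with $\partial_{ij}h=\lambda\cdot\bigotimes_{d\in M_{ij}}(=_2)_d$ for every pair $\{i,j\}$. Here $M_{ij}=F_{ij}\setminus\{\{i,j\}\}$, and by \cref{lm: f8 matching equivalence} the sets $F_e$ are the seven classes of $\sim$. By \cref{lm: f8 one factorization}, after identifying the eight variable indices with $\mathbb F_2^3$ (which is just a relabelling of the variables of $h$), these classes are the translation matchings $\{\{z,z+v\}\mid z\in\mathbb F_2^3\}$ for $v\ne0$. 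So $\partial_{ij}h$ is completely determined by $\lambda$ and by the vector $v=i+j$: it equals $\lambda$ times the tensor product of $=_2$ on the three remaining pairs $\{z,z+v\}$.

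It therefore suffices to compute $\partial_{ij}f_8$, with the coordinates of $f_8=\chi_C$, $C=RM(1,3)$, indexed by $\mathbb F_2^3$ in the stated order. Fix $i\ne j$ and set $v=i+j$. A codeword is $c_z=a_0+a\cdot z$. Thus $c_i=c_j$ if and only if $a\cdot v=0$, and then $c_{z}=c_{z+v}$ for every $z$. Summing over $x_i=x_j$ keeps exactly these codewords, and their restriction to the remaining six coordinates is injective: $a$ is determined modulo $v^\perp$-data by its values on the other pairs together with $a\cdot v=0$. More simply, there are $8$ such codewords, and the restrictions lie in the $8$-element set of assignments constant on each of the three remaining pairs $\{z,z+v\}$. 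We must check that these restrictions are all distinct, hence exhaust that set. Two codewords with the same restriction differ by a codeword vanishing on six coordinates, and the minimum weight $4$ of $C$ forces that difference to be $0$. Hence $\partial_{ij}f_8=\bigotimes_{\{z,z+v\}\ne\{i,j\}}(=_2)(x_z,x_{z+v})$ with coefficient exactly $1$.

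Comparing the two descriptions gives $\partial_{ij}h=\lambda\,\partial_{ij}f_8$ for every pair $\{i,j\}$, with the same $\lambda$. The main obstacle I expect is bookkeeping rather than mathematics. The identification of variables with $\mathbb F_2^3$ in \cref{lm: f8 one factorization} is only determined up to the choice of base point and group structure, so we must say explicitly that we relabel the variables of $h$ accordingly; the statement is understood up to this permutation of variables, which does not affect realizability. We must also confirm that the matching $M_{ij}$ attached to $h$ is exactly the translation matching by $i+j$ minus the edge $\{i,j\}$, and this is precisely the content of $F_e$ being the class containing $e$.
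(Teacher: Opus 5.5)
Your proposal is correct and follows the paper's proof. Both arguments use \cref{lm: f8 same coe} for the common scalar $\lambda$ and \cref{lm: f8 one factorization} to identify the residual matchings with translation matchings on $\mathbb F_2^3$, and then compute $\partial_{ij}f_8$ from the $RM(1,3)$ description. The only difference is that you verify the coefficient-one tensor product of three equalities yourself, via the minimum distance $4$ of $RM(1,3)$, where the paper cites the direct computation in \cite{realholant}.
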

\begin{proof}
    By \cref{lm: f8 same coe}, unless $\Holant(\mathcal{F})$ is \#P-hard, there exists $\lambda\in \mathbb C^\times$ such that 
    \begin{equation*}
        \partial_{e}h=\lambda\cdot\bigotimes_{d\in M_e}(=_2)_d
    \end{equation*}
    for every pair of indices $e=\{i,j\}$, where $(=_2)_d$ is the binary $=_2$ on the two variables corresponding to the two indices in $d.$
    By \cref{lm: f8 one factorization}, we may label $\{1,2,\ldots, 8\}$ by elements in $\mathbb F_2^3$, such that $F_e=\{\{z,z+v_e\}\mid z\in \mathbb{F}_2^3\}$ for some $v_e\in \mathbb F_2^3\setminus\{0\}.$
    Since $e\in F_e$, there exists some $z_0\in \mathbb F_2^3$ such that $e=\{z_0,z_0+v_e\}.$
    Suppose $M_e=\{\{z_1,z_1+v_e\},\{z_2,z_2+v_e\},\{z_3,z_3+v_e\}\}$ for some $z_1,z_2,z_3\in \mathbb F_2^3.$
    Then 
    \begin{equation}\label{eq: f8 indistinguishble partial e h}
        \partial_{e}h=\lambda\cdot\bigotimes_{i=1}^3(=_2)(x_{z_i},x_{z_i+v_e}).
    \end{equation}
    Now consider $\partial_{e}f_8$.
    Since $e=\{z_0,z_0+v_e\}$, merging $x_{z_0}$ and $x_{z_0+v_e}$ requires $x_{z_0}=x_{z_0+v_e}.$
    By the definition of $C=RM(1,3)$ and $f_8=\chi_C$, the $y$-th coordinate of the codeword indexed by $(a_0,a)$ is $x_y=a_0+a\cdot y.$
    So $x_{z_0}=x_{z_0+v_e}\iff a_0+a\cdot z_0=a_0+a\cdot (z_0+v_e)\iff a\cdot v_e=0.$
    Once $a\cdot v_e=0$, we have $x_{y+v_e}=a_0+a\cdot (y+v_e)=a_0+a\cdot y=x_y$, for every $y\in \mathbb F_2^3.$
    Therefore, $x_{z_i}=x_{z_i+v_e}$ in $\partial_{e}f_8$ for $i=1,2,3.$
    Direct computation shows that merging any two variables of $f_8$, it becomes a tensor product of three binary equalities \cite{realholant}.
    So
    \begin{equation}\label{eq: f8 indistinguishble partial e f8}
        \partial_{e}f_8=\bigotimes_{i=1}^3(=_2)(x_{z_i},x_{z_i+v_e}).
    \end{equation}
    Comparing \eqref{eq: f8 indistinguishble partial e h} and \eqref{eq: f8 indistinguishble partial e f8} gives $ \partial_{ij}h=\lambda\cdot\partial_{ij}f_8$, for every pair of indices $\{i,j\}$.
\end{proof}

\begin{lemma}
\label{lm: f8 normalized equations}
Let $h\in \mathcal{F}$ be an arity 8 signature satisfying the strong equality property. 
After normalizing $h$,
unless $\Holant(\mathcal{F})$ is \#P-hard,
there exists
$\alpha,\beta\in\mathbb C$ such that
\begin{equation}
 h= f_8+\alpha u_+^{\otimes8}+\beta u_-^{\otimes8}.
 \label{eq: f8 kernel family}
\end{equation}
where $u_+=[1, \mathfrak i], u_-=[1, -\mathfrak i]$ are unary signatures.
Moreover, $h$ satisfies {\sc 2nd-Orth} if and only if
\begin{equation}
 \operatorname{Re}(\alpha)+8|\alpha|^2=0,
 \qquad
 \operatorname{Re}(\beta)+8|\beta|^2=0.
 \label{eq: f8 second orth circles}
\end{equation}
\end{lemma}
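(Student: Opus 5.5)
By \cref{lm: f8 indistinguishble}, after normalizing $h$ by $\lambda^{-1}$ we have $\partial_{ij}h=\partial_{ij}f_8$ for every pair $\{i,j\}$. The plan is to set $k:=h-f_8$, so that $\partial_{ij}k\equiv 0$ for all $i\neq j$, and to identify the common kernel of all $28$ merging operators. First I change basis by $Z_4$-type diagonalization. Writing $\partial_{ij}k=0$ in the basis $u_+=[1,\mathfrak i]$, $u_-=[1,-\mathfrak i]$ (i.e.\ applying $K^{-1}$ up to scalar, where $=_2$ becomes $\neq_2$), the merging $\partial_{ij}$ turns into contracting $\widehat{k}$ on positions $i,j$ with $\neq_2$. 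Concretely, $\partial_{ij}k=0$ becomes $\widehat k^{01}_{ij}+\widehat k^{10}_{ij}=0$ as arity-$6$ tensors.

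\textbf{Step 1: solve the linear system in the $K$-basis.} For a word $\alpha$, pick $i,j$ with $\alpha_i\ne\alpha_j$. The equation gives $\widehat k(\alpha)=-\widehat k(\alpha\oplus e_i\oplus e_j)$, so swapping an unequal pair negates the value. Every non-constant word of weight $w$ can be connected to any other word of the same weight by such transpositions. I will then derive a contradiction for $1\le w\le 7$. Take three positions $a,b,c$ with $\alpha_a=0$ and $\alpha_b=\alpha_c=1$ (or the symmetric case); this is possible since $8\ge 3$ and the word is non-constant with either a repeated $0$ or a repeated $1$. Composing the transpositions $(a\,b)(b\,c)(a\,b)$ realizes the transposition $(a\,c)$ through three swaps, whereas $(a\,c)$ is also a single swap. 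Comparing the signs $(-1)^3$ and $(-1)^1$ alone is not contradictory, so I will use a $3$-cycle instead: applying the swaps $(ab),(bc),(ca)$ in sequence on a word with $\alpha_a\ne\alpha_b$ returns it to itself after an odd number of sign flips when arranged properly, forcing $\widehat k(\alpha)=0$.

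Since this sign bookkeeping is exactly the main obstacle, I will verify it directly. Take $\alpha_a=0$, $\alpha_b=1$, $\alpha_c=1$. The sequence $\alpha\to$ swap$(a,b)\to(1,0,1)\to$ swap$(b,c)\to(1,1,0)\to$ swap$(a,c)\to(0,1,1)=\alpha$ uses three swaps, each with unequal entries, so $\widehat k(\alpha)=-\widehat k(\alpha)$ and hence $\widehat k(\alpha)=0$. Therefore $\widehat k$ is supported on $\{0^8,1^8\}$. Translating back, $k=\alpha u_+^{\otimes 8}+\beta u_-^{\otimes 8}$, and I will fix the scaling conventions so that the coefficients match \eqref{eq: f8 kernel family}.

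\textbf{Step 2: second-order orthogonality.} I compute $M(\mathfrak m_{ij}h)=\sum_y h_{ij}^{\cdot}(y)\overline{h_{ij}^{\cdot}(y)}$ by expanding $h=f_8+k$. The contribution of $f_8$ is $16\cdot\frac14\cdot I_4$-type, i.e.\ $f_8$ is $2$-uniform with slice norms $4$. For the cross terms, note that $u_+\perp u_-$ with $|u_\pm|^2=2$, so the mating of $u_\pm^{\otimes8}$ with itself over $6$ legs gives $2^6\,u_\pm u_\pm^\dagger$ on two legs, while the $u_+$--$u_-$ cross terms vanish. The $f_8$--$u_\pm^{\otimes 8}$ cross terms are computed by evaluating $\langle f_8$ contracted with $\overline{u_\pm}^{\otimes 6}\rangle$. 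This gives a multiple of $\overline{u_\pm}^{\otimes2}$-type tensors with coefficient $\sum_{c\in C}\mathfrak i^{\mp\operatorname{wt}(\cdot)}$, which I expect to be a real constant, since codeword weights are $0,4,8$ with $\mathfrak i^{4}=1$.

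The upshot is $M(\mathfrak m_{ij}h)=4I_4+c_+(\operatorname{Re}\alpha\text{-term}+8|\alpha|^2)P_+ + c_-(\cdots)P_-$, where $P_\pm$ are the rank-one projectors $u_\pm^{\otimes2}(u_\pm^{\otimes 2})^\dagger$, which are linearly independent from $I_4$. The Hermitian cross terms combine as $\alpha+\overline\alpha=2\operatorname{Re}\alpha$. Requiring proportionality to $I_4$ then yields exactly \eqref{eq: f8 second orth circles}; the constants need to be checked so that the ratio is $8$. Finally, the ``unless \#P-hard'' clause is inherited from \cref{lm: f8 indistinguishble}.
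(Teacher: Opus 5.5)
Your proof is correct in outline. Step 1 is the paper's argument, and Step 2 is organized differently and more cleanly, but one intermediate claim in Step 2 is wrong and needs fixing.

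\textbf{Step 1.} The columns of $K$ are $u_\pm/\sqrt2$, so $\widehat k(x)$ is, up to the factor $16$, the paper's coefficient $c_S$. Your relation $\widehat k^{01}_{ij}+\widehat k^{10}_{ij}=0$ is its relation $c_{T\cup\{i\}}+c_{T\cup\{j\}}=0$. Your three-swap cycle on $(\alpha_a,\alpha_b,\alpha_c)=(0,1,1)$ is exactly its triangle in the Johnson graph $J(8,t)$. Delete the abandoned $(ab)(bc)(ab)$ attempt, and stop using $\alpha$ for both the word and the coefficient.

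\textbf{Step 2: comparison with the paper.} The paper writes out the Gram matrix for the single pair $\{1,2\}$ by direct substitution, then invokes pair-transitivity of the affine group of $\mathbb F_2^3$. Your decomposition $M(\mathfrak m_{ij}h)=4I_4+A\,P_++B\,P_-$, with $P_\pm=u_\pm^{\otimes2}(u_\pm^{\otimes2})^\dagger$, has three advantages.
\begin{itemize}
\item It explains the paper's matrix: its $(ab,cd)$ entry is $4\delta_{ab,cd}+A\,\mathfrak i^{a+b-c-d}+B\,(-\mathfrak i)^{a+b-c-d}$, with $4E_\pm=A\pm B$.
\item It uses only that $C$ is doubly even and that every two-coordinate projection has fibers of size $4$, so it holds for all pairs at once and no transitivity argument is needed.
\item The ``if and only if'' follows by testing on $u_+\otimes u_-$, which both $P_\pm$ kill, forcing the scalar to be $4$, and then on $u_\pm^{\otimes2}$, forcing $A=B=0$.
\end{itemize}

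\textbf{Step 2: the error.} Contracting $f_8$ with $\overline{u_+}^{\otimes6}$ does \emph{not} give a multiple of $\overline{u_+}^{\otimes2}$. That tensor equals $u_-^{\otimes2}$, and it would produce terms $u_-^{\otimes2}(u_+^{\otimes2})^\dagger$ that cannot combine into $P_+$; your upshot would then be false.

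The correct computation is as follows. For $(a,b,y)\in C$ the weight is $\equiv 0\pmod 4$, so $\operatorname{wt}(y)\equiv-(a+b)\pmod 4$ and $(-\mathfrak i)^{\operatorname{wt}(y)}=\mathfrak i^{a+b}$. Each $(a,b)$ extends to exactly $4$ codewords. Hence $M_{ij}(f_8)\,\overline{u_+}^{\otimes6}=4\,u_+^{\otimes2}$, and likewise $M_{ij}(f_8)\,\overline{u_-}^{\otimes6}=4\,u_-^{\otimes2}$.

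The cross terms are then $4(\alpha+\overline\alpha)P_+=8\operatorname{Re}(\alpha)P_+$. The self term is $64|\alpha|^2P_+$; your $2^6u_+u_+^\dagger$ should read $2^6u_+^{\otimes2}(u_+^{\otimes2})^\dagger$. So $A=8(\operatorname{Re}\alpha+8|\alpha|^2)$ and $B=8(\operatorname{Re}\beta+8|\beta|^2)$. This gives exactly $\operatorname{Re}\alpha+8|\alpha|^2=\operatorname{Re}\beta+8|\beta|^2=0$, with the common constant equal to $4$ for every pair.
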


\begin{proof}
By \cref{lm: f8 indistinguishble}, there exists $\lambda\in \mathbb C^\times$ such that $\partial_{ij}h=\lambda\cdot \partial_{ij}f_8$ for every pair of distinct indices $\{i,j\}.$
We normalize $h$ by $\lambda$, and with a slight abuse of notation, still use $h$ to denote the normalized signature.
Let $h'=h-f_8$.
Then $\partial_{ij}h'=0$ for every pair of distinct indices $\{i,j\}.$
Notice that $\{u_+,u_{-}\}$ form a basis for $\C^2$.
So $\{u_{\epsilon_1}\otimes u_{\epsilon_2}\otimes\ldots \otimes u_{\epsilon_8}\mid \epsilon_1,\epsilon_2,\ldots,\epsilon_8\in\{+,-\}\}$ form a basis for $(\C^2)^8$.
The arity $8$ signature $h'$ can be viewed as a tensor in $(\C^2)^8.$
We expand $h'$ interms of this basis, and obtain
\begin{equation}\label{eq: expansion of h'}
    h'=\sum_{\epsilon_1,\ldots, \epsilon_8\in\{+,-\}}c'_{\epsilon_1,\ldots,\epsilon_8}u_{\epsilon_1}\otimes\ldots\otimes u_{\epsilon_8}=\sum_{S\subseteq[8]}c_S \bigotimes_{j\in S}u_+ \bigotimes_{j\notin S}u_{-},
\end{equation}
where $c_S=c'_{\epsilon_1,\ldots,\epsilon_8}$ if $S=\{i\in[8]\mid \epsilon_i=+\}.$
We note that
\begin{equation}
 u_+u_+^{\mathtt T}=u_-u_-^{\mathtt T}=0,\qquad
 u_+u_-^{\mathtt T}=u_-u_+^{\mathtt T}=2.
 \label{eq: isotropic unary basis}
\end{equation}
Now consider $\partial_{ij}h'.$
By \eqref{eq: isotropic unary basis}, every term in \eqref{eq: expansion of h'} with $\epsilon_i=\epsilon_j$ will vanish in $\partial_{ij}h'$, because merging $x_i$ and $x_j$ results in a factor $u_{\epsilon_i}u_{\epsilon_j}^{\tt T}.$
\begin{claim}\label{claim: Johnson graph}
    For every subset $T\subseteq[8]\setminus\{i,j\}$, the coefficients in \eqref{eq: expansion of h'} satisfy $c_{T\cup\{i\}}+c_{T\cup\{j\}}=0.$
\end{claim}
\begin{claimproof}{\cref{claim: Johnson graph}}
    Let $v_T:=\bigotimes_{j\in T} u_+\bigotimes_{j\in [8]\setminus\{i,j\}\setminus T}u_-.$
    After merging $x_i$ and $x_j$, $v_T$ is a term in the expansion of $\partial_{ij}h'$.
    We consider the coefficient of $v_T$ in $\partial_{ij}h'.$
    If a term in \eqref{eq: expansion of h'} survives in $\partial_{ij}h'$, then $\epsilon_i\neq\epsilon_j.$
    There are two possibilities, $(\epsilon_i,\epsilon_j)=(+,-)$ or $(\epsilon_i,\epsilon_j)=(-,+).$
    In the first case, the contribution for the coefficient of $v_T$ in $\partial_{ij}h'$ is $2c_{T\cup \{i\}}.$
    In the latter case, the contribution is $2c_{T\cup \{j\}}.$
    Hence the coefficient of $v_T$ in $\partial_{ij}h'$ is $2c_{T\cup \{i\}}+2c_{T\cup \{j\}}.$
    Since $\partial_{ij}h'=0$, and the terms $v_T$ are linearly independent, we have $c_{T\cup\{i\}}+c_{T\cup \{j\}}=0$, for every $T\subseteq[8]\setminus \{i,j\}.$
\end{claimproof}

\begin{claim}\label{claim: coe is 0 in expansion of h}
    In \eqref{eq: expansion of h'}, the coefficients $c_S=0$ if $1\le |S|\le 7$.
\end{claim}
\begin{claimproof}{\cref{claim: coe is 0 in expansion of h}}
    Fix \(1\le t\le 7\), and look only at coefficients \(c_S\) in \eqref{eq: expansion of h'} with $|S|=t$.
    Consider the Johnson graph $J(8,t)$, where the vertices are $t$-element subsets of [8], and two vertices are adjacent when the intersection of the two vertices (subsets) contains $t-1$ elements.
    By \cref{claim: Johnson graph}, whenever two subsets $S_1$ and $S_2$ are adjacent in $J(8,t)$, their coefficients $c_{S_1}$ and $c_{S_2}$ are negatives.
    We claim every vertex in $J(8,t)$ is contained in a triangle.
    To see this, pick an arbitrary vertex (subset) $S$.
    If $1\le t\le 6$, there are distinct elements $p,q,r\in [8]$ such that $p\in S$ and $q,r\notin S$.
    Let $S_1=S\setminus\{p\}\cup \{q\}$ and $S_1=S\setminus\{p\}\cup \{r\}$. 
    Then $S,S_1,S_2$ form a triangle in $J(8,t)$.
    If $t=7$, then $J(8,7)$ is the complete graph $K_8$, so every vertex is in a triangle.
    Along the triangle, we have $c_S+c_{S_1}=0\,,c_{S_1}+c_{S_2}=0\,,c_{S_2}+c_{S}=0\,$.
    So $c_S=0$.
\end{claimproof}
By \cref{claim: coe is 0 in expansion of h}, the expansion \eqref{eq: expansion of h'} simplifies to $h'=c_{\emptyset}u_+^{\otimes 8}+c_{[8]}u_-^{\otimes 8}.$
Let $\alpha=c_{\emptyset}$ and $\beta=c_{[8]}$, we have $h= f_8+\alpha u_+^{\otimes8}+\beta u_-^{\otimes8}.$

It remains to impose {\sc 2nd-Orth}.  The affine group of
$\mathbb F_2^3$ is transitive on pairs of coordinates, so it suffices to
compute the Gram matrix for one pair.  Put
\[
 E_-=2\operatorname{Re}(\alpha-\beta)
       +16(|\alpha|^2-|\beta|^2),\qquad
 E_+=2\operatorname{Re}(\alpha+\beta)
       +16(|\alpha|^2+|\beta|^2).
\]
For the four slices ordered as $00,01,10,11$, direct substitution into
$M_{12}(h)M_{12}(h)^\dagger$ gives
\begin{equation}
 D I_4+4
 \begin{bmatrix}
 0&-\mathfrak iE_-&-\mathfrak iE_-&-E_+\\
 \mathfrak iE_-&0&E_+&-\mathfrak iE_-\\
 \mathfrak iE_-&E_+&0&-\mathfrak iE_-\\
 -E_+&\mathfrak iE_-&\mathfrak iE_-&0
 \end{bmatrix},
 \label{eq: f8 slice Gram matrix}
\end{equation}
where
$D=4+8\operatorname{Re}(\alpha+\beta)
  +64(|\alpha|^2+|\beta|^2)$.
The matrix in \eqref{eq: f8 slice Gram matrix} is scalar precisely when
$E_-=E_+=0$, and these two equations are equivalent to
\eqref{eq: f8 second orth circles}.  Pair transitivity proves sufficiency
for every coordinate pair.
\end{proof}


\begin{lemma}
\label{lm: f8 self mating}
Let $h$ satisfy \eqref{eq: f8 kernel family} and
\eqref{eq: f8 second orth circles}.  
Unless $\Holant(\mathcal{F})$ is
\#P-hard, there exists $\theta\in\mathbb R$ such that
\begin{equation}
 h=R_\theta^{\otimes8}f_8,\qquad
 R_\theta=
 \begin{bmatrix}
 \cos\theta&\sin\theta\\
 -\sin\theta&\cos\theta
 \end{bmatrix}.
 \label{eq: f8 real orbit}
\end{equation}
\end{lemma}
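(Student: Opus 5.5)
The plan is to parametrize the real orbit explicitly inside the family \eqref{eq: f8 kernel family}, and then to show that the two parameters $\alpha,\beta$ are forced onto that orbit. The key observation is that $u_\pm$ are eigenvectors of every rotation: $R_\theta u_+^{\mathtt T}=e^{\mathfrak i\theta}u_+^{\mathtt T}$ and $R_\theta u_-^{\mathtt T}=e^{-\mathfrak i\theta}u_-^{\mathtt T}$.

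First I compute the orbit $R_\theta^{\otimes 8}f_8$.
\begin{itemize}
\item Since $R_\theta$ is real orthogonal, it fixes $=_2$. Contractions therefore commute with the transformation, so $\partial_{ij}(R_\theta^{\otimes8}f_8)=R_\theta^{\otimes 6}\partial_{ij}f_8=\partial_{ij}f_8$ for every pair.
\item Hence $R_\theta^{\otimes 8}f_8-f_8$ is killed by every $\partial_{ij}$. The argument of \cref{claim: Johnson graph} and \cref{claim: coe is 0 in expansion of h} then shows it lies in $\mathrm{span}\{u_+^{\otimes8},u_-^{\otimes8}\}$.
\item Expanding $f_8$ in the $u_\pm$ basis, $R_\theta$ multiplies each term with $|S|$ factors $u_+$ by $e^{\mathfrak i\theta(2|S|-8)}$. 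Comparing coefficients gives
\[
R_\theta^{\otimes8}f_8=f_8+c_+\,(e^{8\mathfrak i\theta}-1)\,u_+^{\otimes8}+c_-\,(e^{-8\mathfrak i\theta}-1)\,u_-^{\otimes8},
\]
where $c_\pm$ are the corresponding coefficients of $f_8$.
\item A direct evaluation using $u_\pm u_\mp^{\mathtt T}=2$ gives $c_\pm=2^{-8}\sum_x f_8(x)\,\overline{u_\pm^{\otimes 8}(x)}$. I expect this to be $c_+=c_-=\tfrac1{16}$, using that $f_8$ is real and every codeword has weight $\equiv0\pmod 4$.
\item Then $\alpha(\theta)=\tfrac1{16}(e^{8\mathfrak i\theta}-1)$ traces exactly the circle $\operatorname{Re}(\alpha)+8|\alpha|^2=0$, with center $-\tfrac1{16}$ and radius $\tfrac1{16}$. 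Moreover $\beta(\theta)=\overline{\alpha(\theta)}$.
\end{itemize}
So the orbit is precisely the set of pairs on the circles \eqref{eq: f8 second orth circles} satisfying $\beta=\overline\alpha$. This is just a check that $R_\theta^{\otimes8}f_8$ satisfies \eqref{eq: f8 second orth circles}; the lemma needs the converse, which is the next step.

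It remains to prove $\beta=\overline\alpha$ unless $\Holant(\mathcal F)$ is \#P-hard. Since $\overline{u_\pm}=u_\mp$, conjugate closure gives the available signature
\[
\overline h=f_8+\overline\beta\,u_+^{\otimes8}+\overline\alpha\,u_-^{\otimes8}.
\]
If $\beta=\overline\alpha$ then $h=\overline h$, so $h$ is real, and we are done. Otherwise I will build a gadget from $h$ and $\overline h$ whose membership in $\mathcal U^{\otimes}$ fails; hardness then follows from the induction hypothesis, \cref{lm: base case 2n=4} or \cref{lm: base case 2n=2}.

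My first attempt will be a \emph{self-mating}: connect $h$ with a second copy of $h$ (not $\overline h$) along seven variables using $=_2$.
\begin{itemize}
\item By \eqref{eq: isotropic unary basis}, the pure $u_+^{\otimes 8}\!\cdot u_+^{\otimes8}$ and $u_-^{\otimes 8}\!\cdot u_-^{\otimes8}$ terms vanish.
\item The cross term $u_+^{\otimes8}\!\cdot u_-^{\otimes8}$ contributes $2^7\alpha\beta\,u_+^{\mathtt T}u_-$.
\item The $f_8\!\cdot f_8$ term is a multiple of $=_2$.
\item The mixed terms $f_8\!\cdot u_\pm^{\otimes8}$ contribute multiples of $u_\mp^{\mathtt T}u_\mp$, with coefficients proportional to $\alpha$ and $\beta$. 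These are computed from the contraction of $f_8$ with $u_\pm^{\otimes 7}$.
\end{itemize}
The result is a binary signature $b$ with $M(b)=aI_2+p\,u_+^{\mathtt T}u_++q\,u_-^{\mathtt T}u_-$ plus a cross term. If needed I will use instead the analogous contraction of $h$ with $\overline h$ on six variables, giving a $4$-ary signature. Requiring $M(b)\in\mathcal U$, i.e.\ projective unitarity, together with the two circle equations should force $|\alpha|=|\beta|$ and then $\beta=\overline\alpha$. The alternative is that $b$ is not projectively unitary, which gives hardness by \cref{lm: base case 2n=2}.

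Finally, given $\beta=\overline\alpha$ on the circle, I choose $\theta$ with $\tfrac1{16}(e^{8\mathfrak i\theta}-1)=\alpha$. Then $h=R_\theta^{\otimes8}f_8$ by the first step, after the normalization already fixed in \cref{lm: f8 normalized equations}.

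The main obstacle is the middle step: finding a gadget whose non-membership in $\mathcal U^{\otimes}$ is exactly equivalent to $\beta\neq\overline\alpha$, and computing the contractions of $f_8$ against $u_\pm^{\otimes k}$ accurately. If the seven-fold self-mating turns out to be degenerate, I would move to six-fold contractions of $h$ with $\overline h$. For those I would test $\mathcal U^{\otimes}$ membership through the rank of $M_{12,34}$, in the spirit of \cref{lm: 2nd-orth mating reducible}.
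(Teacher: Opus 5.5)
\textbf{Gap: both gadgets you propose are degenerate.} Your orbit computation,
$R_\theta^{\otimes8}f_8=f_8+\tfrac1{16}(e^{8\mathfrak i\theta}-1)u_+^{\otimes8}+\tfrac1{16}(e^{-8\mathfrak i\theta}-1)u_-^{\otimes8}$, is correct and matches how the paper finishes. The gap is in the middle step, which is the actual content of the lemma: neither of your gadgets can force $\beta=\overline\alpha$.

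\emph{Seven-fold self-mating.} The distance-$4$ property gives $\sum_z f_8(x,z)f_8(y,z)=8\delta_{xy}$. One also checks $\sum_z u_\pm^{\otimes7}(z)f_8(y,z)=8u_\mp(y)$, so the mixed terms are multiples of $u_\pm^{\mathtt T}u_\mp$, not $u_\mp^{\mathtt T}u_\mp$. Since $u_+^{\mathtt T}u_-+u_-^{\mathtt T}u_+=2I_2$, the resulting binary is $(8+2t)I_2$ with $t=8(\alpha+\beta+16\alpha\beta)$. This lies in $\mathcal U$ for every $\alpha,\beta$, so it gives no information.

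\emph{Six-fold contraction of $h$ with $\overline h$.} This is precisely the mating $\mathfrak m_{12}h$. Its matrix $M_{12}(h)M_{12}(h)^\dagger$ is scalar by the very hypothesis \eqref{eq: f8 second orth circles}, so it is degenerate too.

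\emph{The $|\alpha|=|\beta|$ route.} Even if you obtained $|\alpha|=|\beta|$, the circle equations only give $\operatorname{Re}\alpha=\operatorname{Re}\beta$, i.e.\ $\beta\in\{\alpha,\overline\alpha\}$. This cannot exclude $\beta=\alpha\notin\mathbb R$.

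\textbf{The missing idea.} Use the six-fold \emph{self}-mating of $h$ with $h$ (not with $\overline h$), leaving two variables of each copy dangling. This gives a $4$-ary signature $g_4$ with
\[
M_{12,34}(g_4)=4I_4+t\begin{bmatrix}1&0&0&-1\\0&1&1&0\\0&1&1&0\\-1&0&0&1\end{bmatrix}.
\]
Unless the problem is already hard, $g_4\in\mathcal U^{\otimes}$. Then one of its three $2|2$ flattenings has rank one. The $2\times2$ minors rule out the pairings $12|34$ and $14|23$, and the pairing $13|24$ forces $t(4+t)=0$.

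Now put $z=1+16\alpha$ and $w=1+16\beta$. The circle equations say exactly $|z|=|w|=1$, while $zw=1+2t$. The value $t=-4$ would give $|zw|=7$, so $t=0$ and $zw=1$. Hence $w=\overline z$, i.e.\ $\beta=\overline\alpha$. Choosing $z=e^{8\mathfrak i\theta}$, your orbit formula then gives $h=R_\theta^{\otimes8}f_8$.

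The quantity that detects $\beta\neq\overline\alpha$ is the product $zw$, not a comparison of $|\alpha|$ with $|\beta|$.
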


\begin{proof}
Take two copies of $h$, join six corresponding variables using $(=_2)$,
and leave the first two variables of each copy dangling.  
(Notice that this is a self-mating, not mating $h$ and $\overline{h}$.)
Let $g_4$ be the
resulting four-ary signature.  With row order $00,01,10,11$ on the two
variables from the first copy and the same column order on those from the
second copy, substitution of \eqref{eq: f8 kernel family} gives
\begin{equation}\label{eq: f8 self mating matrix}
     M_{12,34}(g_4)=
     \begin{bmatrix}
     4+t&0&0&-t\\
     0&4+t&t&0\\
     0&t&4+t&0\\
     -t&0&0&4+t
     \end{bmatrix},\qquad t=8(\alpha+\beta+16\alpha\beta).
\end{equation}

The signature $g_4$ is nonzero.  By \cref{lm: base case 2n=4},  $g_4\in\mathcal U^{\otimes}$ unless $\Holant(\mathcal{F})$ is \#P-hard. 
An arity $4$ signature in $\mathcal U^{\otimes}$ has rank one in the $2|2$ flattening associated with its binary factorization. 
There are three possibilities, $M_{12,34}(g_4)$, or $M_{13,23}(g_4)$, or $M_{14,23}(g_4)$ has rank $1.$  
If $M_{12,34}(g_4)$ has rank 1, by evaluating all 2 by 2 minors in \eqref{eq: f8 self mating matrix}, we have $(4+t)^2=t^2=0$, which is impossible.
In the second case, evaluate all 2 by 2 minors in
    \[M_{13,24}(g_4)
    =
    \begin{bmatrix}
    4+t&0&0&4+t\\
    0&-t&t&0\\
    0&t&-t&0\\
    4+t&0&0&4+t
    \end{bmatrix},\]
we have $t(4+t)=0$.  
In the third case, 
    \[M_{14,23}(g_4)=
    \begin{bmatrix}
    4+t&0&0&t\\
    0&-t&4+t&0\\
    0&4+t&-t&0\\
    t&0&0&4+t
    \end{bmatrix}.\]
gives $(4+t)t=0$ and $t^2=(4+t)^2$, again impossible.

Thus, only second case is possible.
And $t=0$ or $t=-4.$
Now define
\(
 z=1+16\alpha,\, w=1+16\beta.
\)
The equations in \eqref{eq: f8 second orth circles} gives 
\(
|z|^2
=|1+16\alpha|^2
=1+32\operatorname{Re}(\alpha)+256|\alpha|^2
=1.
\)
Similarly, $|w|=1.$
Also, \eqref{eq: f8 self mating matrix} gives
\(
zw=(1+16\alpha)(1+16\beta)
=1+16(\alpha+\beta)+256\alpha\beta=1+2t.
\)
The alternative $t=-4$ would imply $|zw|=7$, a contradiction.
Thus $t=0$, $zw=1$, and $w=\overline z$. 
Hence
$\beta=\overline\alpha$.

Choose $\theta\in\mathbb R$ with $z=e^{8\mathfrak i\theta}$.
Then $\alpha=\frac{e^{8 \mathfrak i\theta}-1}{16}$ and $\beta=\frac{e^{-8 \mathfrak i\theta}-1}{16}$.
And
\begin{equation}
 h= f_8+\frac{e^{8 \mathfrak i\theta}-1}{16}u_+^{\otimes8}+\frac{e^{-8 \mathfrak i\theta}-1}{16} u_-^{\otimes8}.
 \label{eq: h expansion f8 u+ u-}
\end{equation}

Because the real orthogonal holographic transformation by $R_\theta$ preserves $=_2$, we have $\partial_{ij}(R_\theta f_8)=R_\theta\partial_{ij}f_8=R_\theta(=_2)^{\otimes 3}=(=_2)^{\otimes 3}=\partial_{ij}f_8$, for every pair of distinct indices $\{i,j\}.$ 
By \cref{lm: f8 normalized equations}, 
we know
\(R_\theta^{\otimes 8}f_8-f_8
=
A\,u_+^{\otimes 8}+B\,u_-^{\otimes 8}\) for some $A,B\in \mathbb C.$
Next we determine the coefficients $A$ and $B.$
In the following, view $u_+,u_-$ as vectors in $\mathbb C^2$, and $f_8$ as a vector in $\mathbb C^{256}.$
Consider \((u_-^{\otimes8})^{\tt T}\left(Au_+^{\otimes8}+Bu_-^{\otimes8}\right)=A\cdot 2^8=256A.\)
On the other hand, \((u_-^{\otimes8})^{\tt T}R_\theta^{\otimes8}f_8=e^{8i\theta}(u_-^{\otimes8})^{\tt T}f_8,\)
since $R_\theta u_\pm=e^{\pm\mathfrak i\theta}u_\pm$.
Direct calculation shows \((u_-^{\otimes8})^{\tt T}f_8=16.\)
So $(u_-^{\otimes 8})^{\tt T}(R_\theta^{\otimes 8}f_8-f_8)=16\cdot (e^{8 \mathfrak i\theta}-1)$.
Therefore, $A=\frac{e^{8\mathfrak i \theta}-1}{16}$.
Similarly, $B=\frac{e^{-8\mathfrak i \theta}-1}{16}$. 
Thus,
\begin{equation}
 R_\theta^{\otimes8}f_8
 =f_8+\frac{e^{8\mathfrak i\theta}-1}{16}u_+^{\otimes8}
      +\frac{e^{-8\mathfrak i\theta}-1}{16}u_-^{\otimes8}.
 \label{eq: rotated f8 expansion}
\end{equation}
Equations \eqref{eq: h expansion f8 u+ u-} and
\eqref{eq: rotated f8 expansion} prove \eqref{eq: f8 real orbit}.
\end{proof}

\begin{theorem}[Realization of $f_8$]
\label{thm: realize f8}
Let $\mathcal F=\overline{\mathcal F}$ fail condition
\rm{(\ref{cond: tractable classes})}, and suppose that
$f\in\mathcal F$ has arity eight and
$f\notin\mathcal U^{\otimes}$.  Then either
$\Holant(\mathcal F)$ is \#P-hard, or there is a real orthogonal matrix $R$ such that
\begin{equation}
 \Holant(f_8,R^{-1}\mathcal F)\leq_T\Holant(\mathcal F).
 \label{eq: realize f8 reduction}
\end{equation}
\end{theorem}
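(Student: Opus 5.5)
We chain the lemmas of this subsection. Start from $f\in\mathcal F$ of arity eight with $f\notin\mathcal U^\otimes$ and apply \cref{lm: unitary wire normalization} with $2n=8$. Its first alternative is \#P-hardness, and we are done. Its second alternative gives a realizable $g\notin\mathcal U^\otimes$ of arity $2k\le 6$. In that case we invoke the already-completed lower-arity cases:
\begin{itemize}
\item arity $2$: \cref{lm: base case 2n=2};
\item arity $4$: \cref{lm: base case 2n=4};
\item arity $6$: \cref{lm: induction 2n=6}, applied to $\mathcal F\cup\{g\}$.
\end{itemize}
Adjoining $g$ should preserve conjugate closure because $\overline g$ is realized by the conjugated gadget. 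It also keeps us outside (T), since $\mathcal F\cup\{g,\overline g\}\supseteq\mathcal F$. All three cases give hardness.

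So we may assume the third alternative: an irreducible arity-$8$ signature $h=f^\ast$ with the strong equality property is realizable from $f$. Two bookkeeping points need care here. First, $f^\ast$ is obtained from $f$ by binary extensions with realizable unitary binaries, so $\Holant(\mathcal F\cup\{h\})\le_T\Holant(\mathcal F)$. Second, the later lemmas are stated for $h\in\mathcal F$, so we work over $\mathcal F'=\mathcal F\cup\{h,\overline h\}$. This set is conjugate closed, fails (T), and is $\equiv_T\Holant(\mathcal F)$.

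We then apply \cref{lm: f8 normalized equations} to $h$. Since $h$ is irreducible, by \cref{lm: not 2nd-orth is hard} we may assume it satisfies {\sc 2nd-Orth}. After normalization this yields \eqref{eq: f8 kernel family} together with the circle conditions \eqref{eq: f8 second orth circles}. \cref{lm: f8 self mating} then gives $h=\lambda R_\theta^{\otimes8}f_8$ for some real $\theta$, the scalar $\lambda$ coming from the normalization.

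Now set $R=R_\theta$, which is real orthogonal. We perform the holographic transformation by $R^{-1}$ on the bipartite form $\holant{=_2}{\mathcal F'}$. Because $R$ is orthogonal, $(=_2)R=(=_2)$, so by \cref{thm: holographic transformation}
\[
\Holant(\mathcal F')\equiv_T\Holant(R^{-1}\mathcal F').
\]
Moreover $R^{-1}h=\lambda f_8$, so $\Holant(f_8,R^{-1}\mathcal F)\le_T\Holant(R^{-1}\mathcal F')\equiv_T\Holant(\mathcal F)$, which is \eqref{eq: realize f8 reduction}.

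The main obstacle is the consistency of the hardness hypotheses when the later lemmas are applied to the enlarged set $\mathcal F'$ rather than to $\mathcal F$ itself. Those lemmas assume conjugate closure and failure of (T), and I would check both explicitly for $\mathcal F'$; failure of (T) is inherited from the subset $\mathcal F$. I would also note that the conclusion only records $R^{-1}\mathcal F$, which is what is needed downstream, since \cref{lem: invariance under real orthogonal transformation} shows that $R^{-1}\mathcal F$ still fails (T).
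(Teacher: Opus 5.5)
Your proposal is correct and follows essentially the same route as the paper: you invoke the unitary-wire normalization lemma and dispose of its lower-arity outcome via the arity $2,4,6$ cases. You then apply the normalized-equations and self-mating lemmas to get $h\sim R_\theta^{\otimes 8}f_8$, and finish with the holographic transformation by $R^{-1}$, which fixes $=_2$ because $R$ is orthogonal.

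The only differences are cosmetic. The paper first reduces explicitly to an irreducible $f$ satisfying {\sc 2nd-Orth} with every $\partial_{ij}f\in\mathcal U^{\otimes}$ before calling the normalization lemma. You instead let that lemma absorb these steps and recover {\sc 2nd-Orth} for $h$ from its irreducibility. You are also more explicit about working over the enlarged conjugate-closed set $\mathcal F\cup\{h,\overline h\}$.
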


\begin{proof}
If $f$ is reducible, \cref{lm: decomposition} makes all its factors
available.  An odd-arity factor gives hardness by
\cref{lm: odd holant with conjugation}.  Otherwise the factors have even
arity, and at least one is outside $\mathcal U^{\otimes}$.  Since the
proper even arities below eight are $2,4,6$, hardness follows from
\cref{lm: base case 2n=2,lm: base case 2n=4,lm: induction 2n=6}.
Thus we may assume that $f$ is irreducible.

If $f$ fails {\sc 2nd-Orth}, hardness follows from
\cref{lm: not 2nd-orth is hard}.  If some
$\partial_{ij}f\notin\mathcal U^{\otimes}$, it is a six-ary obstruction
and \cref{lm: induction 2n=6} applies. 
In the remaining case apply
\cref{lm: unitary wire normalization}.  Its lower-arity outcome is again
handled by \cref{lm: induction 2n=6}; otherwise it gives an arity-$8$
signature $h$ satisfying {\sc 2nd-Orth} and the strong equality property.
Now \cref{lm: f8 normalized equations,%
lm: f8 self mating} give
$h\sim R^{\otimes8}f_8$ for some real orthogonal $R$.

View the problem as $\holant{=_2}{\mathcal F}$ and apply
\cref{thm: holographic transformation} with $T=R^{-1}$.  Orthogonality
gives $(=_2)R^{\otimes2}=(=_2)$, while the transformed signature $h$ is
$f_8$ up to a nonzero scalar.  
So
\[
 \Holant(f_8,R^{-1}\mathcal F)
 \equiv_T\Holant(h,\mathcal F)
 \leq_T\Holant(\mathcal F).
\]
This proves
\eqref{eq: realize f8 reduction}.

For later use, the transformed set $R^{-1}\mathcal F$ is still conjugate
closed because $R$ is real.  Failure of condition
\rm{(\ref{cond: tractable classes})} is also preserved.  Tensor
factorization is invariant under a local invertible transformation; and
if $R^{-1}\mathcal F$ were $\mathscr C$-transformable for one of
$\mathscr P,\mathscr A,\mathscr L$, composing its witnessing matrix with
$R^{-1}$ would show that $\mathcal F$ is $\mathscr C$-transformable.
\end{proof}

\subsection{Hardness When \texorpdfstring{$f_8$}{f8} is Available}
\label{subsec: f8 available hardness}
Recall $\mathcal B=\{b_0, b_1, b_2, b_3\}$, where
\begin{equation}
 M(b_0)=I_2,\quad M(b_1)=X,\quad M(b_2)=Z,\quad M(b_3)=J.
 \label{eq: Bell signature set}
\end{equation}
These are the four Bell signatures.  

\begin{definition}[Strong Bell property]\label{def-strong-bell}
    A signature $f$ satisfies the strong Bell property if for all pairs of indices $\{i, j\}$, and every $b\in \mathcal{B}$, the signature $\partial_{ij}^b f$  realized by merging $x_i$ and $x_j$ of $f$ using $b$ is in $\{b\}^{\otimes}.$
\end{definition}

It is proved in \cite{realholant} that $f_8$ satisfies the strong Bell property.

Let $\mathcal G$ denote the closure of $\mathcal{F}$ under constant-size
gadgets, nonzero scalar multiples, argument permutations, conjugation, and
the factor extraction supplied by \cref{lm: decomposition}.  
If $\mathcal{F}$ is conjugate closed, then so is $\mathcal G$.

\begin{lemma}[Non-$\mathcal{B}$ hard]
\label{lm: f8 complex non Bell}
Let $\mathcal H=\overline{\mathcal H}$ fail condition
\rm{(\ref{cond: tractable classes})} and contain $f_8$.  If the closure of
$\mathcal H$ contains a nonzero binary signature $b$ that is not
projectively in $\mathcal B$, then $\Holant(\mathcal H)$ is \#P-hard.
\end{lemma}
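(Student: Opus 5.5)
The plan is to mirror the "non-$\Gamma$ hard'' argument (\cref{thm:f6-nongamma-hard}), with $f_8$ in place of $f_6$ and $\mathcal B$ in place of $\Gamma$. The intended reduction is this. Given a nonzero binary $b$ in the closure with $[b]\notin\mathcal B$, contract it into two variables of $f_8$, forming $g=\partial_{12}^{b}f_8$. This is a $6$-ary signature realizable from $\mathcal H$. The aim is to show $g\notin\mathcal U^{\otimes}$; hardness then follows from \cref{lm: induction 2n=6} (or its subcases \cref{lm: base case 2n=2,lm: base case 2n=4} after factoring). The conjugate closure of $\mathcal H$ and the failure of \eqref{cond: tractable classes} are exactly the standing hypotheses those lemmas need.

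First, write $M(b)=\sum_{s=0}^3 c_s M(b_s)$ in the Bell basis; this is possible since $I,X,Z,J$ span $\mathbb C^{2\times2}$. By linearity of merging and the strong Bell property of $f_8$, we get $\partial_{12}^{b}f_8=\sum_s c_s\,\partial_{12}^{b_s}f_8$, and each $\partial_{12}^{b_s}f_8\in\{b_s\}^{\otimes}$ is an explicit product of three copies of $b_s$ on a perfect matching of $\{3,\dots,8\}$. Using the $RM(1,3)$ description of $\mathscr S(f_8)$, we then compute these four matchings and signs concretely, in the same way as in the proof of \cref{lm: f8 indistinguishble} for $b_0$. For $b_0$ the matching is the translation matching $F_e$; the other Bell signatures should give matchings and phases determined by flipping coordinates in the code.

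Second, show that a linear combination $\sum_s c_s T_s$ with at least two nonzero $c_s$ is not a tensor product of three binaries, nor of any factorization into lower arities. The four terms $T_s$ are linearly independent, since their $\partial$-images differ. If they share a common matching, the sum is $\sum_s c_s b_s^{\otimes3}$ on that matching. This has rank $\ge2$ in the flattening separating one pair from the other four variables, unless it is a product. Being a product would force $\sum_s c_s b_s\otimes b_s\otimes b_s$ to be a pure tensor, which happens only when a single $c_s$ is nonzero. If the matchings differ, use the same minor/rank argument on the relevant $2|4$ flattenings. If $g$ is reducible with a factor outside $\mathcal U$, we again win by induction. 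So the only remaining alternative is that exactly one $c_s\ne0$, i.e.\ $[b]\in\mathcal B$, which contradicts the assumption.

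The main obstacle is the second step: organizing the rank/flattening computation cleanly when the Bell matchings of $\partial_{12}^{b_s}f_8$ are not all the same. If the direct route is messy, the fallback I would try first is to contract again. Specifically, compute $\partial_{34}^{b'}\partial_{12}^{b}f_8$ for suitable Bell $b'$ to reach a $4$-ary signature of the form $\sum_s c_s(\pm)b_s\otimes b_s$ on a fixed pairing. That signature is not in $\mathcal U^{\otimes}$ unless a single $c_s$ survives, and then \cref{lm: base case 2n=4} gives hardness directly.
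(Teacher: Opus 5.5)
Your reduction is the same as the paper's: contract $b$ into two coordinates of $f_8$, show the $6$-ary result $g$ lies outside $\mathcal U^{\otimes}$, then invoke \cref{lm: induction 2n=6}. Your Bell-basis computation also works out more cleanly than you feared. Contracting coordinates $z_0,z_0+v$ of $RM(1,3)$ with any Bell signature forces the same relation on every translate $\{z,z+v\}$. So all four $\partial^{b_s}f_8$ use the same translation matching, and, writing $M(b)=\sum_s c_sM_s$,
\[
g=\sum_{s=0}^{3}c_s\,M_s(y_1,z_1)\,M_s(y_2,z_2)\,M_s(y_3,z_3),\qquad M_s\in\{I,X,Z,J\}
\]
(for the pair $000,100$, take $y_k=x_w$, $z_k=x_{w+100}$ with $w\in\{001,010,011\}$). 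The ``matchings differ'' branch therefore never arises.

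The gap is the next inference: ``being a product would force $\sum_s c_s b_s^{\otimes 3}$ to be a pure tensor.'' You only examine flattenings along the pairs $\{y_k,z_k\}$. But $g\in\mathcal U^{\otimes}$ only says that $g$ is a product of three binaries along \emph{some} perfect matching of its six variables, and you never exclude the other fourteen. This is not a formality. In the $4$-ary analogue, $\sum_{s}M_s(y_1,z_1)M_s(y_2,z_2)=2\,(=_2)(y_1,y_2)\,(=_2)(z_1,z_2)\in\mathcal U^{\otimes}$ even though all four coefficients are nonzero.

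What saves the $6$-ary case is the spare third pair. Take a pair of variables not of the form $\{y_k,z_k\}$, say $\{y_1,y_2\}$ or $\{y_1,z_2\}$, and split its $4\times 16$ flattening into column blocks indexed by $(y_3,z_3)$. Since $\mathrm{vec}(M_0),\ldots,\mathrm{vec}(M_3)$ are linearly independent, linear combinations of these blocks isolate $c_t\,(M_t\otimes M_t)$, respectively $c_t\,(M_t\otimes M_t^{\mathsf T})$. That matrix is invertible, so the flattening has rank $4$ as soon as $g\neq 0$. Hence $g$ can factor only along the pairs $\{y_k,z_k\}$. There the flattening has rank equal to the number of nonzero $c_s$, so $g\in\mathcal U^{\otimes}$ forces exactly one $c_s\neq0$.

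With this step added, your proof is complete and genuinely different from the paper's. The paper first disposes of $b\notin\mathcal U$ via \cref{lm: base case 2n=2}, then splits on the zero pattern of the unitary $M(b)$. When all entries are nonzero it uses the support count $16=2\cdot2\cdot4$ together with surjectivity of two-coordinate projections. In the diagonal and antidiagonal cases it uses a ratio argument on the forced pairing. Your linear-algebra version is uniform and needs neither unitarity nor a case split.

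Your fallback, however, should be dropped:
\begin{itemize}
\item Contracting a pair $\{y_k,z_k\}$ with a Bell signature keeps a single term, because $\operatorname{Tr}(M_s^{\mathsf T}M_t)=0$ for $s\neq t$.
\item Contracting a crossing pair merely returns a Bell signature tensored with $b$ up to Bell multiplications and transposition.
\item By the example above, a $4$-ary sum $\sum_s c_s(\pm)\,b_s\otimes b_s$ with several nonzero $c_s$ can lie in $\mathcal U^{\otimes}$.
\end{itemize}
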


\begin{proof}
If $b\notin\mathcal U$, hardness follows from
\cref{lm: base case 2n=2}.  Hence normalize $B=M(b)$ to be unitary.
Contract $b$ into the two coordinates $000$ and $100$ of $f_8$, and call
the resulting six-ary signature $g$.

First suppose that all four entries of $B$ are nonzero.  Puncturing the
two chosen coordinates is injective on the distance-$4$ code $C$, so
$|\mathscr S(g)|=16$.  The projection of $C$ onto any two of the six
remaining coordinates is surjective: an affine Boolean function can take
arbitrarily prescribed values at two distinct points.  Therefore
$\mathscr S(g)$ satisfies no fixed equality or disequality relation on
any pair of variables.

If $g\in\mathcal U^{\otimes}$, its support size is the product of the
support sizes of three projectively unitary binary factors.  Such a factor
has support size two when its matrix contains a zero and support size four
otherwise.  The equation $16=2\cdot2\cdot4$ forces two sparse factors,
and each sparse unitary factor imposes an equality or disequality relation
on its two variables.  This contradicts the preceding surjectivity.
Thus $g\notin\mathcal U^{\otimes}$.

If a $2$-by-$2$ unitary matrix contains a zero, it is diagonal or
anti-diagonal.  Suppose first that $B=\operatorname{diag}(a,d)$, where
$|a|=|d|\neq0$.  For the remaining triples
$y=(x_2,x_3,x_4)$ and $z=(x_6,x_7,x_8)$, the code definition gives
\begin{equation}
 g(y,z)=
 \begin{cases}
 a,&z=y\text{ and }\operatorname{wt}(y)\text{ is even},\\
 d,&z=y\text{ and }\operatorname{wt}(y)\text{ is odd},\\
 0,&z\neq y.
 \end{cases}
 \label{eq: diagonal Bell test}
\end{equation}
Here $|\mathscr S(g)|=8$, so a factorization in
$\mathcal U^{\otimes}$ must have three support-$2$ factors.  The support
in \eqref{eq: diagonal Bell test} forces their unique pairing to be
$(x_2,x_6),(x_3,x_7),(x_4,x_8)$, and all three factors are diagonal.
Let $r_k$ be the $11/00$ ratio of the $k$th factor.  The three weight-two
words give $r_1r_2=r_1r_3=r_2r_3=1$.  Hence
$r_1=r_2=r_3=\varepsilon$ with $\varepsilon^2=1$, and a weight-one word
gives $d/a=\varepsilon$.  Thus $b$ is projectively $b_0$ or $b_2$.

Now let
$B=\left[\begin{smallmatrix}0&a\\d&0\end{smallmatrix}\right]$.
Then
\[
 g(y,z)=
 \begin{cases}
 a,&z=y\oplus111\text{ and }\operatorname{wt}(y)\text{ is even},\\
 d,&z=y\oplus111\text{ and }\operatorname{wt}(y)\text{ is odd},\\
 0,&z\neq y\oplus111.
 \end{cases}
\]
The support again has size eight, so a factorization in
$\mathcal U^{\otimes}$ has three anti-diagonal factors with the unique
pairing above.  If $r_k$ is the ratio of the factor value for $y_k=1$ to
that for $y_k=0$, the three weight-two inputs give
$r_1r_2=r_1r_3=r_2r_3=1$.  Thus every $r_k$ equals one sign
$\varepsilon$, and a weight-one input gives $d/a=\varepsilon\in\{1,-1\}$.
These are exactly the projective classes of $b_1$ and $b_3$.
Consequently
\begin{equation}
 \partial_{000,100}^{b}f_8\in\mathcal U^{\otimes}
 \quad\Longleftrightarrow\quad
 b\text{ is projectively in }\mathcal B.
 \label{eq: non Bell iff six obstruction}
\end{equation}
The assumed $b$ therefore yields a six-ary signature outside
$\mathcal U^{\otimes}$, and \cref{lm: f6 is hard} proves hardness.
\end{proof}

For a signature $b$, let $\Holant(b^{\leq k},\mathcal H)$ denote the restriction in which at most $k$ vertices are labeled by $b$.
The following lemma and its proof are exactly the same as
\cite[Lemma~8.9]{realholant}.

\begin{lemma}
\label{lm: f8 three Bell copies}
    For every $b\in\mathcal B$,
    $
     \Holant(b,f_8,\mathcal F)
     \leq_T\Holant(b^{\leq2},f_8,\mathcal F).
    $
\end{lemma}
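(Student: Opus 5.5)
We show that any occurrence of $b$ beyond the first two can be replaced by a constant-size gadget built from two copies of $b$ together with copies of $f_8$, at the cost of a nonzero scalar. Iterating this removes all but at most two copies of $b$ from an instance.

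The key step is to realize three copies of $b$ using only two copies of $b$ and $f_8$, or more directly to realize the tensor $b\otimes b$ with two copies. The strong Bell property of $f_8$ is the tool. Merging $x_i,x_j$ of $f_8$ with $b$ gives $\partial^b_{ij}f_8\in\{b\}^{\otimes}$, an arity six signature equal, up to a nonzero scalar, to $b^{\otimes 3}$ on some perfect matching of the six remaining variables. For $b=b_0$ this is exactly the strong equality property of $f_8$. For $b\in\{b_1,b_2,b_3\}$ it is the strong Bell property proved in \cite{realholant}.

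\begin{enumerate}
\item Take one vertex labeled $f_8$ and connect two of its variables through a single copy of $b$. The resulting gadget uses exactly one $b$ and realizes $\lambda\, b^{\otimes 3}$ with $\lambda\neq0$, since $\partial^b_{ij}f_8$ is nonzero by \textsc{2nd-Orth} of $f_8$.
\item Given an instance $\Omega$ of $\Holant(b,f_8,\mathcal F)$ with $m$ vertices labeled $b$, group these vertices in threes.
\item Each group of three $b$'s is a subgraph computing $b^{\otimes 3}$ on six dangling edges. Replace it by the gadget from the first step, which uses one $b$ and multiplies the value by $\lambda$.
\item Each replacement strictly decreases the number of $b$-vertices: three become one, a net decrease of two. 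Repeat until at most two copies of $b$ remain.
\end{enumerate}

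If $m$ is even, the final count is at most $2$. If $m$ is odd, it is at most $1$, because each round changes the count by two, so the parity of $m$ is preserved. The number of rounds is linear in the instance size, so the new instance has polynomial size, and the Holant value changes only by the known factor $\lambda^{r}$, where $r$ is the number of rounds. Dividing by $\lambda^{r}$ gives the reduction $\Holant(b,f_8,\mathcal F)\le_T\Holant(b^{\le 2},f_8,\mathcal F)$.

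The main point to check is that the replacement is legitimate. The $b^{\otimes 3}$ produced by the gadget pairs the six dangling edges according to a specific perfect matching $M$ on the remaining variables of $f_8$. The three copies of $b$ in the instance are arbitrary edges of the graph, so we attach the six endpoints of those three edges to the six dangling variables according to $M$.

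One subtlety is the orientation of each asymmetric $b$, namely $b_3$, for which $M(b)^{\mathtt T}=-M(b)$. The factor $b$ on a matched pair $(x_u,x_v)$ might appear transposed relative to the edge it replaces. This costs only a sign, which can be absorbed into the scalar, or avoided by swapping which endpoint is attached to which dangling variable.

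So the main obstacle is bookkeeping: we use the explicit matching from the strong Bell property, which is known for $f_8$, to confirm that the gadget gives $b^{\otimes 3}$ and not some other binary on some pair. Everything else is routine.
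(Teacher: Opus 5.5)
Your proposal is correct and follows essentially the same argument the paper imports from \cite[Lemma~8.9]{realholant}: by the strong Bell property, $\partial^b_{ij}f_8$ is a nonzero multiple of $b^{\otimes 3}$, so one copy of $b$ together with one $f_8$ vertex replaces three copies of $b$, and iterating leaves at most two. Your treatment of the orientation of $b_3$ is fine, and since the replacement substitutes one tensor for another on the six half-edges, it stays valid even when some of the chosen $b$-vertices are adjacent.
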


The following lemma and its two-step oracle reconstruction are essentially
the same as \cite[Lemma~8.10]{realholant}.  We give the proof because the
present version uses \cref{lm: f8 complex non Bell} for complex binary
signatures and must track the known orientation sign of $b_3$.

\begin{lemma}
\label{lm: f8 Bell simulation}
If $\mathcal F=\overline{\mathcal F}$ fails condition
\rm{(\ref{cond: tractable classes})}, then
\(
 \Holant(\mathcal B,f_8,\mathcal F)
 \leq_T\Holant(f_8,\mathcal F).
 \label{eq: simulate Bell f8}
\)
\end{lemma}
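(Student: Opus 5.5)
We want to simulate each Bell signature $b\in\mathcal B$ using only $f_8$ and $\mathcal F$. By \cref{lm: f8 three Bell copies}, it suffices to handle instances in which each $b$ appears at most twice. So the plan is a two-step oracle reconstruction: first simulate one copy of $b$, then two copies. This follows \cite[Lemma~8.10]{realholant}. The new ingredients are that binary signatures are now complex, which is handled by \cref{lm: f8 complex non Bell}, and that $b_3$ carries a sign under orientation.

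\textbf{Step 1: realize a binary.} First we show that some nonzero binary signature $c$ realizable from $\{f_8\}\cup\mathcal F$ is projectively $b_0$, $b_1$, $b_2$ or $b_3$; otherwise \cref{lm: f8 complex non Bell} already gives hardness. Since $\mathcal F$ fails (T), it contains a signature outside $\mathscr A$ (else $\mathcal F\subseteq\mathscr A$ would be tractable). We reduce arity as in the induction framework and factor via \cref{lm: decomposition}. Every binary signature we obtain is either projectively in $\mathcal B$ or yields hardness.

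\textbf{Step 2: remove one copy.} Take an instance $\Omega$ of $\Holant(b^{\le 2},f_8,\mathcal F)$. If $b$ occurs once, say on edge pair $(u,v)$, we use the strong Bell property of $f_8$: attaching $f_8$ and merging with available signatures lets us express $\Holant(\Omega)$ as a linear combination of Holant values of instances that contain a different, realizable binary from the Klein group in place of $b$. Concretely, cut the $b$-edge. The resulting function is a binary $\phi(u,v)$, and $\Holant(\Omega)=\langle \phi, b\rangle$. Since $\{b_0,b_1,b_2,b_3\}$ is an orthogonal basis of $2\times2$ matrices, $\langle\phi,b\rangle$ is determined by the four values $\langle\phi,b_s\rangle$. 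We aim to obtain these values through an oracle using a gadget built from $f_8$.

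Merging two variables of $f_8$ by $b_s$ yields $b_s^{\otimes3}$. Contracting two of these copies away leaves a single $b_s$, up to a known scalar: loops of $b_s$ give $\mathrm{Tr}(M(b_s)^{\tt T}M(b_s))$, which equals $\pm2$, and the sign is $-$ for $b_3$ under reversal. We therefore need $b_s$ itself in order to merge, and this is circular. The way out is as in the real case: only $b_0=(=_2)$ and the already-realized $c$ are used for the merges. By strong Bell, $\partial^{c}f_8$ gives $c^{\otimes 3}$, and products of these give the subgroup generated by $c$. The combination with $f_8$'s Bell-symmetry identities $b_s^{\otimes 8}f_8=\pm f_8$ then lets us move a $b_s$ across $f_8$. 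Using a path of $f_8$-gadgets, each $b_s$ can be expressed up to sign as a 4-ary gadget with two dangling edges, with the extra edges fed back into $\Omega$. For the once-occurring $b$ this gives the needed value, with the orientation sign for $b_3$ tracked explicitly.

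\textbf{Step 3: two copies.} For two copies of $b$, apply the same cut to both edges. This gives a 4-ary $\phi$, and $\Holant=\langle\phi,b\otimes b\rangle$. Expanding in the 16-element basis $\{b_s\otimes b_t\}$, each term has at most one non-available Bell, or two copies of the same Bell, which we can replace by the $f_8$-gadget $\partial^{b_0}_{ij}$ producing $b\otimes b$ structure. We reduce to the one-copy oracle of Step 2.

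The main obstacle is Step 2: producing a single $b_s$ (not $b_s^{\otimes 3}$) without already having $b_s$. I expect to imitate the real-case trick: use the pair $b\otimes b$ hidden inside $f_8$ and cancel one copy against the instance's own $b$ via $b^{2}\sim b_0$. The complex setting only changes the $b_3$ sign bookkeeping and the use of \cref{lm: f8 complex non Bell} in place of its real counterpart.
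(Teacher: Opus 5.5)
Your proposal has a genuine gap, and it sits in Step~2 exactly where you flag it. Expanding $\langle\phi,b\rangle$ in the Bell basis needs the values $\langle\phi,b_s\rangle$ for the very Bell signatures you cannot yet realize, so the expansion is circular. The proposed escape cannot work either. By the strong Bell property, $\partial^{b_0}_{ij}f_8\in\{b_0\}^{\otimes}$ and $\partial^{c}_{ij}f_8\in\{c\}^{\otimes}$, and a single Bell matrix generates only the projective group $\{[b_0],[c]\}$ of order two. Moving Bell matrices across $f_8$ by its symmetries relocates them but never removes them.

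More fundamentally, in the case that matters (some $b_s$ not realizable), every binary or quaternary gadget over the available signatures lies in $\{b_0,c\}^{\otimes}$ up to scalar, or hardness already holds. So no ``path of $f_8$-gadgets'' or four-ary gadget fed back into $\Omega$ can express an unavailable $b_s$. The lemma has to be a Turing simulation, not a gadget construction. Relatedly, your Step~1 presupposes a realizable nonidentity Bell $c$, which need not exist: $(=_2)=b_0$ may be the only one, and that is precisely the case to handle. Your Step~3 also fails as stated, since $\partial^{b_0}_{ij}f_8$ gives $b_0^{\otimes3}$, not any $b\otimes b$ structure.

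The missing idea is to apply the hardness lemmas to the \emph{rest of the instance} rather than to synthesize the Bell signature. Assume $\Holant(f_8,\mathcal F)$ is not already \#P-hard, and remove the at most two occurrences of a target $c_1$; self-loops and edges joining the two copies are trivial. What remains is an $\{f_8\}\cup\mathcal F$-gate $\phi$ of arity two or four.

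If $\phi$ is binary, \cref{lm: f8 complex non Bell} makes it $0$ or projectively a \emph{realizable} Bell signature. With one occurrence, the answer is then a bilinear contraction of two distinct Bell signatures, which is $0$. If $\phi$ is quaternary, \cref{lm: base case 2n=4} and \cref{lm: decomposition} force $\phi=\lambda\,c(x_1,x_j)\,d(x_k,x_\ell)$ with $c,d$ realizable Bells. Its finitely many unknowns ($\lambda$, the pairing, and $c,d$) are recovered by a constant number of oracle calls that close the four dangling edges in each of the three pairings with $b_0$. When $c=d=b_0$ these return $4\lambda,2\lambda,2\lambda$; a few further calls use an already available nonidentity Bell. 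You then contract the two known $c_1$ vertices by hand, and \cref{lm: f8 three Bell copies} removes the occurrence bound.

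Do this first for $c_1$ when only $b_0$ is realizable, then for a second Bell $c_2$ over the $c_1$-augmented set. Connecting $c_1$ and $c_2$ along a path gives the fourth Bell signature.
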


\begin{proof}
Assume the problem on the right is not already \#P-hard.  By
\cref{lm: f8 complex non Bell}, every nonzero realizable binary signature
is projectively in $\mathcal B$.  By \cref{lm: base case 2n=4}, every
nonzero realizable four-ary signature lies in
$\mathcal U^{\otimes}$; its two binary factors are available by
\cref{lm: decomposition} and hence are Bell signatures by
\cref{lm: f8 complex non Bell}.

\emph{Step 1: simulate one nonidentity Bell signature $c_1$.}
If a nonidentity Bell signature is already realizable, use it.  Otherwise
every binary gadget over $\{f_8\}\cup\mathcal F$ is zero or a multiple of
$b_0$, and every four-ary gadget is zero or a multiple of
$b_0\otimes b_0$, with an unknown pairing of its external edges.  Fix any
$c_1\in\mathcal B\setminus\{b_0\}$ and consider an instance with at most
two occurrences of $c_1$.  A self-loop on one copy, or an edge directly
joining the two copies, reduces to a known scalar or to the binary case.
We may therefore assume that all their external half-edges meet the rest
of the instance.

With one occurrence, the rest of the instance is a binary gadget
$\lambda b_0$.  The bilinear contraction of two distinct Bell signatures
is zero, so the answer is zero.  With two occurrences, the rest is a
four-ary gadget
\begin{equation}
 h=\lambda b_0(x_1,x_j)b_0(x_k,x_\ell),
 \qquad \{j,k,\ell\}=\{2,3,4\}.
 \label{eq: unknown equality pairing}
\end{equation}
Close the four external edges in each of the three possible pairings using
$b_0=(=_2)$.  The three oracle values are $4\lambda$ for the actual
pairing and $2\lambda,2\lambda$ for the other two.  If all values vanish,
$h=0$; otherwise they determine both $\lambda$ and the pairing.  Since the
two $c_1$ vertices are known, their contribution can then be computed
exactly.  Thus
$\Holant(c_1^{\leq2},f_8,\mathcal F)\leq_T
  \Holant(f_8,\mathcal F)$, and
\cref{lm: f8 three Bell copies} removes the occurrence bound.

\emph{Step 2: simulate a second independent Bell signature $c_2$.}
Step 1 proves
$\Holant(c_1,f_8,\mathcal F)\leq_T\Holant(f_8,\mathcal F)$, so under the
standing assumption the $c_1$-augmented problem is not \#P-hard.  Work
over $\{c_1,f_8\}\cup\mathcal F$.  If another Bell signature is
realizable, use it.  Otherwise \cref{lm: f8 complex non Bell} shows that
every nonzero binary gadget is projectively in $\{b_0,c_1\}$.  Likewise,
\cref{lm: base case 2n=4} puts every nonzero four-ary gadget in
$\mathcal U^{\otimes}$, and \cref{lm: decomposition} extracts its binary
factors.  The binary conclusion just proved then shows that every such
four-ary gadget has the form
\begin{equation}
 h=\lambda c(x_1,x_j)d(x_k,x_\ell),
 \qquad c,d\in\{b_0,c_1\}.
 \label{eq: unknown two Bell pairing}
\end{equation}
Fix $c_2\in\mathcal B\setminus\{b_0,c_1\}$.  It again suffices, by
\cref{lm: f8 three Bell copies}, to handle at most two occurrences.
Self-loops and direct edges reduce to the already known arity-zero or
arity-two cases.  With one occurrence, the rest is $b_0$ or $c_1$, both
bilinearly orthogonal to $c_2$, so the answer is zero.

With two occurrences, identify \eqref{eq: unknown two Bell pairing} using
at most six oracle queries.  First close the three possible pairings using
$b_0$ on both closing edges.  Up to the fixed orientation signs, the
responses are
\[
\begin{array}{c|c}
(c,d)&\text{three responses}\\ \hline
(b_0,b_0)&(4\lambda,2\lambda,2\lambda)\\
(c_1,c_1)&(0,2\varepsilon\lambda,2\lambda)\\
(b_0,c_1)\text{ or }(c_1,b_0)&(0,0,0),
\end{array}
\]
where $\varepsilon=1$ for the symmetric matrices $X,Z$; when
$M(c_1)=J$, the sign and its position are fixed by the known argument
orientations.  Only the mixed case requires three more queries: for each
possible pairing, close the pair incident with $x_1$ using $c_1$ and the
other pair using $b_0$.  If the factor incident with $x_1$ is $b_0$, the
response pattern is $(0,2\varepsilon\lambda,2\lambda)$; if it is $c_1$,
the pattern is $(4\varepsilon\lambda,2\lambda,2\lambda)$, again with known
orientation signs.  These responses determine the pairing, the two
factors, and $\lambda$, unless they all vanish, in which case $h=0$.
The value after attaching the two known $c_2$ vertices is therefore
computable.  This simulates $c_2$.

The fourth Bell signature is obtained, up to a nonzero scalar and argument
orientation, by connecting $c_1$ and $c_2$ as a path.  Combining the two
steps proves \eqref{eq: simulate Bell f8}.
\end{proof}

We use the real Bell-hardness theorem
\cite[Theorem~7.19]{realholant}: if a real-valued signature set
$\mathcal H$ fails condition \rm{(\ref{cond: tractable classes})} and is
non-Bell hard, meaning that adjoining any real binary signature outside
$\mathcal B$ gives hardness, then
$\Holant(\mathcal B,\mathcal H)$ is \#P-hard.

The proof of the next lemma is essentially the phase-synchronization proof
of \cref{lm: realification when parity}.  It is repeated because here the
Bell signatures are supplied by the Turing reduction in
\cref{lm: f8 Bell simulation}, rather than by a direct $f_6$ gadget.

\begin{lemma}[Parity realification with $f_8$]
\label{lm: f8 parity realification}
Let $\mathcal F=\overline{\mathcal F}$ fail condition
\rm{(\ref{cond: tractable classes})}, and assume
$\Holant(f_8,\mathcal F)$ is not \#P-hard.  Every nonzero parity
signature in the gadget-and-factor closure of
$\mathcal B\cup\{f_8\}\cup\mathcal F$ is projectively real.
\end{lemma}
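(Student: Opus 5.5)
We argue by induction on the arity $d$ of a nonzero parity signature $f$ in the closure $\mathcal G$ of $\mathcal B\cup\{f_8\}\cup\mathcal F$, mirroring \cref{lm: realification when parity}. By \cref{lm: f8 Bell simulation}, $\Holant(\mathcal B,f_8,\mathcal F)\le_T\Holant(f_8,\mathcal F)$. Hence anything \#P-hard over $\mathcal G$ contradicts the standing assumption, and every hardness lemma proved earlier can be applied inside $\mathcal G$. Odd arity is excluded by \cref{lm: odd holant with conjugation}.

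\emph{Base cases.} For $d=2$, \cref{lm: f8 complex non Bell} makes every nonzero binary signature projectively in $\mathcal B$, so it is projectively real. For $d=4$, \cref{lm: base case 2n=4} gives $f\in\mathcal U^{\otimes}$. The binary factors are available by \cref{lm: decomposition}, so they are Bell signatures and $f$ is real up to a scalar.

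\emph{Inductive step, $d\ge6$.} If $f$ is reducible, its factors lie in $\mathcal G$, inherit parity, and are projectively real by induction, so $f$ is as well. If $f$ is irreducible, \cref{lm: not 2nd-orth is hard} lets us assume {\sc 2nd-Orth}. We fix a pair $\{i,j\}$ and form $\partial^A_{ij}f$ for every $A\in\{I,X,Z,J\}$. These signatures are realizable now that all of $\mathcal B$ is available, they are nonzero by \cref{lm: 2nd-orth contraction nonzero}, and they have parity: the same as $f$ for $A\in\{I,Z\}$ and the opposite for $A\in\{X,J\}$. By induction we write $\partial^A_{ij}f=\lambda_A g_A$ with $g_A$ real and nonzero.

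\emph{Phase synchronization.} We fix $B\in\{I,Z\}$ and $C\in\{X,J\}$. We apply \cref{lm: non-zero Bell contraction} to the opposite-parity pair $\partial^B_{ij}f,\partial^C_{ij}f$, whose arity is at least $4$. This lemma was stated for the $f_6$ setting, but its proof is purely combinatorial and uses only the Bell contractions, so it applies here. It yields $A$ and $\{u,v\}$ disjoint from $\{i,j\}$ with both double contractions nonzero. Since $\partial^A_{uv}f=\mu h$ with $h$ real (again by induction), commuting the contractions shows that $\lambda_B/\mu$ and $\lambda_C/\mu$ are real. Thus all four $\lambda_A$ share a common phase. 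We then recover the slices $f^{ab}_{ij}$ as half-sums and differences of the $\partial^A_{ij}f$, exactly as in \cref{lm: realification when parity}, which shows $f$ is projectively real.

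The main obstacle is justifying that the Bell signatures supplied only through the Turing reduction of \cref{lm: f8 Bell simulation} behave like gadgets inside $\mathcal G$. We must check that conjugate closure, failure of \eqref{cond: tractable classes}, and all cited hardness lemmas transfer to the augmented set. A secondary point is the orientation sign of $J$: contracting with $J$ rather than $J^{\tt T}$ changes only a real sign, so reality is unaffected.
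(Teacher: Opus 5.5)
Your proposal is correct and follows the paper's proof essentially step for step. You pass to the conjugate-closed closure with the Bell signatures adjoined via \cref{lm: f8 Bell simulation}, settle the low arities with \cref{lm: f8 complex non Bell} and \cref{lm: base case 2n=4}, and in the irreducible {\sc 2nd-Orth} case synchronize the phases of the four Bell contractions using \cref{lm: non-zero Bell contraction} before recovering the slices. The only difference is at arity six: the paper treats it as an extra base case (an arity-six member of the closure must lie in $\mathcal U^{\otimes}$ and so factor into Bell signatures), whereas you fold it into the inductive step, which works equally well because the contractions then have arity four.
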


\begin{proof}
By \cref{lm: f8 Bell simulation}, the Bell-augmented problem is also not
\#P-hard.  Work in its conjugate-closed gadget-and-factor closure
$\mathcal G$ and induct on arity.

A nonzero nullary signature is a scalar and hence is projectively real.
A nonzero odd-arity member of $\mathcal G$ would give hardness by
\cref{lm: odd holant with conjugation}; since every member of $\mathcal G$
is available from $\{f_8\}\cup\mathcal F$, this would contradict the
standing assumption.  It therefore remains to treat positive even arity.
A nonzero parity binary signature must be projectively in $\mathcal B$ by
\cref{lm: f8 complex non Bell}, and hence is real.  At arity four,
\cref{lm: base case 2n=4} gives membership in
$\mathcal U^{\otimes}$; \cref{lm: decomposition} extracts its binary
factors, which have parity and hence are real Bell signatures.  The same
argument at arity six uses \cref{lm: f6 is hard}.  If a larger signature
is reducible, \cref{lm: decomposition} extracts its factors.  Each factor
of a nonzero parity signature has parity; an odd-arity factor would give
hardness by \cref{lm: odd holant with conjugation}.  The even-arity factors
are projectively real by induction, and so is their tensor product.

Let $g$ now be irreducible of larger even arity.  By
\cref{lm: not 2nd-orth is hard}, it satisfies {\sc 2nd-Orth}.  Fix a pair
$\{i,j\}$.  By \cref{lm: 2nd-orth contraction nonzero} and the
induction hypothesis, we have
\begin{equation}
 \partial_{ij}^{b_s}g=\lambda_s r_s,
 \qquad 0\leq s\leq3,
 \label{eq: projectively real Bell contractions}
\end{equation}
where every $\lambda_s\neq0$ and every $r_s$ is a nonzero real parity
signature.  The contractions for $b_0,b_2$ have the same parity as $g$,
whereas those for $b_1,b_3$ have the opposite parity.

Choose $a\in\{0,2\}$ and $c\in\{1,3\}$.  By
\cref{lm: non-zero Bell contraction}, there are a disjoint pair
$\{u,v\}$ and $b_s\in\mathcal B$ such that
$\partial_{uv}^{b_s}r_a$ and $\partial_{uv}^{b_s}r_c$ are both nonzero.
The signature $\partial_{uv}^{b_s}g$ is a lower-arity parity signature, so
write it as $\mu r$ with $\mu\neq0$ and $r$ real.  Commutativity gives
\[
 \lambda_a\partial_{uv}^{b_s}r_a
 =\mu\partial_{ij}^{b_a}r,\qquad
 \lambda_c\partial_{uv}^{b_s}r_c
 =\mu\partial_{ij}^{b_c}r.
\]
All four signatures remaining after the scalar coefficients are nonzero
and real.  Hence $\lambda_a/\mu$ and $\lambda_c/\mu$ are nonzero real
numbers, so $\lambda_a/\lambda_c\in\mathbb R^\times$.  Applying this to
$(a,c)=(0,1),(2,1),(0,3)$ shows that the four $\lambda_s$ have one common
complex phase.  Absorb their remaining nonzero real multipliers into the
corresponding $r_s$.  After factoring out the common phase, recover the
four slices of $g$ by
\[
 \mathbf g_{ij}^{00}=\tfrac12(r_0+r_2),\quad
 \mathbf g_{ij}^{11}=\tfrac12(r_0-r_2),\quad
 \mathbf g_{ij}^{01}=\tfrac12(r_1+r_3),\quad
 \mathbf g_{ij}^{10}=\tfrac12(r_1-r_3).
\]
All four slices are real, so $g$ is projectively real.  This completes the
induction.
\end{proof}

\begin{theorem}
\label{thm: f8 available is hard}
If $\mathcal F=\overline{\mathcal F}$ fails condition
\rm{(\ref{cond: tractable classes})}, then
\(
 \Holant(f_8,\mathcal F)\text{ is \#P-hard}.
\)
\end{theorem}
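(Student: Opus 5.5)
We argue by contradiction: suppose $\Holant(f_8,\mathcal F)$ is not \#P-hard and derive hardness anyway. By \cref{lm: f8 Bell simulation}, $\Holant(\mathcal B,f_8,\mathcal F)\le_T\Holant(f_8,\mathcal F)$, so it suffices to show $\Holant(\mathcal B,f_8,\mathcal F)$ is \#P-hard. As in \cref{sec: f6 is hard}, we split on parity. Throughout we work in the conjugate-closed gadget-and-factor closure $\mathcal G$ of $\mathcal B\cup\{f_8\}\cup\mathcal F$. By \cref{lm: f8 complex non Bell}, every nonzero binary signature in $\mathcal G$ is projectively in $\mathcal B$, since otherwise we already have hardness.

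\emph{Parity case.} Suppose every signature of $\mathcal F$ has parity. Then \cref{lm: f8 parity realification} says every member of $\mathcal F$ is projectively real. After normalizing away scalars, $\mathcal F$ is a real set that still fails \eqref{cond: tractable classes}, since scaling does not affect that condition. Because $f_8$ and $\mathcal B$ are real, $\mathcal H=\{f_8\}\cup\mathcal F$ is a real set failing \eqref{cond: tractable classes}. It is non-Bell hard: adjoining any real binary outside $\mathcal B$ gives hardness by \cref{lm: f8 complex non Bell} applied to the conjugate-closed set $\mathcal H\cup\{b\}$. The real Bell-hardness theorem \cite[Theorem~7.19]{realholant} then makes $\Holant(\mathcal B,\mathcal H)$ \#P-hard.

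\emph{Non-parity case.} Suppose some $f\in\mathcal F$ lacks parity. Iterating \cref{lm: non-pairty reduction} (as in \cref{lm: non-parity f6 is hard}) yields a binary signature $b$ with no parity. Every Bell signature has parity, so $b$ is not projectively in $\mathcal B$, and \cref{lm: f8 complex non Bell} gives hardness. For this step I must check that the arity reduction of \cite[Lemma~7.1]{realholant} works over complex values. It uses only merging with $=_2$ and Bell contractions, which are now available, so I expect it does.

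\emph{Main obstacle.} The hardest step is the parity case, specifically checking that the hypotheses of \cite[Theorem~7.19]{realholant} hold exactly in our setting. That theorem requires non-Bell hardness for every real binary adjoined to $\mathcal H$. Adjoining such a binary keeps the set real, hence conjugate closed, so \cref{lm: f8 complex non Bell} should apply directly. I will also verify that the realification lemma is applied to $\mathcal F$ itself and not only to its closure. This holds because $\mathcal F\subseteq\mathcal G$.
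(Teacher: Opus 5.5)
Your proposal is correct and follows the paper's proof essentially step for step: adjoin $\mathcal B$ via \cref{lm: f8 Bell simulation}, then split on parity. The non-parity case is closed by arity reduction plus \cref{lm: f8 complex non Bell}, and the parity case by \cref{lm: f8 parity realification} and \cite[Theorem~7.19]{realholant}. The one detail to add is that a non-parity signature of odd arity must be handled separately by \cref{lm: odd holant with conjugation}, since \cref{lm: non-pairty reduction} only applies to even arity at least four; the paper does this explicitly.
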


\begin{proof}
By \cref{lm: f8 Bell simulation}, it suffices to prove hardness after
adjoining $\mathcal B$.  Assume temporarily that the resulting problem is
not already hard and work in the Bell-augmented gadget-and-factor closure.

If the closure contains a signature without parity, it cannot have odd
arity, since \cref{lm: odd holant with conjugation} would already give
hardness.  Thus it has positive even arity.  If it is not already binary,
repeatedly apply \cref{lm: non-pairty reduction} to obtain a binary
signature without parity.  Every member of $\mathcal B$ has parity, so
this binary signature is non-Bell.  Then
\cref{lm: f8 complex non Bell} gives hardness, a contradiction.

Otherwise every signature has parity.  By
\cref{lm: f8 parity realification}, multiply each signature of
$\mathcal F$ by an irrelevant nonzero phase and obtain a real-valued set
$\mathcal F_{\mathbb R}$.  These rescalings preserve condition
\rm{(\ref{cond: tractable classes})}, so the real set still fails it.
Moreover, \cref{lm: f8 complex non Bell}, restricted to real binary
signatures, says that $\{f_8\}\cup\mathcal F_{\mathbb R}$ is non-Bell
hard.  Therefore \cite[Theorem~7.19]{realholant} gives
\[
 \Holant(\mathcal B,f_8,\mathcal F_{\mathbb R})
 \text{ is \#P-hard}.
\]
Undoing the fixed projective scalars identifies this problem, under a
direct polynomial-time reduction, with
$\Holant(\mathcal B,f_8,\mathcal F)$.  Finally,
\cref{lm: f8 Bell simulation} reduces the latter problem to
$\Holant(f_8,\mathcal F)$.
\end{proof}

\begin{theorem}
\label{thm: base case 2n=8}
Let $\mathcal F=\overline{\mathcal F}$ fail condition
\rm{(\ref{cond: tractable classes})}.  If $\mathcal F$ contains an
arity-$8$ signature $f\notin\mathcal U^{\otimes}$, then
$\Holant(\mathcal F)$ is \#P-hard.
\end{theorem}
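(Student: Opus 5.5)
The plan is to combine \cref{thm: realize f8} with \cref{thm: f8 available is hard}, after a real orthogonal holographic transformation. We start from $f\in\mathcal F$ of arity eight with $f\notin\mathcal U^{\otimes}$ and apply \cref{thm: realize f8}. Either $\Holant(\mathcal F)$ is already \#P-hard, or there is a real orthogonal $R$ with $\Holant(f_8,R^{-1}\mathcal F)\leq_T\Holant(\mathcal F)$. In the first case we are done. The lower-arity outcomes that may arise inside \cref{thm: realize f8} (odd-arity factors, factors or merges of arity $2,4,6$ outside $\mathcal U^{\otimes}$) are already absorbed there via \cref{lm: odd holant with conjugation,lm: base case 2n=2,lm: base case 2n=4,lm: induction 2n=6}.

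In the second case we set $\mathcal F'=R^{-1}\mathcal F$ and check that $\mathcal F'$ meets the hypotheses of \cref{thm: f8 available is hard}. For conjugate closure, $R$ is real, so $\overline{R^{-1}g}=R^{-1}\overline g$, and this is also item 1 of \cref{lem: invariance under real orthogonal transformation}. For failure of condition \eqref{cond: tractable classes}, we use item 2 of \cref{lem: invariance under real orthogonal transformation}; the final paragraph of the proof of \cref{thm: realize f8} records the same fact. Then \cref{thm: f8 available is hard} applied to $\mathcal F'$ gives that $\Holant(f_8,\mathcal F')$ is \#P-hard. Composing with the reduction $\Holant(f_8,\mathcal F')\leq_T\Holant(\mathcal F)$ shows that $\Holant(\mathcal F)$ is \#P-hard.

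The proof is essentially bookkeeping, and the only point to check carefully is that the transformed set still satisfies the standing assumptions. Specifically, $\mathcal F'$ must remain conjugate closed and outside \eqref{cond: tractable classes}, because \cref{thm: f8 available is hard} and the lemmas it relies on (\cref{lm: f8 complex non Bell,lm: f8 Bell simulation,lm: f8 parity realification}) use both properties. The invariance lemma for real orthogonal transformations covers both, so I expect no genuine obstacle beyond stating this explicitly.
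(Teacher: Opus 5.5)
Your proposal is correct and matches the paper's proof: apply \cref{thm: realize f8}, use that $R^{-1}\mathcal F$ remains conjugate closed and outside \eqref{cond: tractable classes} because $R$ is real orthogonal, invoke \cref{thm: f8 available is hard}, and compose with the reduction $\Holant(f_8,R^{-1}\mathcal F)\leq_T\Holant(\mathcal F)$. Your explicit appeal to \cref{lem: invariance under real orthogonal transformation} simply spells out what the paper records at the end of the proof of \cref{thm: realize f8}.
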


\begin{proof}
Apply \cref{thm: realize f8}.  Its first outcome is hardness.  In the
second outcome, the real orthogonal transform $R^{-1}\mathcal F$ remains
conjugate closed and outside condition
\rm{(\ref{cond: tractable classes})}, as proved there.  Hence
\cref{thm: f8 available is hard} makes the left-hand side of
\eqref{eq: realize f8 reduction} \#P-hard.  The reduction then proves
hardness of $\Holant(\mathcal F)$.
\end{proof}

\section{The Induction Proof of the Main Theorem: \texorpdfstring{$2n\ge 10$}{2n >= 10}}
\label{sec: 2n >= 10}
\label{sec: high arity induction}

We now complete the induction framework of
\cref{subsec: Induction framework}.  The normalization from
\cref{lm: unitary wire normalization} is valid in every arity at least
eight.  At arity at least ten, however, its strong equality outcome is
combinatorially impossible.

\begin{lemma}
\label{lm: high arity equality rigidity}
No nonzero signature of arity $2n\geq10$ can simultaneously satisfy
{\sc 2nd-Orth} and the strong equality property.
\end{lemma}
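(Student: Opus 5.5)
The plan is to re-run the combinatorial analysis of \cref{def: f8 matching} through \cref{lm: f8 one factorization} in arity $2n$. Then we show that for $2n\ge 10$ the resulting structure is inconsistent. So let $h$ have arity $2n\ge 10$, satisfy {\sc 2nd-Orth}, and have the strong equality property. For each pair $e$ write
\[
\partial_e h=\lambda_e\bigotimes_{d\in M_e}(=_2)_d,
\]
where $M_e$ is a perfect matching of the $2n-2$ residual indices, and put $F_e=\{e\}\cup M_e$.

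\textbf{Step 1 (the $\lambda_e$ have a common modulus).} \cref{lm: 2nd-orth contraction nonzero} gives $|\partial_e h|^2=2\mu$ for a fixed $\mu>0$. The right-hand side has squared norm $2^{n-1}|\lambda_e|^2$, so $|\lambda_e|$ does not depend on $e$. No hardness hypothesis is needed here: this step and all later ones are purely structural consequences of {\sc 2nd-Orth} and the strong equality property.

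\textbf{Step 2 (the $F_e$ form a $1$-factorization).} I repeat verbatim the commutation arguments of \cref{lm: f8 matching equivalence,lm: f8 same coe}. If $d\in M_e$, then $\partial_d\partial_e h$ closes a loop and carries a factor $2\lambda_e$. If $e\notin M_d$, the other order carries no factor of $2$. Equal moduli then force the reciprocity $d\in M_e\iff e\in M_d$; the same idea with three contractions gives transitivity, and comparing scalars gives a single common $\lambda$. Hence the classes $F_e$ form a $1$-factorization of $K_{2n}$ into $2n-1$ perfect matchings. These arguments used only a few spare indices, which exist a fortiori when $2n\ge 10$.

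\textbf{Step 3 (translation structure).} Next I show that the union of two of these matchings is a disjoint union of $4$-cycles. Suppose instead that it contains an alternating cycle of length $\ge 6$, with consecutive matching edges $e_1\in F_1$ and $e_2\in F_2$. Contracting $e_1$ and then an $F_2$-edge two steps along the cycle produces a residual matching; contracting in the opposite order produces a different one. This contradicts commutativity together with the UPF (\cref{unique}), exactly as in the $8$-cycle argument of \cref{lm: f8 one factorization}. The group argument of that lemma then applies unchanged: the involutions $m_F$ together with the identity form an elementary abelian $2$-group of order $2n$ acting regularly. Therefore $2n$ is a power of $2$, which already rules out $2n\in\{10,12,14\}$. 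For $2n\ge 16$ we may identify the index set with $\mathbb F_2^k$, $k\ge 4$, so that $F_e=\{\{z,z+v_e\}\}$.

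\textbf{Step 4 (contradiction in the translation model).} This is the main obstacle: the translation structure must be shown impossible once there are at least $16$ points, although it is realized by $f_8$ at $8$ points. I would compare $\partial_d\partial_e h$ with $\partial_e\partial_d h$ for $e=\{0,v_e\}$ and $d=\{a,a+v_d\}$ with $v_d\ne v_e$, so that $d\notin F_e$. Contracting $e$ first and then $d$ joins the $M_e$-edges $\{a,a+v_e\}$ and $\{a+v_d,a+v_d+v_e\}$ into the new edge $\{a+v_e,a+v_d+v_e\}$. Contracting $d$ first and then $e$ instead creates the new edge $\{v_d,v_d+v_e\}$. All the other residual edges must also coincide, and the resulting equality of residual matchings should force $a\in\langle v_d,v_e\rangle$. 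With $k\ge 4$ we can choose $a\notin\langle v_d,v_e\rangle$ with $d$ disjoint from $e$, so the two orders give different residual matchings. This contradicts commutativity and the UPF.

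I expect the delicate point to be the bookkeeping of which edges of $M_e$ and $M_d$ are altered. The danger is the same computation wrongly ``refuting'' $f_8$ at $k=3$. To guard against this, I will check the identity explicitly at $k=3$ and isolate exactly where the assumption $a\notin\langle v_d,v_e\rangle$, unavailable when $k=3$, enters. If this direct comparison fails, the fallback is the coefficient-expansion method of \cref{lm: f8 normalized equations}: show that $\partial_{ij}h$ agrees with $\lambda\,\partial_{ij}\chi_C$ for the first-order Reed--Muller code $C$ of length $2^k$, and observe that for $k\ge 4$ such contractions are not tensor products of equalities.
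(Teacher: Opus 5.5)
Your Steps 1 and 2 are correct and coincide with the paper's first step: the common modulus of the $\lambda_e$ follows from {\sc 2nd-Orth} alone, and the classes $F_e$ form a $1$-factorization of $K_{2n}$. The gap is in Step 4, which is where the contradiction has to come from, and its stated mechanism fails. You claim that equality of the two residual matchings forces $a\in\langle v_d,v_e\rangle$, and that for $k\ge4$ one can choose $a$ outside this span. But disjointness of $d=\{a,a+v_d\}$ from $e=\{0,v_e\}$ already gives $a\notin\{0,v_e,v_d,v_d+v_e\}=\langle v_d,v_e\rangle$ for every $k$, including $k=3$. At $k=3$ the two residuals nonetheless coincide, as they must since $f_8$ exists. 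For example, take $e=\{000,001\}$ and $d=\{010,110\}$: both orders leave $\{\{100,101\},\{011,111\}\}$. So the implication you rely on is false. An argument built on it would refute $f_8$, which is exactly the danger you anticipated. The Reed--Muller fallback is not developed far enough to replace it.

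The missing idea is a simple count, and it needs neither Step 3 nor the $\mathbb F_2^k$ model. Take any two distinct classes $A\ni e$ and $B\ni d$ with $e\cap d=\emptyset$. Contracting $e$ and then $d$ leaves one new edge plus the $n-3$ edges of $A\setminus\{e\}$ that $d$ does not meet. The reverse order leaves one new edge plus $n-3$ untouched edges of $B$. Since $A\cap B=\emptyset$, every untouched $A$-edge in the first residual must equal the single new edge of the second. This is impossible once $n-3\ge2$, i.e.\ $2n\ge10$. At $2n=8$ there is exactly one such edge, which is why $f_8$ survives. This count is the paper's entire proof.

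The same count shows that your Step 3 is a detour, and that its justification is imprecise. In an alternating $6$-cycle the two new edges coincide (both are $v_3v_6$). The residuals therefore differ only because of the untouched edges, not ``exactly as in the $8$-cycle argument.'' Moreover, the count applies just as well to two matchings whose union is a union of $4$-cycles. So the power-of-$2$ reduction and the translation model are unnecessary.
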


\begin{proof}
Similar to \cref{def: f8 matching}, for any pair of distinct indices $e=\{i,j\}$, we can define $M_e, F_e$ and the relation $\sim.$
Also similar to \cref{lm: f8 matching equivalence}, $\sim$ is an equivalence relation.
The proof there makes no special use of $2n=8$, and it holds for any $2n\ge10.$
Consequently the perfect matchings $F_e$ partition $E(K_{2n})$ into a
$1$-factorization.

Choose distinct perfect matchings $A,B$ and disjoint edges $e\in A$, $d\in B$.
Such a choice exists because $n\geq5$.  
Because $A$ and $B$ are distinct, they are disjoint as two equivalent classes of $\sim.$
Compute $\partial_d\partial_eh$ in the two orders.  Starting with $e$, the two
endpoints of $d$ lie in two distinct edges of $A\setminus\{e\}$.
Contracting $d$ switches those two edges into one edge and leaves the
other $m-3$ edges of $A$ unchanged.  Hence the residual matching consists of one switched edge and $m-3$ unchanged $A$-edges.  Reversing the order
gives one switched edge and $m-3$ unchanged $B$-edges.

The two double contractions are the same nonzero equality product, so
their residual matchings must coincide.  
Since $m\geq5$, the first residual
matching contains at least two unchanged $A$-edges.  
Neither can be an unchanged $B$-edge, since $A$ and $B$ are disjoint.
However, at most one can equal the single switched edge
in the reverse-order matching.  This is impossible.  Therefore the two
hypotheses cannot hold simultaneously.
\end{proof}

The following descent theorem plays the same role as
\cite[Lemma~9.1]{realholant}. 

\begin{theorem}
\label{thm: strict arity descent}
Let $\mathcal F=\overline{\mathcal F}$ fail condition
\rm{(\ref{cond: tractable classes})}, and let $f\in \mathcal{F}$ is a
signature of arity $2n\geq10$ with
$f\notin\mathcal U^{\otimes}$.  Then either
$\Holant(\mathcal F)$ is \#P-hard, or we can realize an even arity signature
$g\notin\mathcal U^{\otimes}$ of arity at most $2n-2$.
\end{theorem}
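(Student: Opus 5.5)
The plan is to combine the normalization \cref{lm: unitary wire normalization} with the rigidity statement \cref{lm: high arity equality rigidity}. The intended outcome is that the only surviving alternative of the normalization lemma is the lower-arity one.

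First I will dispose of the degenerate cases exactly as in the induction framework of \cref{subsec: Induction framework}.

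\emph{Reducible case.} If $f$ is reducible, \cref{lm: decomposition} makes its factors available.
\begin{itemize}
\item An odd-arity factor gives hardness by \cref{lm: odd holant with conjugation}.
\item Otherwise all factors have even arity at most $2n-2$. Since $f\notin\mathcal U^{\otimes}$, some factor lies outside $\mathcal U^{\otimes}$, and that factor is the required $g$.
\end{itemize}

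\emph{Irreducible case.} If $f$ is irreducible but fails {\sc 2nd-Orth}, \cref{lm: not 2nd-orth is hard} gives hardness. If some $\partial_{ij}f\notin\mathcal U^{\otimes}$, then $g=\partial_{ij}f$ has arity $2n-2$ and we are done. So we may assume $f$ is irreducible, satisfies {\sc 2nd-Orth}, and $f\in\int\mathcal U^{\otimes}$.

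Next I apply \cref{lm: unitary wire normalization}, which is valid since $2n\ge 10\ge 8$. Its first two outcomes are precisely the conclusions of the theorem. In the third outcome we obtain an irreducible signature $f^\ast$ of arity $2n$, realizable from $f$, with the strong equality property.

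I will then argue that $f^\ast$ also satisfies {\sc 2nd-Orth}, or else hardness follows. This comes from applying \cref{lm: not 2nd-orth is hard} to the irreducible realizable signature $f^\ast$; realizability lets us work over $\mathcal F\cup\{f^\ast\}$, which is still conjugate closed after adding $\overline{f^\ast}$ and still fails \eqref{cond: tractable classes}. Then \cref{lm: high arity equality rigidity} says no nonzero signature of arity $2n\ge 10$ satisfies both {\sc 2nd-Orth} and the strong equality property. This rules out the third outcome entirely.

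The main point needing care is that the equivalence-relation argument used in \cref{lm: high arity equality rigidity} relies on \cref{lm: f8 same coe}-type equal-norm facts. Those facts come from {\sc 2nd-Orth} of $f^\ast$ through \cref{lm: 2nd-orth contraction nonzero}, so I must make sure {\sc 2nd-Orth} for $f^\ast$ is established, not merely for $f$. Alternatively, I would observe that the gadget $G_1,G_2,G_3$ used in \cref{lm: unitary wire normalization} attaches only projectively unitary binaries, so it preserves {\sc 2nd-Orth} by the unitary invariance in \cref{prop: invariant kth Orth}. That gives {\sc 2nd-Orth} for $f^\ast$ directly from that of $f$, and I would use this route if available.
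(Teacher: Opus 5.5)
Your proposal is correct and follows the paper's proof essentially verbatim. You first dispose of the reducible case, the failure of {\sc 2nd-Orth}, and a contraction $\partial_{ij}f\notin\mathcal U^{\otimes}$; then you apply the unitary wire normalization lemma and rule out its strong-equality outcome with the high-arity rigidity lemma.

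Your explicit check that $f^\ast$ itself satisfies {\sc 2nd-Orth} fills in a step the paper leaves implicit. You obtain it by applying the non-{\sc 2nd-Orth} hardness lemma to the irreducible realizable $f^\ast$ over $\mathcal F\cup\{f^\ast,\overline{f^\ast}\}$, which is the cleaner route. Your alternative route via unitary invariance also works, but it needs the obvious per-variable version of the invariance proposition, which is stated only for a single global $T\in\mathbf U_2(\mathbb C)$.
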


\begin{proof}
If $f$ is reducible, \cref{lm: decomposition} makes its proper factors
available.  A nonzero odd-arity factor gives hardness by
\cref{lm: odd holant with conjugation}.  If all factors have even arity,
at least one proper factor lies outside $\mathcal U^{\otimes}$, and it is
the desired $g$.

Assume therefore that $f$ is irreducible.  If it fails {\sc 2nd-Orth},
\cref{lm: not 2nd-orth is hard} gives hardness.  If some equality
contraction $\partial_{ij}f$ lies outside
$\mathcal U^{\otimes}$, take $g=\partial_{ij}f$, whose arity is
$2n-2$.

The only remaining case is that $f$ is irreducible, satisfies
{\sc 2nd-Orth}, and every $\partial_{ij}f$ belongs to
$\mathcal U^{\otimes}$.  Apply
\cref{lm: unitary wire normalization}.  Its first outcome is precisely the
required lower-arity $g$.  Its second outcome is a nonzero arity-$2n$
signature satisfying both {\sc 2nd-Orth} and the strong equality property,
contradicting \cref{lm: high arity equality rigidity}.  Thus one of the
first two outcomes must occur.
\end{proof}

The induction below is essentially the same as the final induction in
\cite[Theorem~9.2]{realholant}, with the complex arities $6$ and $8$
supplied by \cref{lm: f6 is hard,thm: base case 2n=8}.  We include it to
make the proof of the main theorem complete.

\begin{theorem}[Main theorem]
\label{thm: conjugate closed Holant dichotomy}
Let $\mathcal F$ be a conjugate-closed set of algebraic complex-valued Boolean
signatures.  
If $\mathcal F$ satisfies condition
\rm{(\ref{cond: tractable classes})}, then $\Holant(\mathcal F)$ is
polynomial-time computable.  Otherwise $\Holant(\mathcal F)$ is
\#P-hard.
\end{theorem}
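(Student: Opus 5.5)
The tractable direction is immediate from \cref{thm: tractable classes}, so the plan concerns only hardness. Assume $\mathcal F$ is conjugate closed and fails \eqref{cond: tractable classes}. If $\mathcal F$ contains a nonzero odd-arity signature, \cref{lm: odd holant with conjugation} already gives \#P-hardness. Otherwise, since $\mathcal U^{\otimes}\subseteq\mathscr T$ and $\mathcal F\not\subseteq\mathscr T$, there is a nonzero $f\in\mathcal F$ of even arity $2n$ with $f\notin\mathcal U^{\otimes}$. The proof is by strong induction on $2n$, applied to any signature realizable from $\mathcal F$ (not only members of $\mathcal F$). The statement I would carry through the induction is: if some even-arity $g\notin\mathcal U^{\otimes}$ of arity at most $2n$ is realizable from $\mathcal F$, then $\Holant(\mathcal F)$ is \#P-hard.

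For the base cases I would cite the results already proved. Arity $2$ is \cref{lm: base case 2n=2}, arity $4$ is \cref{lm: base case 2n=4}, arity $6$ is \cref{lm: induction 2n=6}, and arity $8$ is \cref{thm: base case 2n=8}. In each case I must check that the hypotheses hold for the enlarged family $\mathcal F\cup\{g\}$ obtained by adjoining the realized signature $g$. I would add $\overline g$ as well, which is realizable from $\overline{\mathcal F}=\mathcal F$ by conjugating vertex labels. The enlarged family is still conjugate closed and still fails \eqref{cond: tractable classes}, because adding signatures can only destroy tractability. Moreover $\Holant(\mathcal F\cup\{g,\overline g\})\le_T\Holant(\mathcal F)$, so hardness of the enlarged problem transfers back to $\Holant(\mathcal F)$.

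For the inductive step $2n\ge10$, I would apply \cref{thm: strict arity descent}. It either gives hardness directly, or it realizes an even-arity $g\notin\mathcal U^{\otimes}$ of arity at most $2n-2$. In the second case I would replace $\mathcal F$ by the conjugate-closed family $\mathcal F\cup\{g,\overline g\}$ and apply the induction hypothesis. The chain of reductions has constant length, at most $n$ steps, each a fixed gadget or a Turing reduction independent of the instance, so the composed reduction is polynomial time.

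The main obstacle is bookkeeping rather than mathematics. I need to confirm that each base-case lemma is stated for an arbitrary conjugate-closed family that fails \eqref{cond: tractable classes}, and not only for the original $\mathcal F$. In particular, \cref{thm: base case 2n=8} and \cref{lm: induction 2n=6} pass through real orthogonal transformations, so their conclusions must be pulled back to the enlarged family. This is supplied by \cref{lem: invariance under real orthogonal transformation}.
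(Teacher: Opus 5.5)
Your proposal is correct and matches the paper's proof essentially step for step. The steps are: tractability from \cref{thm: tractable classes}; the odd-arity case via \cref{lm: odd holant with conjugation}; induction on the even arity $2n$ with base cases $2,4,6,8$ given by the same four results; and, for $2n\ge 10$, \cref{thm: strict arity descent} followed by adjoining $g$ and $\overline g$, which keeps the family conjugate closed and outside condition~\eqref{cond: tractable classes}. Your remark about pulling back through real orthogonal transformations is exactly how the paper handles this inside the $2n=6$ and $2n=8$ cases, using \cref{lem: invariance under real orthogonal transformation}.
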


\begin{proof}
The tractable direction is \cref{thm: tractable classes}.  Suppose that
$\mathcal F$ does not satisfy condition
\rm{(\ref{cond: tractable classes})}.  If it contains a nonzero odd-arity
signature, hardness follows from
\cref{lm: odd holant with conjugation}.  We may therefore restrict to
nonzero even-arity signatures.

Since $\mathcal U^{\otimes}\subseteq\mathscr T$, the assumption that
$\mathcal F$ fails condition \rm{(\ref{cond: tractable classes})} implies
that some $f\in\mathcal F$ has even arity $2n$ and
$f\notin\mathcal U^{\otimes}$.  We induct on $2n$.  The cases
$2n=2,4,6,8$ are respectively
\cref{lm: base case 2n=2,lm: base case 2n=4,lm: induction 2n=6,%
thm: base case 2n=8}.

For $2n\geq10$, apply \cref{thm: strict arity descent}.  Either hardness
already follows, or it supplies an available
$g\notin\mathcal U^{\otimes}$ of smaller even arity.  
Since $\mathcal{F}$ is conjugate closed, $\overline{g}$ is also realizable.
Let $\mathcal{G}=\mathcal{F}\cup\{g,
\overline{g}\}$.
Clearly $\mathcal{G}$ fails
condition \rm{(\ref{cond: tractable classes})} because it contains $\mathcal{F}.$
The induction hypothesis makes the augmented problem $\Holant(\mathcal{G})$
\#P-hard, while gadget realization reduces it to
$\Holant(\mathcal F)$.  
Since each descent lowers the fixed signature
arity by at least two, the process reaches one of the four base arities
after finitely many constant-size reductions.  This completes the proof.
\end{proof}

\bibliographystyle{alpha}
\bibliography{ref}

\newpage
\appendix
\section{An Exposition of Xia's Framework Based on Classification of Projective Binary Groups}\label{appendix: mingji}

In this section, we will give an exposition of Xia's framework (\cite{MingjiProgram}, v1) of classifying all $\Holant(\mathcal{F})$ problems with unknown complexity according to the projective binary group generated by $\mathcal{F}$.
This exposition is short and self-contained.
Specifically, we will give a proof for \cref{thm: binary set is finite group}:

\MingjiTheorem*

Recall in \cref{subsec: Induction framework}, we identify $B(f)$ by a subset of $\mathbf{PU}(2)$, and $B(f)$ is closed under multiplication.
Suppose $b\in B(f)$ (we equivalently say the matrix $M(b)\in B(f)$). 
By \cref{def: generating process}, $B(f)$ is closed under conjugation and transpose. 
So $\overline{M(b)}^{\tt T}\in B(f)$.
Let $\overline{b}^{\tt T}$ denote the binary signature in $B(f)$ with matrix $\overline{M(b)}^{\tt T}$.
Notice that $\overline{M(b)}^{\tt T}M(b)=\lambda I_2$ for some $\lambda\in \mathbb
C^{\times}$ since $M(b)\in \mathbf{PU}(2)$.
So $b^{-1}=\overline{b}^{\tt T}$.
Thus $B(f)$ is a group.
In the following we show it is a \emph{finite} group, thus $B(f)$ belongs to a finite list of subgroups of $\mathbf{SO}(3)$, unless $\Holant(\mathcal{F})$ is \#P-hard.
\InterpolationBinary*

By \cref{lm: 2 by 2 interpolation}, we may assume that every binary matrix in $B(f)$ has finite projective order, otherwise we can realize $M(g)=\left[\begin{smallmatrix}
    1 & 0\\
    0 & 0
\end{smallmatrix}\right]$ and thus $\Holant(\mathcal{F})$ is \#P-hard by \cref{lm: odd holant with conjugation}.
The following lemma shows $B(f)$ has finite exponent, i.e., there exists a uniform $N\in \N$ such that $b$ has order less than $N$ for every $b\in B(f)$.

\begin{lemma}\label{lm: finite exponent}
    There exists $N\in \N$, such that $b$ has order less than $N$ for every $b\in B(f)$. 
\end{lemma}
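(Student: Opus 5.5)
The plan is to exploit the fact that every element of $B(f)$ arises from a fixed finite-arity construction over the fixed signature $f$, so its entries satisfy polynomial constraints coming from $f$. We then bound the projective order via the degree of an algebraic number.

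First we make the binary signatures finitely describable. Since $B(f)$ is generated by the process of \cref{def: generating process}, it suffices to control a generating set. Consider the binary factors in $A_1(f)$, and more generally the factors of $\partial^{b}_{ij}f$ for $b\in B(f)$. By \cref{prop: binary closure}, each $\partial^b_{ij}f$ lies in $B(f)^{\otimes}$.

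Normalize each $M(b)\in\mathbf{SU}(2)$. Its projective order is finite, by the interpolation step (\cref{lm: 2 by 2 interpolation}). Hence the eigenvalue ratio $\lambda_2/\lambda_1$ of $M(b)$ is a root of unity $\zeta_b$, and the order of $b$ equals the order of $\zeta_b$.

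The key step is to bound the degree of $\zeta_b$ over $\mathbb Q$ uniformly. The entries of $f$ are algebraic and lie in a fixed number field $L$. The entries of $\partial^{b}_{ij}f$ are $L$-linear combinations of the entries of $M(b)$. From a tensor factorization $\partial^b_{ij}f=b'\otimes b''$ one reads off $M(b')$ up to scalar as ratios of entries. I would then argue inductively that every element of $B(f)$ has a projective representative with entries in a fixed finite extension $L'$ of $L$ of bounded degree.

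This inductive argument is the main obstacle, since the bound could a priori grow with the recursion depth. I would try to avoid recursion altogether. The idea is to show that the finitely many binaries in $A_1(f)\cup\overline{A_1(f)}$ already generate $B(f)$ projectively, using the closure property, and that their entries lie in $L$ up to a square-root normalization coming from $\det$.

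Granting this, $\operatorname{tr}(M(b))^2/\det M(b)=2+\zeta_b+\zeta_b^{-1}$ lies in a fixed field $L''$ of degree $D$ over $\mathbb Q$ for all $b$. Therefore $[\mathbb Q(\zeta_b):\mathbb Q]\le 2D$, so $\varphi(\operatorname{ord}\zeta_b)\le 2D$. Since $\varphi(m)\to\infty$, this bounds $\operatorname{ord}(\zeta_b)$ by some $N$ depending only on $D$.

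If the uniform field bound fails, there is a fallback. Use that $B(f)$ is a subgroup of the compact group $\mathbf{PU}(2)$ in which every element has finite order. Then pick a product of two elements whose rotation angle is an irrational multiple of $\pi$ whenever orders are unbounded, which contradicts finite order by \cref{lm: 2 by 2 interpolation}.
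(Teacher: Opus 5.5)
Your overall strategy is the paper's: put every element of $B(f)$ over one fixed number field, observe that $\zeta_b$ is a root of unity of bounded degree over $\mathbb{Q}$, and use $\varphi(d)\to\infty$. Your invariant $\operatorname{tr}(M)^2/\det M=2+\zeta_b+\zeta_b^{-1}$ is a clean, scale-invariant way to do the last step.

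The gap is the step you flag as the main obstacle. You never show that all elements of $B(f)$ are defined over a single number field. You grant it via the claim that $A_1(f)\cup\overline{A_1(f)}$ generates $B(f)$ projectively, but you give no argument, and there is no reason to expect it. $A_2(f)$ consists of binary factors of $\partial^b_{ij}f$ with $b\in B_1(f)$, and these need not lie in the group generated by $A_1(f)$.

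Your fallback also fails. A subgroup of $\mathbf{PU}(2)\cong\mathbf{SO}(3)$ in which every element has finite order can still have unbounded exponent. Examples are all rotations about a fixed axis by rational multiples of $2\pi$, or the union of nested dihedral groups $D_{2^k}$. In such a group every product of elements again has finite order, so no rotation by an irrational angle can be produced. Compactness plus torsion does not give bounded exponent; the arithmetic input is essential.

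The missing observation is that the recursion does not enlarge the field at all, provided you work with projective representatives rather than $\mathbf{SU}(2)$-normalized ones. The square roots of determinants are the only thing that made the field seem to grow. The argument runs as follows.
\begin{enumerate}
\item Let $L$ be the field generated over $\mathbb{Q}$ by the values of both $f$ and $\overline f$. It is a number field, and it is stable under complex conjugation.
\item If $M(b)$ has entries in $L$, then so do $\partial^b_{ij}f$ and $\partial^b_{ij}\overline f$.
\item If such a nonzero signature factors as $b'\otimes h$, choose $\beta$ with $h(\beta)\neq 0$. Then $b'$ is proportional to the slice $(\partial^b_{ij}f)(\cdot,\beta)$, whose entries lie in $L$. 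You already noted that factors are read off as ratios of entries; that is the whole point.
\item Path composition, transposition and conjugation also preserve $L$.
\item By induction on the stage index of the generating process, every element of $B(f)$ has a projective representative over $L$ itself.
\item Hence $2+\zeta_b+\zeta_b^{-1}\in L$, so $[\mathbb{Q}(\zeta_b):\mathbb{Q}]\le 2[L:\mathbb{Q}]$, and your argument concludes.
\end{enumerate}
The paper states this step in one line: ``gadget operations and signature factorizations do not leave $E$.''
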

\begin{proof}
    Recall that we assume $f$ takes algebraic complex value. 
    Adjoint all possible values of $f$ into $\Q$, and obtain a finite field extension $E/\mathbb{Q}$.
    All possible values of $b$ also lie in $E$ for every $b\in B(f)$, since gadget operations and signature factorizations do not leave $E.$
    Fix an arbitrary $b\in B(f)$.
    By \cref{lm: 2 by 2 interpolation}, $M(b)$ has finite projective order $d$ for some $d\in \N$.
    So the eigenvalues $\lambda_1(b)$ and $\lambda_2(b)$ are non-zero, and $\mu(b):=\frac{\lambda_2(b)}{\lambda_1(b)}$ is a primitive $d$-th root of unity.
    Thus, its algebraic degree is $\varphi(d)$, where $\varphi(\cdot)$ is Euler's totient function.

    Now suppose for a contradiction that for every $n\in\N$, there exists $b_n\in B(f)$, such that $b_n$ has order $d_n>n.$ 
    Then $\mu(b_n):=\frac{\lambda_2(b_n)}{\lambda_1(b_n)}$ is a primitive $d_n$-th root of unity.
    However, $\lambda_1(b_n),\lambda_2(b_n)$ are two roots of the characteristic polynomial  of $M(b)$, which is a quadratic polynomial in $E[X]$.
    So $\lambda_2(b_n),\lambda_2(b_n)$ lie in a quadratic extension of $E$, as well as $\mu(b_n)$.
    So $\mu(b_n)$ has a finite algebraic degree, independent of $n$.
    This contradicts $\varphi(d_n)\to +\infty$ as $n\to +\infty.$
\end{proof}
\begin{theorem}[Burnside. See for example, \cite{rowen2008graduate} Theorem 19A.9]\label{thm: Burnside}
   Suppose $F$ is a field of characteristic 0.
   Then every subgroup $G \subseteq \mathrm{GL}(n,F)$ of finite exponent is a finite group.
\end{theorem}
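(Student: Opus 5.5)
The statement to prove is Burnside's theorem: a subgroup $G\subseteq\mathrm{GL}(n,F)$ of finite exponent, with $\operatorname{char}F=0$, is finite. I would follow the classical trace argument. First I would pass to the algebraic closure $\overline F$; finiteness is unaffected. Then I would reduce to the case where $G$ acts irreducibly on $\overline F^n$. If $G$ stabilizes a proper subspace, put the matrices in block upper triangular form. The diagonal-block images $G_1,G_2$ are groups of finite exponent in smaller dimension, hence finite by induction on $n$. The kernel $K$ of $G\to G_1\times G_2$ consists of unipotent matrices $\left[\begin{smallmatrix}I&A\\0&I\end{smallmatrix}\right]$. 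Such an element has $g^N=\left[\begin{smallmatrix}I&NA\\0&I\end{smallmatrix}\right]=I$, which forces $A=0$ in characteristic $0$. So $K$ is trivial and $G$ embeds in the finite group $G_1\times G_2$.

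In the irreducible case, Burnside's density theorem over the algebraically closed field $\overline F$ says that $\mathrm{Span}_{\overline F}(G)=M_n(\overline F)$. Choose $g_1,\dots,g_{n^2}\in G$ forming a basis. Since the trace form $(A,B)\mapsto\operatorname{Tr}(AB)$ on $M_n$ is nondegenerate, every $g\in G$ is determined by the vector $(\operatorname{Tr}(gg_1),\dots,\operatorname{Tr}(gg_{n^2}))$. Each $gg_i$ lies in $G$, so $(gg_i)^N=I$. Its eigenvalues are therefore $N$-th roots of unity, and its trace is a sum of $n$ of them. This leaves at most $N^n$ possible values for each coordinate, so $|G|\le N^{n\cdot n^2}$, which is finite.

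The main subtlety is the reducible step, specifically the induction bookkeeping: the diagonal blocks genuinely have smaller dimension and still have exponent dividing $N$. The characteristic-$0$ hypothesis is used exactly in the conclusion $NA=0\Rightarrow A=0$. Density and trace nondegeneracy are standard results, which I would cite rather than reprove.
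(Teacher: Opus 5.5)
Your proof is correct. It is the classical argument: block-triangularize when $G$ is reducible, note that the kernel of $G\to G_1\times G_2$ consists of unipotent elements, which are trivial when of finite order in characteristic $0$, and in the irreducible case combine Burnside's density theorem with the fact that traces of elements of $G$ take only finitely many values. The paper gives no proof of this theorem of its own and simply cites Rowen, so there is no in-paper argument to compare against.
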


\begin{proof}[Proof of \cref{thm: binary set is finite group}]
    Lifting $B(f)$ from $\mathbf{PU}(2)$ to $\mathbf{SU}(2)$, it is a subgroup of  $\mathrm{GL(2,\mathbb{C})}$.
    By \cref{lm: finite exponent}, $B(f)$ has finite exponent.
    By \cref{thm: Burnside}, $B(f)$ is a finite group.
    \cref{thm: binary set is finite group} is proved.
\end{proof}
Since $\mathbf{SU}(2)/\{\pm I_2\}\cong \mathbf{SO}(3)$, $B(f)$ belongs to the following list: cyclic groups, dihedral groups, or polyhedral groups.

\section{All AME(6,2) States Have One Local-Unitary Normal Form}\label{sec: AME62 has a common local unitary normal}

This Appendix~\ref{sec: AME62 has a common local unitary normal} has proved that every $\mathrm{AME}(6,2)$ state is local unitary equivalent to a fixed 6-qubit stabilizer state, i.e., for any $\mathrm{AME}(6,2)$ state $\ket{\psi}$, we have $\ket{\psi}=\lambda (U_1\otimes \cdots \otimes U_6)\ket{H}$ where $\lambda\in\mathbb{C}^\times$, $U_1,\dots,U_6$ are unitary matrix and $\ket{H}$ is determined explicitly. The proof of this section is used in the~\cref{sec:recover-f6} to recover a special 6-ary signature from any given an $\mathrm{AME}(6,2)$ state.

\subsection{Definitions and Notations}

\begin{definition}
    For $n\geq 0$, let $\mathcal{H}_n:=(\mathbb{C}^2)^{\otimes n}$, with the standard basis $\{\ket{x}:x\in\{0,1\}^n\}$. A linear operator $M:\mathcal{H}_n\rightarrow \mathcal{H}_n$ is represented in this basis by a $2^n\times 2^n$ complex matrix whose entry in row x and column y is $(M)_{x,y}:=\bra{x}M\ket{y},\quad x,y\in\{0,1\}^n$. The trace of $M$ is 
    \begin{equation}
        \operatorname{Tr}(M)=\sum_{x\in \{0,1\}^n}\bra{x}M\ket{x}.
    \end{equation}

    Let $A\subseteq [n], \overline{A}=[n]\setminus A$. The partial trace of $M$ over the coordinates in A, denoted by $\operatorname{Tr}_A(M)$, is the
\end{definition}

Let $g:\{0,1\}^n\rightarrow \mathbb{C}$ be a nonzero Boolean signature. 
Its associated normalized state is
\begin{equation}
    \ket{g}=\frac{1}{\lVert g\rVert_2}\sum_{x\in\{0,1\}^n} g(x)\ket{x},\quad\quad \lVert g\rVert_2^2=\sum_{x} \lvert g(x)\rvert^2.
\end{equation}
Let $\rho_g=\ket{g}\bra{g}$ be an $2^n \times 2^n$ matrix where $x,y\in\{0,1\}^n$ are row index and column index respectively, and each entry is $(\rho_g)_{x,y}=\bra{x}\rho_g\ket{y}=\frac{g(x)\overline{g(y)}}{\lVert g\rVert_2^2}$. For $S\subseteq [n]$ and $\overline{S}=[n]\setminus S$, the reduced density matrix $\rho_S$ is a $2^{\lvert S\rvert}\times 2^{\lvert S\rvert}$ matrix, where $x_S,y_S\in \{0,1\}^{\lvert S\rvert}$ are row index and column index respectively, and each entry 
\begin{equation}
    (\rho_S)_{x_S,y_S}=\bra{x_S}\rho_S\ket{y_S}=\sum_{z\in \{0,1\}^{\lvert \overline{S}\rvert}} \bra{x_Sz}\rho_g\ket{y_Sz}.
\end{equation}
Equivalently, $\rho_S=\operatorname{Tr}_{\overline{S}}(\rho_g)$. %

\begin{definition}
    A normalized $n$-qubit state $\ket{\psi}$ is $\operatorname{AME}(n,2)$ if 
    \begin{equation}\label{eq: AME}
        \rho_S=2^{-\lvert S\rvert} I_{2^{\lvert S\rvert}}
    \end{equation}
    for every $S\in [n]$ with $\lvert S\rvert\leq \lfloor n/2 \rfloor$.
\end{definition}

\begin{remark}
    For $n=6$, it is enough to impose~\eqref{eq: AME} for $\lvert S\rvert=3$, because tracing out further qubits gives the cases $\lvert S\rvert=1,2$.
\end{remark}

Write 
\begin{equation}
    I=\left(\begin{matrix}
        1 & 0\\
        0 & 1
    \end{matrix}\right), \quad
    X=\left(\begin{matrix}
        0 & 1\\
        1 & 0
    \end{matrix}\right), \quad 
    Y=\left(\begin{matrix}
        0 & -i\\
        i & 0
    \end{matrix}\right), \quad
    Z_0=\left(\begin{matrix}
        1 & 0\\
        0 & -1
    \end{matrix}\right).
\end{equation}

An $n$-qubit Pauli string is an operator $E=E_1\otimes \cdots \otimes E_n$ with $E_i\in \{I,X,Y,Z_0\}$ for $i\in [n]$. Its support and weight are 
\begin{equation}
    \operatorname{supp}(E)=\{i:E_i\neq I\}, \quad\quad \operatorname{wt}(E)=\lvert \operatorname{supp}(E)\rvert. 
\end{equation}

\begin{definition}
    Let $\mathcal{C}$ be a $K$-dimensional subspace of $(C^2)^{\otimes n}$. If $\ket{\psi_1},\dots,\ket{\psi_K}$ is an orthonormal basis of $\mathcal{C}$, define the orthogonal projector onto $\mathcal{C}$ by $P_{\mathcal{C}}:=\sum_{i=1}^K\ket{\psi_i}\bra{\psi_i}$. We call $\mathcal{C}$ a pure $((n,K,d))_2$ code if
    \begin{equation}
        P_{\mathcal{C}}EP_{\mathcal{C}}=0,
    \end{equation}
    for every nonidentity $n$-qubit Pauli string $E$ with $\operatorname{wt}(E)<d$. Equivalently, $\bra{\psi_i}E\ket{\psi_j}=0$ for all $i,j\in [K]$.

    In particular, for $K=1$, $\mathcal{C}=\operatorname{span}\{\ket{\psi}\},P_{\mathcal{C}}=\ket{\psi}\bra{\psi}$, the condition is $\bra{\psi}E\ket{\psi}=0$.
\end{definition}

\begin{definition}
    Two nonzero $n$-qubit tensors $u$ and $v$ are locally unitarily equivalent, up to input permutation, if there are exist $2\times 2$ unitary matrices $U_1,\dots,U_n\in U(2)$, a permutation $\pi\in S_n$, and a nonzero scalar $\lambda\in \mathbb{C}\setminus \{0\}$ such that 
    \begin{equation}
        \ket{u}=\lambda P_{\pi}(U_1\otimes \dots \otimes U_n)\ket{v}.
    \end{equation}
\end{definition}

\subsection{AME States and Pure Quantum Codes}

\begin{lemma}\label{lem:614 code}
    A normalized $6$-qubit state $\ket{\psi}$ is $\operatorname{AME}(6,2)$ if and only if its one-dimensional span is a pure $((6,1,4))_2$ code.
\end{lemma}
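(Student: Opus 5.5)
The plan is to translate both conditions into statements about expectation values of Pauli strings and compare them. For $K=1$, the pure $((6,1,4))_2$ condition says exactly that $\bra{\psi}E\ket{\psi}=0$ for every nonidentity Pauli string $E$ with $\operatorname{wt}(E)\le 3$. The $\operatorname{AME}(6,2)$ condition, by the preceding remark, says that $\rho_S=2^{-3}I_8$ for every $S\subseteq[6]$ with $|S|=3$. The bridge between the two is the identity
\[
\bra{\psi}(E_S\otimes I_{\overline S})\ket{\psi}=\operatorname{Tr}\bigl(\rho_S E_S\bigr),
\]
valid for any Pauli string $E_S$ on the qubits in $S$. It follows directly from the definition of $\rho_S=\operatorname{Tr}_{\overline S}(\ket\psi\bra\psi)$ by expanding in the standard basis.

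Next I will use the Pauli expansion recalled in \cref{subsec: signatures and quantum states}. Any $2^3\times 2^3$ matrix $\rho_S$ can be written as
\[
\rho_S=\frac18\sum_{E_S} \operatorname{Tr}(\rho_S E_S)\,E_S,
\]
where the sum runs over all $64$ Pauli strings on $S$. The justification is that these strings form an orthogonal basis for the Hilbert--Schmidt inner product, each with squared norm $8$, and each is Hermitian. Since $\operatorname{Tr}\rho_S=1$, the coefficient of $I$ is $1$. Consequently $\rho_S=\tfrac18 I_8$ holds if and only if $\operatorname{Tr}(\rho_S E_S)=0$ for every nonidentity $E_S$ on $S$.

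With these two facts the lemma follows. For the forward direction, let $E$ be any nonidentity string of weight at most $3$. Its support is contained in some $3$-subset $S$, so $E=E_S\otimes I_{\overline S}$ with $E_S\ne I$. Then $\bra\psi E\ket\psi=\operatorname{Tr}(\rho_SE_S)=0$, because $\rho_S=\tfrac18 I_8$ and $\operatorname{Tr}E_S=0$. For the converse, fix a $3$-subset $S$. Every nonidentity $E_S$ extends to $E_S\otimes I_{\overline S}$, which has weight at most $3$. The code condition then makes all the nonidentity Pauli coefficients of $\rho_S$ vanish, so $\rho_S=\tfrac18I_8$. Finally, the remark upgrades this to all $|S|\le 3$, since partial traces of a maximally mixed state are maximally mixed.

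I do not expect a real obstacle. The only point needing care is bookkeeping: checking the partial-trace identity against the paper's index conventions for $\rho_S$, and noting that weight $<4$ means weight $\le 3$, so every such support fits inside a $3$-subset of $[6]$.
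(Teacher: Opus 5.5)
Your proposal is correct and follows the paper's own proof. You expand $\rho_S$ in the Pauli basis and use the identity $\operatorname{Tr}(E_S\rho_S)=\bra{\psi}(E_S\otimes I_{\overline S})\ket{\psi}$, then match vanishing nonidentity coefficients with the weight-$\le 3$ code condition. The only cosmetic difference is that you work with $|S|=3$ and invoke the remark, whereas the paper treats every $|S|\le 3$ at once.
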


\begin{proof}
    For a subset $S\subseteq [6]$, the reduced density matrix on $S$ is $\rho_S$. Let $\mathcal{E}_{S}=\{I,X,Y,Z_0\}^{\otimes \lvert S\rvert}$ be the set of all $\lvert S\rvert$-qubit Pauli strings and for $E,F\in \mathcal{E}_S$, $\operatorname{Tr}(EF)=2^{\lvert S\rvert}\delta_{E,F}$. Hence the matrices in $\mathcal{E}_S$ form an orthogonal basis of this $2^{\lvert S\rvert}\times 2^{\lvert S\rvert}$ matrix space.
    Then there are unique coefficients $c_E$ such that $\rho_S=\sum_{E\in \mathcal{E}_S} c_E E$. For a fixed $F\in \mathcal{E}_S$, $\operatorname{Tr}(F\rho_S)=\operatorname{Tr}\left(F\sum_{E\in \mathcal{E}_S }c_E E\right)=\sum_{E\in \mathcal{E}_S}c_E \operatorname{Tr}(FE)=2^{\lvert S\rvert}c_F$. Therefore 
    $$\rho_S=2^{-\lvert S\rvert}\sum_{E\in \mathcal{E}_S}\operatorname{Tr}(E\rho_S)E$$.

    By the definition of trace,
    \begin{align}
        \operatorname{Tr}(E\rho_S)&=\sum_{x_S}\bra{x_S} E\rho_S\ket{x_S}\\
        &= \sum_{x_S,y_S}\bra{x_S}E \ket{y_S}\bra{y_S}\rho_S\ket{x_S}\\
        &= \sum_{x_S,y_S} \bra{x_S}E \ket{y_S} \sum_{z\in \{0,1\}^{\lvert \overline{S}\rvert}}\bra{y_Sz}\rho_g\ket{x_Sz}\\
        &= \sum_{x_S,y_S,z}\bra{x_S}E \ket{y_S}\braket{y_Sz |\psi}\braket{\psi|x_Sz}\\
        &=\sum_{x_S,y_S,z}\braket{\psi|x_Sz} \bra{x_S}E \ket{y_S}\braket{y_Sz |\psi}\\
        &=\sum_{x_S,y_S,z}\braket{\psi|x_Sz} \bra{x_Sz}(E\otimes I_{\overline{S}}) \ket{y_Sz}\braket{y_Sz |\psi}\\
        &=\bra{\psi}(E\otimes I_{\overline{S}})\ket{\psi}.
    \end{align}
    Hence,
    \begin{equation}
        \rho_S=2^{-\lvert S\rvert}\sum_{E\in \mathcal{E}_S}\bra{\psi}(E\otimes I_{\overline{S}})\ket{\psi}E.
    \end{equation}
    The coefficient corresponding to $E=I_S$ is $\bra{\psi}I_S\otimes I_{\overline{S}}\ket{\psi}=\braket{\psi|\psi}=1$. Consequently, by uniqueness of the Pauli expansion, $\rho_S=2^{-\lvert S\rvert}I_S$ if and only if 
    \begin{equation}\label{eq: pure code}
        \bra{\psi} (E\otimes I_{\overline{S}})\ket{\psi}=0
    \end{equation}
    for every nonidentity Pauli string $E\in \mathcal{E}_S$.
    Since $\psi$ is $\operatorname{AME}(6,2)$ when $\rho_S=2^{-\lvert S\rvert}I_S$ for every $S\subseteq [6],|S|\leq 3$, then~\eqref{eq: pure code} is hold. Equivalently,
    \begin{equation}
        \bra{\psi} E\ket{\psi}=0,
    \end{equation}
    for every $E\neq I$ with $1\leq \operatorname{wt}(E)\leq 3$.

    Therefore, $\ket{\psi}$ is $\operatorname{AME}(6,2)$ if and only if $\operatorname{span}(\ket{\psi})$ is pure $((6,1,4))_2$ code.
\end{proof}

\subsection{The Standard Four-qubit Code}

\begin{definition}
    Let $S_X:=X^{\otimes 4}$ and $S_Z:=Z_0^{\otimes 4}$. Since $XZ_0=-Z_0X$, we have $S_XS_Z=(-1)^4S_ZS_X=S_ZS_X$, so $S_X$ and $S_Z$ commute. Define the standard four-qubit code $\mathcal{C}_{\mathrm{std}}\subseteq (\mathbb{C}^2)^{\otimes 4}$ to be their simultaneous $+1$ eigenspace:
    \begin{equation}
        \mathcal{C}_{\mathrm{std}}:=\{\ket{\phi}\in (\mathbb{C}^2)^{\otimes 4}:S_X\ket{\phi}=\ket{\phi},S_Z\ket{\phi}=\ket{\phi}\}.
    \end{equation}
    Equivalently, $$\mathcal{C}_{\mathrm{std}}=\operatorname{span}\{\frac{\ket{0000}+\ket{1111}}{\sqrt 2},\frac{\ket{0011}+\ket{1100}}{\sqrt 2},\frac{\ket{0101}+\ket{1010}}{\sqrt 2},\frac{\ket{0110}+\ket{1001}}{\sqrt 2}\}.$$
    These four vectors are orthonormal, and hence $\mathrm{dim} \mathcal{C}_{\mathrm{std}}=4$. The code $\mathcal{C}_{\mathrm{std}}$ is conventionally called the standard $[[4,2,2]]_2$ code. Here the notation $[[n,k,d]]_2$ means that $n$ physical qubits encode $k$ logical qubits with distance $d$.

    And let $\ket{c_0}=\frac{\ket{0000}+\ket{1111}}{\sqrt 2},\ket{c_1}=\frac{\ket{0011}+\ket{1100}}{\sqrt 2},\ket{c_2}=\frac{\ket{0101}+\ket{1010}}{\sqrt 2},\ket{c_3}=\frac{\ket{0110}+\ket{1001}}{\sqrt 2}$, define the orthogonal projector onto $\mathcal{C}_{\mathrm{std}}$ by $P_{\mathcal{C}_{\mathrm{std}}}:=\sum_{i=0}^3\ket{c_i}\bra{c_i}=\frac{1}{4}(I^{\otimes 4}+X^{\otimes 4}+Y^{\otimes 4}+Z_0^{\otimes 4}).$
\end{definition}

\begin{theorem}[Rains~\cite{rains1999quantum}]\label{thm: rains standard code}
    Let $\mathcal{C}\subseteq (\mathbb{C}^2)^{\otimes 4}$ be a 4-dimensional subspace. If $\ket{\psi_1},\ket{\psi_2},\ket{\psi_3},\ket{\psi_4}$ is an orthonormal basis of $\mathcal{C}$, define the orthogonal projector onto $\mathcal{C}$ by $P_{\mathcal{C}}:=\sum_{i=1}^4\ket{\psi_i}\bra{\psi_i}$. Suppose that $P_{\mathcal{C}}EP_{\mathcal{C}}=0$ for every nonidentity 4-qubit Pauli string $E$ of weight 1; equivalently, $\mathcal{C}$ is a a pure $((4,4,2))_2$ code. Then there exist $2\times 2$ unitary matrices $U_1,U_2,U_3,U_4\in U(2)$ such that, with $U=U_1\otimes U_2\otimes U_3\otimes U_4$, we have
    \begin{equation}
        U\mathcal{C}=\mathcal{C}_{\mathrm{std}},\quad UP_{\mathcal{C}}U^\dagger=P_{\mathcal{C}_{\mathrm{std}}}
    \end{equation}
    Thus every pure $((4,4,2))_2$ code is locally unitary equivalent to the standard $[[4,2,2]]_2$ code.
\end{theorem}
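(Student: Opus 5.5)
We follow Rains' argument: first put the code into a normal form via its weight enumerators, then use local unitaries to bring the stabilizer structure to $\langle X^{\otimes4},Z_0^{\otimes4}\rangle$.

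\textbf{Step 1 (enumerators and projector expansion).} Expand $P_{\mathcal C}$ in the Pauli basis,
\[
P_{\mathcal C}=\frac1{16}\sum_E \operatorname{Tr}(EP_{\mathcal C})\,E .
\]
Define Shor--Laflamme enumerators by
\[
A_j=\sum_{\operatorname{wt}E=j}|\operatorname{Tr}(EP_{\mathcal C})|^2,\qquad
B_j=\sum_{\operatorname{wt}E=j}\operatorname{Tr}(EP_{\mathcal C}EP_{\mathcal C}).
\]
Purity of distance $2$ means $P_{\mathcal C}EP_{\mathcal C}=0$ for all weight-one $E$. This gives $\operatorname{Tr}(EP_{\mathcal C})=0$ for those $E$, so $A_1=0$ and $B_1=0$, with $A_0=K^2=16$ and $B_0=K=4$.

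\textbf{Step 2 (forcing $P_{\mathcal C}$ to be a stabilizer projector).} We use three constraints together:
\begin{itemize}
\item the quantum MacWilliams identity $B(x,y)=\tfrac1K A\bigl(\tfrac{x+3y}2,\tfrac{x-y}2\bigr)$;
\item the inequalities $B_j\ge A_j\ge0$;
\item the shadow inequalities.
\end{itemize}
The goal is to show that the unique solution is $A=B=(16,0,0,0,48)$. This means the only Pauli strings with nonzero trace against $P_{\mathcal C}$ have weight $0$ or $4$.

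Next, since $P_{\mathcal C}^2=P_{\mathcal C}$ and $\operatorname{Tr}P_{\mathcal C}=4$, we have $\sum_E|\operatorname{Tr}(EP_{\mathcal C})|^2=16\operatorname{Tr}(P_{\mathcal C}^2)=64$. Together with $A_4=48$, this should show that $P_{\mathcal C}$ is a combination of exactly four commuting weight-four Hermitian products. Here a ``weight-four product'' means a tensor product of four traceless Hermitian unitaries. Equivalently, $P_{\mathcal C}=\frac14(I+g_1+g_2+g_1g_2)$, where $g_1=\bigotimes_i \vec a_i\cdot\vec\sigma$ and $g_2=\bigotimes_i \vec b_i\cdot\vec\sigma$ are products of unit-vector Pauli rotations.

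\textbf{Step 3 (local unitary normalization).} The $g$'s have weight-$4$ support and pairwise commute. We will argue that on each qubit the local factors of $g_1$ and $g_2$ anticommute, meaning $\vec a_i\perp\vec b_i$. Otherwise, the reduced operators would produce a weight-$\le3$ element with nonzero trace, contradicting Step 2.

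Using $\mathbf{SU}(2)\to\mathbf{SO}(3)$, we then choose $U_i$ rotating $(\vec a_i,\vec b_i)$ to $(\hat x,\hat z)$, possibly absorbing signs. The product $U$ then sends $g_1,g_2$ to $\pm X^{\otimes4},\pm Z_0^{\otimes4}$. Any remaining signs are removed by applying a local $Z_0$ or $X$ on one qubit. This gives $UP_{\mathcal C}U^\dagger=P_{\mathcal C_{\mathrm{std}}}$.

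\textbf{Main obstacle.} The hard part is Step 2. Concluding stabilizer-like structure from enumerator equalities requires Rains' argument that $A_j=B_j$ forces the weight-$4$ Pauli components of $P_{\mathcal C}$ to square-commute appropriately. We would first try to prove directly that $\Pi_{\ge}=16P_{\mathcal C}-I$ satisfies $\Pi^2=\ldots$, using the idempotence $P_{\mathcal C}^2=P_{\mathcal C}$ expanded in the Pauli basis, and compare it with the vanishing of the low-weight terms.
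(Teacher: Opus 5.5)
\textbf{The gap.} The paper does not prove this statement; it quotes it from Rains. Judged on its own, your proposal has a genuine gap exactly at the step you flag as the main obstacle, and that step is the entire content of the theorem.

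Write $16P_{\mathcal C}=4I+W$ with $W=\sum_{p,q,r,s=1}^3T_{pqrs}\,\sigma_p\otimes\sigma_q\otimes\sigma_r\otimes\sigma_s$. The enumerator conclusion $A=(16,0,0,0,48)$ says only two things: $W$ has no terms of weight $1,2,3$, and $\|T\|^2=48$. The identity $\sum_E|\operatorname{Tr}(EP_{\mathcal C})|^2=16\operatorname{Tr}(P_{\mathcal C}^2)=64$ is just the sum rule and adds nothing. These norm conditions do not force the real tensor $T\in(\mathbb R^3)^{\otimes4}$ to be a sum of three product tensors, which is what $P_{\mathcal C}=\frac14(I+g_1+g_2+g_1g_2)$ requires. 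So your ``this should show'' asserts the conclusion without proof.

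What you are not using is the full operator identity $P_{\mathcal C}^2=P_{\mathcal C}$, i.e.\ $W^2=8W+48I$, rather than only its trace. One concrete way in is as follows. Since $W$ has full weight, write $W=\sum_{p=1}^3\sigma_p\otimes W_p$, where each $W_p$ acts on qubits $2,3,4$ and has full weight there. Then $W^2=8W+48I$ is equivalent to
\[
\sum_pW_p^2=48I_8 \quad\text{and}\quad \mathfrak i[W_p,W_q]=8\sum_r\epsilon_{pqr}W_r .
\]
Hence $J_p=-W_p/8$ is a spin-$\frac12$ representation of multiplicity $4$ on three qubits. You would then need to show that, up to local unitaries on all four qubits, this forces $W_p=4\sigma_p^{\otimes3}$, i.e.\ $16P_{\mathcal C}=4(I+X^{\otimes4}+Y^{\otimes4}+Z_0^{\otimes4})=16P_{\mathcal C_{\mathrm{std}}}$. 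That classification is the missing argument. Your Step~3 is fine once the $g_1,g_2$ form is actually available.

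\textbf{Errors in the enumerator bookkeeping.} Separately, Step~2 contains several mistakes.
\begin{itemize}
\item With your unnormalized $A_j,B_j$ (so $A_0=16$, $B_0=4$), the MacWilliams identity is $B(x,y)=A\bigl(\frac{x+3y}2,\frac{x-y}2\bigr)$, with no factor $1/K$. Your version would give $\sum_jA_j=256$, contradicting the value $64$ you use later.
\item The inequality $B_j\ge A_j$ fails already at $j=0$; in your normalization the correct form is $KB_j\ge A_j$.
\item $A=B$ is false. For $\mathcal C_{\mathrm{std}}$, counting the normalizer of $\langle X^{\otimes4},Z_0^{\otimes4}\rangle$ gives $B=(4,0,72,96,84)$. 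So your planned appeal to ``$A_j=B_j$'' rests on a false premise.
\item $16P_{\mathcal C}-I$ should be $16P_{\mathcal C}-4I$.
\end{itemize}
The low-weight vanishing itself is easier than you make it. Purity gives $A_1=B_1=0$. The coefficient of $x^3y$ in the corrected identity, together with the sum rule, gives $A_4=48+A_2$ and $2A_2+A_3=0$. Hence $A_2=A_3=0$, with no shadow inequalities needed; equivalently, every three-qubit marginal of $P_{\mathcal C}$ is $\frac12I_8$.
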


\subsection{Four-qubit Reductions of AME States}

\begin{lemma}\label{lem: QA is orth proj of pure 442}
    Let $\ket{\psi}$ be $\operatorname{AME}(6,2)$ state and $P=\ket{\psi}\bra{\psi}$. For every pair $A\subset [6]$, define the $2^4\times 2^4$ matrix $Q_A=4 \operatorname{Tr}_A(P)$ is the rank-$4$ projector of a pure $((4,4,2))_2$ code.
\end{lemma}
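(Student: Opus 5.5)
We have to show three things about $Q_A$ for a pair $A\subset[6]$ with $|A|=2$ and complement $\overline A$ of size 4: it is an orthogonal projector, it has rank $4$, and its range $\mathcal C$ is a pure $((4,4,2))_2$ code. The plan is a Schmidt-decomposition argument across the bipartition $A\,|\,\overline A$, followed by the Pauli-expectation criterion from the proof of \cref{lem:614 code}.

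\textbf{Step 1: $Q_A$ is a rank-4 projector.} Write $\ket\psi=\sum_{a\in\{0,1\}^2}\ket a_A\otimes\ket{\phi_a}_{\overline A}$, where $\ket{\phi_a}$ is the (unnormalized) slice of $\psi$ with the $A$-coordinates fixed to $a$. Since $\ket\psi$ is $\operatorname{AME}(6,2)$, the reduction satisfies $\rho_A=\tfrac14 I_4$. This is exactly the Gram condition $\braket{\phi_b|\phi_a}=\tfrac14\delta_{ab}$, the same content as {\sc 2nd-Orth}. Hence the vectors $\ket{e_a}:=2\ket{\phi_a}$ are orthonormal in $(\mathbb C^2)^{\otimes4}$. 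Tracing out $A$ gives $\operatorname{Tr}_A(P)=\sum_a\ket{\phi_a}\bra{\phi_a}$, so
\[
Q_A=\sum_a\ket{e_a}\bra{e_a}.
\]
This is the orthogonal projector onto the 4-dimensional span $\mathcal C$ of the $\ket{e_a}$.

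\textbf{Step 2: purity of the code.} Let $E$ be a nonidentity Pauli string on $\overline A$ of weight $1$. We need $P_{\mathcal C}EP_{\mathcal C}=0$, which is equivalent to $\bra{e_b}E\ket{e_a}=0$ for all $a,b$. Expand over the Pauli basis on $A$, as in the proof of \cref{lem:614 code}: for any $2$-qubit Pauli string $F$ on $A$,
\[
\bra\psi(F\otimes E)\ket\psi=\sum_{a,b}\bra a F\ket b\,\bra{\phi_a}E\ket{\phi_b}=\tfrac14\operatorname{Tr}(F^{\mathtt T}W),
\qquad W_{ab}:=\bra{e_a}E\ket{e_b}.
\]
The operator $F\otimes E$ is a nonidentity Pauli string of weight $\operatorname{wt}(F)+1\le 3$. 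By \cref{lem:614 code}, every such expectation vanishes. So $\operatorname{Tr}(F^{\mathtt T}W)=0$ for all $16$ choices of $F$. The transposes $F^{\mathtt T}$ are again Pauli strings up to sign, and the Pauli strings form a basis of $4\times4$ matrices, so $W=0$. This is exactly $P_{\mathcal C}EP_{\mathcal C}=0$.

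The only step needing care is the bookkeeping in Step 2. One must check that $F=I$ is allowed, which uses the weight-$1$ case of \cref{lem:614 code} on $\overline A$. One must also track the index convention (transpose versus complex conjugate) in the partial trace, so that the spanning argument over $F$ is applied to the right matrix. Neither point is a real obstacle.
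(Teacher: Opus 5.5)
Your proof is correct and is essentially the paper's argument. Step 1 is identical: the rescaled slices $2\ket{\phi_a}$ are orthonormal because $\rho_A=\tfrac14 I_4$, so $Q_A$ is the rank-$4$ projector onto their span. Step 2 uses the same input, $3$-uniformity on $A\cup\{k\}$; the paper compares $\rho_{A\cup\{k\}}=\tfrac18 I$ coefficient by coefficient in the matrix-unit basis $\ket{x}\bra{x'}$ on $A$ to get $\operatorname{Tr}_{\overline A\setminus\{k\}}(\ket{v_x}\bra{v_{x'}})=\tfrac12\delta_{x,x'}I_2$, while you use the equivalent Pauli-expectation form and invert in the Pauli basis on $A$ to obtain $W=0$ directly, which is only a change of basis.
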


\begin{proof}
    Without loss of generality, we can suppose that $A=\{1,2\}$. Dividing the 6 qubits into 2 groups, one is $\{1,2\}$ and the other is $\{3,4,5,6\}$. Thus, we can represent the $\ket{\psi}$, $\ket{\psi}=\sum_{x\in\{0,1\}^2}\sum_{y\in\{0,1\}^4} c_{x,y} \ket{x}_{12}\otimes \ket{y}_{3456}$ where $x=(x_1,x_2)$ are the first two qubits and $y=(x_3,x_4,x_5,x_6)$ are the last four qubits. For any fixed $x$, we define $\ket{w_x}=\sum_{y\in\{0,1\}^4}c_{x,y}\ket{y}_{3456}$. Hence, 
    \begin{equation}\label{eq: psi}
        \ket{\psi}=\sum_{x\in\{0,1\}^2}\ket{x}_{12}\otimes\ket{w_x}_{3456}.
    \end{equation}

    Then $P=\ket{\psi}\bra{\psi}=\sum_{x,x'\in \{0,1\}^2}\ket{x}\bra{x'}_{12}\otimes \ket{w_x}\bra{w_{x'}}_{3456}$. For $A=(1,2)$, we can compute the reduced density matrix $\rho_{12}$,
    \begin{align}
        \rho_{12}&=\operatorname{Tr}_{3456}(P)\\
        &=\sum_{x,x'} \operatorname{Tr}(\ket{w_x}\bra{w_{x'}})\ket{x}\bra{x'}_{12}\\
        &=\sum_{x,x'}\braket{w_{x'}|w_x}\ket{x}\bra{x'}_{12},\label{eq: rho_12}
    \end{align}
    since $\operatorname{Tr}(\ket{u}\bra{v})=\braket{v|u}$. Recall that $\ket{\psi}$ is $\mathrm{AME}(6,2)$, $\rho_{12}=\frac{1}{4}I_4=\frac{1}{4}\sum_{x\in\{0,1\}^2}\ket{x}\bra{x}$. Since $\{\ket{x}\bra{x'}:x,x'\in\{0,1\}^2\}$ is a basis of $4\times 4$ matrix, thus we can get $\braket{w_{x'}|w_x}=\frac{1}{4}\delta_{x,x'}$.
    This indicates that $\lVert w_x\rVert^2=\frac{1}{4}$ if $x=x'$ and $\ket{w_x}$ is orthogonal to $\ket{w_{x'}}$ if $x\neq x'$. For simplicity, we define $\ket{v_x}=2\ket{w_x}$ and $\braket{v_{x'}|v_x}=\delta_{x,x'}$ and $\{\ket{v_x}:x\in\{0,1\}^2\}$ are pairwise orthogonal. Thus, ~\eqref{eq: psi} is equal to
    \begin{equation}\label{eq: psi v}
        \ket{\psi}=\frac{1}{2}\sum_{x\in\{0,1\}^2}\ket{x}_{12}\otimes \ket{v_x}_{3456}.
    \end{equation}

    We can get $Q_A=4\operatorname{Tr}_{12}(P)= 4\sum_{x,x'\in\{0,1\}^2} \operatorname{Tr}(\ket{x}\bra{x'})\ket{w_x}\bra{w_{x'}}=4 \sum_{x,x'\in\{0,1\}^2}\braket{x|x'}\ket{w_x}\bra{w_{x'}}=\sum_{x\in\{0,1\}^2}\ket{v_x}\bra{v_x}$.

    Let $\mathcal{C}=\operatorname{span}\{\ket{v_x}:x\in\{0,1\}^2\}$. Since $\ket{v_x},x\in\{0,1\}^2$ are pairwise orthogonal, $\operatorname{dim}\mathcal{C}=4$. We can get $Q_A=\sum_{x\in\{0,1\}^2}\ket{v_x}\bra{v_x}=Q_A^\dagger$, and $Q_A^2=\sum_{x,x'\in\{0,1\}^2}\ket{v_x}\braket{v_x|v_{x'}}\bra{x'}=\sum_{x,x'}\delta_{x,x'}\ket{v_x}\bra{v_{x'}}=\sum_{x\in\{0,1\}^2}\ket{v_x}\bra{v_{x'}}=Q_A$. Thus, $Q_A$ is the orthogonal projector onto $\mathcal{C}$.

    Now we prove that for any nonidentity 4-qubit Pauli string E with $\operatorname{wt}(E)=1$, we have $Q_AEQ_A=0$.

    Fix a position $k\in \{3,4,5,6\}$, without loss of generality, suppose that $k=3$. Since $\ket{\psi}$ is $\mathrm{AME}(6,2)$, $\rho_{123}=\operatorname{Tr}_{456}(P)=\frac{1}{8}I_4\otimes I_2$. By~\eqref{eq: psi v}, $\rho_{123}=\frac{1}{4}\sum_{x,x'\in\{0,1\}^2}\ket{x}\bra{x'}_{12}\otimes \operatorname{Tr}_{456}(\ket{v_x}\bra{v_{x'}})$. Note that $\frac{1}{8}I_4\otimes I_2=\frac{1}{8}\sum_{x\in\{0,1\}^2}\ket{x}\bra{x}_{12}\otimes I_2$, then we can get
    \begin{align}
        \rho_{123}&=\frac{1}{8}\sum_{x\in\{0,1\}^2}\ket{x}\bra{x}_{12}\otimes I_k.\\
        \rho_{123}&=\frac{1}{4}\sum_{x,x'\in\{0,1\}^2}\ket{x}\bra{x'}_{12}\otimes \operatorname{Tr}_{456}(\ket{v_x}\bra{v_{x'}}).
    \end{align}
    Since $\{\ket{x}\bra{x'}:x,x'\in\{0,1\}^2\}$ is a orthogonal basis of $4\times 4$ matrix, comparing the coefficient, we can get that 
    \begin{equation}
        \operatorname{Tr}_{456}(\ket{v_x}\bra{v_{x'}})=\delta_{x,x'}\frac{I_k}{2}.
    \end{equation}

    Thus, suppose that $E=E_3\otimes I_{456}$ be a 4-qubit Pauli string and one of $\{X,Y,Z_0\}$ in the $k$-th qubit, we have
    \begin{align}
        Q_AEQ_A&=\sum_{x,x'\in\{0,1\}^2}\ket{v_x}\bra{v_{x}}E \ket{v_{x'}}\bra{v_{x'}}\\
        &=\sum_{x,x'\in\{0,1\}^2}\ket{v_x}\operatorname{Tr}_{3456}(E\ket{v_{x'}}\bra{v_x})\bra{v_{x'}}\\
        &=\sum_{x,x'\in\{0,1\}^2}\ket{v_x}\operatorname{Tr}_3[E_3\operatorname{Tr}(\ket{v_x}\bra{v_{x'}})]\bra{v_{x'}}\\
        &=\sum_{x,x'\in\{0,1\}^2}\frac{\delta_{x,x'}}{2} \operatorname{Tr}(E_k) \ket{v_x}\bra{v_{x'}}\\
        &=0,
    \end{align}
    since $\operatorname{Tr}(E_k)=0$. Finally, $Q_A$ is a the rank-4 projector of a pure $((4,4,2))_2$ code.
\end{proof}

\begin{corollary}\label{cor: fixed by SX and SZ}
    Let $\ket{\psi}$ be $\mathrm{AME}(6,2)$ state. For every pair $A\subset [6]$, we have $\mathrm{AME}(6,2)$ state $\ket{\psi^{(1)}}:=(I_A\otimes U_{\overline{A}})\ket{\psi}$ such that $(I_A\otimes S_X)\ket{\psi^{(1)}}=\ket{\psi^{(1)}}$ and $(I_A\otimes S_Z)\ket{\psi^{(1)}}=\ket{\psi^{(1)}}$.
\end{corollary}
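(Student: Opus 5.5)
The plan is to combine \cref{lem: QA is orth proj of pure 442} with Rains' theorem (\cref{thm: rains standard code}). Fix a pair $A\subset[6]$ and write $\overline A$ for the complementary four qubits. By \cref{lem: QA is orth proj of pure 442}, $Q_A=4\operatorname{Tr}_A(\ket{\psi}\bra{\psi})$ is the orthogonal projector onto a pure $((4,4,2))_2$ code $\mathcal C=\operatorname{span}\{\ket{v_x}:x\in\{0,1\}^2\}$. Here the $\ket{v_x}$ come from the decomposition $\ket{\psi}=\frac12\sum_x\ket{x}_A\otimes\ket{v_x}_{\overline A}$, in which the $\ket{v_x}$ are orthonormal. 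Rains' theorem then gives $U_{\overline A}=U_3\otimes U_4\otimes U_5\otimes U_6$ with $U_{\overline A}\mathcal C=\mathcal C_{\mathrm{std}}$.

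Next, set $\ket{\psi^{(1)}}=(I_A\otimes U_{\overline A})\ket{\psi}=\frac12\sum_x\ket{x}_A\otimes U_{\overline A}\ket{v_x}$. Each vector $U_{\overline A}\ket{v_x}$ lies in $\mathcal C_{\mathrm{std}}$, which is by definition the simultaneous $+1$ eigenspace of $S_X$ and $S_Z$. Hence $S_XU_{\overline A}\ket{v_x}=U_{\overline A}\ket{v_x}$, and the same holds for $S_Z$. Applying $I_A\otimes S_X$ termwise to the expansion shows that it fixes $\ket{\psi^{(1)}}$, and likewise for $I_A\otimes S_Z$.

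Finally, we need $\ket{\psi^{(1)}}$ to remain $\mathrm{AME}(6,2)$. A local unitary $W=\bigotimes_iW_i$ sends $\rho_S$ to $W_S\rho_SW_S^\dagger$ for every $S$, because the factors on $\overline S$ cancel in the partial trace. So a maximally mixed $\rho_S$ stays maximally mixed, and normalization is preserved. This is the same argument as \cref{prop: invariant kth Orth}.

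The only substantive input is Rains' classification, which we take as given. The one point needing care is that Rains' theorem provides a unitary with $U_{\overline A}\mathcal C=\mathcal C_{\mathrm{std}}$ as subspaces. This suffices, because the stabilizer condition only requires membership in $\mathcal C_{\mathrm{std}}$, not any particular basis alignment.
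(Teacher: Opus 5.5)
Your proposal is correct and follows the paper's proof essentially line by line. You expand $\ket{\psi}=\frac12\sum_x\ket{x}_A\otimes\ket{v_x}_{\overline A}$ using \cref{lem: QA is orth proj of pure 442}, invoke \cref{thm: rains standard code} to get $U_{\overline A}\ket{v_x}\in\mathcal C_{\mathrm{std}}$ so that each term is fixed by $S_X$ and $S_Z$, and observe that a local unitary conjugates every reduced density matrix, which preserves the $\operatorname{AME}(6,2)$ condition.
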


\begin{proof}
Fix a pair $A\subseteq[6]$. By
\cref{lem: QA is orth proj of pure 442,thm: rains standard code},
we can write
\[
\ket{\psi}=\frac12\sum_{x\in\{0,1\}^2}
\ket{x}_A\otimes\ket{v_x}_{\overline A},
\qquad
U_{\overline A}\ket{v_x}\in\mathcal C_{\mathrm{std}},
\]
where $U_{\overline A}=\bigotimes_{k\in\overline A}U_k$
and each $U_k$ is unitary.
Set $\ket{\psi^{(1)}}=(I_A\otimes U_{\overline A})\ket{\psi}$.
For $S\in\{S_X,S_Z\}$, the definition of
$\mathcal C_{\mathrm{std}}$ gives
\[
(I_A\otimes S)\ket{\psi^{(1)}}
=\frac12\sum_x\ket{x}_A\otimes S U_{\overline A}\ket{v_x}
=\frac12\sum_x\ket{x}_A\otimes U_{\overline A}\ket{v_x}
=\ket{\psi^{(1)}}.
\]
Set $U_k=I$ for $k\in A$. For every $T\subseteq[6]$
with $|T|\leq3$,
\[
\rho_T^{(1)}
=\left(\bigotimes_{k\in T}U_k\right)
\rho_T
\left(\bigotimes_{k\in T}U_k\right)^\dagger
=2^{-|T|}I_{2^{|T|}}.
\]
Thus $\ket{\psi^{(1)}}$ is $\operatorname{AME}(6,2)$.
\end{proof}

\subsection{The Six-qubit Local-Unitary Normal Form}

\begin{lemma}\label{lem: ame62 is unitary equivalent}
    Every $\mathrm{AME}(6,2)$ state $\ket{\psi^{(0)}}$ is locally unitary equivalent, up to variables permutation, to one fixed 6-qubit stabilizer state, called the quantum hexacode state.
\end{lemma}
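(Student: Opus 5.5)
We begin with \cref{cor: fixed by SX and SZ} applied to the pair $A=\{1,2\}$. After a local unitary on qubits $3,4,5,6$ we may assume
\[
\ket{\psi^{(1)}}=\frac12\sum_{x\in\{0,1\}^2}\ket{x}_{12}\otimes\ket{v_x}_{3456},
\qquad \ket{v_x}\in\mathcal C_{\mathrm{std}} .
\]
The vectors $\ket{v_x}$ are orthonormal, so there is a $4\times4$ unitary $W$ with $\ket{v_x}=\sum_j W_{jx}\ket{c_j}$. Equivalently, $\ket{\psi^{(1)}}=(W^{\tt T}\otimes I)\ket{\Phi}$ on qubits $12$, where $\ket{\Phi}=\frac12\sum_j\ket{j}_{12}\ket{c_j}_{3456}$ is a fixed stabilizer state. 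Note that $W^{\tt T}$ is a two-qubit unitary that need not be local. The whole problem therefore reduces to understanding which two-qubit unitaries $V$ make $(V\otimes I)\ket{\Phi}$ an AME state, and to showing that every such $V$ can be absorbed into local unitaries (possibly together with a permutation of qubits).

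The second step uses the remaining AME conditions, namely those for sets $S$ meeting $\{1,2\}$ in exactly one qubit. Take $S=\{1,3,4\}$ as an example. The condition $\rho_S=I/8$ becomes $\bra{\psi}E\ket{\psi}=0$ for Pauli strings $E$ supported on $S$. We push $E$ through the structure of $\mathcal C_{\mathrm{std}}$. Every weight-$2$ Pauli on qubits $3456$ that commutes with $S_X$ and $S_Z$ acts on $\mathcal C_{\mathrm{std}}$ as a logical two-qubit Pauli, and weight-$2$ Paulis on $\{3,4\}$ generate the logical Pauli group. Each condition then becomes a trace identity of the form
\[
\operatorname{Tr}\bigl(W^\dagger(\sigma\otimes I)W\,\overline{L}\bigr)=0,
\]
where $\sigma$ is a single-qubit Pauli acting on qubit $1$ or qubit $2$, and $L$ is a logical Pauli. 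In the Pauli-transfer (adjoint) picture, $W$ maps the weight-one Paulis on qubits $1,2$ to combinations of logical Paulis. Collecting the conditions over all such $S$ forces the image of each weight-one Pauli to lie in the span of those logical operators that are representable with weight $\ge 3$, i.e.\ the ones forbidden by the pure distance.

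The last step is classification. Up to the Clifford symmetries of $\ket{\Phi}$, we expect the resulting constraint set to force the adjoint action of $W$ to send local Paulis to local Paulis, so that $W$ is Clifford up to local unitaries. Local unitaries on qubits $1,2$ commute with the constraints, and transversal logical operations on $\mathcal C_{\mathrm{std}}$ (which come from local unitaries on $3456$ together with qubit permutations) act on the remaining freedom. Modding out by both leaves finitely many cases, each of which we verify by direct computation to be the hexacode state.

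The main obstacle is this rigidity step: showing that the orthogonality constraints pin down $W$ up to local and logical symmetries rather than leaving a continuous family. We would first try to use the fact that the constraints make the operator-Schmidt decomposition of $W$ across $1|2$ maximally entangled, i.e.\ $W$ is a dual-unitary-like gate, and then use the known classification of such gates compatible with the logical Pauli structure. If that route gets stuck, we would fall back on Rains' enumerator argument: the shadow enumerator of a $((6,1,4))_2$ code is unique, which forces the stabilizer weight distribution, and from that distribution we would derive stabilizerness directly.
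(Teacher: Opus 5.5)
\textbf{The rigidity step is missing.} Your setup is right. After the corollary, $\ket{\psi^{(1)}}=(V\otimes I)\ket{\Phi}$ with $V=W^{\mathtt T}$. The only $\mathrm{AME}$ conditions that are not automatic are those for $S=\{a,b,c\}$ with $a\in\{1,2\}$ and $b,c\in\{3,4,5,6\}$.

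One correction: for the fixed pair $\{3,4\}$, only $X_3X_4,Y_3Y_4,Z_3Z_4$ commute with $S_X,S_Z$, so they do not generate the logical Pauli group. Over all six pairs one gets nine logical Paulis. Their matrices in the real basis $\{\ket{c_j}\}$ are real symmetric and, together with $I$, span all symmetric $4\times4$ matrices. So the conditions say exactly that $V^\dagger\sigma V$ is antisymmetric for each of the six weight-one Paulis $\sigma$ on qubits $1,2$.

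The decisive step is that this system has a single solution modulo local unitaries and a qubit permutation. You only ``expect'' this, and neither of your routes establishes it.

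The first route rests on a false claim. Take $V=\mathrm{CNOT}\,(C\otimes D)$ with $\mathrm{CNOT}=\ket{0}\bra{0}\otimes I+\ket{1}\bra{1}\otimes X$, $C=\frac{1}{\sqrt2}\left[\begin{smallmatrix}1&\mathfrak i\\ 1&-\mathfrak i\end{smallmatrix}\right]$ and $D=\operatorname{diag}(1,\mathfrak i)$. This $V$ satisfies every constraint, e.g.\ $V^\dagger X_1V=-Z\otimes Y$ and $V^\dagger Z_1V=-Y\otimes I$. Yet it has operator-Schmidt rank $2$, so the constraints do not make $W$ maximally operator-entangled or dual-unitary. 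The same example shows the adjoint action need not send local Paulis to local Paulis. In any case, dual-unitary gates form a continuous family modulo local unitaries, so that property could not give rigidity.

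The fallback does not work as stated either. The enumerators record only the sums $\sum_{\mathrm{wt}(E)=w}|\bra{\psi}E\ket{\psi}|^2$. These do not force the nonzero expectation values to have modulus one and form a group, so stabilizerness does not follow from them. Even granting stabilizerness, you would still need the local-unitary classification of stabilizer $\mathrm{AME}(6,2)$ states.

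\textbf{How to close it.} Pass from $V$ to $G:=VV^{\mathtt T}$.
\begin{itemize}
\item Since $\overline V V^{\mathtt T}=I$, antisymmetry of $V^\dagger\sigma V$ is equivalent to $G\sigma^{\mathtt T}=-\sigma G$.
\item Because $X^{\mathtt T}=X$, $Z^{\mathtt T}=Z$ and $Y^{\mathtt T}=-Y$, conjugation by $G$ agrees with conjugation by $Y\otimes Y$ on a generating set of $M_4(\mathbb C)$. Hence $G=c\,Y\otimes Y$ with $|c|=1$.
\item Take $V_0$ to be the Clifford gate above, for which $V_0V_0^{\mathtt T}=-Y\otimes Y$. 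After rescaling by a phase, any solution is $V_0O$ with $O$ unitary and complex orthogonal, hence real orthogonal.
\item So $V=LV_0$ with $L\in\{L\in\mathbf U(4):L(Y\otimes Y)L^{\mathtt T}=Y\otimes Y\}\cong\mathbf O(4)$.
\item Since $AYA^{\mathtt T}=\det(A)\,Y$, this group contains $\mathbf{SU}(2)\otimes\mathbf{SU}(2)$, which is its six-dimensional identity component. It also contains $\mathrm{SWAP}$, which has determinant $-1$ and so lies in the other component.
\item Hence $V\in(\mathbf{SU}(2)\otimes\mathbf{SU}(2))\{I,\mathrm{SWAP}\}V_0$ up to phase. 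Every $\mathrm{AME}(6,2)$ state is therefore locally unitarily equivalent, up to swapping qubits $1,2$, to the stabilizer state $(V_0\otimes I)\ket{\Phi}$.
\end{itemize}

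Completed this way, your argument is a genuine alternative to the paper's. The paper instead analyzes the second pure $((4,4,2))_2$ projector $Q_{56}^{(1)}$ by hand. It block-decomposes it in the Bell basis on qubits $3,4$, rotates four orthonormal maximally entangled states on qubits $1,2$ to the Bell basis via $\mathbf{SU}(2)\to\mathbf{SO}(3)$, and then extracts the last two stabilizer generators from a commutant computation at $\{1,3\}$. Your version replaces those computations with a single Takagi-type observation.
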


\begin{proof}
    Let $P^{(0)}=\ket{\psi^{(0)}}\bra{\psi^{(0)}}$. Apply~\cref{lem: QA is orth proj of pure 442} to the pair $(1,2)$ and the matrix $Q_{12}^{(0)}=4\operatorname{Tr}(P^{(0)})$. By~\cref{thm: rains standard code}, there is a local unitary $U_{3456}=U_3\otimes U_4\otimes U_5\otimes U_6$ such that 
    \begin{equation}
        U_{3456}Q_{12}^{(0)}U_{3456}^\dagger=Q_{\mathrm{std}}.
    \end{equation}
    Define $\ket{\psi^{(1)}}=(I_{12}\otimes U_{3456})\ket{\psi^{(0)}}$ and $P^{(1)}=\ket{\psi^{(1)}}\bra{\psi^{(1)}}$. Local unitaries preserve the $\mathrm{AME}$ equations, thus $\ket{\psi^{(1)}}$ is $\mathrm{AME}(6,2)$ state. And we have 
    \begin{equation}
        Q_{12}^{(1)}:=4\operatorname{Tr}_{12}(P^{(1)})=U_{3456}Q_{12}^{(0)}U_{3456}^\dagger=Q_{\mathrm{std}}.
    \end{equation}
    By~\cref{cor: fixed by SX and SZ}, $\ket{\psi^{(1)}}$ is fixed by (omitting $\otimes$) 
    \begin{equation}
        S_1=IIXXXX,\quad\quad S_2=IIZ_0Z_0Z_0Z_0.
    \end{equation}

    Let $Q:=Q_{56}^{(1)}=4\operatorname{Tr}_{56}(P^{(1)})$, ~\cref{lem: QA is orth proj of pure 442} says that $Q$ is the projector of a pure $((4,4,2))_2$ code. Since $S_1\ket{\psi^{(1)}}=\ket{\psi^{(1)}}$ we have $S_1P^{(1)}S_1=P^{(1)}$. With respect to the cut $1234|56$, the $S_1$ is $(IIXX)_{1234}\otimes (XX)_{56}$, since unitary invariance of the partial trace on qubits 5,6 gives
    \begin{equation}
        (IIXX)Q(IIXX)=4\operatorname{Tr}_{56}(S_1P^{(1)}S_1)=4\operatorname{Tr}_{56}(P^{(1)})=Q.
    \end{equation}
    The same argument with $S_2$ gives $(IIXX)Q=Q(IIXX),(IIZ_0Z_0)Q=Q(IIZ_0Z_0)$, namely $[Q,IIXX]=[Q,IIZ_0Z_0]=0$.

    Let $\ket{b_1}=\ket{\Phi^+}=\frac{\ket{00}+\ket{11}}{\sqrt 2},\ket{b_2}=\ket{\Phi^-}=\frac{\ket{00}-\ket{11}}{\sqrt 2},\ket{b_3}=\ket{\Psi^+}=\frac{\ket{01}+\ket{10}}{\sqrt 2},\ket{b_4}=\frac{\ket{01}-\ket{10}}{\sqrt 2}$ be an orthonormal basis of two qubit on 3,4, thus $I_{34}=\sum_{i=1}^4\ket{b_i}\bra{b_i}$.

    We can rewrite $Q=(I_{12}\otimes I_{34})Q(I_{12}\otimes I_{34})=\sum_{i,j=1}^4(I_{12}\otimes \ket{b_i}\bra{b_i})Q(I_{12}\otimes \ket{b_j}\bra{b_j})$. Define $Q_{ij}=(I_{12}\otimes \bra{b_i})Q(I_{12}\otimes \ket{b_j})$, we can get $Q=\sum_{i,j=1}^4Q_{ij}\otimes\ket{b_i}\bra{b_j}$. Consider the four Bell state's eigenvalue with respect to $XX$ and $ZZ$,
    \[
\begin{array}{c|cc}
 & X\otimes X & Z\otimes Z \\ \hline
|b_1\rangle & +1 & +1 \\
|b_2\rangle & -1 & +1 \\
|b_3\rangle & +1 & -1 \\
|b_4\rangle & -1 & -1
\end{array}
\]
    Then we can get $(I_{12}\otimes X_3X_4)(Q_{ij}\otimes \ket{b_i}\bra{b_j})=Q_{ij}\otimes X_3X_4\ket{b_i}\bra{b_j}=x_iQ_{ij}\otimes\ket{b_i}\bra{b_j}$ where $x_i\in{\pm 1}$ and $x_i$ is the eigenvalue of $X_3X_4$ to $\ket{b_i}$. We also can get $(Q_{ij}\otimes \ket{b_i}\bra{b_j})(I_{12}\otimes X_3X_4)=x_jQ_{ij}\otimes \ket{b_i}\bra{b_j}$ where $x_j\in\{\pm 1\}$ and $x_j$ is the eigenvalue of $X_3X_4$ to $\ket{b_j}$. Since $[Q,IIXX]=0$, $(x_i-x_j)Q_{ij}=0$. Similarly, $(z_i-z_j)Q_{ij}=0$ where $z_i,z_j\in \{\pm 1\}$ are the eigenvalue of $Z_3Z_4$ with respect to $\ket{b_i},\ket{b_j}$ respectively. Since $i\neq j$ we have at least one of $x_i\neq x_j$ and $z_i\neq z_j$ hold, thus $Q_{ij}=0$ for $i\neq j$. Hence, $Q=\sum_{i=1}^4Q_{ii}\otimes \ket{b_i}\bra{b_i}$.

    Since $Q^2=Q$, expanding $Q^2$ we can get $Q^2=\sum_{i=1}^4 Q_{ii}^2\otimes \ket{b_i}\otimes\bra{b_i}$. Comparing coefficient, we can get $Q_{ii}^2=Q_{ii}$. And since $Q^\dagger=Q$, we can get $Q^\dagger=\sum_{i=1}^4 Q_{ii}^\dagger \ket{b_i}\bra{b_i}$. Comparing coefficient, we can also get $Q_{ii}^\dagger=Q_{ii}$. Hence each $Q_{ii}$ is an orthogonal projector.

    Since $\ket{\psi^{(1)}}$ is $\mathrm{AME}(6,2)$ state, $\rho_{34}^{(1)}=\frac{1}{4}I_{34}$. Then $\operatorname{Tr}_{12}Q=4\operatorname{Tr}_{12}(\operatorname{Tr}_{56}P^{(1)})=4\operatorname{Tr}_{1256}P^{(1)}=4\rho_{34}^{(1)}=I_{34}=\sum_{i=1}^4\ket{b_i}\bra{b_i}$. On the other hand, $\operatorname{Tr}_{12}(Q)=\sum_{i=1}^4\operatorname{Tr}(Q_{ii})\ket{b_i}\bra{b_i}$. Since $\ket{b_i}\bra{b_i}$ are orthogonal, comparing the coefficient, we can get $\operatorname{Tr}(Q_{ii})=1$ for each $i\in [4]$. Since the trace of orthogonal projector is equal to the rank, $\operatorname{rank}(Q_{ii})=1$. Thus, exist uniform vector $\ket{u_i}$ such that $Q_{ii}=\ket{u_i}\bra{u_i}$ for each $i\in [4]$.

    Consider $\operatorname{Tr}_{34}Q$, by $\mathrm{AME}(6,2)$, we have $\operatorname{Tr}_{34}Q=I_{12}$. On the other hand, we have $\operatorname{Tr}_{34}Q=\sum_{i=1}^4Q_{ii} \operatorname{Tr}(\ket{b_i}\bra{b_i})=\sum_{i=1}^4 Q_{ii}$. Fixed $j\in [4]$, $\sum_{i=1}^4\ket{u_j}Q_{ii}\bra{u_j}=\ket{u_j}I_{12}\bra{u_j}=1$. We also know that $\sum_{i=1}^4\ket{u_j}Q_{ii}\bra{u_j}=\sum_{i=1}^4\braket{u_j|u_i}\braket{u_j|u_i}=\sum_{i=1}^4\lvert \braket{u_i|u_j}\rvert^2$. Thus $1=\sum_{i=1}^4\lvert \braket{u_i|u_j}\rvert^2$, and since $\lvert \braket{u_j|u_j}\rvert^2=1$. Hence, $\braket{u_i|u_j}=0$ for $i\neq j$. So the four vectors $\ket{u_i},i\in[4]$ form an orthonormal basis of the two-qubit space.

    Since $Q$ is the orthogonal projector of pure $((4,4,2))_2$ code, for any nonidentity Pauli string $E$ with weight one, we have $Q(E\otimes I_{34})Q=0$. Expanding, 
    \begin{align}
        Q(E\otimes I_{34})Q&=\sum_{i,j\in\{0,1\}^2}Q_{ii}EQ_{jj}\otimes \ket{b_i}\braket{b_i|b_j}\bra{b_j}\\
        &=\sum_{i\in\{0,1\}^2}Q_{ii}EQ_{ii}\otimes\ket{b_i}\bra{b_i}\\
        &=\sum_{i\in\{0,1\}^2}\ket{u_i}\bra{u_i}E\ket{u_i}\bra{u_i}\otimes \ket{b_i}\bra{b_i}\\
        &=\sum_{i\in\{0,1\}^2}\bra{u_i}E\ket{u_i}Q_{ii}\otimes \ket{b_i}\bra{b_i}\\
        &=0.
    \end{align}
    Since distinct $i$ on qubit $3,4$ has orthogonal space $\ket{b_i}\bra{b_i}$, thus $\bra{u_i}E\ket{u_i}=0$ for each $i\in [3]$.

    Let $\rho_1=\operatorname{Tr}_2(\ket{u_i}\bra{u_i})$ be the density matrix on qubit 1, note that every single qubit density matrix can be write as $\rho_1=\frac{1}{2}(I+\operatorname{Tr}(\rho_1X)X+\operatorname{Tr}(\rho_1Y)Y+\operatorname{Tr}(\rho_1Z)Z)$. Let $E=X\otimes I$, we can get $\operatorname{Tr}(\rho_1X)=\bra{u_i}X\otimes I\ket{u_i}=0$. Thus, $\operatorname{Tr}(\rho_1X)=0$. Similarly, let $E=Y\otimes I,Z_0\otimes I$, we can get $\operatorname{Tr}(\rho_1Y)=\operatorname{Tr}(\rho_1Z_0)=0$. Hence, $\rho_1=\frac{1}{2}I$. Let $\rho_2=\operatorname{Tr}_1(\ket{u_i}\bra{u_i})$, we can also get $\rho_2=\frac{1}{2}I$ similarly. This indicates that each $\ket{u_i}$ is maximally entangled.

    Now we prove that for each $i\in [4]$ there exist a unitary $M_i$ such that $\ket{u_i}=(M_i\otimes I)\ket{\Phi^+}$ and $\operatorname{Tr}(M_i^\dagger M_j)=2\delta_{ij}$. Since we can write $\ket{u_i}=\sum_{a,b=0}^1c_{ab}^{(i)}\ket{a}\ket{b}$, let $C_i=\left(\begin{matrix}
        c_{00}^{(i)} & c_{01}^{(i)}\\
        c_{10}^{(i)} & c_{11}^{(i)}
    \end{matrix}\right)$. Since $\ket{u_i}$ is maximally entangled, $C_iC_i^\dagger=\frac{I}{2}$. Define $M_i:=\sqrt{2} C_i$ and we have $M_iM_i^\dagger=I$. Thus, $\ket{u_i}=(M_i\otimes I)\ket{\Phi^+}$.

    For any $2\times 2$ matrix $A$, we have $\bra{\Phi^+}(A\otimes I)\ket{\Phi^+}=\frac{1}{2}\operatorname{Tr}(A)$. Thus $\braket{u_i|u_j}=\bra{\Phi^+}(M_i^\dagger M_j\otimes I)\ket{\Phi^+}=\frac{1}{2}\operatorname{Tr}(M_i^\dagger M_j)$. And since $\braket{u_i|u_j}=\delta_{ij}$ by orthogonal, we have $\operatorname{Tr}(M_i^\dagger M_j)=2\delta_{ij}$. This indicates that the four $2\times 2$ unitary matrices $M_1,M_2,M_3,M_4$ is the orthonormal basis of $2\times 2$ matrix under the Hilbert-Schmidt inner product. Define $N_i:=M_1^\dagger M_i$, in particular $N_1=I$. Thus $N_1=I,N_2,N_3,N_4$ is a orthonormal basis and $\operatorname{Tr}(N_1)=2$ and $\operatorname{Tr}(N_j)=\operatorname{Tr}(N_1^\dagger N_j)=2\delta_{1j}=0$ for $j=2,3,4$.

    Now, we have $N_1=I,N_2,N_3,N_4$ are unitary matrix and $N_1,N_2,N_3$ are traceless unitary matrix. For any $N_j,j\neq 1$, its two eigenvalue have norm 1 and the sum of two eigenvalues is equal to 1. Thus, we can suppose that $\lambda_j,-\lambda_j$ are $N_j$'s two eigenvalues and $\lvert \lambda_j\rvert=1$. Hence, define $H_j:=\frac{1}{\lambda_j}N_j$. According $N_j$ is unitary matrix, then $H_j$ is normal matrix, then we can get $H_j=U\operatorname{diag}(1,-1)U^\dagger=H^\dagger$ and $\operatorname{Tr}(H_j)=0$. So $H_j$ is Hermitian matrix. Since every traceless $2\times 2$ Hermitian matrix has unique form: $H_j=n_{j,x}X+n_{j,y}Y+n_{j,z}Z_0$ where $n_{j,x},n_{j,Y},n_{j,z}\in \mathbb{R}$. Let $\mathbf{n_j}=(n_{j,x},n_{j,Y},n_{j,z})^T\in \mathbb{R}^3$. Then we can define $A(\mathbf{v})=v_xX+v_yY+v_zZ_0$ for $\mathbf{v}=(v_x,v_y,v_z)^T$. Hence $H_j=A(\mathbf{n_j})$.

    Computing $H_j^2$, we can get 
    \begin{align*}
        H_j^2&=(n_{j,x}X+n_{j,y}Y+n_{j,Z}Z_0)^2\\
        &=n_{j,x}^2X^2+n_{j,y}^2Y^2+n_{j,z}^2Z_0^2\\
        &+n_{j,x}n_{j,y}(XY+YX)\\
        &+n_{j,x}n_{j,z}(XZ_0+Z_0X)\\
        &+n_{j,y}n_{j,z}(YZ_0+Z_0Y)\\
        &=(n_{j,x}^2+n_{j,y}^2+n_{j,z}^2)I\\
        &=\lVert\mathbf{n_j}\rVert^2I.
    \end{align*}
    Since $H_j^2=H_jH_j^\dagger=I$, $\lVert\mathbf{n_j}\rVert=1$. For distinct $i,j$, $\frac{1}{2}\operatorname{Tr}(H_iH_j)=\frac{1}{2}\operatorname{Tr}[(n_{i,x}X+n_{i,y}Y+n_{i,z}Z_0)(n_{j,x}X+n_{j,y}Y+n_{j,z}Z_0)]=\mathbf{n_i}\cdot\mathbf{n_j}$. And since $\frac{1}{2}\operatorname{Tr}(H_iH_j)=\frac{1}{2}\operatorname{Tr}(H_iH_j^\dagger)=\frac{1}{2}\operatorname{Tr}(N_iN_j^\dagger)=\delta_{ij}=0$. So $\mathbf{n_i}\cdot \mathbf{n_j}=0$, and finally $\mathbf{n_1},\mathbf{n_2},\mathbf{n_3}$ is orthonormal basis in $\mathbb{R}^3$. 

    Let $\mathbf{e_x}=(1,0,0)^T,\mathbf{e_y}=(0,1,0)^T,\mathbf{e_z}=(0,0,1)^T$. Let $N=(\mathbf{n_1,n_2,n_3})$. Since $\mathbf{n_1,n_2,n_3}$ are orthonormal, $N^TN=I$. Thus define $R:=N^T$, we have $R\mathbf{n_1}=\mathbf{e_x}, R\mathbf{n_2}=\mathbf{e_y}, R\mathbf{n_3}=\mathbf{e_z}$. Since $(\det N)^2=\det N^TN=\det I=1$, so $\det N\in\{\pm1\}$. If $\det N=-1$, then we can just define $\mathbf{n_1'}=(-n_{1,x},-n_{1,y},-n_{1,z}), \mathbf{n_2'}=\mathbf{n_2},\mathbf{n_3'}=\mathbf{n_3}$, thus $N'=(\mathbf{n_1',n_2',n_3'})$ and $\det N'=1$ and $-n_{1,x}X-n_{1,y}Y-n_{1,z}Z=-H_1$ and others are unchanged. We can suppose that $\det N=1$. In this case $R\in \mathrm{SO}(3)=\{R\in \mathbb{R}^{3\times 3}: R^TR=I, \det R=1\}$. Since $\mathrm{SO}(3)\cong \mathrm{SU}(2)/\{I,-I\}$, for every $R\in \operatorname{SO}(3)$, there exist a $W\in \operatorname{SU}(2)$ such that $WA(\mathbf{v})W^\dagger=A(R\mathbf{v})$ for every $\mathbf{v}\in \mathbb{R}^3$. Thus for this $R=N^T$, there exist $W\in \mathrm{SU}(2)$ such that $WH_1W^\dagger=A(R\mathbf{n_1})=A(\mathbf{e_1})=X$, $WH_2W^\dagger=Y$ and $WH_3W^\dagger=Z_0$.

    Recall that $\ket{u_i}=(M_i\otimes I)\ket{\Phi^+}$, let $V_1:=WM_0^\dagger,V_2:=\overline{W}$. Since $(A\otimes B)\ket{\Phi^+}=(AB^T\otimes I)\ket{\Phi^+}$ for every $2\times 2$ matrix $A,B$.
    Then we can get $(V_1\otimes V_2)\ket{u_i}=(WM_0^\dagger\otimes \overline{W})(M_i\otimes I)\ket{\Phi^+}=(WM_0^\dagger M_i\otimes \overline{W})\ket{\Phi^+}=(WN_i\otimes \overline{W})\ket{\Phi^+}=(WN_iW^\dagger \otimes I)\ket{\Phi^+}$. Since $N_0=I,N_i=\lambda_i H_i$ for $i=1,2,3$. Then we have $(V_1\otimes V_2)\ket{u_0}=\ket{\Phi^+}$, $(V_1\otimes V_2)\ket{u_1}=\lambda_1(X\otimes I)\ket{\Phi^+}=\lambda_1\ket{\Psi^+}$, $(V_1\otimes V_2)\ket{u_2}=\lambda_1(Y\otimes I)\ket{\Phi^+}=-i\lambda_2\ket{\Psi^-}$ and $(V_1\otimes V_2)\ket{u_3}=\lambda_3(Z_0\otimes I)\ket{\Phi^+}=\lambda_3\ket{\Phi^-}$. Hence, for every $i\in\{0,1,2,3\}$, there exist a $\omega_i\in \mathbb{C},\lvert w_i\rvert=1$ such that $(V_1\otimes V_2)\ket{u_i}=\omega_i\ket{b_i}$. Hence, $(V_1\otimes V_2)\ket{u_i}\bra{u_i}(V_1\otimes V_2)^\dagger=\ket{b_i}\bra{b_i}.$

    Recall that $Q=\sum_{i=0}^3\ket{u_i}\bra{u_i}_{12}\otimes \ket{b_i}\bra{b_i}_{3456}$. After add $V_1\otimes V_2$ on qubit 1,2, we can get $Q'=(V_1\otimes V_2\otimes I_{34})Q(V_1\otimes V_2\otimes I_{34})^\dagger=\sum_{i=0}^3\ket{b_i}\bra{b_i}_{12}\otimes \ket{b_i}\bra{b_i}_{34}=Q_{\mathrm{std}}$.

    Define the twice standardized state and its projector by
    \begin{equation}
        \ket{\psi^{(2)}}:=(V_1\otimes V_2\otimes I_{3456})\ket{\psi^{(1)}},\quad \quad P^{(2)}=\ket{\psi^{(2)}}\bra{\psi^{(2)}}.
    \end{equation}
    The $\ket{\psi^{(2)}}$ is also $\mathrm{AME}(6,2)$ state. Then $Q_{56}^{(2)}=4\operatorname{Tr}_{56}(P^{(2)})=Q_{\mathrm{std}}$.
    One can verify that $S_1\ket{\psi^{(2)}}=\ket{\psi^{(2)}},S_2\ket{\psi^{(2)}}=\ket{\psi^{(2)}}$. Since $XXXX,IIII$ fixed $Q_{56}^{(2)}$, then $XXXXII,ZZZZII$ fixed $\ket{\psi^{(2)}}$. Hence $(IIXXXX)S_1=(IIXXXX)(XXXXII)=(XXIIXX),(IIZZZZ)S_2=(ZZIIZZ)$ also fixed $\ket{\psi^{(2)}}$. Now we can get that $\ket{\psi^{(2)}}$ is fixed by 
    \begin{align*}
        &S_1=IIXXXX, \quad S_2=IIZ_0Z_0Z_0Z_0\\
        &S_3=XXIIXX, \quad S_4=Z_0Z_0IIZ_0Z_0.
    \end{align*}

    Now consider $Q_{13}^{(2)}=4\operatorname{Tr}_{13}(P^{(2)})$ and order its four qubits as $(2,4,5,6)$. Thus by~\cref{lem: QA is orth proj of pure 442}, $Q_{13}^{(2)}$ is a rank-4 projector of pure $((4,4,2))_2$ code. Restrict the four stabilizers $S_1,S_2,S_3,S_4$ to qubits $(2,4,5,6)$, we can $IXXX,IZZZ,XIXX,ZIZZ$. Since $S_1P^{(2)}S_1=P^{(2)}$, the partial trace gives $(IXXX)Q_{13}^{(2)}(IXXX)=Q_{13}^{(2)}$, $IZZZ,XIXX,ZIZZ$ is similar. Thus $[Q_{13}^{(2)},IXXX]=[Q_{13}^{(2)},IZZZ]=[Q_{13}^{(2)},XIXX]=[Q_{13}^{(2)},ZIZZ]=0$.

    Since $Q_{13}^{(2)}$ is $16\times 16$ matrix, then $Q_{13}^{(2)}=\sum_{E\in\{I,X,Y,Z\}^{\otimes 4}}c_E E$. Note that the $256$ different $E$ is a orthonormal basis of $16\times 16$ matrix. Encode the 4 qubit Pauli string as $(x_1,x_2,x_3,x_4|z_1,z_2,z_3,z_4)$, $(x_i,z_i)=(0,0)$ means that the $i$'th Pauli Matrix is $I$. $(1,0)$ is $X$, $(0,1)$ is $Z$ and $(1,1)$ is $Y$. Let $\mathbf{x}=(x_1,x_2,x_3,x_4)^T,\mathbf{z}=(z_1,z_2,z_3,z_4)^T$. A key property is that two Pauli strings  $(\mathbf{x|z})$ and $(\mathbf{x'|z'})$ are commutate if and only if $\mathbf{x^T z'+z^T x'}=0 \mod{2}$.

    Since $IXXX,IZZZ,XIXX,ZIZZ$'s corresponding Pauli string encodes are $(0111|0000),(0000|0111),(1011|0000)$ and $(0000|1011)$ respectively. Then the possible Pauli string $E=(\mathbf{x|z})$ with nonzero coefficient must satisfy 
    \begin{align*}
        z_2+z_3+z_4=0, \quad x_2+x_3+x_4=0\\
        z_1+z_3+z_4=0, \quad x_1+x_3+x_4=0\\
    \end{align*}
    Thus, $x_1=x_2,z_1=z_2,x_4=x_1+x_3,z_4=z_1+z_3$. There are four free bits and hence exactly sixteen solutions. Grouped by weight, they are

    \begin{equation}
\begin{array}{c|l}
\text{weight} & \text{Pauli strings} \\ \hline
0 & IIII \\[2pt]
2 & IIXX,\ IIYY,\ IIZ_0Z_0 \\[2pt]
3 & XXIX,\ XXXI,\ YYIY,\ YYYI,\ Z_0Z_0IZ_0,\ Z_0Z_0Z_0I \\[2pt]
4 & XXYZ_0,\ YYZ_0X,\ Z_0Z_0XY,\ XXZ_0Y,\ YYXZ_0,\ Z_0Z_0YX
\end{array}
\end{equation}

Note that $e_E=\operatorname{Tr}(Q_{13}^{(2)}E)=4\operatorname{Tr}(P^{(2)}(I_{13}\otimes E))=4\bra{\psi^{(2)}}I_{13}\otimes E\ket{\psi^{(2)}}=0$ if $\operatorname{wt}(E)\leq 3$ and $E$ is nonidentity by~\cref{lem:614 code}. Thus the Pauli expansion of $Q_{13}^{(2)}$ contains no terms of weights 1,2 or 3. Since $Q_{13}^{(2)}$ is rank 4 orthonormal projector, then $\operatorname{Tr}(Q_{13}^{(2)})=\operatorname{rank}(Q_{13}^{(2)})=4$. Hence
\begin{equation}
    Q_{13}^{(2)}=\frac{1}{4}I+bXXYZ_0+cYYZ_0X+dZ_0Z_0XY+eXXZ_0Y+fYYXZ_0+gZ_0Z_0YX,
\end{equation}
where $b,c,d,e,f,g\in \mathbb{R}$. Let $A_1:=XXYZ_0,A_2:=YYZ_0X,A_3:=Z_0Z_0XY, B_1:XXZ_0Y,B_2:=YYXZ_0,B_3:=Z_0Z_0YX$, we have $(A_i,A_j)$ are commute and $A_i^2=B_i^2=I,A_1A_2=A_3,A_2A_3=A_1,A_3A_1=A_2,B_1B_2=B_3,B_2B_3=B_1,B_3B_1=B_2$ and $A_1B_1=IIXX,A_2B_2=IIYY,A_3B_3=IIZ_0Z_0$ and $(A_i,B_j)$ are anticommute when $i\neq j$.

Since $Q_{13}^{(2)}$ is orthogonal projector, thus $(Q_{13}^{(2)})^2=Q_{13}^{(2)}$. Let $L=bA_1+cA_2+dA_3+eB_1+fB_2+gB_3$. Expanding $(Q_{13}^{(2)})^2$, we can get
\begin{align}
    (Q_{13}^{(2)})^2&=\frac{1}{16}I+\frac{1}{2}L+L^2\\
    &=(\frac{1}{16}+b^2+c^2+d^2+e^2+f^2+g^2)I\\
    &+(\frac{b}{2}+2cd)A_1+(\frac{c}{2}+2bd)A_2+(\frac{d}{2}+2bc)A_3\\
    &+(\frac{e}{2}+2fg)B_1+(\frac{f}{2}+2eg)B_2+(\frac{g}{2}+2ef)B_3\\
    &+2beIIXX+2cfIIYY+2dgIIZ_0Z_0.
\end{align}
Comparing $Q_{13}^{(2)}=\frac{1}{4}I+bA_1+cA_2+dA_3+eB_1+fB_2+gB_3$, we can get
\begin{align}
    & b^2+c^2+d^2+e^2+f^2+g^2=\frac{3}{16},\\
    & be=cf=dg=0,\\
    & b=4cd,\quad c=4bd,\quad d=4bc,\\
    &e=4fg,\quad f=4eg,\quad g=4ef.\\
\end{align}

Since $be=0$, gives $b=0$ or $e=0$. Swapping qubits 5,6 will exchanges the triples $(b,c,d)$ and $(e,f,g)$. Thus, we can suppose that $e=0$. Then we can get $f=g=0$. If one of $b,c,d$ were zero, this gives that all three zero, contary to the first equation. Hence,
\begin{equation}
    \lvert b\rvert=\lvert c\rvert=\lvert d\rvert=\frac{1}{4}, \mathrm{sgn}(d)=\mathrm{sgn}(b)\mathrm{sgn}(c).
\end{equation}
Thus, let $s_b=4b\in\{\pm1\},s_c=4c\in\{\pm1\}$, we have 
\begin{equation}
    Q_{13}^{(2)}=\frac{1}{4}(I+s_bA_1+s_cA_2+s_bs_cA_3)=\frac{1}{4}(I+s_bA_1)(I+s_cA_2),
\end{equation}
since $A_1A_2=A_3$. This indicates that $\mathrm{im} Q_{13}^{(2)}$ is the simultaneous $+1$ eigenspace of $s_bA_1$ and $s_cA_2$. Thus, lifting to $\ket{\psi^{(2)}}$, we can get $T_5=s_bIXIXYZ_0,T_6=s_cIYIYZ_0X$ and
\begin{equation}
    s_bIXIXYZ_0\ket{\psi^{(2)}}=\ket{\psi^{(2)}}, \quad s_cIYIYZ_0X\ket{\psi^{(2)}}=\ket{\psi^{(2)}}.
\end{equation}
Let $S_5=T_5=s_bIXIXYZ_0,S_6=T_5T_6=s_bs_cIZ_0IZ_0XY$.

Until now, we have prove that $S_i\ket{\psi^{(2)}}=\ket{\psi^{(2)}}, 1\leq i\leq 6$ and 
\begin{align*}
    S_1&=IIXXXX\\
    S_2&=IIZ_0Z_0Z_0Z_0\\
    S_3&=XXIIXX\\
    S_4&=Z_0Z_0IIZ_0Z_0\\
    S_5&=s_bIXIXYZ_0\\
    S_6&=s_bs_cIZ_0IZ_0XY,
\end{align*}
where $s_b,s_c\in\{\pm 1\}$. Define $S_i=G_i$ for $1\leq i\leq 4$ and $S_5=s_bG_5,S_6=s_cG_6$. We can encode each 6 qubits Pauli string as $r=(\mathbf{x|z})\in \mathbb{F}_2^{12}$ where $\mathbf{x}=(x_1,x_2,x_3,x_4,x_5,x_6)$ and $\mathbf{z}=(z_1,z_2,z_3,z_4,z_5,z_6)$. $(x_i,z_i)=(0,0),(0,1),(1,0),(1,1)$ represent the $i$'th Pauli matrix is $I,Z_0,X,Y$ respectively. Thus we can get the six $G_i$'s encode are as following:
\begin{equation}
\begin{array}{c|c|c}
 & x & z \\ \hline
G_1 & 001111 & 000000 \\
G_2 & 000000 & 001111 \\
G_3 & 110011 & 000000 \\
G_4 & 000000 & 110011 \\
G_5 & 010110 & 000011 \\
G_6 & 000011 & 010101
\end{array}
\end{equation}

One can check that $\mathbf{xz'^T+x'z^T}=0 \mod{2}$ for any $(\mathbf{x,z}),(\mathbf{x',z'})$. Thus for any $S_i,S_j$, $[S_i,S_j]=0$. Multiplication of Pauli string is equal to addition of encode mod two. So the independence of the six Pauli strings is equal to the six encoders are linear independent on $\mathbb{F}_2$. Suppose that $\mathbf{r_i}=(\mathbf{x_i|z_i})$ for $i\in [6]$. If $\sum \alpha_i \mathbf{r_i}=0$, we can only get $\alpha_1=\alpha_2=\alpha_3=\alpha_4=\alpha_5=\alpha_6=0$. Thus, the six $G_i$ are independent.

Let $\mathcal{V^+}(S)=\{\ket{\phi}:S\ket{\phi}=\ket{\phi}\}$ be the $+1$ eigenspace of $S$. Now the eigenspace of $S_1,\dots,S_6$ is $\mathcal{V}=\{\ket{\phi}:S_j\ket{\phi}=\ket{\phi},j=1,\dots,6\}=\mathcal{V^+}(S_1)\cap\cdots \cap \mathcal{V^+}(S_6)$. We know that $\ket{\psi^{(2)}}\in \mathcal{V}$, next we want to prove that $\mathrm{dim}(\mathcal{V})=1$.

Since each $S_j$ is Hermitian Pauli string with $S_j^\dagger=S_j,S_j^2=I$, define $\Pi_j=\frac{S_j+I}{2}$. Since $\mathrm{im} \Pi_j=\{\Pi_j\ket{\eta}:\ket{\eta}\in (\mathbb{C}^2)^{\otimes 6}\}=\mathcal{V}^+(S_j)$ and $\Pi_j^\dagger=\Pi_j,\Pi_j^2=\Pi_j$, then $\Pi_j$ is the orthogonal projector of $\mathcal{V}^+(S_j)$. Since $S_j$ are pairwise commute, then $\Pi_i$ are pairwise commute. Thus, $\Pi=\Pi_1\cdots\Pi_6$ is the orthogonal projector of $\mathcal{V}$, namely $\mathrm{im} \Pi=\mathcal{V}$.

We can compute the $\Pi=\Pi_1\cdots \Pi_6=2^{-6}\Pi_{j=1}^6(I+S_j)=2^{-6}\sum_{\alpha\in \mathbb{F}_2^{6}}S_1^{\alpha_1}\cdots S_6^{\alpha_6}$. We know the trace of nonidentity Pauli strings are zero, thus $\operatorname{Tr}(\Pi)=2^{-6}\operatorname{Tr}(I^{\otimes 6})=1$. Since $\Pi$ is orthogonal projector, it has only eigenvalue $0$ or $1$, and its trace is equal to its rank and its dimension of image phase space. Hence $\mathrm{dim}\mathcal{V}=\operatorname{rank} \Pi=\operatorname{Tr}\Pi=1$. Finally, we can get $\mathcal{V}=\operatorname{span}\{\ket{\psi^{(2)}}\}$.

Recall that $S_j=G_j$ for $1\leq j\leq 4$ and $G_5\ket{\psi^{(2)}}=s_b\ket{\psi^{(2)}},G_6\ket{\psi^{(2)}}=s_c\ket{\psi^{(2)}}$.
Since $\ket{\psi^{(2)}}$ is the eigenvalue $+1$ of $G_j$, but maybe eigenvalue $-1$ of $G_5,G_6$. So we need Pauli transformation to change $-1$ to $+1$. Define $C_5=Z_0Z_0IIII,C_6=XXIIII$, thus $C_5G_5=-G_5C_5,C_5G_j=G_jC_5$ for $j\neq 5$ and $C_6G_6=-G_6C_6,C_6G_j=G_jC_6$ for $j\neq 6$. Let $\epsilon_5=\frac{1-s_b}{2},\epsilon_6=\frac{1-s_c}{2}$, thus $\epsilon_5,\epsilon_6\in\{0,1\}$. Define $U_{\mathrm{sgn}}=C_5^{\epsilon_5}C_6^{\epsilon_6}$ and $\ket{\phi}=U_{\mathrm{sgn}}\ket{\psi^{(2)}}$. Since $[U_{\mathrm
sgn},G_j]=0$ for $j=1,2,3,4$, $G_j\ket{\phi}=\ket{\phi}$. For $G_5$, $G_5\ket{\phi}=G_5C_5^{\epsilon_5}C_6^{\epsilon^6}\ket{\psi^{(2)}}=(-1)^{\epsilon_5}s_b\ket{\phi}=\ket{\phi}$ when $s_b=1,-1$. Similarly, $G_6\ket{\phi}=\ket{\phi}$. Hence, define $\mathcal{V}_0=\{\ket{\eta}:G_j\ket{\eta}=\ket{\eta},j=1,2,\dots,6\}$. In fact, $U_{\mathrm{sgn}}S_jU_{\mathrm{sgn}}^\dagger=G_j$ for $1\leq j\leq 6$, thus $\mathcal{V}_0=U_{\mathrm{sgn}}\mathcal{V}$. So $\mathrm{dim} \mathcal{V}_0=1$.

Choose and fix a unit vector spanning $\mathcal{V}_0$, and denote it by
$\ket{H_6}$. We call $\ket{H_6}$ the quantum hexacode state. Since
$\dim \mathcal{V}_0=1$ and
$U_{\mathrm{sign}}\ket{\psi^{(2)}}\in\mathcal{V}_0$, there exists
$\theta\in\mathbb{R}$ such that
\begin{equation}
    U_{\mathrm{sign}}\ket{\psi^{(2)}}
    =e^{\mathrm{i}\theta}\ket{H_6}.
\end{equation}

Combining the two standardizing local unitaries, the possible
transposition of qubits $5$ and $6$, and the final local Pauli operator
$U_{\mathrm{sign}}$, we conclude that there exist one-qubit unitaries
$L_1,\ldots,L_6$, a permutation $\pi\in S_6$, and
$\theta\in\mathbb{R}$ such that
\begin{equation}
    (L_1\otimes\cdots\otimes L_6)
    P_\pi\ket{\psi^{(0)}}
    =
    e^{\mathrm{i}\theta}\ket{H_6}.
\end{equation}
Equivalently,
\begin{equation}
    \ket{\psi^{(0)}}
    =
    e^{\mathrm{i}\theta'}
    P_{\pi^{-1}}
    (L_1^\dagger\otimes\cdots\otimes L_6^\dagger)
    \ket{H_6}
\end{equation}
for some $\theta'\in\mathbb{R}$. Hence every $\mathrm{AME}(6,2)$ state
is locally unitarily equivalent, up to a permutation of the qubits and
a global phase, to the quantum hexacode state $\ket{H_6}$. 
This completes the proof.
    
\end{proof}

\section{Proofs of Some Lemmas in Preliminaries}\label{app: disequality reduction}
This appendix provides the deferred proofs of the preliminary lemmas.

\subsection{Proof of Some Lemmas in~\cref{subsec: tractable signatures}}\label{subapp: proof of subsec tractable signature}
This subappendix provides the deferred proof of the preliminary lemmas in~\cref{subsec: tractable signatures}.

\InvarianceUnderRealOrth*
\begin{proof}[Proof of~\cref{lem: invariance under real orthogonal transformation}]
    We prove the three statements separately.

    For the first statement, let $f\in \F$ has arity $n$. Since $O$ is real orthogonal matrix,
    we have 
    \begin{equation}
        \overline{Of}
        =\overline{O}^{\otimes n}\overline f 
        =O^{\otimes n}\overline f 
        =O\overline f.
    \end{equation}
    Therefore, if $\F$ is conjugate closed, then $\overline{f}\in \F$. Thus, $\overline{Of}=O\overline f\in O\F$. The converse follows by applying the same argument to the real orthgonal matrix $O^{-1}$. Hence, the first statement is hold.

    For the second statement, since holographic transformation preserves tensor decompositions into unary and binary signatures, let $f=f_1\otimes \cdots \otimes f_k$ where $f_i$ has arity at most two, then $Of=(Of_1)\otimes \cdots \otimes (Of_k)$.
    Applying the same argument to $O^{-1}$ gives
\[
    \mathcal F\subseteq\mathscr T
    \quad\Longleftrightarrow\quad
    O\mathcal F\subseteq\mathscr T.
\]
Now let
$
    \mathscr C\in\{\mathscr P,\mathscr A,\mathscr L\},
$
and suppose that $\mathcal F$ is $\mathscr C$-transformable, witnessed
by some $T\in\mathrm{GL}_2(\mathbb C)$. We claim that $TO^{-1}$
witnesses that $O\mathcal F$ is $\mathscr C$-transformable. First,
\[
    (TO^{-1})(O\mathcal F)
    =T\mathcal F
    \subseteq\mathscr C.
\]
Moreover, since $O$ is orthogonal, the binary equality signature is
invariant under $O$:
\[
    (=_2)O^{\otimes 2}=(=_2).
\]
It follows that
\[
\begin{aligned}
    (=_2)\bigl((TO^{-1})^{-1}\bigr)^{\otimes 2}
    &=(=_2)(OT^{-1})^{\otimes 2}\\
    &=\bigl((=_2)O^{\otimes 2}\bigr)
      (T^{-1})^{\otimes 2}\\
    &=(=_2)(T^{-1})^{\otimes 2}
      \in\mathscr C.
\end{aligned}
\]
Hence $O\mathcal F$ is $\mathscr C$-transformable. Applying the same
argument to $O^{-1}$ proves the converse. Therefore, all four
alternatives in condition \eqref{cond: tractable classes} are
invariant under $O$.

For the third statement, let $f$ be a nonzero signature. Suppose that $f=g\otimes h$, since the holographic transformation preserves tensor decompositions, we have $Of=Og\otimes Oh$. Thus, reducibility of $f$ implies reducibility of $Of$. Conversely, if $Of=g'\otimes h'$, then $f=(O^{-1}g')\otimes (O^{-1} h')$. Hence $f$ is reducible if and only if $Of$ is reducible, and $f$ is irreducible if and only if $Of$ is irreducible.
\end{proof}

\subsection{Proof of Some Lemmas in~\cref{subsec: hardness results}}
This subappendix provides the deferred proofs of the preliminary lemmas in~\cref{subsec: hardness results}
on tensor factorization and hardness from odd-arity signatures or
$\neq_4$, which are used in the orthogonality arguments and the
low-arity base cases of the main proof.

\CSPwithEQ*
\begin{proof}
    Let $\mathcal{G}=T(\F\cup \{=_2\})$.
    Suppose that $\#\operatorname{CSP}_2(T(\F\cup \{=_2\}))$ is not $\#\operatorname{P}$-hard, then by~\cite{real_holantc}, $\mathcal{G}\subseteq \mathscr{A},\mathscr{P},\mathscr{L}$ or $\mathcal{G}\subseteq \alpha\mathscr{A}$.

    We consider the first three cases. 
    
    If $\mathcal{G}\subseteq \mathscr{A}$, then $T\F\subseteq \mathscr{A}$ and $T(=_2)=TT^{\mathsf T}\in \mathscr{A}$. Let $E=TT^{\mathsf T}\in \mathscr{A}$. Then we can get $(=_2)T^{-1}=(T^{-1})^{\mathsf T}T^{-1}=(TT^{\mathsf T})^{-1}=E^{-1}$, thus $E$ is invertible. By~\cref{lm: affine singular value}, $E$'s two row has the same norm and either the two rows are proportional or orthogonal. The proportional property gives $\operatorname{rank}(E)=1$, which contradiction to the invertibility of $E$. Hence, $EE^\dagger=\rho I_2$ and $E^{-1}=\rho^{-1}E^\dagger$. Since $\mathcal{A}$ is closed under conjugation and transposition, $E^{-1}\in \mathscr{A}$. Therefore, $(=_2)T^{-1}\in \mathscr{A}$ and $\mathcal{F}\subseteq \mathscr{A}$-transformable. This contradicts the assumption that $\F$ does not satisfy condition~\eqref{cond: tractable classes}.

    If $\mathcal{G}\subseteq \mathscr{P}$, then $T\F\subseteq \mathscr{P}$ and $T(=_2)=TT^{\mathsf T}\in \mathscr{P}$. Let $E=TT^{\mathsf T}\in \mathscr{P}$. Then we can get $(=_2)T^{-1}=(T^{-1})^{\mathsf T}T^{-1}=(TT^{\mathsf T})^{-1}=E^{-1}$, thus $E$ is invertible. If the two variables of $E$ are independent, then $\operatorname{rank}(E)=1$ which contradicts to the invertibility of $E$. Thus, either $E$ is diagonal matrix or antidiagonal matrix, both give that $E^{-1}\in \mathscr{P}$. Therefore, $(=_2)T^{-1}\in \mathscr{P}$ and $\mathcal{F}\subseteq \mathscr{P}$-transformable. This contradicts the assumption that $\F$ does not satisfy condition~\eqref{cond: tractable classes}.

    If $\mathcal{G}\subseteq \mathscr{L}$, then $T\F\subseteq \mathscr{L}$ and $T(=_2)=TT^{\mathsf T}\in \mathscr{L}$. Let $E=TT^{\mathsf T}\in \mathscr{L}$. Then we can get $(=_2)T^{-1}=(T^{-1})^{\mathsf T}T^{-1}=(TT^{\mathsf T})^{-1}=E^{-1}$, thus $E$ is invertible. Let $\alpha=e^{\frac{\pi \mathfrak i}{4}}$ and $D:=\left( \begin{matrix}
        1 & 0\\
        0 & \alpha 
    \end{matrix}\right).$ The definition of $\mathscr{L}$ means that for any $(s_1,s_2)\in \mathscr{S}(E)$, we have $\left(\left( \begin{matrix}
        1 & 0\\
        0 & \alpha^{s_1}
    \end{matrix}\right)\otimes \left( \begin{matrix}
        1 & 0\\
        0 & \alpha^{s_2}
    \end{matrix}\right)\right)E\in \mathscr{A}$.

    If $00\in \mathscr{S}(E)$, then $E\in \mathscr{A}$. If $01\in \mathscr{S}(E)$, then $(I_2\otimes D)E\in \mathscr{A}$. If $10\in \mathscr{S}(E)$, then $(D\otimes I_2)E\in \mathscr{A}$. If $11\in \mathscr{S}(E)$, then $(D\otimes D)E\in \mathscr{A}$. This diagonal matrix does not change the support, thus the support of $E$ is also affine. If $|\mathscr{S}(E)|=1$, then contradicts to the invertibility of $E$. If $|\mathscr{S}(E)|=2$ and the nonzero position in the same row or column, then contradicts to invertibility. If $\mathscr{S}(E)=\{00,01,10,11\}$, then $E(x,y)=\lambda\mathfrak i^{Q(x,y)}$. And we have $E(x,y)/E(x',y')\in \{1,\mathfrak i,-1,\mathfrak{-i}\}$. In particular, $E(0,1)/E(0,0)=\mathfrak i^k$ for some $k$. Since $01\in \mathscr{S}(E)$, $(I_2\otimes D)E\in \mathscr{A}$ gives that $E'=\left( \begin{matrix}
        E(0,0) & \alpha E(0,1)\\
        E(1,0) & \alpha E(1,1)
    \end{matrix}\right)\in \mathscr{A}$. Thus, $\alpha E(0,1)/E(0,0)=\alpha \mathfrak i^k\notin \{1,\mathfrak i,-1,\mathfrak{-i}\}$. This contradicts to $E'\in \mathscr{A}$. If $\mathscr{S}(E)=\{01,10\}$, then we can suppose that $E=\left(\begin{matrix}
        0 & b\\
        b & 0
    \end{matrix}\right)$ since $E$ is symmetric. Since $E'=(I_2\otimes D)E=\left( \begin{matrix}
        0 & \alpha b\\
        b & 0
    \end{matrix}\right)\in \mathscr{A}$, but $\alpha b/b=\alpha\notin \{1,\mathfrak i,-1,\mathfrak{-i}\}$. This contradicts to $E'\in\mathscr{A}$.
    Lastly, $E=\left( \begin{matrix}
        a & 0\\
        0 & b
    \end{matrix}\right),$ and $b/a=\mathfrak i^k$. This gives $E=\lambda \operatorname{diag}(1,\mathfrak i^k)\in \mathscr{A}$. Also we can get $E^{-1}=\lambda^{-1}\operatorname{diag}(1,\mathfrak i^{-k})\in \mathscr{A}$.
    Finally, we can get that $\F\in \mathscr{L}$-transformable, which contradicts to $\F$ does not satisfies the condtion~\eqref{cond: tractable classes}.

    Next, we consider the case that $\mathcal{G}\subseteq \alpha \mathscr{A}$. Let $D:=\operatorname{diag}(1,\alpha)$, $E=TT^{\mathsf T}$ and $T'=D^{-1}T$. Since $T\F\subseteq \alpha \mathscr{A},T(=_2)\in \alpha \mathscr{A}$, we have $T'\F=D^{-1}T\F\in \mathscr{A}$ and $D^{-1}ED^{-1}\in \mathscr{A}$. Thus, after a holographic transformation $D^{-1}$, we get that $\#\operatorname{CSP}_2(D^{-1}T(\F\cup \{=_2\}))$ where $D^{-1}T(\F\cup \{=_2\})\in \mathscr{A}$, this case is equivalent to the first case $\mathcal{G}\in \mathscr{A}$.

    Eventually, $\#\operatorname{CSP}_2(T(\F\cup \{=_2\}))$ is $\#\operatorname{P}$-hard.
\end{proof}

\DisFourgivesHard*
\begin{proof}[Proof of~\cref{lm: disequality 4 gives dichotomy}]

    Let $g_4:=K^{\otimes 4}\neq_4$. We may discard zero signatures and handle nullary signatures as scalar factors. We divide the proof into three cases.

    \textbf{Case 1: $\F$ contains a nonzero signature of odd arity.}

    By the holographic transformation $K$, $\holant{\neq_2}{\neq_4,\hF}\equiv_T \holant{=_2}{g_4,\F}$, and by~\Cref{thm: holant odd} together with~\cite{GSS26}, $\holant{=_2}{g_4,\F}$ is either polynomial-time solvable or is $\#\operatorname{P}$-hard.

    \textbf{Case 2: every signature in $\hF$ is an $\eo$ signature.}
    Since $\neq_4$ is also an $\eo$ signature, then the problem $\holant{\neq_2}{\neq_4,\hF}$ is a $\ceo$ problem, namely $\holant{\neq_2}{\neq_4,\hF}\equiv_T \ceo(\neq_4,\hF)$. Then by~\Cref{thm: dichotomy for EO} together with~\cite{GSS26}, $\ceo(\neq_4,\hF)$ is either polynomial-time solvable or $\#\operatorname{P}$-hard.

    \textbf{Case 3: a non-$\eo$ signature occurs.}

    Let $\widehat{f}\in \hF$ have arity $2n$ and suppose that $\widehat{f}$ is not an $\eo$ signature. Then there exists $\alpha\in\mathrm{supp}(\widehat{f})$ such that $\mathrm{wt}(\alpha)=k\neq n$. We first show that, if $k<n$, we can realize a signature  $\widehat{g}$ of arity $r=2n-2k>0$ such that $\widehat{g}(\vec{0})\neq 0$, if $k>n$, we can realize a signature  $\widehat{g}$ of arity $r=2k-2n>0$ such that $\widehat{g}(\vec{1})\neq 0$. This follows from~\cite{Holant_Odd}.

    We treat the case $k<n$ firstly. Recall that we realized a signature $\widehat{g}$ of arity $r=2n-2k$ such that $\widehat{g}(\vec{0})\neq 0$. Let $a:=\widehat{g}(\vec{0})\neq 0$ and $b:=\widehat{g}(\vec{1})$. By the gadget construction in~\cite{cai-fu-shao-eo}, $\neq_{2r}$ can be realized on the RHS of $\holant{\neq_2}{\neq_4,\hF}$. We permute its variables so that $\mathrm{supp}(\neq_{2r})=\{0^r1^r,1^r0^r\}$. Then we connect the $r$ variables of $\widehat{g}$ with the first $r$ variables of $\neq_{2r}$ using $\neq_2$, leaving the last $r$ variables of $\neq_{2r}$ dangling. The resulting signature $\widehat{h}$ of arity $r$ is $\widehat{h}=[a,0,\dots,0,b]_r$. Since $\widehat{h}$ is realized on the RHS of $\holant{\neq_2}{\neq_4,\hF}$, we have $\holant{\neq_2}{\neq_4,\hF}\equiv_T \holant{\neq_2}{\neq_4,\widehat{h},\hF}$.

    We now distinguish whether $b$ is zero.
    
    \textbf{Subcase 3.1: $b=0$.}

    Then $\widehat{h}=[a,0,\dots,0]_r=a\Delta_0^{\otimes r}=(\sqrt[r]a \Delta_0)^{\otimes r}$. By~\Cref{lm: decomposition of tensor power}, we can get $\holant{\neq_2}{\neq_4,\hF}\equiv_T \holant{\neq_2}{\neq_4,\widehat{h},\hF}\equiv_T \holant{\neq_2}{\neq_4,\Delta_0,\hF}$. By a holographic transformation $K$, we can get $\holant{\neq_2}{\neq_4,\Delta_0,\hF}\equiv_T \holant{=_2}{g_4,u_0,\F}$ where $u_0=K\Delta_0=\frac{1}{\sqrt{2}}\begin{bmatrix}1\\ \mathfrak i\end{bmatrix}.$ Thus a nonzero odd arity signature is available, and by~\Cref{thm: holant odd} together with~\cite{GSS26}, $\holant{=_2}{g_4,u_0,\F}$ is either polynomial time solvable or $\#\operatorname{P}$-hard.

    \textbf{Subcase 3.2: $b\neq 0$.}

    Now $ab\neq 0$. We first suppose that $r\geq 4$. Let $q\in \mathbb{C}$ satisfying $q^{2r}=\frac{b}{a}$. Define 
    \[
        \widehat{Q}^{-1}
        =
        \begin{pmatrix}
        q^{-1}&0\\
        0&q
        \end{pmatrix},
        \qquad
        \widehat Q
        =
        \begin{pmatrix}
        q&0\\
        0&q^{-1}
        \end{pmatrix}.
    \]
    We have 
    \[
        (\widehat{Q}^{-1})^{\mathsf T}N_2\widehat Q^{-1}
        =
        \begin{pmatrix}
        q^{-1}&0\\
        0&q
        \end{pmatrix}
        \begin{pmatrix}
        0&1\\
        1&0
        \end{pmatrix}
        \begin{pmatrix}
        q^{-1}&0\\
        0&q
        \end{pmatrix}
        =N_2.
    \]
    By a holographic transformation $\widehat{Q}$,
    \[
    \begin{aligned}
        &\holant{\neq_2}{\neq_4,\widehat{h},\hF}\\
        &\equiv_T \holant{(\widehat{Q}^{-1})^{\mathsf T}N_2\widehat{Q}^{-1}}{\widehat{Q}\neq_4,\widehat{Q}\widehat{h},\widehat{Q}\hF}\\
        &\equiv_T \holant{\neq_2}{\widehat{Q}\neq_4,\widehat{Q}\widehat{h},\widehat{Q}\hF}
        \equiv_T \holant{\neq_2}{\neq_4,\widehat{Q}\widehat{h},\widehat{Q}\hF}.
    \end{aligned}
    \]
    Furthermore,
    \[
    \begin{aligned}
        \widehat{Q}\widehat{h}
        &=\widehat{Q}^{\otimes r}\widehat{h}
        =[aq^r,0,\dots,0,bq^{-r}]_r\\
        &=aq^r[1,0,\dots,0,1]_r=aq^r(=_r)
    \end{aligned}
    \]
    since $q^{2r}=b/a$. Thus $\holant{\neq_2}{\neq_4,\widehat{h},\hF}\equiv_T \holant{\neq_2}{\neq_4,=_r,\widehat{Q}\hF}$. By\cite{odd_holant_real}, we have
    \[
        \#\operatorname{CSP}_r(\neq_2,\neq_4,\widehat{Q}\hF)
        \equiv_T \holant{\neq_2}{\neq_4,=_r,\widehat{Q}\hF}.
    \]
    Then by~\cite{odd_holant_real}, $\#\operatorname{CSP}_r(\neq_2,\neq_4,\widehat{Q}\hF)$ is either polynomial-time solvable or $\#\operatorname{P}$-hard.

    It remains to consider $r=2$. In this case $\widehat{h}=[a,0,0,b]$. Recall that $\neq_8$ can be realized on the RHS of $\holant{\neq_2}{\neq_4,\widehat{h},\hF}$. Permuting the variables of $\neq_8$ so that $\mathrm{supp}(\neq_8)=\{0^41^4,1^40^4\}$.
    Connecting the first four variables of $\neq_8$ with two copies of $\widehat{h}$ using $\neq_2$, leaving the last four variables of $\neq_8$ dangling. The resulting quaternary signature is $\widehat{h}_4=[a^2,0,\dots,0,b^2]_4$. Since $a^2b^2\neq 0$, the preceding argument applies with $r=4$. This completes the case $k<n$.

    The case $k>n$ is similar.
\end{proof}

\ConjuDisFourHard*
\begin{proof}[Proof of~\cref{lem: In conjugate closed setting disequality 4 gives hardness}]
     By~\Cref{lm: disequality 4 gives dichotomy}, we just need to verify that $\holant{\neq_2}{\neq_4,\hF}$ will not fall into the polynomial-time solvable side of case 1, case 2, and case 3.

     \textbf{Case 1: $\F$ contains a nonzero signature of odd arity.}

     By~\Cref{lm: odd holant with conjugation}, this case is $\#\operatorname{P}$-hard.

     \textbf{Case 2: every signature in $\hF$ is an $\eo$ signature.}

     Suppose that $\hF\subseteq\POLUP$ and either $\hF\subseteq \eo^{\mathscr{A}}$ or $\hF\subseteq \eo^{\mathscr{P}}$. Since $\neq_4\in \POLUP\cap\eo^{\mathscr{A}}\cap\eo^{\mathscr{P}}$, then by~\Cref{lm: conjugation_polymorphism_affine support} we can get $\neq_4$ and every $\widehat{f}\in\hF$ has affine support. By~\cite{cai-fu-shao-eo}, these signatures are pairwise opposite. Combined with the definition of $\eo^{\mathscr{A}}$ and $\hF\subseteq \eo^{\mathscr{P}}$, we can get $\{\neq_4\}\cup \hF\subseteq \mathscr{A}$ or $\{\neq_4\}\cup\hF\subseteq\mathscr{P}$. Since $K^{\mathsf T}K=N_2\in\mathscr{A}\cap\mathscr{P}$, the transformation $K^{-1}$ witnesses $\mathscr{A}$- or $\mathscr{P}$-transformability of $\F$. This contradicts that $\F$ does not satisfy the condition \rm{(\ref{cond: tractable classes})}. The case $\hF\subseteq\POLDOWN$ and either $\hF\subseteq \eo^{\mathscr{A}}$ or $\hF\subseteq \eo^{\mathscr{P}}$ is symmetric. Thus this case is $\#\operatorname{P}$-hard.

     \textbf{Case 3: a non-$\eo$ signature occurs.}
     If exactly one of $a,b$ is nonzero in the preceding construction, we can get realize a nonzero unary signature, this case is $\#\operatorname{P}$-hard by~\Cref{lm: odd holant with conjugation}. Otherwise, by~\cite{odd_holant_real}, if $\F$ does not satisfy the condition \rm{(\ref{cond: tractable classes})}, $\#\operatorname{CSP}_r(\neq_2,\neq_4,\widehat{Q}\hF)$ is $\#\operatorname{P}$-hard for $r\geq 3$; when the original $r=2$, we first use the preceding quaternary construction and take $r=4$. Thus this case is also $\#\operatorname{P}$-hard.
\end{proof}

\subsection{Proof of Some Lemmas in~\cref{subsec: signatures and quantum states}}\label{subapp: proof of subsec signatures and quantum states}

\begin{proof}[Proof of~\cref{prop: invariant kth Orth}]
    For every subset $S\subseteq [n]$ of size $k$, if $f$ satisfies {\sc $k$th-Orth}, then there exists some $\mu\neq 0$ such that 
    \begin{equation}
        M_S(f)M_S(f)^\dagger=\mu I_{2^k}.
    \end{equation}

    We first prove invariance under conjugation. Since $M_S(\overline f)=\overline{M_S(f)}$, we have 
    \begin{equation}
        \begin{aligned}
            M_S(\overline f)M_S(\overline f)^\dagger
            &=\overline{M_S(f)}\,M_S(f)^{\mathsf T}\\
            &=\overline{M_S(f)M_S(f)^\dagger}\\
            &=\overline{\mu I_{2^k}}\\
            &=\mu I_{2^k}.
        \end{aligned}
    \end{equation}
   Therefore, $\overline{f}$ satisfies {\sc $k$th-Orth}. 

   For avoiding conflict between transpose $\mathsf{T}$ and holographic transformation $T$, we use $U$ instead of the holographic transformation $T$.
   We next prove invariance under uniatry holographic transformation $U\in \mathbf{U}_2(\mathbb{C})$. We know that $M_S(Uf)=U^{\otimes k}M_S(f)(U^{\mathsf{T}})^{\otimes(n-k)}$. Since $T$ is unitary, we have $U^{\mathsf{T}}\overline U=I_2$. Then we can get that 
   \begin{equation}
    \begin{aligned}
        M_S(Uf)M_S(Uf)^\dagger
        &= U^{\otimes k}M_S(f) (U^{\mathsf{T}})^{\otimes (n-k)}\overline{U^{\otimes (n-k)}}M_S(f)^\dagger (U^\dagger)^{\otimes k}\\
        &=U^{\otimes}M_S(f)M_S(f)^\dagger (U^\dagger)^{\otimes k}\\
        &=\mu U^{\otimes}I_{2^k} (U^\dagger)^{\otimes k}\\
        &=\mu I_{2^k}.
    \end{aligned}
   \end{equation}
   Thus, $Uf$ satisfies {\sc $k$th-Orth}.
\end{proof}

\end{document}